\newtheorem{thm}{Theorem}
\newtheorem*{thm*}{Theorem}
\newtheorem{corollary}{Corollary}
\newtheorem{lemma}{Lemma}
\newtheorem{proposition}{Proposition}
\theoremstyle{remark}
\newtheorem{rmk}{Remark}
\newtheorem{obs}{Observation}
\theoremstyle{definition}
\newtheorem*{example}{Example}
\newtheorem*{defin}{Definition}
\newtheorem*{BNassump}{Bounded-noise assumption}
\newtheorem*{DPPproperty}{Property of decoupling in the pure phases}
\newtheorem*{HNassump}{High-noise assumption}
\newtheorem*{PRconject}{Positive rates conjecture}
\newenvironment{fminipage}%
{\begin{Sbox}\begin{minipage}}%
{\end{minipage}\end{Sbox}\ovalbox{\TheSbox}}
\tikzset{
    every picture/.prefix style={
        execute at begin picture=\shorthandoff{;}
    }
}
\tikzset{
    every picture/.prefix style={
        execute at begin picture=\shorthandoff{:}
    }
}
\renewcommand{\chaptermark}[1]{\markboth{\chaptername \  \thechapter.\ #1}{}} 
\renewcommand{\sectionmark}[1]{\markright{\thesection.\ #1}}
\newcommand{\charf}[1]{\mathbb{1}_{#1}}
\newcommand{\real}{\mathbb R}
\newcommand{\nat}{\mathbb N}
\newcommand{\ent}{ \mathbb Z}
\newcommand{\plan}{\mathbb Z^2}
\newcommand{\realspace}{\mathbb R^d}
\newcommand{\natspace}{\mathbb Z^d}
\newcommand{\prob}{\mathbb P}
\newcommand{\dif}{\mathrm{d}}
\newcommand{\norm}[1]{\left\lvert #1\right\rvert}
\newcommand{\grandnorm}[1]{\biggl\vert #1\biggr\vert}
\newcommand{\dnorm}[1] {\left\lVert #1 \right\rVert}
\newcommand{\diam}{\mathrm{diam}\,}
\newcommand{\conv}{\mathrm{conv}\,}
\newcommand{\temps}{\mathrm{time}\,}
\newcommand{\extent}{\mathrm{extent}\,}
\newcommand{\Extent}{\mathrm{Extent}\,}
\newcommand{\proj}[1] {\Pi_{#1}}
\newcommand{\muinv}[1]{\mu_{\mathrm{inv}}^{\scriptscriptstyle (\!#1\!)}}
\newcommand{\muinvar}{\mu_{\mathrm{inv}}}
\newcommand{\muinvzerobar}[1]{\underline{\mu}_{\mathrm{inv}}^{\scriptscriptstyle (\!#1\!)}}
\newcommand{\muinvbar}{\underline{\mu}_{\mathrm{inv}}}
\newcommand{\sleb}[1]{\delta^{\scriptscriptstyle (\!#1\!)}}
\newcommand{\brond}[1]{\mathcal{B}^{\scriptscriptstyle (\!#1\!)}}
\newcommand{\gammaplus}{\gamma^{\scriptscriptstyle(\!0\!)}}
\newcommand{\gammamoins}{\gamma^{\scriptscriptstyle(\!1\!)}}
\newcommand{\vect}[1]{\boldsymbol{#1}}
\newcommand{\stvect}[1]{\boldsymbol{\ushort{#1}}}
\newcommand{\myGlobalTransformation}[2]
{
	\pgftransformcm{1}{0}{0.4}{0.5}{\pgfpoint{#1cm}{#2cm}}
}
\title{Phase transitions in probabilistic cellular automata}
\author{Lise Ponselet}
\date{September 2, 2013}
\begin{document}
\setcitestyle{numbers}

\pagestyle{empty}
\begin{titlepage}
\addtolength{\topmargin}{-1cm}
\begin{center}
\large{
UniversitŽ catholique de Louvain\\FacultŽ des Sciences}

\vspace{1.7cm}

\Huge{\textbf{\textsc{Phase transitions\\in probabilistic\\cellular automata}}}\\
\vspace{.5cm}
\begin{minipage}[t]{7cm}
\centering
\large{\bf{Lise Ponselet}}
\end{minipage}
\end{center}
\vspace{.5cm}
\large{Composition du jury :}\\ ~ \\
\vspace{1cm}
\begin{tabular}{l l l}
Prof. Jean Bricmont & UCL & Promoteur\\
Prof. Michel Willem & UCL & PrŽsident\\
Prof. Philippe Ruelle &UCL & SecrŽtaire\\
Prof. Roberto Fern\'{a}ndez &Univ. Utrecht & \, \\
Prof. Luc Haine & UCL & \, \\
Prof. Christian Maes & KUL & \, \\
Prof. AndrŽ Nauts & UCL & \, \\
\end{tabular}

\vspace{.5cm}
\flushright  Thse prŽsentŽe en vue de l'obtention\\du grade de docteur en sciences
\vspace{.8cm}
\begin{center}
Septembre 2013
\end{center}
\end{titlepage}

\cleardoublepage
\setcounter{page}{0}
%
\pagenumbering{roman}

\chapter*{Abstract}\label{chap:abstract}
\addcontentsline{toc}{chapter}{Abstract} 
\pagestyle{plain}
We investigate the low-noise regime of a large class of probabilistic cellular automata, including the North-East-Center model of Toom. They are defined as stochastic perturbations of cellular automata with a binary state space and a monotonic transition function and possessing a property of erosion. These models were studied by \citet{To80}, who gave both a criterion for erosion and a proof of the stability of homogeneous space-time configurations.

Basing ourselves on these major findings, we prove, for a set of initial conditions, exponential convergence of the induced processes toward the extremal invariant measure with a highly predominant state. We also show that this invariant measure presents exponential decay of correlations in space and in time and is therefore strongly mixing. This result is due to joint work with Augustin de Maere.

For the two-dimensional probabilistic cellular automata in the same class and for the same extremal invariant measure, we give an upper bound to the probability of a block of cells with the opposite state. The upper bound decreases exponentially fast as the diameter of the block increases. This  upper bound complements, for dimension $2$, the lower bound of the same form obtained for any dimension greater than $1$ by \citet{FeTo03}.

In order to prove these results, we use graphical objects that were introduced by \citet{To80} and we give a review of their construction.

\clearpage
\chapter*{Remerciements}\label{chap:merci}
\selectlanguage{francais}
A prŽsent que s'achve cette thse, c'est avec beaucoup de plaisir que je profite de l'occasion traditionnelle de remercier tous ceux et celles qui y ont contribuŽ. Il est possible que j'oublie quelques noms; si c'est le cas, recevez toutes mes excuses.

Tout d'abord, un grand merci ˆ Jean Bricmont qui a acceptŽ de me guider tout au long de ce travail, et ce depuis le mŽmoire de master. Merci pour tout le temps et l'Žnergie que vous m'avez offerts en Žtant toujours disponible, notamment lors de mes \og angoisses mŽtaphysiques \fg{} de dernire minute ˆ propos des fondements mathŽmatiques de cette thse; pour les connaissances et les idŽes que vous m'avez transmises et sans lesquelles ces recherches n'auraient pu aboutir; pour les relectures minutieuses du manuscrit. Merci aussi et surtout pour vos encouragements, votre tact dans la manire de concilier naturellement l'accompagnement d'une thse avec la libertŽ et les critiques constructives avec l'Žcoute et le respect. J'espre que vous serez en partie rŽcompensŽ en partageant avec moi une certaine satisfaction liŽe ˆ la cl™ture de ce travail et ma joie d'avoir compris tant de choses gr‰ce ˆ votre aide.

Ma profonde gratitude va Žgalement aux membres du jury, pour l'honneur qu'ils m'ont fait en acceptant d'examiner cette thse, pour leur patience et leurs questions et remarques qui m'ont aidŽe ˆ adopter de nouveaux points de vue sur ces recherches. En particulier, merci ˆ Roberto Fern\'{a}ndez d'avoir bravŽ les chemins de fer belges pour tre prŽsent et pour ses suggestions sur les pistes ˆ explorer. Merci ˆ Luc Haine pour sa bienveillance et pour avoir repŽrŽ un point important qui mŽritait d'tre prŽcisŽ dans le manuscrit. Merci ˆ Christian Maes de m'avoir accueillie ˆ Leuven pour une discussion enrichissante qui m'a donnŽ l'impulsion dont j'avais besoin ˆ ce moment dans mes recherches, mme si je n'ai finalement pas trouvŽ de rŽponse aux questions qu'il m'a proposŽ d'Žtudier. Merci ˆ AndrŽ Nauts d'avoir partagŽ avec moi son intŽrt et sa culture ˆ propos des applications des modles ŽtudiŽs. Merci ˆ Philippe Ruelle pour son aide ˆ plusieurs reprises au cours de ces quatre annŽes et Žgalement lors du master. Merci ˆ Michel Willem d'avoir consacrŽ du temps ˆ cette thse malgrŽ le grand nombre de jurys de thse dont il assure la prŽsidence.

Je remercie le Fonds de la Recherche Scientifique -- FNRS gr‰ce auquel, durant ces quatre annŽes de doctorat, j'ai bŽnŽficiŽ d'un mandat d'aspirante et d'un crŽdit de fonctionnement.

L'apprentissage du mŽtier de chercheur/se implique de longs moments de concentration en solitaire. Les confŽrences, Žcoles d'ŽtŽ et rŽunions de travail sont des occasions d'autant plus agrŽables de rencontrer des experts et des collgues avec qui discuter de nos sujets de recherche communs. Ils ne liront probablement pas ces lignes. NŽanmoins, que ce soit pour les conversations passionnantes, leurs encouragements ou les Žclaircissements qu'ils m'ont apportŽs, de vive voix ou par Žcrit, je tiens ˆ remercier notamment Charles Bennett, Aernout van Enter, AndrŽ FŸzfa, Peter G\'{a}cs, Lucas GŽrin, Mieke Gorissen, Lawrence Gray, Dominique Lambert, Kerry Landman, Carlangelo Liverani, Pierre-Yves Louis, Robert MacKay, Jean Mairesse, Irne Marcovici, Frank Redig, Andrei Romashchenko, Piotr Slowinski, Lorenzo Taggi, Andre Toom, Anja Voss-Boehme.

Il fallait aller moins loin pour les rencontrer... Le troisime Žtage de la tour b abrite ou a abritŽ ces dernires annŽes plus d'occupants qu'il n'en avait l'air. Je remercie tout particulirement Augustin pour sa gŽnŽrositŽ lorsqu'il m'a confiŽ ses idŽes ˆ son dŽpart de l'universitŽ, pour que je puisse poursuivre le travail qu'il avait commencŽ; plus tard Žgalement, lorsqu'il a consacrŽ son temps libre ˆ relire ce qui est devenu un article et la troisime partie de cette thse. C'est donc aussi en grande partie ˆ lui que je la dois. Merci ˆ Hanne pour nos discussions Žclairantes  sur les mouvements des fronts et pour sa gentillesse. Bon voyage! Merci ˆ Franois, Adrien et Bernard, avec qui j'ai eu la chance de partager le bureau b322 et les pauses papote, dans le calme et la bonne humeur. Merci ˆ Jean-Pierre Antoine pour son attention ˆ l'avancement de mon travail et merci ˆ Jean Pestieau pour ses encouragements rŽguliers et ses anecdotes sur notre cher village Froidchapelle.

J'adresse mes remerciements les plus sincres aux doctorants et post-docs de math et physique avec qui j'ai pu partager le rŽconfort aprs l'effort. Merci aux mathŽmaticiens d'avoir systŽmatiquement pensŽ ˆ nous inviter, nous les physiciens plus ou moins mathŽmaticiens, ˆ vos activitŽs diverses telles que le sŽminaire des doctorants. Pour avoir partagŽ nos dŽboires de pauvres doctorants mais aussi nos joies et surtout pour notre amitiŽ, mille mercis ˆ Elvira, Mathieu, Micha'l, Nabila et Violette. Je me rŽjouis ˆ la perspective de cŽlŽbrer avec vous vos fins de thse dans un futur plus si lointain.

Merci aux secrŽtaires pour leur aide si efficace et leur sympathie. Merci aux Žtudiants ˆ l'enthousiasme communicatif que j'ai pu rencontrer dans les classes de TP. Merci aux membres du jury de l'OMB pour leur travail formidable et pour m'avoir accueillie dans leur groupe.

Je ne saurais dire qui m'a donnŽ le gožt des maths et des sciences, mais les personnes suivantes y sont certainement pour quelque chose : que soient remerciŽs Mesdames Nathalie Quennery, Macq et Orfanu, Messieurs Duthoit et Jacquart, ainsi que les nombreux bŽnŽvoles des week-ends ˆ WŽpion et de l'EUSO.

Enfin, je remercie profondŽment mes proches, famille et amies, qui m'ont encouragŽe ces derniers mois, sans me tenir rigueur de mon manque de disponibilitŽ. Je ne vous cite pas mais je crois que vous vous reconna"trez. Merci pour votre comprŽhension et d'avoir su tre lˆ par de simples messages qui ont eu un impact considŽrable. Merci ˆ Guillaume de m'avoir non seulement supportŽe dans la pŽriode difficile de rŽdaction mais d'avoir en plus tout fait pour me la rendre agrŽable. Merci ˆ mes parents et ˆ mon frre pour leur prŽsence et leur soutien inconditionnel depuis toujours.
\selectlanguage{english}
\clearpage
\addcontentsline{toc}{chapter}{Contents}
\microtypesetup{protrusion=false}
\tableofcontents
\microtypesetup{protrusion=true}
\chapter*{Introduction}\label{chap:intro}
\pagestyle{fancy}
\markboth{Introduction}{Introduction}
\addcontentsline{toc}{chapter}{Introduction}
The research presented in this thesis falls within the framework of non-equilibrium statistical mechanics. Equilibrium statistical mechanics makes use of probability theory in order to deduce, from the interactions of a large number of microscopic components, the behavior of macroscopic observables describing the equilibrium state of the whole system. Its major achievements include mathematical proofs of the existence of phase transitions between distinct parameter regimes of some models. 

A typical example is the Ising model for ferromagnetism in dimension $d\geq 2$. In the high-temperature regime of that model, the Gibbs measure that describes the equilibrium state is unique, while in the low-temperature regime and at zero magnetic field, there are an infinite number of states. Among them, two extremal Gibbs measures present a dominance of either one of the two values of spins. It is also well-known that these two extremal states, called `pure phases', in the low-temperature regime, as well as the unique state in the high-temperature regime, present an exponential decay of correlations.

Non-equilibrium statistical mechanics broadens these successful investigations by including a time evolution, either in continuous time, for models called `interacting particle systems', or in discrete time, for probabilistic cellular automata. The latter are discrete-time stochastic processes with the Markov property, made of lattices of components whose individual states take values in a finite set and are simultaneously updated at every time step. The transition rules involve interactions, perturbed by some noise, between neighboring components.

Probabilistic cellular automata are at the crossroads with another field of mathematics, namely the science of computation and complexity. Indeed, they are stochastic perturbations of cellular automata, which are themselves a fruitful object of study for understanding the emergence of complexity from the combination of many simple constituents, as in the famous cellular automaton of Conway, Game of Life. Some cellular automata are also capable of simulating universal Turing machines.

The long-time limit of the stochastic processes in models of probabilistic cellular automata has been the subject of many numerical and theoretical results in the last fifty years and a lot of questions remain open - see for instance the surveys by \citet{To95,ToVaStMiKuPi90}. Like cellular automata, despite the apparent simplicity of their discrete configuration space and merely local interactions, probabilistic cellular automata exhibit a variety of macroscopic tendencies.

In particular, the notion of phase transition still makes sense in this dynamical context, as regards the number and properties of the different stationary states, or invariant measures, that probabilistic cellular automata can approach when time goes to infinity. Indeed, if probabilistic cellular automata are seen as perturbations of cellular automata by some noise, the intensity of the noise is a parameter that plays a role similar to the role of temperature in equilibrium statistical mechanics. Then, in some models, for an open set of values of that parameter, namely in their `low-noise regime', the limiting states of the processes can be non-unique and depend strongly on the initial conditions, thus providing examples of systems which keep remembering part of the data from their remote past. To the contrary, in the high-noise regime, all processes converge to a unique stationary state, regardless of their initial conditions.

In this thesis we consider the probabilistic cellular automata resulting from small random perturbations of the deterministic cellular automata in a certain class that we will define in Chapter~\ref{chap:introeroders}, namely the monotonic binary cellular automata presenting an erosion property. These cellular automata erase in a finite time any finite island of impurities in a predominantly homogeneous configuration. The pioneer articles of \citet{To76,To80} gave a criterion for a monotonic binary cellular automaton to have the erosion property. \citet{To80} also proved that this erosion condition implies the stability of the fixed homogeneous configuration of the deterministic cellular automaton under the introduction of a small error rate: the corresponding probabilistic cellular automaton admits an invariant measure for which the probability of deviating from that homogeneous configuration tends to $0$ when the error rate tends to $0$. In particular, an important consequence of this stability is the existence of a phase transition for some of the probabilistic cellular automata under consideration.

One is interested in exploring the different regimes of those probabilistic cellular automata and the properties of the corresponding stationary states, to compare them with the properties of Gibbs measures in equilibrium statistical mechanics. The high-noise regime is rather well understood. In particular, the noise weakens the interactions between neighboring components and leads to an exponential decay of correlations for the unique invariant measure. But the critical and low-noise regimes are still open to investigations.

The long-time asymptotics of this class of probabilistic cellular automata in the low-noise regime has already been explored by means of simulations -- see \citet{BeGr85,DiMa11,Mak98,Mak99,VaPePi69} -- and theoretical studies -- e.g.\ \citet{BeKrMa93,DeMa06,FeTo03}. Here we address the problem from a theoretical point of view. By virtue of the ergodic theorem, the asymptotic behavior of such a probabilistic cellular automaton is actually given by its ergodic invariant probability measures. We focus on the invariant measure described above, which can be compared with the pure phases in the low-temperature regime of equilibrium models. It is characterized by a large predominance of one of the two possible states of components. We show two new results about its statistical properties. To prove them, we use and extend techniques introduced by \citet{To80}, \citet{KeLi06} and \citet{deM10}.

First, in a joint work with Augustin de Maere, we show, in any dimension, that this extremal invariant measure in the low-noise regime also presents an exponential decay of correlations. This generalizes a result in dimension $1$ of \citet{BeKrMa93}.

On the other hand, we examine the event where a large connected set of components are all in the untypical state. A comparison with independent random variables, or even with Gibbs measures in the low-temperature regime, would suggest that the probability of that event decreases exponentially fast with the volume of the set. But \citet{FeTo03} established that it is not always the case. They proved that for some of the models, the decrease can be at least as slow as a decreasing exponential of the surface of the boundary of the set, rather than of its volume. This contrasts with our result of exponential decay of correlations and shows that probabilistic cellular automata can exhibit special behaviors that cannot be reduced to those observed in Gibbs equilibriums. Our second result is an upper bound on the probability of the same event, in dimension $2$. It complements the lower bound of \citet{FeTo03}. The two bounds have the same asymptotic dependence on the size of the set so the asymptotic behavior of that probability is completely determined for the models under consideration.

\clearpage
\pagenumbering{arabic}
\part[Cellular automata with an erosion property\\and their stochastic perturbations]{Cellular automata\\with an erosion property\\and their\\stochastic perturbations}\label{part:CAPCA}
\chapter{Cellular automata and erosion}\label{chap:introeroders}
\section{Definitions and notations}\label{sec:defnotCA}

The definition of a \textit{cellular automaton} (hereafter denoted by the acronym CA) is simple in the sense that it involves discrete space, discrete time and a finite state space for each of the cells. We consider CA on the infinite integer lattice $\natspace$ in any dimension $d \in \nat^*$. Here we use the notation $\nat^*$ for the set of positive integers $\nat^* = \nat \setminus \{0\}$. At every site $x$ of $\natspace$, there is a \textit{cell} which is in a state $\omega_x$ belonging to a finite state space $S$. In this thesis, we concentrate in particular on \textit{binary} CA, where each cell can only take two different states. We write them $0$ and $1$ so the state space is always $S = \{0,1\}$ from now on. Let $X = S^{\mathbb Z^d}$ be the configuration space for the whole system of cells. For any \textit{configuration} $\boldsymbol{\omega} \in X$, $\omega_x$ will denote the value of $\vect{\omega}$ at site $x$ and $\vect{\omega}_{A}$ will denote the vector $(\omega_x)_{x \in A}$ for any $A \subseteq \mathbb Z^d$. We will also use the notation $(\vect{\omega}_{\neq x}, a)$ for the configuration obtained from $\vect{\omega}$ by replacing the state $\omega_x$ at site $x$ with the value $a\in S$.

The system evolves in discrete time steps according to a deterministic evolution law. At each time $t$ in $\nat$, the states of all cells are updated simultaneously. The evolution rule for the state of the cell at a site $x$ of the $\mathbb Z^d$ lattice involves its neighbors, which are defined as the elements of the \textit{neighborhood} $\mathcal U(x) = x +\mathcal U$ for a fixed finite set $\mathcal U= \{u_1, \dotsc , u_R \} \subset \mathbb Z^d$. Let $\varphi:S^{\mathcal U} \to S$ be the \textit{updating function}. Starting from a configuration $\vect{\omega} \in X$, the simultaneous updating of every cell at every time step consists in transforming the state $\omega_x$ at site $x$ into the value
\begin{equation*}
\varphi_x(\vect{\omega}):=\varphi \left(\omega_{x+u_1}, \dotsc , \omega_{x+u_R}\right).
\end{equation*}
This defines a map $D$ from the configuration space $X$ to itself. For any given initial configuration $\vect \omega^{\mathrm{in}}$ in $X$, the iteration of $D$ produces a \textit{trajectory} $\left( D^t \vect \omega^{\mathrm{in}} \right)_{t \in \nat}$.

In this thesis, we focus on \textit{monotonic} binary CA, for which $\varphi$ is monotonic in the sense that if $\omega_u \leq \omega'_u$ for all $u \in \mathcal U$, then $\varphi(\vect{\omega}_{\mathcal U}) \leq \varphi({\vect{\omega}}_{\mathcal U}')$. We also reject the trivial case of a constant function $\varphi$. Note that these two assumptions imply that $\varphi(0, \dotsc, 0)= 0$ and $\varphi(1, \dotsc, 1)= 1$. Then the configurations $\vect \omega^{(0)}$ and $\vect \omega^{(1)}$, defined by $\omega^{(0)}_x =0 \ \forall x  \in \mathbb Z^d$ and $\omega^{(1)}_x =1 \ \forall x  \in \mathbb Z^d$, are left invariant by the deterministic time evolution. They generate completely homogeneous trajectories.

\begin{rmk}
The restriction to the class of monotonic binary CA was introduced by Toom in \citep{To76} and in Section \MakeUppercase{\romannumeral 4} of \citep{To80}. In the latter article, they were included in a larger class of models called `monotonic binary tessellations', allowing the introduction of a memory in the evolution law. The results presented in this thesis rely strongly on the crucial results given in these articles which themselves hold on condition that the models be binary and monotonic.
\end{rmk}

Sometimes it is useful to take a space-time point of view. Let $V=\natspace \times \nat$ denote the space-time lattice. A process with a discrete time evolution in the configuration space $X=S^{\mathbb Z^d}$ produces a sequence $(\vect \omega ^t)_{t \in \nat}$ which can be seen as a \textit{space-time configuration} $\stvect \omega$ in $S^V$. For any space-time configuration $\stvect \omega$ in $S^V$, any point $v=(x,t)$ in $V$ and any subset $A$ of $V$, let $\ushort \omega_v$ in $S$ denote the state at time $t$ of the cell placed at site $x$ and let $\stvect \omega_A$ in $S^{A}$ denote the sequence of states $\ushort \omega_w$ indexed by the points $w$ of $A$.

Consider the subset $V_0=\{ (x,0) \mid x \in \natspace\} \subset V$. In the space-time formalism, an \textit{initial condition} for the CA means a choice of the values of the space-time configuration $\stvect \omega \in S^V$ at all points in $V_0$, i.e.\ a choice of $\stvect \omega_{V_0}$ in $S^{V_0}$. Every point $v=(x,t)$ in $V$ such that $t>0$ has a \textit{space-time neighborhood}, defined as the set $U(v)= \{ (x+u_1,t-1),\dotsc , (x+u_R,t-1) \}$, consisting of the neighbors of the site $x$ at the preceding time. It is the translate $U(v)=v+U$ of the set $U=\{(u_1,-1),\dotsc,(u_R,-1)\}$.
A space-time configuration $\stvect \omega$ in $S^V$ is called a \textit{trajectory} of the CA if it is induced by an initial condition and successive updates of the states of all cells according to the function $\varphi$ applied to the states of their neighbors, i.e.\ if
\begin{equation*}
\ushort \omega_v = \varphi ( \stvect \omega_{U(v)} ) \quad \text{for all $v$ in $V \setminus V_0$}.  
\end{equation*}
In particular, the space-time configuration $\stvect \omega^{(0)}$ with $\ushort \omega^{(0)}_v = 0$ for all $v$ in $V$ and the space-time configuration $\stvect \omega^{(1)}$ with $\ushort \omega^{(1)}_v = 1$ for all $v$ in $V$ are both trajectories.

\section{Examples}\label{sec:examplesCA}

\subsection{The Stavskaya CA}\label{sec:Stavsdef}

The model of Stavskaya was introduced by \citet{StPi71}. In this one-dimensional model, the neighborhood of the origin is the subset $\mathcal U=\{ 0,1\} $ of $ \ent$ so the neighborhood of any site $x$ in $\ent$ is $\mathcal U(x)=\{ x,x+1\} $, made up of the site itself and its nearest neighbor to the right. The updating function $\varphi : \{0,1\}^{\mathcal U} \to \{0,1\}$ returns $\varphi (\omega_0,\omega_1) = 1$ if and only if $\omega_0=\omega_1=1$.

That CA is binary and monotonic. Let us examine its behavior for various initial configurations. As already mentioned, the homogeneous configurations $\vect \omega^{(0)}$ and $\vect \omega^{(1)}$ are fixed points of the dynamics. Let $\vect \omega^{\textrm{in}}$ be the initial configuration such that $ \omega^{\textrm{in}}_x=1$ for all $x$ in a finite interval $x_L,x_L+1,\dotsc,x_R$ of $\ent$ and $\omega^{\textrm{in}}_x=0$ everywhere else. Applying the updating rule at each site of $\ent$ simultaneously, we find that, at time $t=1$, the only cell that changes states is the cell at site $x_R$, going from state $1$ to state $0$ because $\varphi(1,0)=0$. The interval containing the cells in state $1$ now goes from $x_L$ to $x_R-1$: its right border has moved one step to the left, while the left border remains steady. The iteration through time of this displacement engenders a progressive \textit{erosion} of the initial interval of cells with state $1$ -- see Figure~\ref{fig:erosionStav}. At the finite time $t=x_R-x_L+1$, the trajectory reaches a fixed configuration: $D^{x_R-x_L+1} \vect \omega^{\textrm{in}} = \vect \omega^{(0)}$.

\begin{figure}
\centering
  \begin{tikzpicture}
[scale=.7,important line/.style={ultra thick}]
\foreach \x in {0,...,6}
   \foreach \y in {-1,...,4}
        {
        \ifnum \y > \x
                \draw[fill] (\x,\y) circle (.1cm);
        \else 
                \draw (\x,\y) circle (.1cm);
        \fi
        }
\foreach \x in {-2,-1}
   \foreach \y in {-1,...,4}
        {
         \draw (\x,\y) circle (.1cm);
        }
\draw[->] (-3,4) -- +(0,-5) node[anchor=south east] {time };
\draw[->] (0,5) -- +(6,0) node[anchor=south east] {space};
\end{tikzpicture}
\caption{The trajectory with the initial configuration $\vect \omega^{\textrm{in}}$ for the Stavskaya CA. Here and in most of the following figures, the time axis is in the vertical direction and points downwards, while the space lattice is in the horizontal direction. Points in the space-time lattice $V$ are represented by small circles. Points where the state is $0$ are represented by white circles; points where the state is $1$ are represented by black circles.}
\label{fig:erosionStav}
\end{figure}
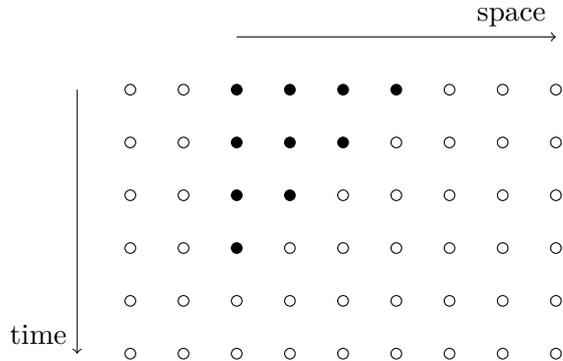

We can generalize this observation and check the following. For any finite subset $A$ of $\ent$, if the initial configuration $\tilde{\vect \omega}^{\textrm{in}}$ is given by $\tilde{\omega}^{\textrm{in}}_x=1$ for all $x$ in $A$ and $\tilde{\omega}^{\textrm{in}}_x=0$ for all $x$ in $\ent \setminus A$, then after some finite time the trajectory coincides with the fixed homogeneous configuration $\vect \omega^{(0)}$. Indeed, the finite set $A$ is always included in some finite interval $x_L,x_L+1,\dotsc,x_R$ of $\ent$. Due to the monotonicity of $\varphi$, at any time $t \leq x_R-x_L$, the set of sites where the configuration $D^t \tilde{\vect \omega}^{\textrm{in}}$ has the state $1$ also satisfies
\begin{equation*}
\{x \in \ent \mid \left( D^t \tilde{\vect \omega}^{\textrm{in}}\right) _x=1 \} \subseteq \{x_L,\dotsc, x_R-t \}
\end{equation*}
and, at time $t=x_R-x_L+1$ onwards, that set is empty. We will then conclude that, in the Stavskaya CA, any finite island of cells with state $1$ is eroded in a finite time by the surrounding sea of cells with state $0$. 

\subsection{The symmetric majority CA in dimension 1}\label{sec:maj1def}

Let us turn to another one-dimensional CA, whose neighborhood and updating function possess a left-right symmetry that was not present in the Stavskaya CA. \added{It was introduced by \citet{VaPePi69}. }Here the neighborhood of the origin in $\ent$ is $\mathcal U =\{-1,0,1\}$. The updating function $\varphi : \{0,1\}^{\mathcal U} \to \{0,1\}$ returns the majority state among the states of the three neighbors. It is monotonic and, furthermore, symmetric under the interchange of the two states $0$ and $1$. Transforming state $0$ into state $1$ and vice versa at all sites and times in a trajectory yields another trajectory. We will call this the \textit{0-1 symmetry}.

While the configurations $\vect \omega^{(0)}$ and $\vect \omega^{(1)}$ are still fixed points, one no longer observes the erosion phenomenon in that CA. For instance, the initial configuration $\vect \omega^{\textrm{in}}$ such that the set of sites where $\omega^{\textrm{in}}_x=1$ is exactly a finite interval $x_L,x_L+1,\dotsc,x_R$ of $\ent$ is also left invariant by the dynamics -- see Figure~\ref{fig:nonerosionmaj}. Indeed, cells at the borders of the interval have two neighbors in state $1$, including themselves, so they do not change states.

\begin{figure}
\centering
  \begin{tikzpicture}
[scale=.7,important line/.style={ultra thick}]
\foreach \x in {0,...,6}
   \foreach \y in {-1,...,4}
        {
        \ifnum \x < 4
                \draw[fill] (\x,\y) circle (.1cm);
        \else 
                \draw (\x,\y) circle (.1cm);
        \fi
        }
\foreach \x in {-2,-1}
   \foreach \y in {-1,...,4}
        {
         \draw (\x,\y) circle (.1cm);
        }
\draw[->] (-3,4) -- +(0,-5) node[anchor=south east] {time};
\draw[->] (0,5) -- +(6,0) node[anchor=south east] {space};
\end{tikzpicture}
\caption{The trajectory with the initial configuration $\vect \omega^{\textrm{in}}$ for the symmetric majority CA in dimension $1$.}
\label{fig:nonerosionmaj}
\end{figure}
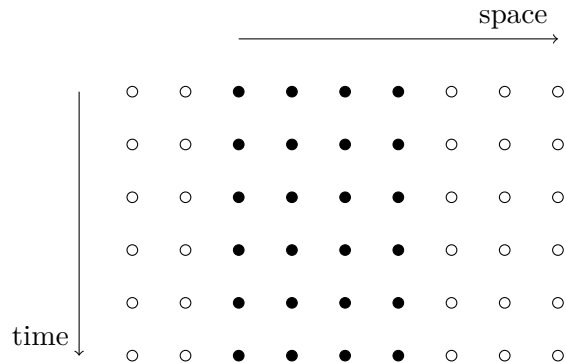

\subsection{The North-East-Center majority CA}\label{defNEC}

The North-East-Center majority CA was also introduced by \citet{VaPePi69}. It has $d=2$ and $\mathcal U=\{(0,0);(1,0);(0,1)\} \subset \mathbb Z ^2$. The neighborhood $\mathcal U(x)$ of a site $x$ thus consists of its nearest neighbor to the north, $x+(0,1)$, its nearest neighbor to the east, $x+(1,0)$, and $x$ itself. The updating function $\varphi$ returns the majority state among these three neighbors' states. It implies again that the model has the $0-1$ symmetry.

Let us probe the trajectory generated by the initial configuration $\vect \omega^{\textrm{in}}$ such that the set of sites $x$ in $\plan$ where $\omega^{\textrm{in}}_x=1$ is exactly a triangle of the form $\{(x_1,x_2) \in \plan \mid x_1 \geq a, x_2 \geq b, x_1+x_2 \leq c\}$ for some $a,b,c\in \ent$ with $a+b\leq c$. At time $t=1$, the only cells that change states are those on the diagonal side of the triangle -- see Figure~\ref{fig:erosionNEC}. Their states take the value $0$ at time $t=1$ because both their northern and eastern nearest neighbors are in state $0$ at time $0$. One can easily check that all other cells have at least one of their northern and eastern neighbors in the same state as themselves so they do not modify their states. The set of sites where $\left(D\vect \omega^{\textrm{in}}\right)_x=1$ is the smaller triangle $\{(x_1,x_2) \in \plan \mid x_1 \geq a, x_2 \geq b, x_1+x_2 \leq c-1\}$. We can of course repeat the same argument at all later times. The triangle of cells with state $1$ progressively shrinks and at time $t=c-a-b+1$ the trajectory reaches the fixed homogeneous configuration $\vect \omega^{(0)}$.

As for the Stavskaya CA, we can use the monotonicity of the North-East-Center CA to show that any initial configuration $\tilde{\vect \omega}^{\textrm{in}}$ with only a finite number of sites with state $1$ will satisfy $D^t \tilde{\vect \omega}^{\textrm{in}} = \vect \omega^{(0)}$ for some finite time $t$. It suffices to cover the finite set of sites where the state is $1$ with a finite triangle of the form given above. Although this set might first grow until it fills at most the covering triangle, this triangle shrinks and the covered set will then steadily shrink with it.

On the other hand, the $0-1$ symmetry of the CA implies that the similar result holds for all initial configurations close to $\vect \omega^{(1)}$, i.e.\ with only a finite set of sites where the state is $0$. After some finite time, the generated trajectory reaches $\vect \omega^{(1)}$. We will say that any finite island of cells with state $0$ in a sea of cells with state $1$ is eroded and disappears in a finite time.

\begin{figure}
\centering
\begin{tikzpicture}
[scale=.6,important line/.style={very thick}]
\begin{scope}[shift={(0,0)}]
		\foreach \x in {-3,...,4}
		    \foreach \y in {-3,...,5}
		       {
		      \draw[shift={(-.5,-.5)}] (\x,\y) circle (.1cm);
		       }
		\foreach \x/ \y in {-2 / 3, -2 / 2, -1/2, -2/1, -1/1, 0/1,-2/0,-1/0,0 /0,1/0,-2/-1,-1/-1,0/-1,1/-1,2/-1, -2/-2,-1/-2,0/-2,1/-2,2/-2,3/-2}
		   {
		   	       \draw[shift={(-.5,-.5)},fill] (\x,\y) circle (.1cm);
		   }
	\draw[gray] (0,1) -- ++(0,1) -- ++(-1,0) -- ++(0,-2) -- ++(2,0) -- ++(0,1) -- ++(-1,0);
	\draw[->,shift={(1.5,-2.5)}] (-6,-2) -- +(2,0) node[below] {$x_1$};
	\draw[->,shift={(1.5,-2.5)}] (-6,-2) -- +(0,2) node[left] {$x_2$};
\end{scope}
\begin{scope}[shift={(10.5,0)}]
		\foreach \x in {-3,...,4}
		    \foreach \y in {-3,...,5}
		       {
		      \draw[shift={(-.5,-.5)}] (\x,\y) circle (.1cm);
		       }
		\foreach \x/ \y in { -2 / 2,-2/1, -1/1,-2/0,-1/0,0 /0,-2/-1,-1/-1,0/-1,1/-1,-2/-2,-1/-2,0/-2,1/-2,2/-2}
		   {
		   	       \draw[shift={(-.5,-.5)},fill] (\x,\y) circle (.1cm);
		   }
	\draw[->,shift={(1.5,-2.5)}] (-6,-2) -- +(2,0) node[below] {$x_1$};
	\draw[->,shift={(1.5,-2.5)}] (-6,-2) -- +(0,2) node[left] {$x_2$};
\end{scope}
\end{tikzpicture}
\caption{The evolution of an initial configuration $\vect \omega^{\textrm{in}}$ under the North-East-Center majority rule. This figure shows two sections of the space-time lattice $V$ at times $t=0$ on the left and $t=1$ on the right. The horizontal and vertical axes correspond to the two dimensions of the space lattice $\plan$.}
\label{fig:erosionNEC}
\end{figure}
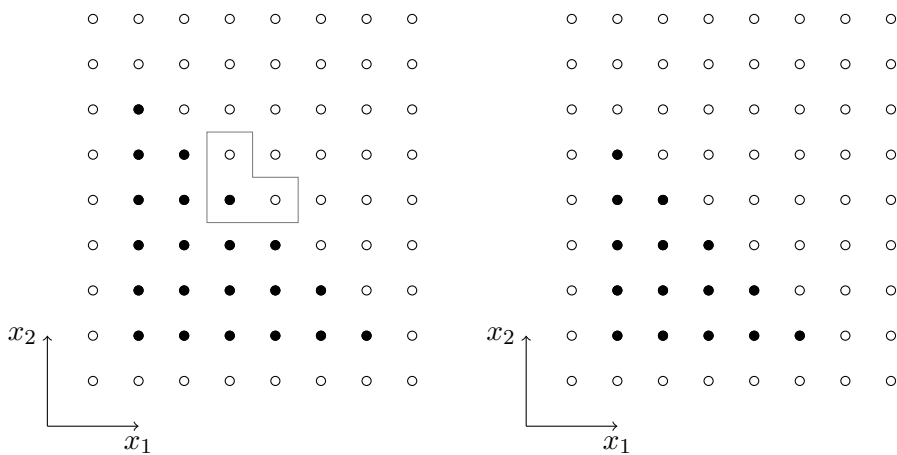

\subsection{The symmetric majority CA in dimension 2}\label{sec:NECSMdef}

Our last example brings into play a more symmetric neighborhood, namely $\mathcal U= \{(0,0);(-1,0);(1,0);(0,-1);(0,1)\}\subset \plan$, also called the von Neumann neighborhood. The neighbors of a site are its four nearest neighbors in the two-dimensional lattice and the site itself. They are in odd number so it still makes sense to choose as updating function $\varphi:\{0,1\}^{\mathcal U} \to \{0,1\}$ the function that returns the majority state among its arguments. This monotonic binary CA presents the same symmetry between the states $0$ and $1$ as the two previous models.

If we inject the initial configuration $\vect \omega^{\textrm{in}}$ with the triangular set of cells in state $1$ in that CA, the resulting trajectory will never reach the fixed homogeneous configuration $\vect \omega^{(0)}$ -- see Figure~\ref{fig:nonerosionmaj2}. Actually, at time $t=1$, the set of cells with state $1$ adopts a shape that is invariant under the subsequent applications of the majority rule in the symmetric von Neumann neighborhood. Each cell, inside as well as outside this new island, has at least two nearest neighbors in the same state as itself and therefore it will never change states. Likewise, any configuration where the cells in state $1$ are to be found exactly at all sites of a finite rectangle $\{(x_1,x_2) \in \plan \mid a\leq x_1\leq b,c\leq x_2\leq d\}$, with $a< b,c< d$ in $\ent$, is a fixed point of the CA dynamics.

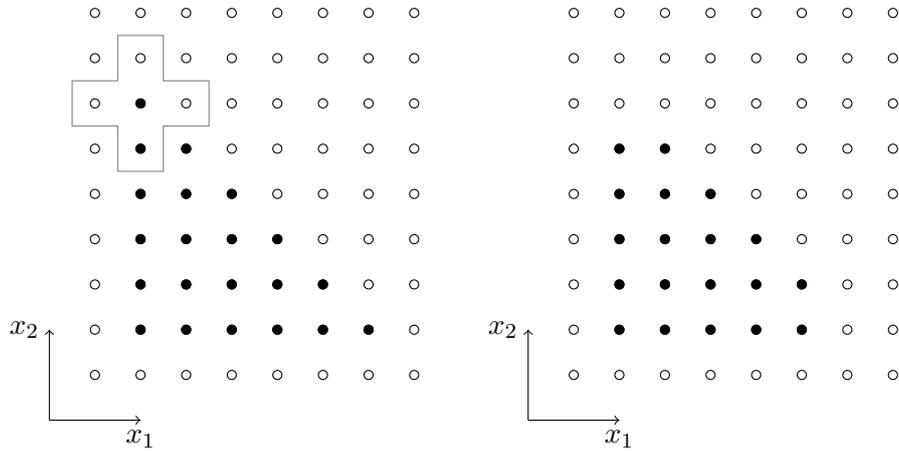
\begin{figure}
\centering
\begin{tikzpicture}
[scale=.6,important line/.style={very thick}]
\begin{scope}[shift={(0,0)}]
		\foreach \x in {-3,...,4}
		    \foreach \y in {-3,...,5}
		       {
		      \draw[shift={(-.5,-.5)}] (\x,\y) circle (.1cm);
		       }
		\foreach \x/ \y in {-2 / 3, -2 / 2, -1/2, -2/1, -1/1, 0/1,-2/0,-1/0,0 /0,1/0,-2/-1,-1/-1,0/-1,1/-1,2/-1, -2/-2,-1/-2,0/-2,1/-2,2/-2,3/-2}
		   {
		   	       \draw[shift={(-.5,-.5)},fill] (\x,\y) circle (.1cm);
		   }
	\draw[gray] (-2,3) -- ++(0,1) -- ++(-1,0) -- ++(0,-1) -- ++(-1,0) -- ++(0,-1) -- ++(1,0) -- ++(0,-1) -- ++(1,0) -- ++(0,1) -- ++(1,0) -- ++(0,1) -- ++(-1,0);
	\draw[->,shift={(1.5,-2.5)}] (-6,-2) -- +(2,0) node[below] {$x_1$};
	\draw[->,shift={(1.5,-2.5)}] (-6,-2) -- +(0,2) node[left] {$x_2$};
\end{scope}
\begin{scope}[shift={(10.5,0)}]
		\foreach \x in {-3,...,4}
		    \foreach \y in {-3,...,5}
		       {
		      \draw[shift={(-.5,-.5)}] (\x,\y) circle (.1cm);
		       }
		\foreach \x/ \y in {-2 / 2, -1/2, -2/1, -1/1, 0/1,-2/0,-1/0,0 /0,1/0,-2/-1,-1/-1,0/-1,1/-1,2/-1, -2/-2,-1/-2,0/-2,1/-2,2/-2}
		   {
		   	       \draw[shift={(-.5,-.5)},fill] (\x,\y) circle (.1cm);
		   }
	\draw[->,shift={(1.5,-2.5)}] (-6,-2) -- +(2,0) node[below] {$x_1$};
	\draw[->,shift={(1.5,-2.5)}] (-6,-2) -- +(0,2) node[left] {$x_2$};
\end{scope}
\end{tikzpicture}
\caption{The evolution of the initial configuration $\vect \omega^{\textrm{in}}$ under the two-dimensional symmetric majority CA. The initial configuration at time $t=0$ is shown on the left and the invariant configuration reached at time $t=1$ is on the right.}
\label{fig:nonerosionmaj2}
\end{figure}

\section{The erosion property}\label{sec:erosion}

A class of monotonic binary CA, illustrated by the Stavskaya and North-East-Center CA, will be of particular interest when studying their stochastic perturbations.

\begin{defin}[erosion property]
A monotonic binary CA is said to have the \textit{erosion property} or to be an \textit{eroder} if it verifies the following. For all finite subsets $A$ of $\natspace$, there exists a finite time $t\in \nat$ such that the initial configuration $\vect \omega^{\textrm{in}}$, defined by $\omega^{\textrm{in}}_x=1$ if $x\in A$ and $\omega^{\textrm{in}}_x=0$ if $x\in \natspace \setminus A$, satisfies $D^t \vect \omega^{\textrm{in}} = \vect \omega^{(0)}$.
\end{defin}

The completely homogeneous space-time configuration $\stvect \omega^{(0)}$ with the state $0$ at all points in space-time is then said to be an \textit{attractive} trajectory of that CA. It means that, for any initial condition that differs from $\vect \omega^{(0)}$ at only a finite number of sites, the corresponding trajectory differs from $\stvect \omega^{(0)}$ at a finite number of points in space-time.

It results from the discussion of the examples in Section~\ref{sec:examplesCA} that the Stavskaya and North-East-Center CA are eroders. Moreover, the symmetry between states in the North-East-Center CA implies that this CA also erodes finite islands of cells with state $0$ surrounded with a sea of cells with state $1$. In that case, we will say that the CA is a \textit{zero-eroder} to distinguish that property from the erosion property defined above. The space-time configuration $\stvect \omega^{(1)}$ is then attractive. On the other hand, the symmetric majority CA in dimensions $1$ and $2$ are not eroders because they admit fixed configurations with finite islands of cells in state $1$.

In \citep{To76}, Toom gave a necessary and sufficient condition for a monotonic binary CA to be an eroder. It is expressed in terms of the \textit{zero-sets}: the subsets $\mathcal Z$ of $\realspace$ such that if $\omega_u =0$ for all $u \in \mathcal Z \cap \mathcal U$, then $\varphi(\vect\omega_{\mathcal U})=0$. We notice that if a subset $\tilde{\mathcal Z}$ of $\realspace$ contains a zero-set $\mathcal Z$, then $\tilde{\mathcal Z}$ itself is a zero-set. A \textit{minimal zero-set} is a zero-set that does not include any zero-set other than itself. The minimal zero-sets are included in $\mathcal U$. Since $\mathcal U$ is a finite set, there is a finite number of minimal zero-sets and each minimal zero-set is a finite set. Let $\mathcal Z_j$, $j=1,\dotsc, J$ denote the minimal zero-sets for a given monotonic binary CA. Because of the translational symmetry of the CA evolution rule, it makes sense to define also the zero-sets of any site $x$ in $\ent^d$ as the sets of the form $\mathcal Z(x):= x+\mathcal Z$ where $\mathcal Z$ is a zero-set.

In the space-time formalism, it will sometimes be more convenient to deal with the space-time zero-sets, just like we defined the space-time neighborhoods of points in Section~\ref{sec:defnotCA}. A subset $Z$ of $\mathbb R^{d+1}$ is a \textit{space-time zero-set} if $Z \supseteq \{(u,-1) \in \realspace \times \real \mid u \in \mathcal Z_j \}$ for some $j=1,\dotsc,J$. For all space-time zero-sets $Z$, let $Z(v):=v+Z$ denote the space-time zero-sets of a point $v$ in $V\setminus V_0$. In particular, if, for some point $v$ in $V\setminus V_0$, a trajectory $\stvect \omega$ has $\ushort \omega_w=0$ for all points $w$ in $Z(v) \cap V$, then necessarily $\ushort \omega_{v}=0$. A \textit{minimal space-time zero-set} is naturally defined as a space-time zero-set that includes no space-time zero-set other than itself. The minimal space-time zero-sets are $Z_j=\mathcal Z_j \times \{-1\} \subset \mathbb Z^{d+1}$ for $j=1,\dotsc,J$.

For any dimension $d$ and any subset $A$ of $\realspace$, or of $\natspace$ seen as a subspace of $\realspace$, let $\conv(A)$ denote the convex hull of $A$, i.e.
\begin{equation*}
\conv(A)=\left\{\ \sum_{i=1}^n \lambda_i x_i \  \middle | \ n\in \nat^*, \ \lambda_i \in [0,1] \, \forall i, \  \sum_{i=1}^n \lambda_i=1, \ x_i \in A \, \forall i \ \right\}.
\end{equation*}

\begin{defin}[erosion criterion]
A monotonic binary CA is said to satisfy the \textit{erosion criterion} if $\bigcap_{j=1}^J \conv(\mathcal Z_j)= \varnothing$.
\end{defin}

The erosion criterion was given by Andre Toom in Proposition 1 of \citep{To76} and in Theorem 6 of \citep{To80}, where he proved the following.

\begin{thm}[Toom's erosion theorem]\label{thm:erosion}
A monotonic binary CA possesses the erosion property if and only if it satisfies the erosion criterion.
\end{thm}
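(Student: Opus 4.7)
\emph{Plan.} The plan is to translate the erosion property into a quantitative statement about how half-spaces of $1$'s evolve under $D$, and then to reduce the theorem to a convex-geometric fact about the minimal zero-sets $\mathcal Z_1, \ldots, \mathcal Z_J$.

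\emph{Half-space calculus.} Using the characterization that $\varphi(\vect\omega)=0$ precisely when $\vect\omega$ vanishes on some minimal zero-set, I would first verify: if the initial $1$-region equals the half-space $H(n,c) := \{x \in \realspace : \langle x, n \rangle \leq c\}$, then after one application of $D$ the $1$-region equals $H(n, c - \delta(n))$, where
\[
\delta(n) \;:=\; \max_{j=1,\ldots,J} \ \min_{u \in \mathcal Z_j} \langle u, n \rangle.
\]
By the monotonicity of $\varphi$, if the initial $1$-region is merely contained in $H(n,c)$, then at time $t$ it is contained in $H(n, c - t\,\delta(n))$.

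\emph{Sufficiency via a key lemma.} To show that the erosion criterion implies erosion, I would cover a finite initial $A \subset \natspace$ by a bounded polyhedron $P = \bigcap_{k=1}^K H(n_k, c_k)$ and apply the half-space calculus facewise. The image $P_t := \bigcap_k H(n_k, c_k - t\,\delta(n_k))$ is empty (by Farkas' lemma) as soon as there exist weights $\alpha_k \geq 0$, not all zero, with $\sum_k \alpha_k n_k = 0$ and $\sum_k \alpha_k(c_k - t\,\delta(n_k)) < 0$, which happens for large $t$ once $\sum_k \alpha_k \delta(n_k) > 0$. The reduction is then completed by the key lemma: $\bigcap_j \conv(\mathcal Z_j) = \varnothing$ if and only if there exist vectors $n_1, \ldots, n_K \in \realspace$ and positive weights $\alpha_1, \ldots, \alpha_K$ with $\sum_k \alpha_k n_k = 0$ and $\sum_k \alpha_k \delta(n_k) > 0$. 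I would prove it via the support-function identity
\[
p \in \bigcap_j \conv(\mathcal Z_j) \iff \langle p, n \rangle \geq \delta(n) \text{ for all } n \in \realspace,
\]
together with a Farkas-type duality that trades the existence of such a dominating linear functional $p$ for the non-existence of a certifying family $(n_k, \alpha_k)$.

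\emph{Necessity and main obstacles.} For the converse, assume $p \in \bigcap_j \conv(\mathcal Z_j)$. The inequality $\delta(n) \leq \langle p, n \rangle$ (obtained for each $j$ from $p \in \conv(\mathcal Z_j)$ and then taking the max over $j$) makes $\delta$ non-positive on the closed half-space $\{n : \langle p, n \rangle \leq 0\}$, which rules out the shrinking-polyhedron argument. To exhibit an explicit finite counterexample to erosion, I would choose $A$ as a large slab-like configuration transverse to $p$, truncated in the orthogonal directions (or simply a large convex block if $p=0$); using an explicit expression $p = \sum_i \lambda_i^{(j)} u_i^{(j)}$ with $u_i^{(j)} \in \mathcal Z_j$ for each $j$, an induction on $t$ should show that the central cells admit, for every minimal zero-set, a witness inside the $1$-region, so they remain in state $1$ forever. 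The main obstacle is the key lemma: establishing the convex-geometric equivalence requires handling the non-concavity of $\delta = \max_j h_j$ and carefully applying separation to the family $\{\conv(\mathcal Z_j)\}$. A second delicate point is the necessity construction, which must truncate the naturally infinite slab shape into a genuinely finite island whose boundary is compatible with the geometry of every $\mathcal Z_j$ so that the $1$-region does not empty from that boundary.
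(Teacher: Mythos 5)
First, a point of framing: the paper does not prove Theorem~\ref{thm:erosion} at all — it imports it from Toom (\citet{To76}, Proposition~1, and \citet{To80}, Theorem~6). The closest the paper comes is Lemma~\ref{lemma:refvectinspace} and the surrounding discussion of moving fronts in Section~\ref{sec:refvectinspace}, which is precisely your "half-space calculus" plus your "key lemma". Your sufficiency direction is sound: the one-step evolution of a half-space of $1$'s is computed correctly, monotonicity gives the facewise containment $D^t$-image $\subseteq \bigcap_k H(n_k, c_k - t\,\delta(n_k))$, and your key lemma is (up to sign conventions) exactly Lemma~\ref{lemma:refvectinspace}, which the paper proves via Rockafellar's theorem; your support-function/Farkas route is an equivalent way to get it, since each $\conv(\mathcal Z_j)$ is a polytope and emptiness of the intersection is witnessed by finitely many supporting half-spaces.

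The genuine gap is in the necessity direction. Two concrete problems. (i) Your half-space calculus only yields \emph{upper} bounds on the $1$-region: for an intersection of half-spaces, $\mathbb{1}_{B_1 \cap B_2} \leq \mathbb{1}_{B_i}$ gives, by monotonicity, that the image of the intersection is \emph{contained in} the intersection of the images, never the reverse. So no facewise argument can show that a truncated slab keeps containing anything; the inductive invariant has to be built by hand, and that invariant is the entire content of Toom's necessity proof. (ii) The invariant you propose — "the central cells \dots remain in state $1$ forever" — is false in general. Take $d=1$, $\mathcal U=\{0,1,2\}$ with the majority rule: the minimal zero-sets are the three pairs, $\bigcap_j \conv(\mathcal Z_j)=\{1\}$, so $p=1\neq 0$, and a finite interval of $1$'s translates by $-1$ at each step, so no fixed cell stays in state $1$; moreover in $d=1$ with $p\neq 0$ there is no nonzero direction transverse to $p$, so the "slab transverse to $p$" does not exist. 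What is true (and provable from $p\in\bigcap_j\conv(\mathcal Z_j)$) is that an \emph{infinite} slab $\{x : \lvert\langle x,q\rangle\rvert \leq W\}$ with $\langle p,q\rangle=0$ and $W \geq \max_{u\in\mathcal U}\lvert\langle u,q\rangle\rvert$ is preserved, because each $x+\mathcal Z_j$ has $q$-coordinates whose convex hull straddles $\langle x,q\rangle$ and hence contains a witness in the slab; but passing from this infinite invariant set to a genuinely \emph{finite} non-eroding island — possibly one that translates by $-p$ per step rather than staying put — is the step you have not supplied, and "an induction on $t$ should show" does not identify the quantity being propagated. As it stands, only the "if" half of the theorem is proved.
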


\begin{rmk}
Using the variants defined above, the erosion criterion can also be stated in the following equivalent ways:
\begin{gather*}
\bigcap_{j=1}^J \conv(\mathcal Z_j)= \varnothing \\ \Leftrightarrow \bigcap_{\substack{\mathcal Z \\ \textrm{zero-set}}} \conv(\mathcal Z)= \varnothing  \Leftrightarrow \bigcap_{j=1}^J \conv(Z_j)= \varnothing \Leftrightarrow \bigcap_{\substack {Z\\  \textrm{space-time} \\ \textrm{ zero-set}}} \conv(Z)= \varnothing
\end{gather*}
\end{rmk}

Coming back to the examples of Section~\ref{sec:examplesCA}, we can check that the first and third ones satisfy the erosion criterion and that the second and fourth ones do not, which confirms our observations about them being eroders or not. The Stavskaya CA has exactly two minimal zero-sets, which are reduced to points: $\mathcal Z _1=\{0\}$ and $\mathcal Z_2=\{1\}$. The convex hull of a point is the point itself so $\conv(\mathcal Z_1) \cap \conv(\mathcal Z_2) = \{0\} \cap \{1\}=\varnothing$. The one-dimensional symmetric majority CA has three minimal zero-sets corresponding to the three pairs of neighbors, i.e.\ $\mathcal Z_1=\{-1,0\}$, $\mathcal Z_2=\{-1,1\}$ and $\mathcal Z_3=\{0,1\}$, because the majority state among three neighbors' states is $0$ as soon as two neighbors adopt the state $0$. The convex hull of a pair of distinct points is the line segment between these two end points. Therefore the point $0 \in \real$ belongs to the three convex hulls $\conv(\mathcal Z_j)$, $j=1,2,3$. Similarly, the North-East-Center majority CA has $\conv(\mathcal Z_1)=\{(0,x_2)\mid x_2 \in [0,1]\}$, $\conv(\mathcal Z_2)=\{(x_1,0)\mid x_1 \in [0,1]\}$ and $\conv(\mathcal Z_3)=\{(x_1,1-x_1)\mid x_1 \in [0,1]\}$ but these three line segments in $\mathbb R^2$ share no common point -- see Figure~\ref{fig:erosioncritNEC}. Finally, the symmetric majority CA in two dimensions has $\binom{5}{3}=10$ minimal zero-sets. Their convex hulls are $2$ line segments and $8$ triangles -- see Figure~\ref{fig:erosioncritNECSW}. All of them contain the point $(0,0)\in \mathbb R^2$.

\begin{figure}
\centering
  \begin{tikzpicture} 
           [scale=.75,important line/.style={thick}]
           \begin{scope}[shift={(0,0)}]
\draw[important line] (-1,-1) -- (1,-1) -- (1,0) -- (0,0) -- (0,1) -- (-1,1) -- cycle;
\foreach \position in {(-.5,-.5),(-.5,.5)} 
       {
      \draw \position circle (.1cm);
       }
       \draw[important line] (-.5,-.5) -- (-.5,.5);
       \node at (0,-1.5) {$\mathcal Z_1$};
\end{scope}
      \begin{scope}[shift={(5,0)}]
\draw[important line] (-1,-1) -- (1,-1) -- (1,0) -- (0,0) -- (0,1) -- (-1,1) -- cycle;
\foreach \position in {(-.5,-.5),(.5,-.5)} 
       {
      \draw \position circle (.1cm);
       }
        \draw[important line] (-.5,-.5) -- (.5,-.5);
               \node at (0,-1.5) {$\mathcal Z_2$};
\end{scope}
      \begin{scope}[shift={(10,0)}]
\draw[important line] (-1,-1) -- (1,-1) -- (1,0) -- (0,0) -- (0,1) -- (-1,1) -- cycle;
\foreach \position in {(-.5,.5),(.5,-.5)} 
       {
      \draw \position circle (.1cm);
       }
        \draw[important line] (.5,-.5) -- (-.5,.5);
               \node at (0,-1.5) {$\mathcal Z_3$};
\end{scope}
\end{tikzpicture}
\caption{The three minimal zero-sets for the North-East-Center CA and their convex hulls. For each $j$, the elements of $\mathcal Z_j$ are represented by small circles and $\conv(\mathcal Z_j)$ is drawn.}
\label{fig:erosioncritNEC}
\end{figure}
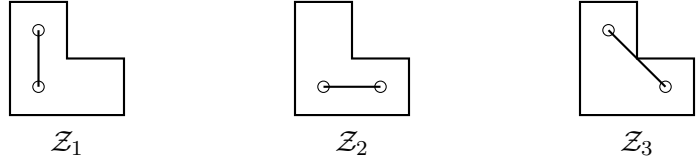
\begin{figure}
\centering
\begin{tikzpicture}  [scale=.75,important line/.style={thick}]
	\begin{scope}[shift={(-2,0)}]
		\draw[important line] (0,0) -- ++(0,1) -- ++(-1,0) -- ++(0,-1) -- ++(-1,0) -- ++(0,-1) -- ++(1,0) -- ++(0,-1) -- ++(1,0) -- ++(0,1) -- ++(1,0) -- ++ (0,1) -- cycle;
		\begin{scope}[shift={(-.5,-.5)}]
			\foreach \position in {(0,0),(-1,0),(1,0)} 
			       {
			      \draw \position circle (.1cm);
			       }
			\draw[important line] (-1,0) -- (1,0);
		\end{scope}
		\node at (-.5,-2.5) {$\mathcal Z_1$};
	\end{scope}
	\begin{scope}[shift={(2,0)}]
		\draw[important line] (0,0) -- ++(0,1) -- ++(-1,0) -- ++(0,-1) -- ++(-1,0) -- ++(0,-1) -- ++(1,0) -- ++(0,-1) -- ++(1,0) -- ++(0,1) -- ++(1,0) -- ++ (0,1) -- cycle;
		\begin{scope}[shift={(-.5,-.5)}]
			\foreach \position in {(0,0),(0,-1),(0,1)} 
			       {
			      \draw \position circle (.1cm);
			       }
			\draw[important line] (0,-1) -- (0,1);
		\end{scope}
		\node at (-.5,-2.5) {$\mathcal Z_2$};
	\end{scope}
	\begin{scope}[shift={(-6,-5)}]
		\draw[important line] (0,0) -- ++(0,1) -- ++(-1,0) -- ++(0,-1) -- ++(-1,0) -- ++(0,-1) -- ++(1,0) -- ++(0,-1) -- ++(1,0) -- ++(0,1) -- ++(1,0) -- ++ (0,1) -- cycle;
		\begin{scope}[shift={(-.5,-.5)}]
			\foreach \position in {(0,0),(0,1),(1,0)} 
			       {
			      \draw \position circle (.1cm);
			       }
			\fill[gray] (0,0) -- (1,0) -- (0,1);
		\end{scope}
		\node at (-.5,-2.5) {$\mathcal Z_3$};
	\end{scope}
	\begin{scope}[shift={(-2,-5)}]
		\draw[important line] (0,0) -- ++(0,1) -- ++(-1,0) -- ++(0,-1) -- ++(-1,0) -- ++(0,-1) -- ++(1,0) -- ++(0,-1) -- ++(1,0) -- ++(0,1) -- ++(1,0) -- ++ (0,1) -- cycle;
		\begin{scope}[shift={(-.5,-.5)},rotate=90]
			\foreach \position in {(0,0),(0,1),(1,0)} 
			       {
			      \draw \position circle (.1cm);
			       }
			\fill[gray] (0,0) -- (1,0) -- (0,1);
		\end{scope}
		\node at (-.5,-2.5) {$\mathcal Z_4$};
	\end{scope}
	\begin{scope}[shift={(2,-5)}]
		\draw[important line] (0,0) -- ++(0,1) -- ++(-1,0) -- ++(0,-1) -- ++(-1,0) -- ++(0,-1) -- ++(1,0) -- ++(0,-1) -- ++(1,0) -- ++(0,1) -- ++(1,0) -- ++ (0,1) -- cycle;
		\begin{scope}[shift={(-.5,-.5)},rotate=180]
			\foreach \position in {(0,0),(0,1),(1,0)} 
			       {
			      \draw \position circle (.1cm);
			       }
			\fill[gray] (0,0) -- (1,0) -- (0,1);
		\end{scope}
		\node at (-.5,-2.5) {$\mathcal Z_5$};
	\end{scope}
	\begin{scope}[shift={(6,-5)}]
		\draw[important line] (0,0) -- ++(0,1) -- ++(-1,0) -- ++(0,-1) -- ++(-1,0) -- ++(0,-1) -- ++(1,0) -- ++(0,-1) -- ++(1,0) -- ++(0,1) -- ++(1,0) -- ++ (0,1) -- cycle;
		\begin{scope}[shift={(-.5,-.5)},rotate=270]
			\foreach \position in {(0,0),(0,1),(1,0)} 
			       {
			      \draw \position circle (.1cm);
			       }
			\fill[gray] (0,0) -- (1,0) -- (0,1);
		\end{scope}
		\node at (-.5,-2.5) {$\mathcal Z_6$};
	\end{scope}
	\begin{scope}[shift={(-6,-10)}]
		\draw[important line] (0,0) -- ++(0,1) -- ++(-1,0) -- ++(0,-1) -- ++(-1,0) -- ++(0,-1) -- ++(1,0) -- ++(0,-1) -- ++(1,0) -- ++(0,1) -- ++(1,0) -- ++ (0,1) -- cycle;
		\begin{scope}[shift={(-.5,-.5)}]
			\foreach \position in {(1,0),(0,1),(0,-1)} 
			       {
			      \draw \position circle (.1cm);
			       }
			\fill[gray] (0,1) -- (1,0) -- (0,-1);
		\end{scope}
		\node at (-.5,-2.5) {$\mathcal Z_7$};
	\end{scope}
	\begin{scope}[shift={(-2,-10)}]
		\draw[important line] (0,0) -- ++(0,1) -- ++(-1,0) -- ++(0,-1) -- ++(-1,0) -- ++(0,-1) -- ++(1,0) -- ++(0,-1) -- ++(1,0) -- ++(0,1) -- ++(1,0) -- ++ (0,1) -- cycle;
		\begin{scope}[shift={(-.5,-.5)},rotate=90]
			\foreach \position in {(1,0),(0,1),(0,-1)} 
			       {
			      \draw \position circle (.1cm);
			       }
			\fill[gray] (0,1) -- (1,0) -- (0,-1);
		\end{scope}
		\node at (-.5,-2.5) {$\mathcal Z_8$};
	\end{scope}
	\begin{scope}[shift={(2,-10)}]
		\draw[important line] (0,0) -- ++(0,1) -- ++(-1,0) -- ++(0,-1) -- ++(-1,0) -- ++(0,-1) -- ++(1,0) -- ++(0,-1) -- ++(1,0) -- ++(0,1) -- ++(1,0) -- ++ (0,1) -- cycle;
		\begin{scope}[shift={(-.5,-.5)},rotate=180]
			\foreach \position in {(1,0),(0,1),(0,-1)} 
			       {
			      \draw \position circle (.1cm);
			       }
			\fill[gray] (0,1) -- (1,0) -- (0,-1);
		\end{scope}
		\node at (-.5,-2.5) {$\mathcal Z_9$};
	\end{scope}
	\begin{scope}[shift={(6,-10)}]
		\draw[important line] (0,0) -- ++(0,1) -- ++(-1,0) -- ++(0,-1) -- ++(-1,0) -- ++(0,-1) -- ++(1,0) -- ++(0,-1) -- ++(1,0) -- ++(0,1) -- ++(1,0) -- ++ (0,1) -- cycle;
		\begin{scope}[shift={(-.5,-.5)},rotate=270]
			\foreach \position in {(1,0),(0,1),(0,-1)} 
			       {
			      \draw \position circle (.1cm);
			       }
			\fill[gray] (0,1) -- (1,0) -- (0,-1);
		\end{scope}
		\node at (-.5,-2.5) {$\mathcal Z_{10}$};
	\end{scope}
\end{tikzpicture}
\caption{The $10$ minimal zero-sets for the symmetric majority CA in dimension $2$ and their convex hulls.}
\label{fig:erosioncritNECSW}
\end{figure}
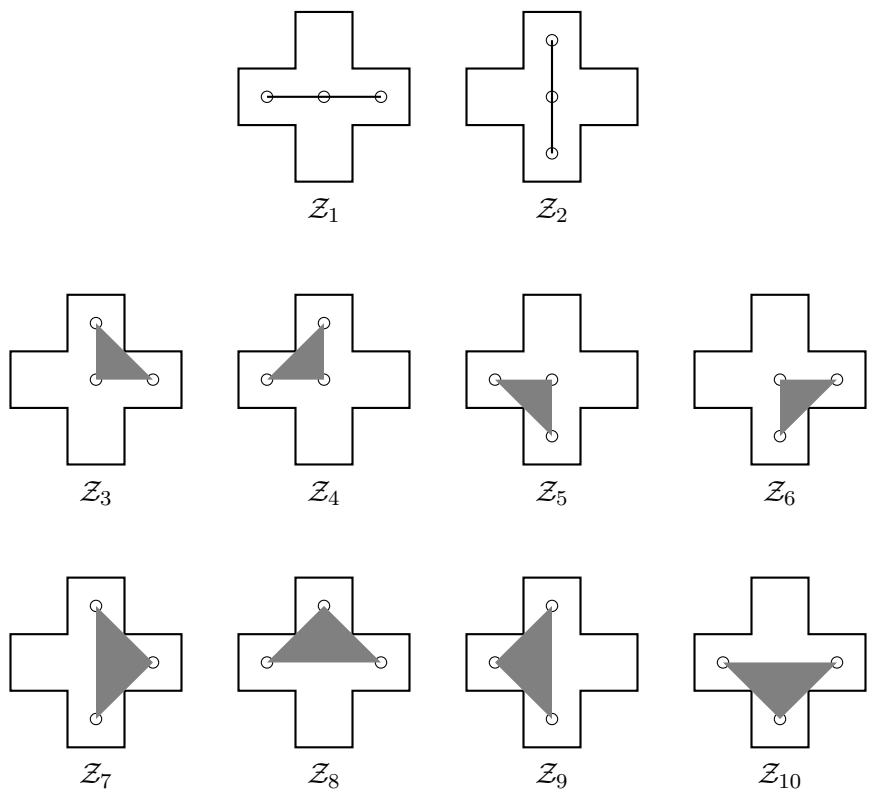

The North-East-Center majority CA is an example of monotonic binary CA that is both an eroder and a zero-eroder. Equivalently, it satisfies both the erosion criterion and its symmetric counterpart where the states $0$ and $1$ are swapped. Let the \textit{one-sets} be defined as the subsets $\mathcal O$ of $\realspace$ such that if $\omega_u =1$ for all $u \in \mathcal O\cap \mathcal U$, then $\varphi(\vect\omega_{\mathcal U})=1$. There are a finite number of minimal one-sets, defined similarly to the minimal zero-sets, and they can be written as $\mathcal O_1,\dotsc,\mathcal O_K$. The symmetric counterpart of the erosion criterion is then $\bigcap_{k=1}^K \conv(\mathcal O_k)= \varnothing$.

Our example of eroder in one dimension, the Stavskaya CA, does not verify the latter criterion. Indeed it has the unique minimal one-set $\mathcal O=\{0,1\}$, with a nonempty convex hull. Consequently, while the trajectory $\stvect \omega^{(0)}$ is attractive for that CA, the trajectory $\stvect \omega^{(1)}$ is not. In fact, \added{the following Proposition~\ref{prop:1D1attract}, which can also be found in Section 6.2 of a review article of \citet{LeMaSp90}, states} that no monotonic binary CA in dimension $d=1$ can satisfy both the erosion criterion and its symmetric counterpart, so the trajectories $\stvect \omega^{(0)}$ and $\stvect \omega^{(1)}$ cannot both be attractive for a one-dimensional monotonic binary CA. We prove it using a particular case of Helly's theorem (see \citet{DaGrKl63} p.102). That theorem will be used again in Section~\ref{sec:refvectinspace}.

\begin{thm}[Helly's theorem]
Let $F$ be a finite family of $d+1$ or more convex sets in $\real^d$. If, for every choice of $d+1$ sets in $F$, their intersection is nonempty, then the intersection of all sets in $F$ is nonempty.
\end{thm}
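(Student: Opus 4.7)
The plan is to prove Helly's theorem by induction on $n = |F|$, with the base case $n = d+1$ being tautological (it is exactly the hypothesis). The inductive step rests on Radon's theorem, which I would isolate as a preliminary lemma: any $d+2$ points $p_1, \dotsc, p_{d+2}$ in $\real^d$ admit a partition of their index set into two disjoint subsets $A,B$ such that $\conv(\{p_i \mid i \in A\}) \cap \conv(\{p_i \mid i \in B\}) \neq \varnothing$.

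To prove the Radon lemma, I would observe that the $d+2$ vectors $(p_i, 1)$ in $\real^{d+1}$ are linearly dependent, so there exists a nontrivial tuple $(\lambda_1, \dotsc, \lambda_{d+2})$ with $\sum_i \lambda_i p_i = 0$ and $\sum_i \lambda_i = 0$. Setting $A = \{i \mid \lambda_i > 0\}$ and $B = \{i \mid \lambda_i < 0\}$ (both nonempty because the $\lambda_i$ sum to zero without all vanishing), writing $S = \sum_{i \in A} \lambda_i = -\sum_{i \in B} \lambda_i > 0$, and renormalizing yields
\begin{equation*}
\sum_{i \in A} \frac{\lambda_i}{S}\, p_i \;=\; \sum_{i \in B} \frac{-\lambda_i}{S}\, p_i,
\end{equation*}
which exhibits a common point of the two convex hulls.

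For the inductive step of Helly proper, suppose the statement holds for every family of size $n \geq d+1$, and consider $n+1$ convex sets $C_1, \dotsc, C_{n+1}$ such that every $d+1$ of them intersect. For each $i$, the subfamily obtained by removing $C_i$ still satisfies this property, so by the inductive hypothesis (or the base case) there exists $p_i \in \bigcap_{j \neq i} C_j$. This produces $n+1 \geq d+2$ points in $\real^d$, to which Radon's lemma applies, giving a partition $\{1,\dotsc,n+1\} = A \sqcup B$ and a point $q$ in the intersection of the two convex hulls. It remains to verify $q \in C_k$ for every $k$: if $k \in A$, then for every $i \in B$ we have $i \neq k$, hence $p_i \in C_k$ by the defining property of $p_i$; convexity of $C_k$ gives $\conv(\{p_i \mid i \in B\}) \subseteq C_k$, and in particular $q \in C_k$. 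The case $k \in B$ is symmetric.

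The main obstacle is organizational rather than technical: one must set up the indexing so that the Radon partition interacts cleanly with the defining property of the $p_i$'s, and one must take care that the inductive hypothesis is applied only when the subfamily still has at least $d+1$ members (which is why the induction is launched at $n = d+1$). Once Radon's lemma is in hand, the remainder is elementary convexity.
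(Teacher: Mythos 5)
Your proof is correct and complete: the linear-dependence argument establishing Radon's lemma, the construction of the points $p_i$ from the inductive hypothesis, and the verification that the Radon point $q$ lies in every $C_k$ are all sound. (One cosmetic remark: you state Radon's lemma for exactly $d+2$ points but then apply it to all $n+1\geq d+2$ of the $p_i$; the linear-dependence argument works verbatim for any number of points at least $d+2$, so nothing breaks, but you should either state the lemma in that generality or apply it to a $(d+2)$-point subfamily and handle the indices $k$ outside the Radon partition by noting that the whole of $A$, say, then avoids $k$.) Your route is, however, genuinely different from the paper's. The paper does not prove Helly's theorem at all in the form stated: it quotes it from the reference of Danzer, Gr\"unbaum and Klee and uses it as a black box in the proof of Proposition~\ref{prop:1D1attract} (for intervals in $\real$). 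The only version it actually derives is Corollary~\ref{thm:helly}, a weaker special case for closed half-spaces, obtained by contradiction from Theorem~\ref{thm:rockafellar}, a Rockafellar-type theorem of the alternative for finite systems of affine inequalities; that duality argument is tailored to half-spaces, which is all that is needed for the construction of the reference vectors in Section~\ref{sec:refvectinspace}. Your Radon-based induction is elementary, self-contained, and yields the full statement for arbitrary convex sets, at the price of not giving the quantitative ``at most $d+1$ nonzero multipliers'' information that the Rockafellar route provides and that the paper exploits in Lemma~\ref{lemma:refvectinspace} to bound the number $m$ of functionals by $d+1$.
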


\begin{proposition}\label{prop:1D1attract}
For any monotonic binary CA in dimension $1$, at most one of the two following conditions holds:
\begin{enumerate}[(i)]
\item $\bigcap_{j=1}^J \conv(\mathcal Z_j)= \varnothing$;
\item $\bigcap_{k=1}^K \conv(\mathcal O_k)= \varnothing$.
\end{enumerate}
\end{proposition}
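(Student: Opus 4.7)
The plan is to reduce the statement to a combinatorial fact about zero-sets and one-sets and then use the one-dimensional case of Helly's theorem. I will argue by contradiction, assuming that both conditions (i) and (ii) hold simultaneously.

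The first key step is an elementary observation: for every zero-set $\mathcal Z$ and every one-set $\mathcal O$ of a monotonic binary CA, one has $\mathcal Z \cap \mathcal O \cap \mathcal U \neq \varnothing$. Indeed, if the intersection were empty, one could set $\omega_u = 0$ for every $u \in \mathcal Z \cap \mathcal U$ and $\omega_u = 1$ for every $u \in \mathcal O \cap \mathcal U$ without conflict, and then choose $\omega_u$ arbitrarily at the remaining sites of $\mathcal U$. The defining properties of $\mathcal Z$ and $\mathcal O$ would simultaneously force $\varphi(\vect\omega_{\mathcal U}) = 0$ and $\varphi(\vect\omega_{\mathcal U}) = 1$, which is absurd. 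Since every minimal zero-set $\mathcal Z_j$ and every minimal one-set $\mathcal O_k$ is contained in $\mathcal U$, this yields the crucial relation
\begin{equation*}
\conv(\mathcal Z_j) \cap \conv(\mathcal O_k) \supseteq \mathcal Z_j \cap \mathcal O_k \neq \varnothing \qquad \text{for all } j,k.
\end{equation*}

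The second step is to exploit the assumption that (i) and (ii) both hold, in the specific context of dimension one. By Helly's theorem with $d=1$, a finite family of intervals (i.e.\ nonempty closed convex subsets of $\real$) has empty total intersection if and only if at least two members of the family are disjoint. Therefore (i) provides two indices $j_1, j_2$ with $\conv(\mathcal Z_{j_1}) \cap \conv(\mathcal Z_{j_2}) = \varnothing$, which we may order so that $\max \mathcal Z_{j_1} < \min \mathcal Z_{j_2}$; similarly, (ii) yields indices $k_1,k_2$ with $\max \mathcal O_{k_1} < \min \mathcal O_{k_2}$.

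The final step is to combine these four inequalities with the nonemptiness of the cross intersections from the first step. From $\mathcal Z_{j_1} \cap \mathcal O_{k_2} \neq \varnothing$ one gets $\min \mathcal O_{k_2} \leq \max \mathcal Z_{j_1}$, and from $\mathcal Z_{j_2} \cap \mathcal O_{k_1} \neq \varnothing$ one gets $\min \mathcal Z_{j_2} \leq \max \mathcal O_{k_1}$. Chaining these four comparisons produces
\begin{equation*}
\min \mathcal O_{k_2} \leq \max \mathcal Z_{j_1} < \min \mathcal Z_{j_2} \leq \max \mathcal O_{k_1} < \min \mathcal O_{k_2},
\end{equation*}
a contradiction, so (i) and (ii) cannot hold simultaneously. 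The main obstacle is really step one, where the nonemptiness of $\mathcal Z_j \cap \mathcal O_k$ must be extracted directly from monotonicity; the remainder is a straightforward one-dimensional geometric argument that, tellingly, breaks down in higher dimension, where the North-East-Center CA shows that both conditions can coexist.
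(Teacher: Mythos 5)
Your proof is correct and follows essentially the same route as the paper's: both rest on the observation that every minimal zero-set meets every minimal one-set (established by the same configuration argument from monotonicity) combined with the one-dimensional case of Helly's theorem. The only difference is cosmetic — the paper applies Helly once, to the zero-sets, and then shows that every $\conv(\mathcal O_k)$ contains the gap interval $[b_{j_1},a_{j_2}]$ between the two disjoint zero-hulls, whereas you apply Helly to both families and close with a chain of inequalities; both finishes are valid.
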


\begin{proof}
Suppose that $\bigcap_{j=1}^J \conv(\mathcal Z_j)= \varnothing$. Then $J\geq 2$ and there exist $j_1,j_2$ such that $\conv(\mathcal Z_{j_1}) \cap \conv(\mathcal Z_{j_2}) =\varnothing$. That follows from Helly's theorem applied to the family $\conv(\mathcal Z_j)$, $j=1,\dotsc,J$, of intervals in $\real$. Then we can write, without loss of generality, $\conv (\mathcal Z_{j_1})=[a_{j_1},b_{j_1}]$ and $\conv (\mathcal Z_{j_2})=[a_{j_2},b_{j_2}]$ with $b_{j_1}<a_{j_2}$. On the other hand, for every $j=1,\dotsc, J$ and every $k=1,\dotsc,K$, $\mathcal O_k \cap \mathcal Z_j \neq \varnothing$. Otherwise, there would exist a local configuration $\vect \omega_{\mathcal U}$ such that $\omega_u=0$ for all $u$ in $\mathcal Z_j$ and $\omega_u=1$ for all $u$ in $\mathcal O_k$, which would imply both $\varphi (\vect \omega_{\mathcal U})=0$ and $\varphi (\vect \omega_{\mathcal U})=1$ and thus lead to a contradiction. But then, for each $k=1,\dotsc,K$, $\mathcal O_k$ must contain a point of $[a_{j_1},b_{j_1}]$ and a point of $[a_{j_2},b_{j_2}]$. Therefore $\conv(\mathcal O_k) \supseteq [b_{j_1},a_{j_2}]$ for all $k$. So $\bigcap_{k=1}^K \conv(\mathcal O_k) \supseteq  [b_{j_1},a_{j_2}]$ and the second condition is not satisfied.
\end{proof}
\begin{rmk}
\added{According to \citet{To95}, for monotonic CA that are not binary, no erosion criterion is known except for those given by \citet{Ga76} in dimension $1$. Moreover, the erosion property is undecidable for non-monotonic CA, even if we restrict ourselves to dimension $1$. It was proved by \citet{Pe87}.}
\end{rmk}

\chapter[Probabilistic cellular automata and stability]{Probabilistic cellular automata\\and stability%
\chaptermark{PCA and stability}}\label{chap:PCAstability}
\chaptermark{PCA and stability}
\section[Probabilistic cellular automata: formalism]{Probabilistic cellular automata: formalism%
\sectionmark{PCA: formalism}}\label{sec:PCAformalism}
\sectionmark{PCA: formalism}

We now describe a way to introduce some noise in a monotonic binary CA in order to generate a \textit{probabilistic cellular automaton}, hereafter named PCA. Roughly speaking, the system follows the same updating rule as in the deterministic case but, at each site of the lattice and at each time step, an error can occur with a probability less than $\epsilon$, for some $\epsilon$ in $[0,1]$, the cell then taking the opposite state. Occurrence of an error at a site is often assumed to be independent from occurrence of errors at other sites or other times. The process resulting from the sequence of simultaneous updates of all cells becomes a stochastic process.

In order to define that process rigorously, let us first introduce a $\sigma$-algebra and probability measures on the configuration space $X=S^{\natspace}$ (see also \citet[Chapter 2]{ToVaStMiKuPi90}, \citet[Chapter 0]{Wa82}). Let the \textit{cylinder sets} be the subsets of $X$ of the form
\begin{equation*}
\left\{ \vect \omega \in X  \middle | \omega_{x_1}=a_1,\dotsc,\omega_{x_n}=a_n \right\}
\end{equation*}
for any $n\in \nat^*$, $\{x_1,\dotsc,x_n\} \subset \natspace$ and $a_1,\dotsc,a_n \in S$. We consider the $\sigma$-algebra $\mathcal F$ generated by the cylinder sets, that is to say the smallest $\sigma$-algebra that contains all cylinder sets. Let $\mathcal M$ be the space of all probability measures on the $\sigma$-algebra $\mathcal F$.

A simple way to construct a probability measure $\mu$ in $\mathcal M$ takes advantage of the Daniell-Kolmogorov consistency theorem. Indeed, it is sufficient to specify the values of $\mu(C)$ for all cylinder sets $C$ to determine a unique probability measure $\mu$ in $\mathcal M$.
\begin{thm}[Corollary of the Daniell-Kolmogorov consistency theorem]
For all $n\in \nat^*$, $\{x_1,\dotsc,x_n\} \subset \natspace$ and $a_1,\dotsc,a_n \in S$, let the numbers $\mu_n\left(\{x_1,\dotsc,x_n\};a_1,\dotsc,a_n\right)$ belong to $[0,1]$. Suppose that they satisfy the following consistency condition:
\begin{align*}
&\ \, \sum_{a_1 \in S} \mu_1\left( \{x_1\};a_1 \right) =1 ;\\
&\sum_{a_{n+1}\in S} \hspace{-.25cm} \mu_{n+1}\left( \{x_1,\dotsc,x_{n+1}\};a_1,\dotsc,a_{n+1}\right) =\mu_n\left(\{x_1,\dotsc,x_n\};a_1,\dotsc,a_n\right) \hspace{-.05cm}.
\end{align*}
Then there exists a unique probability measure $\mu$ in $\mathcal M$ such that
\begin{align*}
\mu \left(\omega_{x_1}=a_1,\dotsc,\omega_{x_n}=a_n  \right) &:= \mu \left( \left\{ \vect \omega \in X  \middle | \omega_{x_1}=a_1,\dotsc,\omega_{x_n}=a_n \right\} \right) \\
&=\mu_n\left(\{x_1,\dotsc,x_n\};a_1,\dotsc,a_n\right).
\end{align*}
\end{thm}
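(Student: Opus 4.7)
The plan is to recognize the given data as a prescription of finite-dimensional distributions and then apply the Kolmogorov extension theorem. For every finite subset $F=\{x_1,\dotsc,x_n\}\subset \natspace$, define a probability measure $\mu_F$ on the finite set $S^F$ by
\begin{equation*}
\mu_F\bigl(\{(a_1,\dotsc,a_n)\}\bigr)=\mu_n(F;a_1,\dotsc,a_n),
\end{equation*}
extended by (finite) additivity. The first of the two stated conditions tells us that $\mu_{\{x\}}$ is a probability measure for each singleton, and the second says that if $F'=F\cup\{x_{n+1}\}$ with $x_{n+1}\notin F$, then the pushforward of $\mu_{F'}$ along the coordinate projection $S^{F'}\to S^F$ equals $\mu_F$. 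Because the notation $\mu_n(\{x_1,\dotsc,x_n\};a_1,\dotsc,a_n)$ is pair‑symmetric in $(x_i,a_i)$, we may relabel so that any chosen coordinate plays the role of $x_{n+1}$; iterating then yields Kolmogorov consistency in full generality: for any finite $F\subseteq F'$, the projection of $\mu_{F'}$ onto $S^F$ is $\mu_F$.

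Next I would invoke the Daniell–Kolmogorov theorem on the countable index set $\natspace$ with the Polish (in fact, finite) fiber $S$: a consistent family $(\mu_F)_F$ of finite‑dimensional marginals determines a unique probability measure $\mu$ on the product $\sigma$-algebra $\sfilt$ such that, for every cylinder set,
\begin{equation*}
\mu(\omega_{x_1}=a_1,\dotsc,\omega_{x_n}=a_n)=\mu_F\bigl(\{(a_1,\dotsc,a_n)\}\bigr)=\mu_n(F;a_1,\dotsc,a_n).
\end{equation*}
Uniqueness follows from a standard $\pi$–$\lambda$ (Dynkin) argument since the cylinder sets form a $\pi$-system generating $\sfilt$.

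If one prefers a self‑contained route avoiding Daniell–Kolmogorov, one would first extend $\mu$ to the algebra $\mathcal A$ of finite disjoint unions of cylinder sets (well‑definedness being guaranteed by the consistency above), and then verify countable additivity on $\mathcal A$ before appealing to Carathéodory's extension theorem. The key observation that makes countable additivity essentially free in this setting is that $S=\{0,1\}$ is finite, so by Tychonoff's theorem $X=S^{\natspace}$ is compact in the product topology, and every cylinder set is clopen. Consequently, any countable partition of an element of $\mathcal A$ into elements of $\mathcal A$ must in fact be finite, and the needed $\sigma$-additivity reduces to finite additivity, which is immediate from the construction.

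The only genuine subtlety is the consistency step: going from the ``add one coordinate on the right'' statement to the full projective system requires the symmetry of the $\mu_n$ under simultaneous reordering of the sites and the states. Once this is made explicit the rest is mechanical; the hard analytic content is entirely packaged inside Daniell–Kolmogorov (or, equivalently, inside the compactness argument above).
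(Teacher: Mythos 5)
Your proposal is correct and matches the paper's (implicit) approach: the paper states this result without proof, as a direct citation of the Daniell--Kolmogorov consistency theorem, which is exactly what your main argument invokes after upgrading the stated one-coordinate marginalization condition to full projective consistency via the permutation symmetry built into the notation $\mu_n\left(\{x_1,\dotsc,x_n\};a_1,\dotsc,a_n\right)$. Your self-contained alternative via compactness of $X=S^{\natspace}$ and Carath\'eodory is also sound --- cylinder sets are clopen, hence compact, so a countable disjoint cover of an element of the cylinder algebra by nonempty elements must be finite and countable additivity reduces to finite additivity --- and this is the standard way the extension theorem is established for finite state spaces.
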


Next we define the \textit{transfer operator} $T:\mathcal M \to \mathcal M$ that represents the stochastic evolution law of the PCA. For all $x\in \natspace$, $\xi_x \in S$ and $\vect \omega \in X$, let $p_x(\xi_x| \vect{\omega})$ denote the \textit{local transition probabilities}. For simplicity, let us assume that $p_x(\xi_x| \vect{\omega})$
only depends on $\vect\omega_{\mathcal U(x)}$ and not on the configuration outside the neighborhood $\mathcal U(x)$ nor on the position of the site $x$. It means that $p_x(\xi_x| \vect{\omega})=p(\xi_x | \vect\omega_{\mathcal U(x)})$ for some function $p(\cdot | \cdot) :S \times S^{\mathcal U} \to [0,1]$ such that $p(1 |\vect\omega_{\mathcal U} ) =1-p(0 |\vect\omega_{\mathcal U} ) $ for all $\vect\omega_{\mathcal U}$ in $S^{\mathcal U}$.

Formally, the transfer operator $T:\mathcal M \to \mathcal M$ is defined as the product over space $\natspace$ of the local transition probabilities. For any $\mu \in \mathcal M$, let us define $T\mu$ more rigorously. By the Daniell-Kolmogorov consistency theorem, it suffices to give $T\mu (C)$ for all cylinder sets $C$. It is well defined by the expression
\begin{align*}
&T\mu \left(\xi_{x_1}=a_1,\dotsc,\xi_{x_n}=a_n  \right) \\&:= \sum_{\substack{b_{y}\in S,\\y \in \bigcup_{i=1}^ n \mathcal U(x_i)}} \left( \prod_{i=1}^n p (a_i \mid \vect b_{\mathcal U(x_i)})  \right) \ \mu\left(\omega_{y}=b_{y} \ \forall y \in \bigcup_{i=1}^n \mathcal U(x_i)\right) 
\end{align*}
and one checks easily that the consistency condition is satisfied. Therefore the Daniell-Kolmogorov theorem applies and $T\mu$ in $\mathcal M$ can be defined as the unique resulting probability measure. The choice of an initial probability measure $\mu_{\textrm{in}}$ in $\mathcal M$ and the iteration of the transfer operator yield a sequence of probability measures $(T^t \mu_{\textrm{in}})_{t \in \nat}$.

We want to study especially the PCA that correspond to stochastic perturbations of a monotonic binary CA such as introduced in Chapter~\ref{chap:introeroders} and characterized by its updating function $\varphi$. This is done by supposing that the local transition probabilities satisfy the following assumption, for some given value of the \textit{noise parameter} $\epsilon \in [0,1]$.
\begin{BNassump}\label{assump:2}
If $\xi_x \neq \varphi_x (\vect{\omega})$, then $p_x(\xi_x| \vect{\omega}) \leq \epsilon$.
\end{BNassump}
\noindent When $\epsilon$ is small, the Bounded-noise assumption is a low-noise condition: it ensures that, at each site of the lattice, the deterministic rule is followed with a probability at least $1-\epsilon$. Note that we make no restriction about the possible bias in favor of errors producing a specific state $0$ or $1$.\\

When discussing events that involve points in space-time $V=\natspace \times \nat$ with different time coordinates, it will be more convenient to use a space-time formalism, for PCA as well as for CA. After describing it, we will show how it relates to the stochastic processes generated by the transfer operator just defined. Space-time configurations were introduced in Chapter~\ref{chap:introeroders}, among which the trajectories of the CA. Now the CA is turned into a PCA by admitting space-time configurations that are not trajectories since the updating rule can be disobeyed with a small but positive probability bounded by $\epsilon$. For a given space-time configuration $\stvect \omega \in S^V$, we will say that an \textit{error} happens at the point $v \in V \setminus V_0$ if $\ushort \omega_v \neq \varphi (\stvect \omega_{U(v)} )$.

A probability distribution must be assigned to all these space-time configurations. More precisely, we can consider now the $\sigma$-algebra $\ushort{\mathcal F}$ generated by the cylinder subsets of $S^V$, just like we did for the cylinder subsets of $X=S^{\natspace}$. Let $M$ be the space of all probability measures on that $\sigma$-algebra $\ushort{\mathcal F}$. For $\epsilon$ in $[0,1]$, let $M_{\epsilon}$ be the subset of $M$ containing all probability measures $\ushort \mu$ on $\ushort{\mathcal F}$ that verify the following condition: for any finite subset $A$ of $V \setminus V_0$,
\begin{equation}\label{error}
\ushort \mu (\ushort \omega_v \neq \varphi (\stvect \omega_{U(v)} ) \ \forall v \in A ) \leq \epsilon^{\norm{A}}.
\end{equation}
We call \textit{stochastic processes} the probability measures in $M_{\epsilon}$.
We will mainly deal with the subset $M_{\epsilon}^{(0)}\subset M_{\epsilon}$, defined by the additional condition
\begin{equation}\label{init}
\ushort \mu (\ushort \omega_v =0 \ \forall v \in V_0) =1.
\end{equation}
$M^{(1)}_{\epsilon}$ is defined in the same manner by replacing the state $0$ with the state $1$ in the initial condition~\eqref{init}. In other words, the measures in $M_{\epsilon}^{(0)}$ (respectively $M_{\epsilon}^{(1)}$) are the random processes obtained from the CA when the initial condition has `zeros' (respectively `ones') everywhere and when, at each point in space-time, the updating rule can be ignored with a small probability bounded above by $\epsilon$.

The set $M_{\epsilon}$ contains very general probability measures on the space-time configuration space $\{0,1\}^V$. In particular, it contains the stochastic process $\ushort \mu$ induced, through the following definition, by the iteration of the transfer operator $T$, for an initial measure $\mu_{\textrm{in}}$ in $\mathcal M$. The Daniell-Kolmogorov consistency theorem, which we stated in the case of $\mathcal M$, applies to $M$ as well. Therefore $\ushort \mu$ is completely defined by its values on all cylinder subsets of $S^V$. Now for all $n\in \nat^*$, $\{v_1=(x_1,t_1),\dotsc,v_n=(x_n,t_n)\}\subset V$, $\ushort a_1,\dotsc,\ushort a_n \in S$, the value of $\ushort \mu \left(\ushort \omega_{v_1}=\ushort a_1,\dotsc,\ushort \omega_{v_n}=\ushort a_n  \right)$ is defined by the following prescription. Since $n$ is finite and the space-time neighborhoods of all points in $V$ are finite, it is always possible to choose a finite subset $A$ of $\natspace \times \{0,\dotsc,T\}\subset V$, where $T=\max_{i} t_i$, such that $v_i\in A$ for all $i=1,\dotsc,n$ and that for all $v=(x,t)\in A$ with $t>0$, $U(v)\subset A$. Then the formulas
\begin{align*}
&\ushort \mu \left(\ushort \omega_{v_1}=\ushort a_1,\dotsc,\ushort \omega_{v_n}=\ushort a_n  \right)\\
&:=\sum_{\substack{\ushort b_v\in S,\\v \in A \setminus \{v_1,\dotsc,v_n\}}} \hspace{-1em} \ushort \mu \left(\ushort \omega_{v_1}=\ushort a_1,\dotsc,\ushort \omega_{v_n}=\ushort a_n ,\ushort \omega_{v}=\ushort b_v \  \forall v \in A \setminus \{v_1,\dotsc,v_n\} \right)
\end{align*}
and, for all $\stvect b_A $ in $S^A$,
\begin{align*}
&\ushort \mu \left(\ushort \omega_{v}=\ushort b_v \  \forall v \in A  \right)\\
& \qquad :=\left( \prod_{v=(x,T) \in A} p (\ushort b_v \mid \stvect b_{ U(v)})  \right) \ldots \left( \prod_{v=(x,1) \in A} p (\ushort b_v \mid \stvect b_{ U(v)})  \right) \\& \qquad \qquad \qquad \qquad \qquad  \qquad \qquad \qquad . \ \mu_{\textrm{in}} \left(\omega_{v}=\ushort b_{v} \ \forall v=(x,0) \in A \right) 
\end{align*}
yield values that satisfy the consistency condition so that $\ushort \mu$ can be extended to the $\sigma$-algebra $\ushort{\mathcal F}$ as a probability measure in $M$. For any $t\in \nat$, its marginal probability distribution for the values of the space-time configuration at points in the subset $\{(x,t)\mid x \in \natspace\}\subset V$ is $T^t  \mu_{\textrm{in}}$. Moreover, if the local transition probabilities verify the Bounded-noise assumption, $\ushort \mu$ belongs to $M_{\epsilon}$. If the initial probability measure $\mu_{\textrm{in}}$ is chosen to be the Dirac measure $\sleb{0}$ (respectively $\sleb{1}$) concentrated on the homogeneous configuration $\vect \omega^{(0)}$ (respectively $\vect \omega^{(1)}$), the resulting probability measure $\ushort \mu$ belongs to $M_{\epsilon}^{(0)}$ (respectively $M_{\epsilon}^{(1)}$).

But $M_{\epsilon}$ and its subsets $M_{\epsilon}^{(0)}$ and $M_{\epsilon}^{(1)}$ contain also more general processes than those induced by $T$. For example, one can define general local transition probabilities $p_x(\xi_x| \vect{\omega})$ without our assumption of translational invariance. Indeed, the probability of an error can depend on the site $x \in \natspace$ but also on the exact configuration $\vect{\omega}$ in a finite set which can be larger than $\mathcal U (x)$, as long as the condition~\eqref{error} is satisfied. Also, the error events $\ushort \omega_v \neq \varphi (\stvect \omega_{U(v)} ) $ at different space-time points $v \in V$ can be correlated more strongly than in the stochastic processes induced by $T$ and defined in terms of the product of local transition probabilities. The results presented in Part~\ref{part:block} of the thesis hold for all measures in $M_{\epsilon}^{(0)}$ so we give them in this general space-time setting. On the other hand, the arguments in Part~\ref{part:expdecay} rely on the expression of the transfer operator $T:\mathcal M\to \mathcal M$ as a product of local transition probabilities.

\section{Invariant measures}\label{sec:invmeasures}

Our interest will concentrate on the \textit{invariant measures}, that is to say the probability measures $\muinvar\in \mathcal M$ such that $T\muinvar =\muinvar$. We want to describe as much as possible the set of all invariant measures for a given transfer operator $T:\mathcal M \to \mathcal M$ as defined in Section~\ref{sec:PCAformalism}. The following well-known results give us the first insight into that set -- see e.g.\ \citet{To13,ToVaStMiKuPi90}. They do not require the Bounded-noise assumption to hold.

\begin{proposition}
Any convex combination of invariant measures is an invariant measure.
\end{proposition}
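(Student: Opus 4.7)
The plan is to observe that the transfer operator $T$ acts linearly on $\mathcal M$, and then invariance under $T$ is plainly preserved by any convex combination. So the real content is verifying linearity.

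First, I would reduce the claim to checking equality of measures on cylinder sets, since, by the Daniell-Kolmogorov consistency theorem used in Section~\ref{sec:PCAformalism}, a probability measure in $\mathcal M$ is uniquely determined by its values on cylinders. So it is enough to show that for any finite convex combination $\mu = \sum_{k=1}^m \alpha_k \mu_k$ with $\alpha_k \geq 0$, $\sum_k \alpha_k = 1$, and any cylinder set $C = \{\xi_{x_1}=a_1,\dotsc,\xi_{x_n}=a_n\}$,
\begin{equation*}
T\mu(C) = \sum_{k=1}^m \alpha_k \, T\mu_k(C).
\end{equation*}

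This follows immediately from the explicit formula
\begin{equation*}
T\mu(C) = \sum_{\substack{b_y\in S,\\ y \in \bigcup_{i=1}^n \mathcal U(x_i)}} \left( \prod_{i=1}^n p(a_i \mid \vect b_{\mathcal U(x_i)}) \right) \, \mu\!\left( \omega_y = b_y \ \forall y \in \bigcup_{i=1}^n \mathcal U(x_i) \right)
\end{equation*}
that was used to define $T\mu$: the right-hand side depends linearly on $\mu$, because the only occurrence of $\mu$ is in the last factor, evaluated at a cylinder event, and $\mu \mapsto \mu(\cdot)$ is linear on each cylinder. Thus the linear combination $\sum_k \alpha_k T\mu_k$ agrees with $T\mu$ on every cylinder, and hence everywhere on $\mathcal F$.

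Granting linearity, if each $\mu_k$ is invariant, $T\mu_k = \mu_k$, then
\begin{equation*}
T\mu = T\!\left(\sum_{k=1}^m \alpha_k \mu_k\right) = \sum_{k=1}^m \alpha_k \, T\mu_k = \sum_{k=1}^m \alpha_k \mu_k = \mu,
\end{equation*}
so $\mu$ is itself an invariant measure. There is no real obstacle here; the only thing that needs a brief comment is why the algebraic manipulation really defines an element of $\mathcal M$, namely that a convex combination of probability measures is again a probability measure, which is standard and can be verified on cylinders in the same manner.
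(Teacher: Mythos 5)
Your proof is correct and follows essentially the same route as the paper's: both reduce the claim to the linearity of $T$, verified on cylinder sets via the defining formula, and then conclude by the obvious algebraic manipulation. You simply spell out the cylinder-set verification in more detail than the paper, which merely notes that linearity ``can be checked from the definition of $T$ in terms of cylinder sets.''
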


\begin{proof}
For any choice of coefficients $\lambda_i \in [0,1]$, $i=1,\dotsc,n$, such that $\sum_{i=1}^n \lambda_i=1$, the convex combination $\sum_i \lambda_i \mu_{\textrm{inv},i}$ is of course a probability measure if the $\mu_{\textrm{inv},i}$ are invariant probability measures. Moreover, we can check from our definition of $T$ in terms of cylinder sets that $T$ is linear in the sense that $T \sum_i \lambda_i\mu_{\textrm{inv},i} =  \sum_i \lambda_i T \mu_{\textrm{inv},i} = \sum_i \lambda_i  \mu_{\textrm{inv},i}$.
\end{proof}

One can construct invariant measures using convergent sequences of measures. Let us first introduce a weak notion of convergence in $\mathcal M$. The sequence $(\mu_n)_{n \in \nat}$ in $\mathcal M$ \textit{converges weakly} to $\mu \in \mathcal M$ if it converges on cylinder sets, i.e.\ if $\lim_{n\to \infty} \mu_n(C)=\mu(C)$ for all cylinder sets $C$.

\begin{proposition}\label{prop:compac}
Any sequence of probability measures in $\mathcal M$ has a weakly convergent subsequence.
\end{proposition}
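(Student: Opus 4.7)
The plan is a standard diagonal extraction argument combined with the Daniell–Kolmogorov theorem already invoked in the text.

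First I would observe that the collection of cylinder sets of $X = S^{\natspace}$ is countable: the set of finite subsets of $\natspace$ is countable, and for each such finite subset $\{x_1,\dotsc,x_n\}$ there are only $2^n$ choices of states $a_1,\dotsc,a_n\in S$. Enumerate the cylinder sets as $C_1, C_2, C_3, \ldots$.

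Next, given any sequence $(\mu_n)_{n\in\nat}$ in $\mathcal{M}$, I would perform a diagonal extraction. Since $\mu_n(C_1)\in [0,1]$ for every $n$, Bolzano–Weierstrass yields a subsequence $(\mu_{n_k^{(1)}})_k$ along which $\mu_{n_k^{(1)}}(C_1)$ converges. From this subsequence, extract a further subsequence $(\mu_{n_k^{(2)}})_k$ along which $\mu_{n_k^{(2)}}(C_2)$ also converges, and so on. The diagonal sequence $(\mu_{n_k^{(k)}})_k$ then converges on every cylinder set $C_i$.

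Define, for every cylinder set $C$, the number $\mu(C) := \lim_{k\to\infty}\mu_{n_k^{(k)}}(C)\in [0,1]$. To conclude, I would apply the Daniell–Kolmogorov consistency theorem stated earlier in the excerpt. For any finite subset $\{x_1,\dotsc,x_n\}\subset \natspace$ and $a_1,\dotsc,a_n\in S$, set
\begin{equation*}
\mu_n\bigl(\{x_1,\dotsc,x_n\};a_1,\dotsc,a_n\bigr):=\mu\bigl(\{\vect\omega\in X\mid \omega_{x_1}=a_1,\dotsc,\omega_{x_n}=a_n\}\bigr).
\end{equation*}
The consistency conditions are finite linear equalities among values of cylinder sets; each $\mu_{n_k^{(k)}}$ satisfies them (being a probability measure in $\mathcal M$), so they pass to the pointwise limit. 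Hence there is a unique $\mu\in\mathcal{M}$ extending these values, which is by construction the weak limit of the extracted subsequence.

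The only subtlety — and essentially the one place any care is needed — is checking that the limiting cylinder-set values do still define a probability measure. This reduces, via Daniell–Kolmogorov, to the consistency identities $\sum_{a_1}\mu_1(\{x_1\};a_1)=1$ and $\sum_{a_{n+1}}\mu_{n+1}(\{x_1,\dotsc,x_{n+1}\};a_1,\dotsc,a_{n+1})=\mu_n(\{x_1,\dotsc,x_n\};a_1,\dotsc,a_n)$, which are preserved under pointwise limits of finite sums. No measure-theoretic continuity or compactness beyond the scalar Bolzano–Weierstrass step is needed, so there is no genuine obstacle.
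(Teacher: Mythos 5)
Your proof is correct and follows essentially the same route as the paper: the cylinder sets are countable, so a diagonal extraction yields a subsequence converging on every cylinder set. The only difference is that you make explicit, via the Daniell--Kolmogorov consistency theorem, the final verification that the limiting cylinder-set values define an element of $\mathcal M$ -- a step the paper's proof leaves implicit -- which is a welcome addition rather than a departure.
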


\begin{proof}
The set $\mathcal C$ of all cylinder subsets of $X$ is countable since it is the countable union, over all finite subsets $A\subset \natspace$, of the finite sets
\begin{equation*}
\left\{ \left\{ \vect \omega \in X  \middle | \omega_{x}=a_x \ \forall x \in A \right\} \middle| \vect a_A \in S^A \right\}.
\end{equation*}
So the set of all functions from $\mathcal C$ into $[0,1]$ is sequentially compact, by a diagonal argument (see e.g.\ \citet{Ro68} p.167). Consequently, every sequence of probability measures admits a subsequence that converges on all cylinder sets.
\end{proof}

Proposition~\ref{prop:compac} implies the following result about the set of invariant measures of the transfer operator $T$.

\begin{proposition}\label{prop:existence}
There is at least one invariant measure.
\end{proposition}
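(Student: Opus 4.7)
The plan is to use a Krylov--Bogolyubov type argument, adapted to the weak convergence on cylinder sets introduced just before the statement.

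First I would pick an arbitrary starting measure $\mu_0 \in \mathcal M$ (for instance the Dirac measure $\sleb{0}$ on the configuration $\vect\omega^{(0)}$) and form the Cesàro averages
\begin{equation*}
\nu_n := \frac{1}{n} \sum_{t=0}^{n-1} T^t \mu_0, \qquad n \in \nat^*.
\end{equation*}
Each $\nu_n$ is a convex combination of probability measures in $\mathcal M$ and is therefore itself in $\mathcal M$. By Proposition~\ref{prop:compac}, the sequence $(\nu_n)_{n\in\nat^*}$ admits a subsequence $(\nu_{n_k})_{k\in\nat}$ that converges weakly to some probability measure $\nu \in \mathcal M$.

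Next I would check that $T$ is continuous with respect to this notion of weak convergence. For any cylinder set $C = \{ \vect\omega \in X \mid \omega_{x_1}=a_1, \dotsc, \omega_{x_n}=a_n\}$, the defining formula of $T$ in Section~\ref{sec:PCAformalism} expresses $T\mu(C)$ as a finite linear combination, with coefficients $\prod_i p(a_i \mid \vect b_{\mathcal U(x_i)})$ that do not depend on $\mu$, of the values of $\mu$ on the cylinder sets based on the finite set $\bigcup_{i=1}^n \mathcal U(x_i)$. Hence if $\mu_n \to \mu$ weakly, then $T\mu_n(C) \to T\mu(C)$ for every cylinder set $C$, i.e.\ $T\mu_n \to T\mu$ weakly.

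Finally I would compare $T\nu_n$ and $\nu_n$. Using linearity of $T$,
\begin{equation*}
T\nu_n - \nu_n = \frac{1}{n}\bigl( T^n \mu_0 - \mu_0 \bigr),
\end{equation*}
so for every cylinder set $C$, $|T\nu_n(C)-\nu_n(C)| \leq 2/n \to 0$. Passing to the subsequence $(\nu_{n_k})$ and using the continuity established above, I obtain $T\nu(C) = \nu(C)$ for all cylinder sets, and the Daniell--Kolmogorov argument invoked earlier in Section~\ref{sec:PCAformalism} then gives $T\nu = \nu$ as measures in $\mathcal M$. Thus $\nu$ is an invariant measure. I do not expect any genuine obstacle: the only delicate point is checking continuity of $T$ for cylinder-wise convergence, which is immediate from its explicit definition, so the argument is essentially a routine application of compactness plus Cesàro averaging.
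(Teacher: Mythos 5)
Your argument is correct and is essentially the proof the paper gives (and attributes to Toom's notes): Cesàro averaging of $\left(T^t\mu_0\right)_{t\in\nat}$, extraction of a weakly convergent subsequence via Proposition~\ref{prop:compac}, and invariance of the limit from the cylinder-set definition of $T$. You have merely filled in the details the paper leaves implicit, namely the continuity of $T$ under cylinder-wise convergence and the telescoping bound $\norm{T\nu_n(C)-\nu_n(C)}\leq 2/n$.
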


\begin{proof}
For any initial probability measure $\mu_{\textrm{in}}\in \mathcal M$, the Cesˆro means of the sequence $\left(T^t \mu_{\textrm{in}}\right)_{t\in \nat}$ form the sequence
\begin{equation*}
\left( \frac{1}{n} \sum_{k= 0}^{n-1} T^k \mu_{\textrm{in}} \right)_{n\in \nat^*}.
\end{equation*}
Applying Proposition~\ref{prop:compac} to the latter sequence, one obtains a weakly convergent subsequence. Moreover, using the definition of $T$ in terms of cylinder sets, one can show that the weak limit of the subsequence is an invariant probability measure (see for instance \citet[Theorem 7.1]{To13}).
\end{proof}
So the set of all probability measures that are left invariant by the transfer operator $T$ is a nonempty convex set.

Sometimes, either in parallel with the Bounded-noise assumption or on its own, we will make the following reverse hypothesis. Let $\delta$ be a parameter in $[0,1/2]$.
\begin{HNassump}~\\
For all $x\in \natspace$, $\vect \omega \in X$ and $\xi_x \in S$, $p_x(\xi_x| \vect{\omega}) \geq \delta$.
\end{HNassump}

\begin{proposition}\label{prop:hignnoise}
There exists $\delta_c <1/2$, depending only on the size of the neighborhood $\mathcal U$ that enters the definition of the local transition probabilities, such that the following is true for all $\delta > \delta_c$. If the High-noise assumption holds, then there is only one invariant measure $\muinvar$. Moreover, for any initial probability measure $\mu_{\textrm{in}}\in \mathcal M$, the sequence $\left(T^t \mu_{\textrm{in}}\right)_{t\in \nat}$ converges weakly to $\muinvar$.
\end{proposition}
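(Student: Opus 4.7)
The idea is a classical coupling/contraction argument, in the spirit of Dobrushin's uniqueness condition. Set $R = |\mathcal U|$ and define $\delta_c := \tfrac{1}{2}(1 - 1/R)$, which is strictly less than $1/2$. I will show this choice works.

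\emph{Step 1 (maximal local coupling).} Under the High-noise assumption, for every $x \in \natspace$ and every pair of configurations $\vect\omega, \vect\omega'$, both Bernoulli distributions $p_x(\cdot \mid \vect\omega)$ and $p_x(\cdot \mid \vect\omega')$ satisfy $p_x(0\mid\cdot), p_x(1\mid\cdot) \geq \delta$. Their common mass is therefore at least $2\delta$, so there exists a coupling $\pi_x(\cdot, \cdot \mid \vect\omega, \vect\omega')$ of the two laws with $\pi_x(\xi = \xi') \geq 2\delta$. Moreover, whenever $\vect\omega_{\mathcal U(x)} = \vect\omega'_{\mathcal U(x)}$ the two laws coincide and we can take the diagonal coupling with $\pi_x(\xi = \xi') = 1$.

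\emph{Step 2 (global coupling).} Given two initial measures $\mu_{\textrm{in}}, \mu'_{\textrm{in}} \in \mathcal M$, couple them arbitrarily at time $0$, then at each subsequent time step and each site $x$ apply the coupling $\pi_x$ of Step~1, using independent randomness across distinct sites. This yields a joint process on $X \times X$ whose marginals are $(T^t \mu_{\textrm{in}})_{t\in\nat}$ and $(T^t \mu'_{\textrm{in}})_{t\in\nat}$. Let $E_{x,t}$ denote the event that the two coupled configurations disagree at site $x$ at time $t$.

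\emph{Step 3 (recursive bound).} Conditionally on the configurations at time $t-1$, the coupling at site $x$ gives probability $0$ of disagreement if $\vect\omega_{\mathcal U(x)} = \vect\omega'_{\mathcal U(x)}$ and probability at most $1-2\delta$ otherwise. Using the union bound over $\mathcal U(x)$,
\begin{equation*}
\mathbb P(E_{x,t}) \;\leq\; (1-2\delta)\, \mathbb P\bigl(\exists\, y \in \mathcal U(x): E_{y,t-1}\bigr) \;\leq\; (1-2\delta) \sum_{y \in \mathcal U(x)} \mathbb P(E_{y,t-1}).
\end{equation*}
Iterating this inequality and using $\mathbb P(E_{y,0}) \leq 1$ gives $\mathbb P(E_{x,t}) \leq \bigl((1-2\delta)R\bigr)^t$. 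Since $\delta > \delta_c$ is equivalent to $(1-2\delta)R < 1$, this bound tends to $0$ as $t \to \infty$.

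\emph{Step 4 (conclusion).} Let $C = \{\vect\omega : \omega_{x_i} = a_i,\, i=1,\dotsc,n\}$ be any cylinder set. The coupling of Step~2 gives, for every $t \in \nat$,
\begin{equation*}
\bigl|T^t\mu_{\textrm{in}}(C) - T^t\mu'_{\textrm{in}}(C)\bigr| \;\leq\; \mathbb P\Bigl(\bigcup_{i=1}^n E_{x_i,t}\Bigr) \;\leq\; n \bigl((1-2\delta)R\bigr)^t \xrightarrow[t\to\infty]{} 0.
\end{equation*}
Applying this with $\mu'_{\textrm{in}} = \muinvar$ any invariant measure (which exists by Proposition~\ref{prop:existence}) yields weak convergence $T^t \mu_{\textrm{in}} \to \muinvar$. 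Applied with both measures invariant, it shows $\mu_1(C) = \mu_2(C)$ on every cylinder set, hence uniqueness.

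The main point is really Step~1, i.e.\ the observation that $2\delta$ bounds the total-variation overlap of any two local transition laws; everything else is the standard geometric iteration of the disagreement. No subtle obstacle arises, the only quantitative constraint being the choice $(1-2\delta)R < 1$, which depends only on $R = |\mathcal U|$ as required.
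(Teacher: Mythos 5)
Your proof is correct, and it follows essentially the same route the paper indicates for this standard result: a coupling of the two processes in which disagreement at a site persists only with probability at most $1-2\delta$, so that disagreement must "percolate" back to time $0$ along paths whose expected number $\bigl((1-2\delta)R\bigr)^t$ is driven to zero once $\delta>\delta_c=\tfrac12(1-1/R)$. The paper itself only sketches this coupling-with-percolation argument and refers to Dobrushin and Toom for details, and your Steps 1--4 are a valid instantiation of exactly that sketch.
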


\added{This is a standard result about the regime of weak coupling, that is to say where the interactions between neighboring cells influence weakly their states. It is originally due to \citet{Do71}. A formulation can be found for example in Theorem~9.2 in the notes by \citet{To13}, of which Proposition~\ref{prop:hignnoise} is a corollary. }The proof uses a coupling between the processes started from two different initial measures and a third process that simulates the percolation through the space-time lattice of errors which happen with a probability at least $\delta$ and which lead to the progressive loss of information about the initial measure.\added{ The result is strengthened by \citet{LeMaSp90}, who show that $\muinvar$ has exponential decay of correlations and that the convergence of $\left(T^t \mu_{\textrm{in}}\right)_{t\in \nat}$ is exponential.}
 
\begin{rmk}
\added{This property of exponential decay of correlations in the \textit{high-noise regime} of PCA, i.e.\ when the High-noise assumption holds with $\delta > \delta_c$, is analogous to the exponential decay of correlations for the unique Gibbs measure in the high-temperature regime of models of equilibrium statistical mechanics such as the Ising model. In general, one often compares the long-time behavior of PCA and, in particular, the properties of the invariant probability measures toward which the processes converge, with the Gibbs measures describing the equilibrium states for systems in statistical physics. In these systems, the interactions between neighboring sites are encoded in a Hamiltonian which plays a role similar to that of the updating function of a CA. In this comparison, the intensity of the noise in the perturbation of the CA corresponds to the temperature in equilibrium statistical physics. The analogy is fruitful and helps understand one type of models using the other and to conjecture or even prove results about PCA on the basis of the well-developed theory of equilibrium statistical mechanics. It also has some limitations, because invariant measures of PCA are not always Gibbs measures, as we will see in Chapter~\ref{chap:comparisonFT}. A discussion of this comparison can be found in the review by \citet{LeMaSp90}.}
\end{rmk}

Such a PCA, in the noise regime where all processes converge to the same invariant measure, regardless of the initial condition, has no room for any memory of the past when time goes to infinity. One is interested in finding the conditions for a different behavior, with an ability of conserving forever at least part of the information from the past. This can be achieved by systems where, for some initial measure $\mu_{\textrm{in}}$, the sequence $\left(T^t \mu_{\textrm{in}}\right)_{t\in \nat}$ does not converge weakly to the unique invariant measure. An example of such a PCA is given by \citet{ChMa11}.
It can also be achieved by systems that admit more than one invariant measure. Proposition~\ref{prop:hignnoise} indicates that the noise should be small to allow such a behavior.

\section{The stability theorem}\label{sec:stabilitythm}

While the general results about invariant measures in Section~\ref{sec:invmeasures} do not rely on the Bounded-noise assumption, in the current section and in the rest of the thesis that assumption will play a crucial role. Indeed, when it holds it makes sense to regard the PCA defined in Section~\ref{sec:PCAformalism} as a perturbation of the corresponding CA. One can then wonder to what extent the stochastic processes of the PCA are related to the trajectories of the CA.

In particular, following the successful approach by Andre Toom in \citep{To80}, we will compare the random processes in $M_{\epsilon}^{(0)}$ (respectively $M_{\epsilon}^{(1)}$) with the deterministic process with the same initial condition but where the updating rule cannot be disobeyed, that is to say with the trajectory $\stvect \omega^{(0)}$ (respectively $\stvect \omega^{(1)}$). The trajectory $\stvect \omega^{(0)}$ with the state $0$ at all points in space-time is said to be \textit{stable} if
\begin{equation}\label{defsta}
\lim_{\epsilon \to 0} \sup_{\substack{\ushort \mu \in M_{\epsilon}^{(0)} \\ v \in V}} \ushort \mu \left(\ushort \omega _v=1\right) =0
\end{equation}
and of course the definition of stability for the trajectory $\stvect \omega^{(1)}$ is obtained by exchanging the states $0$ and $1$ in the last expression. The following theorem gathers Theorems 5 and 6 in \citep{To80} restricted to CA.

\begin{thm}[Toom's stability theorem]\label{thm:stability}
For any monotonic binary CA, the following statements are equivalent:
\begin{enumerate}[(i)]
\item the trajectory $\stvect \omega^{(0)}$ is stable;\label{itemstable}
\item the trajectory $\stvect \omega^{(0)}$ is attractive, i.e.\ the CA possesses the erosion property;\label{itemattractive}
\item the CA satisfies the erosion criterion.\label{itemcriterion}
\end{enumerate}
\end{thm}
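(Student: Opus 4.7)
Plan. The statement collects three conditions, and Toom's erosion theorem (Theorem~\ref{thm:erosion}) already supplies (ii) $\Leftrightarrow$ (iii). It therefore remains to close the loop by establishing (iii) $\Rightarrow$ (i) together with its converse. The substantive direction is (iii) $\Rightarrow$ (i); the converse can be obtained by an explicit counterexample construction.

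For (iii) $\Rightarrow$ (i), my plan is to follow Toom's graphical (``explaining graph'' or space-time contour) method. Fix any $\ushort \mu \in M_{\epsilon}^{(0)}$ and any $v^* \in V$. The event $\{\ushort \omega_{v^*}=1\}$ should force the existence of a combinatorial object $\Gamma$ sitting in the space-time lattice whose vertices lie on the error set of $\stvect \omega$ and whose recursive structure is controlled by the minimal space-time zero-sets $Z_1,\dots,Z_J$: working backward in time from $v^*$, every non-error $1$ at some $v$ must be supported by at least one $1$ inside each $Z_j(v)\cap V$, since otherwise the updating function would have returned $0$; iterating and terminating the construction at errors or at $V_0$ (which carries only zeros by the definition of $M_{\epsilon}^{(0)}$) produces such a $\Gamma$. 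The erosion criterion $\bigcap_j \conv(Z_j)=\varnothing$ then plays the role of an isoperimetric input: no single direction of space-time can sustain an infinite branch of $\Gamma$, forcing the number of errors (the leaves of $\Gamma$) to be positive and to grow at least linearly with the distance from $v^*$ to $V_0$. A Peierls-type estimate then yields
\begin{equation*}
\ushort \mu(\ushort \omega_{v^*}=1) \;\leq\; \sum_{n\geq n_{\min}} N(n)\, \epsilon^n,
\end{equation*}
where $N(n)$, the number of admissible $\Gamma$ rooted at $v^*$ with $n$ error-leaves, grows at most as $C^n$ for some constant $C$ that depends only on the CA. Choosing $\epsilon<1/C$ makes the sum converge and tend to $0$ as $\epsilon\to 0$, uniformly in $\ushort \mu$ and in $v^*$, which is exactly the stability condition~\eqref{defsta}.

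For the converse (i) $\Rightarrow$ (ii), I would argue by contrapositive. Suppose the CA fails the erosion property and pick a finite $A\subset\natspace$ for which $D^t\vect \omega^A\neq \vect \omega^{(0)}$ for every $t$. By monotonicity of $\varphi$, one can construct, for each $\epsilon$, a process $\ushort \mu_\epsilon\in M_\epsilon^{(0)}$ whose time-$t$ marginals stochastically dominate a non-trivial lower envelope built from $\vect \omega^A$ (using that an error turning a $0$ into a $1$ has no downstream ``cost'' once the site is consistent with the CA evolution from $\vect \omega^A$). Non-erosion of $A$ then produces, at some space-time point $v_\epsilon$, a probability $\ushort \mu_\epsilon(\ushort \omega_{v_\epsilon}=1)$ that stays bounded away from $0$ uniformly in $\epsilon$, contradicting~\eqref{defsta}.

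The hard step is clearly (iii) $\Rightarrow$ (i): rigorously defining the contour $\Gamma$, verifying that the erosion criterion does translate into a strictly positive lower bound on the number of leaves, and controlling the combinatorial count $N(n)$. These are exactly the ingredients of Toom's original argument in \citep{To80} and will presumably take up most of the thesis's later development.
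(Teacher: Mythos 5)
Your proposal is correct and follows essentially the same route as the thesis: (ii)$\Leftrightarrow$(iii) is delegated to Theorem~\ref{thm:erosion}, and (iii)$\Rightarrow$(i) is obtained from Toom's graph construction rooted at $v^*$ with branching governed by the minimal space-time zero-sets and a Peierls count of graphs, which is exactly the machinery developed in Chapters~\ref{chap:NEC}--\ref{chap:eroderdD} (see the closing remark of Chapter~\ref{chap:eroderdD}, where the singleton case $\Lambda=\{v_{\Lambda}\}$ yields the uniform bound $2\epsilon/(1-C\epsilon)$ and hence stability). For the remaining implication (i)$\Rightarrow$(ii), which the thesis does not reprove but treats as part of the cited theorem, your contrapositive matches the accumulation-of-errors argument appearing later as Proposition~\ref{prop:eroderinterestFT}: the mechanism to make explicit is that a non-erodible island $A$ can be seeded by errors at any of $t$ earlier instants, so $\sup_{v}\ushort \mu(\ushort \omega_v=1)\geq 1-(1-\epsilon^{\norm{A}})^t\to 1$ for every fixed $\epsilon>0$, contradicting~\eqref{defsta}.
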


Equivalence of statements (\ref{itemattractive}) and (\ref{itemcriterion}) is a repetition of Theorem~\ref{thm:erosion}.
An alternative proof that statement (\ref{itemattractive}) implies statement (\ref{itemstable}) was later given by \citet{BrGr91}, using renormalization group methods in a more general context including continuous-time processes. Also, a pedagogical review of Toom's proof of statement (\ref{itemstable}), in the particular case of the North-East-Center majority model, can be found in Appendix A of the article by \citet{LeMaSp90}. A rewriting of the same proof with applications to finite-volume PCA is given by \citet{BeSi88} and reviewed by \citet{Ga95}. A different proof using a classification of errors according to their level of sparsity is given by \citet{GaRe88}.

\begin{rmk}
\replaced{In the three last papers,}{In that paper,} the North-East-Center majority rule is used in order to control the propagation of faults due to random errors in a computation performed by any given one-dimensional CA.\added{ The construction is rather simple. Let the neighborhood $\mathcal U$ and the updating function $\varphi$ of some one-dimensional CA be given. Let us consider the three-dimensional CA where, at each site $x=(x_1,x_2,x_3)\in \mathbb Z^3$ and at each time step $t\in \nat^*$, the following updating rule is applied. First, one applies the North-East-Center majority rule along two directions of the space lattice, recording temporarily at $x$ the majority state among the states at time $t-1$ of its three neighbors $(x_1,x_2,x_3)$, $(x_1+1,x_2,x_3)$ and $(x_1,x_2+1,x_3)$. Next, along the remaining space direction, one applies the updating function $\varphi$ of the given one-dimensional CA that one wants to simulate, using the states temporarily recorded at sites $(x_1,x_2,u)$ with $u \in \mathcal U(x_3)$. This gives the new state at site $x$ and time $t$ (see Figure~\ref{fig:gacsreif}). The original idea is presented by \citet{GaRe88}. Then, \citet{BeSi88} improve the estimates of \citet{GaRe88} about the size of the three-dimensional CA that is able to perform a reliable simulation during a certain time.}
\end{rmk}

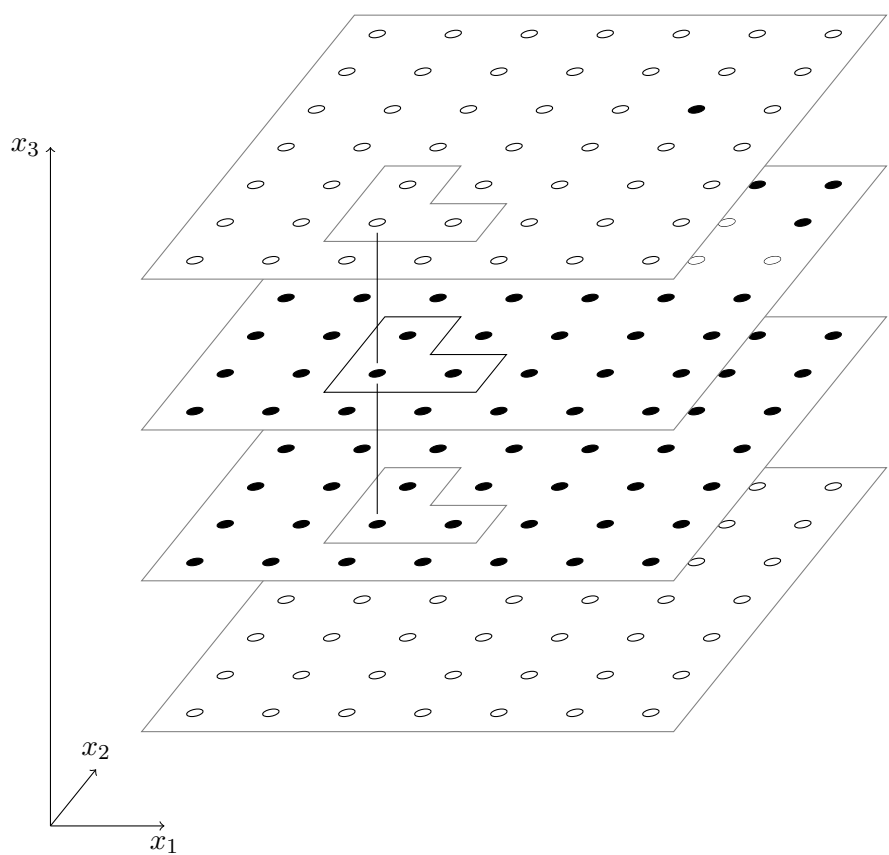
\begin{figure}
\centering
\begin{tikzpicture}
[scale=1,important line/.style={thin}]
\begin{scope}[shift={(-.5,0)}]
\draw[->] (0,-1) -- +(0,9) node[left] {$x_3$};
\draw[->] (0,-1) -- +($1.5*(1,0)$) node[below] {$x_1$};
\draw[->] (0,-1) -- +($1.5*(.4,.5)$) node[above] {$x_2$};
\end{scope}
	\begin{scope}
		\myGlobalTransformation{0}{0};
		\fill[white,draw=gray] (0.5,0.5) rectangle (7.5,7.5);
		\foreach \x in {1,...,7}
	    	\foreach \y in {1,...,7}
	        {
	     	 \draw[] (\x,\y) circle (.1cm);
	        } 
	\end{scope}
	\begin{scope}
		\myGlobalTransformation{0}{2};
		\fill[white,draw=gray] (0.5,0.5) rectangle (7.5,7.5);
		\foreach \x in {1,...,7}
	    	\foreach \y in {1,...,7}
	        {
	     	 \draw[fill] (\x,\y) circle (.1cm);
	        } 
		\draw[important line,gray] (2.5,1.5) -- ++(2,0) -- ++(0,1) -- ++(-1,0) -- ++(0,1) -- ++(-1,0) -- cycle;
		\node (toupdate1) at (3,2) {};
	\end{scope}
	\begin{scope}
		\myGlobalTransformation{0}{4};
		\fill[white,draw=gray] (0.5,0.5) rectangle (7.5,7.5);
		\foreach \x in {1,...,7}
	    	\foreach \y in {1,...,7}
	        {
	     	 \draw[fill] (\x,\y) circle (.1cm);
	        } 
			\fill[white] (6,5) circle (.1cm);
			\fill[white] (7,5) circle (.1cm);
			\fill[white] (6,6) circle (.1cm);
		\draw[important line] (2.5,1.5) -- ++(2,0) -- ++(0,1) -- ++(-1,0) -- ++(0,1) -- ++(-1,0) -- cycle;
		\node (toupdate2) at (3,2) {};
	\end{scope}
	\begin{scope}
		\myGlobalTransformation{0}{6};
		\fill[white,draw=gray] (0.5,0.5) rectangle (7.5,7.5);
		\foreach \x in {1,...,7}
	    	\foreach \y in {1,...,7}
	        {
	     	 \draw[] (\x,\y) circle (.1cm);
	        } 
		\fill (6,5) circle (.1cm);
		\draw[important line,gray] (2.5,1.5) -- ++(2,0) -- ++(0,1) -- ++(-1,0) -- ++(0,1) -- ++(-1,0) -- cycle;
		\node (toupdate3) at (3,2) {};
	\end{scope}
\draw (toupdate1) -- (toupdate2);
\draw (toupdate2) -- (toupdate3);
\end{tikzpicture}
\caption{The three-dimensional CA of G\'{a}cs and Reif. The figure represents a particular configuration in space at a fixed time coordinate. The neighbors of a site, that will determine its state at the next time step, are shown, in the case where the neighborhood for the simulated one-dimensional CA is $\mathcal U=\{-1,0,1\}$.}
\label{fig:gacsreif}
\end{figure}

\begin{rmk}
\added{The hypothesis of a binary state space $S$ in Theorem~\ref{thm:stability} is essential. Indeed, \citet[Solved problem 5.2]{To13} gives an example of eroder with $\norm{S}=3$ such that $\stvect \omega^{(0)}$ is not stable.}
\end{rmk}

Let us come back to the transfer operator $T:\mathcal M \to \mathcal M$. We noticed in Section~\ref{sec:PCAformalism} that if $T$ satisfies the Bounded-noise assumption, it induces a stochastic process belonging to $M_{\epsilon}^{(0)}$ when it acts iteratively on the initial measure $\sleb{0}$. Its marginal probability distribution at a fixed time coordinate $t\in \nat$, $T^t \sleb{0} \in \mathcal M$, inherits from $T$ and $\sleb{0}$ their invariance under translations in the space lattice $\natspace$. Suppose now that the monotonic binary CA involved in the Bounded-noise assumption, i.e.\ the CA whose stochastic perturbation by a bounded noise yields the operator $T$, satisfies the erosion criterion. Then Theorem~\ref{thm:stability} implies the stability of the homogeneous trajectory $\stvect \omega^{(0)}$. Therefore, using definition~\eqref{defsta}, we have
\begin{equation}\label{stabil}
\lim_{\epsilon \to 0} \sup_{t \in \nat } T^t\sleb{0}  \left(\omega_x =1 \right) = 0 ,
\end{equation}
where $T^t\sleb{0}  \left(\omega_x =1 \right)$ is a constant function of $x\in \natspace$.

Now, as in Section~\ref{sec:invmeasures}, we can construct an invariant measure using the Cesˆro means of the sequence $\left(T^t \sleb{0}\right)_{t\in \nat}$. The sequence of Cesˆro means admits at least one weakly convergent subsequence. Let us choose such a subsequence and call its limit $\muinv{0}$. It is explicitly given by
\begin{equation}\label{muinv}
\muinv{0} (C) = \lim_{j \to \infty} \left( \frac{1}{n_j} \sum_{k=0}^{n_j-1} T^k \sleb{0} (C) \right) \quad \textrm{ for all cylinder sets }C,
\end{equation}
for a certain subsequence $\left( n_j \right) _{j \in \nat}$ of increasing positive integers. The measure $\muinv{0}$ is an invariant probability measure, that is to say $T \muinv{0} = \muinv{0}$, and, like the measures $T^t \sleb{0}$, $t\in \nat$, it is invariant under translations in $\natspace$. Equation~\eqref{stabil} implies that
\begin{equation}\label{muinvstabil}
\lim_{\epsilon \to 0}  \muinv{0}  \left(\omega_x=1\right) = 0 ,
\end{equation}
where $\muinv{0}  \left(\omega_x =1 \right)$ is a constant function of $x\in \natspace$.

\begin{rmk}\label{rmk:unicdef}
Different choices of a weakly convergent subsequence of the sequence of Cesˆro means of $\left(T^t \sleb{0}\right)_{t\in \nat}$ in the definition of $\muinv{0}$ could result in different invariant measures. All of them satisfy the limit~\eqref{muinvstabil}. Moreover, the results about the properties of $\muinv{0}$ in this thesis hold for any such choice. We will come back to that in Remark~\ref{rmk:PCAmon} and later in Chapter~\ref{chap:expdecay}, where we will give conditions under which the definition of $\muinv{0}$ is unique, i.e.\ independent of the choice of a particular convergent subsequence in definition~\eqref{muinv}.
\end{rmk}

\begin{rmk}\label{rmk:PCAmon}
Our only monotonicity hypothesis is about the transition function $\varphi$ of the CA. One could also make the extra assumption that the local transition probabilities of the PCA be monotonic in the following sense: if $\omega_u \leq \omega'_u$ for all $u \in \mathcal U$, then $p(1| \vect{\omega}_{\mathcal U}) \leq p(1| \vect{\omega}'_{\mathcal U})$. In that case, the sequence of measures $\left(T^t \sleb{0}\right)_{t\in \nat}$ would converge weakly -- see for instance \citet[Problem 3.7.5]{To95} and \citet[Lemma 5.2]{To04}. Its limit would necessarily be an invariant measure and, moreover, it would coincide with the invariant measure $\muinv{0}$ defined in equation~\eqref{muinv}, for any choice of a weakly convergent subsequence in that definition. $\muinv{0}$ would thus admit the simpler expression
\begin{equation*}
\muinv{0} (C) = \lim_{t \to \infty}  T^t \sleb{0} (C)  \quad \textrm{ for all cylinder sets }C.
\end{equation*}
Nonetheless, this stronger assumption is not necessary for the results that follow. Therefore we will state and prove them in the general setting where only the CA is supposed to be monotonic and use definition~\eqref{muinv} of $\muinv{0}$. This general treatment will for example cover situations where the updating function is the monotonic North-East-Center majority rule and where $p(1| \vect{\omega}_{\mathcal U})=1-\epsilon$ if $\vect\omega_{\mathcal U}=(1,1,1)$ but $p(1| \vect{\omega}_{\mathcal U})=1$ if $\vect\omega_{\mathcal U}=(1,1,0)$.
\end{rmk}

\section[Phase transitions in probabilistic cellular automata]{Phase transitions\\in probabilistic cellular automata%
\sectionmark{Phase transitions in PCA}}\label{sec:phasetrans}
\sectionmark{Phase transitions in PCA}

Consider a PCA obtained via a perturbation of a monotonic binary CA with the erosion property. In the high-noise regime, i.e.\ under the High-noise assumption with $\delta > \delta_c$, we have seen in Section~\ref{sec:invmeasures} that the PCA admits a unique invariant measure. On the other hand, under the Bounded-noise assumption, the stability theorem of Toom helped us construct in Section~\ref{sec:stabilitythm} an invariant measure $\muinv{0}$ that satisfies property~\eqref{muinvstabil}. That property concerns only the \textit{low-noise regime}, that is to say small values of $\epsilon$. Now for small $\epsilon$ such that $\epsilon < \delta_c$, if the Bounded-noise assumption holds, the High-noise assumption with $\delta > \delta_c$ cannot hold at the same time. One can then ask the question whether, in the low-noise regime of the PCA, $\muinv{0}$ is the only invariant measure or not.

\subsection{The Stavskaya model}\label{sec:StavmodelPhase}

The Stavskaya CA introduced in Section~\ref{sec:Stavsdef} has the erosion property and equivalently verifies the erosion criterion. Therefore, any PCA obtained as a stochastic perturbation of that CA under the Bounded-noise assumption admits an invariant measure $\muinv{0}$, defined in equation~\eqref{muinv}, that satisfies equation~\eqref{muinvstabil}.

The local transition probabilities for the Stavskaya PCA are often chosen so that the noise is totally asymmetric: errors can only turn state $0$ into state $1$ but not state $1$ into state $0$. More precisely, let the Stavskaya model be the PCA defined by the local transition probabilities $p(1\mid 1,1)=1$ and $p(0\mid 0,0)=p(0\mid 0,1)=p(0\mid 1,0) = 1-\epsilon$ with $\epsilon \in [0,1]$. The Dirac measure $\sleb{1}$ is then always an invariant measure. Proposition~\ref{prop:hignnoise} implies that it is the only invariant measure in the high-noise regime. It owes its stationarity to the total asymmetry of the errors. The Stavskaya model thus undergoes a \textit{phase transition} in the sense that a continuous variation of the noise parameter induces a qualitative change of behaviors. Indeed, equation~\eqref{muinvstabil} carries the existence of an $\epsilon_c > 0$ such that for all $\epsilon < \epsilon_c$, the process admits a second invariant measure $\muinv{0}$. Actually, there exists an infinite number of invariant probability measures, since any convex combination of $\muinv{0}$ and $\sleb{1}$ is also an invariant probability measure.

It is one of the first PCA for which the existence of a phase transition has been rigorously proved, previously to the general proof of the stability theorem.\added{ The original proof is due to \citet{Sh68}.} We will give in Chapter~\ref{chap:Stav} a version of the proof \added{due to \citet{To68}}, using the method of contours. \added{Further results by \citet{LeVa70} state that, for all $\epsilon<\epsilon_c$, all invariant probability measures that are homogeneous in space are convex combinations of $\muinv{0}$ and $\sleb{1}$ and that, for all $\epsilon > \epsilon_c$, all processes started from any initial measure converge toward $\sleb{1}$. For this model as well as for all other examples below, the value of $\epsilon_c$ is not known exactly. Only theoretical lower and upper bounds and estimates from computer simulations are available. \citet{To68} proves $0.09 < \epsilon_c < 0.323$ and \citet{Me11} estimates $\epsilon_c=0.294\, 50(5)$.}

\subsection{The symmetric majority model in dimension 1}\label{sec:maj1Phase}

\added{The one-dimensional symmetric majority CA, defined in Section~\ref{sec:maj1def}, is neither an eroder nor a zero-eroder. By the stability theorem, the trajectories $\stvect \omega^{(0)}$ and $\stvect \omega^{(1)}$ are not stable. Furthermore, computer simulations by \citet{VaPePi69} and the following result of \citet{Gr87} suggest that the associated PCA present no phase transition. Choose the PCA induced by the local transition probabilities such that if $\xi_x \neq \varphi_x (\vect{\omega})$, then $p_x(\xi_x| \vect{\omega})= \epsilon$. Unlike the Stavskaya model, this PCA has no bias of the noise in favor of any of the two states. There exists $\epsilon^* >0$ such that, if $0<\epsilon <\epsilon^*$, for any initial probability measure $\mu_{\textrm{in}}$ the sequence $\left(T^t \mu_{\textrm{in}}\right)_{t\in \nat}$ converges exponentially fast toward a unique invariant measure.}

\subsection{The North-East-Center model}\label{sec:NECmodelPhase}

Like the Stavskaya CA, the North-East-Center majority CA introduced in Section~\ref{defNEC} possesses the erosion property. Therefore the invariant measure $\muinv{0}$ of the associated PCA satisfies the limit~\eqref{muinvstabil}. Moreover, the CA has the $0-1$ symmetry. Now the stability theorem presents the same symmetry. As a result, combining this symmetry of the CA with definition~\eqref{muinv} gives two invariant measures, $\muinv{0}$ which satisfies equation~\eqref{muinvstabil} and $\muinv{1}$ which satisfies the symmetric counterpart of equation~\eqref{muinvstabil}, 
\begin{equation}\label{muinvstabilsym}
\lim_{\epsilon \to 0} \muinv{1}  \left(\omega_x=0\right) = 0 ,
\end{equation}
thus revealing that they differ as long as $\epsilon$ is small enough.

Consequently, the North-East-Center PCA presents a phase transition. Indeed, if $\epsilon$ is below a critical threshold $\epsilon_c $, there exists an infinite number of invariant probability measures, namely the convex combinations of $\muinv{0}$ and $\muinv{1}$. On the other hand, if $\delta$ is above the critical threshold $\delta_c$, the PCA admits a unique invariant measure. The phase transition was first observed in computer simulations by \citet{VaPePi69}. Since it has been rigorously proved via Toom's stability theorem, the North-East-Center PCA is often called the Toom model. Of course, the existence of a phase transition is more general: it holds for all PCA such that the associated monotonic binary CA is both an eroder and a zero-eroder.

\added{If the noise is not biased, i.e.\ if $p_x(\xi_x| \vect{\omega})= \epsilon$ as soon as $\xi_x \neq \varphi_x (\vect{\omega})$, results of experiments by \citet{BeGr85}, \citet{Mak98}, \citet{Mak99} are in favor of the existence of a unique critical value $\epsilon_c \simeq 0.09$ separating the low-noise and high-noise regimes described above, although this has not been proved. \citet{BeGr85} give a phase diagram taking into account the possible bias of the noise. They highlight the robustness of that phase transition, compared to the phase transition in the Ising model in dimension $d\geq 2$, which is the standard example of phase transition in equilibrium statistical mechanics and which requires the external magnetic field to be exactly zero. It is not known whether, in the low-noise regime, the convex combinations of $\muinv{0}$ and $\muinv{1}$ are the only invariant measures.}

\added{Finally, let us note that a third regime of behaviors appears when the local transition probabilities verify the following: if $\xi_x \neq \varphi_x (\vect{\omega})$, $p_x(\xi_x| \vect{\omega})$ is close to $1$. In that case the stochastic dynamics is equivalent to that in the low-noise regime modulo an interchange of the two states $0$ and $1$ at every time step. This regime is explored by \citet{DiMa11} and \citet{Sl13} with simulations.}

\begin{rmk}
\added{The experimental results given above were obtained through several different methods. \citet{BeGr85} used a cellular automata machine, whose structure partly reproduces the spatial arrangement of cells in a finite regular lattice, allowing fast computations without the time loss due to the transmission of information along wires. Several prototypes of such machines were developed and are the subject of the book by \citet{ToMa87}. \citet{Mak99} and \citet{DiMa11} use computer simulations that perform the successive updates of all cells in a finite lattice, where the occurrence of errors is determined by pseudorandom number generators, during a time long enough to approach the stationary regime, and then repeat the experiment a large number of times. \citet{Mak98} and \citet{Me11} use Monte Carlo methods to simulate each time step of the stochastic evolution in a finite lattice like one step of a finite state Markov chain.}
\end{rmk}

\subsection{The symmetric majority model in dimension 2}\label{sec:NECSWPhase}

\added{Like the models discussed in Sections~\ref{sec:maj1Phase} and \ref{sec:NECmodelPhase}, stochastic perturbations of the two-dimensional symmetric majority CA were also explored by means of computer simulations by \citet{VaPePi69}. The trajectories $\stvect \omega^{(0)}$ and $\stvect \omega^{(1)}$ of the CA introduced in Section~\ref{sec:NECSMdef} are not stable and it is conjectured, for instance by \citet{To13}, that if the noise is biased in favor of one of the two states, the stochastic processes converge to a unique invariant probability measure.}

\added{But in the case of a low and symmetric noise, with $p_x(\xi_x| \vect{\omega})= \epsilon$ if $\xi_x \neq \varphi_x (\vect{\omega})$, several signs support the conjecture that there exist two invariant measures, with a dominance of state $0$ or of state $1$ and homogeneous in space. Indeed, computer simulations by \citet{KoPuBaBoFr05} show that, in the long-time behavior of this PCA and on a finite space lattice, if $\epsilon$ is small, the system spends the major part of the time in configurations that have either a very high density of cells with state $0$ or a very high density of cells with state $1$. After a finite but long time, it switches from one of these two extremal situations to the other. They obtain the estimate $\epsilon_c \simeq 0.1342$ for the transition between that behavior and a regime where the system converges to an equilibrium with equal densities of the two states. \citet{BaBoJoWa10} consider the PCA on a finite space lattice with a noise bounded by a value $\epsilon(n)$ that depends on the size $n$ of the lattice. In the limit where $n$ tends to infinity, they prove rigorously a lower bound of order $\epsilon^{-(n+1)}$ on the time spent in one of the two extremal sets of configurations before switching to the other one and an upper bound of order $\epsilon^{-3}$ on the time spent in the transition between these two sets. They suggest that their results are also in favor of the conjecture above when the space lattice is infinite. A mean-field version of the PCA is studied by \citet{BaBoKo06}, where the update at each site $x$ and at each time step involves the previous states at five sites chosen randomly in the space lattice, rather than the nearest neighbors of $x$. This model is much easier to analyze and it exhibits a phase transition with $\epsilon_c= 7/30$.}

\added{Finally, let us sketch some ideas that could be helpful for taking up the challenge of proving a phase transition for the two-dimensional symmetric majority model under symmetric noise. A similar argument is also presented by \citet{Gr85}. Let us consider the process started from the homogeneous initial configuration $\vect \omega ^{(0)}$. Errors can create more and more islands of cells with state $1$ in the sea with state $0$ but one should show that these islands do not invade the whole lattice or in other words that this process does not converge to the same mixed equilibrium as the process started from $\vect \omega ^{(1)}$.}

\added{But the CA at the basis of the model is not an eroder so no deterministic mechanism like in the North-East-Center model can force the shrinking of the islands. Indeed, although especially thin islands with width $1$ are steadily erased, thicker islands such as rectangles are fixed by the CA dynamics. So the errors creating state $0$ are necessary for some decrease of these islands to take place. Isolated errors happening inside an island are not really significant, because they create thin holes that are immediately filled at the next time step, when the majority updating rule is applied. Errors at the boundary of the island can have longer-lasting effects, if they occur at places where the boundary has corners. In some sense, one can consider them responsible for some `stochastic erosion' which replaces the deterministic erosion.}

\added{On the other hand, as noticed by \citet{Vi84}, in the absence of errors, the growth of an island of cells with state $1$ is restricted to the smallest convex region that includes it. Furthermore, it does not even fill holes with thickness larger than $1$ in that region. So only errors creating state $1$ can lead to the expansion of the island. Again, only errors occurring at the corners of the boundary of an island result in a lasting increase of the island.}

\added{One should thus take into account a competition, that takes place along the boundary of islands, between the two types of errors. \citet{Gr85} suggests that the errors that erode the island should win because the closed boundary has more outward corners than inward corners. But it remains to prove that the erosion is fast enough to avoid the merging of too many islands appeared at distant times. Some intermediary models have been proposed, where the fluctuations of the boundary of islands are faster. For instance, the model `Vote 4/5' or `Anneal' discussed in Section 5.4 of the book of \citet{ToMa87} has the $0-1$ symmetry and is not an eroder but the updating function of the CA itself acts as a catalyst for such fluctuations, in the sense that islands with straight boundaries such as rectangles are not fixed under that CA dynamics. Jean Bricmont proposed to study another PCA where the noise plays the role of the catalyst. In that model, the Bounded-noise assumption is not verified: at the corners of interfaces between cells with different states, the local transition probabilities are equal to $1/2$ for both states.}

\subsection{The positive rates conjecture}\label{sec:positiverates}

The phase transition in the Stavskaya model is due to the strong assumption of totally asymmetric errors. If the local transition probabilities were all positive, the invariant measure $\muinv{1}$ obtained as the limit of a weakly convergent subsequence of the Cesˆro means of $\left(T^t \sleb{1}\right)_{t\in \nat}$ would differ from $\sleb{1}$. Nothing guarantees that $\muinv{0} \neq \muinv{1}$ in that case. Indeed, contrary to the North-East-Center CA, the Stavskaya CA is not a zero-eroder. Exchanging the states $0$ and $1$ in the stability theorem, it implies that the trajectory $\stvect \omega^{(1)}$ is not stable. \replaced{Actually, according to Gray, for instance in Example 2 in \citep{Gr01} and in Example 1 in \citep{Gr12}, one can prove that for any PCA obtained as a perturbation of the Stavskaya CA by very small but all positive error probabilities, all processes converge toward the unique invariant measure $\muinv{0}$. To see it, one can consider a coupling between the Stavskaya PCA and the PCA with identical error probabilities and where the updating function is the identity function.}{So the property~\eqref{muinvstabilsym} does not necessarily hold.}

This theoretical observation and that in Section~\ref{sec:maj1Phase}, together with the results of simulations by \citet{VaPePi69} and with the analogy with statistical mechanics models, \added{where there is no phase transition at positive temperature in dimension $1$, }lead researchers to the following conjecture.
\begin{PRconject}
No PCA in dimension $1$, with a finite state space $S$ and a finite neighborhood $\mathcal U$, and satisfying the High-noise assumption for some $\delta>0$, admits several invariant measures.
\end{PRconject}

That conjecture has been disproved by \citet{Ga01}, where a counterexample was given. It involves nearest-neighbor interactions but a huge yet finite state space $S$ for each cell. The construction is very complex and an introduction is given by the referee of the article, \citet{Gr01} -- see also \citet{Ga12} and \citet{Gr12}.

So far, attempts to construct simpler counterexamples have failed. If we restrict ourselves to monotonic binary CA, Proposition~\ref{prop:1D1attract}, in conjunction with the stability theorem, shows that the trajectories $\stvect \omega^{(0)}$ and $\stvect \omega^{(1)}$ cannot both be stable.\added{ However, that is not sufficient to imply the uniqueness of the invariant measure for the associated PCA. Now the behavior of PCA is often, but not always, similar to the behavior of their continuous-time counterparts, the \textit{interacting particle systems}. In such models, the states of cells at different sites in the discrete space lattice are not updated simultaneously, but at independent random times in $\mathbb R^+$. \citet{Gr82} proves a weaker version of the positive rates conjecture in this continuous-time setting, with a restriction to a binary state space, a neighborhood containing only nearest neighbors and monotonic updating rules. In this restricted class of models, one finds continuous-time versions of the Stavskaya PCA and of the one-dimensional symmetric majority PCA. The analogue of the Stavskaya model in continuous time is called the `one-sided contact process', since state $0$ propagates toward left through the space lattice by contact between neighboring cells.}

There is a candidate binary but non-monotonic CA in one dimension, introduced by \citet{GaKuLe78}, for which both $\stvect \omega^{(0)}$ and $\stvect \omega^{(1)}$ are attractive. Nonetheless, it was shown by \citet{Pa96} that the corresponding PCA converges to a unique invariant measure if it is biased, i.e.\ if the probability of errors that turn state $0$ into state $1$ is not equal to the probability of errors that do the reverse. The case of symmetric noise is still an open problem\added{ but computer simulations, e.g.\ by \citet{deSaMa92}, suggest that even in that case the invariant measure is unique}.

\section{Outline of the thesis}\label{sec:outline}

We investigate the low-noise regime of the PCA obtained from the class of monotonic binary CA with the erosion property. In particular, we concentrate on the properties of $\muinv{0}$ and, for CA that are also zero-eroders, of $\muinv{1}$. However, when the noise is low, there can be an infinite number of invariant measures, among which the convex combinations of $\muinv{0}$ and $\muinv{1}$. But, of all these invariant measures, $\muinv{0}$ has especially interesting properties. We show some of them in the thesis. In Part~\ref{part:block}, we prove an upper bound on the probability of the event where all cells in a given finite set are in state $1$, with a restriction to two-dimensional PCA. In Part~\ref{part:expdecay}, we show in any dimension that $\muinv{0}$ has exponential decay of correlations in space and in time. It implies in particular that $\muinv{0}$ is extremal in the convex set of all invariant measures. Our proofs rely strongly on Andre Toom's work, especially on a graphical argument developed in the proof of the stability theorem in \citep{To80}.

\part[Probability of a block of cells\\aligned in the opposite state]{Probability\\of a block of cells\\aligned\\in the opposite state}\label{part:block}
\chapter{The Stavskaya model}\label{chap:Stav}
\section{A simple model with a phase transition}

The model of Stavskaya, presented in Sections~\ref{sec:Stavsdef} and \ref{sec:StavmodelPhase}, exhibits a phase transition due to the erosion property and to the strong assumption of a totally asymmetric noise. Toom's stability theorem implies that the invariant measure $\muinv{0}$ satisfies equation~\eqref{muinvstabil}, so that it differs from the invariant measure $\sleb{1}$ if $\epsilon \leq \epsilon_c$.

When applied to the particular case of the Stavskaya model, the proof of the stability theorem is greatly simplified. It comes down to a contour argument similar to those widely used in statistical mechanics and in percolation theory\footnote{\added{One can find in a paper of \citet{Ci87} a pedagogical presentation of the Ising model and of the contour argument, due to \citet{Pe36}, that proves the existence of a phase transition in that model in dimension $d\geq 2$.}}. Conversely, the proof of the stability theorem, given by \citet{To80} in the most general case, can actually be regarded as a complex generalization of this contour argument\footnote{Incidentally, an intermediary step of that generalization, namely from the Stavskaya CA to the class of all monotonic binary CA in any dimension $d$ that admit, as the Stavskaya CA, two disjoint zero-sets separated by a hyperplane in $\realspace$, can be found in \citep{To74}.}. For that reason we will first present the main result of this Part~\ref{part:block} of the thesis in the special case of the Stavskaya model. It is not a new result in that case but it will give us the opportunity to explain in the simpler context of a toy-model the first ingredients of the general graphical argument introduced by \citet{To80} and needed in the proofs of our results.

\section[Probability of a block of cells aligned in state 1]{Probability of a block of cells\\aligned in state 1}\label{sec:stavskayatheorem}

Some properties of $\muinv{0}$ in the low-noise regime $\epsilon \leq \epsilon_c$ of the Stavskaya model have already been established. For instance, it has been proved by \citet{BeKrMa93} that $\muinv{0}$ exhibits an exponential decay of correlations and by \citet{DeMa06} that $\muinv{0}$ is weakly Gibbsian. Here we present another existing result about the probability of observing an aligned configuration in state $1$ in a given interval of $\ent$ -- see for example \citet{deMa09}. Next we will rewrite the proof of the same result using a slightly different formalism inspired by \citet{To80}.

Although the hypothesis of totally asymmetric noise is required for $\sleb{1}$ to be an invariant measure so that there be a phase transition, this assumption is not necessary for property~\eqref{muinvstabil} of $\muinv{0}$,
nor for the following extension of that property, where we only need to restrict to stochastic processes in $M^{(0)}_{\epsilon}$.
\begin{thm}\label{thm:stavskin lambda}
There exist $\epsilon^* >0$, $0<c<\infty$ and $C<\infty$ such that for all $\epsilon$ with $0\leq  \epsilon \leq \epsilon^*$, for all stochastic processes $\ushort \mu$ in $M^{(0)}_{\epsilon}$, for all times $t_{\Lambda}$ in $\nat^*$, for all finite and connected subsets $\Lambda$ of $\{(x,t_{\Lambda})\mid x \in \ent\}$, the probability of finding `ones' at all sites of $\Lambda$ has the following upper bound:
\begin{equation*}
\ushort \mu(\ushort \omega_v=1 \, \forall v \in \Lambda) \leq (C\epsilon)^{c\, \diam(\Lambda)+1}
\end{equation*}
\end{thm}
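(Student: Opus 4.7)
The plan is to run a graphical argument in the space-time lattice $V$, following Toom's approach specialized to the Stavskaya toy-model. The central ingredient is the \emph{AND}-form of the transition function $\varphi(\omega_0,\omega_1)=\omega_0\wedge\omega_1$: if $\ushort\omega_v=1$ at $v=(x,t)$ and no error occurs at $v$, then both predecessors $(x,t-1)$ and $(x+1,t-1)$ must lie in state $1$. This is the simplest instance of the property that, in the absence of an error at $v$, every minimal space-time zero-set of $v$ must contain at least one point in state $1$.

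First, for each $v=(x,t_\Lambda)\in\Lambda$ I would define the diagonal backward trail
\[
\tau_v=\{(x+k,\,t_\Lambda-k):0\le k\le t_\Lambda\}.
\]
Iterating the key observation along $\tau_v$ shows that the absence of errors at every point of $\tau_v\cap(V\setminus V_0)$ would force $\ushort\omega_{(x+t_\Lambda,0)}=1$, contradicting condition~\eqref{init} for $\ushort\mu\in M^{(0)}_\epsilon$. Hence each trail carries at least one error located in $V\setminus V_0$. Next, because the trails attached to distinct elements of $\Lambda$ are parallel diagonals with distinct starting positions, they are pairwise disjoint; since $\Lambda$ is a finite connected subset of $\{(x,t_\Lambda):x\in\ent\}$, we have $|\Lambda|=\diam(\Lambda)+1=:n$, so there are $n$ disjoint trails each forcing an error at a distinct space-time point.

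The estimate~\eqref{error} that defines $M^{(0)}_\epsilon$ then gives probability at most $\epsilon^n$ for any specified configuration of those $n$ errors, and one would like to finish by a union bound over their admissible positions along the trails. This is where the main obstacle lies: the naive count, allotting each of the $n$ errors any one of the $t_\Lambda$ positions available on its trail, yields an estimate of order $(t_\Lambda\epsilon)^n$, which is useless as $t_\Lambda\to\infty$. To obtain a bound that is uniform in $t_\Lambda$, the crucial step is to reorganize the errors into a canonical graphical object, for instance by retaining on each trail the \emph{first} error encountered going backward in time and assembling the resulting $n$-tuple into a connected space-time ``explanation''. Provided one can show (by a combinatorial argument in the spirit of Peierls contours or self-avoiding path counting) that the number of such canonical explanations of total weight $m$ is bounded by $C^m$ for an absolute constant $C$, summing the resulting geometric series
\[
\sum_{m\ge n}(C\epsilon)^m \;\le\; \frac{(C\epsilon)^n}{1-C\epsilon}
\]
under the smallness assumption $C\epsilon^*<1$ yields precisely the required bound $(C\epsilon)^{c\,\diam(\Lambda)+1}$ for suitable constants $c,C$. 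This Peierls-type combinatorial control of canonical error configurations is the technical heart of the proof; already in this toy-model it contains in embryonic form the graphical machinery that Toom developed in~\citep{To80} and that will be needed in the more elaborate results of this part of the thesis.
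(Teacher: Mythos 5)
Your first step is sound and, in fact, already contains a genuinely nice elementary observation that the paper does not use in this form: following the diagonal backward trail $\tau_v$ from each $v\in\Lambda$ and using the AND-form of $\varphi$, each trail must carry an error, the trails are pairwise disjoint, and $\norm{\Lambda}=\diam(\Lambda)+1$, so condition~\eqref{error} gives probability at most $\epsilon^{\diam(\Lambda)+1}$ for any \emph{fixed} choice of the error positions. But the proof stops exactly where the real work begins. The step you label ``provided one can show\ldots that the number of such canonical explanations of total weight $m$ is bounded by $C^m$'' is not a technical detail to be filled in later; it is the entire content of the theorem, and the canonicalization you propose (keep the first error on each trail, going backward in time) does not deliver it. If the weight $m$ of an explanation is just the number $n$ of retained errors, the count of explanations of weight $n$ is of order $t_\Lambda^n$, not $C^n$ --- you have merely renamed the union bound you already rejected. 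What actually saves the argument is that an explanation whose errors sit at depth $k$ below $\Lambda$ cannot consist of only $n$ errors: for the first error on a trail to occur at depth $k$, the $k$ trail points above it must be in state $1$ without being errors, which forces (via the AND rule) a whole triangle of supporting $1$'s, which in turn forces many further errors. Quantifying ``how many'' is precisely what the paper's contour construction does: it builds the cluster $\bar{U}^{\infty}(\Lambda)$ of points responsible for the $1$'s in $\Lambda$, encloses it by an oriented path $\mathcal P$ with $n_d$ diagonal, $n_h$ horizontal and $n_v$ vertical steps, and proves the two identities $n_d=n_v$, $n_h=n_d+\diam(\Lambda)$ (so the total length is $3n_h-2\diam(\Lambda)$) together with $\norm{\widehat{U}^{\infty}(\Lambda)}=n_h+1$. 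Only then does the entropy $3^{3n_h}$ of contours get beaten by the probability cost $\epsilon^{n_h+1}$, uniformly in $t_\Lambda$.

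So the gap is concrete: you need a bijective (or at least injective) encoding of error configurations by a one-dimensional object whose \emph{length} is linearly controlled by the \emph{number of errors it records}, and you need the inequality ``number of errors $\geq$ a constant times depth plus $\diam(\Lambda)$'' to come out of that encoding. Neither follows from the $n$ disjoint trails alone. If you want to complete your route, you would have to replace ``first error per trail'' by the full cluster-and-contour construction (or, equivalently, the graph $G_Q$ built by induction on classes in Section~\ref{sec:graphreformulationproof}), at which point you have reproduced the paper's proof rather than shortcut it.
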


Here we will prove Theorem~\ref{thm:stavskin lambda} only in the case of totally asymmetric errors producing only state $1$. In other words, we consider only the stochastic processes $\ushort \mu \in M^{(0)}_{\epsilon}$ that satisfy
\begin{equation*}
\ushort \mu \left( \ushort \omega_v =0 \textrm{ and }  \varphi ( \stvect \omega_{U(v)} )=1 \right)=0 \quad \forall v\in V \setminus V_0.
\end{equation*}
The general proof will be given in Chapter~\ref{chap:eroder2D}.

\begin{rmk}
Since we consider a one-dimensional CA here, the finite and connected set $\Lambda$ is simply a segment $\{(x,t_{\Lambda}) \mid x=x_{\mathrm{min}}, x_{\mathrm{min}}+1, \dotsc, x_{\mathrm{max}}\}$ with some end sites $x_{\mathrm{min}}\leq x_{\mathrm{max}}$ in $\ent$. Its diameter $\diam(\Lambda)$ is $x_{\mathrm{max}}-x_{\mathrm{min}}$.
\end{rmk}

\begin{rmk}
In particular, Theorem~\ref{thm:stavskin lambda} applies to the sets $\Lambda$ that are simply singletons and it implies that the trajectory $\stvect \omega^{(0)}$ of the Stavskaya CA is stable.
\end{rmk}

Theorem~\ref{thm:stavskin lambda} also has the following direct consequence regarding $\muinv{0}$.
\begin{corollary}\label{cor:stavmuinv}
The invariant measure $\muinv{0}$ of the Stavskaya PCA has the following property. For the numbers $\epsilon^* >0$, $0<c<\infty$ and $C<\infty$ given by Theorem~\ref{thm:stavskin lambda}, for all $\epsilon$ with $0\leq  \epsilon \leq \epsilon^*$, for all finite and connected subsets $\Lambda$ of $\ent$,
\begin{equation*}
\muinv{0}( \omega_x=1 \, \forall x \in \Lambda) \leq (C\epsilon)^{c\, \diam(\Lambda)+1}
\end{equation*}
\end{corollary}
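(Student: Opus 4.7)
The strategy is a direct passage from the finite-time space-time bound of Theorem~\ref{thm:stavskin lambda} to the invariant measure via the Cesàro construction in equation~\eqref{muinv}. So the plan has three ingredients: (a) interpret the marginal $T^k\sleb{0}$ at time $k$ as the time-$k$ slice of a genuine space-time process in $M^{(0)}_\epsilon$; (b) apply Theorem~\ref{thm:stavskin lambda} to that process at time $t_\Lambda = k$ with set $\Lambda\times\{k\}$; (c) average and take the weak limit.

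For (a), note that the transfer operator $T$ of the Stavskaya PCA satisfies the Bounded-noise assumption by hypothesis, and the construction at the end of Section~\ref{sec:PCAformalism} explicitly produces a stochastic process $\ushort\mu \in M_\epsilon$ such that, for every $k\in\nat$, the marginal of $\ushort\mu$ on $\{(x,k)\mid x\in\ent\}$ is $T^k\sleb{0}$. Moreover, choosing $\mu_{\textrm{in}} = \sleb{0}$ forces $\ushort\mu(\ushort\omega_v = 0 \ \forall v\in V_0)=1$, so $\ushort\mu\in M_\epsilon^{(0)}$. In particular, for any finite $\Lambda\subset\ent$ and any $k\geq 1$,
\begin{equation*}
T^k\sleb{0}\bigl(\omega_x=1\ \forall x\in\Lambda\bigr) = \ushort\mu\bigl(\ushort\omega_v=1\ \forall v\in\Lambda\times\{k\}\bigr).
\end{equation*}

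For (b), observe that $\Lambda\times\{k\}$ is a finite connected subset of $\{(x,k)\mid x\in\ent\}$ with the same diameter as $\Lambda$. Hence Theorem~\ref{thm:stavskin lambda}, applied with $t_\Lambda = k$, yields the uniform bound
\begin{equation*}
T^k\sleb{0}\bigl(\omega_x=1\ \forall x\in\Lambda\bigr) \leq (C\epsilon)^{c\,\diam(\Lambda)+1}, \qquad k\geq 1,
\end{equation*}
with the same constants $\epsilon^*,c,C$ provided by the theorem. The case $k=0$ is trivial since $\sleb{0}(\omega_x=1\ \forall x\in\Lambda)=0$, so the bound holds for all $k\in\nat$.

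For (c), the event $\{\boldsymbol\omega\in X\mid \omega_x=1\ \forall x\in\Lambda\}$ is a cylinder set because $\Lambda$ is finite, so definition~\eqref{muinv} applies. Averaging the previous inequality over $k=0,\dotsc,n_j-1$ gives
\begin{equation*}
\frac{1}{n_j}\sum_{k=0}^{n_j-1} T^k\sleb{0}\bigl(\omega_x=1\ \forall x\in\Lambda\bigr) \leq (C\epsilon)^{c\,\diam(\Lambda)+1},
\end{equation*}
and letting $j\to\infty$ yields the desired bound on $\muinv{0}(\omega_x=1\ \forall x\in\Lambda)$. There is no real obstacle: once Theorem~\ref{thm:stavskin lambda} is granted, the corollary is a soft consequence that exploits only the uniformity of the bound in $t_\Lambda$ (guaranteed by the theorem) and the Cesàro form of the definition of $\muinv{0}$, and this same argument will work for any other weakly convergent subsequence in~\eqref{muinv}, consistently with Remark~\ref{rmk:unicdef}.
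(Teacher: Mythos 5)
Your proposal is correct and follows essentially the same route as the paper's own proof: apply the time-uniform bound of Theorem~\ref{thm:stavskin lambda} to the marginals $T^k\sleb{0}$, average over the Ces\`aro means, and pass to the limit along the subsequence in definition~\eqref{muinv} using the fact that the event is a cylinder set. Your extra care in spelling out why $T^k\sleb{0}$ is the time-$k$ marginal of a process in $M^{(0)}_\epsilon$ and in handling $k=0$ separately is welcome but not a different argument.
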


\begin{proof}[Proof of Corollary~\ref{cor:stavmuinv}]
Fix $\Lambda \subset \ent$ finite and connected. Since the upper bound in Theorem~\ref{thm:stavskin lambda} is uniform in time,
\begin{equation*}
\frac{1}{n} \sum_{k=0}^{n-1} T^k \sleb{0} ( \omega_x=1 \, \forall x \in \Lambda) \leq (C\epsilon)^{c\, \diam(\Lambda)+1} \quad \forall n \in \nat^*,
\end{equation*}
and in particular for all $n$ in the subsequence $\left( n_j \right) _{j \in \nat}$ of increasing positive integers that takes part in definition~\eqref{muinv} of $\muinv{0}$. Since $ \{\vect \omega \in X \mid \omega_x=1 \, \forall x \in \Lambda \}$ is a cylinder set,
\begin{align*}
\muinv{0}( \omega_x=1 \, \forall x \in \Lambda) &= \lim_{j\to \infty} \frac{1}{n_j} \sum_{k=0}^{n_j-1} T^k  \sleb{0} ( \omega_x=1 \, \forall x \in \Lambda) \\
&\leq (C\epsilon)^{c\, \diam(\Lambda)+1}.
\end{align*}%
\end{proof}

\section[Proofs of Theorem~\ref{thm:stavskin lambda} for a totally asymmetric noise]{Proofs of Theorem~\ref{thm:stavskin lambda}\\for a totally asymmetric noise}

\subsection{Proof using contours}\label{sec:proofcontour}

\begin{proof}
Let $t_{\Lambda}$ and $\Lambda=\{(x,t_{\Lambda}) \mid x=x_{\mathrm{min}}, x_{\mathrm{min}}+1, \dotsc, x_{\mathrm{max}}\}$ be given, with $x_{\mathrm{min}}\leq x_{\mathrm{max}}$ in $\ent$.

Let us consider exclusively the space-time configurations $\stvect \omega$ in $S^V$ that satisfy the initial condition $\ushort \omega_v=0 \, \forall v\in V_0$, the event $\ushort \omega_v=1 \, \forall v \in \Lambda$ and the condition $\ushort \omega_v=1 \ \forall v$ s.t. $\varphi ( \stvect \omega_{U(v)} )=1$, corresponding to a totally asymmetric noise. We now associate to any such $\stvect \omega$ a cluster $\bar{U}^{\infty}(\Lambda)$ of points in $V$ and a path $\mathcal P$ along the contour of this cluster. We construct the cluster $\bar{U}^{\infty}(\Lambda)$ by induction, starting from the initial cluster $\Lambda$ and adding points to it according to the following rule. The state at any point $v=(x,t_{\Lambda})$ in $\Lambda$ is $\ushort \omega_v=1$. Therefore, either the state of its two neighbors $(x,t_{\Lambda}-1)$ and $(x+1,t_{\Lambda}-1)$ is also $1$ or the updating rule of Stavskaya is disobeyed at $v$ due to an error that turns the state $\varphi(\ushort \omega_{(x,t_{\Lambda}-1)},\ushort \omega_{(x+1,t_{\Lambda}-1)})=0$ into state $1$. In the first case, we add the two neighbors into the cluster. We will say that both neighbors are \textit{responsible} for the state $1$ at $(x,t_{\Lambda})$. We will also write in general, for any point $v=(x,t)$, $\bar{U}(v)=\{(x,t-1),(x+1,t-1)\}$ provided that the state is $1$ at $v$ and at its two neighbors. In the second case, we do not add any point to the cluster, even though one of the two neighbors might be in state $1$. We will write in that case $\bar{U}(v)=\varnothing$.

We repeat this operation for all points of $\Lambda$. Next we repeat it also for all points with time coordinate $t_{\Lambda}-1$ newly added to the cluster. We iterate this for all times $t_{\Lambda},t_{\Lambda}-1,\dotsc,1$. Because of the initial condition, all points with time coordinate $t=1$ that belong to the cluster owe their state $1$ to some errors so the construction of the cluster stops there and the resulting cluster $\bar{U}^{\infty}(\Lambda)$ is of course finite. By construction, the state is $1$ at all points of the cluster. This construction maps the space-time configuration $\stvect \omega$ onto a unique cluster $\left(\bar{U}^{\infty}(\Lambda)\right)(\stvect \omega)$ although the map is not injective. Various space-time configurations can lead to the same cluster. Let $\bar{\mathcal U}(\Lambda)$ denote the set of all possible clusters for a given set $\Lambda$.

Examining $\bar{U}^{\infty}(\Lambda)$ in a space-time diagram with a vertical time axis, we notice that the cluster is the union of a horizontal segment $\Lambda$ and of sets of the form $\{(x,t),(x,t-1),(x+1,t-1)\}$, made of the three vertices of a triangle with a constant shape -- see Figure~\ref{fig:contourStav}. Then it is always possible to draw an anti-clockwise oriented path $\mathcal P$ around the cluster, starting from one extremity $(x_{\mathrm{max}},t_{\Lambda})$ of $\Lambda$ and arriving at the other extremity $(x_{\mathrm{min}},t_{\Lambda})$, using exclusively displacements $(\Delta x, \Delta t)$ of the forms $(1,-1)$ (diagonal), $(-1,0)$ (horizontal) and $(0,1)$ (vertical) and sticking to the contour of the cluster.

The correspondence between a cluster and such a path is one-to-one. Indeed, starting from the path $\mathcal P$ that results from this construction, the unique cluster that lead to $\mathcal P$ can be identified as the set of all points of $V$ inside the region delimited by $\mathcal P$ and by the segment $\{(x,t_{\Lambda}) \mid x\in [x_{\mathrm{min}},  x_{\mathrm{max}}]\}$. Since we chose to consider only totally asymmetric noise for the moment, no cell in that space-time region can be in state $0$ otherwise one of its two neighbors would be in state $0$ as well and, iterating this, there would be a whole path of points in state $0$, from a point in the enclosed region to a point with time coordinate $t=0$. This path of points that do not belong to the cluster would nonetheless necessarily cross $\mathcal P$ thus $\mathcal P$ would not be the path that sticks the closest to the contour of the cluster, which contradicts the rule for constructing $\mathcal P$. The map $\stvect \omega \mapsto \left(\bar{U}^{\infty}(\Lambda)\right)(\stvect \omega)$ and the bijection $p : \bar{U}^{\infty}(\Lambda) \mapsto \mathcal P$ induce a natural map $\stvect \omega \mapsto \mathcal P(\stvect \omega)$.

\begin{figure}
\centering
\begin{tikzpicture}
[scale=.7,important line/.style={ultra thick},decoration={
	markings,
	mark=at position 0.6 with {\arrow{stealth}}}
	]


\path[fill,gray!20!white] (2,0) -- (5,3) -- (1,3) -- (4,6) -- (1,6) -- (2,7) -- (1,7) -- (1,6) -- (0,6) -- (0,0) -- cycle;
\foreach \x in {-2,...,6} {
	\foreach \y in {-2,...,8} {
	\draw (\x,\y) circle (.1cm);	
	}
}
	
\foreach \position in {(0,-2),(1,-2),(0,-1),(1,-1),(2,-1),(0,0),(1,0),(2,0),(3,0),(0,1),(1,1),(2,1),(3,1),(0,2),(1,2),(2,2),(3,2),(4,2),(0,3),(1,3),(2,3),(3,3),(4,3),(5,3),(0,4),(1,4),(2,4),(6,4),(-2,5),(0,5),(1,5),(2,5),(3,5),(0,6),(1,6),(2,6),(3,6),(4,6),(1,7),(2,7),(5,8)} {
	\fill \position circle (.1cm);
}

        \draw (1,7) circle (.2cm);
        \draw (2,7) circle (.2cm);
        \draw (0,6) circle (.2cm);        
        \draw (2,6) circle (.2cm);
        \draw (3,6) circle (.2cm);        
        \draw (4,6) circle (.2cm);
        \draw (2,3) circle (.2cm);        
        \draw (3,3) circle (.2cm);
        \draw (4,3) circle (.2cm);
        \draw (5,3) circle (.2cm);

\draw (-.3,-.3) rectangle +(2.6,.6);
\node[below right] at (-.3+2.6,-.3) {$\Lambda$};
                   
\draw[postaction={decorate}] (2,0) -- (3,1);
\draw[postaction={decorate}] (3,1) -- (4,2);
\draw[postaction={decorate}] (4,2) -- (5,3);
\draw[postaction={decorate}] (5,3) -- (4,3);
\draw[postaction={decorate}] (4,3) -- (3,3);
\draw[postaction={decorate}] (3,3) -- (2,3);
\draw[postaction={decorate}] (2,3) -- (1,3);
\draw[postaction={decorate}] (1,3) -- (2,4);
\draw[postaction={decorate}] (2,4) -- (3,5);
\draw[postaction={decorate}] (3,5) -- (4,6);
\draw[postaction={decorate}] (4,6) -- (3,6);
\draw[postaction={decorate}] (3,6) -- (2,6);
\draw[postaction={decorate}] (2,6) -- (1,6);
\draw[postaction={decorate}] (1,6) -- (2,7);
\draw[postaction={decorate}] (2,7) -- (1,7);
\draw[postaction={decorate}] (1,7) -- (1,6);
\draw[postaction={decorate}] (1,6) -- (0,6);
\draw[postaction={decorate}] (0,6) -- (0,5);
\draw[postaction={decorate}] (0,5) -- (0,4);
\draw[postaction={decorate}] (0,4) -- (0,3);
\draw[postaction={decorate}] (0,3) -- (0,2);
\draw[postaction={decorate}] (0,2) -- (0,1);
\draw[postaction={decorate}] (0,1) -- (0,0);
\draw[dashed] (0,0) -- (2,0);

\draw[->] (-3,8) -- +(0,-10) node[anchor=south east] {time\ \ };
\draw[->] (0,9) -- +(6,0) node[anchor=south east] {space};
\draw (-3,0) -- +(-3pt,0) node[anchor=east] (lambdatip) {$t_{\Lambda}$};
\end{tikzpicture}
\caption{For a given set $\Lambda$ and a given space-time configuration, the associated cluster $\bar{U}^{\infty}(\Lambda)$ and oriented path $\mathcal P$ in a space-time representation. Points where the state is $0$ are represented by white circles, points where the state is $1$ are represented by black circles. The cluster $\bar{U}^{\infty}(\Lambda)$ is made of all points in the shaded region. Points in the cluster where an error happens are circled. The oriented path $\mathcal P$ is represented by a succession of arrows.}
\label{fig:contourStav}
\end{figure}
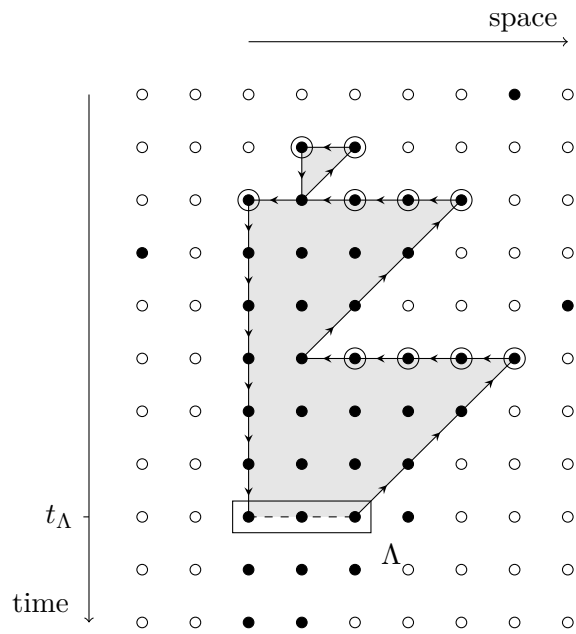

Taking into account the positions of the departure and arrival points of $\mathcal P$, the numbers $n_d$, $n_h$, $n_v$ of respectively diagonal, horizontal and vertical steps must satisfy the relations
\begin{align}
n_d&=n_v \label{relnhnvnd}\\
n_h&=n_d +\diam (\Lambda)\notag
\end{align}

On the other hand, the number $n_h$ of horizontal steps is also related to the number of points in the cluster where an error happens. Indeed, let $\widehat{U}^{\infty}(\Lambda)$ denote the subset of $\bar{U}^{\infty}(\Lambda)$ made of all points of the cluster such that their two neighbors do not both belong to the cluster, because at least one of them is in state $0$. We can name them \textit{error points} because their state $1$ is due to an error turning the prescribed state $0$ into state $1$. Then,
\begin{equation}\label{nhnerrors}
\norm{\widehat{U}^{\infty}(\Lambda)} = n_h +1.
\end{equation}
This relation follows from the following observation which can be proved by inspection of the construction of $\bar{U}^{\infty}(\Lambda)$ and $\mathcal P$. The horizontal steps and the error points encountered along the oriented path $\mathcal P$ alternate, starting from an error point, followed by a horizontal displacement -- possibly further along the path, next by a second error point, and so on until the end of the path after a last error point.

Now, $\Lambda$ being given, for all $n$ in $\nat$ the total number of possible paths $\mathcal P$ from $(x_{\mathrm{max}},t_{\Lambda})$ to $(x_{\mathrm{min}},t_{\Lambda})$ consisting of exactly $n$ displacements of the forms $(1,-1)$, $(-1,0)$, $(0,1)$ is less than or equal to $3^n$ because such a path is completely determined by its departure point $(x_{\mathrm{max}},t_{\Lambda})$, which is fixed, and by the sequence of its $n$ steps, any of which can take only three different values.

We now take advantage of the previous observations to estimate the probability of the event `all cells in $\Lambda$ are in state $1$'. For any $\epsilon$ and any $\ushort \mu$ in $M^{(0)}_{\epsilon}$ such that $\ushort \mu(\ushort \omega_v=0 \text{ and }  \varphi ( \stvect \omega_{U(v)} )=1 )=0$ for all $v$ in $V \setminus V_0$, the conditions~\eqref{error} and \eqref{init} imply
\begin{align}
&\ushort \mu(\ushort \omega_v =1  \, \forall v \in \Lambda) \notag \\
&\quad = \ushort \mu(\ushort \omega_v=0 \ \forall v \in V_0, \ushort \omega_v =1  \, \forall v \in \Lambda, \ushort \omega_v=1 \ \forall v \textrm{ s.t. } \varphi ( \stvect \omega_{U(v)} )=1)  \notag \\
&\quad = \sum_{\bar{U}^{\infty}(\Lambda) \in \bar{\mathcal U} (\Lambda) } \ushort \mu \left( \left( \bar{U}^{\infty}(\Lambda)\right) (\stvect \omega)=\bar{U}^{\infty}(\Lambda)\right) \notag \\
&\quad  \leq \sum_{\bar{U}^{\infty}(\Lambda) \in \bar{\mathcal U} (\Lambda) }\ushort \mu(\ushort \omega_v=1 \text{ and }  \varphi ( \stvect \omega_{U(v)} )=0 \ \  \forall v \in \widehat{U}^{\infty}(\Lambda) )\notag \\
&\quad \leq  \sum_{\bar{U}^{\infty}(\Lambda) \in \bar{\mathcal U} (\Lambda) } \epsilon^{\norm{\widehat{U}^{\infty}(\Lambda)}}\label{boundstart} \\
&\quad = \sum_{s \in \nat} \norm{\{\bar{U}^{\infty}(\Lambda) \in \bar{\mathcal U} (\Lambda)  \mid \norm{\widehat{U}^{\infty}(\Lambda)} = s+1 \}} \epsilon^{s+1} \notag
\end{align}
Now the path $\mathcal P$ associated to a cluster $\bar{U}^{\infty}(\Lambda)$ by the bijection $p$ satisfies equality~\eqref{nhnerrors} so
\begin{equation}\label{rewriteset}
\norm{\{\bar{U}^{\infty}(\Lambda) \in \bar{\mathcal U} (\Lambda)  \mid \norm{\widehat{U}^{\infty}(\Lambda)} = s+1 \}} = \norm{\{ \mathcal P \in p(\bar{\mathcal U} (\Lambda)  ) \mid n_h= s \}} .
\end{equation}
For all $\mathcal P $ in $p(\bar{\mathcal U} (\Lambda)  ) $, relations~\eqref{relnhnvnd} imply in particular that $n_h \geq \diam (\Lambda)$ and that the total number of steps is $n_d+n_h+n_v= 3n_h - 2 \diam(\Lambda)$. So the number of different paths in $p(\bar{\mathcal U} (\Lambda)  ) $ with exactly $s$ horizontal steps is $0$ if $s < \diam(\Lambda)$ and has the following upper bound for all values of $s$:
\begin{equation}\label{boundfrompath}
 \norm{\{ \mathcal P \in p(\bar{\mathcal U} (\Lambda)  ) \mid n_h= s \}} \leq 3^{3s-2 \diam(\Lambda)} \leq  3^{3s}.
\end{equation}
Inserting equations~\eqref{rewriteset} and \eqref{boundfrompath} into estimate~\eqref{boundstart}, we have the final estimate
\begin{align*}
\ushort \mu(\ushort \omega_v =1  \, \forall v \in \Lambda)  & \leq \sum_{s \in \nat} \norm{\{ \mathcal P \in p(\bar{\mathcal U} (\Lambda)  ) \mid n_h= s \}} \epsilon^{s+1} \\
& \leq  \epsilon \sum_{\stackrel{s \in \natÊ}{Ês\geq \diam(\Lambda)}} (3^3\epsilon)^s\\
& \leq (C \, \epsilon) ^{\diam(\Lambda)+1}
\end{align*}
if $C=27$ and $\frac{C^2}{C-1} \epsilon \leq 1$. Choosing $\epsilon^* = \frac{C-1}{C^2}>0$ and $c=1$ ends the proof.
\end{proof}

\subsection{Percolation reformulation of the proof}\label{sec:stavskayapercolation}

The Stavskaya PCA with totally asymmetric noise is analogous to a directed percolation problem -- see \citet[Chapter 1]{ToVaStMiKuPi90} and \citet{To04}. State $0$ can be seen as a fluid transported from point to point in the space-time lattice $V$ and points with state $0$ are then called \textit{wet}. Let all points in $V_0$ be wet. Keeping the updating function in mind, let us assume that pipes convey the fluid from any wet point $(x,t)$ in $V$ to the two points $(x,t+1)$ and $(x-1,t+1)$ that have $(x,t)$ in their space-time neighborhood. They are thus wet as well. This reproduces the fact that $\varphi(0,0)=\varphi(0,1)=\varphi(1,0)=0$. The fluid can only travel in one direction through these pipes.

Any point with positive time coordinate is \textit{closed} with a probability $\epsilon$ in $[0,1]$. It means that the fluid cannot pass by this point. It corresponds to an error in the Stavskaya model: even in the case where the updating rule prescribes state $0$ at that point, an error turns it into state $1$. So the Stavskaya PCA is a directed site percolation system on the particular oriented graph formed by the pipes just described.

We want to estimate the probability that no point in $\Lambda$ be wet. This is equivalent to the event that along all paths going from the boundary $V_0$ through pipes into $\Lambda$ there is at least one closed point. State $0$ cannot percolate from the initial condition to any point in $\Lambda$. A minimal set of closed points that prevents that percolation is depicted in Figure~\ref{fig:percolationStav}. These closed points can be regarded as horizontal obstacles against transportation of the fluid or in other words barriers that cannot be crossed. These barriers are easier to visualize and count if extended into a whole path $\tilde{\mathcal P}$ from $( x_{\mathrm{max}}+1/2, t_{\Lambda})$ to $( x_{\mathrm{min}}-1/2, t_{\Lambda})$, made of oriented barriers. Oriented barriers are directed edges that cannot be crossed from right to left by the fluid, where the definition of the right and left sides of a directed edge follows naturally from the direction of the edge. So the closed path formed by $\tilde{\mathcal P}$ with the line segment $\{(x,t_{\Lambda}) \mid x\in [x_{\mathrm{min}}-1/2,  x_{\mathrm{max}}+1/2]\}$ cannot be crossed from the outside to the inside. Such an extension is made possible by the fact that the fluid circulates through only two types of pipes that cannot cross diagonal and vertical barriers of the form $(1,-1)$ and $(0,1)$ from right to left.

\begin{figure}
\centering
\begin{tikzpicture}
[scale=.8,important line/.style={ultra thick},decoration={
	markings,
	mark=at position 0.6 with {\arrow{stealth}}}
	]
\begin{scope}[scale=.6,xshift=0cm]
	\foreach \x in {-2,...,6} {
		\foreach \y in {-2,...,8} {
		\draw (\x,\y) circle (.1cm);	
		}
	}
		
	\foreach \position in {(0,-2),(1,-2),(0,-1),(1,-1),(2,-1),(0,0),(1,0),(2,0),(3,0),(0,1),(1,1),(2,1),(3,1),(0,2),(1,2),(2,2),(3,2),(4,2),(0,3),(1,3),(2,3),(3,3),(4,3),(5,3),(0,4),(1,4),(2,4),(6,4),(-2,5),(0,5),(1,5),(2,5),(3,5),(0,6),(1,6),(2,6),(3,6),(4,6),(1,7),(2,7),(5,8)} {
		\fill \position circle (.1cm);
	}
	
	        \draw (1,7) circle (.2cm);
	        \draw (2,7) circle (.2cm);
	        \draw (0,6) circle (.2cm);        
	        \draw (2,6) circle (.2cm);
	        \draw (3,6) circle (.2cm);        
	        \draw (4,6) circle (.2cm);
	        \draw (2,3) circle (.2cm);        
	        \draw (3,3) circle (.2cm);
	        \draw (4,3) circle (.2cm);
	        \draw (5,3) circle (.2cm);
	
	\draw (-.3,-.3) rectangle +(2.6,.6);
	\node[below right,fill=white] at (-.3+2.6,-.3) {$\Lambda$};
	                   
	\begin{scope}[xshift=.5cm,shorten >=0.05cm,shorten <=0.05cm]
	\draw[] (5,3) -- (4,3);
	\draw (4,3) -- (3,3);
	\draw (3,3) -- (2,3);
	\draw (2,3) -- (1,3);
	\draw (4,6) -- (3,6);
	\draw (3,6) -- (2,6);
	\draw (2,6) -- (1,6);
	\draw (2,7) -- (1,7);
	\draw (1,7) -- (0,7);
	\draw (0,6) -- (-1,6);
	\end{scope}
	
	\draw[gray] (1,0) -- (1,3) -- (3,5) -- (3,7) -- (4.4,8.4);
	
	\draw[->] (-3,8) -- +(0,-10) node[anchor=south east] {time\ \ };
	\draw[->] (0,9) -- +(6,0) node[anchor=south east] {space};
	\draw (-3,0) -- +(-5pt,0) node[anchor=east] (lambdatip) {$t_{\Lambda}$};
\end{scope}
\begin{scope}[scale=.6,xshift=13cm]
	\foreach \x in {-2,...,6} {
		\foreach \y in {-2,...,8} {
		\draw (\x,\y) circle (.1cm);	
		}
	}
		
	\foreach \position in {(0,-2),(1,-2),(0,-1),(1,-1),(2,-1),(0,0),(1,0),(2,0),(3,0),(0,1),(1,1),(2,1),(3,1),(0,2),(1,2),(2,2),(3,2),(4,2),(0,3),(1,3),(2,3),(3,3),(4,3),(5,3),(0,4),(1,4),(2,4),(6,4),(-2,5),(0,5),(1,5),(2,5),(3,5),(0,6),(1,6),(2,6),(3,6),(4,6),(1,7),(2,7),(5,8)} {
		\fill \position circle (.1cm);
	}
	
	        \draw (1,7) circle (.2cm);
	        \draw (2,7) circle (.2cm);
	        \draw (0,6) circle (.2cm);        
	        \draw (2,6) circle (.2cm);
	        \draw (3,6) circle (.2cm);        
	        \draw (4,6) circle (.2cm);
	        \draw (2,3) circle (.2cm);        
	        \draw (3,3) circle (.2cm);
	        \draw (4,3) circle (.2cm);
	        \draw (5,3) circle (.2cm);
	
	\draw (-.3,-.3) rectangle +(2.6,.6);
	\node[below right,fill=white] at (-.3+2.6,-.3) {$\Lambda$};
	   
	\begin{scope}[xshift=.5cm]                  
	\draw[postaction={decorate}] (2,0) -- (3,1);
	\draw[postaction={decorate}] (3,1) -- (4,2);
	\draw[postaction={decorate}] (4,2) -- (5,3);
	\draw[->,>=stealth] (5,3) -- (4,3);
	\draw[->,>=stealth] (4,3) -- (3,3);
	\draw[->,>=stealth] (3,3) -- (2,3);
	\draw[->,>=stealth] (2,3) -- (1,3);
	\draw[postaction={decorate}] (1,3) -- (2,4);
	\draw[postaction={decorate}] (2,4) -- (3,5);
	\draw[postaction={decorate}] (3,5) -- (4,6);
	\draw[->,>=stealth] (4,6) -- (3,6);
	\draw[->,>=stealth] (3,6) -- (2,6);
	\draw[->,>=stealth] (2,6) -- (1,6);
	\draw[postaction={decorate}] (1,6) -- (2,7);
	\draw[->,>=stealth] (2,7) -- (1,7);
	\draw[->,>=stealth] (1,7) -- (0,7);
	\end{scope}
	\begin{scope}[xshift=-.5cm]
	\draw[postaction={decorate}] (1,7) -- (1,6);
	\draw[->,>=stealth] (1,6) -- (0,6);
	\draw[postaction={decorate}] (0,6) -- (0,5);
	\draw[postaction={decorate}] (0,5) -- (0,4);
	\draw[postaction={decorate}] (0,4) -- (0,3);
	\draw[postaction={decorate}] (0,3) -- (0,2);
	\draw[postaction={decorate}] (0,2) -- (0,1);
	\draw[postaction={decorate}] (0,1) -- (0,0);
	\end{scope}
	\draw[dashed] (-.5,0) -- (2.5,0);
	
	\draw[->] (-3,8) -- +(0,-10) node[anchor=south east] {time\ \ };
	\draw[->] (0,9) -- +(6,0) node[anchor=south east] {space};
	\draw (-3,0) -- +(-5pt,0) node[anchor=east] (lambdatip) {$t_{\Lambda}$};
\end{scope}
\end{tikzpicture}
\caption{A space-time configuration where there is no percolation from the boundary $V_0$ into $\Lambda$. Wet points are represented by white circles and dry points are represented by black circles. On the left, a minimal set of obstacles made of closed points is represented by circled points and horizontal barriers. Part of a path from $V_0$ into $\Lambda$ through pipes is drawn. It crosses an obstacle so fluid cannot take that path. On the right, the horizontal barriers are extended into a path that cannot be crossed by fluid from right to left.}
\label{fig:percolationStav}
\end{figure}
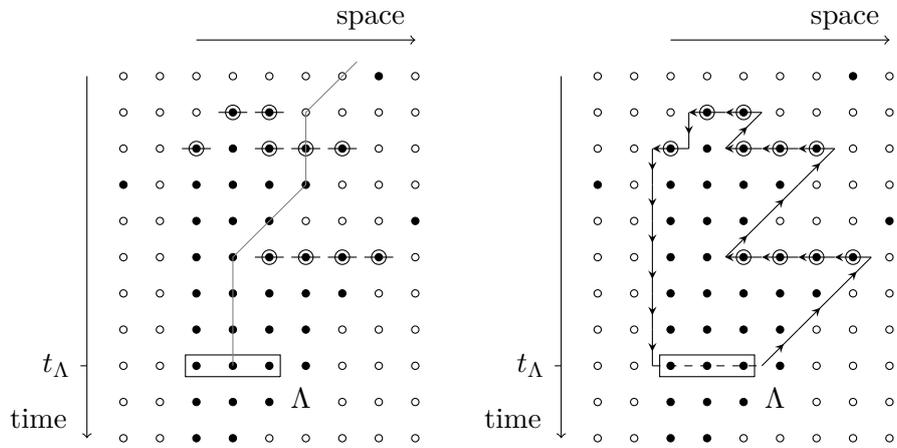

The existence of a path $\tilde{\mathcal P}$ of oriented barriers that separates $\Lambda$ from the boundary $V_0$ is a necessary condition for the event `no point in $\Lambda$ is wet'. The oriented path $\tilde{\mathcal P}$ of barriers is similar to the path $\mathcal P$ constructed in Section~\ref{sec:proofcontour}. The only difference is that now there is one horizontal step per error point and the departure and arrival points are correspondingly shifted. Therefore, taking into account the induced adaptations, the Proof of Theorem~\ref{thm:stavskin lambda} can also be given in this percolation setting. Again, in the path $\tilde{\mathcal P}$, horizontal barriers bear a low probability factor $\epsilon$ corresponding to the probability of a point being closed. And the total length of $\tilde{\mathcal P}$ is still proportional to the number of horizontal barriers according to relations analogous to equations~\eqref{relnhnvnd}.

\subsection{Graph reformulation of the proof}\label{sec:graphreformulationproof}

In the next chapters, we want to generalize the contour argument already used in the Proof of Theorem~\ref{thm:stavskin lambda} for the Stavskaya model in the restricted case of totally asymmetric noise. If the noise can turn state $1$ into state $0$, the cluster $\bar{U}^{\infty}(\Lambda)$ constructed in that proof can present holes and therefore the path $\mathcal P$ along its outer contour will not determine a unique cluster. More importantly, in order to deal with PCA in any dimension $d$, one has to take into account that, as soon as $d\geq 2$, no one-dimensional path can delimit any cluster in the $d+1$-dimensional space-time.

Nonetheless, a general version of the above argument, for a large class of models in any dimension, is given in the proof of the stability theorem by \citet{To80}, with a restriction to a singleton set $\Lambda$. It brings into play a one-dimensional graph that, for the Stavskaya model, comes down to the contour path $\mathcal P$ constructed in Section~\ref{sec:proofcontour}. We will extend that general graph construction to cover also larger sets $\Lambda$ in Chapters~\ref{chap:NEC} and \ref{chap:eroder2D}. The argument is rather complex. It is simpler to visualize it in the case of the Stavskaya model, as space-time has only two dimensions, so we first rewrite the Proof of Theorem~\ref{thm:stavskin lambda} to gradually introduce the above-mentioned graph construction. We start with an informal discussion on the path $\mathcal P$.

It helps to observe the contour path $\mathcal P$ in Figure~\ref{fig:contourStav} and interpret it as the transportation of a current -- not to be mistaken for the fluid in Section~\ref{sec:stavskayapercolation} -- from one extremity of $\Lambda$ to the other where it is absorbed. Except at these two points in $\Lambda$, this current is conserved all along the path: the number of edges that enter a point is always equal to the number of edges leaving the point. In the case of the Stavskaya model, this current conservation is of course inherent to the definition of a path but for other models we will construct oriented graphs that are no longer necessarily paths but maintain a certain notion of current conservation. The fact that $\mathcal P$ is a path or, in other words, that it conserves current, lead us in Section~\ref{sec:proofcontour} to the crucial relations~\eqref{relnhnvnd}, which entered the final estimates in the Proof of Theorem~\ref{thm:stavskin lambda}.

  \vspace{.3cm}
\setlength{\fboxsep}{.3 cm}
   \begin{obs}{\ } \\ \par\vspace{-.4cm}
\begin{fminipage}{0.89\textwidth}
Some current conservation principle leads to a relation between the numbers of diagonal, vertical and horizontal edges and the diameter of $\Lambda$.
\end{fminipage}
   \end{obs}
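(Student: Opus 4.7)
The plan is to take the informal ``current conservation'' slogan literally, by treating each edge of the oriented path $\mathcal{P}$ as carrying a single unit of flow in its direction of traversal. Because $\mathcal{P}$ is a simple oriented path from $(x_{\mathrm{max}},t_{\Lambda})$ to $(x_{\mathrm{min}},t_{\Lambda})$, every interior vertex has in-degree equal to out-degree (both equal to $1$), which is the discrete analogue of Kirchhoff's node law; the two endpoints of $\Lambda$ act as a unit source and a unit sink respectively. This is the sole input one needs.

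First I would assign to each edge type a fixed displacement vector in $\mathbb{Z}^2$: a diagonal step contributes $(1,-1)$, a horizontal step $(-1,0)$, and a vertical step $(0,1)$. Summing these contributions along $\mathcal{P}$ is a telescoping sum, and since interior vertices have matched in- and out-flow, only the source and sink survive. The total displacement therefore equals $(x_{\mathrm{min}}-x_{\mathrm{max}},\,0) = (-\diam(\Lambda),\,0)$, giving the two linear equations $n_d - n_h = -\diam(\Lambda)$ and $-n_d + n_v = 0$. Solving yields $n_d = n_v$ and $n_h = n_d + \diam(\Lambda)$, which are precisely the relations~\eqref{relnhnvnd} invoked in the proof of Theorem~\ref{thm:stavskin lambda}.

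I would then emphasize why the observation is worth isolating even though its verification is a one-line telescoping argument. In the two-dimensional Stavskaya setting the relations can be read off directly from the endpoints, so the current-conservation principle looks superfluous. But for the generalizations envisioned in Chapters~\ref{chap:NEC} and \ref{chap:eroder2D}, the graph replacing $\mathcal{P}$ is no longer a path: it may branch, merge, or live in a $(d+1)$-dimensional space-time where no one-dimensional contour can enclose a cluster. There, the equality of in-degree and out-degree at internal vertices will be built into the definition of the graph rather than inherited from being a path, and summing the fixed displacement vectors attached to the finitely many edge types will again reduce to a linear constraint involving a diameter-like quantity of $\Lambda$.

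The main obstacle, and the reason the observation is flagged at this stage, is not the derivation itself but identifying what plays the role of ``edge types with fixed displacement'' once the underlying CA is more general: one must set up the graphical objects so that each edge carries a geometric weight and a probabilistic weight (the $\epsilon$-cost associated with horizontal edges in the Stavskaya case), and so that the current-conservation identity continues to tie the total combinatorial length to $\diam(\Lambda)$. Keeping this bookkeeping straight, rather than the two-line telescoping computation above, is the real content the observation anticipates.
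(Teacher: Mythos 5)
Your derivation is correct and is essentially the paper's own argument: the relations~\eqref{relnhnvnd} are obtained by summing the fixed displacement vectors $(1,-1)$, $(-1,0)$, $(0,1)$ along $\mathcal{P}$ and equating the telescoped total with $(x_{\mathrm{min}}-x_{\mathrm{max}},0)=(-\diam(\Lambda),0)$, the conservation of current at interior vertices being exactly what the paper means by ``$\mathcal{P}$ is a path.'' Your closing remarks about why the observation is isolated — that in the generalizations the in/out balance must be imposed by construction rather than inherited, and that the bookkeeping of geometric versus probabilistic edge weights is the real content — also match the role this observation plays in Chapters~\ref{chap:NEC} and~\ref{chap:eroder2D}.
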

   \vspace{.5cm}

Here is another observation extracted from Figure~\ref{fig:contourStav}. Let us concentrate on the vertical and diagonal edges of $\mathcal P$, that is to say the two types of edges with a nonzero time component. All points $(x,t)$ in the cluster that are not error points have, by construction of the cluster, their two neighbors $(x,t-1)$ and $(x+1,t-1)$ in the cluster. Consider in particular the leftmost point in $\Lambda$, namely $(x_{\mathrm{min}},t_{\Lambda})$. It is connected to its leftmost neighbor $(x_{\mathrm{min}},t_{\Lambda}-1)$ by a vertical edge of the path $\mathcal P$ seen as a graph. That leftmost neighbor itself is connected to its own leftmost neighbor by a vertical edge of $\mathcal P$, and so on until an error point is encountered that stops the chain. Likewise, the rightmost point in $\Lambda$ is connected by a diagonal edge of $\mathcal P$ to its rightmost neighbor, which is also connected to its rightmost neighbor by a diagonal edge, and so on. In some sense, these two parts of the `wire' that drives the current are attached to extreme points in $\Lambda$ and then drawn as far as possible from each other under the sole constraint that they are made of edges connecting a point of the cluster to one of its neighbors that is itself in the cluster. For the particular space-time configuration given in Figure~\ref{fig:contourStav}, $\mathcal P$ has, for any time $t$, at most one vertical edge and one diagonal edge connecting points with time coordinates $t$ and $t-1$. At least in that example, the choice of the two points in the cluster with time coordinate $t$ where these two edges are attached maximizes the distance in space between their two vertices with time coordinate $t-1$, under the constraint just mentioned. It is the part of the proof where the erosion phenomenon comes into play.

  \vspace{.3cm}
\setlength{\fboxsep}{.3 cm}
   \begin{obs}{\ } \\ \par\vspace{-.4cm}
\begin{fminipage}{0.89\textwidth}
The erosion property and some good choice of a few non-horizontal edges enable some maximization of the distance in space between different vertices of the graph with equal time coordinates.
\end{fminipage}
   \end{obs}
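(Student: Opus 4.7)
The plan is to substantiate the observation by outlining how the non-horizontal edges of the generalized graph are chosen so that the graph's vertices spread apart linearly in space as one moves backward in time, and by indicating why this spreading rests essentially on the erosion criterion.

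First, I will note that for any non-error point $v=(x,t)$ in the cluster and any minimal zero-set $\mathcal Z_j$, at least one of the neighbors of $v$ sitting in $\mathcal Z_j(x)\times\{t-1\}$ must be in state $1$; otherwise the updating rule would force $\varphi(\stvect{\omega}_{U(v)})=0$, contradicting the non-error status of $v$. So at every such point one has at least one admissible predecessor inside each translated minimal zero-set, and this is where the non-horizontal edges of the graph are born. The rule for picking which predecessor to keep is what distinguishes a good construction from a bad one.

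Second, I will exploit the erosion criterion $\bigcap_{j=1}^J \conv(\mathcal Z_j)=\varnothing$. In the one-dimensional Stavskaya case, Helly's theorem (or direct inspection) yields two minimal zero-sets whose convex hulls are disjoint intervals $[a_1,b_1]$ and $[a_2,b_2]$ with $b_1<a_2$; in higher dimensions one instead selects a small collection of zero-sets whose convex hulls are strictly separated by hyperplanes. For each such zero-set I will attach a strand to a distinguished boundary point of $\Lambda$ and extend it backward in time by always picking, at every non-error point met along the strand, the predecessor that is most extreme in the direction associated with that zero-set. This precisely reproduces, in the Stavskaya example, the vertical wire attached to the leftmost point of $\Lambda$ and the diagonal wire attached to the rightmost point.

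The hard part will be showing that the positions of two distinct strands at time $t_{\Lambda}-s$ differ by at least a positive constant times $s$. In dimension $1$ this is immediate from the gap $b_1<a_2$. In dimension $d\geq 2$ there is no left-right ordering, so one must project onto a direction normal to a separating hyperplane of the chosen $\conv(\mathcal Z_j)$'s and control the projected positions of the strands; this is exactly the role announced for Helly's theorem in Section~\ref{sec:refvectinspace}. Combined with the current conservation principle of the previous observation, this linear spatial divergence will force the total number of horizontal edges in the graph to be at least proportional to $\diam(\Lambda)$, which is what ultimately feeds the exponential bound. The combinatorial bookkeeping around error points where strands terminate, meet, or branch will require the machinery postponed to Chapters~\ref{chap:NEC} and \ref{chap:eroder2D}.
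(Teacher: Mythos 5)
For the statement actually at issue --- the Stavskaya model in dimension $1$ --- your account coincides with the paper's own (informal) justification: the two non-horizontal ``wires'' are attached to the extreme points of $\Lambda$ and extended backward in time by always choosing the leftmost, respectively rightmost, neighbor in the cluster; since the two minimal zero-sets $\{0\}$ and $\{1\}$ have disjoint convex hulls, one wire is vertical and the other diagonal, so their spatial separation grows by one unit per time step until an error point interrupts a chain, and this is exactly where the erosion criterion enters.

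Your aside on how this generalizes to $d\geq 2$, however, rests on a step that fails. You propose to ``select a small collection of zero-sets whose convex hulls are strictly separated by hyperplanes'' and to measure strand divergence by projecting onto a normal to such a separating hyperplane. But the erosion criterion only guarantees that the convex hulls of \emph{all} minimal zero-sets have empty common intersection; it does not give two of them that are pairwise disjoint. The North-East-Center CA is already a counterexample: its three minimal zero-sets have convex hulls that intersect pairwise (at $(0,0)$, $(0,1)$ and $(1,0)$, see Figure~\ref{fig:erosioncritNEC}) yet share no common point. The paper's actual mechanism is collective rather than pairwise: via Theorem~\ref{thm:rockafellar} one produces affine functionals $\phi_1,\dotsc,\phi_m$ whose non-positivity sets are zero-sets and whose \emph{sum} is a positive constant (Lemma~\ref{lemma:refvectinspace}); this identity, turned into reference vectors and the extent/current-conservation bookkeeping of Lemmas~\ref{lemma:diamforks} and~\ref{lemma:genineq}, is what replaces your separating hyperplane. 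So no single pair of strands need diverge; only the sum of the $m$ projections onto the directions $v_k$ is forced to grow linearly, which still yields the lower bound on the number of spacelike edges. Your one-dimensional argument is unaffected because in $\mathbb R$ Helly's theorem does reduce the empty common intersection to a disjoint pair, as in Proposition~\ref{prop:1D1attract}.
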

   \vspace{.5cm}

Now a consequence of that observation is that many horizontal edges are needed to conserve the current i.e.\ complete the path. If besides it turns out, as in equation~\eqref{nhnerrors}, that there are as many error points in the cluster as horizontal edges in the graph, thanks to some cautious construction method, then the graph carries the low probability factor $\epsilon^{n_h+1}$ that is needed in the inequalities at the end of the Proof of Theorem~\ref{thm:stavskin lambda} or its generalization. For the Stavskaya model, equation~\eqref{nhnerrors} follows from the fact that $\mathcal P$ is an anti-clockwise contour path so that a horizontal step $(-1,0)$ indicates the presence of a point in the cluster whose neighbors are not in the cluster, namely an error point.

  \vspace{.3cm}
\setlength{\fboxsep}{.3 cm}
   \begin{obs}{\ } \\ \par\vspace{-.4cm}
\begin{fminipage}{0.89\textwidth}
The minimal number of errors is proportional to the number of horizontal edges.
\end{fminipage}
   \end{obs}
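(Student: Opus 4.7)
The observation corresponds, in the Stavskaya setting, to the precise equality $\norm{\widehat{U}^\infty(\Lambda)} = n_h + 1$ asserted in equation~\eqref{nhnerrors}. The plan is to prove this equality by a local analysis of the anti-clockwise contour $\mathcal P$, pairing off horizontal edges of $\mathcal P$ with error points of the cluster $\bar U^\infty(\Lambda)$; the proportionality asserted in the observation is then immediate, since $n_h$ differs from $\norm{\widehat{U}^\infty(\Lambda)}$ by the additive constant~$1$.

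The first step is to establish an injection from horizontal edges of $\mathcal P$ into $\widehat U^\infty(\Lambda)$. Let $v = (x,t)$ be the tail of a horizontal step $(-1,0)$ in $\mathcal P$. Because $\mathcal P$ hugs the contour as closely as possible, it always prefers a diagonal step $(1,-1)$ to $(x+1,t-1)$ whenever that vertex lies in the cluster, and only resorts to a horizontal step when $(x+1,t-1) \notin \bar U^\infty(\Lambda)$. Yet the inductive definition of $\bar U^\infty(\Lambda)$ forces every non-error cluster point to drag \emph{both} of its neighbors $(x,t-1)$ and $(x+1,t-1)$ into the cluster. Hence $v$ must itself lie in $\widehat U^\infty(\Lambda)$. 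Conversely, I would analyze the portions of $\mathcal P$ strictly between two successive horizontal edges—as well as the initial portion before the first horizontal edge and the final portion after the last one—and show by a short case analysis that exactly one additional error point is encountered in each such portion beyond its starting vertex. This produces a pairing between the $n_h$ horizontal edges and $n_h$ error points, together with one unpaired error point reached at the end of the last diagonal run, for a total of $n_h + 1$ error points.

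The main obstacle will be to argue rigorously about the structure of maximal diagonal and vertical runs of $\mathcal P$ without tacitly appealing to the picture in Figure~\ref{fig:contourStav}. A clean way to do this is to enumerate the few admissible local patterns of cluster points around each vertex of $\mathcal P$ permitted by the inductive rule: at a non-error cluster point the path continues deterministically with either a diagonal or a vertical step, depending on its incoming direction, so a maximal diagonal or vertical run can terminate only at an error point (or, for a vertical run, at the arrival vertex in $\Lambda$). Distinctness of the error points associated with distinct horizontal edges is automatic, since different horizontal edges have different tails. Once the equality $\norm{\widehat U^\infty(\Lambda)} = n_h + 1$ is established, its substitution into estimate~\eqref{boundstart} supplies the probability weight $\epsilon^{n_h+1}$ that drives the final bound, which is exactly the quantitative substance of the observation and what makes horizontal edges carry the low-probability factor in the subsequent generalization.
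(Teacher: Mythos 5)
You have correctly matched the boxed observation to the equality $\norm{\widehat{U}^{\infty}(\Lambda)} = n_h + 1$ of equation~\eqref{nhnerrors}, and your overall strategy of pairing horizontal edges with error points (with one error point left over) is the same as the paper's, which simply asserts, by inspection of the construction, that error points and horizontal steps alternate along $\mathcal P$, beginning and ending with an error point. However, the mechanism you propose for the pairing contains a genuine gap, and the counterexample is already visible in the paper's own Figure~\ref{fig:contourStav}. Your key claim is that $\mathcal P$ takes a horizontal step out of $v=(x,t)$ only when $(x+1,t-1)\notin\bar{U}^{\infty}(\Lambda)$, so that the tail of every horizontal edge is an error point. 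This ignores the fact that the contour path can visit the same vertex twice and, on the second visit, leave horizontally from a non-error point. In Figure~\ref{fig:contourStav}, the cluster point $(1,t_{\Lambda}-6)$ has both of its neighbours $(1,t_{\Lambda}-7)$ and $(2,t_{\Lambda}-7)$ in the cluster (they are two adjacent error points forming a ``bump'' at the top of the cluster), so it is not an error point. The path reaches $(1,t_{\Lambda}-6)$ horizontally, goes diagonally to $(2,t_{\Lambda}-7)$, horizontally to $(1,t_{\Lambda}-7)$, vertically back down to $(1,t_{\Lambda}-6)$, and then takes a horizontal step to $(0,t_{\Lambda}-6)$. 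The tail of that last horizontal edge is not an error point, even though $(x+1,t-1)=(2,t_{\Lambda}-7)$ lies in the cluster; the error point that accounts for this step is its head $(0,t_{\Lambda}-6)$. The same configuration refutes your auxiliary claim that a maximal vertical run can terminate only at an error point or at the arrival vertex in $\Lambda$: the vertical run from $(1,t_{\Lambda}-7)$ to $(1,t_{\Lambda}-6)$ ends at a non-error point and is immediately followed by a horizontal step.

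What does hold, and what the paper actually uses, is the weaker alternation statement: along the oriented path the events ``error point encountered'' and ``horizontal step taken'' strictly alternate, the first and last events being error points, whence $\norm{\widehat{U}^{\infty}(\Lambda)}=n_h+1$; the error point associated with a given horizontal edge may be its tail or its head depending on the local geometry. If you want to make this rigorous rather than argue by inspection, the case analysis must be organized around the incoming direction at each visit of a vertex (a vertex may be traversed up to twice, once on the ``way out'' to earlier times and once on the ``way back''), not around a rule that depends only on whether $(x+1,t-1)$ belongs to the cluster. As written, your injection from horizontal edges into $\widehat{U}^{\infty}(\Lambda)$ is not well defined, so the proof does not go through.
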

   \vspace{.5cm}

Now we want to reconstruct the graph $\mathcal P$ on the basis of the cluster but without using the information that it is the contour of the cluster and in a way that can be generalized to cover models where it is not a contour. The general method given by \citet{To80} is a construction by induction where more and more edges are drawn according to some rules and so that the final graph is connected and obeys a current conservation principle, as does the path $\mathcal P$ in the Stavskaya example, and satisfies a relation analogous to equation~\eqref{nhnerrors}. As a first guess inspired by the above observations about Figure~\ref{fig:contourStav}, we could imagine a construction by induction on time, starting from the two extreme points of $\Lambda$ and drawing progressively two chains of vertical and diagonal edges, oriented respectively toward the future and toward the past and connecting the leftmost or rightmost point in the cluster and its leftmost or rightmost neighbor respectively.

If we cast an eye on Figure~\ref{fig:contourcreuxStav}, we notice that this prescription is not sufficient to get all vertical and diagonal edges of $\mathcal P$. Indeed, for times previous to $t_{\Lambda}-3$, two distinct parts of the cluster appear, separated by a hollow of the cluster. Actually the state $1$ in these two separate parts is due to two distinct sets of errors which create two blocks of cells with state $1$ subjected to a progressive erosion but both lasting long enough to merge at time $t_{\Lambda}-3$ with a new block in state $1$. The points in these two parts at time $t_{\Lambda}-3$ form two intervals separated by the error points that create this new block. The idea behind the construction rules will be to distinguish these two parts of the cluster and to carry on, for each of them in parallel, the assembly of vertical and diagonal edges between respectively the leftmost and rightmost points and their leftmost and rightmost neighbors. The distinction between different parts of the cluster that stem from disjoint sets of error points will be made in terms of equivalence classes among points of the cluster with equal time coordinates. Of course, as already remarked, horizontal edges necessary to close the path should be drawn between these different classes.

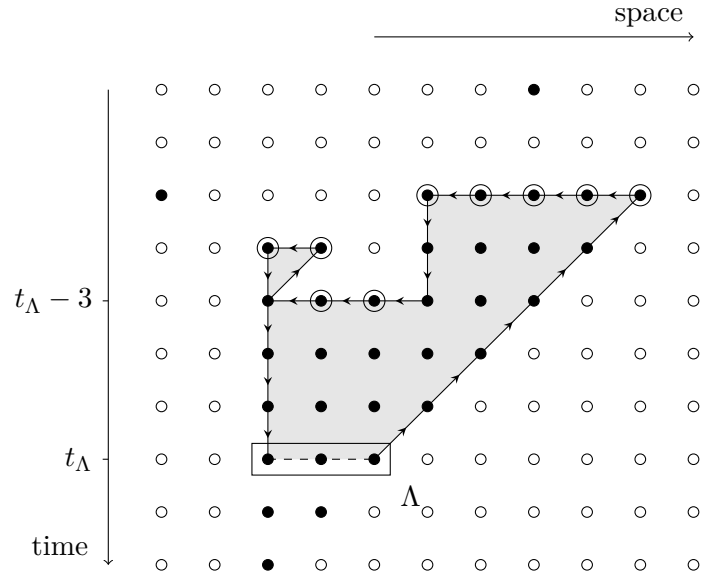
\begin{figure}
\centering
\begin{tikzpicture}
[scale=.7,important line/.style={ultra thick},decoration={
	markings,
	mark=at position 0.6 with {\arrow{stealth}}}
	]


\path[fill,gray!20!white] (2,0) -- (7,5) -- (3,5) -- (3,3) -- (0,3) -- (1,4) -- (0,4) -- (0,0) -- cycle;
\foreach \x in {-2,...,8} {
	\foreach \y in {-2,...,7} {
	\draw (\x,\y) circle (.1cm);	
	}
}
	
\foreach \position in {(0,-2),(0,-1),(1,-1),(0,0),(1,0),(2,0),(0,1),(1,1),(2,1),(3,1),(0,2),(1,2),(2,2),(3,2),(4,2),(0,3),(1,3),(2,3),(3,3),(4,3),(5,3),(0,4),(1,4),(3,4),(4,4),(5,4),(6,4),(-2,5),(4,5),(5,5),(6,5),(7,5),(3,5),(5,7)} {
	\fill \position circle (.1cm);
}

        \draw (0,4) circle (.2cm);
        \draw (1,4) circle (.2cm);
        \draw (1,3) circle (.2cm);        
        \draw (2,3) circle (.2cm);
        \draw (3,5) circle (.2cm);        
        \draw (4,5) circle (.2cm);
        \draw (6,5) circle (.2cm);        
        \draw (7,5) circle (.2cm);
        \draw (5,5) circle (.2cm);
      
\draw (-.3,-.3) rectangle +(2.6,.6);
\node[below right] at (-.3+2.6,-.3) {$\Lambda$};
                   
\draw[postaction={decorate}] (2,0) -- (3,1);
\draw[postaction={decorate}] (3,1) -- (4,2);
\draw[postaction={decorate}] (4,2) -- (5,3);
\draw[postaction={decorate}] (5,3) -- (6,4);
\draw[postaction={decorate}] (6,4) -- (7,5);
\draw[postaction={decorate}] (7,5) -- (6,5);
\draw[postaction={decorate}] (6,5) -- (5,5);
\draw[postaction={decorate}] (5,5) -- (4,5);
\draw[postaction={decorate}] (4,5) -- (3,5);
\draw[postaction={decorate}] (3,5) -- (3,4);
\draw[postaction={decorate}] (3,4) -- (3,3);
\draw[postaction={decorate}] (3,3) -- (2,3);
\draw[postaction={decorate}] (2,3) -- (1,3);
\draw[postaction={decorate}] (1,3) -- (0,3);
\draw[postaction={decorate}] (0,3) -- (1,4);
\draw[postaction={decorate}] (1,4) -- (0,4);
\draw[postaction={decorate}] (0,4) -- (0,3);
\draw[postaction={decorate}] (0,3) -- (0,2);
\draw[postaction={decorate}] (0,2) -- (0,1);
\draw[postaction={decorate}] (0,1) -- (0,0);
\draw[dashed] (0,0) -- (2,0);

\draw[->] (-3,7) -- +(0,-9) node[anchor=south east] {time\ \ };
\draw[->] (2,8) -- +(6,0) node[anchor=south east] {space};
\draw (-3,0) -- +(-3pt,0) node[anchor=east] (lambdatip) {$t_{\Lambda}$};
\draw (-3,3) -- +(-3pt,0) node[anchor=east] (lambdatroistip) {$t_{\Lambda}-3$};
\end{tikzpicture}
\caption{The cluster $\bar{U}^{\infty}(\Lambda)$ and the oriented path $\mathcal P$ for the same set $\Lambda$ as in Figure~\ref{fig:contourStav} but for a different space-time configuration.}
\label{fig:contourcreuxStav}
\end{figure}

Based on the preceding remarks, we now rewrite the proof of Theorem~\ref{thm:stavskin lambda} in terms of a graph construction algorithm that extends the arguments given by \citet{To74,To80} by covering also the cases where $\Lambda$ is not a singleton. This algorithm replaces the construction of the path $\mathcal P$ as the contour of a space-time region. We still restrict ourselves to the particular case of totally asymmetric noise.
\begin{proof}[Proof of Theorem~\ref{thm:stavskin lambda}]
Fix $t_{\Lambda}$ and $\Lambda$. Let $\stvect \omega$ be any space-time configuration in $S^V$ such that $\ushort \omega_v=0 \ \forall v \in V_0$ and $\ushort \omega_v=1 \ \forall v \in \Lambda$. Assume also that $\ushort \omega_v=1 \ \forall v \textrm{ s.t. }  \varphi ( \stvect \omega_{U(v)} )=1$. A cluster $\bar{U}^{\infty}(\Lambda)$ of points in $V$ is associated to $\stvect \omega$ exactly as in Section~\ref{sec:proofcontour}. In Figure~\ref{fig:contourcreuxStav}, which shows a space-time configuration compatible with totally asymmetric noise, the cluster $\bar{U}^{\infty}(\Lambda)$ is the set of points in space-time $V$ that lie in the shaded region, including its boundary. We will refer to that particular space-time configuration as an example for the construction of the graph.

Next we divide up points in the cluster with a common time coordinate into equivalence classes, as partially sketched above. Two points $a$ and $b$ in the cluster with the same time coordinate are equivalent whenever they owe their state $1$ to two non-disjoint sets of error points. More precisely, $a$ and $b$ can both be connected to the same point $c$ in the cluster with a lower time coordinate by two paths $a_0=a,a_1,\dotsc,c$ and $b_0=b,b_1,\dotsc,c$, made of steps $(0,-1)$ and $(1,-1)$ that must go from a point $a_j$ -- respectively $b_j$ -- in the cluster that is not an error point to one of its two neighbors, namely $a_{j+1}$ in $\bar{U}(a_j)$ -- respectively $b_{j+1}$ in $\bar{U}(b_j)$. The equivalence relation should be defined so as to be transitive. Two points $a$ and $b$ both equivalent to another point $c$ are also said to be equivalent. This defines the equivalence classes, which we will call \textit{classes}.

In the considered model, classes turn out to be easily recognizable. In particular error points are always singleton classes as they owe their state $1$ to themselves only. Points $(x,t)$ and $(x+1,t)$ in the cluster, with the same time coordinate and adjacent in space $\ent$, always belong to the same class as long as they are not error points, since they share the neighbor $(x+1,t-1)$. By transitivity, all points in an interval between two error points and/or points outside the cluster are thus in the same class.

Actually, because the noise is totally asymmetric, they form a class that is confined inside this interval. Indeed, points in two different intervals separated by error points and/or by points outside the cluster cannot belong to the same class, otherwise there would exist two paths of points with state $1$ starting from two different intervals $A$ and $B$, with steps $(0,-1)$ and $(1,-1)$, that meet at some common point $c$. By construction of the cluster, these two paths could even be extended so that they start from two points in $\Lambda$. If no error can turn state $1$ into state $0$, the whole region delimited by these two paths and by the horizontal segment between their starting points would be in state $1$ due to successive applications of the Stavskaya updating rule, which contradicts the presence of error points and/or of points that do not belong to $\bar{U}^{\infty}(\Lambda)$ between the intervals $A$ and $B$.

In Figure~\ref{fig:contourcreuxStav}, there are four equivalence classes in the cluster at time $t_{\Lambda}-3$. Two of them are formed by the two error points and the other two correspond to the two intervals already observed above. They are shown in Figure~\ref{fig:classesStav}. The distribution into equivalence classes of the points of the cluster that share a fixed time coordinate is thus straightforward in this toy-model. The classes are the singletons formed by error points and the intervals delimited by these error points and by points outside the cluster.

\begin{figure}
\centering
\begin{tikzpicture}
[scale=.7,important line/.style={ultra thick},decoration={
	markings,
	mark=at position 0.6 with {\arrow{stealth}}},classes/.style={ultra thick,rounded corners}
	]

\foreach \x in {-2,...,8} {
	\foreach \y in {-2,...,7} {
	\draw[gray] (\x,\y) circle (.1cm);	
	}
}
	
\foreach \position in {(0,0),(1,0),(2,0),(0,1),(1,1),(2,1),(3,1),(0,2),(1,2),(2,2),(3,2),(4,2),(0,3),(1,3),(2,3),(3,3),(4,3),(5,3),(0,4),(1,4),(3,4),(4,4),(5,4),(6,4),(4,5),(5,5),(6,5),(7,5),(3,5)} {
	\fill \position circle (.1cm);
}
\foreach \position in {(0,-2),(0,-1),(1,-1),(-2,5),(5,7)} {
	\fill[gray] \position circle (.1cm);
}

        \draw[gray] (0,4) circle (.2cm);
        \draw[gray] (1,4) circle (.2cm);
        \draw[gray] (1,3) circle (.2cm);        
        \draw[gray] (2,3) circle (.2cm);
        \draw[gray] (3,5) circle (.2cm);        
        \draw[gray] (4,5) circle (.2cm);
        \draw[gray] (6,5) circle (.2cm);        
        \draw[gray] (7,5) circle (.2cm);
        \draw[gray] (5,5) circle (.2cm);
      
\draw[gray] (-.3,-.3) rectangle +(2.6,.6);
\node[gray,below right] at (-.3+2.6,-.3) {$\Lambda$};
                   
\foreach \position in {(0,4) ,(1,4) ,(1,3) ,(2,3) , (3,5) ,(4,5) ,(6,5),(7,5) ,(5,5)} {
	\draw[classes] \position circle (.3cm);
}
\draw[classes] (-.3,-.3) rectangle (-.3+2.6,-.3+.6);
\draw[classes] (-.3,.7) rectangle (-.3+3.6,.7+.6);
\draw[classes] (-.3,1.7) rectangle (-.3+4.6,1.7+.6);
\draw[classes] (2.7,2.7) rectangle (2.7+2.6,2.7+.6);
\draw[classes] (2.7,3.7) rectangle (2.7+3.6,3.7+.6);
\draw[classes] (0,3) circle (.3cm);
\draw[classes,<-] (1,0+.3) -- +(0,.4);
\draw[classes,<-] (1,1+.3) -- +(0,.4);
\draw[classes,<-] (0,2+.3) -- +(0,.4);
\draw[classes,<-] (1,2+.3) -- +(0,.4);
\draw[classes,<-] (2,2+.3) -- +(0,.4);
\draw[classes,<-] (4,2+.3) -- +(0,.4);
\draw[classes,<-] (0,3+.3) -- +(0,.4);
\draw[classes,<-] (4,3+.3) -- +(0,.4);
\draw[classes,<-] (3,4+.3) -- +(0,.4);
\draw[classes,<-] (4,4+.3) -- +(0,.4);
\draw[classes,<-] (5,4+.3) -- +(0,.4);
\draw[classes,<-] (6,4+.3) -- +(0,.4);
\draw[classes,<-] (0.3-.05,3+.3-.05) -- +(.55,.55);
\draw[classes,<-] (6.3-.05,4+.3-.05) -- +(.55,.55);

\draw[->] (-3,7) -- +(0,-9) node[anchor=south east] {time\ \ };
\draw[->] (2,8) -- +(6,0) node[anchor=south east] {space};
\draw (-3,0) -- +(-3pt,0) node[anchor=east] (lambdatip) {$t_{\Lambda}$};
\draw (-3,3) -- +(-3pt,0) node[anchor=east] (lambdatroistip) {$t_{\Lambda}-3$};
\end{tikzpicture}
\caption{The partition of the cluster $\bar{U}^{\infty}(\Lambda)$ into equivalence classes for the same space-time configuration as in Figure~\ref{fig:contourcreuxStav}. Classes are delimited by thick rounded lines. They are the singletons formed by error points and the intervals between error points and/or points outside the cluster. Thick arrows represent the relation `is responsible for' between classes. The induced oriented graph $F$ is a forest with only one tree in this particular case where all points in $\Lambda$ are equivalent since $\Lambda$ is an interval that contains no error point.}
\label{fig:classesStav}
\end{figure}
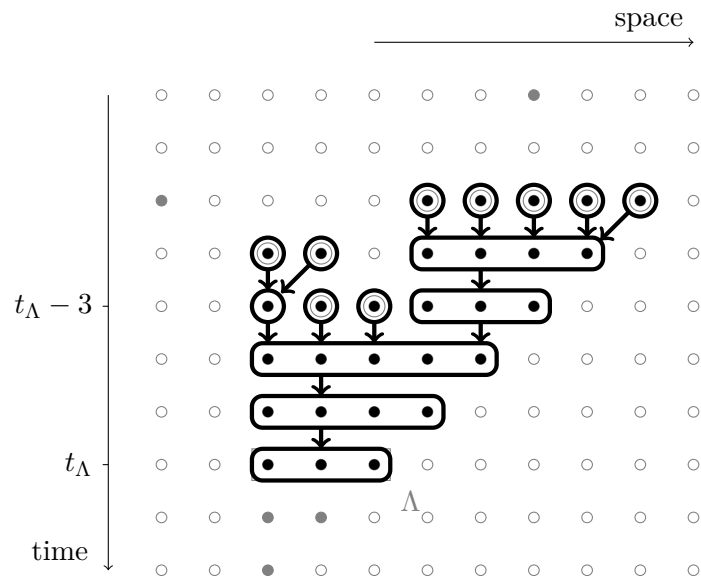

For any class $A$, let the notation $\bar{U}(A)$ refer to the set $\bigcup_{v \in A} \bar{U}(v)$. Of course $\bar{U}(\{v\})$ is empty if $v$ is an error point. For any class $A$ that is not reduced to an error point, $A$ can be written as $\{(x,t)\mid x=x_i,x_i+1,\dotsc,x_f\}$ with $x_i\leq x_f \in \ent$, $t \in \nat$ and $\bar{U}(A)$ consists exactly of all neighbors of the points in $A$. They form the interval $\bar{U}(A)=\{(x,t-1)\mid x=x_i,x_i+1,\dotsc,x_f,x_f+1\}$. As all classes $A_1,\dotsc,A_n$ at time $t$ with a nonempty $\bar{U}(A_j)$ are separated from one another by at least one point, be it an error point or a point that does not belong to the cluster, the sets $\bar{U}(A_j)$ are also disjoint intervals in $\{(x,t-1)\mid x \in \ent\}$. As subset of the cluster, each of them is partitioned into equivalence classes and two points in two different sets $\bar{U}(A_j)$, $\bar{U}(A_k)$, $j\neq k$, can never be equivalent if the noise is totally asymmetric, by an argument by contradiction very similar to the one above. Thus, for any class $A$, the set $\bar{U}(A)$ is itself a union of classes. We denote the set of these classes by $U_F(A)$. We say that every class in the set $U_F(A)$ is \textit{responsible} for the class $A$. Our discussion about the form of classes implies in particular the following fact. It can be observed in the example given in Figure~\ref{fig:classesStav}.
\begin{lemma}
The oriented graph $F$ whose vertices are the classes and whose edges reproduce the relation `is responsible for' between classes is a forest.
\end{lemma}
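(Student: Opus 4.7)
The plan is to show that $F$ contains no undirected cycle, which together with its natural orientation makes it a forest. The starting observation is that every edge of $F$ joins a class $B$ at some time $t-1$ to a class $A$ at time $t$, so I can speak of an ``upward'' direction in time. The whole statement will then follow from the claim that every class $B$ admits at most one class $A$ at the time coordinate immediately above $B$ with $B\in U_F(A)$.

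To prove this key uniqueness claim, I would assume that $B\subseteq \bar U(A_1)\cap\bar U(A_2)$ for two distinct classes $A_1,A_2$ at the same time, and derive a contradiction. Choosing any $b\in B$, by definition of $\bar U(A_i)$ there must exist $a_i\in A_i$ with $b\in\bar U(a_i)$ for $i=1,2$. Since $\bar U(a_i)$ is nonempty, neither $a_i$ is an error point, and the one-step sequences $a_1\to b$ and $a_2\to b$ qualify as admissible paths in the definition of the equivalence relation. They exhibit a common ancestor $c:=b$ at strictly lower time reached from both, hence $a_1\sim a_2$, contradicting that $A_1$ and $A_2$ are distinct classes.

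Given this uniqueness, I would conclude as follows. Along any hypothetical undirected cycle of $F$, pick a vertex whose time coordinate is minimal among the vertices of the cycle; since edges strictly change the time by one, both of its cycle-neighbors must lie one unit above, thereby providing two distinct upward edges at that vertex, in contradiction to the uniqueness just established. So $F$ is acyclic. Moreover, the inductive construction of $\bar U^\infty(\Lambda)$ guarantees that every class $B$ with time coordinate strictly less than $t_\Lambda$ is contained in $\bar U(A)$ for some class $A$ at time $t_B+1$, so every such $B$ has exactly one upward neighbor, while the classes inside $\Lambda$ (at time $t_\Lambda$) have none. Consequently $F$ splits as a disjoint union of trees rooted at the classes inside $\Lambda$. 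No real obstacle is expected here, since the uniqueness argument essentially reflects the very definition of the equivalence relation.
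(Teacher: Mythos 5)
Your proof is correct, and its skeleton is the same as the paper's: everything rests on the fact that each class has at most one outgoing edge toward the future, from which the forest structure is routine. Where you genuinely differ is in how that uniqueness is justified. The paper's proof of this lemma leans on the structural facts established just before it for the totally asymmetric Stavskaya model — classes are intervals, the sets $\bar{U}(A_j)$ attached to distinct classes at a fixed time are pairwise disjoint intervals, so each point of the cluster below $t_{\Lambda}$ lies in a unique $\bar{U}(A)$ — and then assembles the forest by backward induction in time, attaching each new class by exactly one edge. You instead derive the uniqueness directly from the definition of the equivalence relation: a point $b$ lying in $\bar{U}(a_1)\cap\bar{U}(a_2)$ furnishes a common point at lower time reachable from both $a_1$ and $a_2$ by admissible one-step paths, forcing $a_1$ and $a_2$ to be equivalent. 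This is essentially the argument the paper deploys later, in the North-East-Center chapter, to prove that every class is responsible for at most one other class and that $F$ is a forest (including your minimal-time-vertex treatment of a hypothetical cycle); it has the advantage of not invoking the interval structure or the total asymmetry of the noise, so it generalizes immediately, whereas the paper's version here is tailored to the toy model. One small point to keep in mind: your single-witness assumption $B\subseteq\bar{U}(A_1)\cap\bar{U}(A_2)$ is legitimate in this chapter because $U_F(A)$ is defined as the set of classes partitioning $\bar{U}(A)$; under the weaker general definition (some point of $B$ responsible for some point of $A_i$, with possibly different witnesses in $B$) one must first chain through the equivalence inside $B$ and lift that chain one time step, as in the paper's general lemma.
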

\begin{proof}
The set $\Lambda$ is at the first stage of the cluster construction and it is partitioned into equivalence classes. We will now prove that each of these classes is the root of a tree that is a subgraph of $F$, with edges oriented toward the root. $F$ is the disjoint union of all these trees, that is to say $F$ is a forest.

The proof is by induction backward in time, starting from $t_{\Lambda}$. The edges of $F$ always connect two classes with consecutive time coordinates and they are oriented toward the future. Then it suffices to remember that by construction every point $(x,t)$ in the cluster with a time coordinate $t<t_{\Lambda}$ belongs to some unique $\bar{U}(A)$ for some class $A$ with time coordinate $t+1$ and that, as noticed above, $\bar{U}(A)$ itself is always a union of classes $\bigcup_{B \in U_F(A)} B$. Consequently, every class $B$ with time coordinate $t<t_{\Lambda}$ belongs to one and only one set $U_F(A)$ with $A$ a class with time coordinate $t+1$. Then exactly one oriented edge of $F$ leaves $B$ and it arrives at $A$. The graph $F$ constructed on the basis of all classes that partition $\Lambda$, by successive additions of all classes with time coordinates $t_{\Lambda}-1$, $t_{\Lambda}-2$, and so on, each newly added class being connected to the pre-existent graph by exactly one edge directed toward the future, necessarily forms a forest.
\end{proof}

The classes and the forest $F$ will be used in the construction of an oriented graph $G$ on points in $\bar{U}^{\infty}(\Lambda)$ via an inductive procedure. As was guessed in the discussion about Figure~\ref{fig:contourcreuxStav}, classes are precisely what makes that induction possible. We will first construct a graph $G_0$ on points in $\Lambda$ and a set $S_0$ of classes, which will be called the \textit{stock}. Next, for $q=1,\dotsc,Q$ with $Q$ finite, we will add some edges to $G_{q-1}$ to form $G_q$ and we will transform the stock $S_{q-1}$ into a new stock $S_q$. In the end, the resulting graph $G:=G_Q$ will be identical to the path $\mathcal P$ along the contour of the cluster $\bar{U}^{\infty}(\Lambda)$ and the resulting set $S_Q$ of classes will be made of the singletons associated to the error points in the cluster. An exercise could be to draw the graphs $G_q$ and the stocks $S_q$ in the example given in Figure~\ref{fig:classesStav}, for $q=0,\dotsc,Q$, according to the instructions detailed hereafter. The result of the construction at the final step $Q$ should be identical to the path $\mathcal P$ obtained in Figure~\ref{fig:contourcreuxStav} for the same space-time configuration. A solution to this exercise is represented in Figure~\ref{fig:G1Stav}.

The initial graph $G_0$ is made of some horizontal edges of the form $(-1,0)$ between certain points in $\Lambda$. The interval set $\Lambda$ is a union of classes that are singletons containing error points and intervals between these error points. Singletons themselves are intervals made of only one point so all these classes that partition $\Lambda$ form a succession of $n$ adjacent intervals $\{(x,t_{\Lambda})\mid x=x_j,x_j+1,\dotsc,x_{j+1}-1\}$ with $n\in \nat^*$, $x_0=x_{\mathrm{min}} < x_1 < \ldots < x_n=x_{\mathrm{max}}+1$. As suggested above, we want to eventually conserve a current that is driven from a source at $(x_{\mathrm{max}},t_{\Lambda})$ to a sink at $(x_{\mathrm{min}},t_{\Lambda})$. We will use the following reformulation of the presence of a source and a sink at these two points. We will say that a \textit{virtual} oriented edge brings the current at $(x_{\mathrm{max}},t_{\Lambda})$ and another virtual edge takes the current away from $(x_{\mathrm{min}},t_{\Lambda})$ but these two virtual edges are not edges of $G_0$ nor $G$. We first draw $n-1$ horizontal edges that go from $(x_j,t_{\Lambda})$ to $(x_j-1,t_{\Lambda})$ for $j=1,\dotsc,n-1$. Their purpose is to connect the rightmost and leftmost classes in $\Lambda$, which contain $(x_{\mathrm{max}},t_{\Lambda})$ and $(x_{\mathrm{min}},t_{\Lambda})$ respectively. These $n-1$ horizontal edges constitute $G_0$. In the example of Figure~\ref{fig:classesStav}, $n=1$ as $\Lambda$ is made of only one class. No horizontal edge is drawn in that case and $G_0$ is the empty graph. We define the stock $S_0$ of classes as containing all classes of $\Lambda$. In our example, $S_0$ contains only one class because all points in $\Lambda$ belong to the same equivalence class.

Now if we take the two virtual edges into account in addition to the edges of $G_0$, for every class $A$ in $S_0$, exactly one edge arrives at the rightmost point in $A$ and exactly one edge leaves from the leftmost point in $A$. Also, the number of classes in $S_0$ is equal to the number of horizontal edges of $G_0$, plus one. Actually these properties of $G_q$ and $S_q$ will persist at all steps $q$ of the construction.

Some classes in the stock $S_0$ might be singletons made of error points. Let us mark them as \textit{unexploitable} but nonetheless keep them in the stock. We concentrate on the other classes in $S_0$, namely the \textit{exploitable} classes, during the next steps of the construction. Let us pick any one of them and name it $A$. We noticed that $A$ is an interval of the general form $\{(x,t)\mid x=x_i,x_i+1,\dotsc,x_f\}$ and that $\bar{U}(A)$ is the interval $\{(x,t-1)\mid x=x_i,x_i+1,\dotsc,x_f,x_f+1\}$. One edge of $G_0$ or one virtual edge arrives at the rightmost point $(x_f,t)$ of $A$. As suggested previously on the basis of Figure~\ref{fig:contourcreuxStav}, we can draw a diagonal edge $(1,-1)$, oriented toward the past, from the rightmost point $(x_f,t)$ of $A$ to its rightmost neighbor $(x_f+1,t-1)$ in $\bar{U}(A)$. Similarly, one edge of $G_0$ or one virtual edge leaves from the leftmost point $(x_i,t)$ of $A$ and we draw a vertical edge $(0,1)$, oriented toward the future, between $(x_i,t)$ and its leftmost neighbor $(x_i,t-1)$ in $\bar{U}(A)$. So far, the graph under construction has two newly added edges. Besides, the set $\bar{U}(A)$ has exactly one edge arriving at its rightmost point and one edge leaving from its leftmost point.

Now we showed that $\bar{U}(A)$ is a union of classes that are adjacent intervals, just like $\Lambda$. We can thus repeat in $\bar{U}(A)$ the same procedure as in $\Lambda$ and draw horizontal edges $(-1,0)$ from the leftmost point of each class to the rightmost point of the class immediately to its left. These horizontal edges and the two non-horizontal edges just described are all added to the edges of $G_0$ to form the new graph $G_1$. To form the new stock $S_1$ of classes, we take all classes in $S_0$ except for the class $A$ which is withdrawn and replaced with the classes in $U_F(A)$. In other words, the class $A$ has been exploited during step $q=1$ and it leaves the stock once and for all, while the stock is supplied with new classes, which are the elements of $U_F(A)$.

\begin{figure}[htbp]
\centering
\begin{tikzpicture}
[scale=.52,important line/.style={ultra thick},decoration={
	markings,
	mark=at position 0.6 with {\arrow{stealth}}},classes/.style={very thick,rounded corners}
	]

\begin{scope}[scale=.8,shift={(0,0)}]
	\foreach \x in {-2,...,8} {
		\foreach \y in {-2,...,7} {
		\draw[gray] (\x,\y) circle (.1cm);	
		}
	}
		
	\foreach \position in {(0,0),(1,0),(2,0),(0,1),(1,1),(2,1),(3,1),(0,2),(1,2),(2,2),(3,2),(4,2),(0,3),(1,3),(2,3),(3,3),(4,3),(5,3),(0,4),(1,4),(3,4),(4,4),(5,4),(6,4),(4,5),(5,5),(6,5),(7,5),(3,5)} {
		\fill \position circle (.1cm);
	}
	\foreach \position in {(0,-2),(0,-1),(1,-1),(-2,5),(5,7)} {
		\fill[gray] \position circle (.1cm);
	}

	        \draw[gray] (0,4) circle (.2cm);
	        \draw[gray] (1,4) circle (.2cm);
	        \draw[gray] (1,3) circle (.2cm);        
	        \draw[gray] (2,3) circle (.2cm);
	        \draw[gray] (3,5) circle (.2cm);        
	        \draw[gray] (4,5) circle (.2cm);
	        \draw[gray] (6,5) circle (.2cm);        
	        \draw[gray] (7,5) circle (.2cm);
	        \draw[gray] (5,5) circle (.2cm);
	      
	\draw[gray] (-.3,-.3) rectangle +(2.6,.6);
	\node[fill=white,text=gray,below right] at (-.3+2.6,-.3) {$\Lambda$};
	                   
	\foreach \position in {(0,4) ,(1,4) ,(1,3) ,(2,3) , (3,5) ,(4,5) ,(6,5),(7,5) ,(5,5)} {
		\draw[classes,gray] \position circle (.3cm);
	}
	\draw[classes,ultra thick] (-.3,-.3) rectangle (-.3+2.6,-.3+.6);
	\draw[classes,gray] (-.3,.7) rectangle (-.3+3.6,.7+.6);
	\draw[classes,gray] (-.3,1.7) rectangle (-.3+4.6,1.7+.6);
	\draw[classes,gray] (2.7,2.7) rectangle (2.7+2.6,2.7+.6);
	\draw[classes,gray] (2.7,3.7) rectangle (2.7+3.6,3.7+.6);
	\draw[classes,gray] (0,3) circle (.3cm);
	\draw[classes,gray,<-] (1,0+.3) -- +(0,.4);
	\draw[classes,gray,<-] (1,1+.3) -- +(0,.4);
	\draw[classes,gray,<-] (0,2+.3) -- +(0,.4);
	\draw[classes,gray,<-] (1,2+.3) -- +(0,.4);
	\draw[classes,gray,<-] (2,2+.3) -- +(0,.4);
	\draw[classes,gray,<-] (4,2+.3) -- +(0,.4);
	\draw[classes,gray,<-] (0,3+.3) -- +(0,.4);
	\draw[classes,gray,<-] (4,3+.3) -- +(0,.4);
	\draw[classes,gray,<-] (3,4+.3) -- +(0,.4);
	\draw[classes,gray,<-] (4,4+.3) -- +(0,.4);
	\draw[classes,gray,<-] (5,4+.3) -- +(0,.4);
	\draw[classes,gray,<-] (6,4+.3) -- +(0,.4);
	\draw[classes,gray,<-] (0.3-.05,3+.3-.05) -- +(.55,.55);
	\draw[classes,gray,<-] (6.3-.05,4+.3-.05) -- +(.55,.55);
	

	\draw[->] (-3,7) -- +(0,-9) node[anchor=south east] {time\ \ };
	\draw[->] (2,8) -- +(6,0) node[anchor=south east] {space};
	\node at (3,-3) {$G_0$ and $S_0$};
\end{scope}
\begin{scope}[scale=.8,shift={(15,0)}]
	\foreach \x in {-2,...,8} {
		\foreach \y in {-2,...,7} {
		\draw[gray] (\x,\y) circle (.1cm);	
		}
	}
		
	\foreach \position in {(0,0),(1,0),(2,0),(0,1),(1,1),(2,1),(3,1),(0,2),(1,2),(2,2),(3,2),(4,2),(0,3),(1,3),(2,3),(3,3),(4,3),(5,3),(0,4),(1,4),(3,4),(4,4),(5,4),(6,4),(4,5),(5,5),(6,5),(7,5),(3,5)} {
		\fill \position circle (.1cm);
	}
	\foreach \position in {(0,-2),(0,-1),(1,-1),(-2,5),(5,7)} {
		\fill[gray] \position circle (.1cm);
	}

	        \draw[gray] (0,4) circle (.2cm);
	        \draw[gray] (1,4) circle (.2cm);
	        \draw[gray] (1,3) circle (.2cm);        
	        \draw[gray] (2,3) circle (.2cm);
	        \draw[gray] (3,5) circle (.2cm);        
	        \draw[gray] (4,5) circle (.2cm);
	        \draw[gray] (6,5) circle (.2cm);        
	        \draw[gray] (7,5) circle (.2cm);
	        \draw[gray] (5,5) circle (.2cm);
	      
	\draw[gray] (-.3,-.3) rectangle +(2.6,.6);
	\node[fill=white,text=gray,below right] at (-.3+2.6,-.3) {$\Lambda$};
	                   
	\foreach \position in {(0,4) ,(1,4) ,(1,3) ,(2,3) , (3,5) ,(4,5) ,(6,5),(7,5) ,(5,5)} {
		\draw[classes,gray] \position circle (.3cm);
	}
	\draw[classes,gray] (-.3,-.3) rectangle (-.3+2.6,-.3+.6);
	\draw[classes,ultra thick] (-.3,.7) rectangle (-.3+3.6,.7+.6);
	\draw[classes,gray] (-.3,1.7) rectangle (-.3+4.6,1.7+.6);
	\draw[classes,gray] (2.7,2.7) rectangle (2.7+2.6,2.7+.6);
	\draw[classes,gray] (2.7,3.7) rectangle (2.7+3.6,3.7+.6);
	\draw[classes,gray] (0,3) circle (.3cm);
	\draw[classes,gray,<-] (1,0+.3) -- +(0,.4);
	\draw[classes,gray,<-] (1,1+.3) -- +(0,.4);
	\draw[classes,gray,<-] (0,2+.3) -- +(0,.4);
	\draw[classes,gray,<-] (1,2+.3) -- +(0,.4);
	\draw[classes,gray,<-] (2,2+.3) -- +(0,.4);
	\draw[classes,gray,<-] (4,2+.3) -- +(0,.4);
	\draw[classes,gray,<-] (0,3+.3) -- +(0,.4);
	\draw[classes,gray,<-] (4,3+.3) -- +(0,.4);
	\draw[classes,gray,<-] (3,4+.3) -- +(0,.4);
	\draw[classes,gray,<-] (4,4+.3) -- +(0,.4);
	\draw[classes,gray,<-] (5,4+.3) -- +(0,.4);
	\draw[classes,gray,<-] (6,4+.3) -- +(0,.4);
	\draw[classes,gray,<-] (0.3-.05,3+.3-.05) -- +(.55,.55);
	\draw[classes,gray,<-] (6.3-.05,4+.3-.05) -- +(.55,.55);
	
\draw[postaction={decorate}] (2,0) -- (3,1);
\draw[postaction={decorate}] (0,1) -- (0,0);

	\draw[->] (-3,7) -- +(0,-9) node[anchor=south east] {time\ \ };
	\draw[->] (2,8) -- +(6,0) node[anchor=south east] {space};
	\node at (3,-3) {$G_1$ and $S_1$};
\end{scope}
\begin{scope}[scale=.8,shift={(0,-14)}]
	\foreach \x in {-2,...,8} {
		\foreach \y in {-2,...,7} {
		\draw[gray] (\x,\y) circle (.1cm);	
		}
	}
		
	\foreach \position in {(0,0),(1,0),(2,0),(0,1),(1,1),(2,1),(3,1),(0,2),(1,2),(2,2),(3,2),(4,2),(0,3),(1,3),(2,3),(3,3),(4,3),(5,3),(0,4),(1,4),(3,4),(4,4),(5,4),(6,4),(4,5),(5,5),(6,5),(7,5),(3,5)} {
		\fill \position circle (.1cm);
	}
	\foreach \position in {(0,-2),(0,-1),(1,-1),(-2,5),(5,7)} {
		\fill[gray] \position circle (.1cm);
	}

	        \draw[gray] (0,4) circle (.2cm);
	        \draw[gray] (1,4) circle (.2cm);
	        \draw[gray] (1,3) circle (.2cm);        
	        \draw[gray] (2,3) circle (.2cm);
	        \draw[gray] (3,5) circle (.2cm);        
	        \draw[gray] (4,5) circle (.2cm);
	        \draw[gray] (6,5) circle (.2cm);        
	        \draw[gray] (7,5) circle (.2cm);
	        \draw[gray] (5,5) circle (.2cm);
	      
	\draw[gray] (-.3,-.3) rectangle +(2.6,.6);
	\node[fill=white,text=gray,below right] at (-.3+2.6,-.3) {$\Lambda$};
	                   
	\foreach \position in {(0,4) ,(1,4) ,(1,3) ,(2,3) , (3,5) ,(4,5) ,(6,5),(7,5) ,(5,5)} {
		\draw[classes,gray] \position circle (.3cm);
	}
	\draw[classes,gray] (-.3,-.3) rectangle (-.3+2.6,-.3+.6);
	\draw[classes,gray] (-.3,.7) rectangle (-.3+3.6,.7+.6);
	\draw[classes,ultra thick] (-.3,1.7) rectangle (-.3+4.6,1.7+.6);
	\draw[classes,gray] (2.7,2.7) rectangle (2.7+2.6,2.7+.6);
	\draw[classes,gray] (2.7,3.7) rectangle (2.7+3.6,3.7+.6);
	\draw[classes,gray] (0,3) circle (.3cm);
	\draw[classes,gray,<-] (1,0+.3) -- +(0,.4);
	\draw[classes,gray,<-] (1,1+.3) -- +(0,.4);
	\draw[classes,gray,<-] (0,2+.3) -- +(0,.4);
	\draw[classes,gray,<-] (1,2+.3) -- +(0,.4);
	\draw[classes,gray,<-] (2,2+.3) -- +(0,.4);
	\draw[classes,gray,<-] (4,2+.3) -- +(0,.4);
	\draw[classes,gray,<-] (0,3+.3) -- +(0,.4);
	\draw[classes,gray,<-] (4,3+.3) -- +(0,.4);
	\draw[classes,gray,<-] (3,4+.3) -- +(0,.4);
	\draw[classes,gray,<-] (4,4+.3) -- +(0,.4);
	\draw[classes,gray,<-] (5,4+.3) -- +(0,.4);
	\draw[classes,gray,<-] (6,4+.3) -- +(0,.4);
	\draw[classes,gray,<-] (0.3-.05,3+.3-.05) -- +(.55,.55);
	\draw[classes,gray,<-] (6.3-.05,4+.3-.05) -- +(.55,.55);
	
\draw[postaction={decorate}] (2,0) -- (3,1);
\draw[postaction={decorate}] (3,1) -- (4,2);
\draw[postaction={decorate}] (0,2) -- (0,1);
\draw[postaction={decorate}] (0,1) -- (0,0);

	\draw[->] (-3,7) -- +(0,-9) node[anchor=south east] {time\ \ };
	\draw[->] (2,8) -- +(6,0) node[anchor=south east] {space};
	\node at (3,-3) {$G_2$ and $S_2$};
\end{scope}
\begin{scope}[scale=.8,shift={(15,-14)}]
	\foreach \x in {-2,...,8} {
		\foreach \y in {-2,...,7} {
		\draw[gray] (\x,\y) circle (.1cm);	
		}
	}
		
	\foreach \position in {(0,0),(1,0),(2,0),(0,1),(1,1),(2,1),(3,1),(0,2),(1,2),(2,2),(3,2),(4,2),(0,3),(1,3),(2,3),(3,3),(4,3),(5,3),(0,4),(1,4),(3,4),(4,4),(5,4),(6,4),(4,5),(5,5),(6,5),(7,5),(3,5)} {
		\fill \position circle (.1cm);
	}
	\foreach \position in {(0,-2),(0,-1),(1,-1),(-2,5),(5,7)} {
		\fill[gray] \position circle (.1cm);
	}

	        \draw[gray] (0,4) circle (.2cm);
	        \draw[gray] (1,4) circle (.2cm);
	        \draw[gray] (1,3) circle (.2cm);        
	        \draw[gray] (2,3) circle (.2cm);
	        \draw[gray] (3,5) circle (.2cm);        
	        \draw[gray] (4,5) circle (.2cm);
	        \draw[gray] (6,5) circle (.2cm);        
	        \draw[gray] (7,5) circle (.2cm);
	        \draw[gray] (5,5) circle (.2cm);
	      
	\draw[gray] (-.3,-.3) rectangle +(2.6,.6);
	\node[fill=white,text=gray,below right] at (-.3+2.6,-.3) {$\Lambda$};
	                   
	\foreach \position in {(0,4) ,(1,4)  , (3,5) ,(4,5) ,(6,5),(7,5) ,(5,5)} {
		\draw[classes,gray] \position circle (.3cm);
	}
	\foreach \position in {(1,3) ,(2,3)} {
		\draw[classes,ultra thick] \position circle (.3cm);
	}
	
	\draw[classes,gray] (-.3,-.3) rectangle (-.3+2.6,-.3+.6);
	\draw[classes,gray] (-.3,.7) rectangle (-.3+3.6,.7+.6);
	\draw[classes,gray] (-.3,1.7) rectangle (-.3+4.6,1.7+.6);
	\draw[classes,ultra thick] (2.7,2.7) rectangle (2.7+2.6,2.7+.6);
	\draw[classes,gray] (2.7,3.7) rectangle (2.7+3.6,3.7+.6);
	\draw[classes,ultra thick] (0,3) circle (.3cm);
	\draw[classes,gray,<-] (1,0+.3) -- +(0,.4);
	\draw[classes,gray,<-] (1,1+.3) -- +(0,.4);
	\draw[classes,gray,<-] (0,2+.3) -- +(0,.4);
	\draw[classes,gray,<-] (1,2+.3) -- +(0,.4);
	\draw[classes,gray,<-] (2,2+.3) -- +(0,.4);
	\draw[classes,gray,<-] (4,2+.3) -- +(0,.4);
	\draw[classes,gray,<-] (0,3+.3) -- +(0,.4);
	\draw[classes,gray,<-] (4,3+.3) -- +(0,.4);
	\draw[classes,gray,<-] (3,4+.3) -- +(0,.4);
	\draw[classes,gray,<-] (4,4+.3) -- +(0,.4);
	\draw[classes,gray,<-] (5,4+.3) -- +(0,.4);
	\draw[classes,gray,<-] (6,4+.3) -- +(0,.4);
	\draw[classes,gray,<-] (0.3-.05,3+.3-.05) -- +(.55,.55);
	\draw[classes,gray,<-] (6.3-.05,4+.3-.05) -- +(.55,.55);
	
\draw[postaction={decorate}] (2,0) -- (3,1);
\draw[postaction={decorate}] (3,1) -- (4,2);
\draw[postaction={decorate}] (4,2) -- (5,3);
\draw[postaction={decorate}] (3,3) -- (2,3);
\draw[postaction={decorate}] (2,3) -- (1,3);
\draw[postaction={decorate}] (1,3) -- (0,3);
\draw[postaction={decorate}] (0,3) -- (0,2);
\draw[postaction={decorate}] (0,2) -- (0,1);
\draw[postaction={decorate}] (0,1) -- (0,0);

	\draw[->] (-3,7) -- +(0,-9) node[anchor=south east] {time\ \ };
	\draw[->] (2,8) -- +(6,0) node[anchor=south east] {space};
	\node at (3,-3) {$G_3$ and $S_3$};
\end{scope}
\end{tikzpicture}
\caption{The graphs $G_q$ and stocks $S_q$ of classes for $q=0,\dotsc,Q$ in the same example as in Figures~\ref{fig:contourcreuxStav} and \ref{fig:classesStav}. The edges of $G_q$ are represented by thin black arrows. The classes in $S_q$ are delimited by thick black rounded lines. The underlying forest $F$ drawn in Figure~\ref{fig:classesStav} is reproduced here in gray. From step $q=4$ onwards, there are two possible orders for the construction because $S_3$ contains two classes that are not reduced to error points. Only one construction is given here. Both orders result in the same graph $G_6$ and set $S_6$ -- see Remark~\ref{rmk:order}.}
\label{fig:G1Stav}
\end{figure}
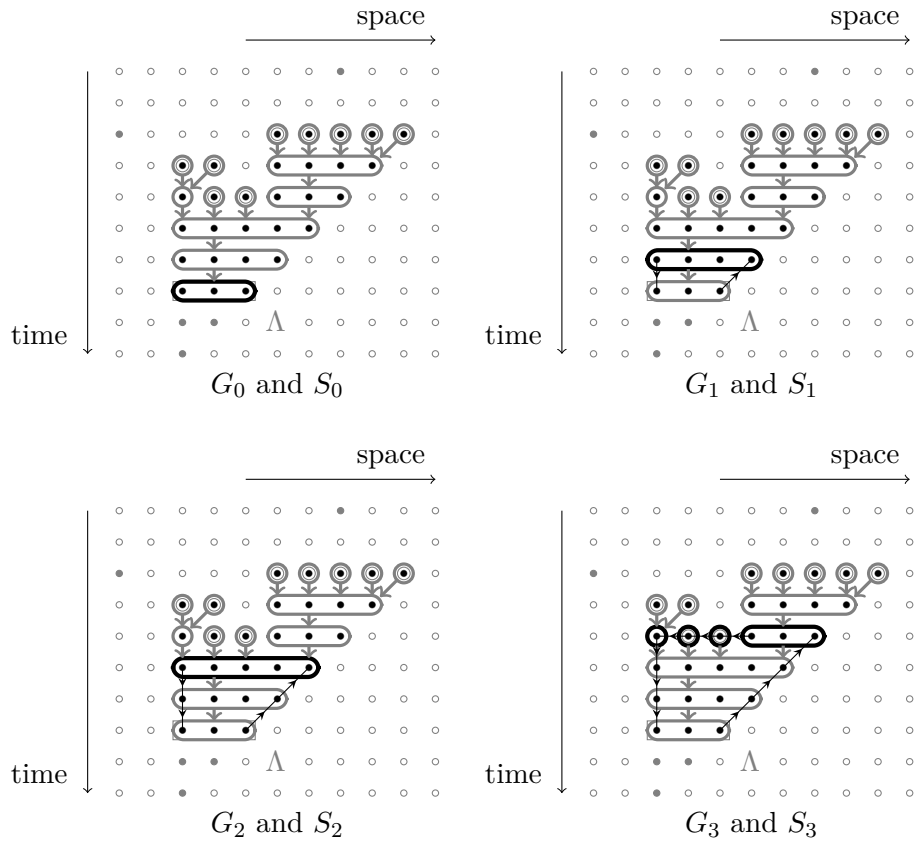

\begin{figure}[htbp]
\centering
\begin{tikzpicture}
[scale=.52,important line/.style={ultra thick},decoration={
	markings,
	mark=at position 0.6 with {\arrow{stealth}}},classes/.style={very thick,rounded corners}
	]
\begin{scope}[scale=.8,shift={(0,-28)}]
	\foreach \x in {-2,...,8} {
		\foreach \y in {-2,...,7} {
		\draw[gray] (\x,\y) circle (.1cm);	
		}
	}
		
	\foreach \position in {(0,0),(1,0),(2,0),(0,1),(1,1),(2,1),(3,1),(0,2),(1,2),(2,2),(3,2),(4,2),(0,3),(1,3),(2,3),(3,3),(4,3),(5,3),(0,4),(1,4),(3,4),(4,4),(5,4),(6,4),(4,5),(5,5),(6,5),(7,5),(3,5)} {
		\fill \position circle (.1cm);
	}
	\foreach \position in {(0,-2),(0,-1),(1,-1),(-2,5),(5,7)} {
		\fill[gray] \position circle (.1cm);
	}

	        \draw[gray] (0,4) circle (.2cm);
	        \draw[gray] (1,4) circle (.2cm);
	        \draw[gray] (1,3) circle (.2cm);        
	        \draw[gray] (2,3) circle (.2cm);
	        \draw[gray] (3,5) circle (.2cm);        
	        \draw[gray] (4,5) circle (.2cm);
	        \draw[gray] (6,5) circle (.2cm);        
	        \draw[gray] (7,5) circle (.2cm);
	        \draw[gray] (5,5) circle (.2cm);
	      
	\draw[gray] (-.3,-.3) rectangle +(2.6,.6);
	\node[fill=white,text=gray,below right] at (-.3+2.6,-.3) {$\Lambda$};
	                   
	\foreach \position in { (3,5) ,(4,5) ,(6,5),(7,5) ,(5,5)} {
		\draw[classes,gray] \position circle (.3cm);
	}
	\foreach \position in {(0,4) ,(1,4) ,(1,3) ,(2,3) } {
		\draw[classes,ultra thick] \position circle (.3cm);
	}
	\draw[classes,gray] (-.3,-.3) rectangle (-.3+2.6,-.3+.6);
	\draw[classes,gray] (-.3,.7) rectangle (-.3+3.6,.7+.6);
	\draw[classes,gray] (-.3,1.7) rectangle (-.3+4.6,1.7+.6);
	\draw[classes,ultra thick] (2.7,2.7) rectangle (2.7+2.6,2.7+.6);
	\draw[classes,gray] (2.7,3.7) rectangle (2.7+3.6,3.7+.6);
	\draw[classes,gray] (0,3) circle (.3cm);
	\draw[classes,gray,<-] (1,0+.3) -- +(0,.4);
	\draw[classes,gray,<-] (1,1+.3) -- +(0,.4);
	\draw[classes,gray,<-] (0,2+.3) -- +(0,.4);
	\draw[classes,gray,<-] (1,2+.3) -- +(0,.4);
	\draw[classes,gray,<-] (2,2+.3) -- +(0,.4);
	\draw[classes,gray,<-] (4,2+.3) -- +(0,.4);
	\draw[classes,gray,<-] (0,3+.3) -- +(0,.4);
	\draw[classes,gray,<-] (4,3+.3) -- +(0,.4);
	\draw[classes,gray,<-] (3,4+.3) -- +(0,.4);
	\draw[classes,gray,<-] (4,4+.3) -- +(0,.4);
	\draw[classes,gray,<-] (5,4+.3) -- +(0,.4);
	\draw[classes,gray,<-] (6,4+.3) -- +(0,.4);
	\draw[classes,gray,<-] (0.3-.05,3+.3-.05) -- +(.55,.55);
	\draw[classes,gray,<-] (6.3-.05,4+.3-.05) -- +(.55,.55);
	
\draw[postaction={decorate}] (2,0) -- (3,1);
\draw[postaction={decorate}] (3,1) -- (4,2);
\draw[postaction={decorate}] (4,2) -- (5,3);
\draw[postaction={decorate}] (3,3) -- (2,3);
\draw[postaction={decorate}] (2,3) -- (1,3);
\draw[postaction={decorate}] (1,3) -- (0,3);
\draw[postaction={decorate}] (0,3) -- (1,4);
\draw[postaction={decorate}] (1,4) -- (0,4);
\draw[postaction={decorate}] (0,4) -- (0,3);
\draw[postaction={decorate}] (0,3) -- (0,2);
\draw[postaction={decorate}] (0,2) -- (0,1);
\draw[postaction={decorate}] (0,1) -- (0,0);

	\draw[->] (-3,7) -- +(0,-9) node[anchor=south east] {time\ \ };
	\draw[->] (2,8) -- +(6,0) node[anchor=south east] {space};
	\node at (3,-3) {$G_4$ and $S_4$};
\end{scope}
\begin{scope}[scale=.8,shift={(15,-28)}]
	\foreach \x in {-2,...,8} {
		\foreach \y in {-2,...,7} {
		\draw[gray] (\x,\y) circle (.1cm);	
		}
	}
		
	\foreach \position in {(0,0),(1,0),(2,0),(0,1),(1,1),(2,1),(3,1),(0,2),(1,2),(2,2),(3,2),(4,2),(0,3),(1,3),(2,3),(3,3),(4,3),(5,3),(0,4),(1,4),(3,4),(4,4),(5,4),(6,4),(4,5),(5,5),(6,5),(7,5),(3,5)} {
		\fill \position circle (.1cm);
	}
	\foreach \position in {(0,-2),(0,-1),(1,-1),(-2,5),(5,7)} {
		\fill[gray] \position circle (.1cm);
	}

	        \draw[gray] (0,4) circle (.2cm);
	        \draw[gray] (1,4) circle (.2cm);
	        \draw[gray] (1,3) circle (.2cm);        
	        \draw[gray] (2,3) circle (.2cm);
	        \draw[gray] (3,5) circle (.2cm);        
	        \draw[gray] (4,5) circle (.2cm);
	        \draw[gray] (6,5) circle (.2cm);        
	        \draw[gray] (7,5) circle (.2cm);
	        \draw[gray] (5,5) circle (.2cm);
	      
	\draw[gray] (-.3,-.3) rectangle +(2.6,.6);
	\node[fill=white,text=gray,below right] at (-.3+2.6,-.3) {$\Lambda$};
	                   
	\foreach \position in {(3,5) ,(4,5) ,(6,5),(7,5) ,(5,5)} {
		\draw[classes,gray] \position circle (.3cm);
	}
		\foreach \position in {(0,4) ,(1,4) ,(1,3) ,(2,3)} {
		\draw[classes,ultra thick] \position circle (.3cm);
	}

	\draw[classes,gray] (-.3,-.3) rectangle (-.3+2.6,-.3+.6);
	\draw[classes,gray] (-.3,.7) rectangle (-.3+3.6,.7+.6);
	\draw[classes,gray] (-.3,1.7) rectangle (-.3+4.6,1.7+.6);
	\draw[classes,gray] (2.7,2.7) rectangle (2.7+2.6,2.7+.6);
	\draw[classes,ultra thick] (2.7,3.7) rectangle (2.7+3.6,3.7+.6);
	\draw[classes,gray] (0,3) circle (.3cm);
	\draw[classes,gray,<-] (1,0+.3) -- +(0,.4);
	\draw[classes,gray,<-] (1,1+.3) -- +(0,.4);
	\draw[classes,gray,<-] (0,2+.3) -- +(0,.4);
	\draw[classes,gray,<-] (1,2+.3) -- +(0,.4);
	\draw[classes,gray,<-] (2,2+.3) -- +(0,.4);
	\draw[classes,gray,<-] (4,2+.3) -- +(0,.4);
	\draw[classes,gray,<-] (0,3+.3) -- +(0,.4);
	\draw[classes,gray,<-] (4,3+.3) -- +(0,.4);
	\draw[classes,gray,<-] (3,4+.3) -- +(0,.4);
	\draw[classes,gray,<-] (4,4+.3) -- +(0,.4);
	\draw[classes,gray,<-] (5,4+.3) -- +(0,.4);
	\draw[classes,gray,<-] (6,4+.3) -- +(0,.4);
	\draw[classes,gray,<-] (0.3-.05,3+.3-.05) -- +(.55,.55);
	\draw[classes,gray,<-] (6.3-.05,4+.3-.05) -- +(.55,.55);
	
\draw[postaction={decorate}] (2,0) -- (3,1);
\draw[postaction={decorate}] (3,1) -- (4,2);
\draw[postaction={decorate}] (4,2) -- (5,3);
\draw[postaction={decorate}] (5,3) -- (6,4);
\draw[postaction={decorate}] (3,4) -- (3,3);
\draw[postaction={decorate}] (3,3) -- (2,3);
\draw[postaction={decorate}] (2,3) -- (1,3);
\draw[postaction={decorate}] (1,3) -- (0,3);
\draw[postaction={decorate}] (0,3) -- (1,4);
\draw[postaction={decorate}] (1,4) -- (0,4);
\draw[postaction={decorate}] (0,4) -- (0,3);
\draw[postaction={decorate}] (0,3) -- (0,2);
\draw[postaction={decorate}] (0,2) -- (0,1);
\draw[postaction={decorate}] (0,1) -- (0,0);

	\draw[->] (-3,7) -- +(0,-9) node[anchor=south east] {time\ \ };
	\draw[->] (2,8) -- +(6,0) node[anchor=south east] {space};
	\node at (3,-3) {$G_5$ and $S_5$};
\end{scope}
\begin{scope}[scale=.8,shift={(0,-42)}]
	\foreach \x in {-2,...,8} {
		\foreach \y in {-2,...,7} {
		\draw[gray] (\x,\y) circle (.1cm);	
		}
	}
		
	\foreach \position in {(0,0),(1,0),(2,0),(0,1),(1,1),(2,1),(3,1),(0,2),(1,2),(2,2),(3,2),(4,2),(0,3),(1,3),(2,3),(3,3),(4,3),(5,3),(0,4),(1,4),(3,4),(4,4),(5,4),(6,4),(4,5),(5,5),(6,5),(7,5),(3,5)} {
		\fill \position circle (.1cm);
	}
	\foreach \position in {(0,-2),(0,-1),(1,-1),(-2,5),(5,7)} {
		\fill[gray] \position circle (.1cm);
	}

	        \draw[gray] (0,4) circle (.2cm);
	        \draw[gray] (1,4) circle (.2cm);
	        \draw[gray] (1,3) circle (.2cm);        
	        \draw[gray] (2,3) circle (.2cm);
	        \draw[gray] (3,5) circle (.2cm);        
	        \draw[gray] (4,5) circle (.2cm);
	        \draw[gray] (6,5) circle (.2cm);        
	        \draw[gray] (7,5) circle (.2cm);
	        \draw[gray] (5,5) circle (.2cm);
	      
	\draw[gray] (-.3,-.3) rectangle +(2.6,.6);
	\node[fill=white,text=gray,below right] at (-.3+2.6,-.3) {$\Lambda$};
	                   
	\foreach \position in {(0,4) ,(1,4) ,(1,3) ,(2,3) , (3,5) ,(4,5) ,(6,5),(7,5) ,(5,5)} {
		\draw[classes,ultra thick] \position circle (.3cm);
	}
	\draw[classes,gray] (-.3,-.3) rectangle (-.3+2.6,-.3+.6);
	\draw[classes,gray] (-.3,.7) rectangle (-.3+3.6,.7+.6);
	\draw[classes,gray] (-.3,1.7) rectangle (-.3+4.6,1.7+.6);
	\draw[classes,gray] (2.7,2.7) rectangle (2.7+2.6,2.7+.6);
	\draw[classes,gray] (2.7,3.7) rectangle (2.7+3.6,3.7+.6);
	\draw[classes,gray] (0,3) circle (.3cm);
	\draw[classes,gray,<-] (1,0+.3) -- +(0,.4);
	\draw[classes,gray,<-] (1,1+.3) -- +(0,.4);
	\draw[classes,gray,<-] (0,2+.3) -- +(0,.4);
	\draw[classes,gray,<-] (1,2+.3) -- +(0,.4);
	\draw[classes,gray,<-] (2,2+.3) -- +(0,.4);
	\draw[classes,gray,<-] (4,2+.3) -- +(0,.4);
	\draw[classes,gray,<-] (0,3+.3) -- +(0,.4);
	\draw[classes,gray,<-] (4,3+.3) -- +(0,.4);
	\draw[classes,gray,<-] (3,4+.3) -- +(0,.4);
	\draw[classes,gray,<-] (4,4+.3) -- +(0,.4);
	\draw[classes,gray,<-] (5,4+.3) -- +(0,.4);
	\draw[classes,gray,<-] (6,4+.3) -- +(0,.4);
	\draw[classes,gray,<-] (0.3-.05,3+.3-.05) -- +(.55,.55);
	\draw[classes,gray,<-] (6.3-.05,4+.3-.05) -- +(.55,.55);
	
\draw[postaction={decorate}] (2,0) -- (3,1);
\draw[postaction={decorate}] (3,1) -- (4,2);
\draw[postaction={decorate}] (4,2) -- (5,3);
\draw[postaction={decorate}] (5,3) -- (6,4);
\draw[postaction={decorate}] (6,4) -- (7,5);
\draw[postaction={decorate}] (7,5) -- (6,5);
\draw[postaction={decorate}] (6,5) -- (5,5);
\draw[postaction={decorate}] (5,5) -- (4,5);
\draw[postaction={decorate}] (4,5) -- (3,5);
\draw[postaction={decorate}] (3,5) -- (3,4);
\draw[postaction={decorate}] (3,4) -- (3,3);
\draw[postaction={decorate}] (3,3) -- (2,3);
\draw[postaction={decorate}] (2,3) -- (1,3);
\draw[postaction={decorate}] (1,3) -- (0,3);
\draw[postaction={decorate}] (0,3) -- (1,4);
\draw[postaction={decorate}] (1,4) -- (0,4);
\draw[postaction={decorate}] (0,4) -- (0,3);
\draw[postaction={decorate}] (0,3) -- (0,2);
\draw[postaction={decorate}] (0,2) -- (0,1);
\draw[postaction={decorate}] (0,1) -- (0,0);

	\draw[->] (-3,7) -- +(0,-9) node[anchor=south east] {time\ \ };
	\draw[->] (2,8) -- +(6,0) node[anchor=south east] {space};
	\node at (3,-3) {$G_6$ and $S_6$};
\end{scope}
\end{tikzpicture}
\end{figure}

As for $G_0$ and $S_0$, one can easily check by inspection of the construction rules that $G_1$ and $S_1$ possess the following property. Taking the two virtual edges into account, together with the edges of $G_1$, for every class $C$ in $S_1$, exactly one edge enters $C$; it arrives at the rightmost point in $C$. Likewise, one and only one edge leaves $C$; it starts from the leftmost point in $C$. In addition, the relation between the number of classes in $S_0$ and the number of horizontal edges of $G_0$ still holds for $S_1$ and $G_1$. Indeed, $\norm{S_1}-\norm{S_0}=\norm{U_F(A)}-1$, by definition of $S_1$, and this is exactly the number of new horizontal edges drawn at step $q=1$.

The instruction given for step $q=1$ on the basis of the exploitable class $A$ in $S_0$ are very general and we can repeat them for other classes from step $q=2$ onwards. More precisely, at step $q$, we choose any exploitable class $A$ in $S_{q-1}$ and add to $G_{q-1}$ some edges connecting points in $A\cup \bigcup_{B \in U_F(A)} B$ to form $G_q$. First, a diagonal edge and a vertical edge are attached to the two extremal points in $A$, as explained above. Second, horizontal edges are drawn between a few adjacent points in the interval $\bar{U}(A)$, again as explained above. Then, to obtain $S_q$, we remove the exploited class $A$ from $S_{q-1}$ and replace it with the classes in $U_F(A)$.

Due to the forest structure of $F$, the inductive construction will avoid loops. More accurately, if a class $A$ belongs to $S_{q-1}$ for some $q$ but not to $S_{q}$, then $A$ does not belong to any $S_{q'}$ such that $q' \geq q$. Now the cluster and the number of classes are finite, therefore the construction will stop at some step $q=Q$ with $Q$ finite, defined as the first step $q$ such that all classes in $S_q$ are unexploitable, i.e.\ are singletons containing error points.

\begin{rmk}\label{rmk:order}
One can adopt whatever preference rule to guide the choice of a class $A$ in $S_{q-1}$ at each step $q$ among the exploitable classes. In Appendix A of the article of \citet{LeMaSp90} (which is about the North-East-Center model), $A$ is chosen among the classes with a maximal time coordinate, so that the inductive construction progresses by anti-chronological order. One could also, for instance, select one tree of $F$ and, at the first steps, deal only with classes of that tree, as long as there are exploitable classes in $S_{q-1}$ that are vertices of that tree, next select a second tree, and so on. Because $F$ is a forest, for any order choice one will end up with the same graph $G_Q$ and set $S_Q$ of classes when the construction stops.
\end{rmk}

We see by inspection of the two construction procedures that the final graph $G=G_Q$ coincides with the path $\mathcal P$ defined in our first formulation of the proof in Section~\ref{sec:proofcontour} and that the error points that form the singletons in $S_Q$ are exactly all the error points in the cluster. Of course $G$ presents the same properties as $\mathcal P$, which are crucial for the final estimates that prove the upper bound of Theorem~\ref{thm:stavskin lambda}. So it satisfies a current conservation principle, it is a connected graph containing the vertex $(x_{\mathrm{max}},t_{\Lambda})$, the number of error points in the singletons in $S_Q$ is proportional to the number of horizontal edges. Actually, these properties can be demonstrated independently of the comparison with $\mathcal P$, using proofs by induction on the index $q$ of the step in the construction algorithm. We do not give the proofs here because it will be done in a more general setting in Chapters~\ref{chap:NEC} and \ref{chap:eroder2D}. 
\end{proof}

\chapter{The North-East-Center model}\label{chap:NEC}
\section{Phase transition} \label{sec:model}

The North-East-Center majority CA was defined in Section~\ref{defNEC}. Figure~\ref{fig:U(v)} represents the space-time neighborhood $U(v)= \{ (x,t-1), (x+(1,0),t-1), (x+(0,1),t-1) \}$ of a point $v=(x,t)$ in $V$ such that $t>0$. It consists of the nearest neighbors of the site $x$ to the north and to the east and itself, at the preceding time.

\def\drawHeight{2.5}
\begin{figure}
\centering
\begin{tikzpicture}[scale=1]
\begin{scope}
		\myGlobalTransformation{0}{0};
		\foreach \x in {1,...,4} {
			\foreach \y in {1,...,4} {
				\draw (\x,\y) circle (.1cm); 
			}
		}
		\draw[gray] (1.5,1.5) -- ++(2,0) -- ++(0,1) -- ++(-1,0) -- ++(0,1) -- ++(-1,0) -- cycle;
		\node[label=right:$v$] (future) at (2,2) {};
\end{scope}
\begin{scope}
		\myGlobalTransformation{0}{\drawHeight};
		\foreach \x in {2,...,4} {
			\foreach \y in {1,...,4} {
				\draw (\x,\y) circle (.1cm); 
			}
		}
		\foreach \x in {1} {
			\foreach \y in {1,...,3} {
				\draw (\x,\y) circle (.1cm); 
			}
		}
		\draw (1.5,1.5) -- ++(2,0) -- ++(0,1) -- ++(-1,0) -- ++(0,1) -- ++(-1,0) node[above] {$U(v)$} -- cycle;
		\node (past) at (2,2) {};
\end{scope}
\draw[->] (-1,\drawHeight+2) node (timetip) {} -- +(0,-\drawHeight-2.5) node[anchor=south east] {time\ \ };
\draw (timetip |- future) -- +(-2pt,0) node[anchor=east] {$t$};
\draw (timetip |- past) -- +(-2pt,0) node[anchor=east] {$t-1$};
\draw[gray] (past) -- (future);
\end{tikzpicture}
\caption{The space-time neighborhood $U(v)$ of a point $v$ in $V$. The time axis is in the vertical direction and points downwards, while the two-dimensional space is in the horizontal directions. Cells are represented by small circles.}
\label{fig:U(v)}
\end{figure}
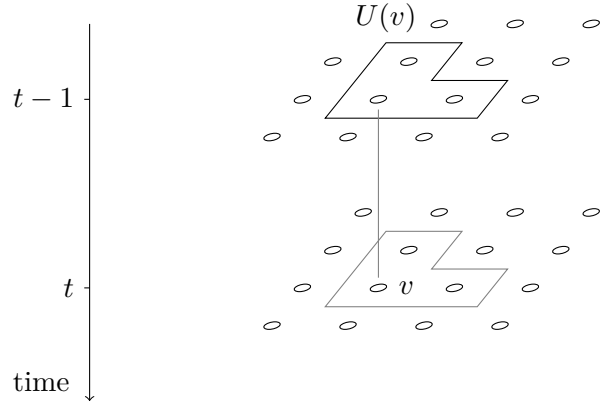

The North-East-Center PCA was discussed in Section~\ref{sec:NECmodelPhase}. It presents a phase transition which can be explained by the erosion property and the $0-1$ symmetry of the CA. The stability theorem of Toom implies that the invariant measures $\muinv{0}$ and $\muinv{1}$ differ in the low-noise regime, i.e.\ when $\epsilon \leq \epsilon_c$, while they coincide in the high-noise regime, when $\delta > \delta_c$ in the High-noise assumption. Contrary to the phase transition of the Stavskaya model, this phase transition is not restricted to the regime of totally asymmetric noise. It occurs for any value of the bias of the noise in favor of errors producing state $0$ or state $1$.

In this chapter, we extend to the North-East-Center model the upper bound given in Theorem~\ref{thm:stavskin lambda} for the Stavskaya model. Our proof uses the techniques and constructions introduced in the proof of the stability theorem by \citet[Section 2]{To80}. We will present these techniques in the general case of a monotonic binary CA with the erosion property in Chapter~\ref{chap:eroderdD}. Here we apply them to the particular case of the North-East-Center model, which is the Example 1 in Section 3 of \citep{To80}. We give a presentation inspired by the review in Appendix A of the paper by \citet{LeMaSp90} and we extend the method in order to prove our upper bound. We already prepared the ground in Section~\ref{sec:graphreformulationproof}.

\section[Probability of a block of cells\\aligned in the minority state]{Probability of a block of cells\\aligned in the minority state%
\sectionmark{Probability of a block of cells aligned in the minority state}}\label{sec:result}
\sectionmark{Probability of a block of cells aligned in the minority state}

We will consider finite subsets of $\{ (x,t_{\Lambda}) \mid x \in \plan \}$ for some time $t_{\Lambda}$ in $\mathbb{N}^*$. The diameter of such a subset $\Lambda$ is defined as
\begin{equation}\label{defdiam}
\diam(\Lambda)=\max_{v,w \in \Lambda}(\lvert x_1(v)-x_1(w) \rvert+\lvert x_2(v)-x_2(w) \rvert  ),
\end{equation}
where $x(v)=(x_1(v),x_2(v)) \in \plan$ denotes the space coordinates of a point $v$ in the space-time lattice $V$.
We will restrict ourselves to subsets $\Lambda$ that are \textit{connected} in the following sense. We define the graph $\tilde{g}_{\Lambda}$. Its set of vertices is $\Lambda$. Two different vertices $a$ and $b$ are connected with an edge of $\tilde{g}_{\Lambda}$ if there exists $c$ in $V$ such that $a$ and $b$ belong to $U(c)$, that is to say if $x(a)-x(b)$ belongs to $\{ \pm(1,0),\pm(0,1),\pm(1,-1)\}$. We say that $\Lambda$ is connected if this graph $\tilde{g}_{\Lambda}$ is connected (Figure~\ref{fig:Lambda}).
\begin{figure}
\centering
\begin{tikzpicture}[scale=.9]
\foreach \x in {-5,...,6} {
	\foreach \y in {-1,...,7} {
		\draw (\x,\y) circle (.1cm);
	}
}
\begin{scope}[shift={(-1,0)}]
\draw (-2.5,1.5) -- ++(1,0) -- ++ (0,-1) -- ++(5.4,0) -- ++(1.1,-1.1) -- ++(.6,.6) node[anchor=south,outer sep=.2cm] {$\Lambda$} -- ++(-1.5,1.5) -- ++(-3.6,0) -- ++(0,1) -- ++(4,0) -- ++(0,3) -- ++(-7,0) -- cycle;
\end{scope}
\draw[->] (-6,-2) -- +(2,0) node[below] {$x_1$};
\draw[->] (-6,-2) -- +(0,2) node[left] {$x_2$};
\end{tikzpicture}
\caption{A finite and connected subset $\Lambda$ of $\{ (x,t_{\Lambda}) \mid x \in \plan \}$. This figure shows a section of the space-time lattice $V$ at time $t_{\Lambda}$.}
\label{fig:Lambda}
\end{figure}
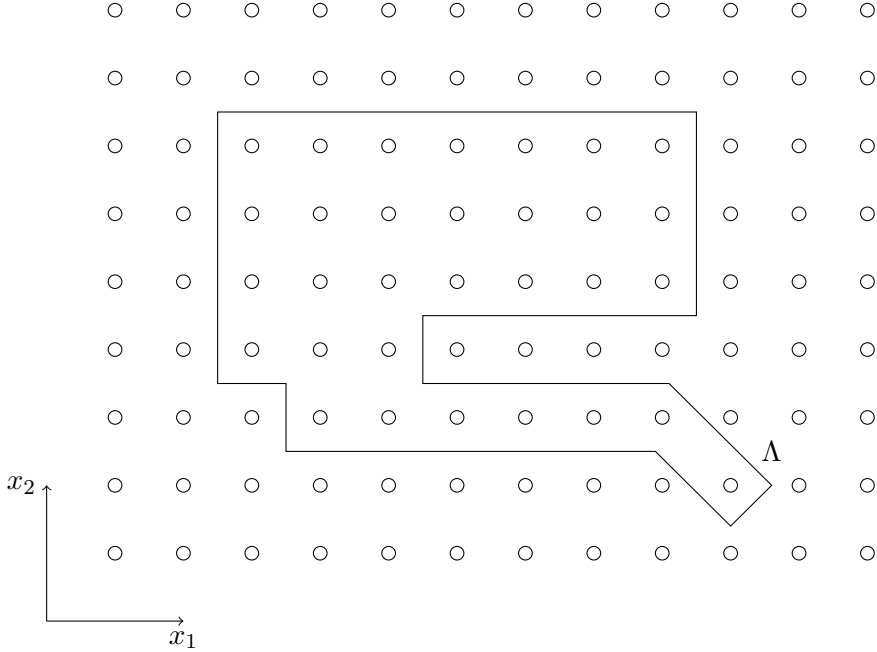
\begin{thm}\label{thm:upperbound}
There exist $\epsilon^*>0$ and $C < \infty $ such that for all $\epsilon $ with $0\leq  \epsilon \leq \epsilon^*$, for all stochastic processes $\ushort \mu $ in $M_{\epsilon}^{(0)}$, for all times $t_{\Lambda}$ in $\mathbb{N}^*$, for all finite and connected subsets $\Lambda$ of $\{ (x,t_{\Lambda}) \mid x \in \plan \}$, the probability of finding `ones' at all sites of $\Lambda$ has the following upper bound:
\begin{equation*}
\ushort \mu(\ushort \omega_v =1  \, \forall v \in \Lambda) \leq (C \epsilon )^{\frac{1}{2}\diam(\Lambda)+1}
\end{equation*}
The symmetric result where the states $0$ and $1$ are exchanged is also true.
\end{thm}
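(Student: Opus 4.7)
The plan is to extend the graphical argument sketched in Section~3.3 for the Stavskaya model to the two-dimensional North-East-Center setting, and to generalize it from singleton observables to connected finite sets $\Lambda$. As in Section~3.3, I will associate to each space-time configuration $\stvect\omega$ satisfying $\ushort\omega_v=0 \ \forall v \in V_0$ and $\ushort\omega_v=1 \ \forall v \in \Lambda$ a one-dimensional oriented graph $G(\stvect\omega)$. The two key properties of $G$ will be: (i) the number of horizontal edges $n_h$ bounds the number of error points in the associated cluster from below up to an additive constant, so that the configuration contributes probability at most $\epsilon^{n_h+1}$; and (ii) a current-conservation relation forces $n_h \geq \tfrac12 \diam(\Lambda)$. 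Summing over admissible graphs will then yield the stated bound.

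First, the combinatorial setup. For each $v$ with $\ushort\omega_v=1$ that is not an error point, the majority rule forces at least two of the three neighbors in $U(v)$ to be in state~$1$; pick one minimal ones-set $\mathcal{O}_j$ whose translate lies in state~$1$ and call its translate $\bar U(v)$, while for an error point set $\bar U(v)=\varnothing$. Iterating backward from $\Lambda$ produces a finite cluster $\bar U^\infty(\Lambda)$. As in Section~3.3 one partitions the cluster by the equivalence relation \emph{a and b (with equal time coordinates) share an ancestor reached by iterating $\bar U$}, and the relation \emph{class B is a subset of some $\bar U(A)$} organizes the classes into a forest $F$. Unlike the Stavskaya case, classes are no longer forced to be intervals, because the noise is two-sided and the cluster can branch in two space directions; but each class remains finite and the forest structure of $F$ is preserved.

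Next, the inductive construction. The initial graph $G_0$ links, by horizontal edges, the classes partitioning $\Lambda$, using the assumed connectedness of $\Lambda$ via the graph $\tilde g_\Lambda$: a spanning tree of $\tilde g_\Lambda$ is realized by horizontal edges between adjacent classes. The initial stock $S_0$ is the set of these classes. At each subsequent step $q$, pick an exploitable class $A \in S_{q-1}$, remove it, and add the classes in $U_F(A)$ to form $S_q$; attach to $A$ a small number of non-horizontal edges going to two extremal points of $\bar U(A)$, selected so as to pull the graph apart in space; and add the horizontal edges needed to link the newly introduced classes inside $\bar U(A)$. The selection of the extremal points exploits the erosion criterion $\bigcap_j \conv(\mathcal{O}_j)=\varnothing$: by Helly's theorem in $\real^2$ one picks a linear functional (a \emph{reference direction}) separating two of the three convex hulls, so that the diagonal/vertical attaching vectors systematically drag the two distinguished endpoints of $G$ in opposite directions along that functional. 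This is the two-dimensional analogue of the \emph{leftmost/rightmost representative} choice of Section~3.3.

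Finally, one establishes the key properties of $G = G_Q$, the construction halting in finitely many steps because $F$ is finite. The error count: by induction on $q$, the unexploitable classes in $S_Q$ are exactly the singletons containing error points, and their number equals $n_h + 1$, so the space-time configuration contributes to $\ushort\mu$ at most $\epsilon^{n_h+1}$ via condition~(2.4). The spatial spread: summing the space-components of the edges of $G$ along its two extremal branches yields a total $\ell_1$-displacement between two vertices of $G$ lying in $\Lambda$ at least equal to $\diam(\Lambda)$; because every non-horizontal edge contributes at most $1$ per unit of reference-direction drift while horizontal edges must compensate, one obtains $n_h \geq \tfrac12 \diam(\Lambda)$. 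Combining these with a combinatorial bound $K^{n_h}$ on the number of distinct admissible graphs with $n_h$ horizontal edges, and summing the geometric series over $n_h \geq \tfrac12 \diam(\Lambda)$, produces $\ushort\mu(\ushort\omega_v=1\ \forall v\in\Lambda) \leq (C\epsilon)^{\frac12\diam(\Lambda)+1}$ provided $\epsilon^* < 1/(2K)$.

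The main obstacle is step~(ii): the graph $G$ is no longer a contour but a tree-like object in three-dimensional space-time, so current conservation must be cast as a vector identity in $\real^2$ and combined with the erosion criterion through a Helly-type separating functional to extract a scalar lower bound in $\diam(\Lambda)$. A secondary subtlety is that, because the noise is not assumed totally asymmetric, equivalence classes can have nontrivial shape and the choice of extremal representatives in $\bar U(A)$ at each step must be made carefully so that the cumulative reference-direction drift is monotone; this is where generalizing from singleton $\Lambda$ to a connected block $\Lambda$ introduces bookkeeping complications in both the error count and the current-conservation identity.
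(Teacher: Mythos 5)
Your overall architecture --- cluster of responsible points, equivalence classes, forest $F$, inductive graph construction, error count tied to the number of ``horizontal'' (spacelike) edges, Peierls sum over graphs --- matches the paper's. But the step you yourself flag as the main obstacle, forcing $n_h \ge \tfrac12\diam(\Lambda)$, is where the proposal breaks down. You propose to pick, ``by Helly's theorem in $\real^2$'', a linear functional separating two of the three convex hulls and then to run a single scalar current between two distinguished endpoints of $G$, as in the Stavskaya contour. For the North-East-Center model no such functional exists: the three minimal zero-sets have pairwise \emph{intersecting} convex hulls (e.g.\ $\conv(\mathcal Z_1)\cap\conv(\mathcal Z_2)=\{(0,0)\}$); only the triple intersection is empty. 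Helly in $\real^2$ gives nothing with three sets --- the reduction to two separated half-planes is a one-dimensional phenomenon (it is exactly the content of Proposition~\ref{prop:1D1attract}). Consequently a graph with two endpoints carrying one conserved scalar current cannot register $\diam(\Lambda)$ in both spatial directions. The paper instead uses \emph{three} colors of current with three reference vectors $v^{(1)},v^{(2)},v^{(3)}$ summing to zero (one per minimal zero-set), \emph{three} sources $\pi_1,\pi_2,\pi_3$ placed at the western-most, southern-most and north-eastern-most points of $\Lambda$ so that $\bigl( v^{(1)} \mid \pi_1 \bigr)+\bigl( v^{(2)} \mid \pi_2 \bigr)+\bigl( v^{(3)} \mid \pi_3 \bigr)\ge\tfrac32\diam(\Lambda)$, and the identity $\Extent(G)=0$ coming from current conservation, which yields $3s\ge t+\tfrac32\diam(\Lambda)$ for the numbers $s$, $t$ of spacelike and timelike edges. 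That same inequality is what bounds the \emph{total} number of edges by a multiple of $s$, which your $K^{n_h}$ count of admissible graphs implicitly requires but never establishes.

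A secondary point: the backward construction of the cluster is governed by \emph{zero}-sets, not one-sets. When $\varphi(\stvect\omega_{U(v)})=1$ without an error, the set $\bar U(v)$ of neighbors in state $1$ must meet every zero-set of $v$, and it is the erosion criterion $\bigcap_j\conv(\mathcal Z_j)=\varnothing$ (erosion of islands of ones) that produces the reference vectors and the drift of the timelike edges. Your $\bigcap_j\conv(\mathcal O_j)=\varnothing$ is the zero-eroder criterion; for the North-East-Center model the two coincide by the $0$--$1$ symmetry, so no contradiction arises here, but it signals that the mechanism driving the spatial spread of the graph has been mislocated, and it would matter for the generalization in Theorem~\ref{thm:upperboundgen}, where only the zero-set criterion is assumed.
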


\begin{rmk}
Applying Theorem~\ref{thm:upperbound} to singletons leads to the stability of the trajectories $\stvect \omega^{(0)}$ and $\stvect \omega^{(1)}$ of the North-East-Center CA.
\end{rmk}

Like for the Stavskaya model, one has the following immediate corollary.
\begin{corollary}\label{cor:NECmuinv}
The invariant measures $\muinv{0}$ and $\muinv{1}$ of the North-East-Center PCA have the following property. For the numbers $\epsilon^* >0$ and $C<\infty$ given by Theorem~\ref{thm:upperbound}, for all $\epsilon$ with $0\leq  \epsilon \leq \epsilon^*$, for all finite and connected subsets $\Lambda$ of $\plan$,
\begin{equation*}
\muinv{0}( \omega_x=1 \, \forall x \in \Lambda) \leq (C\epsilon)^{\frac{1}{2}\diam(\Lambda)+1}
\end{equation*}
and the symmetric upper bound for $\muinv{1}$ is also true.
\end{corollary}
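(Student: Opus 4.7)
The plan is to follow exactly the same template as the proof of Corollary~\ref{cor:stavmuinv}, exploiting the fact that the upper bound of Theorem~\ref{thm:upperbound} is uniform in the time coordinate $t_{\Lambda}$ and applies to every stochastic process in $M_{\epsilon}^{(0)}$ (respectively $M_{\epsilon}^{(1)}$). First, I would fix a finite connected $\Lambda \subset \plan$ and associate to each $t \in \nat^*$ the translated set $\Lambda_t := \{(x,t) \mid x \in \Lambda\}$, which is connected in the sense of Section~\ref{sec:result} since connectedness of $\Lambda$ in $\plan$ immediately yields a connected graph $\tilde g_{\Lambda_t}$, and $\diam(\Lambda_t) = \diam(\Lambda)$.

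Next, I would recall from Section~\ref{sec:PCAformalism} that under the Bounded-noise assumption the iteration of the transfer operator $T$ on $\sleb{0}$ produces a stochastic process $\ushort \mu_{\sleb{0}} \in M_{\epsilon}^{(0)}$ whose marginal at time $t$ equals $T^t \sleb{0}$. Applying Theorem~\ref{thm:upperbound} to this process and to the set $\Lambda_t$ gives, for every $t \in \nat^*$,
\begin{equation*}
T^t \sleb{0}(\omega_x = 1 \ \forall x \in \Lambda) = \ushort \mu_{\sleb{0}}(\ushort \omega_v = 1 \ \forall v \in \Lambda_t) \leq (C\epsilon)^{\frac{1}{2}\diam(\Lambda)+1},
\end{equation*}
and the bound is trivially satisfied at $t=0$ since $\sleb{0}$ is concentrated on the configuration with zeros everywhere. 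Consequently the Cesàro averages obey
\begin{equation*}
\frac{1}{n}\sum_{k=0}^{n-1} T^k \sleb{0}(\omega_x = 1 \ \forall x \in \Lambda) \leq (C\epsilon)^{\frac{1}{2}\diam(\Lambda)+1} \qquad \forall n \in \nat^*.
\end{equation*}

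Since $\Lambda$ is finite, $\{\vect \omega \in X \mid \omega_x = 1 \ \forall x \in \Lambda\}$ is a cylinder set; passing to the limit along the subsequence $(n_j)_{j \in \nat}$ that enters definition~\eqref{muinv} of $\muinv{0}$ yields the stated upper bound for $\muinv{0}$. The bound for $\muinv{1}$ is obtained in exactly the same way, by invoking the symmetric statement of Theorem~\ref{thm:upperbound} together with the fact that iterating $T$ on $\sleb{1}$ produces a process in $M_{\epsilon}^{(1)}$. I do not anticipate any real obstacle: the argument is a soft compactness/limit passage, and all the work has been done in Theorem~\ref{thm:upperbound}; the only point worth double-checking is the elementary observation that connectedness and diameter of $\Lambda \subset \plan$ transfer unchanged to $\Lambda_t \subset V$ under the space-time formalism, which follows at once from the definitions.
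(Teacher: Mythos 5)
Your proposal is correct and follows exactly the route the paper intends: the paper states this as an ``immediate corollary'' proved ``like for the Stavskaya model,'' i.e.\ by the same uniform-in-time bound on $T^t\sleb{0}$, Ces\`aro averaging, and passage to the limit along the subsequence $(n_j)$ on the cylinder set, with the $0$--$1$ symmetric argument for $\muinv{1}$. Your extra check that connectedness and diameter transfer from $\Lambda\subset\plan$ to $\Lambda_t$ is a harmless (and correct) added detail.
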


\section{Proof of Theorem~\ref{thm:upperbound}} \label{sec:proof}

\begin{proof}
Fix a time $t_{\Lambda}$ in $\mathbb{N}^*$ and a finite and connected subset $\Lambda$ of $\{ (x,t_{\Lambda}) \mid x \in \plan \}$. Let $\stvect \omega$ in $S^V$ be a space-time configuration that satisfies the initial condition $\ushort \omega_v =0 \ \forall v \in V_0$ and that realizes the event $\ushort \omega_v =1  \, \forall v \in \Lambda$. To any such space-time configuration we will associate a graph $G$. Its construction requires several stages and we will describe them in the following sections, for a fixed space-time configuration $\stvect \omega$. 

\subsection[The cluster of points\\responsible for the `ones' in $\Lambda$]{The cluster of points responsible for the `ones' in $\Lambda$} \label{sec:responsible}

First, we construct inductively a subset $\bar{U}^{\infty}(\Lambda)$ of $V$. We call it \textit{cluster}. We start with an observation. If $v=(x,t)$ is such that $\ushort \omega_v=1$, then we know that $t$ is positive and that one of the following situations holds.
\begin{itemize}
\item Either the majority rule for updating is obeyed at $v$, that is $1=\ushort \omega_v = \varphi( \stvect \omega_{U(v)} )$. Therefore, among the three neighbors of $v$, two or three of them must be in state $1$ as well: there exist at least two distinct points $u_1$ and $u_2$ of $U(v)$ such that $\ushort \omega_{u_1}=\ushort \omega_{u_2} =1$. We say that the two or three points thus obtained are \textit{responsible} for the state $1$ at $v$ and that they form the set $\bar{U}(v)$.
\item Or the majority rule is disobeyed at $v$, that is $1=\ushort \omega_v \neq  \varphi ( \stvect \omega_{U(v)})=0$. We then say that an \textit{error} happens at $v$ and that the set $\bar{U}(v)$ is empty, even if one of the three neighbors of $v$ is in state $1$.
\end{itemize}
The cluster $\bar{U}^{\infty}(\Lambda)$ can now be defined as the subset of $V$ that consists of all points of $\Lambda$, and of all points that are responsible for the state $1$ at some point of $\Lambda$, and of all points that are responsible for the state $1$ at some point that is responsible for the state $1$ at some point in $\Lambda$, and so on. The construction of $\bar{U}^{\infty}(\Lambda)$ starts from $\Lambda$, which we rewrite as $\bar{U}^0(\Lambda)$. All points of this set have the same time coordinate $t_{\Lambda}$ and the state is $1$ at all of them. Next, we construct the set $\bar{U}(\Lambda)$ of all points that are responsible for the state $1$ at some point in $\Lambda$. Here we use the following notation: for any subset $A$ of $\{v \in V \mid \ushort \omega_v=1 \}$ we write $\bar{U} (A)$ for the set $\bigcup_{v \in A} \bar{U}(v)$. All points of the set $\bar{U}(\Lambda)$ have the same time coordinate $t_{\Lambda}-1$ and the state at all of them is $1$. Next, at each step $k \geq 2$, we construct the set $\bar{U}^{k}(\Lambda)=\bar{U}(\bar{U}^{k-1}(\Lambda))$ of all points that are responsible for the state $1$ at some point in $\bar{U}^{k-1}(\Lambda)$. Note again that all points of this set have the same time coordinate $t_{\Lambda}-k$ and that the state at all of them is $1$. Finally $\bar{U}^{\infty}(\Lambda)$ is the union of all sets thus constructed: $\bar{U}^{\infty}(\Lambda)=\bigcup_{k=0}^{\infty} \bar{U}^{k}(\Lambda)$ (see Figure~\ref{fig:ubarinftylambda}). It is easy to see that $\bar{U}^{\infty}(\Lambda)$ is finite because $\bar{U}(v)$ is finite for all $v$, $\Lambda$ is finite and because the initial condition $\ushort \omega_v =0 \ \forall v \in V_0$ implies that $ \bar{U}^{k}(\Lambda)$ is empty for all $k\geq t_{\Lambda}$.
\begin{figure}
\centering
\begin{tikzpicture}[scale=.85]
\draw[->] (4,0) -- +(8,0) node[anchor=south east] {space};
\draw[->] (0,-1) -- +(0,-9) node[anchor=south east] {time\ \ };
\draw (0,-8) -- +(-3pt,0) node[anchor=east] (lambdatip) {$t_{\Lambda}$};
\draw (0,-3) -- +(-3pt,0) node[anchor=east] (lambdaminusk) {$t_{\Lambda}-k$};
\foreach \x in {1,...,12} {
	\foreach \y in {-1,...,-10} {
		\draw (\x,\y) circle (.1cm);
	}
}
\foreach \position in {(5,-8),(6,-8),(7,-8),(8,-8),(9,-8),(10,-8),(5,-7),(6,-7),(7,-7),(9,-7),(10,-7),(11,-7),(5,-6),(6,-6),(7,-6),(8,-6),(9,-6),(10,-6),(5,-5),(6,-5),(7,-5),(8,-5),(9,-5),(10,-5),(11,-5),(6,-4),(7,-4),(6,-3),(7,-3),(8,-3),(6,-2),(7,-2),(8,-2)} {
	\draw[fill] \position circle (.1cm);
}
\foreach \position in {(3,-10),(4,-10),(5,-10),(6,-10),(8,-10),(3,-9),(4,-9),(5,-9),(6,-9),(7,-9),(8,-9),(9,-9),(3,-8),(4,-8),(3,-7),(4,-7),(3,-6),(2,-4),(2,-3),(3,-3)} {
	\draw[fill] \position circle (.1cm);
}
\foreach \position in {(5,-8),(6,-8),(9,-8),(10,-8),(5,-7),(6,-7),(7,-7),(9,-7),(5,-6),(6,-6),(7,-6),(8,-6),(9,-6),(10,-6),(6,-5),(6,-4),(7,-4),(6,-3),(7,-3)} {
	\node (child) at \position {};
	\draw[<-,>=stealth] (child) -- +(0,1);
	\draw[<-,>=stealth] (child) -- +(1,1);
}
\foreach \position in {(7,-8),(8,-8),(10,-7),(11,-7),(5,-5),(7,-5),(8,-5),(9,-5),(10,-5),(11,-5),(8,-3),(6,-2),(7,-2),(8,-2)} {
	\draw \position circle (.2cm);
}
\draw (4.7,-8.3) rectangle +(5.6,.6);
\node[below right,inner sep=1pt] at (4.7+5.6,-8.3) {$\Lambda$};
\draw (5.7,-3.3) rectangle +(2.6,.6);
\node[below right,inner sep=1pt] at (5.7+2.6,-3.3) {$\bar{U}^{k}(\Lambda)$};
\end{tikzpicture}
\caption{The cluster $\bar{U}^{\infty}(\Lambda)$ for a given set $\Lambda$ and a given space-time configuration $\protect \stvect \omega$ such that $\protect \ushort \omega_v=1$ for all $v$ in $\Lambda$. The figure shows a section in space-time where one of the two space coordinates $x_1$ or $x_2$ is fixed. Points where the state is $0$ are represented by white circles; points where the state is $1$ are represented by black circles. Arrows represent the relation `is responsible for the state $1$ at'. Points in $\bar{U}^{\infty}(\Lambda)$ where errors happen are circled. The set $\bar{U}^{k}(\Lambda)$ obtained at step $k$ of the construction is shown. There is a hole in the cluster since errors can also turn state $1$ into state $0$.}
\label{fig:ubarinftylambda}
\end{figure}
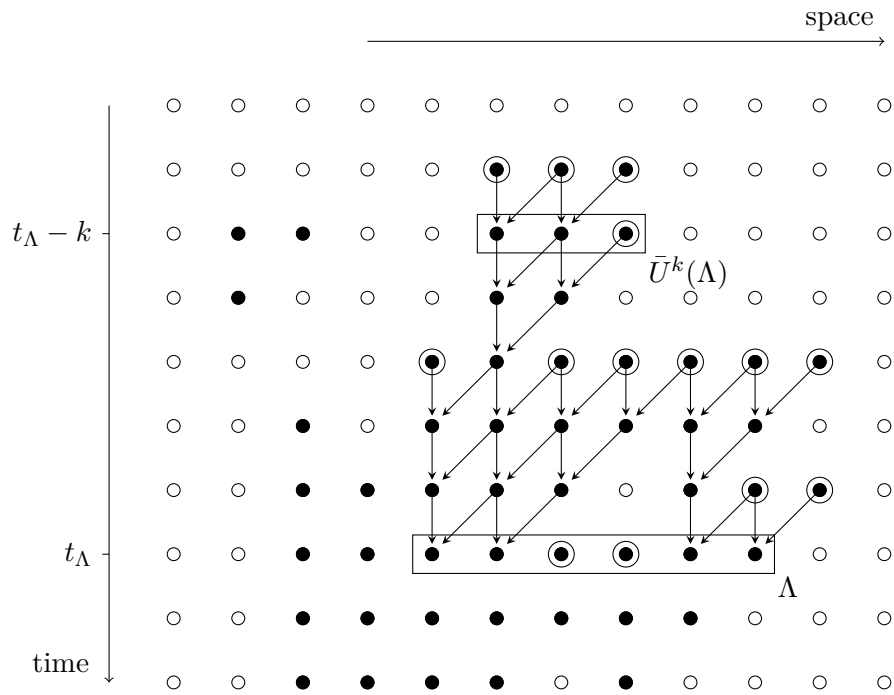

From now on, we extend the notations $\bar{U}^k(\cdot)$, with $k$ in $\nat$, and $\bar{U}^{\infty}(\cdot)$ defined here for $\Lambda$ to sets other than $\Lambda$ and to points.
\added{\begin{rmk}
Our goal is to construct a graph $G$ that will make an inventory of as many errors as possible. The cluster $\bar{U}^{\infty}(\Lambda)$ itself cannot be used directly for a Peierls estimate, as can already be seen in the final estimates of Section~\ref{sec:proofcontour} for the Stavskaya model. The main reason is that the number of errors in the cluster is not proportional to the total number of points in the cluster. Nonetheless, at least for the one-dimensional Stavskaya model, the number of errors is proportional to the length of the contour of the cluster in the two-dimensional space-time. In order to extend such a relation to the case of the North-East-Center model, one could first try to use the two-dimensional outer boundary of the cluster in space-time. However, the number of error points in the cluster is actually not always proportional to the surface of its outer boundary. Therefore we need to construct another structure based on the cluster.
\end{rmk}}

\subsection{The equivalence classes and the forest} \label{sec:classes}

The construction of the graph $G$ associated to the space-time configuration $\stvect \omega$ will take advantage of the structure supplied by the cluster $\bar{U}^{\infty}(\Lambda)$ and by the relation `to be responsible for' introduced in Section~\ref{sec:responsible}. It will also aim at making use of the following fact: the points $v$ in $\bar{U}^{\infty}(\Lambda)$ such that $\bar{U}(v)$ is empty are points where errors happen and these errors are unlikely in the sense of condition~\eqref{error}. Nevertheless, the construction will have to anticipate the fact that a single error point $v=(x,t)$ can be responsible for the state $1$ at several points, namely the three points whose space-time neighborhoods contain $v$: $(x,t+1)$, $(x-(1,0),t+1)$ and $(x-(0,1),t+1)$. Indirectly, this single error can account for the state $1$ at even more points at times later than $t+1$. We say that $v$ is \textit{indirectly responsible} for the state $1$ at those points $a$ where $\ushort \omega_a=1 $ and such that $\bar{U}^{\infty}(a)$ contains $v$.

We partition each $\bar{U}^{k}(\Lambda)$, for $ k$ in $\nat$, and thus also $\bar{U}^{\infty}(\Lambda)$, into equivalence classes which we call \textit{classes}. For two distinct points $a$ and $b$ in $\bar{U}^{k}(\Lambda)$, if the subset $\bar{U}^{\infty}(a) \cap \bar{U}^{\infty}(b)$ of the cluster is nonempty, then $a$ is equivalent to $b$. This means that there exists a point that is indirectly responsible for the states $1$ both at $a$ and at $b$. The equivalence relation should be defined so as to be transitive. If there exists a sequence $a_0=a, a_1, \dotsc , a_n =b$ of points in $\bar{U}^{k}(\Lambda)$ such that $\bar{U}^{\infty}(a_j) \cap \bar{U}^{\infty}(a_{j+1})$ is nonempty for all $j$, then $a$ and $b$ are also said to be equivalent. Otherwise, $a$ and $b$ are nonequivalent. For any $k$ in $\nat$, $\bar{U}^{k}(\Lambda)$ is then a disjoint union of equivalence classes, and so is $\bar{U}^{\infty}(\Lambda)$. As $\bar{U}^{\infty}(\Lambda)$ is finite, there is a finite number of classes. Each class $A$ inherits from its elements a time coordinate $\temps(A)$, which takes the value $t_{\Lambda}-k$ if $A$ is included in $\bar{U}^{k}(\Lambda)$.

The classes also inherit from their elements the relation `to be responsible for'. That induces an oriented graph $F$ defined as follows (see Figure~\ref{fig:graphF}). The vertices of $F$ are all classes whose union is $\bar{U}^{\infty}(\Lambda)$. An edge of $F$ leads from a class $A$ to a class $B$ if $\temps(B) = \temps(A) +1$ and if there exists a point $a$ in $A$ that is responsible for the state $1$ at some point $b$ in $B$, that is to say $a$ belongs to $\bar{U}(b)$. We write $U_F(B)$ for the set of all classes connected to $B$ with an edge of $F$ oriented toward $B$ and we say that each class in $U_F(B)$ is responsible for the class $B$.
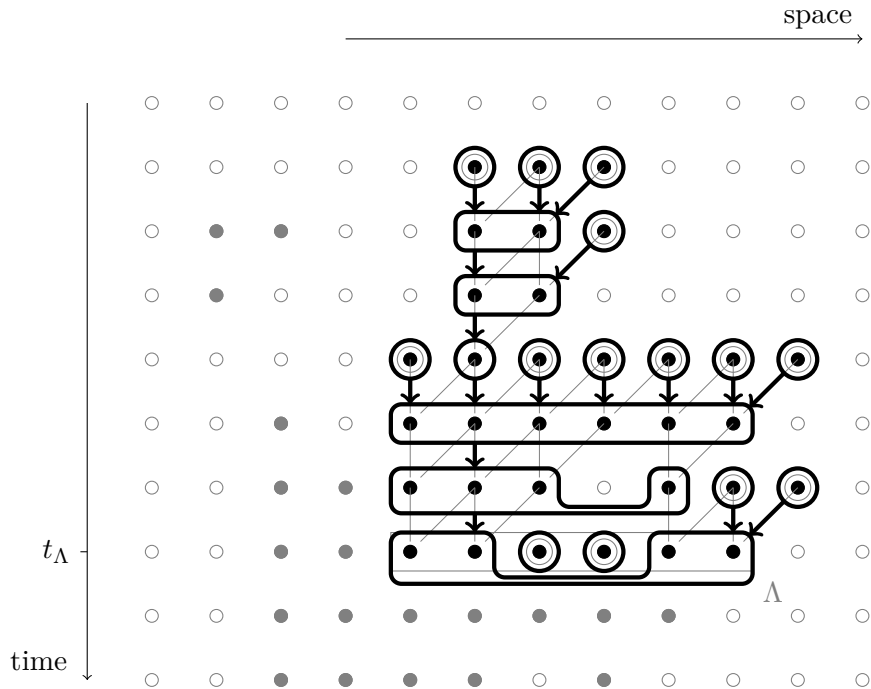
\begin{figure}
\centering
\begin{tikzpicture}
[scale=.85,classes/.style={ultra thick,rounded corners}]
\draw[->] (4,0) -- +(8,0) node[anchor=south east] {space};
\draw[->] (0,-1) -- +(0,-9) node[anchor=south east] {time\ \ };
\draw (0,-8) -- +(-3pt,0) node[anchor=east] (lambdatip) {$t_{\Lambda}$};
\foreach \x in {1,...,12} {
	\foreach \y in {-1,...,-10} {
		\draw[gray] (\x,\y) circle (.1cm);
	}
}
\foreach \position in {(5,-8),(6,-8),(7,-8),(8,-8),(9,-8),(10,-8),(5,-7),(6,-7),(7,-7),(9,-7),(10,-7),(11,-7),(5,-6),(6,-6),(7,-6),(8,-6),(9,-6),(10,-6),(5,-5),(6,-5),(7,-5),(8,-5),(9,-5),(10,-5),(11,-5),(6,-4),(7,-4),(6,-3),(7,-3),(8,-3),(6,-2),(7,-2),(8,-2)} {
	\draw[fill ] \position circle (.1cm);
}
\foreach \position in {(3,-10),(4,-10),(5,-10),(6,-10),(8,-10),(3,-9),(4,-9),(5,-9),(6,-9),(7,-9),(8,-9),(9,-9),(3,-8),(4,-8),(3,-7),(4,-7),(3,-6),(2,-4),(2,-3),(3,-3)} {
	\draw[gray,fill=gray] \position circle (.1cm);
}
\foreach \position in {(5,-8),(6,-8),(9,-8),(10,-8),(5,-7),(6,-7),(7,-7),(9,-7),(5,-6),(6,-6),(7,-6),(8,-6),(9,-6),(10,-6),(6,-5),(6,-4),(7,-4),(6,-3),(7,-3)} {
	\node (child) at \position {};
	\draw[very thin,gray] (child) -- +(0,1);
	\draw[very thin,gray] (child) -- +(1,1);
}
\foreach \position in {(7,-8),(8,-8),(10,-7),(11,-7),(5,-5),(7,-5),(8,-5),(9,-5),(10,-5),(11,-5),(8,-3),(6,-2),(7,-2),(8,-2)} {
	\draw[gray] \position circle (.2cm);
}
\draw[gray] (4.7,-8.3) rectangle +(5.6,.6);
\node[gray,below right] at (4.7+5.6,-8.3) {$\Lambda$};
\foreach \position in {(7,-8),(8,-8),(10,-7),(11,-7),(5,-5),(7,-5),(8,-5),(9,-5),(10,-5),(11,-5),(8,-3),(6,-2),(7,-2),(8,-2)} {
	\draw[classes] \position circle (.3cm);
}
\draw[classes] (4.7,-8.5) -- ++(5.6,0) -- (4.7+5.6,-8.3+.6) -- ++(-1.6,0) -- ++(0,-.7) -- ++(-2.4,0) -- ++(0,.7) -- ++(-1.6,0) --  cycle;
\draw[classes] (4.7,-7.4) -- ++(4.6,0) -- (4.7+4.6,-7.3+.6) -- ++(-.6,0) -- ++(0,-.6) -- ++ (-1.4,0) -- ++(0,.6) -- ++(-2.6,0) -- cycle;
\draw[classes] (4.7,-6.3) rectangle (4.7+5.6,-6.3+.6);
\draw[classes] (6,-5) circle (.3cm);
\draw[classes] (5.7,-4.3) rectangle (5.7+1.6,-4.3+.6);
\draw[classes] (5.7,-3.3) rectangle (5.7+1.6,-3.3+.6);
\draw[classes,<-] (6,-8+.3) -- +(0,.3);
\draw[classes,<-] (10,-8+.3) -- +(0,.4);
\draw[classes,<-] (10.3-.05,-8+.3-.05) -- +(.55,.55);
\draw[classes,<-] (6,-7+.3) -- +(0,.4);
\draw[classes,<-] (5,-6+.3) -- +(0,.4);
\draw[classes,<-] (6,-6+.3) -- +(0,.4);
\draw[classes,<-] (7,-6+.3) -- +(0,.4);
\draw[classes,<-] (8,-6+.3) -- +(0,.4);
\draw[classes,<-] (9,-6+.3) -- +(0,.4);
\draw[classes,<-] (10,-6+.3) -- +(0,.4);
\draw[classes,<-] (10.3-.05,-6+.3-.05) -- +(.55,.55);
\draw[classes,<-] (6,-5+.3) -- +(0,.4);
\draw[classes,<-] (6,-4+.3) -- +(0,.4);
\draw[classes,<-] (7.3-.05,-4+.3-.05) -- +(.55,.55);
\draw[classes,<-] (6,-3+.3) -- +(0,.4);
\draw[classes,<-] (7,-3+.3) -- +(0,.4);
\draw[classes,<-] (7.3-.05,-3+.3-.05) -- +(.55,.55);
\end{tikzpicture}
\caption{The classes and the graph $F$ for the same set $\Lambda$ and the same space-time configuration $\protect \stvect \omega$ as in Figure~\ref{fig:ubarinftylambda}. The equivalence classes are delimited by thick curved lines. Thick arrows represent the relation `is responsible for' between classes.}
\label{fig:graphF}
\end{figure}

Now we examine the graph $F$ and observe an interesting property due to the definition of classes. While a single point of $\bar{U}^{\infty}(\Lambda)$ can be responsible for the state $1$ at several distinct points, a class cannot be responsible for several different classes.
\begin{lemma}\label{lemma:responsibleofone}
Every class included in $\bar{U}^{\infty}(\Lambda)$ is responsible for at most one other class.
\end{lemma}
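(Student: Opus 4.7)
The plan is to proceed by contradiction, supposing that $A$ is responsible for two distinct classes $B$ and $B'$, both with time coordinate $\temps(A)+1$, and then producing an equivalence between a point of $B$ and a point of $B'$, which contradicts $B \neq B'$. Unpacking the definition of $F$, I would first extract witnesses: points $a, a' \in A$ and $b \in B$, $b' \in B'$ with $a \in \bar{U}(b)$ and $a' \in \bar{U}(b')$. The task then reduces to showing $b \sim b'$ in the equivalence relation that generates the classes.

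The key observation is a monotonicity of the responsibility cones: whenever $a \in \bar{U}(b)$, every point indirectly responsible for the state $1$ at $a$ is also indirectly responsible for the state $1$ at $b$, so $\bar{U}^{\infty}(a) \subseteq \bar{U}^{\infty}(b)$. This lets me transport equivalence one time step into the future. Since $a$ and $a'$ belong to the common class $A$, the definition supplies a chain $a = a_0, a_1, \dotsc, a_n = a'$ of points of $A$ with $\bar{U}^{\infty}(a_j) \cap \bar{U}^{\infty}(a_{j+1}) \neq \varnothing$ for every $j$.

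The plan is then to lift this chain one time step. For each intermediate $a_j$, since $a_j$ lies in some $\bar{U}^k(\Lambda)$ with $k \geq 1$, there exists $b_j$ at time $\temps(A)+1$ with $a_j \in \bar{U}(b_j)$; I pick any such $b_j$, and set $b_0 = b$, $b_n = b'$. By the monotonicity observation, $\bar{U}^{\infty}(a_j) \subseteq \bar{U}^{\infty}(b_j)$ for each $j$, hence
\begin{equation*}
\bar{U}^{\infty}(b_j) \cap \bar{U}^{\infty}(b_{j+1}) \supseteq \bar{U}^{\infty}(a_j) \cap \bar{U}^{\infty}(a_{j+1}) \neq \varnothing .
\end{equation*}
Thus consecutive $b_j$'s are either equal or equivalent, and by transitivity $b \sim b'$, contradicting $B \neq B'$.

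The step I expect to be the most delicate is the lifting from the chain in $A$ to the chain at time $\temps(A)+1$: the $b_j$ are not canonically determined, and one must resist the temptation to require that each $b_j$ lie in a prescribed class. The argument works precisely because what is transported upwards is not the witnessing point itself but the shared ancestor, so any choice of $b_j$ with $a_j \in \bar{U}(b_j)$ succeeds. The trivial case $n=0$ (i.e.\ $a=a'$) is absorbed by the same reasoning, since $a$ then lies in $\bar{U}^{\infty}(b) \cap \bar{U}^{\infty}(b')$ directly.
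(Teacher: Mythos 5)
Your proof is correct and follows essentially the same route as the paper's: both argue by contradiction, take the equivalence chain inside the responsible class, and lift it one time step into the future using the inclusion $\bar{U}^{\infty}(a) \subseteq \bar{U}^{\infty}(b)$ for $a \in \bar{U}(b)$ to conclude that the two target points are equivalent. The only differences are notational (the paper calls the responsible class $C$ and the targets $A$, $B$), and your explicit isolation of the monotonicity of responsibility cones is a clean way of stating what the paper uses implicitly.
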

\begin{proof}We prove it by contradiction (see also Figure~\ref{fig:firstproofF}). Suppose that a class $C$ is responsible for two different classes $A$ and $B$. Then there exist two points $c_a$ and $c_b$ in $C$, a point $a$ in $A$ and a point $b$ in $B$ such that $c_a$ is responsible for $a$ and $c_b$ is responsible for $b$. $c_a$ and $c_b$ must differ, otherwise $a$ and $b$ would belong to the same class by the definition of classes. Now the distinct points $c_a$ and $c_b$ are equivalent so there exists a finite sequence $c_0=c_a, c_1, \dotsc, c_n=c_b$ of points in $C$ such that $\bar{U}^{\infty}(c_j) \cap \bar{U}^{\infty}(c_{j+1})$ is nonempty for all $j$. The points $c_1, \dotsc, c_{n-1}$ belong to $\bar{U}^{\infty}(\Lambda)$ and their time coordinate is $\temps(A)-1=\temps(B)-1$, strictly lower than $t_{\Lambda}$, therefore, by construction of $\bar{U}^{\infty}(\Lambda)$, there exist points $a_1, \dotsc, a_{n-1}$ in $\bar{U}^{\infty}(\Lambda)$, with time coordinate $\temps(A)=\temps(B)$, such that for all $j$, $c_j$ is responsible for $a_j$. Now we observe that $\bar{U}^{\infty}(a) \cap \bar{U}^{\infty}(a_1)$ includes $\bar{U}^{\infty}(c_0) \cap \bar{U}^{\infty}(c_1)$ so it is nonempty and $a$ and $a_1$ are either identical or equivalent. By the same argument, $a_j$ and $a_{j+1}$ are identical or equivalent for all $j$ and $a_{n-1}$ is identical or equivalent to $b$. By transitivity of the equivalence relation, we have shown that $a$ and $b$ are equivalent but this contradicts their belonging to different classes $A$ and $B$.
\end{proof}
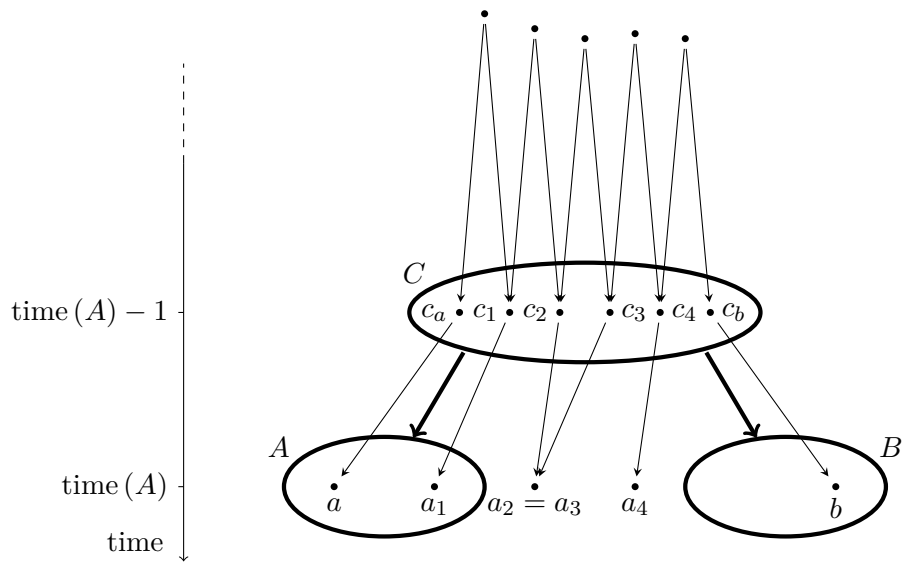
\begin{figure}
\centering
\def\drawHeight{3.5}
\begin{tikzpicture}
[scale=.66,classes/.style={ultra thick,rounded corners},decoration={snake,pre=lineto, pre length=2.4cm,post length=2.4cm}]
\draw[dashed] (-1,\drawHeight+3) -- +(0,2);
\draw[->] (-1,\drawHeight+3) node (timetip) {} -- +(0,-\drawHeight-4.5) node[anchor=south east] {time\ \ };
\draw (-1,0) -- +(-3pt,0) node[anchor=east] {$\temps(A)$};
\draw (-1,\drawHeight) -- +(-3pt,0) node[anchor=east] {$\temps(A)-1$};
\draw[classes] (7,\drawHeight) node (C) {} ellipse (3.5cm and 1cm);
\path (C) -- +(-4,-\drawHeight) node (A) {};
\path (C) -- +(4,-\drawHeight) node (B) {};
\draw[classes] (A) ellipse (2cm and 1cm);
\draw[classes] (B) ellipse (2cm and 1cm);
\path (C) -- +(-3.4,.8) node {$C$};
\path (A) -- +(-2.1,.8) node {$A$};
\path (B) -- +(2.1,.8) node {$B$};
\path (C) -- +(-2.3,-.6) node (CA) {} -- +(2.3,-.6) node (CB) {};
\draw[classes,->,shorten >=.6cm] (CA) -- (A);
\draw[classes,->,shorten >=.6cm] (CB) -- (B);
\path (C) -- +(-2.5,0) node[label={[inner sep=.5pt]left:$c_a$}] (c_a) {} -- +(-1.5,0) node[label={[inner sep=.3pt]left:$c_1$}] (c_1) {} -- +(-.5,0) node[label={[inner sep=.3pt]left:$c_2$}] (c_2) {} -- +(.5,0) node[label={[inner sep=.3pt]right:$c_3$}] (c_3) {} -- +(1.5,0) node[label={[inner sep=.3pt]right:$c_4$}] (c_4) {} -- +(2.5,0) node[label={[inner sep=.3pt]right:$c_b$}] (c_b) {} ;
\path (A) -- +(-1,0) node[label={[inner sep=.3pt]below:$a$}] (a) {} -- +(1,0) node[label={[inner sep=.3pt]below:$a_1$}] (a_1) {} -- +(3,0) node[label={[inner sep=.3pt]below:$a_2 = a_3$}] (a_23) {} -- (B) -- +(-3,0) node[label={[inner sep=.3pt]below:$a_4$}] (a_4) {} -- +(1,0) node[label={[inner sep=.3pt]below:$b$}] (b) {} ;
\path (C) -- +(-2,\drawHeight+2.5) node (d_0) {} -- +(-1,\drawHeight+2.2) node (d_1) {} -- +(0,\drawHeight+2) node (d_2) {} -- +(1,\drawHeight+2.1) node (d_3) {} -- +(2,\drawHeight+2) node (d_4) {} ;
\foreach \point in {c_a,c_1,c_2,c_3,c_4,c_b,a,a_1,a_23,a_4,b,d_0,d_1,d_2,d_3,d_4} {
	\fill (\point) circle (2pt);
}
\draw[<-,>=stealth] (a) -- (c_a);
\draw[<-,>=stealth] (a_1) -- (c_1);
\draw[<-,>=stealth] (a_23) -- (c_2);
\draw[<-,>=stealth] (a_23) -- (c_3);
\draw[<-,>=stealth] (a_4) -- (c_4);
\draw[<-,>=stealth] (b) -- (c_b);
\draw[<-,>=stealth,decorate] (c_a) -- (d_0);
\draw[<-,>=stealth,decorate] (c_1) -- (d_0);
\draw[<-,>=stealth,decorate] (c_1) -- (d_1);
\draw[<-,>=stealth,decorate] (c_2) -- (d_1);
\draw[<-,>=stealth,decorate] (c_2) -- (d_2);
\draw[<-,>=stealth,decorate] (c_3) -- (d_2);
\draw[<-,>=stealth,decorate] (c_3) -- (d_3);
\draw[<-,>=stealth,decorate] (c_4) -- (d_3);
\draw[<-,>=stealth,decorate] (c_4) -- (d_4);
\draw[<-,>=stealth,decorate] (c_b) -- (d_4);
\end{tikzpicture}
\caption{A sketch of the contradictory situation described in the Proof of Lemma~\ref{lemma:responsibleofone}. Points of interest here all belong to $\bar{U}^{\infty}(\Lambda)$ therefore the state is $1$ at all of them and they are represented by black dots. Thin and thick arrows represent the relation `is responsible for' between points and between classes respectively. The figure is contradictory because $a$ and $b$ are equivalent while they belong to two different equivalence classes.}
\label{fig:firstproofF}
\end{figure}

We are now able to apprehend how the graph $F$ looks like (Figure~\ref{fig:forestF}). Remember that the set $\Lambda$ is a disjoint union of classes.
\begin{lemma}\label{lemma:forest}
The finite graph $F$ is a disjoint union of connected subgraphs. Each of them has one and only one class of $\Lambda$ in its set of vertices. Moreover, each of them is a tree -- $F$ is a forest -- and its edges are oriented toward the class that is included in $\Lambda$.
\end{lemma}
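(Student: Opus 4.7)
The proof is essentially a graph-theoretic consequence of Lemma~\ref{lemma:responsibleofone}. My plan would be to determine the out-degree of every vertex of $F$ and then exploit the time-monotonicity of edges to decompose $F$ into trees rooted in $\Lambda$.

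First I would observe that classes included in $\Lambda$ have out-degree $0$ in $F$: edges of $F$ connect a class at time $t$ to a class at time $t+1$, but no vertex of $F$ has time coordinate strictly greater than $t_{\Lambda}$, since $\bar U^\infty(\Lambda)\subset\{v\in V\mid \temps(v)\leq t_\Lambda\}$ by construction. Conversely, any class $B$ with $\temps(B)<t_\Lambda$ has out-degree at least $1$: pick any $b\in B$; by the inductive definition of $\bar U^\infty(\Lambda)$, $b$ must be responsible for the state $1$ at some point $c$ of $\bar U^\infty(\Lambda)$ with $\temps(c)=\temps(B)+1$, so the class of $c$ is a successor of $B$. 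Combining this with Lemma~\ref{lemma:responsibleofone}, every vertex of $F$ has out-degree exactly $0$ if it is a class in $\Lambda$ and out-degree exactly $1$ otherwise.

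Next I would define a map $\pi$ from the vertices of $F$ to the classes contained in $\Lambda$, by iteration of the unique outgoing edge. For any class $B$, let $B_0=B$ and, as long as $B_i\notin\Lambda$, let $B_{i+1}$ be the unique class for which $B_i$ is responsible. Since $\temps(B_{i+1})=\temps(B_i)+1$, this chain strictly increases the time coordinate and therefore terminates after finitely many steps at a class $B_n$ with $\temps(B_n)=t_\Lambda$, i.e.\ a class of $\Lambda$; set $\pi(B):=B_n$. The fibres $\pi^{-1}(A)$, indexed by the classes $A\subseteq\Lambda$, partition the vertex set of $F$, and every vertex of $\pi^{-1}(A)$ is linked to $A$ by the (undirected) path $B_0,B_1,\dotsc,B_n=A$ entirely contained in $\pi^{-1}(A)$, so each fibre is connected.

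It remains to verify that no edge of $F$ joins two different fibres and that each fibre is a tree. If $B\in\pi^{-1}(A)$ is not in $\Lambda$, its unique outgoing edge goes to the next term in the chain defining $\pi(B)$, which lies in $\pi^{-1}(A)$ as well; hence every edge of $F$ stays inside one fibre. In each fibre $\pi^{-1}(A)$, each of the $\lvert\pi^{-1}(A)\rvert-1$ non-root vertices contributes exactly one edge (its unique outgoing edge) and the root $A$ contributes none, so the fibre is a connected graph on $\lvert\pi^{-1}(A)\rvert$ vertices with $\lvert\pi^{-1}(A)\rvert-1$ edges, i.e.\ a tree; all its edges point toward the root $A\in\Lambda$. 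This yields the forest structure claimed. I do not anticipate a real obstacle: the structural work is already contained in Lemma~\ref{lemma:responsibleofone}, and the only subtlety is to confirm that every non-$\Lambda$ class actually has a successor, which is immediate from the inductive definition of $\bar U^\infty(\Lambda)$.
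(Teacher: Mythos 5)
Your proof is correct, and it takes a genuinely different route from the paper's. The paper first decomposes $F$ into its connected components and then rules out cycles by contradiction: on any cycle one picks a vertex of minimal time coordinate, whose two incident cycle-edges would make it responsible for two distinct classes at the next time, contradicting Lemma~\ref{lemma:responsibleofone}; the same minimal-time trick is reused to show a component cannot contain two classes of $\Lambda$. You instead pin down the out-degree of every vertex exactly (zero for classes in $\Lambda$, one otherwise — the "at most one" coming from Lemma~\ref{lemma:responsibleofone} and the "at least one" from the inductive definition of $\bar U^{k}(\Lambda)$), define the projection $\pi$ onto $\Lambda$-classes by iterating the unique outgoing edge, and identify the connected components with the fibres of $\pi$, each being a tree by the count of $\lvert\pi^{-1}(A)\rvert-1$ edges on $\lvert\pi^{-1}(A)\rvert$ vertices. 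This is the standard "functional graph with a strictly increasing potential is a forest of in-trees" argument; it is more constructive (it exhibits the root of each tree explicitly rather than arguing twice by contradiction) and it is in fact closer in spirit to the proof the paper gives for the analogous statement in the Stavskaya chapter. The paper's version is slightly more economical in that it never needs the "out-degree at least one" observation, but both arguments rest on the same two facts: Lemma~\ref{lemma:responsibleofone} and the fact that edges of $F$ increase the time coordinate by exactly one. Your proof is complete as written.
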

\begin{proof}
$F$ is a finite graph because the number of classes is finite. As any graph, $F$ is a disjoint union of connected subgraphs. Let $T$ be one of these connected subgraphs. We show by contradiction that $T$ contains no cycle. If $T$ contains a cycle, choose a vertex $A$ of this cycle with minimal time coordinate. Two distinct edges of the cycle must connect this vertex to two distinct classes. The time coordinate of these classes is $\temps(A)+1$ because edges of $F$ link only classes with consecutive time coordinates and because $\temps(A)$ is the minimal time coordinate among the classes of the cycle. But then $A$ is responsible for two different classes, which contradicts Lemma~\ref{lemma:responsibleofone}. Therefore $T$ is a tree. 

Of course, by definition each class $C$ of this tree contains at least one point of $\bar{U}^{\infty}(\Lambda)$. This point is indirectly responsible for the state $1$ at some point of $\Lambda$. Then by construction of the edges of $F$, there exists a path of edges connecting $C$ to some class that is included in $\Lambda$. Moreover, all edges of this path are directed toward that class in $\Lambda$. So each of the disjoint trees that constitute $F$ has some class of $\Lambda$ in its set of vertices. But it cannot have two of them. Otherwise two distinct classes with equal time coordinates $t_{\Lambda}$ are connected by a path of edges in $F$. Choose again a class with minimal time coordinate along this path and use the same argument as before in order to obtain a contradiction.
\end{proof}
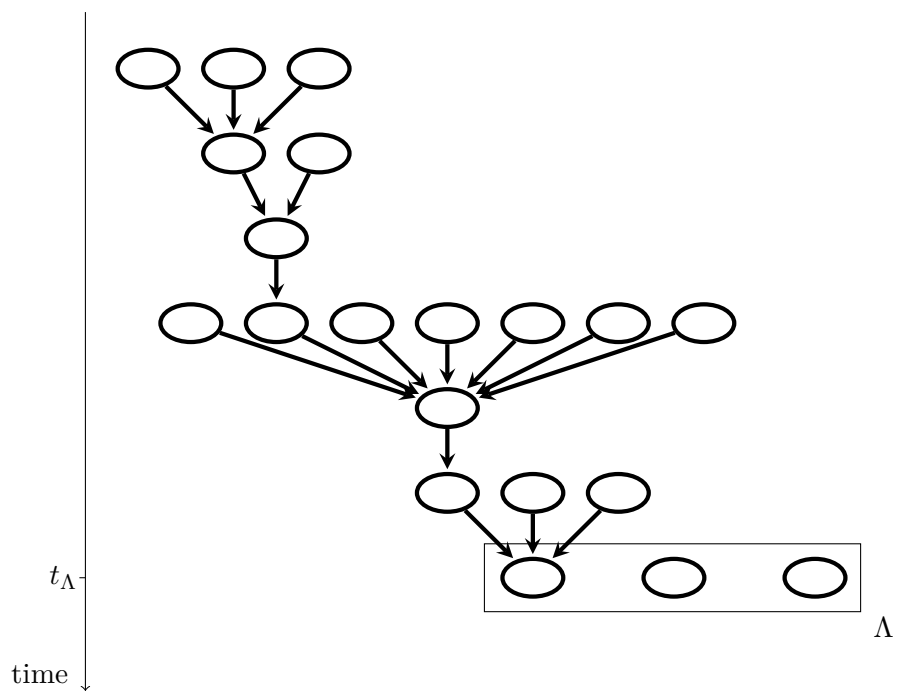
\begin{figure}
\centering
\begin{tikzpicture}
[scale=.75,classes/.style={ultra thick,rounded corners},grow'=up,every node/.style={shape=ellipse,draw,classes,inner sep=0pt,minimum height=.5cm,minimum width=.8cm},every child/.style={classes,<-,>=stealth, shorten <=1pt}]
\draw[->] (-12.8,10) -- +(0,-12) node[draw=none,rectangle,minimum size=0cm,inner sep=2pt,anchor=south east] {time\ \ };
\draw (-12.8,0) -- +(-3pt,0) node[draw=none,rectangle,minimum size=0cm,anchor=east] (lambdatip) {$t_{\Lambda}$};
\draw (-5.8,.6) rectangle (.8,-.6) node[draw=none,anchor=north west,inner sep=.5pt] {$\Lambda$};
\node (A) at (0,0) {};
\node (B) [left=of A] {};
\node (C_0) [left=of B] {}
	child {
		node (A_1) {}
		child {
			node (A_2) {}
			child {
				node (A3) {}
			}
			child {
				node (B3) {}
				child {
					node (A4) {}
					child {
						node (A5) {}
						child {
							node (A6) {}
						}
						child {
							node (B6) {}
						}
						child {
							node (C6) {}
						}
					}
					child {
						node (B5) {}
					}
				}
			}
			child {
				node (C3) {}
			}
			child {
				node (D3) {}
			}
			child {
				node (E3) {}
			}
			child {
				node (F3) {}
			}
			child {
				node (G3) {}
			}
		}
	}
	child {
		node (B_1) {}
	}
	child {
		node (C_1) {}
	};
\end{tikzpicture}
\caption{The forest $F$, in the same case as in Figure~\ref{fig:graphF}. The figure shows the structure of the graph $F$ induced by the relation `is responsible for' between classes, leaving their spatial positions aside. The vertices of $F$ are classes, represented by ellipses.}
\label{fig:forestF}
\end{figure}

Finally, let us notice that a point $a \in \bar{U}^{\infty}(\Lambda)$ where an error happens cannot be equivalent to any other point and thus forms a class that is a singleton $A=\{a\}$. Since $\bar{U}(a)$ is empty, $U_F(A)$ is empty as well. Conversely, let us examine any class $A$ with an empty $U_F(A)$. It means that each of its elements $a$ must also have an empty $\bar{U}(a)$ and therefore be an error point. Consequently, $A$ is a singleton consisting in an error point.

\begin{rmk}
The graph $F$ provides a way to identify error points in $\bar{U}^{\infty}(\Lambda)$ as being the elements of the classes such that no edge in $F$ leads to them. It also possesses a forest structure that can be useful in order to avoid counting the same error several times. For these reasons it will play a crucial role in the construction of the graph $G$. Indeed, like for the contour in the proof of Theorem~\ref{thm:stavskin lambda} in Section~\ref{sec:proofcontour}, we want the number of edges of the graph to be proportional to the number of recorded errors. Nevertheless, we also want an upper bound of the form $C^n$ on the number of graphs with $n$ edges. The graph $F$ does not seem to present the latter property because the degree of its vertices is not bounded.
\end{rmk}

\subsection[Neighbor links between classes:\\another type of graphs on classes]{Neighbor links between classes:\\another type of graphs on classes} \label{sec:neighborclasses}

The distribution of points into equivalence classes and the forest $F$ constructed in Section~\ref{sec:classes} reveal the causal relations between states at different points of $\bar{U}^{\infty}(\Lambda)$ but they do not reflect their spatial arrangement. How are the disjoint trees located relatively to each other? Where are classes in different branches of the trees placed in space? Here we describe how classes inherit nearest-neighbor links from their elements.

Let us first consider the classes included in $\Lambda$. We have supposed that $\Lambda$ is connected in the sense that the associated graph $\tilde{g}_{\Lambda}$ defined in Section~\ref{sec:result} on the basis of the space-time neighborhood $U(\cdot)$ is connected. This graph $\tilde{g}_{\Lambda}$ on the points of $\Lambda$ gives rise to a graph $g_{\Lambda}$ on the classes included in $\Lambda$ as follows. Two distinct classes $A$, $B$ in $\Lambda$ are connected with a \textit{link} of $g_{\Lambda}$ if there exist a point $a$ in $A$ and a point $b$ in $B$ such that $a$ and $b$ are connected to each other by an edge of $\tilde{g}_{\Lambda}$, namely $a$ and $b$ belong to $U(c)$ for some $c$ in $V$, that is $x(a)-x(b)$ is $\pm(1,0)$, $\pm(0,1)$ or $\pm(1,-1)$. As $\Lambda$ is chosen to be connected, $g_{\Lambda}$ is connected as well (see Figure~\ref{fig:glambda}).
\begin{lemma}
The graph $g_{\Lambda}$ is a connected graph.
\end{lemma}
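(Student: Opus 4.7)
The plan is to deduce connectedness of $g_{\Lambda}$ from the hypothesis that $\tilde{g}_{\Lambda}$ is connected, by viewing $g_{\Lambda}$ essentially as a quotient of $\tilde{g}_{\Lambda}$ under the equivalence relation that defines the classes. Concretely, I would pick two arbitrary distinct classes $A$ and $B$ included in $\Lambda$ and exhibit a path of links in $g_{\Lambda}$ between them.

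First I would choose points $a \in A$ and $b \in B$. Since $\tilde{g}_{\Lambda}$ is connected by assumption, there exists a path $a = p_0, p_1, \dotsc, p_n = b$ of points in $\Lambda$ such that, for every $i \in \{0,\dotsc,n-1\}$, the pair $\{p_i, p_{i+1}\}$ is an edge of $\tilde{g}_{\Lambda}$, i.e.\ $x(p_i) - x(p_{i+1}) \in \{\pm(1,0), \pm(0,1), \pm(1,-1)\}$. Each $p_i$ belongs to some unique equivalence class $C_i$ included in $\Lambda$, with $C_0 = A$ and $C_n = B$.

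Then I would walk along this sequence of classes. For each $i$, either $C_i = C_{i+1}$, in which case there is nothing to do, or $C_i \neq C_{i+1}$, in which case the pair $(p_i, p_{i+1})$ with $p_i \in C_i$ and $p_{i+1} \in C_{i+1}$ being connected in $\tilde{g}_{\Lambda}$ is precisely what is required by the definition of $g_{\Lambda}$ to give a link of $g_{\Lambda}$ between $C_i$ and $C_{i+1}$. Removing repetitions from the sequence $C_0, C_1, \dotsc, C_n$ therefore yields a sequence of distinct classes connected successively by links of $g_{\Lambda}$, starting at $A$ and ending at $B$. This exhibits the desired path in $g_{\Lambda}$, so $g_{\Lambda}$ is connected.

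This is essentially a one-step argument, and I do not expect any serious obstacle: the only mild point is to notice that when two consecutive points $p_i, p_{i+1}$ on the $\tilde{g}_{\Lambda}$-path happen to lie in the same class, no link is needed between them, so the $g_{\Lambda}$-path is obtained simply by collapsing consecutive repetitions in $C_0, \dotsc, C_n$.
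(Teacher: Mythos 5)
Your argument is correct and is essentially the same as the paper's proof: take a $\tilde{g}_{\Lambda}$-path between representatives $a\in A$ and $b\in B$, pass to the sequence of classes containing the path's points, and observe that consecutive classes are either equal or joined by a link of $g_{\Lambda}$. No gaps.
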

\begin{proof}
Let $A$ and $B$ be two distinct vertices of $g_{\Lambda}$. As classes in $\Lambda$, both $A$ and $B$ contain at least one point of $\Lambda$, let us call it $a \in A$ and $b \in B$ respectively. $\tilde{g}_{\Lambda}$ is connected so there exist points $a_0=a, a_1,\dotsc, a_{n-1},a_n=b$ in $\Lambda$ such that $a_j$ is connected to $a_{j+1}$ by a link of $\tilde{g}_{\Lambda}$ for all $j$. Now $\Lambda$ is a union of classes so there exist classes $A_0=A,A_1, \dotsc, A_{n-1}, A_n=B$ included in $\Lambda$ such that $a_j$ belongs to $A_j$ for all $j$. By definition of the links of $g_{\Lambda}$, the sequence $A_0=A, A_1,\dotsc, A_{n-1}, A_n=B$ provides a path in $g_{\Lambda}$ that connects $A$ to $B$ because for all $j$, either $A_j=A_{j+1}$ or $A_j$ is connected with $A_{j+1}$ by a link of $g_{\Lambda}$. 
\end{proof}
\begin{figure}
\centering
\begin{tikzpicture}
[scale=.9,classes/.style={ultra thick,rounded corners},links/.style={line width=2pt,double}]
\foreach \x in {-5,...,6} {
	\foreach \y in {-1,...,7} {
		\draw[gray] (\x,\y) circle (.1cm);
	}
}
\begin{scope}[shift={(-1,0)}]
\foreach \position in {(5,0),(-1,1),(0,1),(1,1),(2,1),(3,1),(4,1),(-2,2),(-1,2),(0,2),(-2,3),(-1,3),(0,3),(1,3),(2,3),(3,3),(4,3),(-2,4),(-1,4),(0,4),(1,4),(2,4),(3,4),(4,4),(-2,5),(-1,5),(0,5),(1,5),(2,5),(3,5),(4,5)} {
	\draw[fill] \position circle (.1cm);
}
\foreach \position in {(1,-1),(2,-1),(3,-1),(4,-1),(5,-1),(6,-1),(7,-1),(1,0),(2,0),(3,0),(4,0),(6,0),(-3,1),(-2,1),(5,1),(-3,2),(-3,3),(-3,4),(-3,5),(-3,6),(-3,7),(-2,6),(-2,7),(-1,6),(-1,7),(1,6),(2,6),(3,6),(1,7),(2,7)} {
	\draw[gray,fill=gray] \position circle (.1cm);
}
\foreach \y in {1,...,5} {
	\draw[gray] (0,\y) circle (.2cm);
	\draw[classes] (0,\y) circle (.3cm);
}
\draw (-2.5,1.5) -- ++(1,0) -- ++ (0,-1) -- ++(5.4,0) -- ++(1.1,-1.1) -- ++(.6,.6) node[anchor=south,outer sep=.2cm] {$\Lambda$} -- ++(-1.5,1.5) -- ++(-3.6,0) -- ++(0,1) -- ++(4,0) -- ++(0,3) -- ++(-7,0) -- cycle;
\draw[classes] (-1.3,.7) -- ++(.6,0) -- ++(0,4.6) -- ++(-1.6,0) -- ++(0,-3.6) -- ++(1,0) -- cycle;
\draw[classes] (.7,.7) -- ++(3.25,0) -- ++(1.05,-1.05) -- ++(.35,.35) -- ++(-1.3,1.3) -- ++(-3.35,0) -- cycle;
\draw[classes] (.7,2.7) rectangle ++(3.6,2.6);
\draw[links] (-.7,5) -- +(.4,0);
\draw[links] (-.7,4) -- +(.4,0);
\draw[links] (-.7,3) -- +(.4,0);
\draw[links] (-.7,2) -- +(.4,0);
\draw[links] (-.7,1) -- +(.4,0);
\draw[links] (.3,5) -- +(.4,0);
\draw[links] (.3,4) -- +(.4,0);
\draw[links] (.3,3) -- +(.4,0);
\draw[links] (.3,1) -- +(.4,0);
\draw[links] (0,4.7) -- +(0,-.4);
\draw[links] (0,3.7) -- +(0,-.4);
\draw[links] (0,2.7) -- +(0,-.4);
\draw[links] (0,1.7) -- +(0,-.4);
\draw[links] (.7+.05,1.3-.05) -- +(-.55,.55);
\end{scope}
\draw[->] (-6,-2) -- +(2,0) node[below] {$x_1$};
\draw[->] (-6,-2) -- +(0,2) node[left] {$x_2$};
\end{tikzpicture}
\caption{The graph $g_{\Lambda}$ for the set $\Lambda$ given in Figure~\ref{fig:Lambda} and for some given space-time configuration $\protect \stvect \omega$ such that $\protect \ushort \omega_v=1$ for all $v$ in $\Lambda$. Points in $\Lambda$ where errors happen are circled. The vertices of $g_{\Lambda}$ are classes, delimited with thick rounded lines. Links between classes result from edges of $\tilde{g}_{\Lambda}$ between neighboring points. They are represented by double thick lines. The resulting graph $g_{\Lambda}$ is connected.}
\label{fig:glambda}
\end{figure}
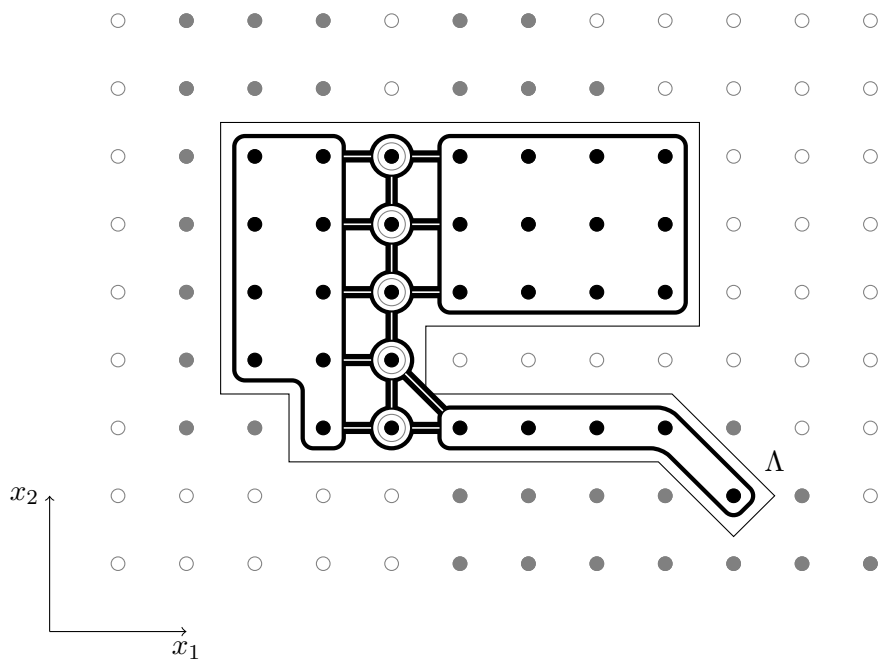

Next we consider the other classes in the cluster. In particular, similarly to $g_{\Lambda}$, we define a graph $g(C)$ on the classes in $U_F(C)$ for any class $C$. Two distinct classes $A$, $B$ in $U_F(C)$ are connected with a \textit{link} of $g(C)$ when there are points $a$ in $A$ and $b$ in $B$, and a point $c$ in $V$ such that both $a$ and $b$ belong to $U(c)$, that is to say $x(a)-x(b)$ belongs to $\{ \pm(1,0),\pm(0,1),\pm(1,-1)\}$. Like $g_{\Lambda}$, for all classes $C$ included in $\bar{U}^{\infty}(\Lambda)$, $g(C)$ is connected (Figure~\ref{fig:g(C)}).
\begin{lemma}\label{lemma:g(C)}
The graph $g(C)$ is a connected graph.
\end{lemma}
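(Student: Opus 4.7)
The plan is to imitate the proof just given for $g_\Lambda$, but with an extra step that translates the internal equivalence structure of $C$ into neighbor-links between the classes in $U_F(C)$. First, I would identify the ``local cliques'' of $g(C)$: for every non-error point $c \in C$, all three elements of $\bar U(c)$ lie in the single space-time neighborhood $U(c)$, so any two distinct classes in $\mathcal A(c):=\{A'\in U_F(C)\mid A'\cap \bar U(c)\neq \varnothing\}$ are automatically linked by the very definition of $g(C)$. In particular, if $C$ is a singleton $\{c\}$ with $c$ not an error point, then $U_F(C)=\mathcal A(c)$ is already a clique and there is nothing more to prove; if $C=\{c\}$ with $c$ an error point then $U_F(C)=\varnothing$ and $g(C)$ is trivially connected.

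Now fix two distinct classes $A,B\in U_F(C)$ and pick witnesses $a\in A\cap \bar U(c_A)$, $b\in B\cap \bar U(c_B)$ with $c_A,c_B\in C$. If $c_A=c_B$ then $A,B\in \mathcal A(c_A)$ and a single link suffices. Otherwise $c_A$ and $c_B$ are equivalent (but distinct), so by the definition of classes there is a chain $c_A=c_0,c_1,\ldots,c_n=c_B$ of points of $C$ with $\bar U^\infty(c_i)\cap \bar U^\infty(c_{i+1})\neq \varnothing$ for every $i$. The key claim I need to prove is that for each $i$,
\[
\mathcal A(c_i)\cap \mathcal A(c_{i+1})\neq \varnothing.
\]
Granting this, I can concatenate the cliques: $A\in \mathcal A(c_0)$ is linked (or equal) to some $D_0\in \mathcal A(c_0)\cap \mathcal A(c_1)$, which is linked (or equal) to some $D_1\in \mathcal A(c_1)\cap \mathcal A(c_2)$, and so on, reaching a class in $\mathcal A(c_n)$ linked (or equal) to $B$. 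This produces a path of links of $g(C)$ from $A$ to $B$.

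For the claim, let $p\in \bar U^\infty(c_i)\cap \bar U^\infty(c_{i+1})$. Since $c_i$ and $c_{i+1}$ have equal time coordinate $\temps(C)$ but are distinct, $p$ cannot equal either of them (otherwise one would be indirectly responsible for the other at the same time). Hence $p$ has time strictly less than $\temps(C)$, so by the inductive construction of $\bar U^\infty(\cdot)$ there exist $v\in \bar U(c_i)$ and $w\in \bar U(c_{i+1})$ with $p\in \bar U^\infty(v)\cap \bar U^\infty(w)$. Therefore either $v=w$, or $v$ and $w$ are directly equivalent; in both cases they belong to a common class $D\in U_F(C)$, which contains $v\in \bar U(c_i)$ and $w\in \bar U(c_{i+1})$, so $D\in \mathcal A(c_i)\cap \mathcal A(c_{i+1})$.

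The subtle point, and the only real obstacle, is precisely this last step: one must argue that an equivalence between $c_i$ and $c_{i+1}$ witnessed by a possibly deep common ancestor $p$ still forces a neighbor-level coincidence, i.e.\ produces a class in $U_F(C)$ seen simultaneously from both $c_i$ and $c_{i+1}$. Once this is secured, the connectivity of $g(C)$ follows by the clique-chaining argument above, in close parallel with the proof of connectivity of $g_\Lambda$.
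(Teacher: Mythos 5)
Your proof is correct and follows essentially the same route as the paper's: the ``key claim'' that $\mathcal A(c_i)\cap\mathcal A(c_{i+1})\neq\varnothing$, proved by lifting the common ancestor $p$ to two points $v\in\bar U(c_i)$, $w\in\bar U(c_{i+1})$ at time $\temps(C)-1$ lying in a common class, is exactly the paper's construction of the points $\tilde a_j$ and $a_{j+1}$ and the class $A_{j+1}$, and your clique-chaining step is the paper's observation that $a_j,\tilde a_j\in U(c_j)$ forces their classes to be linked or identical. The ``local clique'' packaging is a harmless reformulation (and note $\bar U(c)$ has two or three elements, not always three), so there is nothing to change.
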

\begin{proof}
Let $A$ and $B$ be two distinct classes contained in $U_F(C)$ (see Figure~\ref{fig:proofg(C)}). Then there exist points $a$ in $A$ and $c_a$ in $C$ such that $a$ belongs to $\bar{U}(c_a)$, and points $b$ in $B$ and $c_b$ in $C$ such that $b$ belongs to $\bar{U}(c_b)$. Now $c_a$ and $c_b$ are in the same class $C$ so they are equivalent. Then there exists a finite sequence of distinct points $c_0=c_a, c_1, \dotsc, c_{n-1}, c_n=c_b$ in $C$ such that $\bar{U}^{\infty}(c_j) \cap \bar{U}^{\infty}(c_{j+1})$ is nonempty for all $j$. Thus, for all $j$, $\bar{U}^{\infty}(c_j) \cap \bar{U}^{\infty}(c_{j+1})$ contains a point $d_j$. The time coordinate of $d_j$ is less than or equal to $\temps(C)-1$ because $c_j$ and $c_{j+1}$ differ. If it is less than $\temps(C)-1$, then there are two points $\tilde{a}_j$ and $a_{j+1}$, with time coordinate $\temps(C)-1$, such that $\tilde{a}_j$ belongs to $\bar{U}(c_j)$, $a_{j+1}$ belongs to $\bar{U}(c_{j+1})$ and $\bar{U}^{\infty}(\tilde{a}_j) \cap \bar{U}^{\infty}(a_{j+1})$ contains $d_j$. The same is true if the time coordinate of $d_j$ is equal to $\temps(C)-1$ and in this case $\tilde{a}_j$ and $a_{j+1}$ simply coincide with $d_j$. Now let us examine the sequence of points $a_0=a, \tilde{a}_0, a_1,\tilde{a}_1,\dotsc,a_n,\tilde{a}_n=b$ in $\bar{U}^{\infty}(\Lambda)$.
For all $j$, $\tilde{a}_j$ and $a_{j+1}$ belong to the same class since $\bar{U}^{\infty}(\tilde{a}_j) \cap \bar{U}^{\infty}(a_{j+1})$ is nonempty. This class, which we write $A_{j+1}$, belongs to $U_F(C)$. Moreover, both $a_j$ and $\tilde{a}_j$ belong to $\bar{U}(c_j)$ and therefore to $U(c_j)$, so their equivalence classes $A_j$ and $A_{j+1}$ are either identical or connected to each other by a link in $g(C)$. Finally, the finite sequence of classes $A,A_1,A_2,\dotsc,A_{n},B$ in $U_F(C)$ provides a path in $g(C)$ that connects $A$ to $B$.
\end{proof}
\begin{figure}
\centering
\begin{tikzpicture}
[scale=.8,classes/.style={ultra thick,rounded corners},links/.style={line width=2pt,double}]
\draw[->] (-6,\drawHeight+3) node (timetip) {} -- +(0,-\drawHeight-4.5) node[anchor=south east] {time\ \ };
\draw (-6,0) -- +(-3pt,0) node[anchor=east] (timet) {$\temps(C)$};
\draw (-6,\drawHeight) -- +(-3pt,0) node[anchor=east] {$\temps(C)-1$};
\begin{scope}
		\myGlobalTransformation{0}{0};
		\draw[classes] (0,0) node (C) {} ellipse (.8cm and .5cm);
		\node[anchor=north west] at (.8,0) {$C$};
\end{scope}
\begin{scope}
		\myGlobalTransformation{0}{\drawHeight};
		\path[classes] (-2.5,-1) node (A1) {} ellipse (.8cm and .5cm);
		\node[inner sep=0pt] (a1) at ($(A1)+(40:.8 and .5)$) {};
		\path[classes] (-1.5,1) node (A2) {} ellipse (.8cm and .5cm);
		\node[inner sep=0pt] (a21) at ($(A2)+(225:.8 and .5)$) {};
		\node[inner sep=0pt] (a23) at ($(A2)+(340:.8 and .5)$) {};
		\path[classes] (.5,-.5) node (A3) {} ellipse (.8cm and .5cm);
		\node[inner sep=0pt] (a32) at ($(A3)+(130:.8 and .5)$) {};
		\node[inner sep=0pt] (a34) at ($(A3)+(50:.8 and .5)$) {};
		\node[inner sep=0pt] (a35) at ($(A3)+(355:.8 and .5)$) {};
		\path[classes] (2,1) node (A4) {} ellipse (.8cm and .5cm);
		\node[inner sep=0pt] (a4) at ($(A4)+(210:.8 and .5)$) {};
		\path[classes] (3.5,-1) node (A5) {} ellipse (.8cm and .5cm);
		\node[inner sep=0pt] (a5) at ($(A5)+(175:.8 and .5)$) {};
\end{scope}
\draw[classes,->,>=stealth,shorten <=.1cm,shorten >=.2cm] (A1) -- (C);
\draw[classes,->,>=stealth,shorten <=.1cm,shorten >=.18cm] (A2) -- (C);
\draw[classes,->,>=stealth,shorten <=.1cm,shorten >=.18cm] (A3) -- (C);
\draw[classes,->,>=stealth,shorten <=.1cm,shorten >=.2cm] (A4) -- (C);
\draw[classes,->,>=stealth,shorten <=.17cm,shorten >=.21cm] (A5) -- (C);
\begin{scope}
\myGlobalTransformation{0}{\drawHeight};
\draw[fill=white,fill opacity=.5] (-4,-2) rectangle (5,2);
\node[anchor=north west] at (5,-2) {$U_F(C)$};
\draw[classes] (-2.5,-1) node (A1) {} ellipse (.8cm and .5cm);
\draw[classes] (-1.5,1) node (A2) {} ellipse (.8cm and .5cm);
\draw[classes] (.5,-.5) node (A3) {} ellipse (.8cm and .5cm);
\draw[classes] (2,1) node (A4) {} ellipse (.8cm and .5cm);
\draw[classes] (3.5,-1) node (A5) {} ellipse (.8cm and .5cm);
\draw[links] (a1) -- (a21);
\draw[links] (a23) -- (a32);
\draw[links] (a34) -- (a4);
\draw[links] (a35) -- (a5);
\end{scope}
\end{tikzpicture}
\caption{The graph $g(C)$. Its vertices are all classes in $U_F(C)$, the classes that are responsible for the class $C$. The links assemble them into a connected graph.}
\label{fig:g(C)}
\end{figure}
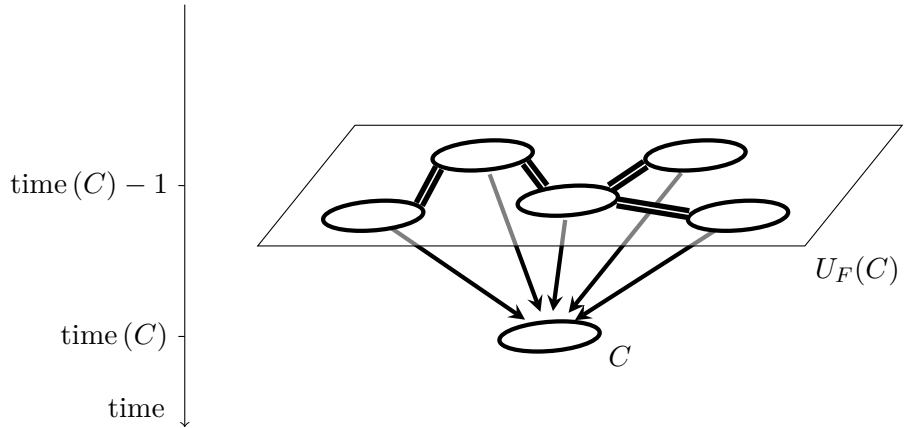
%
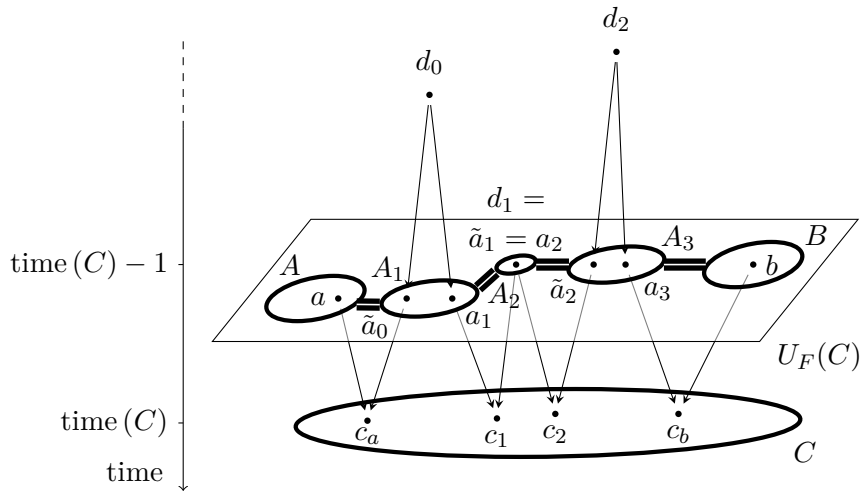
\begin{figure}
\centering
\def\drawHeight{3.5}
\begin{tikzpicture}
[scale=.6,classes/.style={ultra thick,rounded corners},decoration={snake,pre=lineto, pre length=2cm,post length=1.8cm},links/.style={line width=2pt,double}]
\draw[dashed] (-1,\drawHeight+3) -- +(0,2);
\draw[->] (-1,\drawHeight+3) node (timetip) {} -- +(0,-\drawHeight-4.5) node[anchor=south east] {time\ \ };
\draw (-1,0) -- +(-3pt,0) node[anchor=east] {$\temps(C)$};
\draw (-1,\drawHeight) -- +(-3pt,0) node[anchor=east] {$\temps(C)-1$};
\begin{scope}
		\myGlobalTransformation{0}{0};
		\draw[classes] (7,0) node (C) {} ellipse (5.5cm and 1.5cm);
		\node[anchor=north west] at ($(C)+(345:5.5 and 1.5)$) {$C$};
		\path (C) -- +(-4,.1) node[label={[outer sep=-5pt]below:$c_a$}] (ca) {} -- +(-1.2,.2) node[label={[inner sep=.5pt]below:$c_1$}] (c1) {} -- +(0,.4) node[label={[inner sep=.5pt]below:$c_2$}] (c2) {} -- +(2.7,.4) node[label={[inner sep=.5pt]below:$c_b$}] (cb) {};
\end{scope}
\begin{scope}
		\myGlobalTransformation{0}{\drawHeight};
		\path[classes] (7-4.5,-1.5) node (A) {} ellipse (1cm and 1cm);
		\path[classes] (7+4.5,0) node (B) {} ellipse (1cm and 1cm);	
		\path (A) -- +(.5,0) node (a) {} -- +(2,0) node (tildea0) {} -- +(3,0) node (a1) {} -- +(3.8,1.5) node (tildea1) {} -- +(5.5,1.5) node (tildea2) {} -- +(6.2,1.5) node (a3) {} -- +(9,1.5) node (b) {};		
		\path[classes] ($(tildea0)+(.5,0)$) node (A1) {} ellipse (1 and .8);		
		\path[classes] ($(tildea1)$) node (A2) {} circle (.4);
		\path[classes] ($(tildea2)+(.5,0)$) node (A3) {} ellipse (1 and .8);
\end{scope}
\path (tildea0) -- +(.5,\drawHeight+1) node[label=above:$d_0$] (d0) {};
\path (tildea2) -- +(.5,\drawHeight+1.2) node[label=above:$d_2$] (d2) {};
\draw[<-,>=stealth] (ca) -- (a);
\draw[<-,>=stealth] (ca) -- (tildea0);
\draw[<-,>=stealth] (c1) -- (a1);
\draw[<-,>=stealth] (c1) -- (tildea1);
\draw[<-,>=stealth] (c2) -- (tildea1);
\draw[<-,>=stealth] (c2) -- (tildea2);
\draw[<-,>=stealth] (cb) -- (a3);
\draw[<-,>=stealth] (cb) -- (b);
\begin{scope} 
		\myGlobalTransformation{0}{\drawHeight};
		\draw[fill=white,fill opacity=.5] (1,-3.4) rectangle (13,2);
		\node[anchor=north west] at (13,-3) {$U_F(C)$};
		\draw[classes] (7-4.5,-1.5) node (A) {} ellipse (1cm and 1cm);
		\node[anchor=south east,inner sep=.3pt] at ($(A)+(125:1.2 and 1.2)$) {$A$};
		\node[inner sep=0pt] (l0) at ($(A)+(345:1 and 1)$) {};
		\draw[classes] (7+4.5,0) node (B) {} ellipse (1cm and 1cm);
		\node[anchor=south west,inner sep=.4pt] at ($(B)+(45:1.1 and 1.1)$) {$B$};	
		\node[inner sep=0pt] (l43) at ($(B)+(180:1 and 1)$) {};
		\path (A) -- +(.5,0) node[label={[inner sep=.5pt]left:$a$}] (a) {} -- +(2,0) node[label={[outer sep=-1pt]below left:$\tilde{a}_0$}] (tildea0) {} -- +(3,0) node[label={[outer sep=1pt,inner sep=0pt]below right:$a_1$}] (a1) {} -- +(3.8,1.5) node[label={[align=center,inner sep=1pt]above:{$d_1=$\\ $\tilde{a}_1=a_2$}}] (tildea1) {} -- +(5.5,1.5) node[label={[outer sep=2pt,inner sep=0pt]below left:$\tilde{a}_2$}] (tildea2) {} -- +(6.2,1.5) node[label={[outer sep=3pt,inner sep=0pt]below right:$a_3$}] (a3) {} -- +(9,1.5) node[label={[inner sep=.5pt]right:$b$}] (b) {};		
		\draw[classes] ($(tildea0)+(.5,0)$) node (A1) {} ellipse (1 and .8);		
		\node[anchor=south east,inner sep=.4pt] at ($(A1)+(135:1.1 and 1)$) {$A_1$};
		\node[inner sep=0pt] (l10) at ($(A1)+(200:1 and .8)$) {};
		\node[inner sep=0pt] (l12) at ($(A1)+(30:1 and .8)$) {};
		\draw[classes] ($(tildea1)$) node (A2) {} circle (.4);
		\node[anchor=north] at ($(A2)+(255:.4 and .4)$) {$A_2$};
		\node[inner sep=0pt] (l21) at ($(A2)+(220:.4 and .4)$) {};
		\node[inner sep=0pt] (l23) at ($(A2)+(0:.4 and .4)$) {};
		\draw[classes] ($(tildea2)+(.5,0)$) node (A3) {} ellipse (1 and .8);
		\node[anchor=south west,inner sep=.5pt] at ($(A3)+(45:1 and .8)$) {$A_3$};
		\node[inner sep=0pt] (l32) at ($(A3)+(180:1 and .8)$) {};
		\node[inner sep=0pt] (l34) at ($(A3)+(0:1 and .8)$) {};
		\draw[links] (l0) -- (l10);
		\draw[links] (l12) -- (l21);
		\draw[links] (l23) -- (l32);
		\draw[links] (l34) -- (l43);
		\end{scope}
\foreach \point in {a,tildea0,a1,tildea1,tildea2,a3,b,ca,c1,c2,cb,d0,d2} {
	\fill (\point) circle (2pt);
}
\draw[<-,>=stealth] (tildea0) -- (d0);
\draw[<-,>=stealth,decorate] (a1) -- (d0);
\draw[<-,>=stealth,decorate] (tildea2) -- (d2);
\draw[<-,>=stealth,decorate] (a3) -- (d2);
\end{tikzpicture}
\caption{A sketch of the construction in the Proof of Lemma~\ref{lemma:g(C)}. The points of interest are represented by black dots because the state is $1$ at all of them. The same conventions as in previous figures are adopted. A path of classes in $U_F(C)$, connected with links of $g(C)$, joins classes $A$ and $B$.}
\label{fig:proofg(C)}
\end{figure}
\subsection{Currents and sources} \label{sec:currents}

What have we got so far? For any given space-time configuration $\stvect \omega$ satisfying the initial condition $\ushort \omega_v =0 \ \forall v \in V_0$ and realizing the event $\ushort \omega_v =1  \, \forall v \in \Lambda$, we have constructed a cluster of points $\bar{U}^{\infty}(\Lambda)$ where the state is $1$ everywhere. This cluster is interpreted as a set of points in space-time $V$ that are directly or indirectly responsible for the presence of state $1$ at all points of $\Lambda$. The graph $G$ that we will construct will have all its vertices contained in $\bar{U}^{\infty}(\Lambda)$. 

In this cluster, we noticed that some points have an empty $\bar{U}(\cdot)$ i.e.\  that errors happen at these points. Now in order to convert the upper bound~\eqref{error} for the probability of errors into an upper bound for the probability of the event $\ushort \omega_v =1  \, \forall v \in \Lambda$, we want the graph $G$ to bear some information about the number of error points in $\bar{U}^{\infty}(\Lambda)$. The equivalence classes and the forest $F$ constructed in Section~\ref{sec:classes} will be tools to estimate that number, by taking into account the fact that an error can be responsible for the states $1$ at several points.

Now we describe the last basic ingredients of the construction of the graph $G$: edges transporting currents. We start with an observation about $\bar{U}(\cdot)$. For any point $v=(x,t)$ in $\bar{U}^{\infty}(\Lambda)$, if $\bar{U}(v)$ is nonempty, we said that it contains two or three points, which belong to the space-time neighborhood $U(v)$. Now $U(v)$ is made of three points, namely the nearest neighbors of the site $x$ to the north, to the east and the site $x$ itself, at time $t-1$. Let us consider the three subsets of $U(v)$ containing exactly two elements:
\begin{align*}
Z_1(v)&=\{(x,t-1),(x+(0,1),t-1)\},\\
Z_2(v)&=\{(x,t-1),(x+(1,0),t-1)\},\\
Z_3(v)&=\{(x+(1,0),t-1),(x+(0,1),t-1)\}.
\end{align*}
If $\bar{U}(v)$ is nonempty, then for each $k=1,2,3$ we know with certainty that $\bar{U}(v) \cap Z_k(v) $ is nonempty. Actually, the sets $Z_k(v)$, $k=1,2,3$, are the minimal space-time zero-sets of $v$.

Therefore, for any point $v=(x,t)$ in $\bar{U}^{\infty}(\Lambda)$ such that no error happens at $v$, it is possible to draw a directed edge, leaving from $v$ and arriving at a point in $Z_1(v)$ where the state is $1$ and that is responsible for the state $1$ at $v$. Also, it is possible to draw a directed edge, leaving from $v$ and arriving at a point in $Z_2(v)$ where the state is $1$ and that is responsible for the state $1$ at $v$. The same holds about $Z_3(v)$. We will want to distinguish between these three edges that aim at three different subsets of directions. So we introduce directed edges with an extra characteristic: a number $k$ in $\{1,2,3\}$, which we will call the \textit{color} of the edge. Such an edge can be seen as a \textit{current} of color $k$ transported from a point of $\bar{U}^{\infty}(\Lambda)$ to another.

The graph $G$ will be made of such directed edges joining two points of $\bar{U}^{\infty}(\Lambda)$ and equipped with a color in $\{1,2,3\}$. Among them, some will be as we just described above: starting from a point $v$ toward a point in $\bar{U}(v) \cap Z_k(v) $ and bearing color $k$. We will call them \textit{timelike} edges (see Figure~\ref{fig:arrows}). The three colors are thus associated to three different subsets of directions for timelike edges. We reformulate it in another way in terms of scalar products with three reference vectors. Let $v^{(1)}=(-3,0,-1)$, $v^{(2)}=(0,-3,-1)$ and $v^{(3)}=(3,3,2)$. Note that $v^{(1)}+v^{(2)}+v^{(3)}=0$. The displacement in space-time of a timelike edge of color $1$, that is the difference between the positions of its two ends, is $(0,0,-1)$ or $(0,1,-1)$. In either case, the scalar product of this displacement vector with $v^{(1)}$ is $1$. In general, we can check that the scalar product of the displacement vector of a timelike edge of color $k$ with $v^{(k)}$ is $1$.
\begin{figure}
\centering
\def\drawHeight{2.5}
\def\drawHeighttimesminustwo{-5}
\begin{tikzpicture}[scale=.67]
\begin{scope}[xshift=0cm]
	\begin{scope}
			\myGlobalTransformation{0}{0};
			\foreach \position in {(2,2)} {
					\draw[fill] \position circle (.1cm); 
			}
			\node[label=right:$v$] (future) at (2,2) {};
	\end{scope}
	\begin{scope}
			\myGlobalTransformation{0}{\drawHeight};
			\foreach \position in {(2,2),(3,2),(2,3)} {
				\draw[fill] \position circle (.1cm); 
			}
			\draw (1.5,1.5) -- ++(2,0) -- ++(0,1) -- ++(-1,0) -- ++(0,1) -- ++(-1,0) node[above] {$U(v)$} -- cycle;
			\node (past) at (2,2) {};
			\node (east) at (3,2) {};
			\node (north) at (2,3) {};
			\node (v1) at ($(past)+(-3,0)$) {};
	\end{scope}
	\draw[->] (-1,\drawHeight+2) node (timetip) {} -- +(0,\drawHeighttimesminustwo-2) node[anchor=south east] {time\ \ };
	\draw (timetip |- future) -- +(-3pt,0) node[anchor=east] {$t$};
	\draw (timetip |- past) -- +(-3pt,0) node[anchor=east] {$t-1$};
	\draw[->] (future) -- node[left] {$1$} (past);
	\draw[->] (future) -- node[right] {$1$} (north);
	\draw[->,dashed] (future) -- node[below left,inner sep=2pt] {$v^{(1)}$} (v1);
\end{scope}
\begin{scope}[xshift=4cm]
	\begin{scope}
			\myGlobalTransformation{0}{0};
			\foreach \position in {(2,2)} {
					\draw[fill] \position circle (.1cm); 
			}
			\node[label=right:$v$] (future) at (2,2) {};
	\end{scope}
	\begin{scope}
			\myGlobalTransformation{0}{\drawHeight};
			\foreach \position in {(2,2),(3,2)} {
				\draw[fill] \position circle (.1cm); 
			}
			\foreach \position in {(2,3)} {
				\draw \position circle (.1cm); 
			}
			\draw (1.5,1.5) -- ++(2,0) -- ++(0,1) -- ++(-1,0) -- ++(0,1) -- ++(-1,0) node[above] {$U(v)$} -- cycle;
			\node (past) at (2,2) {};
			\node (east) at (3,2) {};
			\node (north) at (2,3) {};
			\node (v2) at ($(past)+(0,-3)$) {};
	\end{scope}
\draw[->] (future) -- node[left] {$2$} (past);
\draw[->] (future) -- node[right] {$2$} (east);
\draw[->,dashed] (future) -- node[below left,inner sep=2pt] {$v^{(2)}$} (v2);
\end{scope}
\begin{scope}[xshift=8cm]
	\begin{scope}
			\myGlobalTransformation{0}{0};
			\foreach \position in {(2,2)} {
					\draw[fill] \position circle (.1cm); 
			}
			\node[label=right:$v$] (future) at (2,2) {};
	\end{scope}
	\begin{scope}
			\myGlobalTransformation{0}{\drawHeight};
			\foreach \position in {(2,3),(3,2)} {
				\draw[fill] \position circle (.1cm); 
			}
			\foreach \position in {(2,2)} {
				\draw \position circle (.1cm); 
			}
			\draw (1.5,1.5) -- ++(2,0) -- ++(0,1) -- ++(-1,0) -- ++(0,1) -- ++(-1,0) node[above] {$U(v)$} -- cycle;
			\node (past) at (2,2) {};
			\node (east) at (3,2) {};
			\node (north) at (2,3) {};
	\end{scope}
	\begin{scope}
			\myGlobalTransformation{0}{\drawHeighttimesminustwo};
			\node (fufuture) at (2,2) {};
			\node (v3) at ($(fufuture)+(3,3)$) {};
	\end{scope}
\draw[->] (future) -- node[left] {$3$} (north);
\draw[->] (future) -- node[right] {$3$} (east);
\draw[->,dashed] (future) -- node[below left,inner sep=2pt] {$v^{(3)}$} (v3);
\end{scope}
\end{tikzpicture}
\caption{The timelike edges that can leave from a point $v$ in $\bar{U}^{\infty}(\Lambda)$. All possible timelike edges with color $1$, on the left; with color $2$, in the center; with color $3$, on the right. On this figure, for each color $k$ the states in the space-time neighborhood $U(v)$ have been chosen so that $\bar{U}(v) \cap Z_k(v) $ contains two points. Therefore, two different timelike edges with color $k$ can be drawn. In general this is not the case, but at least one timelike edge with color $k$ can be drawn. The reference vectors $v^{(1)}$, $v^{(2)}$ and $v^{(3)}$ are represented by dashed arrows.}
\label{fig:arrows}
\end{figure}
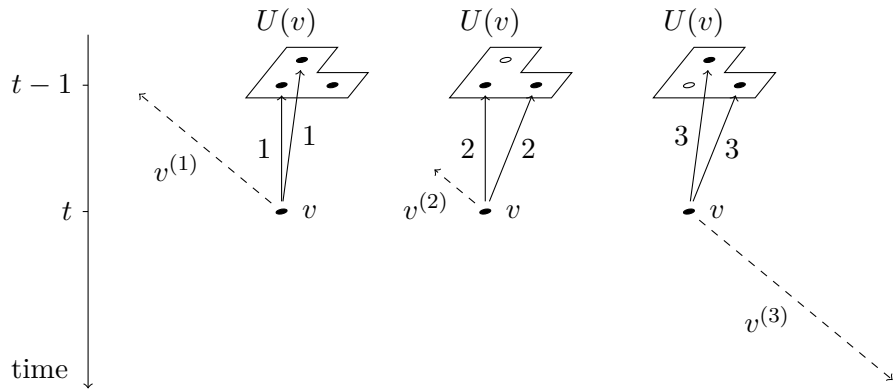

Now we choose in $\Lambda$ three particular points that will serve as \textit{sources} for these currents of colors $1$, $2$ and $3$. Let $\pi_1$ be a point among the most western points of $\Lambda$: its coordinate $x_1$ is minimal, that is to say $x_1(\pi_1) \leq x_1(v)$ for all $v$ in $\Lambda$. Let $\pi_2$ be a point among the most southern points of $\Lambda$: its coordinate $x_2$ is minimal, $x_2(\pi_2) \leq x_2(v)$ for all $v$ in $\Lambda$. Finally let $\pi_3$ be a point among the most north-eastern points of $\Lambda$: the combination $x_1+x_2$ of its coordinates is maximal, $x_1(\pi_3) +x_2(\pi_3) \geq x_1(v)+x_2(v)$ for all $v$ in $\Lambda$. As $\Lambda$ is finite, such three points can always be found. For each $k$, we choose a rule in order to decide between several extremal points if needed, for example we pick among the candidates for $\pi_k$ the one that also maximizes $x_1$, or $x_2$ (Figure~\ref{fig:sources}).
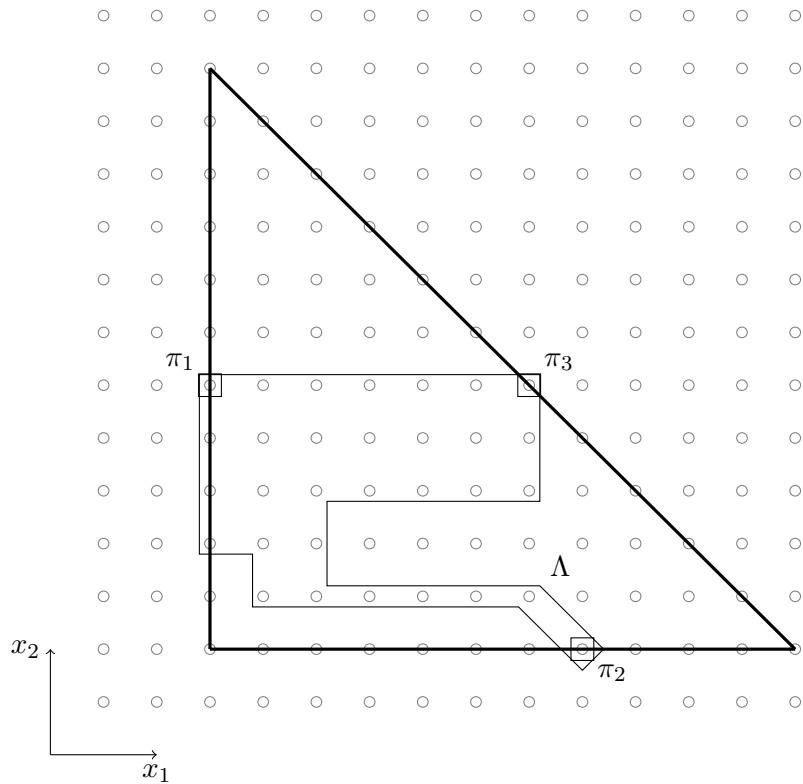
\begin{figure}
\centering
\begin{tikzpicture}[scale=.7]
\foreach \x in {-4,...,9} {
	\foreach \y in {-1,...,12} {
		\draw[gray] (\x,\y) circle (.1cm);
	}
}
\draw[very thick] ($(-2,0)$) -- +(0,11);
\draw[very thick] ($(-2,0)$) -- +(11,0);
\draw[very thick] ($(-2,0)$) ++(0,11) -- +(11,-11);
\draw (-2.2,1.8) -- ++(1,0) -- ++ (0,-1) -- ++(5,0) -- ++(1.2,-1.2) -- ++(.4,.4) -- ++(-1.2,1.2) node[anchor=south west] {$\Lambda$} -- ++(-4,0) -- ++(0,1.6) -- ++(4,0) -- ++(0,2.4) -- ++(-6.4,0) -- cycle;
\draw[->] (-5,-2) -- +(2,0) node[below] {$x_1$};
\draw[->] (-5,-2) -- +(0,2) node[left] {$x_2$};
\node[rectangle,draw,minimum size=.3cm,label={[inner sep=1pt]above left:$\pi_1$}] (pi1) at (-2,5) {};
\node[rectangle,draw,minimum size=.3cm,label={[inner sep=1pt]below right:$\pi_2$}] (pi2) at (5,0) {};
\node[rectangle,draw,minimum size=.3cm,label={[inner sep=1pt]above right:$\pi_3$}] (pi3) at (4,5) {};
\end{tikzpicture}
\caption{The three sources $\pi_1$, $\pi_2$ and $\pi_3$ for the same set $\Lambda$ as in Figures~\ref{fig:Lambda} and~\ref{fig:glambda}. We notice that $\Lambda$ is enclosed in a triangular region delimited by the following three lines: a vertical line passing through $\pi_1$, a horizontal line passing through $\pi_2$ and an oblique line passing through $\pi_3$.}
\label{fig:sources}
\end{figure}

The coordinates of the three points $\pi_1$, $\pi_2$ and $\pi_3$ have been chosen so as to reflect the diameter of $\Lambda$, in some sense.
Indeed, their definition implies that, for every $v$ in $\Lambda$,
\begin{equation*}
x_1(v) \in [x_1(\pi_1), x_1(\pi_3)+x_2(\pi_3)-x_2(\pi_2)]
\end{equation*}
and
\begin{equation*}
x_2(v) \in [x_2(\pi_2), x_1(\pi_3)+x_2(\pi_3)-x_1(\pi_1)].
\end{equation*}
Therefore, using definition~\eqref{defdiam} of $\diam(\Lambda)$,
\begin{equation*}
\diam(\Lambda)\leq 2 \left(x_1(\pi_3)+x_2(\pi_3)-x_1(\pi_1)-x_2(\pi_2)\right).
\end{equation*}
But the same combination of the coordinates of $\pi_1$, $\pi_2$ and $\pi_3$ also appears in
\begin{equation*}
\left( v^{(1)} \middle| \pi_1 \right) + \left( v^{(2)} \middle| \pi_2 \right)+\left( v^{(3)} \middle| \pi_3 \right)= 3 \left(x_1(\pi_3)+x_2(\pi_3)-x_1(\pi_1)-x_2(\pi_2)\right)
\end{equation*}
by definition of $v^{(1)},v^{(2)} ,v^{(3)}$,
and consequently
\begin{equation}\label{extdiam}
\left( v^{(1)} \middle| \pi_1 \right) + \left( v^{(2)} \middle| \pi_2 \right)+\left( v^{(3)} \middle| \pi_3 \right) \geq \frac{3}{2} \diam(\Lambda).
\end{equation}

Along with timelike edges, the graph $G$ will have a second type of edges, the \textit{spacelike} edges. Just as timelike edges, they are directed edges connecting points of $\bar{U}^{\infty}(\Lambda)$ and bearing a color $1$, $2$ or $3$. Spacelike edges differ from timelike edges in that they connect two points with equal time coordinates. More precisely, a spacelike edge is defined as a directed edge between two distinct points $a$, $b$ in $\bar{U}^{\infty}(\Lambda)$ such that both $a$ and $b$ belong to some common $U(c)$ for some $c$ in $V$. The spacelike edge takes any one of the two possible orientations and any one of the three colors $1$, $2$ or $3$ (Figure~\ref{fig:forks}).
\begin{figure}
\centering
\def\drawHeight{2.5}
\begin{tikzpicture}[scale=1]
\begin{scope}[xshift=0cm]
	\begin{scope}
			\myGlobalTransformation{0}{0};
			\foreach \position in {(2,2)} {
					\draw[] \position circle (.1cm); 
			}
			\node[label=right:$c$] (future) at (2,2) {};
	\end{scope}
	\begin{scope}
			\myGlobalTransformation{0}{\drawHeight};
			\foreach \position in {(2,2),(3,2),(2,3)} {
				\draw[fill] \position circle (.1cm); 
			}
			\draw (1.5,1.5) -- ++(2,0) -- ++(0,1) -- ++(-1,0) -- ++(0,1) -- ++(-1,0) node[above] {$U(c)$} -- cycle;
			\node (past) at (2,2) {};
			\node (east) at (3,2) {};
			\node (north) at (2,3) {};
	\end{scope}
	\draw[->] (-1,\drawHeight+2) node (timetip) {} -- +(0,-\drawHeight-2) node[anchor=east] {time\ \ };
	\draw (timetip |- future) -- +(-2pt,0) node[anchor=east] {$t+1$};
	\draw (timetip |- past) -- +(-2pt,0) node[anchor=east] {$t$};
	\draw[->] (east) -- node[above] {$3$} (past);
	\draw[gray] (past) -- (future);
\end{scope}
\begin{scope}[xshift=5cm]
	\begin{scope}
			\myGlobalTransformation{0}{0};
			\foreach \position in {(2,2)} {
					\draw[] \position circle (.1cm); 
			}
			\node[label=right:$c$] (future) at (2,2) {};
	\end{scope}
	\begin{scope}
			\myGlobalTransformation{0}{\drawHeight};
			\foreach \position in {(2,3),(3,2)} {
				\draw[fill] \position circle (.1cm); 
			}
			\foreach \position in {(2,2)} {
				\draw \position circle (.1cm); 
			}
			\draw (1.5,1.5) -- ++(2,0) -- ++(0,1) -- ++(-1,0) -- ++(0,1) -- ++(-1,0) node[above] {$U(c)$} -- cycle;
			\node (past) at (2,2) {};
			\node (east) at (3,2) {};
			\node (north) at (2,3) {};
	\end{scope}
\draw[->] (north) -- node[left] {$1$} (east);
\draw[gray] (past) -- (future);
\end{scope}
\end{tikzpicture}
\caption{Two spacelike edges. Their vertices are supposed to be points of $\bar{U}^{\infty}(\Lambda)$ so the state is $1$ at all of them and they are represented by black circles. They also belong to some $U(c)$ with $c$ a point in $V$, with no assumption about the state at $c$. The spacelike edges can have any of the two possible orientations and bear any of the three possible colors.}
\label{fig:forks}
\end{figure}
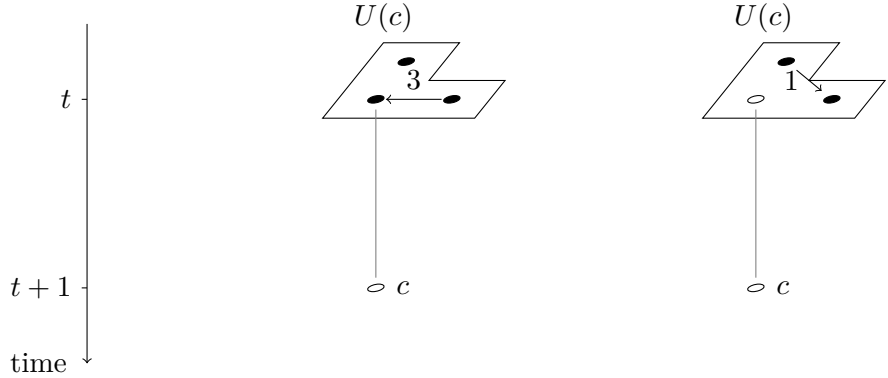

Spacelike edges will combine with timelike edges in $G$ so that a current conservation principle is satisfied. This principle is based upon the postulate that three edges with the three different colors entering the same point compensate each other in the current balance at that point. So do three edges with the three different colors leaving from the same point. Besides, it is natural to say that an edge leaving from a point offsets an edge with the same color entering that point. Let us formulate more precisely the \textit{current conservation} principle that will guide the construction of the graph $G$. The current is conserved at a point $v$ in $V$, other than $\pi_1$, $\pi_2$ and $\pi_3$, if the difference between the number of edges with color $k$ leaving from $v$ and the number of edges with color $k$ arriving at $v$ takes the same value for all $k$ in $\{1,2,3\}$. At $\pi_k$ the source acts in this current balance as a \textit{virtual} additional edge with color $k$ entering $\pi_k$ (see Figure~\ref{fig:conservation}).
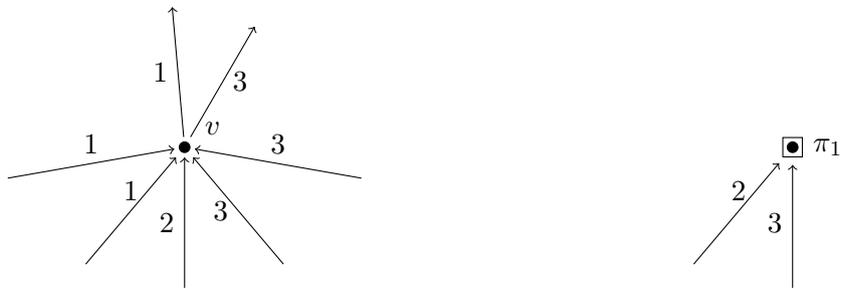
\begin{figure}
\centering
\begin{tikzpicture}
\begin{scope}[xshift=0cm]
	\node[label=15:$v$] (v) at (0,0) {};
	\draw[fill] (v) circle (2pt);
	\node (v1in) at ($(v)+(190:2.5cm)$) {};
	\node (v1out) at (95:2) {};
	\node (v3in) at (350:2.5) {};
	\node (v3out) at (60:2) {};
	\node (w1) at (230:2.2) {};
	\node (w2) at (270:2) {};
	\node (w3) at (310:2.2) {};
	\draw[->] (v1in) -- node[above] {$1$} (v);
	\draw[<-] (v1out) -- node[left] {$1$} (v);
	\draw[->] (v3in) -- node[above] {$3$} (v);
	\draw[<-] (v3out) -- node[right] {$3$} (v);
	\draw[->] (w1) -- node[above] {$1$} (v);
	\draw[->] (w2) -- node[left] {$2$} (v);
	\draw[->] (w3) -- node[left] {$3$} (v);
\end{scope}
\begin{scope}[xshift=8cm]
	\node[rectangle,draw,label=right:$\pi_1$] (v) at (0,0) {};
	\draw[fill] (v) circle (2pt);
	\node (w1) at (230:2.2) {};
	\node (w2) at (270:2) {};
	\draw[->,shorten >=.1cm] (w1) -- node[above] {$2$} (v);
	\draw[->,shorten >=.1cm] (w2) -- node[left] {$3$} (v);
	\end{scope}
\end{tikzpicture}
\caption{Two points in $V$ where the current is conserved: a point $v$ other than $\pi_1$, $\pi_2$ and $\pi_3$, on the left; on the right, the source $\pi_1$, assumed here to differ from $\pi_2$ and $\pi_3$. The edges entering or leaving $v$ can be sorted into the three following subsets. First, an edge with color $1$ enters $v$ and an edge with the same color leaves $v$. They compensate each other. Secondly, an edge with color $3$ enters $v$ and compensates an edge with color $3$ that leaves from $v$. Finally, three edges with colors $1$, $2$ and $3$ enter $v$. They compensate each other. Globally the current is conserved at $v$. At $\pi_1$, the same analysis can be performed, but one should also take into account a virtual edge with color $1$ that would arrive at $\pi_1$.}
\label{fig:conservation}
\end{figure}

The purpose of this current conservation principle is to guide the construction of a graph $G$ in which the number of spacelike edges is related to the number of timelike edges and to the diameter of $\Lambda$. A certain quantity of spacelike edges will indeed be necessary in order to satisfy the current conservation by compensating the fact that the current sources are distant from each other and that timelike edges have a tendency to drive currents of different colors toward even more distant regions. And the construction of the graph $G$ will also handle spacelike edges in such a way that the number of spacelike edges will be proportional to the number of error points.

\subsection{The graph $G$ and the set $\hat{V}_G$} \label{sec:graphG}

We are now ready to construct the graph $G$ on the cluster of points $\bar{U}^{\infty}(\Lambda)$, with directed edges of colors $1$, $2$ and $3$, of the two types described in Section~\ref{sec:currents}, namely timelike edges and spacelike edges: a timelike edge with color $k$ starting from a point $v$ always arrives onto a point of $\bar{U}(v) \cap Z_k(v)$; the ends of a spacelike edge always belong to some $U(v)$ with $v$ in $V$. We will also be interested in a subset $\hat{V}_G$ of the set of vertices of $G$, because we will show that errors happen at all points of $\hat{V}_G$ and we want to keep a tally of errors.

The construction of $G$ and $\hat{V}_G$ is recursive. It consists of a finite series of steps indexed by $q$ in $\{0,1,\dotsc,Q\}$. At each step $q$, new edges -- timelike edges and spacelike edges -- are drawn on $\bar{U}^{\infty}(\Lambda)$ and the graph resulting from all edges drawn at steps $0$ to $q$ is called $G_q$. $G$ is the final graph $G_Q$ obtained at the end of the iteration. The set $\hat{V}_G$ is also constructed iteratively, parallel to the graph $G$. We noticed in Section~\ref{sec:classes} that any error point in $\bar{U}^{\infty}(\Lambda)$ forms a class which is a singleton. At each step $q$, classes of $\bar{U}^{\infty}(\Lambda)$ will be added to or removed from a set $S_q$ of classes, called \textit{stock}, in such a way that the final stock $S_Q$ obtained at the end of the induction contains exclusively classes that are singletons with error points. $\hat{V}_G$ is this set of error points that form the singletons in that final set $S_Q$:
\begin{equation}\label{SQVG}
S_Q=\left\{\{v\} \middle| v \in \hat{V}_G \right\}.
\end{equation}

At each step $q$ of the construction, the obtained graph $G_q$ and stock $S_q$ of classes will satisfy the following four properties. We will prove it by induction. These properties themselves will serve as guidelines for the construction.
\begin{enumerate}[\hspace{.1cm}({P}1)]
\item The current transported by the edges of $G_q$ is conserved at all points in $\bar{U}^{\infty}(\Lambda)\setminus \bigcup_{A \in S_q} A$. In the current balance, we take into account the three virtual extra edges with colors $1$, $2$ and $3$ feeding into the sources $\pi_1$, $\pi_2$ and $\pi_3$ respectively, even though they are not edges of $G_q$.\label{P1}
\item The current is \textit{weakly conserved} at all classes $A$ in $S_q$ in the following sense. Either there is exactly one edge that arrives onto some point in $A$ and there is exactly one edge, with the same color, that leaves from some (possibly other) point in $A$. Or there are exactly three edges, with the three different colors, that arrive onto some (possibly different) points in $A$. In either case, we see that the difference between the total number of edges with color $k$ leaving from the class $A$ and the total number of edges with color $k$ arriving into the class $A$ takes the same value for all $k$ in $\{1,2,3\}$. In this weak current balance in terms of classes, we still take into account the three virtual edges entering $\pi_1$, $\pi_2$ and $\pi_3$.\label{P2}
\item The number of spacelike edges in $G_q$ is equal to the number of classes in $S_q$ minus one.\label{P3}
\item The graph $G_q$ would be connected if for all $A$ in $S_q$ the points in $A$ were considered indistinguishable from each other.\label{P4} 
\end{enumerate}

\subsubsection{The step $q=0$} \label{sec:q=0}
We first construct $G_0$ and $S_0$ verifying properties (P\ref{P1}) to (P\ref{P4}).

We observe that if the sources $\pi_1$, $\pi_2$ and $\pi_3$ do not belong to the same class in $\Lambda$, some edges should be drawn otherwise the properties cannot be verified. In order to achieve at least a weak current conservation as stated in (P\ref{P2}), we should draw edges transporting currents of the three different colors, from class to class, starting from their sources toward a common arrival class where they could annihilate each other.

For that purpose we can use the connected graph $g_{\Lambda}$ on classes of $\Lambda$ defined in Section~\ref{sec:neighborclasses}. In $g_{\Lambda}$, we choose a minimal tree that connects the classes $C_1$, $C_2$ and $C_3$ containing $\pi_1$, $\pi_2$ and $\pi_3$ respectively. If $C_1$, $C_2$ and $C_3$ coincide, this minimal tree is empty: no link is necessary to connect the three identical classes. If two classes coincide, let us say $C_1=C_2$ without loss of generality, and the third class is different, the minimal tree connecting them is a path of links in $g_{\Lambda}$ starting from $C_1=C_2$ and ending at $C_3$ with no cycle. If the three classes are distinct, the minimal tree connecting them is made of the two following disjoint sets of links: a path from $C_1$ to $C_2$ and a -- possibly empty -- second path connecting $C_3$ to some class $C$ of the first path. The crossroads $C$ can possibly but not necessarily be $C_1$, $C_2$ or even $C_3$ if $C_3$ lies on the first path.

Let us consider any link of this minimal tree, between two classes $A$ and $B$. By definition of $g_{\Lambda}$, we know that $A$ contains a point $a$ and $B$ contains a point $b$ such that $a$ and $b$ belong to some $U(c)$ with $c$ in $V$. We draw a spacelike edge connecting $a$ and $b$. Its orientation and color are determined by the following rule, justified by the weak current conservation which we want to achieve. As we chose a minimal tree, the considered link is necessary to connect two disconnected subgraphs of that tree, one of which contains $A$ and the other contains $B$. Moreover, the classes $C_1$, $C_2$ and $C_3$ containing the sources cannot all be in the same of these two parts. For some $k$, $C_k$ is in one part and the other two classes, which can coincide, are in the other part. Then the spacelike edge takes the color $k$ and is oriented toward the class $A$ or $B$ that does not lie in the same part as $C_k$. We do the same for all links of the minimal tree connecting $C_1$, $C_2$ and $C_3$, drawing a spacelike edge corresponding to each of them. $G_0$ is the graph formed by all spacelike edges thus constructed and its vertices are the ends of the spacelike edges.

All vertices of $G_0$ belong to $\Lambda$. $S_0$ is the set of all classes included in $\Lambda$ that contain vertices of $G_0$ or sources (see Figure~\ref{fig:G0}). In the particular case where all three sources lie in the same class $C_1=C_2=C_3$, the minimal tree connecting their classes is empty. Then $G_0$ is empty as well and $S_0=\{C_1\}=\{C_2\}=\{C_3\}$.
\begin{figure}
\centering
\begin{tikzpicture}
[scale=.9,classes/.style={ultra thick,rounded corners},links/.style={line width=2pt,double}]
\foreach \x in {-5,...,6} {
	\foreach \y in {-1,...,7} {
		\draw[gray] (\x,\y) circle (.1cm);
	}
}
\begin{scope}[shift={(-1,0)}]
\foreach \position in {(5,0),(-1,1),(0,1),(1,1),(2,1),(3,1),(4,1),(-2,2),(-1,2),(0,2),(-2,3),(-1,3),(0,3),(1,3),(2,3),(3,3),(4,3),(-2,4),(-1,4),(0,4),(1,4),(2,4),(3,4),(4,4),(-2,5),(-1,5),(0,5),(1,5),(2,5),(3,5),(4,5)} {
	\draw[fill] \position circle (.1cm);
}
\foreach \position in {(1,-1),(2,-1),(3,-1),(4,-1),(5,-1),(6,-1),(7,-1),(1,0),(2,0),(3,0),(4,0),(6,0),(-3,1),(-2,1),(5,1),(-3,2),(-3,3),(-3,4),(-3,5),(-3,6),(-3,7),(-2,6),(-2,7),(-1,6),(-1,7),(1,6),(2,6),(3,6),(1,7),(2,7)} {
	\draw[gray,fill=gray] \position circle (.1cm);
}
\foreach \y in {2,3} {
	\draw[classes] (0,\y) circle (.3cm);
}
\foreach \y in {1,4,5} {
}
\draw (-2.5,1.5) -- ++(1,0) -- ++ (0,-1) -- ++(5.4,0) -- ++(1.1,-1.1) -- ++(.6,.6) -- ++(-1.5,1.5) node[anchor=south west] {$\Lambda$} -- ++(-3.6,0) -- ++(0,1) -- ++(4,0) -- ++(0,3) -- ++(-7,0) -- cycle;
\draw[classes] (-1.3,.7) -- ++(.6,0) -- ++(0,4.6) -- ++(-1.6,0) --  node[left=2pt,fill=white] {$C_1$} ++(0,-3.6) -- ++(1,0) -- cycle;
\draw[classes] (.7,.7) -- ++(3.25,0) -- ++(1.05,-1.05) -- ++(.35,.35) -- ++(-1.3,1.3) -- ++(-3.35,0) -- cycle;
\node[below=2pt,fill=white] at ($(.7,.7)+(1.8,0)$) {$C_2$};
\draw[classes] (.7,2.7) rectangle ++(3.6,2.6);
\node[right=2pt,fill=white] at ($(.7,2.7)+(3.6,1.3)$) {$C_3$};
\node[right=2pt,fill=white] at (0.3,2) {$C$};
\draw[->,shorten >=.15cm] (-1,2) -- node[below] {$1$} +(1,0);
\draw[->,shorten >=.15cm] (1,1) -- node[below] {$2$} +(-1,1);
\draw[->,shorten >=.15cm] (1,3) -- node[above] {$3$} +(-1,0);
\draw[->,shorten >=.15cm] (0,3) -- node[left] {$3$} +(0,-1);
\node[rectangle,draw,label=below right:$\pi_1$] (pi1) at (-2,5) {};
\node[rectangle,draw,label=below right:$\pi_2$] (pi2) at (5,0) {};
\node[rectangle,draw,label=below left:$\pi_3$] (pi3) at (4,5) {};
\end{scope}
\draw[->] (-6,-2) -- +(2,0) node[below] {$x_1$};
\draw[->] (-6,-2) -- +(0,2) node[left] {$x_2$};
\end{tikzpicture}
\caption{The graph $G_0$ and the stock $S_0$ of classes in the same example as in Figures~\ref{fig:Lambda}, \ref{fig:glambda} and \ref{fig:sources}. $G_0$ is made of four spacelike edges that carry currents with colors $1$, $2$ and $3$, from points in classes $C_1$, $C_2$ and $C_3$ where the sources lie, toward a common arrival class $C$ where they compensate each other. $S_0$ is the set of all classes that contain sources or vertices of $G_0$. These five classes are drawn with thick rounded lines, while the other classes included in $\Lambda$ are not shown any longer.}
\label{fig:G0}
\end{figure}
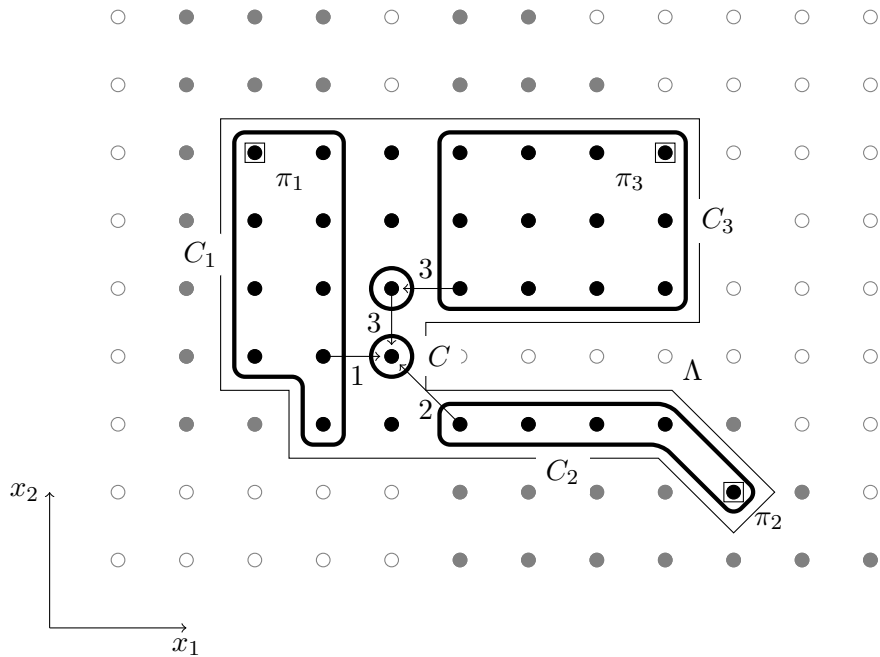

\begin{lemma}
The properties (P\ref{P1}) to (P\ref{P4}) are verified for $q=0$.
\end{lemma}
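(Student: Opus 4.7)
The plan is to verify (P1)--(P4) separately for the graph $G_0$ and the stock $S_0$ just constructed. Properties (P1), (P3), (P4) are essentially direct consequences of the construction; the real work is (P2), which requires a local case analysis at each class $A\in S_0$ driven by the structure of the minimal tree $T\subseteq g_\Lambda$ that connects $C_1$, $C_2$, $C_3$.

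For (P1), any point $v\in\bar U^\infty(\Lambda)\setminus\bigcup_{A\in S_0}A$ is neither a source (since $\pi_1,\pi_2,\pi_3$ lie in $C_1,C_2,C_3\subseteq S_0$) nor the endpoint of any spacelike edge of $G_0$ (all endpoints lie by construction in classes of $S_0$), so the current balance at $v$ is trivially zero for each color. For (P3), there is exactly one spacelike edge of $G_0$ per link of $T$, so the number of spacelike edges is $|E(T)|=|V(T)|-1=|S_0|-1$. For (P4), if every class in $S_0$ is identified to a single vertex, each spacelike edge of $G_0$ becomes the corresponding link of $T$; the contracted graph therefore coincides with $T$, which is connected (the degenerate case $C_1=C_2=C_3$ reduces $S_0$ to a single class with $G_0$ empty, trivially connected).

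The main obstacle is (P2). The key observation is that, by minimality of $T$, removing any link of $T$ splits it into two subtrees, each of which must contain at least one of $C_1$, $C_2$, $C_3$; thus the three terminals are always partitioned as $1+2$, the isolated terminal $C_k$ fixes the color $k$ of the associated spacelike edge, and the prescribed orientation goes from the $C_k$-side toward the other side. Minimality of $T$ also forces every vertex of $T$ to have degree $1$, $2$ or $3$ (a non-terminal leaf would be superfluous, and degree $\geq 4$ would demand more than three terminals).

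With these facts in hand, I would enumerate the possible local types of a class $A\in S_0$. A degree-$1$ vertex must be some $C_k$, and its unique incident link carries color $k$ oriented away from $A$, balancing the virtual edge of color $k$ entering $\pi_k$ (one in, one out, same color). A degree-$2$ non-terminal vertex lies on the direct path between two of the terminals; the third terminal sits on one of the two sides, and a short check shows that both incident links isolate the same terminal, yielding one in and one out of a common color. A degree-$3$ non-terminal vertex is the unique Steiner point: each of the three incident links isolates a different terminal, so three edges enter $A$ with the three distinct colors. Finally, when $A$ itself coincides with one or several of the $C_k$, the virtual edges entering those sources combine with the edge counts from the preceding cases, and one verifies in each subcase (including the fully degenerate $C_1=C_2=C_3$) that the weak conservation statement of (P2) -- either one in and one out of the same color, or three ins of three distinct colors -- is satisfied. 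This last bookkeeping step is where I expect the most care to be needed, but it is finite, mechanical, and driven throughout by the single rule for assigning color and orientation to each link of $T$.
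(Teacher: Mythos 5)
Your proof is correct and follows essentially the same route as the paper: (P1), (P3) and (P4) are read off directly from the construction, and (P2) is verified by a finite case analysis driven by the rule that minimality of the connecting tree forces a $1$--$2$ split of the terminals $C_1,C_2,C_3$ across each link, which fixes the colour and orientation of the corresponding spacelike edge. The paper organises the (P2) cases by how many of the classes $C_1$, $C_2$, $C_3$ coincide rather than by the degree of a vertex in the minimal tree, but the underlying verification is the same and your degree-based enumeration is, if anything, slightly more explicit.
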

\begin{proof}\ 
\begin{enumerate}[\hspace{.1cm}({P}1)$_{q=0}$]
\item By construction of $S_0$, no current enters nor leaves any point in $\bar{U}^{\infty}(\Lambda)\setminus \bigcup_{A \in S_0} A$. So the current is trivially conserved there.
\item We examine the three cases discussed above. 

If $C_1$, $C_2$ and $C_3$ coincide, they are the only class in $S_0$, $G_0$ is the empty graph and the only edges that we need to consider are the three virtual edges with the three colors $1$, $2$, $3$ arriving at $\pi_1$, $\pi_2$ and $\pi_3$ respectively. As they all arrive into the same class $C_1=C_2=C_3$, the property is verified.

If two classes coincide, let us say $C_1$ and $C_2$ without loss of generality, and $C_3$ is different, the spacelike edges in $G_0$ were constructed using a path in $g_{\Lambda}$ connecting $C_1=C_2$ to $C_3$. The rule that we used implies that the spacelike edges all have the color $3$ and that their orientation corresponds to following that path in the reverse direction, from $C_3$ toward $C_1=C_2$. The classes in $S_0$ are the vertices of that path. Taking into account the spacelike edges in $G_0$ and the virtual edges, we see that exactly three edges with the three different colors enter $C_1=C_2$ and that for any other class in $S_0$, exactly one edge with color $3$ enters it and one edge with the same color leaves from it.

If the three classes are distinct, we constructed the spacelike edges using the union of two paths in $g_{\Lambda}$, with a crossroads $C$ which can be $C_1$, $C_2$, $C_3$ or another class. The classes in $S_0$ are the vertices of these paths. We can see from the rule that we used to choose the colors and orientations of the spacelike edges that exactly three edges with the three different colors enter the class $C$ and that any other class in $S_0$ is entered by exactly one edge and left by exactly one edge with the same color. 
\item The number of spacelike edges in $G_0$ is the number of links in the minimal tree that we described above: an empty graph with no link, a path of links in $g_{\Lambda}$ or the union of two paths with a crossroads. The classes in $S_0$ are the vertices of these links and the classes $C_1$, $C_2$ and $C_3$, possibly identified. In each case, the property is verified.
\item Let $a$ and $b$ be two different vertices of $G_0$. They are ends of spacelike edges and they belong to classes $A$ and $B$ of the minimal tree described above. $A$ and $B$ are connected to each other by a path $A_0=A,A_1,\dotsc,A_{n-1},A_n=B$ included in this minimal tree. To any link between successive classes $A_j$ and $A_{j+1}$ of this path, is associated a spacelike edge of $G_0$, with ends $a_j$ in $A_j$ and $\tilde{a}_{j+1}$ in $A_{j+1}$. In the sequence $a,a_0, \tilde{a}_1,a_1, \dotsc, \tilde{a}_{n-1},a_{n-1},\tilde{a}_n, b$, any two consecutive points are either in the same class in $S_0$, and therefore indistinguishable, or connected by a spacelike edge in $G_0$.
\end{enumerate}
\vspace{-1cm}
\end{proof}

\subsubsection{The steps $q=1,\dotsc,Q$} \label{sec:q}

For every $q$ in $\{0,\dotsc,Q-1\}$, we make the induction hypothesis that $G_q$ and $S_q$ have been constructed and satisfy the properties (P\ref{P1}) to (P\ref{P4}). We use them to construct $G_{q+1}$ and $S_{q+1}$ according to the prescriptions below. Then we show that $G_{q+1}$ and $S_{q+1}$ themselves satisfy properties (P\ref{P1}) to (P\ref{P4}). The construction stops with the first step $q$ such that for all classes $B$ in the stock $S_q$, $U_F(B)$ is empty, i.e.\ $B$ is a singleton containing an error point. We call $Q$ this final step of the construction.

If $q$ was not the final step $Q$ of the construction, the stock $S_q$ contains classes $B$ such that $U_F(B)$ is nonempty. These classes are said to be \textit{exploitable}, while the other classes in $S_q$, which are singletons containing error points, are said to be \textit{unexploitable}. We choose an exploitable class $A$ in $S_q$. We will first draw timelike edges and next we will draw spacelike edges, all of which will be added to the edges of $G_q$ in order to form $G_{q+1}$. The ends of these new edges will all belong to $A\cup \bigcup_{B \in U_F(A)} B$. When this is done, we will form the stock $S_{q+1}$ of classes by removing $A$ from $S_q$ and replacing it with all classes $B$ in $U_F(A)$ that contain vertices of these new edges. We say that $A$ has been exploited during step $q+1$ and then leaves the stock, which is supplied with new classes in $U_F(A)$. We can notice that the latter prescription, together with the above construction of the stock $S_0$, which contains only classes included in $\Lambda$, and with the forest structure of $F$, ensure that the construction process explores different branching parts of $F$ without repetitions. Indeed, no directed path in $F$ connects two different classes contained in $S_q$; and if a class of $S_q$ has been removed so that it does not belong to $S_{q+1}$, then it will not belong to $S_{q'}$ for $q'>q$ (Figure~\ref{fig:Sq}). In other words, when a class has been exploited, it leaves the stock once and for all.
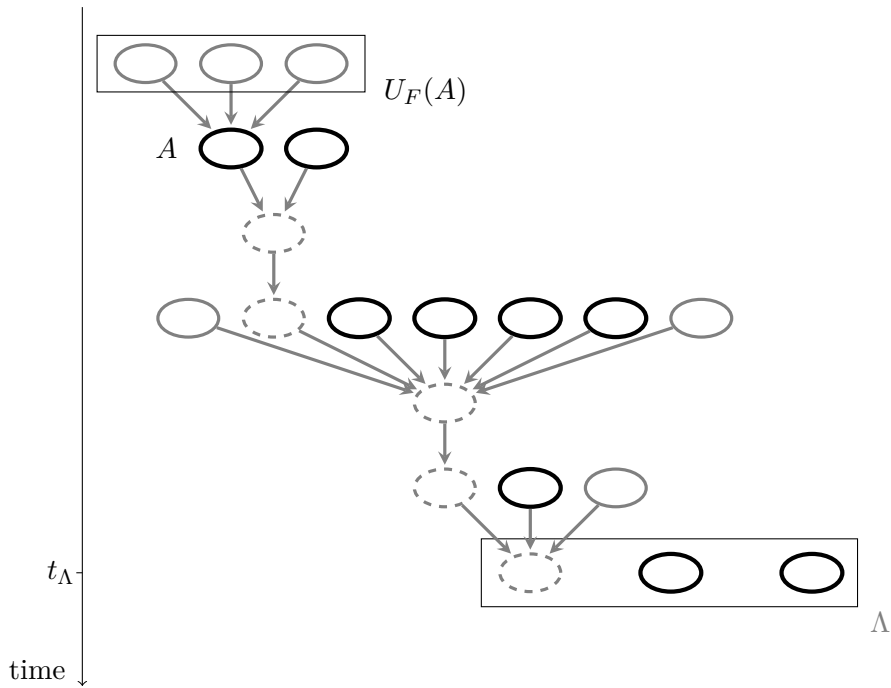
\begin{figure}
\centering
\begin{tikzpicture}
[scale=.75,classes/.style={ultra thick,rounded corners},grow'=up,every node/.style={shape=ellipse,draw,classes,gray,very thick,inner sep=0pt,minimum height=.5cm,minimum width=.8cm},every child/.style={classes,gray,very thick,<-,>=stealth, shorten <=1pt}]
\draw[->] (-12.8,10) -- +(0,-12) node[draw=none,rectangle,black,minimum size=0cm,inner sep=2pt,anchor=south east] {time\ \ };
\draw (-12.8,0) -- +(-3pt,0) node[draw=none,rectangle,black,minimum size=0cm,anchor=east] (lambdatip) {$t_{\Lambda}$};
\draw (-5.8,.6) rectangle (.8,-.6) node[draw=none,anchor=north west] {$\Lambda$};
\node[black,ultra thick] (A) at (0,0) {};
\node[black,ultra thick] (B) [left=of A] {};
\node[dashed] (C_0) [left=of B] {}
	child {
		node[dashed] (A_1) {}
		child {
			node[dashed] (A_2) {}
			child {
				node (A3) {}
			}
			child {
				node[dashed] (B3) {}
				child {
					node[dashed] (A4) {}
					child {
						node[black,ultra thick,label={[black]left:$A$}] (A5) {}
						child {
							node (A6) {}
						}
						child {
							node (B6) {}
						}
						child {
							node (C6) {}
						}
					}
					child {
						node[black,ultra thick] (B5) {}
					}
				}
			}
			child {
				node[black,ultra thick] (C3) {}
			}
			child {
				node[black,ultra thick] (D3) {}
			}
			child {
				node[black,ultra thick] (E3) {}
			}
			child {
				node[black,ultra thick] (F3) {}
			}
			child {
				node (G3) {}
			}
		}
	}
	child {
		node[black,ultra thick] (B_1) {}
	}
	child {
		node (C_1) {}
	};
\draw ($(A6)+(-.85,.5)$) rectangle ($(C6)+(.85,-.5)$) node[draw=none,black,anchor=west] {$U_F(A)$};
\end{tikzpicture}
\caption{The stock $S_q$ of classes, in a particular case based on the same forest $F$ as in Figure~\ref{fig:forestF}. The classes in $S_q$ are represented by ellipses with thick black contours, while all other classes in $F$ are in gray. Among them, the classes that do not belong to $S_q$ because they have already been exploited are represented by ellipses with dotted lines. The prescription for constructing $S_q$ inductively, starting from a stock $S_0$ of classes included in $\Lambda$, always leads to a stock $S_q$ such that no class in $S_q$ is indirectly responsible for any other class in $S_q$. An exploitable class $A$ in $S_q$, i.e.\ a class $A$ in $S_q$ with a nonempty $U_F(A)$, is chosen in order to construct $G_{q+1}$ and $S_{q+1}$.}
\label{fig:Sq}
\end{figure}

From our discussion about classes in Section~\ref{sec:classes}, we remember that any point $a$ in the chosen $A$ has a nonempty $\bar{U}(a)$. Therefore we know that, for any color $k$, it is possible to draw a timelike edge with color $k$ from $a$ into $\bar{U}(a) \cap Z_k(a) $. We choose any preference rule that determines the arrival point of such a timelike edge when $\bar{U}(a) \cap Z_k(a) $ has two elements. As $G_q$ and $S_q$ possess property (P\ref{P2}), one of the two following situations holds (see Figure~\ref{fig:newarrows}).
\begin{itemize}
\item Either there is a unique edge, with some color $k$, that arrives at some point $a_1$ in $A$ and there is a unique edge, with the same color $k$, that leaves from some point $a_2$ in $A$. In this case, in order to achieve the current conservation of property (P\ref{P1}) everywhere in $A$, rather than the weak current conservation of property (P\ref{P2}), we draw three timelike edges, with the three different colors: one timelike edge with color $k$ from $a_1$ into $\bar{U}(a_1) \cap Z_k(a_1) $ and two timelike edges, with the two other colors $i$ and $j$, from $a_2$ into $\bar{U}(a_2) \cap Z_i(a_2) $ and $\bar{U}(a_2) \cap Z_j(a_2) $ respectively.
\item Or there are exactly three edges, one with each of the three colors $1$, $2$ and $3$, that arrive at some points $a_1$, $a_2$ and $a_3$ in $A$ respectively. In that case we also draw three timelike edges, with the three different colors: for every $k$, we draw a timelike edge with color $k$ from $a_k$ into $\bar{U}(a_k) \cap Z_k(a_k) $.
\end{itemize}
\def\drawHeight{2}
\def\drawHeighttimesminusone{-2}
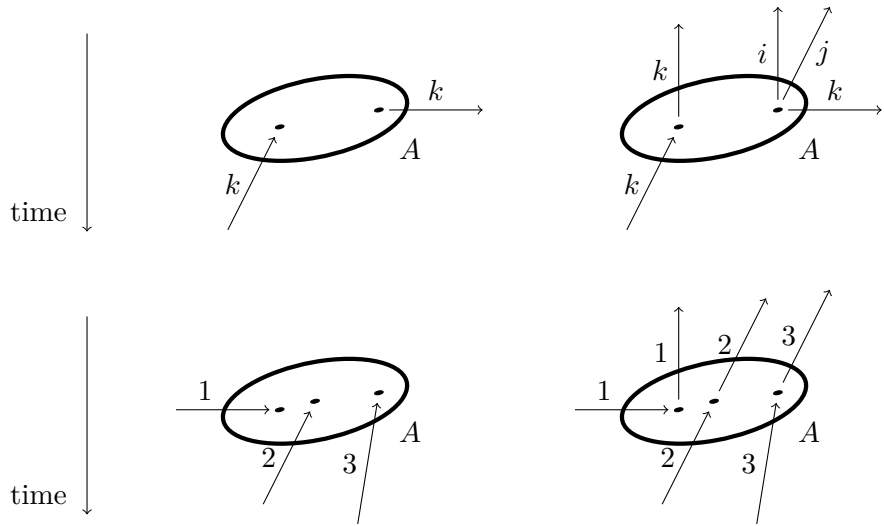
\begin{figure}
\centering
\begin{tikzpicture}
[scale=.75,classes/.style={ultra thick,rounded corners}]
\begin{scope}[yshift=0cm]
	\draw[->] (-1,\drawHeight-.5) node (timetip) {} -- +(0,-\drawHeight-1.5) node[anchor=south east] {time\ \ };
	\begin{scope}[xshift=0cm]
		\begin{scope}
			\myGlobalTransformation{0}{0};
			\draw[classes,rotate=0] (3,0) node (A) {} ellipse (1.5cm and 1.5cm);
			\node[anchor=north west] at ($(A)+(345:1.5 and 1.5)$) {$A$};
			\node (in) at ($(A)+(-.5,-.3)$) {};
			\node (out) at ($(A)+(1,0.3)$) {};
			\node (kout) at ($(out)+(2,0)$) {};
			\foreach \point in {(in),(out)} {
				\draw[fill] \point circle (2pt);
			}
		\end{scope}
		\begin{scope}
			\myGlobalTransformation{0}{\drawHeight};
		\end{scope}
		\begin{scope}
			\myGlobalTransformation{0}{\drawHeighttimesminusone};	
			\node (kin) at ($(3,0)+(-.5,-.3)+(-1,0)$) {};
		\end{scope}
		\draw[<-] (in) -- node[left] {$k$} (kin);
		\draw[->] (out) -- node[above] {$k$} (kout);	
	\end{scope}
	\begin{scope}[xshift=7cm]
		\begin{scope}
			\myGlobalTransformation{0}{0};
			\draw[classes,rotate=0] (3,0) node (A) {} ellipse (1.5cm and 1.5cm);
			\node[anchor=north west] at ($(A)+(345:1.5 and 1.5)$) {$A$};
			\node (in) at ($(A)+(-.5,-.3)$) {};
			\node (out) at ($(A)+(1,0.3)$) {};
			\node (kout) at ($(out)+(2,0)$) {};
			\foreach \point in {(in),(out)} {
				\draw[fill] \point circle (2pt);
			}
		\end{scope}
		\begin{scope}
			\myGlobalTransformation{0}{\drawHeight};
			\node (kpast) at ($(3,0)+(-.5,-.3)+(0,0)$) {};
			\node (ipast) at ($(3,0)+(1,0.3)+(0,0)$) {};
			\node (jpast) at ($(3,0)+(1,0.3)+(1,0)$) {};
		\end{scope}
		\begin{scope}
			\myGlobalTransformation{0}{\drawHeighttimesminusone};	
			\node (kin) at ($(3,0)+(-.5,-.3)+(-1,0)$) {};
		\end{scope}
		\draw[<-] (in) -- node[left] {$k$} (kin);
		\draw[->] (out) -- node[above] {$k$} (kout);
		\draw[->] (in) -- node[left] {$k$} (kpast);
		\draw[->] (out) -- node[left] {$i$} (ipast);
		\draw[->] (out) -- node[right] {$j$} (jpast);
	\end{scope}
\end{scope}
\begin{scope}[yshift=-5cm]
	\draw[->] (-1,\drawHeight-.5) node (timetip) {} -- +(0,-\drawHeight-1.5) node[anchor=south east] {time\ \ };
	\begin{scope}[xshift=0cm]
		\begin{scope}
			\myGlobalTransformation{0}{0};
			\draw[classes,rotate=0] (3,0) node (A) {} ellipse (1.5cm and 1.5cm);
			\node[anchor=north west] at ($(A)+(345:1.5 and 1.5)$) {$A$};
			\node (in) at ($(A)+(-.5,-.3)$) {};
			\node (out) at ($(A)+(1,0.3)$) {};
			\node (two) at ($(A)+(0,0)$) {};
			\node (onein) at ($(in)+(-2,0)$) {};
			\foreach \point in {(in),(out),(two)} {
				\draw[fill] \point circle (2pt);
			}
		\end{scope}
		\begin{scope}
			\myGlobalTransformation{0}{\drawHeight};
		\end{scope}
		\begin{scope}
			\myGlobalTransformation{0}{\drawHeighttimesminusone};	
			\node (twoin) at ($(3,0)+(0,0)+(-1,0)$) {};
			\node (threein) at ($(3,0)+(1,0.3)+(0,-1)$) {};
		\end{scope}
		\draw[<-] (in) -- node[above left] {$1$} (onein);
		\draw[<-] (two) -- node[left] {$2$} (twoin);
		\draw[<-] (out) -- node[left] {$3$} (threein);
			
	\end{scope}
	\begin{scope}[xshift=7cm]
		\begin{scope}
			\myGlobalTransformation{0}{0};
			\draw[classes,rotate=0] (3,0) node (A) {} ellipse (1.5cm and 1.5cm);
			\node[anchor=north west] at ($(A)+(345:1.5 and 1.5)$) {$A$};
			\node (in) at ($(A)+(-.5,-.3)$) {};
			\node (out) at ($(A)+(1,0.3)$) {};
			\node (two) at ($(A)+(0,0)$) {};
			\node (onein) at ($(in)+(-2,0)$) {};
			\foreach \point in {(in),(out),(two)} {
				\draw[fill] \point circle (2pt);
			}
		\end{scope}
		\begin{scope}
			\myGlobalTransformation{0}{\drawHeight};
			\node (oneout) at ($(3,0)+(-.5,-.3)+(0,0)$) {};
			\node (twoout) at ($(3,0)+(0,0)+(1,0)$) {};
			\node (threeout) at ($(3,0)+(1,0.3)+(1,0)$) {};
		\end{scope}
		\begin{scope}
			\myGlobalTransformation{0}{\drawHeighttimesminusone};	
			\node (twoin) at ($(3,0)+(0,0)+(-1,0)$) {};
			\node (threein) at ($(3,0)+(1,0.3)+(0,-1)$) {};
		\end{scope}
		\draw[<-] (in) -- node[above left] {$1$} (onein);
		\draw[<-] (two) -- node[left] {$2$} (twoin);
		\draw[<-] (out) -- node[left] {$3$} (threein);
		\draw[->] (in) -- node[left] {$1$} (oneout);
		\draw[->] (two) -- node[left] {$2$} (twoout);
		\draw[->] (out) -- node[left] {$3$} (threeout);		
	\end{scope}		
\end{scope}
\end{tikzpicture}
\caption{The construction of timelike edges. Above left, the first situation, where one edge of $G_q$ enters $A$ and one edge of $G_q$ leaves from $A$. Above right, three timelike edges are constructed in that first situation; they will belong to $G_{q+1}$. Below left, the second situation, where three edges of $G_q$ enter $A$. Below right, the three new timelike edges that are constructed in that second situation. We notice that this construction implies the current conservation at all points in the class $A$ on the right, in both situations. Whereas on the left, before constructing the three new timelike edges, the current can be only weakly conserved at $A$.}
\label{fig:newarrows}
\end{figure}

In either case, we add the three timelike edges thus constructed to the set of edges of $G_q$. Next we will draw some spacelike edges, with ends in $ \bigcup_{B \in U_F(A)} B$. So far exactly three edges, with colors $1$, $2$ and $3$, arrive at some points in $ \bigcup_{B \in U_F(A)} B$. They are the three timelike edges that we just drew. Indeed, edges that were drawn during a previous step $\tilde{q} \leq q$ cannot have ends in $ \bigcup_{B \in U_F(A)} B$. On the contrary, if $\tilde{q}=0$, their ends lie in $\Lambda$ and, if $\tilde{q}>0$, they lie in $\tilde{A}\cup \bigcup_{\tilde{B} \in U_F(\tilde{A})} \tilde B$ for some class $\tilde{A}$ in $S_{\tilde{q}-1}$ and then this class $\tilde{A}$ has been removed so that it does not belong to $S_{q'}$ for $q'\geq \tilde{q}$. So $\tilde{A}$ cannot coincide with $A$. Neither can it coincide with some class $B$ in $U_F(A)$, because Lemma~\ref{lemma:responsibleofone} implies that there is at most one class $A'$ such that $\tilde{A} \in U_F(A')$ and $\tilde{A} \in S_{\tilde{q}-1}$ implies that such a class $A'$ has been removed from the stock at step $\tilde{q}-1$ or before, so $A'\neq A$.

Hence three currents with colors $1$, $2$ and $3$ enter three classes $B_1$, $B_2$ and $B_3$ in $U_F(A)$. If these classes do not coincide, we need to draw some edges in order to achieve a weak current conservation as in property (P\ref{P2}). We already encountered a similar situation at step $q=0$. There we took advantage of the connected graph $g_{\Lambda}$ and drew spacelike edges associated to the links of a minimal subgraph connecting $C_1$, $C_2$ and $C_3$. Here we can do exactly the same, using the connected graph $g(A)$ on $U_F(A)$ defined in Section~\ref{sec:neighborclasses}. We choose a minimal connected subgraph of $g(A)$ with $B_1$, $B_2$ and $B_3$ in its set of vertices. For each link in this minimal tree, we can construct an associated spacelike edge whose ends, direction and colors are prescribed by the same rule as at step $q=0$ (see Figure~\ref{fig:newforks}). Finally we add all spacelike edges thus drawn, together with the three new timelike edges, to the set of edges of $G_q$ to form the graph $G_{q+1}$. Its set of vertices consists of the ends of its edges.
\def\drawHeight{2.5}
\def\drawHeighttimesminusone{-2.5}
\begin{figure}
\centering
\begin{tikzpicture}
[scale=.7,classes/.style={ultra thick,rounded corners}]
\draw[->] (-1,\drawHeight+2) node (timetip) {} -- +(0,-\drawHeight-4.5) node[anchor=south east] {time\ \ };
\draw (-1,0) -- +(-3pt,0) node[anchor=east] {$\temps(A)$};
\draw (-1,\drawHeight) -- +(-3pt,0) node[anchor=east] {$\temps(A)-1$};
\begin{scope}
	\myGlobalTransformation{0}{0};
	\draw[classes,rotate=0] (5,0) node (A) {} ellipse (3.5cm and 1.5cm);
	\node[anchor=north west] at ($(A)+(345:3.5 and 1.5)$) {$A$};
	\node (in) at ($(A)+(-2,0)+(-.5,-.3)$) {};
	\node (out) at ($(A)+(2,0.3)$) {};
	\node (twoout) at ($(out)+(2.5,0)$) {};
	\foreach \point in {(in),(out)} {
		\draw[fill] \point circle (2pt);
	}
\end{scope}
\begin{scope}
	\myGlobalTransformation{0}{\drawHeight};
	\node (twopast) at ($(3,0)+(-.5,-.3)+(0,0)$) {};
	\node (onepast) at ($(5,0)+(2,0.3)+(0,0)$) {};
	\node (threepast) at ($(5,0)+(2,0.3)+(1,0)$) {};
	\path (twopast) -- ++(0,1) node (a) {} -- ++(-1.5,1.5) node (b) {} -- ++(1,1.3) node (c) {} -- ++(2,0) node (d) {} -- ++ (1,.5) node (e) {} -- ++(2,0) node (f) {} -- ++(.5,-1.3) node (g) {} -- ++(0,-2) node (h) {};
	\path[classes] (twopast) +(0,.3) ellipse (.5 and 1);
	\node[left] at ($(twopast)+(0,.3)+(210:.5 and 1)$) {};
	\path[classes] (onepast) +(.4,0) circle (1);
	\node[right] at ($(onepast)+(.4,0)+(0:1 and 1)$) {};
	\path[classes,rotate=325] (b)+(.1,.7) ellipse (.5 and 1.4);
	\path[classes,rotate=45] (d)+(.4,-1+.4) ellipse (1 and 1.1);
	\path[classes] (f)+(.2,-.5) ellipse (.8 and 1.3);
\end{scope}
\begin{scope}
	\myGlobalTransformation{0}{\drawHeighttimesminusone};	
	\node (twoin) at ($(3,0)+(-.5,-.3)+(-1,0)$) {};
\end{scope}
\draw[<-] (in) -- node[left] {$2$} (twoin);
\draw[->] (out) -- node[near end,above] {$2$} (twoout);
\draw[->] (in) -- node[left] {$2$} (twopast);
\draw[->] (out) -- node[left] {$1$} (onepast);
\draw[->] (out) -- node[right] {$3$} (threepast);
\begin{scope}
	\myGlobalTransformation{0}{\drawHeight};
	\node (twopast) at ($(3,0)+(-.5,-.3)+(0,0)$) {};
	\node (onepast) at ($(5,0)+(2,0.3)+(0,0)$) {};
	\node (threepast) at ($(5,0)+(2,0.3)+(1,0)$) {};
	\path (twopast) -- ++(0,1) node (a) {} -- ++(-1.5,1.5) node (b) {} -- ++(1,1.3) node (c) {} -- ++(2,0) node (d) {} -- ++ (1,.5) node (e) {} -- ++(2,0) node (f) {} -- ++(.5,-1.3) node (g) {} -- ++(0,-2) node (h) {};
	\draw[fill=white,fill opacity=.5] (0,-1.5) rectangle (10,5.5);
	\node[below right] at (10,-1.5) {$U_F(A)$};
	\draw[classes] (twopast) +(0,.3) ellipse (.5 and 1);
	\node[left] at ($(twopast)+(0,.3)+(210:.5 and 1)$) {$B_2$};
	\draw[classes] (onepast) +(.4,0) circle (1);
	\node[right,fill=white,inner sep=1pt, outer sep=3pt] at ($(onepast)+(.4,0)+(0:1 and 1)$) {$B_1=B_3$};
	\draw[classes,rotate=325] (b)+(.1,.7) ellipse (.5 and 1.4);
	\draw[classes,rotate=45] (d)+(.4,-1+.4) ellipse (1 and 1.1);
	\draw[classes] (f)+(.2,-.5) ellipse (.8 and 1.3);
	\foreach \point in {(onepast),(twopast),(threepast),(a),(b),(c),(d),(e),(f),(g),(h)} {
		\draw[fill] \point circle (2pt);
	}
\end{scope}
\draw[->] (a) -- node[below left,inner sep=1pt] {$2$} (b);
\draw[->] (c) -- node[above] {$2$} (d);
\draw[->] (e) -- node[above] {$2$} (f);
\draw[->] (g) -- node[right] {$2$} (h);
\end{tikzpicture}
\caption{The construction of spacelike edges. Three new timelike edges have been constructed and arrive into classes $B_1$, $B_2$ and $B_3$ in $U_F(A)$. Here four new spacelike edges are constructed. They carry a current with color $2$ from class to class, starting from $B_2$ and arriving into $B_1=B_3$ where it compensates the currents with colors $1$ and $3$. The classes in $U_F(A)$ that contain vertices of $G_{q+1}$ will belong to $S_{q+1}$. The other classes in $U_F(A)$ are not shown. They will not belong to $S_{q+1}$ nor take part in subsequent steps of the construction of the graph $G$.}
\label{fig:newforks}
\end{figure}
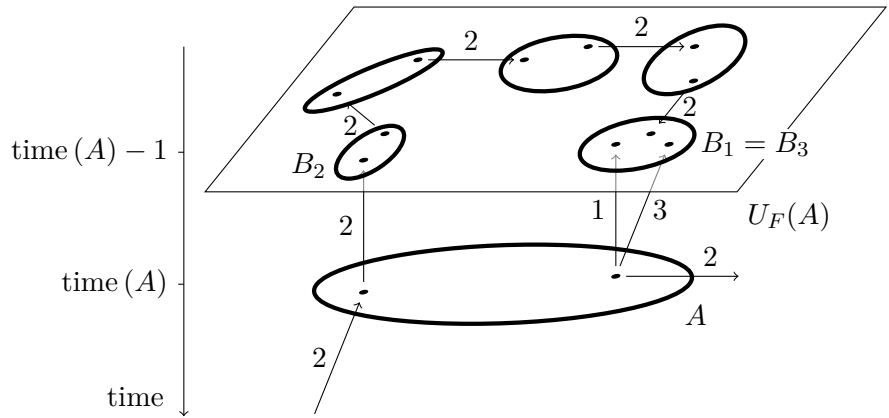

As we already mentioned, we transform the stock $S_q$ into a new stock $S_{q+1}$ by removing $A$ from $S_q$ and adding the classes in $U_F(A)$ that contain vertices of $G_{q+1}$.
\begin{lemma}\label{lemma:q+1}
$G_{q+1}$ and $S_{q+1}$ satisfy properties (P\ref{P1}) to (P\ref{P4}).
\end{lemma}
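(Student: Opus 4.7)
The plan is a four-pronged verification by induction, using crucially the fact that every edge added in passing from $G_q$ to $G_{q+1}$ has its endpoints inside $A \cup \bigcup_{B \in U_F(A)} B$, and that the stock is altered only by removing the exploited class $A$ and inserting those classes of $U_F(A)$ which receive at least one endpoint of a new edge. Because of this strict locality, every point of $\bar{U}^{\infty}(\Lambda)$ outside $A \cup \bigcup_{B \in U_F(A)} B$ keeps the same incident edges as in $G_q$, and every class of $S_q$ other than $A$ keeps the same incident edges as well; the inductive hypothesis then transfers word for word on these pieces. The real work concentrates on $A$ itself (which leaves the stock and must acquire pointwise conservation) and on the newly stocked classes in $U_F(A)$ (which need only weak conservation).

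For (P\ref{P1}), the added content is the conservation at each point of $A$. I would split along the two alternatives of the induction hypothesis (P\ref{P2}) at $A$. In the first alternative, one $k$-coloured edge was entering $A$ at some point $a_1$ and one $k$-coloured edge was leaving from some point $a_2$ (possibly coinciding with $a_1$); the three new timelike edges---one of colour $k$ leaving $a_1$ and one each of the two remaining colours leaving $a_2$---make the out-minus-in difference equal for all three colours at every point of $A$, where virtual edges at sources in $A$ are counted in exactly the same manner as in the hypothesis. In the second alternative, each $a_k$ already received a single colour-$k$ edge and the single new outgoing timelike edge of colour $k$ at $a_k$ balances it. All other points of $A$ are incident to no edge at all.

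For (P\ref{P2}) at the new classes $B \in U_F(A) \cap S_{q+1}$, I would reproduce the verification carried out at step $q=0$: the three new timelike edges deposit currents of colours $1$, $2$ and $3$ into classes $B_1$, $B_2$ and $B_3$ respectively, and the colouring rule applied to the minimal subtree of $g(A)$ joining these three classes was designed so that every spacelike edge carries the isolated colour toward the side where the other two colours accumulate. Splitting into the sub-configurations $B_1=B_2=B_3$, two equal and one different, and all three distinct, the pattern of edges incident to each newly stocked class falls into either the first or the second alternative of (P\ref{P2}).

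Property (P\ref{P3}) is then a bookkeeping identity: if the chosen minimal subtree of $g(A)$ has $m$ links it has $m+1$ vertices, so exactly $m+1$ classes are inserted into $S_{q+1}$ while one is removed, and the graph gains $m$ spacelike edges; hence $(\text{number of spacelike edges in } G_{q+1}) - (|S_{q+1}|-1)$ equals $(\text{number of spacelike edges in } G_q) - (|S_q|-1)$, which is $0$ by induction. For (P\ref{P4}), the breaking of $A$ into its individual points is the only threat to connectedness when the classes of $S_{q+1}$ are identified, but Lemma~\ref{lemma:g(C)} applied with $C=A$ ensures that the minimal subtree in $g(A)$ is connected, so after identification $B_1, B_2, B_3$ end up in a single component, and the three new timelike edges tether the relevant points of $A$ to that component. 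The main obstacle in the whole induction step is the uniform case analysis for (P\ref{P2}) at the new classes in the three sub-configurations of $\{B_1, B_2, B_3\}$; the remaining verifications are local bookkeeping made possible by the locality of the step.
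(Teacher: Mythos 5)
Your proposal is correct and follows essentially the same route as the paper's proof: locality of the modifications reduces everything to $A$ and the newly stocked classes, (P1) is checked at $A$ via the two alternatives of the inductive (P2), (P2) at the new classes repeats the $q=0$ minimal-tree argument, (P3) is the same link/vertex count, and (P4) uses the connectedness of the minimal subtree of $g(A)$ together with the fact that the new timelike edges attach the vertices of $G_q$ lying in $A$ to the new connected piece.
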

\begin{proof}\ 
\begin{enumerate}[\hspace{.1cm}({P}1)$_{q+1}$]
\item We use the induction hypothesis and consider only the modifications from $S_q$ and $G_q$ to $S_{q+1}$ and $G_{q+1}$. First, since $A$ does not belong to $S_{q+1}$, we have to check the current conservation at all points of $A$. The new timelike edges have been drawn deliberately in such a way that the current is conserved at all points of $A$, taking into account the virtual edges entering the sources, the edges that were already in $G_q$ and these new timelike edges of $G_{q+1}$. Besides, the new spacelike edges have their vertices in $ \bigcup_{B \in U_F(A)} B$ but not in $A$. So the current is conserved at all points of $A$.

Second, new edges of $G_{q+1}$ that were not in $G_q$ have all their vertices in $A\cup \bigcup_{B \in U_F(A)} B$. We have already dealt with the points of $A$ and all vertices of $G_{q+1}$ that lie in $\bigcup_{B \in U_F(A)} B$ also belong to $\bigcup_{C \in S_{q+1}} C$ so they do not have to satisfy the current conservation.
\item Again, we only have to consider new classes in $S_{q+1}$ and new edges in $G_{q+1}$. All classes in $S_{q+1}\setminus S_q$ belong to $U_F(A)$. As at step $q=0$, the new spacelike edges of $G_{q+1}$ have been drawn with property (P\ref{P2}) in mind. The same argument as in the proof of (P\ref{P2})$_{q=0}$, with examination of the minimal tree used in the construction of spacelike edges, can be used here to show the weak current conservation at all classes in $S_{q+1}\setminus S_q$.

As all new edges of $G_{q+1}$ have their ends in $A\cup \bigcup_{B \in U_F(A)} B$, classes in $S_{q+1}\cap S_q$ are not affected by the modifications to $G_q$.
\item Considering the induction hypothesis, we only need to compare the change in the number of classes when $S_q$ is replaced with $S_{q+1}$, with the number of spacelike edges in $G_{q+1}$ that were not in $G_q$. Now the exploited class $A$ is the only class that belongs to $S_q$ but not to $S_{q+1}$. On the other hand, the classes that belong to $S_{q+1}$ but not to $S_q$ are exactly the classes in $U_F(A)$ that contain vertices of $G_{q+1}$. These classes do not belong to $S_q$, because $A$ does and, as we noticed above, no two classes in $S_q$ are connected by a directed path in the forest $F$. So $\norm{S_{q+1}}-\norm{S_q}$ is equal to the number of classes in $U_F(A)$ that contain vertices of $G_{q+1}$, minus one. Similarly to the case $q=0$, this is the number of links in the chosen minimal subgraph of $g(A)$ connecting $B_1$, $B_2$ and $B_3$. Therefore, by construction of $G_{q+1}$, it is also the number of new spacelike edges in $G_{q+1}$.
\item Suppose that for every class $C$ in $S_{q+1}$, all points of $C$ are identified. First we consider the subgraph of $G_{q+1}$ made of all new spacelike edges that were not in $G_q$. By the same argument as in the case $q=0$, it is a connected subgraph. Next we consider the three new timelike edges of $G_{q+1}$ that were not in $G_q$. The construction of the new spacelike edges was based on a minimal tree of $g(A)$ that connects the classes $B_1$, $B_2$ and $B_3$ reached by these three timelike edges. Consequently the three new timelike edges are connected to the connected subgraph made of the new spacelike edges. Finally we consider all vertices of $G_q$ seen as a subgraph of $G_{q+1}$. By the induction hypothesis, $G_q$ would have been a connected subgraph of $G_{q+1}$ if for every class $C$ in $S_q$, all points of $C$ had been indistinguishable. But although $A$ belongs to $S_q$, it does not belong to $S_{q+1}$, so the points of $A$ are not identified and $G_{q}$ is not necessarily connected. However, the induction hypothesis still implies that every vertex of $G_q$ is connected to some vertex of $G_q$ that belongs to $A$. Now, by construction of the three new timelike edges of $G_{q+1}$, any vertex of $G_q$ that belongs to $A$ is an end of one of these three new timelike edges. Therefore all vertices of the subgraph $G_q$ are connected to the connected subgraph made of the new timelike edges and the new spacelike edges. So $G_{q+1}$ is connected.
\end{enumerate}
\vspace{-.5cm}
\end{proof}

\subsubsection{The properties of $G_Q$ and $S_Q$} \label{sec:q=Q}
Since the number of classes is finite and since at each step $q$ a class of the stock $S_{q-1}$ is chosen and definitely removed so that it cannot belong to the subsequent stocks $S_{q'}$, $q'\geq q$, we know that the induction process will stop at some step $q=Q$ finite. All classes $A$ in $S_Q$ are unexploitable, that is to say, have an empty $U_F(A)$. When the construction is over, we obtain a graph $G_Q$ on the cluster $\bar{U}^{\infty}(\Lambda)\subset V$. We rename it $G$.

In the particular case where $\Lambda$ is a singleton $\{v_{\Lambda}\}$, the three sources $\pi_1$, $\pi_2$ and $\pi_3$ necessarily coincide with $v_{\Lambda}$. In that case and if an error happens at $v_{\Lambda}$, $G$ is an empty graph with no edge, as can be seen by inspection of the construction procedure. Then we define the set $V_G$ of its vertices as the singleton $V_G=\{v_{\Lambda}\}$ instead of $V_G=\varnothing$. In all other cases, the set $V_G$ is simply defined as the set of ends of all edges of $G$.

The stock $S_Q$ of classes contains only singletons made of error points in $V$, as we noticed at the beginning of Section~\ref{sec:graphG}. We name $\hat{V}_G$ the set of these points, as expressed in equation~\eqref{SQVG}. We will see that they are vertices of $G$.

By construction, $G$ is a finite graph on $\bar{U}^{\infty}(\Lambda)$, its edges are timelike edges and spacelike edges and its vertices are the ends of its edges, or the unique point $v_{\Lambda}$ in the case discussed above. Let us analyze properties (P\ref{P1}) to (P\ref{P4}) in the case $q=Q$. Property (P\ref{P1}) implies that the current is conserved at all points of $\bar{U}^{\infty}(\Lambda)$ except maybe at points in the singletons that form $S_Q$, that is at points in $\hat{V}_G$. But Property (P\ref{P2}) implies the current conservation at all points of $\hat{V}_G$, because the weak current conservation holds for classes that are singletons made of these points. So the current is conserved at all points of $\bar{U}^{\infty}(\Lambda)$. Here we keep in mind that in the current balance, both the edges of $G$ and the three sources have to be taken into account. Property (P\ref{P3}) tells us that the number of spacelike edges in $G$ is equal to the number of points in $\hat{V}_G$ minus one. Note that this property is the part wherein the definition of classes and the forest structure of $F$ play a crucial role. And using again the fact that classes in $S_Q$ are singletons, Property (P\ref{P4}) means that the graph $G$ is connected.

Next we show that the set of vertices of $G$ contains the sources $\pi_1$, $\pi_2$ and $\pi_3$ and all points of $\hat{V}_G$. If $\diam(\Lambda)>0$, by construction of $\pi_1$, $\pi_2$ and $\pi_3$, all three sources cannot coincide. Two of them at most coincide. Therefore current conservation at those sources cannot be achieved without edges of $G$ passing there. On the other hand, if $\Lambda=\{v_{\Lambda}\}$, either $v_{\Lambda}$ is an error point and we defined $V_G=\{v_{\Lambda}\}$, or no error happens at $v_{\Lambda}$ and then the graph construction includes drawing timelike edges starting from $v_{\Lambda}$ at step $q=1$.

Besides, by construction of the stocks $S_q$ for $q$ in $\{0,\dotsc, Q\}$, all classes in $S_Q$ contain vertices of $G$ or sources. As we just showed that sources are themselves vertices of $G$ and since the classes in $S_Q$ are singletons, all points in $\hat{V}_G$ are vertices of $G$. So $\hat{V}_G$ is a special subset of the set $V_G$ of vertices of $G$.
\begin{lemma}
$\hat{V}_G$ is the set of all vertices of $G$ such that no timelike edge of $G$ starts from them.
\end{lemma}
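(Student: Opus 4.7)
The plan is to prove the two inclusions separately.

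For the inclusion $\hat{V}_G \subseteq \{v \in V_G \mid \text{no timelike edge of $G$ starts from $v$}\}$, I would simply unpack the definitions. If $v \in \hat{V}_G$, then by equation~\eqref{SQVG} the singleton $\{v\}$ belongs to $S_Q$, so $\{v\}$ is unexploitable, meaning $U_F(\{v\})=\varnothing$ and consequently $\bar{U}(v)=\varnothing$: the point $v$ is an error point. By the very definition of timelike edges in Section~\ref{sec:currents}, a timelike edge with color $k$ starting from $v$ must arrive at a point of $\bar{U}(v) \cap Z_k(v) \subseteq \bar{U}(v) = \varnothing$, which is impossible.

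For the reverse inclusion, the key structural claim I would establish is: \emph{if a class $C$ is exploited at some step $q+1 \leq Q$, then every vertex of $G$ lying in $C$ has a timelike edge of $G$ starting from it.} To prove this claim I would combine two auxiliary observations. First, once $C$ has been removed from the stock at step $q+1$, no edge drawn at any later step has an endpoint in $C$. Indeed, at a subsequent step $q' > q+1$, the new edges have endpoints contained in $A' \cup \bigcup_{B \in U_F(A')} B$ for the exploited class $A' \in S_{q'-1}$; one has $C \neq A'$ since $C$ was already removed, and if $C$ were in $U_F(A')$ then by Lemma~\ref{lemma:responsibleofone} this $A'$ would be uniquely determined as the single class for which $C$ is responsible, namely the class whose earlier exploitation placed $C$ in the stock (or no such $A'$ exists if $C \in S_0$), and in either case $A' \notin S_{q'-1}$. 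Second, at the moment $C$ is exploited, property (P\ref{P2}) guarantees weak current conservation at $C$, so the vertices of $G_q$ lying in $C$ are exactly the points where edges enter or leave $C$: either two points $a_1, a_2$ (possibly coinciding) in the one-in-one-out configuration, or three points $a_1, a_2, a_3$ in the three-in configuration. The construction rules of Section~\ref{sec:q} then draw a timelike edge from each of these $a_i$'s, so each existing vertex of $G$ in $C$ acquires a timelike edge starting from it, and no further vertex is added to $C$ after this step by the first observation.

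With the claim in hand, the reverse inclusion follows quickly. Let $v \in V_G$ be a vertex from which no timelike edge of $G$ starts. In the non-degenerate case, $v$ lies in some class $C \subseteq \bar{U}^\infty(\Lambda)$. If $C \notin S_Q$, then $C$ has been exploited, and the claim above produces a timelike edge of $G$ starting from $v$, contradicting the hypothesis. Hence $C \in S_Q$, and since classes in $S_Q$ are singletons formed by error points, $C = \{v\}$ with $v \in \hat{V}_G$. The degenerate case in which $\Lambda = \{v_\Lambda\}$ and an error occurs at $v_\Lambda$, where $G$ is empty and $V_G = \hat{V}_G = \{v_\Lambda\}$ by convention, is then trivial.

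The main obstacle is the bookkeeping behind the first auxiliary observation: one must track carefully which vertices may accumulate in $C$ between the moment $C$ enters the stock and the moment $C$ is exploited, and then between its exploitation and the end of the construction. The forest structure of $F$ together with Lemma~\ref{lemma:responsibleofone} is what prevents $C$ from reappearing in any later $U_F(A')$, and property (P\ref{P2}) is what forces the endpoints of $G$-edges in $C$ at the moment of exploitation to coincide with the $a_i$'s chosen by the construction.
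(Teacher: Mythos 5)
Your proof is correct and follows essentially the same route as the paper's: the forward inclusion via $\bar{U}(v)=\varnothing$ for error points, and the reverse inclusion by observing that the class containing a non-error vertex must eventually be exploited, at which point the construction rules attach a timelike edge to every existing vertex of that class. The only difference is organizational — you argue class by class and make explicit the bookkeeping (no edges touch $C$ after its exploitation, via Lemma~\ref{lemma:responsibleofone} and the forest structure of $F$) that the paper's vertex-by-vertex argument leaves implicit.
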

\begin{proof}
We already noticed that errors happen at all points of $\hat{V}_G$: for every $v$ in $\hat{V}_G$, $\ushort \omega_v=1$ while $\varphi ( \stvect \omega_{U(v)} )=0$. Therefore, $v$ has an empty $\bar{U}(v)$ and no timelike edge of $G$ leaves from $v$. On the other hand, all vertices of $G$ that do not belong to $\hat{V}_G$ are the starting point of a timelike edge of $G$. Indeed, when a new vertex $a$ is created at step $q$ of the inductive construction of $G$, it is the end of a timelike edge or a spacelike edge. If it is the arrival end of a timelike edge or the end of a spacelike edge, it belongs to a class $A$ that will be added into $S_q$. If $a$ does not belong to $\hat{V}_G$, $A$ has a nonempty $U_F(A)$. Then, at some later step $q'>q$, the class $A$ in $S_{q'-1}$ will be picked and at least one timelike edge will be drawn, starting from $a$ and arriving into $\bar{U}(a)$. In the special case where $\Lambda$ is a singleton $\{v_{\Lambda}\}$ and where $G$ is the graph with one vertex $v_{\Lambda}$ and no edge, this unique vertex belongs to $\hat{V}_G$ because the class $\{v_{\Lambda}\}$ contains the sources and therefore belongs to $S_0$.
\end{proof}

\subsection{Final estimates} \label{sec:final}

For any space-time configuration $\stvect \omega$ in $S^V$ satisfying the initial condition $\ushort \omega_v=0 \ \forall v \in V_0$ and realizing the event $\ushort \omega_v =1  \, \forall v \in \Lambda$, we have constructed an associated graph $G$ on $V$ and an associated subset $\hat{V}_G$ of $V$. We write $g(\stvect \omega)=G$ and $G\in \mathcal{G}$ where $\mathcal{G}$ is the set of all possible graphs thus constructed. We notice that the description of $\hat{V}_G$ as the set of all vertices of $G$ such that no timelike edge of $G$ starts from them, is independent of the space-time configuration $\stvect \omega$ at the base of the construction of $G$ and $\hat{V}_G$. If two different space-time configurations $\stvect \omega$ and $\tilde{\stvect \omega}$ have the same associated graph $g(\stvect \omega)=g(\tilde{\stvect \omega})=G$, then they will also have the same associated subset $\hat{V}_G$.

We can now rewrite the probability of finding `ones' at all sites of $\Lambda$. For all $\epsilon$ in $[0,1]$ and for all $\ushort \mu$ in $M^{(0)}_{\epsilon}$, using the properties~\eqref{error} and \eqref{init} of $\ushort \mu$,
\begin{align}
\ushort \mu(\ushort \omega_v =1  \, \forall v \in \Lambda) &= \ushort \mu(\ushort \omega_v=0 \ \forall v \in V_0 \text{ and } \ushort \omega_v =1  \, \forall v \in \Lambda)  \notag \\
&= \sum_{G \in \mathcal G } \ushort \mu(\ushort \omega_v=0 \ \forall v \in V_0, \ \ushort \omega_v =1  \, \forall v \in \Lambda , \   g(\stvect \omega)=G) \notag \\
& \leq \sum_{G \in \mathcal G }\ushort \mu(\ushort \omega_v=1 \text{ and }  \varphi ( \stvect \omega_{U(v)} )=0 \ \  \forall v \in \hat{V}_G )\notag \\
& \leq  \sum_{G \in \mathcal G } \epsilon^{\norm{\hat{V}_G}}\notag \\
& = \sum_{s \in \nat} \norm{\{G \in \mathcal G \mid \norm{\hat{V}_G} = s+1 \}} \epsilon^{s+1} \label{upbound}
\end{align}

We explore the set $\mathcal G$ of all possible graphs that can be obtained from the inductive construction described in Section~\ref{sec:graphG}. We saw that $\hat{V}_G$ contains exactly $s+1$ points if and only if the number of spacelike edges in $G$ is equal to $s$. So we want to estimate for all $s$ the number of graphs $G$ in $\mathcal G$ that have exactly $s$ spacelike edges.

Any graph $G$ in $\mathcal G$ satisfies the current conservation at all points of $V$. We suggested in Section~\ref{sec:currents} that the benefit of this current conservation would be a relation between the number of spacelike edges of $G$, its number of timelike edges and the diameter of $\Lambda$. Here we derive this relation.

In Section~\ref{sec:currents} we introduced three reference vectors $v^{(1)}$, $v^{(2)}$ and $v^{(3)}$ associated with the three colors, such that $v^{(1)}+v^{(2)}+v^{(3)}=0$. We made a remark about the scalar product of the displacement vector of a timelike edge of color $k$ with $v^{(k)}$. Now we consider the scalar product with the reference vector $v^{(k)}$ of the displacement vector of any edge $e$ of $G$, directed from a point $a_e$ to a point $b_e$ and bearing color $k_e$. We call it the $\textit{extent}$ of the edge $e$: $\extent(e)=\left( v^{(k_e)} \middle| b_e-a_e \right)$.

We will sum all these scalar products $\extent(e)$ associated to all edges $e$ of $G$ and also to the three virtual edges that feed into the sources $\pi_1$, $\pi_2$ and $\pi_3$ and that take part in the current balance. The latter are not edges of $G$ and we did not even specify their departure points so their extent is not defined yet. For the sake of completeness, let us choose any point $\pi$ in $V$ as the common departure point of these three virtual edges. For instance we can take $\pi=(0,0,0)$. Then the extent of the virtual edge with color $k$ that feeds into the source $\pi_k$ is $\left( v^{(k)} \middle| \pi_k-\pi \right)=\left( v^{(k)} \middle|  \pi_k \right)$. Introducing a common departure point $\pi$ for the three virtual edges preserves the current conservation at all points because it only adds three currents leaving from $\pi$ and these three currents neutralize each other.
 
We compute $\Extent(G)$, the sum of the extents of all edges of $G$ and of the three virtual edges. For any edge $e$, we can regard $\extent(e)$ as the sum of two contributions: a contribution of the departure point, $\left( v^{(k_e)} \middle| -a_e  \right)$, and a contribution of the arrival point, $\left( v^{(k_e)} \middle| b_e  \right)$. Consequently, $\Extent(G)$ itself can be seen as the sum of the total contributions of all points in $V$. Now the current is conserved at each point $v$ of $V$ and that current conservation implies the following lemma.
\begin{lemma}\label{lemma:extent0}
The total contribution of every point $v$ in $V$ to $\Extent(G)$ is $0$.
\end{lemma}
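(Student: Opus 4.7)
The plan is to reorganize $\Extent(G)$ as a sum over $v \in V$ of a ``contribution of $v$'' coming from the edges (real or virtual) incident to $v$, and to show that this contribution vanishes individually at each $v$. First I would split the extent of any directed edge $e$ going from $a_e$ to $b_e$ with color $k_e$ as
\begin{equation*}
\extent(e) = \ps{v^{(k_e)}}{b_e - a_e} = \ps{v^{(k_e)}}{b_e} + \bigl(-\ps{v^{(k_e)}}{a_e}\bigr),
\end{equation*}
i.e.\ as an arrival contribution $+\ps{v^{(k_e)}}{b_e}$ at $b_e$ plus a departure contribution $-\ps{v^{(k_e)}}{a_e}$ at $a_e$. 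The three virtual edges all leave from the common point $\pi$, which we chose as the origin, so $\ps{v^{(k)}}{\pi}=0$ for each $k$ and the virtual-edge contribution at $\pi$ is zero, while the virtual edge of color $k$ contributes $+\ps{v^{(k)}}{\pi_k}$ at $\pi_k$.

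Next, fix an arbitrary $v \in V$ and let $n^{\mathrm{in}}_k(v)$, $n^{\mathrm{out}}_k(v)$ denote the numbers of edges of $G$ of color $k$ that arrive at $v$ and leave from $v$, respectively. Gathering the above contributions yields
\begin{equation*}
C(v) = \sum_{k=1}^{3}\ps{v^{(k)}}{v}\bigl(n^{\mathrm{in}}_k(v) + \charf{v=\pi_k} - n^{\mathrm{out}}_k(v)\bigr).
\end{equation*}
Now, current is conserved at every point of $V$: the bulk points $v \in \bar{U}^{\infty}(\Lambda) \setminus \bigcup_{A \in S_Q} A$ are covered by property (P\ref{P1}) at $q=Q$; the points lying in classes of $S_Q$ are covered by property (P\ref{P2}) at $q=Q$, which becomes genuine (not merely weak) current conservation because each class in $S_Q$ is a singleton; finally, outside the cluster and away from the sources there are no incident edges at all. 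In all cases, the quantity
\begin{equation*}
d_v := n^{\mathrm{out}}_k(v) - n^{\mathrm{in}}_k(v) - \charf{v=\pi_k}
\end{equation*}
does not depend on $k \in \{1,2,3\}$, so one can factor it out:
\begin{equation*}
C(v) = -d_v \sum_{k=1}^{3}\ps{v^{(k)}}{v} = -d_v \ps{v^{(1)}+v^{(2)}+v^{(3)}}{v} = 0,
\end{equation*}
using the defining identity $v^{(1)}+v^{(2)}+v^{(3)}=0$ recorded in Section~\ref{sec:currents}.

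The proof is thus essentially a one-line algebraic manipulation, and there is no deep obstacle; the only care required is the bookkeeping for the virtual edges. Specifically, one must include the virtual edge of color $k$ in the count at $\pi_k$ exactly as dictated by property (P\ref{P1}), and use that the corresponding departure contributions at $\pi$ vanish thanks to the choice $\pi=(0,0,0)$. Once this is done, the cancellation between the three colors at each point $v$ is immediate from $\sum_k v^{(k)} = 0$.
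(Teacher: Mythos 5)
Your proof is correct and is essentially the paper's argument: both rest on the fact that current conservation makes the per-color imbalance at $v$ (including the virtual edge at $\pi_k$) independent of $k$, so the contribution of $v$ factors through $\sum_{k} v^{(k)} = 0$. The paper phrases the cancellation by partitioning the edges at $v$ into same-color in/out pairs and all-in or all-out triples of the three colors, whereas you factor out the common difference $d_v$ algebraically, but this is only a difference of bookkeeping, not of substance.
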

\begin{proof}
The difference between the total number of edges with color $k$ leaving from $v$ and the total number of edges with color $k$ arriving at $v$ is the same for all $k$. So the set of edges attached to $v$ can be partitioned into the following two types of subsets: subsets made of an edge leaving from $v$ and of a second edge arriving at $v$, both bearing the same color $k$; subsets made of three edges with the three different colors, all leaving from $v$ or all arriving at $v$. For a subset of the first type, the total contribution of $v$ to the extents of the two edges is $\left( v^{(k)} \middle| -v  \right)+\left( v^{(k)} \middle| v  \right)=0$. For a subset of the second type, the total contribution of $v$ to the extents of the three edges is $\left(  v^{(1)}+v^{(2)}+v^{(3)} \middle| \pm v  \right)=0$. So the total contribution of $v$ to the sum of the extents of all edges attached to $v$ is $0$.
\end{proof}

So the current conservation property implies that $\Extent(G)=0$. If we go back to the definition of $\Extent(G)$ as the sum of the extents of all edges of $G$ and of the three virtual edges, this leads to a constraint on the number of spacelike edges and the number of timelike edges of $G$. Indeed, we computed in Section~\ref{sec:currents} the extent of a timelike edge with any color and we obtained the value $1$. It means that timelike edges have a tendency to drive currents of the three different colors toward three opposite directions. Now there must be enough spacelike edges to counterbalance this and the fact that the three sources themselves are separated by a distance of order $\diam(\Lambda)$, as revealed by inequality~\eqref{extdiam}. The following lemma establishes a necessary condition for this current balance.
\begin{lemma}\label{lemma:diamforks}
The number $s$ of spacelike edges in $G$ and the number $t$ of timelike edges satisfy
\begin{equation}\label{diamforks}
 3 s \geq t + \frac{3}{2} \diam(\Lambda).
\end{equation}
\end{lemma}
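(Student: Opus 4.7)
The plan is to extract the inequality from the identity $\Extent(G)=0$ given by Lemma~\ref{lemma:extent0}, by splitting the sum of extents into three contributions: one coming from timelike edges, one from spacelike edges, and one from the three virtual edges feeding the sources.

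First I would compute the extent of each type of edge separately. For timelike edges of color $k$, it was already noted in Section~\ref{sec:currents} that the scalar product of the displacement vector with $v^{(k)}$ equals $1$, so these edges contribute exactly $t$ to $\Extent(G)$. For the three virtual edges, with common chosen origin $\pi$, the sum of their extents equals $\left(v^{(1)}\middle|\pi_1\right)+\left(v^{(2)}\middle|\pi_2\right)+\left(v^{(3)}\middle|\pi_3\right)$ (the terms involving $\pi$ cancel because $v^{(1)}+v^{(2)}+v^{(3)}=0$), which by inequality~\eqref{extdiam} is at least $\tfrac{3}{2}\diam(\Lambda)$.

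Next I would bound the extent of a single spacelike edge. The displacement vector of a spacelike edge has zero time component, and its spatial part lies in $\{\pm(1,0),\pm(0,1),\pm(1,-1)\}$ by the definition of such an edge through a common $U(c)$. A direct inspection of the six possible values of $\left(v^{(k)}\middle|\cdot\right)$ for each color $k\in\{1,2,3\}$ shows that $|\extent(e)|\leq 3$ for every spacelike edge $e$.

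Assembling these pieces, $\Extent(G)=0$ rewrites as
\begin{equation*}
\sum_{e\ \mathrm{spacelike}} \extent(e) \;=\; -\,t \;-\; \sum_{k=1}^3 \left(v^{(k)}\middle|\pi_k\right),
\end{equation*}
whose right-hand side is at most $-\,t-\tfrac{3}{2}\diam(\Lambda)$. Taking absolute values and using the spacelike bound $|\extent(e)|\leq 3$ together with the triangle inequality on the left yields $3s \geq t + \tfrac{3}{2}\diam(\Lambda)$, which is the claim. No step here is expected to be a real obstacle: the only mildly delicate point is the verification that the reference vectors $v^{(k)}$ were indeed chosen so that all six admissible spacelike displacements satisfy the uniform bound $|\extent(e)|\leq 3$, and this is a finite case check.
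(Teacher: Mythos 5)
Your proposal is correct and follows essentially the same route as the paper: both rest on $\Extent(G)=0$ from Lemma~\ref{lemma:extent0}, the exact value $1$ for the extent of each timelike edge, inequality~\eqref{extdiam} for the virtual edges, and a uniform bound of $3$ on the extents of spacelike edges. The only cosmetic difference is that the paper uses the one-sided bound $\extent(e)\geq -3$ directly to get $t-3s+\tfrac{3}{2}\diam(\Lambda)\leq 0$, whereas you pass through $\lvert\extent(e)\rvert\leq 3$ and the triangle inequality; both finite checks succeed with the vectors $v^{(1)}=(-3,0,-1)$, $v^{(2)}=(0,-3,-1)$, $v^{(3)}=(3,3,2)$.
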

\begin{proof}
The extent of a timelike edge is always equal to $1$. The displacement vector of a spacelike edge is in $\{ \pm(1,0,0),\pm(0,1,0),\pm(1,-1,0)\}$ so we can check that the extent of a spacelike edge is always at least $-3$. The sum of the extents of the three virtual edges is
\begin{equation*}
\left( v^{(1)} \middle| \pi_1 \right) + \left( v^{(2)} \middle| \pi_2 \right)+\left( v^{(3)} \middle| \pi_3 \right) \geq \frac{3}{2} \diam(\Lambda),
\end{equation*}
using inequality~\eqref{extdiam}. Now the sum $\Extent(G)$ of the extents of all timelike edges and spacelike edges of $G$ and of the three virtual edges is $0$:
\begin{equation*}
t -3 s + \frac{3}{2} \diam(\Lambda) \leq \Extent(G)=0
\end{equation*}
whence inequality~\eqref{diamforks} follows.
\end{proof}

In the light of inequality~\eqref{diamforks}, any graph in $\mathcal G$ with exactly $s$ spacelike edges has a total number of edges between $s$ and $4s - \frac{3}{2} \diam(\Lambda)$. It remains to estimate the number of graphs in $\mathcal G$ with a given number of edges.
\begin{lemma}\label{lemma:countinggraphs}
For all $n$ in $\nat$, the number of graphs in $\mathcal G$ with exactly $n$ edges is at most $48^{2n}$.
\end{lemma}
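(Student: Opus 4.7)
The plan is to encode each graph $G \in \mathcal{G}$ with exactly $n$ edges by a walk of length $2n$ starting at a canonical vertex of $G$, and then bound the number of possible such walks.

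First, I would exploit two facts already established in Section~\ref{sec:q=Q}: the graph $G = G_Q$ is connected (Property (P\ref{P4}) at step $q=Q$) and the three sources $\pi_1, \pi_2, \pi_3$ are always among its vertices. Since $\pi_3$ is a deterministic function of $\Lambda$ alone, it serves as a base point common to every $G \in \mathcal{G}$. By a standard graph-theoretic observation (duplicate every edge to obtain an Eulerian multigraph in which every vertex has even degree and apply Euler's theorem), every connected graph on $n$ edges admits a closed walk of length exactly $2n$ based at any prescribed vertex that traverses each edge exactly twice. Let $W(G) = (v_0 = \pi_3, v_1, \ldots, v_{2n} = \pi_3)$ denote such a walk.

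Second, I would encode $W(G)$ by the sequence of triples $(d_i, k_i, o_i)_{i=1}^{2n}$, where $d_i = v_i - v_{i-1} \in \ent^3$ is the displacement in space-time, $k_i \in \{1,2,3\}$ is the color of the edge traversed at step $i$, and $o_i$ indicates, for a spacelike edge, which of its two endpoints is the head of the directed edge (for a timelike edge the direction as a directed edge is determined by the sign of the time component of $d_i$). A direct enumeration from the definitions in Section~\ref{sec:currents} gives at most $6 \times 2 = 12$ admissible $(d_i, k_i)$ pairs for a timelike step (six displacements in $\ent^3$, each compatible with exactly two of the three colors, as dictated by the minimal space-time zero-sets $Z_k(v)$) and at most $6 \times 3 \times 2 = 36$ admissible triples for a spacelike step (six displacements between points sharing a common $U(c)$, three colors, two possible orientations as a directed edge). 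In total, at most $48$ choices are available at each step, independently of the current vertex $v_{i-1}$.

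Third, I would check that the map $G \mapsto (d_i, k_i, o_i)_{i=1}^{2n}$ is injective: starting from the fixed vertex $v_0 = \pi_3$, the displacements recursively recover the vertex sequence $v_0, v_1, \ldots, v_{2n}$, and the accompanying colors and orientations uniquely reconstruct each traversed edge as a directed colored edge. Since every edge of $G$ appears (twice) in $W(G)$, the graph $G$ is fully determined by the encoded sequence. It follows that the number of graphs in $\mathcal{G}$ with exactly $n$ edges is at most $48^{2n}$.

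I expect the main technical point to be the case analysis verifying that the combined count of admissible (type, color, displacement, orientation) data at each step is at most $48$: this requires carefully inspecting the constraints that the definitions of timelike edges (via the subsets $Z_k(v)$) and spacelike edges (via the neighborhoods $U(c)$) impose on the available displacements and colors. The rest of the argument merely combines the already-established structural properties of $G$ with a standard walk-encoding argument, and the constant $48$ need not be optimized since it only enters the final estimate through a geometric factor.
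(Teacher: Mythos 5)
Your proposal is correct and follows essentially the same route as the paper: encode each connected graph by a closed walk from a fixed source traversing every edge twice, record displacement, color and orientation at each step, verify injectivity of the encoding, and bound the per-step choices by $12$ timelike plus $36$ spacelike $= 48$. The only differences are cosmetic (base point $\pi_3$ instead of $\pi_1$, and organizing the timelike count by displacement-then-color rather than color-then-direction), so nothing further is needed.
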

\begin{proof}
All graphs in $\mathcal G$ are connected and contain the point $\pi_1$ in their sets of vertices. They are all made of timelike edges and spacelike edges, which are oriented edges with three possible colors. To each graph $G$ in $\mathcal G$, as $G$ is connected, we can associate a walk that starts from $\pi_1$ and jumps to successive vertices along edges of $G$ -- regardlessly of their orientations -- to finally come back to $\pi_1$ after having jumped along every edge exactly twice. At each step of the walk, we record the displacement vector of the jump in space-time and the orientation and color of the travelled edge. The obtained sequence contains enough information to redraw the graph so it corresponds to a unique graph in $\mathcal G$. If $\Lambda$ is a singleton, an empty sequence corresponds to the unique graph with zero edge and one vertex. Therefore the number of graphs in $\mathcal G$ with exactly $n$ edges is bounded above by the number of such sequences with $2n$ terms.

Now for any term of the sequence, that is for each step of such a walk, there are at most $48$ possible choices for the recorded displacement vector, orientation and color. Indeed, $48$ different types of edges can be attached to a vertex $v$ of a graph in $\mathcal G$. For each color $k$, $2$ timelike edges can leave from $v$, toward the two points in $Z_k(v) $, and $2$ timelike edges can arrive at $v$, starting from the two points $w$ such that $v$ belongs to $Z_k(w) $. Taking into account the three colors, the total number of possible timelike edges attached to $v$ is $12$. The spacelike edges have a displacement vector in $\{ \pm(1,0,0),\pm(0,1,0),\pm(1,-1,0)\}$ and there is no restriction on their orientations or colors so the total number of possible spacelike edges attached to $v$ is $36$. The number of different sequences corresponding to the different graphs in $\mathcal G$ with $n$ edges is consequently at most $48^{2n}$.
\end{proof}

This discussion leads to an upper bound for the factor $\lvert \{G \in \mathcal G \mid \norm{\hat{V}_G}  = s+1 \} \rvert$ in inequality~\eqref{upbound}:
\begin{align}
\norm{\{G \in \mathcal G \mid \norm{\hat{V}_G} = s+1 \}} &=  \norm{\{G \in \mathcal G \mid G \text{ has exactly $s$ spacelike edges} \}}\notag \\
&\leq \sum_{n=s}^{\lfloor{4s-\frac{3}{2} \diam(\Lambda)\rfloor}} \hspace{-.6cm} \norm{\{G \in \mathcal G \mid G \text{ has exactly $n$ edges}\}} \notag \\
&\leq \sum_{n=s}^{\lfloor{4s-\frac{3}{2} \diam(\Lambda)\rfloor}}48^{2n} \label{countinggraphsNEC} \\
&\leq \sum_{n=0}^{4s}48^{2n} \notag \\
&\leq 2 . 48^{8s} \notag
\end{align}
for all $s$. Moreover, $\norm{\{G \in \mathcal G \mid \norm{\hat{V}_G} = s+1 \}}=0$ if $s<\frac{1}{2}\diam (\Lambda)$. Inserting this upper bound into inequality~\eqref{upbound} gives the final estimate
\begin{align*}
\ushort \mu(\ushort \omega_v =1  \, \forall v \in \Lambda) &\leq 2 \epsilon \sum_{\stackrel{s \in \natÊ}{Ês\geq \frac{1}{2}\diam(\Lambda)}} (48^{8} \epsilon)^{s}\\
&\leq (48^8\epsilon)^{\frac{1}{2}\diam(\Lambda)+1} 
\end{align*}
if $\epsilon$ is small enough, that is to say if $\epsilon \leq  \frac{48^8-2}{(48^8)^2}$. If we take $C=48^8$ and $\epsilon^*= \frac{C-2}{C^2}>0$, this ends the proof of Theorem~\ref{thm:upperbound}.
\end{proof}
\begin{rmk}
\added{In this proof of Theorem~\ref{thm:upperbound} and in the following chapters, the choices of the numerical values of constants appearing in the bounds are not optimal. One could obtain stronger bounds, for instance simply by counting graphs without taking into account the colors and orientations of their edges. Indeed, these extra features of the edges are needed only until the proof of Lemma~\ref{lemma:diamforks} using the current conservation principle. Another way of improving estimates is given by \citet{BeSi88} and is reviewed by \citet{Ga95}. It relies on theorems about spanning trees of graphs. Here we do not try to optimize the values of the constants in the proofs because our main purpose is to prove the existence of such bounds.}
\end{rmk}
\chapter{General eroder in two dimensions}\label{chap:eroder2D}
In this chapter we generalize Theorem~\ref{thm:upperbound} to all two-dimensional monotonic binary CA with the erosion property. Like in the particular case of the North-East-Center CA, the main idea of the proof is to adapt the graph construction introduced in the proof of the stability theorem by \citet{To80}, by choosing three points in $\Lambda$, separated by a distance proportional to $\diam(\Lambda)$, to be the sources of the currents transported by the edges of the graph, instead of placing the three sources at the same point.

\section[Probability of a block of cells\\aligned in the minority state]{Probability of a block of cells\\aligned in the minority state%
\sectionmark{Probability of a block of cells aligned in the minority state}}\label{sec:result2D}
\sectionmark{Probability of a block of cells aligned in the minority state}

The monotonic binary CA satisfying the erosion criterion were introduced in Section~\ref{sec:erosion}. For all of them, the trajectory $\stvect \omega^{(0)}$ is stable and the invariant measure $\muinv{0}$ of the associated PCA obeys equation~\eqref{muinvstabil}. Here we restrict ourselves to the models in dimension $d=2$, except in Section~\ref{sec:refvectors} where the argument is more general. 

We still consider finite subsets $\Lambda$ of $\{(x,t_{\Lambda})\mid x \in \plan\}$, $t_\Lambda \in \mathbb N ^*$, that are connected in the following sense. The set $\Lambda$ is \textit{connected} if the graph $\tilde{g}_{\Lambda}$ on the points of $\Lambda$ is connected, where two different points $a$ and $b$ of $\Lambda$ are connected with an edge of $\tilde{g}_{\Lambda}$ if $x(a)-x(b) $ belongs to $\{u_1-u_2 \mid u_1\neq u_2 \in \mathcal U \}$. Here we use again the notation $x(v)=(x_1(v),x_2(v)) \in \plan$ for the space coordinates of a point $v$ in the space-time lattice $V=\plan \times \nat$.
\begin{thm}\label{thm:upperboundgen}
The following holds for any monotonic binary CA in dimension $2$ that satisfies the erosion criterion. There exist $\epsilon^* >0$, $0<c<\infty$ and $C<\infty$ such that for all $\epsilon$ with $0 \leq \epsilon \leq \epsilon^*$, for all stochastic processes $\ushort \mu$ in $M_{\epsilon}^{(0)}$, for all times $t_\Lambda$ in $\mathbb N ^*$, for all finite and connected subsets $\Lambda$ of $\{(x,t_{\Lambda})\mid x \in \plan\}$, the probability of finding `ones' at all sites of $\Lambda$ has the following upper bound:
\begin{equation*}
\ushort \mu(\ushort \omega_v =1  \, \forall v \in \Lambda) \leq (C \epsilon )^{c \, \diam(\Lambda)+1}
\end{equation*}
\end{thm}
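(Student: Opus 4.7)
The plan is to adapt the three-step construction of Chapter~\ref{chap:NEC}—cluster, forest of equivalence classes, oriented graph $G$ with currents of three colors—to the general two-dimensional eroder, by choosing three distinguished minimal space-time zero-sets and three associated reference vectors supplied by the erosion criterion. First I would apply Helly's theorem (already stated in the excerpt) to the finite family of convex sets $\conv(\mathcal Z_j) \subset \mathbb R^2$, $j=1,\dotsc,J$: since by the erosion criterion this family has empty intersection and lives in $\mathbb R^2$, there exist three indices $j_1,j_2,j_3$ with $\conv(\mathcal Z_{j_1})\cap\conv(\mathcal Z_{j_2})\cap\conv(\mathcal Z_{j_3})=\varnothing$, and equivalently $\bigcap_{k=1}^3\conv(Z_{j_k})=\varnothing$ in $\mathbb R^{3}$ for the corresponding minimal space-time zero-sets. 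By a separating-hyperplane / duality argument (this is the content expected from Section~\ref{sec:refvectors}), one can then produce three reference vectors $v^{(1)},v^{(2)},v^{(3)} \in \mathbb R^{3}$ with $v^{(1)}+v^{(2)}+v^{(3)}=0$ and a constant $\kappa>0$ such that $(v^{(k)}\mid w)\geq \kappa$ for every $w\in Z_{j_k}$ and every $k$. These vectors will play exactly the role of the three vectors $(-3,0,-1)$, $(0,-3,-1)$, $(3,3,2)$ in the North-East-Center proof.

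Next I would repeat verbatim the constructions of Sections~\ref{sec:responsible}--\ref{sec:neighborclasses}: the cluster $\bar U^\infty(\Lambda)$ (where $\bar U(v)$ is defined using the general updating function $\varphi$ via any choice of two or more sites of $U(v)$ that are in state $1$), the partition into equivalence classes, the forest $F$, and the connected neighbor graphs $g_\Lambda$ and $g(A)$. Lemma~\ref{lemma:responsibleofone}, Lemma~\ref{lemma:forest} and Lemma~\ref{lemma:g(C)} carry over with no change, since their proofs use only the forest structure of the responsibility relation, not the specific form of $\mathcal U$. The sources $\pi_1,\pi_2,\pi_3\in\Lambda$ are then chosen to maximize $(v^{(k)}\mid \pi_k)$ for $k=1,2,3$. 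Because the three convex hulls $\conv(Z_{j_k})$ have empty intersection, one obtains, by an elementary argument on the extreme points of $\Lambda$ in the three directions $v^{(k)}$, a lower bound $(v^{(1)}\mid \pi_1)+(v^{(2)}\mid \pi_2)+(v^{(3)}\mid \pi_3)\geq c_1\, \diam(\Lambda)$ for some $c_1>0$ depending only on the three chosen zero-sets.

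Timelike edges of color $k$ with starting point $v$ are then defined as edges of $G$ leading from $v$ to a point of $\bar U(v)\cap Z_{j_k}(v)$; the key fact is that, because $Z_{j_k}$ is a zero-set and $v$ is not an error point, this intersection is nonempty, so the inductive step of Section~\ref{sec:q} goes through unchanged (three new timelike edges, then spacelike edges along a minimal tree in $g(A)$ connecting the three arrival classes). Spacelike edges of color $k$ are as before edges between two points of $\bar U^\infty(\Lambda)$ both lying in some $U(c)$, carrying any orientation and color. Properties (P1)–(P4) and their proofs survive word-for-word, so the final graph $G$ is connected, conserves the three colored currents with the sources $\pi_k$, and its number of spacelike edges equals $\norm{\hat V_G}-1$.

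The final estimate runs as in Section~\ref{sec:final}. By construction the extent $(v^{(k_e)}\mid b_e-a_e)$ of a timelike edge is at least $\kappa$, while for a spacelike edge it is at least $-M$, where $M:=\max_{k,u,u'\in\mathcal U}|(v^{(k)}\mid (u-u',0))|$ depends only on $\mathcal U$ and on the three $v^{(k)}$. Current conservation gives $\Extent(G)=0$, hence
\begin{equation*}
\kappa\, t - M\, s + c_1\, \diam(\Lambda)\;\leq\;0,
\end{equation*}
so the number of spacelike edges satisfies $s\geq c_2\,\diam(\Lambda)$ and the total number of edges of $G$ is at most $c_3\, s$, with $c_2,c_3>0$ depending only on $\mathcal U$ and $\varphi$. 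Finally, every vertex of $G$ admits at most $N$ types of incident oriented colored edges, where $N$ depends only on $|\mathcal U|$, so the Peierls-style counting of Lemma~\ref{lemma:countinggraphs} gives at most $N^{2n}$ connected graphs in $\mathcal G$ with $n$ edges. Plugging this and the lower bound on $s$ into the estimate analogous to \eqref{upbound} yields
\begin{equation*}
\ushort\mu(\ushort\omega_v=1\ \forall v\in\Lambda)\;\leq\;\sum_{s\geq c_2\diam(\Lambda)} N^{2c_3 s}\,\epsilon^{s+1}\;\leq\;(C\epsilon)^{c\,\diam(\Lambda)+1}
\end{equation*}
for $C$ and $c$ depending only on $\mathcal U$ and $\varphi$, provided $\epsilon\leq \epsilon^*$ small enough.

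The main obstacle is the construction of the reference vectors $v^{(k)}$: one must exploit the erosion criterion not merely as an emptiness statement but as a quantitative separation condition, in order to guarantee that the timelike extent $\kappa$ is strictly positive and that the extent of the three virtual source edges is comparable to $\diam(\Lambda)$. Once this linear-algebra input from Section~\ref{sec:refvectors} is in place, the combinatorial core of the North-East-Center proof transfers without conceptual change, the only modifications being that the constants $\kappa$, $M$ and $N$ now depend on the specific eroder at hand.
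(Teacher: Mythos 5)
Your overall architecture is the same as the paper's: build reference vectors from the erosion criterion, place three sources at extremal points of $\Lambda$, rerun the cluster/classes/forest/graph construction of Chapter~\ref{chap:NEC}, and close with current conservation plus a Peierls count. The combinatorial part does indeed transfer essentially word-for-word, as you say. But there is a genuine gap in the step you dismiss as ``a separating-hyperplane / duality argument'': after selecting three minimal zero-sets via Helly, it is \emph{not} automatic that one can produce three reference vectors $v^{(1)},v^{(2)},v^{(3)}$ summing to zero, each uniformly positive on its $Z_{j_k}$, \emph{and} with pairwise non-parallel spatial projections. The duality (Rockafellar-type) argument can legitimately return only $m=2$ non-constant functionals whose spatial vectors are antiparallel — the three convex hulls may have empty intersection simply because two of them are separated by a pair of parallel lines (the NSMM CA of Section~\ref{sec:m=2} is exactly such a case: $\conv(\mathcal Z_1)$ and $\conv(\mathcal Z_4)$ are two disjoint vertical segments, and the natural functionals are $\phi_1=x_1$, $\phi_2=-x_1+1$, leaving nothing non-constant for a third direction). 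Your later inequality $(v^{(1)}\mid\pi_1)+(v^{(2)}\mid\pi_2)+(v^{(3)}\mid\pi_3)\geq c_1\,\diam(\Lambda)$ genuinely requires three pairwise non-parallel spatial directions, since otherwise $\Lambda$ is only confined to a strip and the projections capture its width in one direction, not its diameter. So in the degenerate case your ``elementary argument on the extreme points of $\Lambda$ in the three directions'' fails, and with it Lemma~\ref{lemma:diamforks} and the whole Peierls bound.

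The paper closes this gap with two dedicated lemmas: Lemma~\ref{lemma:threeparallel} (if the minimal number of functionals is $3$, the vectors are automatically pairwise non-parallel) and, crucially, Lemma~\ref{lemma:m2tom3}, which shows that when only two antiparallel directions are available one can always \emph{split} one of the two half-planes into two non-parallel half-planes that are still zero-sets — this uses the finiteness of the neighborhood $\mathcal U$ (only the intersection of the half-plane with $\mathcal U$ needs to remain a zero-set, so the boundary lines can be tilted inside a disk containing $\mathcal U$). You need this step, or an equivalent quantitative replacement, for your proof to go through; without it the argument covers only those eroders for which the duality happens to yield three non-degenerate directions.
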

\begin{rmk}
Theorem~\ref{thm:upperboundgen} implies the stability of $\stvect \omega^{(0)}$.
\end{rmk}
\begin{rmk}
The $0-1$ symmetry of the North-East-Center CA is not present in general in the other CA. Here we only make the assumption that the convex hulls of the minimal zero-sets have an empty intersection or equivalently that any finite island of cells with state $1$ surrounded with a sea of cells with state $0$ is eroded in a finite time. The symmetric counterpart of this hypothesis is not necessarily satisfied and so the symmetric counterpart of the upper bound in Theorem~\ref{thm:upperboundgen}, where the states $0$ and $1$ are exchanged, is not true in general.
\end{rmk}
\begin{rmk}\label{rmk:connected}
Theorem~\ref{thm:upperboundgen} holds for all finite subsets $\Lambda$ of $\{(x,t_{\Lambda})\mid x \in \plan\}$, $t_\Lambda \in \mathbb N ^*$, that are connected in a different sense from the natural nearest-neighbor connectedness in $\plan$. It depends on the neighborhood $\mathcal U$.

\noindent For some models, a ball in $\plan$ is not connected and thus Theorem~\ref{thm:upperboundgen} does not apply to it. The following natural two-dimensional extension of the Stavskaya CA gives an example of that restriction. The neighborhood of the origin is $\mathcal U=\{(0,0),(1,0)\}$ and the updating function is that of the Stavskaya CA introduced in Section~\ref{sec:Stavsdef}. This monotonic and binary CA in dimension $2$ verifies the erosion criterion. Then Theorem~\ref{thm:upperboundgen} applies to it. For that CA, one has $\{u_1-u_2 \mid u_1 \neq u_2 \in \mathcal U \}=\{\pm(1,0)\}$. So the only finite connected subsets of $\{(x,t_{\Lambda})\mid x \in \plan\}$ are horizontal segments of the form $\{(x_1,x_2,t_{\Lambda}) \mid x_1 \in \{x_{\textrm{min}},x_{\textrm{min}}+1,\dotsc,x_{\textrm{max}}\} \}$, $x_{\textrm{min}} , x_{\textrm{max}},x_2 \in \ent$. For such sets, Theorem~\ref{thm:upperboundgen} gives an estimation that implies Theorem~\ref{thm:stavskin lambda} in Chapter~\ref{chap:Stav} as a corollary, in the most general case where the noise is not supposed to be totally asymmetric.

\noindent On the other hand, for some models, some sets are connected that would not be connected in the natural nearest-neighbor sense. For instance, in the North-East-Center model, $\{u_1-u_2 \mid u_1\neq u_2 \in \mathcal U \} \allowbreak =\{\pm(1,0),\pm(0,1),\pm(1,-1)\} \allowbreak \supseteq \{\pm(1,0),\pm(0,1)\}$. Besides balls, sets such as $\{(x_1,x_2,t_{\Lambda}) \mid x_1 \in \{x_{\textrm{min}},x_{\textrm{min}}+1,\dotsc,x_{\textrm{max}}\} , x_2 = -x_1 + b \}$, $x_{\textrm{min}},x_{\textrm{max}},b \in \ent$, are connected while they would not be connected in the nearest-neighbor sense. Another example of connected set was given in Figure~\ref{fig:Lambda}.
\end{rmk}

Theorem~\ref{thm:upperboundgen} has the following corollary.
\begin{corollary}\label{cor:genmuinv}
For the PCA defined as a stochastic perturbation of any two-dimensional monotonic binary CA that satisfies the erosion criterion, the invariant measure $\muinv{0}$ has the following property. For the numbers $\epsilon^* >0$, $0<c<\infty$ and $C<\infty$ given by Theorem~\ref{thm:upperboundgen}, for all $\epsilon$ with $0\leq  \epsilon \leq \epsilon^*$, for all finite and connected subsets $\Lambda$ of $\plan$,
\begin{equation*}
\muinv{0}( \omega_x=1 \, \forall x \in \Lambda) \leq (C\epsilon)^{c\, \diam(\Lambda)+1}
\end{equation*}
\end{corollary}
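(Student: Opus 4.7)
The plan is to mimic the derivation of Corollary~\ref{cor:stavmuinv} from Theorem~\ref{thm:stavskin lambda} (and, equivalently, of Corollary~\ref{cor:NECmuinv} from Theorem~\ref{thm:upperbound}), exploiting the fact that the upper bound provided by Theorem~\ref{thm:upperboundgen} is uniform in the time coordinate $t_{\Lambda}$ and in the choice of stochastic process in $M_{\epsilon}^{(0)}$. The argument has three steps, none of which should present any genuine difficulty.

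First, I would fix a finite connected subset $\Lambda$ of $\plan$ and, for every $t \in \nat^*$, associate to it the lifted set $\Lambda_t = \{(x,t) \mid x \in \Lambda\} \subset \{(x,t) \mid x \in \plan\}$. The graph $\tilde g_{\Lambda_t}$ defined via the neighborhood $\mathcal U$ depends only on the spatial differences $x(a)-x(b)$, so this lift preserves both the connectedness and the diameter of $\Lambda$. Hence $\Lambda_t$ satisfies the hypotheses of Theorem~\ref{thm:upperboundgen}. Next, I would invoke the construction carried out in Section~\ref{sec:PCAformalism}: under the Bounded-noise assumption, the stochastic process $\ushort\mu$ in $M$ obtained by iterating the transfer operator $T$ starting from the Dirac measure $\sleb{0}$ belongs to $M_{\epsilon}^{(0)}$, and its marginal at time $k$ is $T^k \sleb{0}$. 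Applying Theorem~\ref{thm:upperboundgen} to $\ushort\mu$ and $\Lambda_k$ (for $k \geq 1$; the case $k=0$ is trivial as $\sleb{0}(\omega_x = 1\ \forall x \in \Lambda)=0$) would then yield
\begin{equation*}
T^k \sleb{0}\bigl(\omega_x = 1 \ \forall x \in \Lambda\bigr) \leq (C\epsilon)^{c\, \diam(\Lambda)+1}
\end{equation*}
with the constants $\epsilon^*$, $c$, $C$ supplied by Theorem~\ref{thm:upperboundgen} and independent of $k$.

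Finally, I would pass to the Cesàro limit. Averaging the uniform bound above over $k=0,\ldots,n-1$ preserves it, and since $\{\vect\omega \in X \mid \omega_x=1\ \forall x \in \Lambda\}$ is a cylinder set, one can apply definition~\eqref{muinv} of $\muinv{0}$ along the subsequence $(n_j)_{j \in \nat}$ and let $j \to \infty$ to recover the desired inequality. The only points requiring care are the uniformity in $t_{\Lambda}$ of the bound in Theorem~\ref{thm:upperboundgen} and the stability of such a bound under weak convergence evaluated on cylinder sets; both are completely analogous to what was already done in the proof of Corollary~\ref{cor:stavmuinv}, so this step should be the shortest.
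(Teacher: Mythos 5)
Your proposal is correct and follows essentially the same route the paper takes: it is the argument given for Corollary~\ref{cor:stavmuinv}, transplanted verbatim (uniformity of the bound of Theorem~\ref{thm:upperboundgen} in $t_{\Lambda}$ and in the process, applied to the process induced by iterating $T$ from $\sleb{0}$, then Ces\`aro averaging and passage to the limit on the cylinder set via definition~\eqref{muinv}). The small extra care you take with the lift $\Lambda_t$ preserving connectedness and with the $k=0$ case is sound and does not change the substance.
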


\section{Construction of the reference vectors}\label{sec:refvectors}

In the Proof of Theorem~\ref{thm:upperbound}, in order to obtain in Lemma~\ref{lemma:diamforks} a lower bound on the number of spacelike edges in the graph $G$, we used the current conservation and the fact that currents of the three colors emerging from the three sources at different extremal points of $\Lambda$ are carried by timelike edges toward three more and more separate regions of space as they plunge into the more and more remote past. This property of the timelike edges comes from the erosion phenomenon present in the North-East-Center CA, which can be expressed in terms of the erosion criterion about the convex hulls of the zero-sets.

The argument was set down in terms of the extents of the edges of $G$, that is to say in terms of scalar products with three reference vectors $v^{(1)}$, $v^{(2)}$, $v^{(3)}$ associated to the three colors.
The reference vectors have the following properties which are crucial for the Proof of Theorem~\ref{thm:upperbound}: their sum is $0$; their scalar product with the displacement vector of any timelike edge with the corresponding color is $1$; their projections onto space $\mathbb R^2$ are three two by two non-parallel vectors.

Here we explain how to construct reference vectors with similar properties in general for a monotonic binary CA in any dimension $d$ and satisfying the erosion criterion. The original construction can be found in articles by Toom \citep[Proof of Proposition 2]{To76}, \citep[Proof of Lemma 2]{To80}, and Fern\'andez and Toom~\citep[Proof of Theorem 4.2]{FeTo03}. We will use the obtained reference vectors in Section~\ref{sec:m=23} to prove Theorem~\ref{thm:upperboundgen} in two dimensions. We will also use them later in Part~\ref{part:expdecay} about exponential convergence to equilibrium in any dimension $d$.

\subsection{In space}\label{sec:refvectinspace}
We first construct intermediary vectors in the $d$-dimensional space and next we will convert them into reference vectors in the $d+1$-dimensional space-time.

\begin{lemma} \label{lemma:refvectinspace}
A monotonic binary CA in dimension $d$ verifies the erosion criterion if and only if there exist $m$ affine functionals $\phi_k:\mathbb{R}^d \to \mathbb{R}$, $k=1,\dotsc,m$, with $m\leq d+1$, possessing the two following properties: 
\begin{enumerate}[(i)]
\item for all $k$, $\{ x \in \mathbb{R}^d \mid \phi_k(x) \leq 0\}$ is a zero-set;
\item $\sum_{k=1}^m \phi_k$ is a positive constant function.
\end{enumerate}
\end{lemma}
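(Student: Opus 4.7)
My plan is to prove the two implications separately, using the finite-dimensional structure of minimal zero-sets (they are finite since they live in $\mathcal U$) together with Helly's theorem and a Hahn--Banach type separation argument.

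For the easy direction $(\Leftarrow)$, assume the functionals $\phi_k$ exist and set $H_k=\{x\in\mathbb R^d\mid \phi_k(x)\leq 0\}$, which is convex. Since $H_k$ is a zero-set, $H_k\cap\mathcal U$ must contain some minimal zero-set $\mathcal Z_{j(k)}$; convexity then gives $\conv(\mathcal Z_{j(k)})\subseteq H_k$. If some point $x$ were in $\bigcap_{j=1}^J \conv(\mathcal Z_j)$, it would in particular lie in each $H_k$, so $\sum_k\phi_k(x)\leq 0$, contradicting that $\sum_k\phi_k$ is a positive constant. Hence $\bigcap_j \conv(\mathcal Z_j)=\varnothing$.

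The substance of the lemma is the direction $(\Rightarrow)$. Assume the erosion criterion. The sets $\conv(\mathcal Z_j)$ are compact convex polytopes in $\mathbb R^d$ with empty intersection, so by Helly's theorem some subfamily of at most $d+1$ of them, say $\conv(\mathcal Z_{j_1}),\dots,\conv(\mathcal Z_{j_m})$ with $m\leq d+1$, already has empty intersection. In $(\mathbb R^d)^m$ consider the compact convex product $C=\conv(\mathcal Z_{j_1})\times\cdots\times\conv(\mathcal Z_{j_m})$ and the closed linear diagonal $D=\{(x,\dots,x)\mid x\in\mathbb R^d\}$. The erosion assumption says exactly $C\cap D=\varnothing$, so by the standard separation theorem (compact convex vs.\ closed convex) there is a linear functional $L(y_1,\dots,y_m)=\sum_k\ell_k(y_k)$ with $\sup_C L<\inf_D L$. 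Because $D$ is a linear subspace, $L|_D$ must vanish for $\inf_D L$ to be finite, giving $\sum_k\ell_k\equiv 0$ on $\mathbb R^d$ and $\beta:=\sup_C L<0$.

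Setting $M_k=\sup_{x\in\conv(\mathcal Z_{j_k})}\ell_k(x)$, the product structure of $C$ yields $\sum_k M_k=\beta$. Define the affine functionals
\[
\phi_k(x):=\ell_k(x)-M_k.
\]
Then $\phi_k\leq 0$ on $\conv(\mathcal Z_{j_k})\supseteq\mathcal Z_{j_k}$, so $\{\phi_k\leq 0\}\cap\mathcal U$ contains the minimal zero-set $\mathcal Z_{j_k}$ and is therefore itself a zero-set; meanwhile $\sum_k\phi_k(x)=-\beta>0$ is a positive constant, establishing (i) and (ii). The main obstacle will be arranging the separation cleanly: the product-with-diagonal trick is what converts "three convex sets with empty intersection" into the single linear separation one can manipulate, and one must be careful to exploit that $D$ is a subspace (forcing $\sum_k\ell_k=0$) and that $C$ is a product (so that the supremum of $L$ splits as $\sum_k M_k$). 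Once those two facts are in hand, the definition of $\phi_k$ and the bound $m\leq d+1$ from Helly's theorem fall out immediately.
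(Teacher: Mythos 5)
Your proof is correct, and it takes a genuinely different route to the hard implication than the paper does. The paper first writes each $\conv(\mathcal Z_j)$ as a finite intersection of closed half-spaces $\{f_i\leq 0\}$ (each of which is automatically a zero-set), and then invokes Rockafellar's theorem of the alternative for affine inequality systems: since no point satisfies all the inequalities $f_i\leq 0$, there are non-negative multipliers $\lambda_i$, at most $d+1$ of them non-zero, with $\sum_i\lambda_i f_i$ bounded below by a positive constant; the $\phi_k$ are the surviving $\lambda_i f_i$, and the bound $m\leq d+1$ and the constancy of the sum both come packaged with that theorem. You instead keep the convex hulls themselves, use Helly's theorem to cut the family down to $m\leq d+1$ sets with empty intersection, and then manufacture the functionals by a single strict separation of the compact product $C$ from the diagonal subspace $D$; the identity $\sum_k\ell_k\equiv 0$ is forced because a linear functional bounded below on a subspace vanishes there, and the splitting $\sup_C L=\sum_k M_k$ is exactly what lets you normalize each $\ell_k$ into an affine $\phi_k$ that is non-positive on $\conv(\mathcal Z_{j_k})$. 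The two arguments are of course cousins (the paper even derives its version of Helly as a corollary of Rockafellar's theorem), but yours trades the theorem of the alternative for the more elementary pair Helly $+$ strict separation, and it locates the source of the bound $m\leq d+1$ in Helly rather than in the multiplier count; the paper's route, in exchange, never needs the product-with-diagonal construction and gets property (i) for free from the half-space representation of the polytopes. Both correctly handle the easy converse the same way.
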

\begin{rmk}
We can suppose that none of the functionals given by Lemma \ref{lemma:refvectinspace} is constant. Indeed, all constant functionals must be non-positive for Property (i) to hold. Then we can always discard all of them while preserving Properties (i) and (ii) of Lemma~\ref{lemma:refvectinspace}.
\end{rmk}
\begin{rmk}\label{rmk:vectinspace}
An affine functional $\phi_k:\mathbb{R}^d \to \mathbb{R}$ can be regarded as the sum of a constant term $\phi_k(0)$ with a linear term which is a dot product with some fixed vector $v_k$ in $\mathbb{R}^d$: $\phi_k(\cdot) = \left( v_k \mid \cdot \right) + \phi_k(0)$. The vectors $v_k$, with $k=1,\dotsc,m$, are the intermediary vectors in space that we will use to define the reference vectors in space-time. They are non-zero because the functionals $\phi_k$ are non-constant. Properties (i) and (ii) of Lemma~\ref{lemma:refvectinspace} imply that each vector $v_k$ is an outward normal vector to the boundary of a half-space that is a zero-set and that $\sum_{k=1}^m v_k =0$.
\end{rmk}

In order to prove Lemma~\ref{lemma:refvectinspace}, we will need the following theorem which stems from the combination of Theorems 21.3 and 21.4 in the book of \citet{Ro70}.
\begin{thm}[Rockafellar]\label{thm:rockafellar}
Let $g_1,\dotsc,g_M$ be a finite collection of affine functions on $\mathbb R^d $. Then one and only one of the following alternatives holds:
\begin{enumerate}[(i)]
\item there exists a vector $x \in \mathbb R ^d$ such that
\begin{equation*}
g_i(x) \leq 0 \quad \forall i \in \{1,\dotsc,M\};
\end{equation*}
\item there exist non-negative real numbers $\lambda_i$ such that, for some $\epsilon >0$, one has
\begin{equation*}
\sum_{i=1}^M \lambda_i g_i(x) \geq \epsilon \quad \forall x\in \mathbb R^d.
\end{equation*}
The numbers $\lambda_i$ can be chosen so that at most $d+1$ of them are non-zero.
\end{enumerate}
\end{thm}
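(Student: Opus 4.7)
The plan is to prove this theorem as an affine Farkas-type lemma combined with Carath\'eodory's theorem for cones. I begin by decomposing each affine function as $g_i(x)=\ps{a_i}{x}+b_i$ with $a_i\in\realspace$ and $b_i\in\real$, so that the inequality $g_i(x)\leq 0$ becomes $\ps{(a_i,b_i)}{(x,1)}\leq 0$ in $\realspacetime$. The first (easy) step is to check that (i) and (ii) are mutually exclusive: if both held, then for the purported $x$ and the non-negative $\lambda_i$ we would have $\sum_i\lambda_i g_i(x)\leq 0$, contradicting $\sum_i\lambda_i g_i(x)\geq\epsilon>0$.

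For the substantive direction, I observe that an affine functional $\sum_i\lambda_i g_i$ which is bounded below by a positive constant on all of $\realspace$ must have vanishing linear part, so alternative (ii) is equivalent to the assertion $(0,1)\in K$, where $K:=\{\sum_{i=1}^M\lambda_i(a_i,b_i)\mid \lambda_i\geq 0\}\subset\realspacetime$ denotes the convex cone generated by the vectors $(a_i,b_i)$. I would then argue by contrapositive: assume $(0,1)\notin K$ and produce $x\in\realspace$ satisfying (i). The central analytic fact is that $K$ is a closed convex cone --- this is the Minkowski--Weyl theorem for finitely generated cones --- and given closedness, the Hahn--Banach separation theorem applied to $K$ and the singleton $\{(0,1)\}$ yields a vector $(y,s)\in\realspacetime$ with $\ps{(a_i,b_i)}{(y,s)}\leq 0$ for every $i$ (the bound $\leq 0$ rather than $\leq \alpha$ follows from $K$ being a cone containing the origin, via a rescaling argument) and $\ps{(0,1)}{(y,s)}=s>0$. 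Setting $x:=y/s$, I then obtain $g_i(x)=\ps{a_i}{x}+b_i\leq 0$ for each $i$, which is alternative (i).

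Finally, the sharper statement that at most $d+1$ of the $\lambda_i$ can be non-zero is a direct consequence of Carath\'eodory's theorem for convex cones: any vector belonging to the conic hull of a finite set in $\realspacetime$ can be written as a non-negative combination of at most $d+1$ of the generators. Applied to $(0,\epsilon)\in K$, this immediately yields the stated bound. The main obstacle I anticipate is the closedness of $K$: while standard, it does not follow immediately from the definition and typically requires a dedicated argument (for instance by induction on $M$, using that the sum of a closed cone with a single ray remains closed). Once closedness is in hand, the separation and Carath\'eodory steps are classical.
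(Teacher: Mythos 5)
The paper does not actually prove this theorem: it is imported verbatim as a combination of Theorems 21.3 and 21.4 of Rockafellar's \emph{Convex Analysis} \citep{Ro70}, so there is no in-paper argument to compare yours against. Your proof is a correct and self-contained derivation, and it is the standard one: homogenize by writing $g_i(x)=\langle a_i,x\rangle+b_i$ and passing to the vectors $(a_i,b_i)\in\mathbb{R}^{d+1}$; observe that an affine function bounded below by a positive constant must have zero linear part, so that alternative (ii) is exactly the statement $(0,1)\in K$ for the finitely generated cone $K$; and then either separate $(0,1)$ from the closed cone $K$ to produce the point $x=y/s$ realizing (i), or invoke conic Carath\'eodory in $\mathbb{R}^{d+1}$ to trim the representation of $(0,1)$ down to at most $d+1$ generators. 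All of these steps are sound, including the rescaling argument that upgrades the separation bound $\leq\alpha$ to $\leq 0$ on a cone containing the origin, and the mutual-exclusivity check.

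One caveat on the point you yourself flag as the main obstacle. The closedness of a finitely generated cone is indeed the crux, but the inductive route you sketch --- ``the sum of a closed cone with a single ray remains closed'' --- is not a valid general lemma: for non-polyhedral closed cones the Minkowski sum with a ray can fail to be closed (e.g.\ the second-order cone $\{z\geq\sqrt{x^2+y^2}\}$ in $\mathbb{R}^3$ plus the ray spanned by $(0,1,-1)$ has $(1,0,0)$ in its closure but not in the set). The clean proof of closedness goes through conic Carath\'eodory itself: every element of $K$ lies in the cone generated by a linearly independent subset of the $(a_i,b_i)$, each such sub-cone is closed because it is the image of the non-negative orthant under an injective linear map, and $K$ is the union of finitely many of them. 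Since you already invoke Carath\'eodory for the cardinality bound, this costs nothing extra and closes the only genuine gap in the write-up.
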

\begin{rmk}
An immediate corollary of Theorem~\ref{thm:rockafellar} is the following weaker version of a theorem that we used in Section~\ref{sec:erosion}.
\end{rmk}
\begin{corollary}[a version of Helly's theorem]\label{thm:helly}
Let there be a finite family of $d+1$ or more closed half-spaces in $\mathbb R^d$ such that, for every choice of $d+1$ half-spaces in that family, their intersection is nonempty. Then the intersection of all half-spaces of the family is nonempty.
\end{corollary}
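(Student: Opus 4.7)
The plan is to deduce the statement from Theorem~\ref{thm:rockafellar} by contraposition. Write each of the $M$ half-spaces as $H_i = \{x \in \mathbb R^d \mid g_i(x) \leq 0\}$ for some affine functional $g_i : \mathbb R^d \to \mathbb R$; the intersection $\bigcap_{i=1}^M H_i$ is exactly the set of $x$ satisfying $g_i(x) \leq 0$ for all $i$, i.e.\ alternative (i) in Rockafellar's theorem.

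Assume, for contradiction, that $\bigcap_{i=1}^M H_i = \varnothing$. Then alternative (i) fails, so Theorem~\ref{thm:rockafellar} forces alternative (ii): there exist non-negative reals $\lambda_1,\dotsc,\lambda_M$, at most $d+1$ of which are non-zero, and some $\epsilon > 0$ such that $\sum_{i=1}^M \lambda_i g_i(x) \geq \epsilon$ for every $x\in \mathbb R^d$. Let $I = \{i \mid \lambda_i > 0\}$, so $\norm{I} \leq d+1$. For the sub-collection indexed by $I$, alternative (ii) of Rockafellar's theorem still holds (with the same $\lambda_i$ and the same $\epsilon$), hence alternative (i) for that sub-collection fails; that is, $\bigcap_{i \in I} H_i = \varnothing$.

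To conclude, I need to upgrade this to the emptiness of an intersection of exactly $d+1$ half-spaces of the family, since the hypothesis is phrased that way. If $\norm{I} = d+1$ there is nothing to do. If $\norm{I} < d+1$, the family has at least $d+1$ members by hypothesis so one can pick any $(d+1) - \norm{I}$ extra indices outside $I$ and form a sub-family $J \supset I$ of size exactly $d+1$; then $\bigcap_{i \in J} H_i \subseteq \bigcap_{i \in I} H_i = \varnothing$, contradicting the assumption that every $d+1$ half-spaces of the family have nonempty intersection. The main subtlety, modest but worth spelling out, is this padding step ensuring we actually produce $d+1$ half-spaces with empty intersection; everything else is a direct invocation of Theorem~\ref{thm:rockafellar}.
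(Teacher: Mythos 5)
Your proposal is correct and follows essentially the same route as the paper: contradiction, representation of the half-spaces by affine functionals, and a direct appeal to Theorem~\ref{thm:rockafellar} to discard alternative (i) and extract at most $d+1$ functionals witnessing emptiness. The padding step you spell out (enlarging $I$ to a set of exactly $d+1$ indices) is left implicit in the paper's one-line conclusion, but it is the same argument.
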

\begin{proof}[Proof of Corollary~\ref{thm:helly}]
It can be proved by contradiction. Suppose that closed half-spaces $H_1,\dotsc,H_M$ in $\mathbb R^d$ have an empty intersection. Let us apply Theorem~\ref{thm:rockafellar} to $M$ affine functions $f_1,\dotsc, f_M$ such that, for all $i=1,\dotsc,M$, the half-space $H_i$ can be written as $H_i=\{ x \in \mathbb R^d \mid f_i(x) \leq 0\}$. Alternative (i) of Theorem~\ref{thm:rockafellar} can then be discarded. Alternative (ii) implies that $d+1$ half-spaces can be chosen from the family such that their intersection is empty.
\end{proof}

\begin{proof}[Proof of Lemma~\ref{lemma:refvectinspace}]
First, if there exist $m$ affine functionals $\phi_1,\dotsc ,\phi_m$ satisfying the two properties, then the erosion criterion is verified. Otherwise there exists a point $a$ that belongs to the convex hulls of all zero-sets. In particular $a$ belongs to $\{ x \in \mathbb{R}^d \mid \phi_k(x) \leq 0\}$ for all $k$. But then $\sum_{k=1}^m \phi_k(a) \leq 0$, which contradicts the second property.

On the other hand, let us suppose that $\bigcap_{j=1}^J \conv(\mathcal Z_j)= \varnothing$. For any $j$, since $\mathcal Z_j$ is a finite set, $\conv(\mathcal Z_j)$ can always be written as the intersection of a finite family of closed half-spaces -- see for instance the Proof of Theorem 3.1.1 in Chapter 3 of the book of \citet{Gr02}. Moreover, a closed half-space can always be written as the set $\{x \in \mathbb{R}^d \mid f(x) \leq 0\}$ for some affine functional $f:\mathbb{R}^d \to \mathbb{R}$. Therefore there exists a finite collection $f_1,\dotsc, f_M$ of affine functionals from $\mathbb{R}^d$ to $\mathbb{R}$ such that the sets $\{x \in \mathbb{R}^d \mid f_i(x) \leq 0\}$ with $i=1,\dotsc,M$ are zero-sets and have an empty intersection.

Next we apply Theorem~\ref{thm:rockafellar} to the functions $f_1,\dotsc,f_M$. We can immediately exclude Alternative (i) of Theorem~\ref{thm:rockafellar}. Alternative (ii) remains and we rename the functionals $\lambda_i f_i$ such that $\lambda_i$ is non-zero to $\phi_1,\dotsc,\phi_m$ with $m\leq d+1$. For every $k$, the set $\{x \in \mathbb R^d \mid \phi_k (x)\leq 0\} $ is identical to one of the zero-sets $\{x \in \mathbb{R}^d \mid f_i(x) \leq 0\}$ with $i=1,\dotsc,M$ so Property (i) of Lemma~\ref{lemma:refvectinspace} is verified. Furthermore, the sum $\sum_{k=1}^m \phi_k$ is an affine functional from $\mathbb R^d $ to $\mathbb R$ and it is bounded below by a positive constant $\epsilon$. Its linear part must then be zero therefore $\sum_{k=1}^m \phi_k$ is a constant. This constant is greater than or equal to $\epsilon$ so Property (ii) of Lemma~\ref{lemma:refvectinspace} is established.
\end{proof}

\begin{example}[North-East-Center CA]
We illustrate Lemma~\ref{lemma:refvectinspace} in the case of the North-East-Center CA. We checked in Section~\ref{sec:erosion} that the North-East-Center CA has three minimal zero-sets which verify the erosion criterion. The convex hull of each of them is a line segment and can be regarded as the intersection of a finite family of half-spaces. Each half-space can be described as the set in which some affine function is non-positive. Among the three families of affine functions thus obtained, it is possible to choose three functions such that, multiplied by a positive constant, they satisfy Properties (i) and (ii) of Lemma~\ref{lemma:refvectinspace}. For instance, let us consider the following choice of affine functions (see Figure~\ref{fig:vectinspace}) from $\mathbb{R}^2$ to $\mathbb{R}$:
\begin{align}
\phi_1 (x_1,x_2)&= x_1\notag \\
\phi_2 (x_1,x_2)&= x_2\label{eq:phiNEC}\\
\phi_3 (x_1,x_2)&= -x_1-x_2+1\notag
\end{align}
For each $k=1,2,3$, the half-space $\{ x \in \mathbb{R}^2 \mid \phi_k(x) \leq 0\}$ contains one of the three minimal zero-sets, and therefore is a zero-set itself. Moreover, the sum $\phi_1+\phi_2+\phi_3$ equals $1$. In particular, the intersection of the three corresponding half-spaces is empty. Following Remark~\ref{rmk:vectinspace}, the three affine functions determine three vectors in $\mathbb{R}^2$: $v_1=(1,0)$, $v_2=(0,1)$, $v_3=(-1,-1)$. They are outward normal vectors to the boundaries of the associated half-spaces and their sum is $0$.
\end{example}

\begin{figure}
\centering
\begin{tikzpicture}   
           [scale=.75,important line/.style={very thick}]
           \begin{scope}[xshift=0cm]
\colorlet{updatecolor}{black}
\colorlet{contourcolor}{black}
\draw[important line,updatecolor] (-1,-1) -- (1,-1) -- (1,0) -- (0,0) -- (0,1) -- (-1,1) -- cycle;
\foreach \position in {(-.5,-.5),(-.5,.5)} 
       {
      \draw \position circle (.1cm);
       }
       \draw[important line,contourcolor] (-.5,-2) -- (-.5,2);
       \fill[opacity=.2,gray] (-.5,-2) node[above left,black,opacity=1] {$\phi_1< 0\ $} node[above right,black,opacity=1] {$\ \phi_1> 0$} node[below,black,opacity=1] {$\phi_1 = 0$} rectangle (-2.5,2);

\end{scope}
\begin{scope}[xshift=4cm]
\colorlet{updatecolor}{black}
\colorlet{contourcolor}{black}
\draw[important line,updatecolor] (-1,-1) -- (1,-1) -- (1,0) -- (0,0) -- (0,1) -- (-1,1) -- cycle;
\foreach \position in {(-.5,-.5),(.5,-.5)} 
       {
      \draw \position circle (.1cm);
       }
        \draw[important line,contourcolor] (-2,-.5) -- (2.7,-.5);
        \fill[opacity=.2,gray] (2.7,-.5) node[right,black,opacity=1] {$\phi_2 = 0$} rectangle (-2,-2);
        \node[above,black,opacity=1,outer sep=2pt] at (1.95,-.5) {$\phi_2> 0\ $};
        \node[below,black,opacity=1,outer sep=1pt] at (1.95,-.5) {$\phi_2< 0\ $};
        \end{scope} 
\begin{scope}[xshift=10cm]
\colorlet{updatecolor}{black}
\colorlet{contourcolor}{black}
\draw[important line,updatecolor] (-1,-1) -- (1,-1) -- (1,0) -- (0,0) -- (0,1) -- (-1,1) -- cycle;
\foreach \position in {(-.5,.5),(.5,-.5)} 
       {
      \draw \position circle (.1cm);
       }
        \draw[important line,contourcolor] (2,-2) -- (-2,2);
         \fill[opacity=.2,gray] (2,-2) node[above right,opacity=2,inner sep=5,rotate=45,black] {$\phi_3 < 0$} node[above left,black,opacity=1,inner sep=5,rotate=45,black] {$\ \phi_3> 0$} node[below,black,opacity=1,rotate=45,black] {$\phi_3 = 0$} -- (-2,2) -- ++(4,0);

        \end{scope}
           \begin{scope}[xshift=5cm,yshift=-7cm,scale=3]
\colorlet{updatecolor}{black}
\colorlet{contourcolor}{black}
\foreach \position in {(-.5,.5),(.5,-.5),(-.5,-.5)} 
       {
      \draw \position circle (.05cm);
       }
        \draw[important line,contourcolor] (-.5,-1) -- (-.5,1);
       \fill[opacity=.2,gray] (-.5,-1) 
       node[below,black,opacity=1] {$\phi_1 = 0$} rectangle (-1,1);
       \draw[->] (-.5,0) -- +(.1,0) node[below] {$v_1$};

        \draw[important line,contourcolor] (-1,-.5) -- (1,-.5);
         \fill[opacity=.2,gray] (1,-.5) 
          node[right,black,opacity=1] {$\phi_2 = 0$} rectangle (-1,-1);
           \draw[->] (0,-.5) -- +(0,.1) node[left] {$v_2$} ;
          
 \draw[important line,contourcolor] (1,-1) -- (-1,1);
         \fill[opacity=.2,gray] (1,-1) 
         node[below,black,opacity=1,black] {$\phi_3 = 0$} -- (-1,1) -- (1,1);
          \draw[->] (0,0) -- +(-.071,-.071) node[below] {$v_3$};

        \end{scope}

\end{tikzpicture}
\caption{A choice of three affine functions satisfying Properties (i) and (ii) of Lemma~\ref{lemma:refvectinspace} for the North-East-Center CA. Above, the three associated zero-sets are shaded and, for each of them, the elements of the contained minimal zero-set are represented by circles. Below, the three shaded half-spaces are shown simultaneously. One observes that their intersection is empty. The associated normal vectors $v_1$, $v_2$ and $v_3$ are drawn.}
\label{fig:vectinspace}
\end{figure}
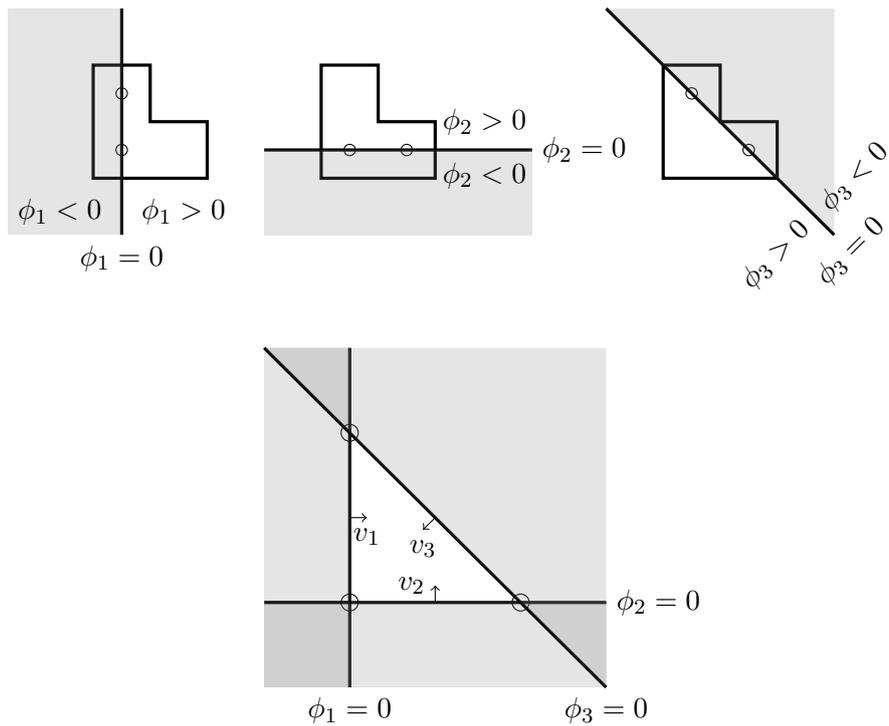

The affine functions obtained in Lemma~\ref{lemma:refvectinspace} for all monotonic binary CA with the erosion property have an interpretation in terms of the evolution of some particular configurations organized in fronts (see also \citet{To13}, \citet{Va13}). Indeed, Lemma~\ref{lemma:refvectinspace} implies that for $k=1,\dotsc,m$, the intersection of the neighborhood $\mathcal U$ of the origin with the set $\{ x \in \mathbb{R}^d \mid \phi_k(x) \leq 0\}$ is a zero-set. Let us describe it further (see Figure~\ref{fig:hyperplane}). The set $\{ x \in \mathbb{R}^d \mid \phi_k(x) \leq 0\}$ is a half-space and it is completely determined by an outward normal vector to the hyperplane forming its boundary and by the position of a point of this hyperplane. The vector $v_k$ associated to $\phi_k$ as in Remark~\ref{rmk:vectinspace} is an outward normal vector. The point $x= - (\phi_k(0)/\dnorm{v_k}^2) v_k$ is on the boundary of the half-space because $\phi_k(x)=0$. The distance between this boundary and the origin is thus $\norm{\phi_k(0)}/\dnorm{v_k}$. Of course, the origin belongs to the half-space if and only if $\phi_k(0) \leq 0$.

Let us now consider the configuration of the CA such that the state is $0$ at all sites in the half-space $\{ x \in \mathbb{Z}^d \mid \phi_k(x) \leq 0\}$ and the state is~$1$ in the complementary half-space. We will call such a configuration a \textit{front} of `zeros' (see Figure~\ref{fig:front}). At the next time step, the state at the origin must evolve into state $0$, by definition of a zero-set. Now the front of `zeros' is invariant under all translations parallel to the boundary of the half-space and the updating rule of the CA is itself invariant under all translations. Therefore the state, not only at the origin but also at all other sites $x$ in $\mathbb{Z}^d$ such that $\phi_k(x) = \phi_k(0)$, must evolve into state $0$. Actually, using the monotonicity and translational invariance of the updating rule, this also holds of course for all sites $x$ in $\mathbb{Z}^d$ such that $\phi_k(x) \leq \phi_k(0)$. The boundary of the front of `zeros' then shifts from $\{ x \in \mathbb{R}^d \mid \phi_k(x) = 0\}$ to $\{ x \in \mathbb{R}^d \mid \phi_k(x) = a_k \}$, for some number $a_k \geq \phi_k(0)$, in one time step. Due to the monotonicity of the updating function, the same must happen whatever the state was in the complementary half-space $\{ x \in \mathbb{Z}^d \mid \phi_k(x) > 0\}$. 

In the case where $\phi_k(0) >0$, this amounts to a move forward of the front in the direction of $v_k$, by a distance of at least $\phi_k(0)/\dnorm{v_k}$. In the case where $\phi_k(0) \leq 0$, the front possibly moves backward, in the direction of $-v_k$, by a distance of at most $\norm{\phi_k(0)}/\dnorm{v_k}$. In both cases, the speed of the front in the direction of $v_k$ is thus a real number greater or equal to $\phi_k(0)/\dnorm{v_k}$. Due to the translational invariance of the updating rule, the same movement of the front occurs for any initial position of the front, as long as its outward normal vector is parallel to $v_k$.

One can find in a previous article of \citet{To76} a proof, different from the one given in \citep{To80}, that the erosion criterion is sufficient for a monotonic binary CA to be an eroder, that is to say to erode any finite island of cells with state $1$ surrounded with a sea of cells with state $0$ in a finite time. This proof is based on the idea that the combined movements of fronts of `zeros' with outward normal vectors $v_k$, $k=1,\dotsc,m$, progressively erase the region where the cells can be in state $1$. We will see it in the particular case of the North-East-Center CA.

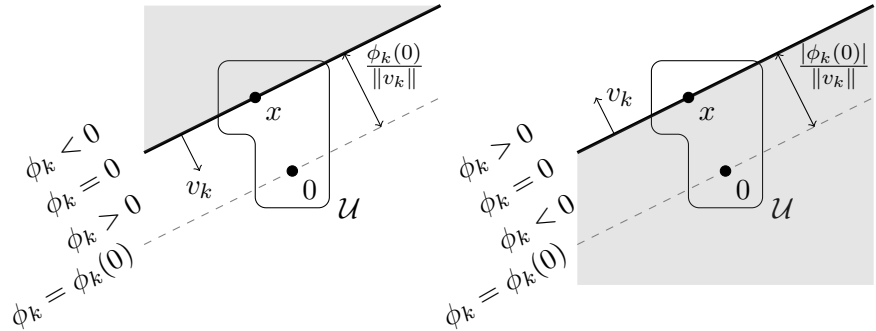
\begin{figure}
\centering
\begin{tikzpicture}   
 [scale=.75,important line/.style={very thick}]

\begin{scope}[scale=1.3]
\draw[rounded corners] (-.5,-.5) -- ++(0,1) -- ++(-.5,0) -- ++(0,1) -- ++(1.5,0) -- ++(0,-2) node[right] {$\mathcal U$}-- cycle ;
\draw[gray,dashed] ($(-2,-1)$) node (a) {} node[left,black,opacity=1,inner sep=5,rotate=30] {$\phi_k = \phi_k(0)$}  -- ($(2,1)$) node (b) {};
\draw[important line] ($(a)+.5*(0,2.5)$) node (c) {} node[above left,black,opacity=1,inner sep=10,rotate=30] {$\phi_k< 0$} node[below left,black,opacity=1,inner sep=10,rotate=30] {$\phi_k> 0$} node[left,black,opacity=1,inner sep=10,rotate=30] {$\phi_k = 0$} -- ($(b)+.5*(0,2.5)$) node (d) {};
\fill[opacity=.2,gray] ($(a)+.5*(0,2.5)$)  -- ($(b)+.5*(0,2.5)$) -- +(-4,0) -- cycle;
\draw[fill] (0,0) circle (2pt);
\draw[fill] ($(0,0)+.5*(-1,2)$) circle (2pt);       
\node[below right] at (0,0) {$0$};
\node[below right] at ($(0,0)+.5*(-1,2)$) {$x$};
\draw[->] ($(0,0)+.5*(-1,2)+.5*2*(-1,-.5)$) -- +($.5*.5*(1,-2)$) node[below] {$v_k$};
\draw[<->] ($(0,0)+.5*(-1,2)+.5*1.2*(2,1)$) -- node[inner sep=.2pt,above right] {$\frac{\phi_k(0)}{\dnorm{v_k}}$} +($.5*(1,-2)$) ;
\end{scope}

\begin{scope}[xshift=7.6cm,scale=1.3]
\draw[rounded corners] (-.5,-.5) -- ++(0,1) -- ++(-.5,0) -- ++(0,1) -- ++(1.5,0) -- ++(0,-2) node[right] {$\mathcal U$}-- cycle ;
\draw[gray,dashed] ($(-2,-1)$) node (a) {} node[left,black,opacity=1,inner sep=5,rotate=30] {$\phi_k = \phi_k(0)$}  -- ($(2,1)$) node (b) {};
\draw[important line] ($(a)+.5*(0,2.5)$) node (c) {} node[above left,black,opacity=1,inner sep=10,rotate=30] {$\phi_k>0$} node[below left,black,opacity=1,inner sep=10,rotate=30] {$\phi_k< 0$} node[left,black,opacity=1,inner sep=10,rotate=30] {$\phi_k = 0$} -- ($(b)+.5*(0,2.5)$) node (d) {};
\fill[opacity=.2,gray] ($(a)+.5*(0,2.5)$)  -- ($(b)+.5*(0,2.5)$) -- ++(0,-3.8) -- ++(-4,0) -- cycle;
\draw[fill] (0,0) circle (2pt);
\draw[fill] ($(0,0)+.5*(-1,2)$) circle (2pt);       
\node[below right] at (0,0) {$0$};
\node[below right] at ($(0,0)+.5*(-1,2)$) {$x$};
\draw[->] ($(0,0)+.5*(-1,2)+.5*2*(-1,-.5)$) -- +($-.5*.5*(1,-2)$) node[right] {$v_k$};
\draw[<->] ($(0,0)+.5*(-1,2)+.5*1.2*(2,1)$) -- node[inner sep=.2pt,above right] {$\frac{\norm{\phi_k(0)}}{\dnorm{v_k}}$} +($.5*(1,-2)$) ;
\end{scope}

\end{tikzpicture}
\caption{Two half-spaces that are zero-sets according to Lemma~\ref{lemma:refvectinspace}, for two different CA with identical neighborhoods but different updating functions. The figures depict a two-dimensional situation or a projection of a higher-dimensional situation. In both figures, the zero-set is shaded. The neighborhood $\mathcal U$ of the origin is a subset of $\mathbb Z^d$ which is itself embedded in $\realspace$. It is represented by a curved contour line that encloses it. On the left, the case where $\phi_k(0)>0$. On the right, the case where $\phi_k(0)<0$. The case where $\phi_k(0)=0$ is a limiting case of the latter case.}
\label{fig:hyperplane}
\end{figure}

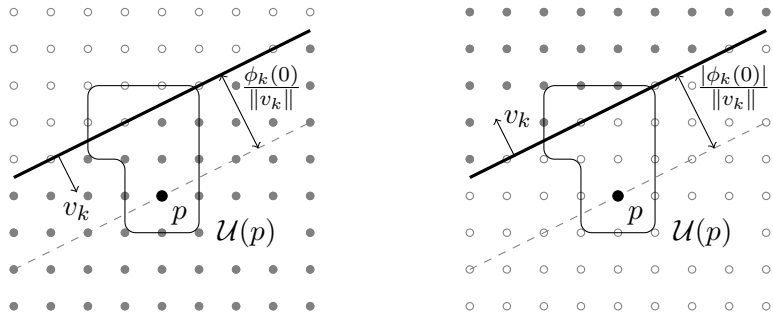
\begin{figure}
\centering
\begin{tikzpicture}   
 [scale=.75,important line/.style={very thick}]

\begin{scope}[scale=1.3]
	\begin{scope}[scale=.5,shift={(-5,6)},gray]
		\foreach \x in {1,...,9} {
			\foreach \y in {-1,...,-9} {
				\draw (\x,\y) circle (.1cm);
			}
		}
		\foreach \x in {1,...,9} {
			\foreach \y in {-6,...,-9} {
						\draw[fill] (\x,\y) circle (.1cm);
			}
		}
		\foreach \x / \y in {9 / -2, 7/ -3, 8 / -3, 9 / -3, 5 / -4, 6 / -4, 7 / -4,8 / -4, 9 / -4, 3 /-5, 4 / -5, 5 / -5,6 / -5, 7 / -5,8 / -5, 9 / -5} {
			\draw[fill] (\x,\y) circle (.1cm);
		}
	\end{scope}
	\draw[rounded corners] (-.5,-.5) -- ++(0,1) -- ++(-.5,0) -- ++(0,1) -- ++(1.5,0) -- ++(0,-2) node[right,fill=white,inner sep=3pt,outer sep=4pt] {$\mathcal U(p)$}-- cycle ;
	\draw[gray,dashed] ($(-2,-1)$) node (a) {}  -- ($(2,1)$) node (b) {};
	\draw[important line] ($(a)+.5*(0,2.5)$) node (c) {} -- ($(b)+.5*(0,2.5)$) node (d) {};
	\draw[fill] (0,0) circle (2pt);
	\node[below right] at (0,0) {$p$};
	\draw[->] ($(0,0)+.5*(-1,2)+.5*1.8*(-1,-.5)$) -- +($.5*.5*(1,-2)$) node[below] {$v_k$};
	\draw[<->] ($(0,0)+.5*(-1,2)+.5*1.3*(2,1)$) -- node[fill=white,inner sep=.2pt,above right,rounded corners] {$\frac{\phi_k(0)}{\dnorm{v_k}}$} +($.5*(1,-2)$) ;
\end{scope}

\begin{scope}[xshift=8cm,scale=1.3]
	\begin{scope}[scale=.5,shift={(-5,6)},gray]
		\foreach \x in {1,...,9} {
			\foreach \y in {-1,...,-9} {
				\draw (\x,\y) circle (.1cm);
			}
		}
		\foreach \x in {1,...,9} {
			\foreach \y in {-1} {
						\draw[fill] (\x,\y) circle (.1cm);
			}
		}
		\foreach \x / \y in {1/-2,2/-2,3/-2,4/-2,5/-2,6/-2,7/-2,1/-3,2/-3,3/-3,4/-3,5/-3,1/-4,2/-4,3/-4,1 / -5} {
			\draw[fill] (\x,\y) circle (.1cm);
		}
	\end{scope}
	\draw[rounded corners] (-.5,-.5) -- ++(0,1) -- ++(-.5,0) -- ++(0,1) -- ++(1.5,0) -- ++(0,-2) node[right,fill=white,inner sep=3pt,outer sep=4pt] {$\mathcal U(p)$}-- cycle ;
	\draw[gray,dashed] ($(-2,-1)$) node (a) {}  -- ($(2,1)$) node (b) {};
	\draw[important line] ($(a)+.5*(0,2.5)$) node (c) {} -- ($(b)+.5*(0,2.5)$) node (d) {};
	\draw[fill] (0,0) circle (2pt);    
	\node[below right] at (0,0) {$p$};
	\draw[->] ($(0,0)+.5*(-1,2)+.5*1.8*(-1,-.5)$) -- +($-.5*.5*(1,-2)$) node[right,fill=white,outer sep=2pt,inner sep=1pt,rounded corners] {$v_k$};
	\draw[<->] ($(0,0)+.5*(-1,2)+.5*1.3*(2,1)$) -- node[fill=white,inner sep=.2pt,above right,rounded corners] {$\frac{\norm{\phi_k(0)}}{\dnorm{v_k}}$} +($.5*(1,-2)$) ;
\end{scope}

\end{tikzpicture}
\caption{A front of `zeros' with the outward normal vector $v_k$, in the same CA as in Figure~\ref{fig:hyperplane}. Cells where the state is $0$ are represented by white circles and cells where the state is $1$ are represented by dark circles. At the next time step, the state at site $p$ will become $0$ because the state is $0$ in a zero-set of $p$. If $\phi_k(0)>0$ (on the left), the front will move forward and reach at least the dashed line passing through $p$. If $\phi_k(0)\leq 0$ (on the right), the front will not move backward further than the dashed line through $p$.}
\label{fig:front}
\end{figure}

\begin{example}[North-East-Center CA]
For the North-East-Center CA, we deduce the following from the three affine functionals in equation~\eqref{eq:phiNEC}. Fronts of `zeros' whose boundary is a vertical line, with outward normal vector $v_1=(1,0)$, do not move since $\phi_1(0)=0$. Neither do fronts of `zeros' with a horizontal boundary and outward normal vector $v_2=(0,1)$. Oblique fronts of `zeros' with outward normal vector $v_3=(-1,-1)$ move forward, namely toward south-west, with speed $\phi_3(0)/\dnorm{v_3}= \sqrt{2}/2$. These three behaviors lead to the erosion in a finite time of any finite island of cells with state $1$ surrounded with a sea of cells with state $0$ (see Figure~\ref{fig:frontNEC}). It is in accordance with our observation of the erosion phenomenon in Section~\ref{defNEC}.
\end{example}

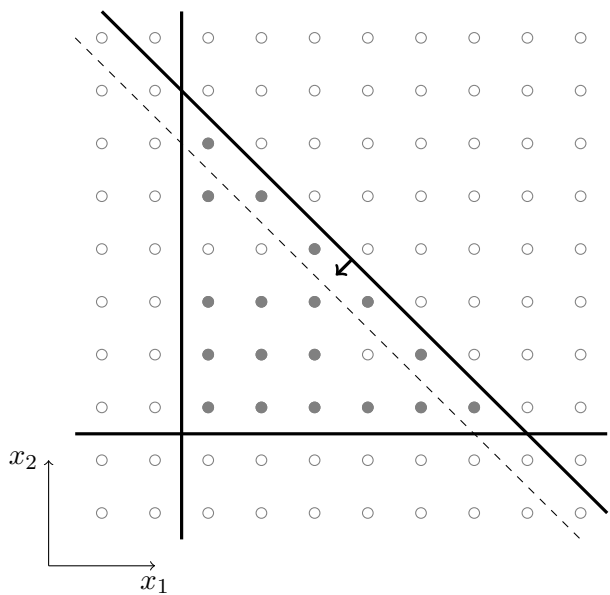
\begin{figure}
\centering
\begin{tikzpicture}
[scale=.7,important line/.style={very thick}]
\colorlet{updatecolor}{black}
\colorlet{contourcolor}{black}
\begin{scope}[gray]
	\foreach \x in {-4,...,5}
	    \foreach \y in {-4,...,5}
	       {
	      \draw[shift={(-.5,-.5)}] (\x,\y) circle (.1cm);
	       }
	\foreach \x/ \y in {-2 / 3, -2 / 2, -1/2, 0/1,-2/0,-1/0,0 /0,1/0,-2/-1,-1/-1,0/-1,2/-1, -2/-2,-1/-2,0/-2,1/-2,2/-2,3/-2}
	   {
	   	       \draw[shift={(-.5,-.5)},fill] (\x,\y) circle (.1cm);
	   }
\end{scope}
\draw[important line,contourcolor] (-3,-5) -- +(0,10);
\draw[important line,contourcolor] (-5,-3) -- +(10,0);
\draw[important line,contourcolor] (-4.5,5) -- +(9.5,-9.5);
\draw[gray,dashed,contourcolor] (-5,4.5) -- +(9.5,-9.5);
\draw[important line,->,contourcolor,shift={(.4,-.5)}] (-.2,.8) -- (-.5,.5);
\draw[->,shift={(.5,-3.5)}] (-6,-2) -- +(2,0) node[below] {$x_1$};
\draw[->,shift={(.5,-3.5)}] (-6,-2) -- +(0,2) node[left] {$x_2$};
\end{tikzpicture}
\caption{The movement of three fronts of `zeros' in the North-East-Center CA. The boundaries of the fronts are lines with outward normal vectors $v_1$, $v_2$ and $v_3$. The front with a vertical boundary and the front with a horizontal boundary do not move. The front with outward normal vector $v_3=(-1,-1)$ moves forward by a distance of $\sqrt{2}/2$ at each time step. After one time step, it will move to the dashed line. If the set of cells with state $1$ is finite, it can be enclosed in a triangle formed by sides with these three directions. It then gets eroded in a finite time as the enclosing triangle shrinks until it disappears.}
\label{fig:frontNEC}
\end{figure}

\subsection{In space-time}

Now we use the intermediary affine functions $\phi_k:\mathbb R^d \to \mathbb R$, $k=1,\dotsc,m$, found in Lemma~\ref{lemma:refvectinspace} and the corresponding vectors $v_k$ in space $\mathbb R^d$, in order to construct reference vectors $v^{(k)}$ in space-time $\mathbb R^{d+1}$. The functions $\phi_k$ and the vectors $v_k$ capture the link between the form of the updating function in terms of zero-sets satisfying the erosion criterion and the progressive erosion phenomenon that results from it. Now, in the proofs of the stability theorem by \citet{To80} and of our Theorems~\ref{thm:upperbound} and \ref{thm:upperboundgen}, some properties of a class of PCA are established, on the basis of the erosion property of the CA from which they stem. Therefore the intermediary vectors $v_k$ in space are of much use in these proofs. They only need to be transformed into space-time vectors $v^{(k)}$ in order to enter into an argument that involves multiple events occurring in the space-time zero-sets of several points in the space-time lattice $V$, in particular into the construction of a graph on points in $V$.
\begin{proposition}\label{prop:refvectinspacetime}
A monotonic binary CA in dimension $d$ verifies the erosion criterion if and only if there exist a positive constant $r$ and $m$ linear functionals $L_k:\mathbb R^{d+1} \to \mathbb R$, $k=1,\dotsc,m$, with $m\leq d+1$, possessing the three following properties:
\begin{enumerate}[(i)]
\item for all $k$, $\{ v \in \mathbb{R}^{d+1} \mid L_k(v) \geq r\}$ is a space-time zero-set;
\item $\sum_{k=1}^m L_k \equiv 0$;
\item for all $k$ and for all $u$ in the space-time neighborhood $U$, $\norm{L_k(u)} \leq 1$.
\end{enumerate}
\end{proposition}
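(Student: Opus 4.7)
The plan is to promote the affine functionals $\phi_k$ supplied by Lemma~\ref{lemma:refvectinspace} to linear functionals on $\realspacetime$, exploiting the fact that every point of the space-time neighborhood $U$ has time coordinate $-1$: an affine function on $\realspace$ can then be recast as a linear function of $(x,t)$ evaluated at $t=-1$ by absorbing its constant term into a coefficient of $t$. Both implications are obtained by this encoding.

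For the direct implication ($\Rightarrow$), I would start from the $\phi_k(x)=\left(v_k\middle|x\right)+c_k$ of Lemma~\ref{lemma:refvectinspace}, with $\sum_k v_k=0$ and $C:=\sum_k c_k>0$, and set
\[
L_k(x,t) \;:=\; -\alpha\left(v_k\middle|x\right) + (\alpha c_k - r)\,t,
\qquad r := \alpha C/m,
\]
for a small $\alpha>0$ to be chosen. A direct computation gives $L_k(u,-1)=-\alpha\phi_k(u)+r$, so the image under $u\mapsto(u,-1)$ of a minimal zero-set contained in $\{\phi_k\le 0\}$ sits inside $\{L_k\ge r\}$, establishing (i). Summing over $k$, the spatial part vanishes because $\sum_k v_k=0$, and the coefficient of $t$ vanishes because $\alpha C-mr=0$; this is (ii). Property (iii) reduces to $|{-\alpha\phi_k(u)+r}|\le 1$ for all $u\in\mathcal U$ and $k=1,\dotsc,m$; since both index sets are finite, it is enough to take $\alpha\le 1/(M+C/m)$ with $M:=\max_{k,\,u\in\mathcal U}|\phi_k(u)|$. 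For the reverse implication ($\Leftarrow$), given $L_k$ and $r$ satisfying (i)--(iii), I would set $\phi_k(x):=r-L_k(x,-1)$: each $\phi_k$ is affine, (i) yields $\phi_k\le 0$ on some minimal zero-set, and (ii) gives $\sum_k\phi_k\equiv mr>0$, so Lemma~\ref{lemma:refvectinspace} delivers the erosion criterion.

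The only mildly delicate point is the choice of the rescaling parameter $\alpha$ in the forward direction: property (iii) demands a uniform bound on $L_k$ on the whole of $U$ across all $k$. The introduction of $\alpha$ is precisely what allows $r=\alpha C/m$ and $\alpha\max_{k,u}|\phi_k(u)|$ to be made jointly smaller than $1$, and finiteness of $\mathcal U$ and of $\{1,\dotsc,m\}$ makes this harmless. This normalization $|L_k(u)|\le 1$ is not a cosmetic constraint: as already illustrated by Lemma~\ref{lemma:diamforks}, it is the ingredient that will later let one turn the current-conservation identity $\Extent(G)=0$ into a quantitative lower bound on the number of spacelike edges in the graph construction.
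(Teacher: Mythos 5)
Your proof is correct. The forward implication is essentially the paper's own construction: the paper defines $L_k(x,t)= -\tfrac{1}{M}\left( v_k \mid x \right) + \tfrac{1}{M}\bigl(\phi_k(0)-\tfrac{\delta}{m}\bigr)t$ with $r=\tfrac{\delta}{mM}$, which is your $L_k$ with $\alpha=1/M$ and $c_k=\phi_k(0)$, $C=\delta$; your normalization $\alpha\le 1/(M+C/m)$ is just a triangle-inequality relaxation of the paper's exact choice $M=\max_{k,u}\lvert\phi_k(u)-\delta/m\rvert$, and both verifications of (i)--(iii) proceed identically. Where you genuinely diverge is the reverse implication. The paper argues directly by contradiction: a point $a$ in the intersection of the convex hulls of all minimal space-time zero-sets would satisfy $L_{k^*}(a)\le 0$ for some $k^*$ by (ii), yet $L_{k^*}\ge r>0$ on some minimal space-time zero-set contained in $\{L_{k^*}\ge r\}$, hence $L_{k^*}(a)\ge r$ on its convex hull --- contradiction. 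You instead restrict to the slice $t=-1$, set $\phi_k(x):=r-L_k(x,-1)$, check that these $m\le d+1$ affine functionals satisfy Properties (i) and (ii) of Lemma~\ref{lemma:refvectinspace} (with constant sum $mr>0$), and invoke that lemma's already-proven equivalence. This is a legitimate and arguably more economical route, since it avoids re-running a convexity argument that is implicitly contained in Lemma~\ref{lemma:refvectinspace}; the paper's version has the minor advantage of being self-contained at the space-time level and of making the role of the minimal space-time zero-sets explicit. Your closing remark about the purpose of the normalization (iii) in Lemma~\ref{lemma:diamforks} is accurate.
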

\begin{rmk}\label{rmk:vectinspacetime}
A linear functional $L_k$ from $\mathbb R^{d+1}$ to $\mathbb R$ can be regarded as a dot product with some fixed vector $v^{(k)}$ in space-time $\mathbb R^{d+1}$: $L_k(\cdot) = ( v^{(k)} \mid \cdot ) $. The vectors $v^{(k)}$, $k=1,\dotsc,m$, are the \textit{reference vectors} in space-time that we will use in the proof of Theorem~\ref{thm:upperboundgen} in Section~\ref{sec:m=3} and later in the proof of Theorem~\ref{thm:exp} about exponential convergence to equilibrium.
\end{rmk}
\begin{proof}[Proof of Proposition~\ref{prop:refvectinspacetime}]
First, if there exist a positive number $r$ and $m$ linear functionals $L_1,\dotsc,L_m$ presenting Properties (i) and (ii) of Proposition~\ref{prop:refvectinspacetime}, the erosion criterion must be satisfied. Otherwise there is a point $a$ that belongs to the convex hull of every minimal space-time zero-set. Now $\sum_{k=1}^m L_k (a)=0$ so there exists a $k^*$ in $\{1,\dotsc,m\}$ such that $L_{k^*}(a)\leq 0$. Since the set $\{ v \in \mathbb{R}^{d+1} \mid L_{k^*}(v) \geq r\}$ is a space-time zero-set, it includes some minimal space-time zero-set $Z^*$. For all $p$ in the space-time zero-set $Z^*$, $L_{k^*}(p) \geq r>0$. Now, since $a$ is in the convex hull of every minimal space-time zero-set, $a=\sum_i \lambda_i p_i$ with $\lambda_i >0$, $\sum_i \lambda_i =1$ and $p_i \in Z^* \ \forall i$. But this leads to the following contradiction:
\begin{equation*}
0 \geq L_{k^*}(a) = \sum_i \lambda_i L_{k^*}(p_i) \geq \sum_i \lambda_i . r = r > 0.
\end{equation*}

On the other hand, if the erosion criterion is verified, Lemma~\ref{lemma:refvectinspace} provides us with $m$ affine functionals $\phi_1,\dotsc,\phi_m$ from $\mathbb R^d$ to $\mathbb R$, with $m\leq d+1$. We will transform them to construct $m$ linear functionals as in Proposition~\ref{prop:refvectinspacetime}. Let $v_k$, $k=1,\dotsc,m$, denote the vector associated to the linear part of $\phi_k$ as in Remark~\ref{rmk:vectinspace}. Let $\delta:=\sum_{k=1}^m \phi_k(0) $. Property (ii) of Lemma~\ref{lemma:refvectinspace} implies that $\sum_{k=1}^m \phi_k(\cdot)\equiv\delta >0$. For all $k$ in $\{1,\dotsc,m\}$, we define the linear functional $L_k: \mathbb R^{d+1} \to \mathbb R$:
\begin{equation}\label{defLk}
(x,t) \in \mathbb R^d \times \mathbb R \quad \mapsto \quad L_k(x,t)= - \frac{1}{M} \left( v_k \mid x \right) + \frac{1}{M} \left( \phi_k(0) - \frac{\delta}{m}\right) . t,
\end{equation}
where $M$ is a positive real number that will be defined below.

Using Properties (i) and (ii) of Lemma~\ref{lemma:refvectinspace}, we prove that these functions satisfy Properties (i), (ii) and (iii) of Proposition~\ref{prop:refvectinspacetime}. Property (i) of Proposition~\ref{prop:refvectinspacetime} is satisfied, with the positive constant $r= \frac{\delta}{m . M}$. Indeed, it is enough to show that for all $k$, the set $\{ v \in \mathbb{R}^{d+1} \mid L_k(v) \geq r\}$ contains the following space-time zero-set: $\{ (x,-1) \in \mathbb{R}^{d}\times \mathbb{R} \mid \phi_k(x) \leq 0 \}$. But this follows from the observation that for all $x$ in $\mathbb{R}^d$ such that $\phi_k(x) \leq 0$,
\begin{align*}
L_k(x,-1) &=  \frac{1}{M} \left[-\left( v_k \mid x \right) +  \left( \phi_k(0) - \frac{\delta}{m}\right) . (-1)\right]\\
&=\frac{1}{M} \left[-\phi_k(x)+\frac{\delta}{m}\right]\\
&\geq \frac{\delta}{m . M} = r .
\end{align*}
Property (ii) of Proposition~\ref{prop:refvectinspacetime} is a direct consequence of the definitions of $\delta$ and the functionals $L_k$, and of Property (ii) of Lemma~\ref{lemma:refvectinspace}.
As $m$ is finite and the space-time neighborhood $U$ is finite as well, Property (iii) only amounts to a normalization condition and we can always choose the positive constant $M$ such that it holds, for instance:
\begin{equation*}
M := \max_{\stackrel{k \in \{1,\dotsc,m\}}{u \in \mathcal U}} \norm{ \phi_k(u) - \frac{\delta}{m} }.
\end{equation*}
One can easily check, using Properties (i) and (ii) of Lemma~\ref{lemma:refvectinspace}, that this choice gives $M>0$.
\end{proof}
Following Remark~\ref{rmk:vectinspacetime}, we can now take advantage of Proposition~\ref{prop:refvectinspacetime} and define the reference vectors
\begin{equation}\label{refvectinspacetime}
v^{(k)} = \frac{1}{M}  \left(- v_k  , \phi_k(0) - \frac{\delta}{m} \right) \in \realspace \times \real , \quad k=1,\dotsc,m.
\end{equation}
They are non-zero like the intermediary vectors $v_k$. They will take part in the proof of Theorem~\ref{thm:upperboundgen}. We already noticed some essential properties of the three reference vectors used in the Proof of Theorem~\ref{thm:upperbound} for the North-East-Center PCA: their sum is $0$; their dot product with the displacement vectors of all timelike edges with the associated color is $1$; their spatial components form three vectors in $\mathbb R^2$ that are two by two non-parallel. The general reference vectors that we constructed here present some similar properties, according to Proposition~\ref{prop:refvectinspacetime}.
\begin{itemize}
\item Their sum is $0$.
\item For all $k\in \{1,\dotsc,m\}$ and for the constant $r$ given by Proposition~\ref{prop:refvectinspacetime}, the set $\{ v \in \mathbb{R}^{d+1} \mid ( v^{(k)} \mid v )  \geq r\}$ contains at least one minimal space-time zero-set. If we later define timelike edges with color $k$ in such a way that their displacement vectors always belong to this particular minimal space-time zero-set, then for all $k$ the dot product of the reference vector $v^{(k)}$ with the displacement vector of any timelike edge with the corresponding color $k$ will be greater or equal to the positive constant $r$.
\item But their projections onto space $\mathbb R^d$ are not necessarily two by two non-parallel. We will return to this shortcoming in Section~\ref{sec:m=2}.
\end{itemize}

\begin{example}[North-East-Center CA]
Coming back to the example of the North-East-Center CA, we can use the affine functions $\phi_1$, $\phi_2$ and $\phi_3$ from $\real^2$ to $\real$ and construct three linear functions $L_1$, $L_2$ and $L_3$ from $\mathbb R^{3}$ to $\real$ that satisfy Properties (i), (ii) and (iii) of Proposition~\ref{prop:refvectinspacetime}. Applying formula~\eqref{defLk}, we obtain $\delta=1$, $M=\frac{2}{3}$, $r=\frac{1}{2}$ and
\begin{align*}
L_1 (x_1,x_2,t)&= -\frac{1}{2} (3x_1+t)\\
L_2 (x_1,x_2,t)&= -\frac{1}{2} (3x_2+t)\\
L_3 (x_1,x_2,t)&= \frac{1}{2} (3x_1+3x_2+2t)
\end{align*}
It is easy to check that these linear functions possess Properties (i), (ii) and (iii) of Proposition~\ref{prop:refvectinspacetime}. We notice that for all $k=1,2,3$, the subset at time coordinate $t=-1$ of the half-space $\{ v \in \mathbb{R}^{3} \mid L_k(v) \geq r\}$ coincides with the space-time zero-set $\{ (x,-1) \in \mathbb{R}^2 \times \real \mid \phi_k(x) \leq 0\}$ represented in Figure~\ref{fig:vectinspace}. As in equation~\eqref{refvectinspacetime}, we define the reference vectors $v^{(1)}=(-\frac{3}{2},0, -\frac{1}{2})$, $v^{(2)}=(0,-\frac{3}{2},-\frac{1}{2})$ and $v^{(3)}=(\frac{3}{2},\frac{3}{2}, 1)$. They coincide with the reference vectors that we used in the Proof of Theorem~\ref{thm:upperbound} and introduced in Section~\ref{sec:currents}, except that in that Proof we chose to multiply them by a factor $2$ to simplify notations.
\end{example}

\section{Proof of Theorem~\ref{thm:upperboundgen}}\label{sec:m=23}

We now have available general reference vectors that can be used in the proof of Theorem~\ref{thm:upperboundgen}. The proof is similar to that of Theorem~\ref{thm:upperbound} in the particular case of the North-East-Center PCA. We first show that there always exists a set of three reference vectors with the required properties. Then we generalize the Proof of Theorem~\ref{thm:upperbound} given in Section~\ref{sec:proof} via a few local changes. As already mentioned at the beginning of Chapter~\ref{chap:eroder2D}, the method used in this proof was introduced by \citet{To80}. Here we extend it in order to cover the event where the state is $1$ at all points in the set $\Lambda$. For that purpose, we especially define several sources for the currents transported by edges of the graph that we construct.

\subsection{From two to three reference vectors} \label{sec:m=2}

Let us consider any two-dimensional monotonic binary CA satisfying the erosion criterion. We constructed in Section~\ref{sec:refvectors} intermediary vectors $v_1$, ..., $v_m$ in space $\mathbb R^2$ and reference vectors $v^{(1)}$, ..., $v^{(m)}$ in space-time $\mathbb R^3$, with $1 \leq m \leq 3$. Since they are non-zero and since their sum is $0$, $m$ must be equal to either $2$ or $3$.

For the same reason, if $m=2$, the vectors $v_1$ and $v_2$ in space must be parallel, and so are the two lines forming the boundaries of the associated zero-sets $\{x \in \mathbb R^2 \mid \phi_k(x) \leq 0 \}$, $k=1,2$.
\begin{example}[North-South maximum of minima CA]
The North-South maximum of minima CA, hereafter denoted by the acronym `NSMM' CA, is discussed by~\citet{FeTo03} and by~\citet{To13} where it is called `flattening' model. In this two-dimensional monotonic binary CA, the neighborhood of the origin is $\mathcal U\allowbreak= \{(0,0),(1,0),(0,1),(1,1)\}$. The updating function $\varphi: \{0,1\}^{\mathcal U}\to \{0,1\}$ is defined by
\begin{align*}
&\varphi (  \omega_{(0,0)},  \omega_{(1,0)},\omega_{(0,1)},  \omega_{(1,1)}) \\
& \qquad = \max (\min (  \omega_{(0,0)}, \omega_{(1,0)}),\min( \omega_{(0,1)},  \omega_{(1,1)})).
\end{align*}
Equivalently, the function $\varphi$ returns the state $0$ if and only if there is at least one cell in state $0$ in each of the two following subsets of the neighborhood: in the southern subset $\{(0,0),(1,0)\}$ and in the northern subset $\{(0,1),(1,1)\}$. Therefore this CA admits exactly four minimal zero-sets, which are represented in Figure~\ref{fig:NSMM}. Their convex hulls are four segments whose intersection is empty so the NSMM CA satisfies the erosion criterion.

Although there is no $0-1$ symmetry in that CA, it is easy to check that the convex hulls of the minimal one-sets have an empty intersection as well. The NSMM CA thus shares with the North-East-Center CA the property to erode both an island of `ones' surrounded with `zeros' and an island of `zeros' surrounded with `ones'.

Examining the convex hulls of the minimal zero-sets reveals that there exist two non-intersecting half-spaces that are zero-sets and two affine functionals $\phi_1$, $\phi_2$ that describe them and verify Properties (i) and (ii) of Lemma~\ref{lemma:refvectinspace}. They are the half-planes delimited by vertical lines and containing $\mathcal Z_1$ or $\mathcal Z_4$ respectively and, although other choices are possible, the functions $\phi_1(x_1,x_2)=  x_1$ and $\phi_2(x_1,x_2)= -x_1+1$ (see left part of Figure~\ref{fig:NSMMhalfplanes}). The vectors $v_1=(1,0)$ and $v_2=(-1,0)$ are outward normal vectors of those two half-planes.

According to the argument in Section~\ref{sec:refvectinspace}, since $\phi_1(0) =0$ a front of `zeros' with outward normal vector $v_1$ does not move or at least does not move backward. Besides, any front of `zeros' with outward normal vector $v_2$ moves forward with a speed of at least $1$. The combination of these movements accounts for the erosion phenomenon: any finite island of cells with state $1$ can be enclosed between two vertical boundaries and the rightmost boundary progressively closes in on the leftmost boundary, so that the enclosed strip shrinks until it finally disappears (see right part of Figure~\ref{fig:NSMMhalfplanes}).

Other sets of affine functionals satisfy Properties (i) and (ii) of Lemma~\ref{lemma:refvectinspace} for the NSMM CA. For instance, a suitable choice with $m=3$ would be $\phi_1(x_1,x_2)=x_1-x_2$, $\phi_2(x_1,x_2)=x_1+x_2-1$ and $\phi_3(x_1,x_2)=-2x_1+2$, whose sum is identically $1$. The corresponding half-planes that are zero-sets are represented in the left part of Figure~\ref{fig:NSMMm=3}. Their outward normal vectors are $v_1=(1,-1)$, $v_2=(1,1)$ and $v_3=(-2,0)$.

Interpreting this in terms of the movements of three fronts of `zeros' provides an alternative mechanism to explain erosion (see right part of Figure~\ref{fig:NSMMm=3}). A front with outward normal vector $v_1$ does not move backward because $\phi_1(0)=0$, a front with outward normal vector $v_2$ has speed $\phi_2(0)/\norm{v_2}=-\sqrt{2}/2$ and a front with outward normal vector $v_3$ has speed $\phi_3(0)/\norm{v_3}=1$. The three movements combine in such a way that the enclosed region is a shrinking triangle as in the North-East-Center CA. Incidentally, while shrinking, the triangle also shifts with a constant speed.
\end{example}

\begin{figure}
\centering
\begin{tikzpicture}
[scale=1,important line/.style={very thick}]
\begin{scope}[xshift=0cm]
	\draw (-1,-1) rectangle (1,1);
	\path (-1,-1) -- node[below] {$\mathcal Z_1$} +(2,0);
	\foreach \x / \y in {0/0,0/1} 
		{
		\draw[shift={(-.5,-.5)}] (\x,\y) circle (.1cm);
		}
		
\end{scope}
\begin{scope}[xshift=3cm]
	\draw (-1,-1) rectangle (1,1);
	\path (-1,-1) -- node[below] {$\mathcal Z_2$} +(2,0);
	\foreach \x / \y in {0/0,1/1} 
		{
		\draw[shift={(-.5,-.5)}] (\x,\y) circle (.1cm);
		}
		
\end{scope}
\begin{scope}[xshift=6cm]
	\draw (-1,-1) rectangle (1,1);
	\path (-1,-1) -- node[below] {$\mathcal Z_3$} +(2,0);
	\foreach \x / \y in {1/0,0/1} 
		{
		\draw[shift={(-.5,-.5)}] (\x,\y) circle (.1cm);
		}
		
\end{scope}
\begin{scope}[xshift=9cm]
	\draw (-1,-1) rectangle (1,1);
	\path (-1,-1) -- node[below] {$\mathcal Z_4$} +(2,0);
	\foreach \x / \y in {1/1,1/0} 
		{
		\draw[shift={(-.5,-.5)}] (\x,\y) circle (.1cm);
		}
		\end{scope}
\end{tikzpicture}
\caption{The minimal zero-sets of the NSMM CA.}
\label{fig:NSMM}
\end{figure}
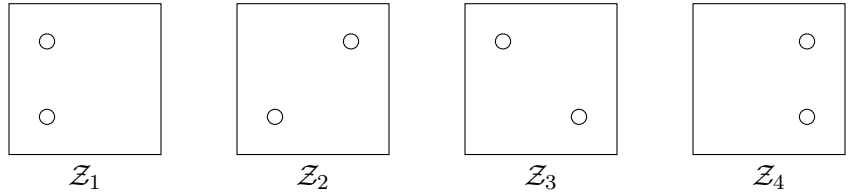

\begin{figure}
\centering
\begin{tikzpicture}
[scale=1.2,important line/.style={very thick}]
\begin{scope}[xshift=0cm]
	\draw (-1,-1) rectangle (1,1);
	\foreach \x / \y in {0/0,0/1,1/0,1/1} 
		{
		\draw[shift={(-.5,-.5)}] (\x,\y) circle (.1cm);
		}
	\draw[important line] (-.5,-1.5) -- (-.5,1.5);
	\fill[opacity=.2,gray] (-.5,-1.5) node[below,black,opacity=1] {$\phi_1 = 0$} rectangle (-2,1.5);
	\draw[important line] (.5,-1.5) -- (.5,1.5);
	\fill[opacity=.2,gray] (.5,-1.5)  node[below,black,opacity=1] {$\phi_2 = 0$} rectangle (2,1.5);       
	\draw[->] (-.5,1.15) -- node[above,near end] {$v_1$}+(.3,0) ;
	\draw[->] (.5,1.15) -- node[above,near end] {$v_2$} +(-.3,0) ;
	\draw[->] (-2.5,-2) -- +(1,0) node[below] {$x_1$};
	\draw[->] (-2.5,-2) -- +(0,1) node[left] {$x_2$};
\end{scope}
\begin{scope}[xshift=4.8cm,scale=.4]
	\colorlet{updatecolor}{black}
	\colorlet{contourcolor}{black}
	\clip (-5,-4) rectangle (5,4);
	\begin{scope}[gray]
		\foreach \x in {-4,...,5}
		    \foreach \y in {-4,...,5}
		       {
		      \draw[shift={(-.5,-.5)}] (\x,\y) circle (.1cm);
		       }
		  \foreach \x in {-2,...,2}
		    \foreach \y in {-4,...,5}
		       {
		      \draw[shift={(-.5,-.5)},fill] (\x,\y) circle (.1cm);
		       }
		\foreach \x/ \y in {}
		   {
		   	       \draw[shift={(-.5,-.5)},fill] (\x,\y) circle (.1cm);
		   }
	\end{scope}
	\draw[important line,contourcolor] (-3,-5) -- +(0,10);
	\draw[important line,contourcolor] (2,-5) -- +(0,10);
	\draw[gray,dashed,contourcolor] (1,-5) -- +(0,10);
	\draw[important line,->,contourcolor] (2,0) -- +(-.5,0);
\end{scope}

\end{tikzpicture}
\caption{Erosion for the NSMM CA ($m=2$). On the left, two half-planes $\{x \in \mathbb R^2 \mid \phi_k(x) \leq 0 \}$, $k=1,2$, that are zero-sets for the NSMM CA. They are represented by shaded regions and their intersection is empty. On the right, the combined movements of two fronts of `zeros' associated to these two half-planes progressively erode an island of cells in state $1$. The boundary of the leftmost front does not move while the boundary of the rightmost front moves forward up to the dashed line in one time step.}
\label{fig:NSMMhalfplanes}
\end{figure}
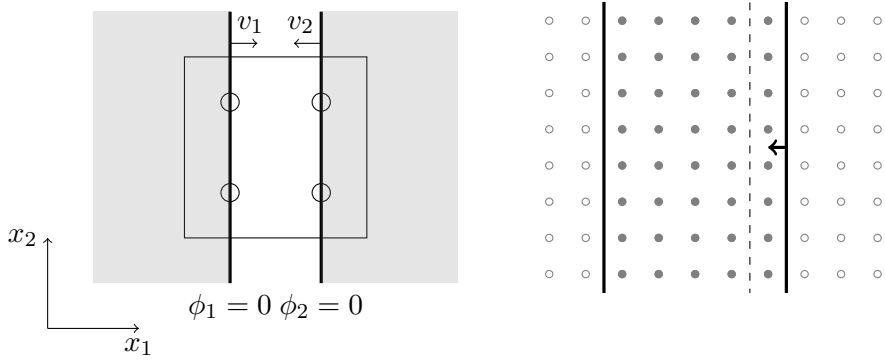

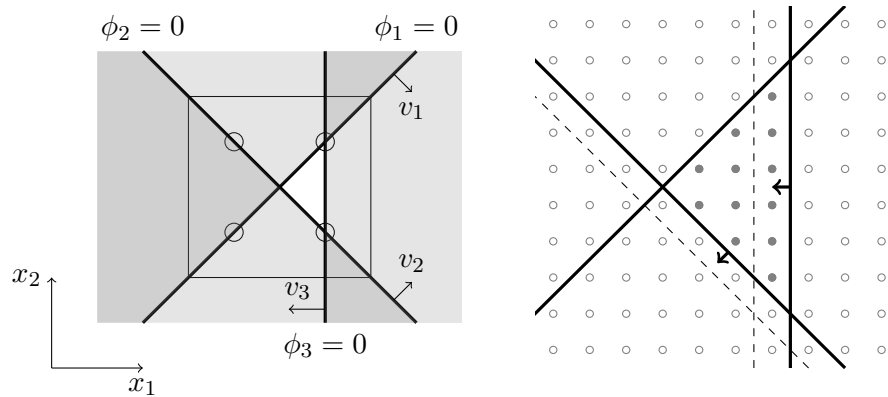
\begin{figure}
\centering
\begin{tikzpicture}
[scale=1.2,important line/.style={very thick}]
\begin{scope}[xshift=0cm]
	\draw (-1,-1) rectangle (1,1);
	\foreach \x / \y in {0/0,0/1,1/0,1/1} 
		{
		\draw[shift={(-.5,-.5)}] (\x,\y) circle (.1cm);
		}
	\draw[important line] (-1.5,-1.5) -- (1.5,1.5);
	\fill[opacity=.2,gray] (-1.5,-1.5) -- ++(-.5,0) -- ++(0,3) -- ++(3.5,0) node[above,black,opacity=1] {$\phi_1 = 0$} -- cycle;
	\draw[important line] (1.5,-1.5) -- (-1.5,1.5);
	\fill[opacity=.2,gray] (1.5,-1.5)  -- ++(-3.5,0) -- ++(0,3) -- ++(.5,0)  node[above,black,opacity=1] {$\phi_2 = 0$} -- cycle;       
	\draw[important line] (.5,-1.5) -- (.5,1.5);
	\fill[opacity=.2,gray] (.5,-1.5)  node[below,black,opacity=1] {$\phi_3 = 0$} rectangle (2,1.5);       
	\draw[->] (1.25,1.25) -- +(.2,-.2) node[below] {$v_1$};
	\draw[->] (1.25,-1.25) -- +(.2,.2) node[above] {$v_2$} ;
	\draw[->] (.5,-1.35) -- node[above,near end] {$v_3$} +(-.4,0) ;
	\draw[->] (-2.5,-2) -- +(1,0) node[below] {$x_1$};
	\draw[->] (-2.5,-2) -- +(0,1) node[left] {$x_2$};
\end{scope}
\begin{scope}[xshift=4.8cm,scale=.4]
	\colorlet{updatecolor}{black}
	\colorlet{contourcolor}{black}
	\clip (-5,-5) rectangle (5,5);
	\begin{scope}[gray]
		\foreach \x in {-4,...,5}
		    \foreach \y in {-4,...,5}
		       {
		      \draw[shift={(-.5,-.5)}] (\x,\y) circle (.1cm);
		       }
		\foreach \x/ \y in {0/0,0/1,1/-1,1/0,1/1,1/2,2/-2,2/-1,2/-0,2/1,2/2,2/3}
		   {
		   	       \draw[shift={(-.5,-.5)},fill] (\x,\y) circle (.1cm);
		   }
	\end{scope}
	\draw[important line,contourcolor] (3.5,5) -- +(-10,-10);
	\draw[important line,contourcolor] (3.5,-5) -- +(-10,10);
	\draw[important line,contourcolor] (2,-5) -- +(0,10);
	\draw[gray,dashed,contourcolor] (2.5,-5) -- +(-10,10);
	\draw[gray,dashed,contourcolor] (1,-5) -- +(0,10);
	\draw[important line,->,contourcolor] (.3,-1.8) -- +(-.3,-.3);
	\draw[important line,->,contourcolor] (2,0) -- +(-.5,0);
\end{scope}
\end{tikzpicture}
\caption{Erosion for the NSMM CA ($m=3$). On the left, three half-planes $\{x \in \mathbb R^2 \mid \phi_k(x) \leq 0 \}$, $k=1,2,3$, that are zero-sets for the NSMM CA, represented by shaded regions. Their intersection is empty. Their outward normal vectors are $v_1$, $v_2$, $v_3$. On the right, the movements of three fronts of `zeros' associated to these three half-planes. The boundary line of one front does not move and the other two boundaries move to the dashed lines in one time step. The enclosed region both shifts to the west-south-west and shrinks until it disappears.}
\label{fig:NSMMm=3}
\end{figure}

The NSMM CA is an example where the set of affine functionals satisfying Properties (i) and (ii) of Lemma~\ref{lemma:refvectinspace} can be chosen to contain $m=2$ or $m=3$ non-constant functionals, corresponding respectively to either two parallel non-zero vectors $v_1$ and $v_2$ or three non-zero vectors $v_1$, $v_2$, $v_3$, which are two by two non-parallel. It can also happen that three affine functionals verifying Properties (i) and (ii) of Lemma~\ref{lemma:refvectinspace} yield three vectors that are parallel. We show that this situation occurs only for CA such that the minimal number of affine functionals satisfying Properties (i) and (ii) of Lemma~\ref{lemma:refvectinspace} is $m=2$.

\begin{lemma}\label{lemma:threeparallel}
Suppose that $3$ is the minimal value of $m$ such that there exist $m$ non-constant affine functionals $\phi_k:\mathbb R^2 \to \real$, $k=1,\dotsc,m$ that possess Properties (i) and (ii) of Lemma~\ref{lemma:refvectinspace}. Let $v_k$, $k=1,2,3$, be the non-zero vectors in $\mathbb R^2$ such that for all $k=1,2,3$, $\phi_k(\cdot) = \left( v_k \mid \cdot \right) + \phi_k(0)$. Then $v_1$, $v_2$ and $v_3$ are two by two non-parallel.
\end{lemma}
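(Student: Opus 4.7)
The plan is to argue by contradiction: assume that $v_1$ and $v_2$ are parallel (the other cases follow by reindexing). Since $v_1+v_2+v_3=0$ and none of the $v_k$ is zero, all three vectors must then lie on a common line through the origin. Write $v_k=\lambda_k v$ for a fixed $v\in\mathbb R^2\setminus\{0\}$ and nonzero reals $\lambda_1,\lambda_2,\lambda_3$ with $\lambda_1+\lambda_2+\lambda_3=0$. Three nonzero reals summing to $0$ cannot all have the same sign, so by reindexing and, if necessary, replacing $v$ by $-v$, I can assume $\lambda_1,\lambda_2>0$ and $\lambda_3<0$, whence $|\lambda_3|=\lambda_1+\lambda_2$. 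The goal of the remainder of the proof is to build, from the $\phi_k$, two non-constant affine functionals satisfying Properties~(i) and~(ii) of Lemma~\ref{lemma:refvectinspace}, contradicting the assumed minimality of $m=3$.

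Writing $c_k=\phi_k(0)$ and $\mu_k=c_k/\lambda_k$ for $k=1,2$, the zero-set half-spaces take the uniform form $\{\phi_k\le 0\}=\{x\in\mathbb R^2\mid (v\mid x)\le -\mu_k\}$ for $k=1,2$, while $\{\phi_3\le 0\}=\{x\in\mathbb R^2\mid (v\mid x)\ge c_3/|\lambda_3|\}$. The smaller of the two parallel half-spaces corresponding to $k=1,2$ coincides with whichever has the larger $\mu_k$ and is therefore itself a zero-set. I would then set $\tilde\mu=\max(\mu_1,\mu_2)$ and define
\begin{equation*}
\tilde\phi_1(x)=(v\mid x)+\tilde\mu, \qquad \tilde\phi_2(x)=-(v\mid x)+\frac{c_3}{|\lambda_3|}.
\end{equation*}
Both functionals are non-constant, their linear parts are $v$ and $-v$, and by the preceding remark $\{\tilde\phi_1\le 0\}$ and $\{\tilde\phi_2\le 0\}$ are zero-sets, so Property~(i) holds.

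It remains to verify Property~(ii), namely that $\tilde\phi_1+\tilde\phi_2$ is a \emph{positive} constant. Constancy is automatic since the linear parts cancel, leaving $\tilde\phi_1+\tilde\phi_2\equiv\tilde\mu+c_3/|\lambda_3|$. For the positivity, set $\gamma=c_3/|\lambda_3|=c_3/(\lambda_1+\lambda_2)$ and use $c_k=\lambda_k\mu_k$ for $k=1,2$ to rewrite the hypothesis $\phi_1+\phi_2+\phi_3\equiv\delta>0$ as the identity $(\mu_1+\gamma)\lambda_1+(\mu_2+\gamma)\lambda_2=\delta>0$. Since $\lambda_1,\lambda_2>0$, at least one of $\mu_1+\gamma$ and $\mu_2+\gamma$ must be strictly positive, hence $\tilde\mu+\gamma=\max(\mu_1,\mu_2)+\gamma>0$, as required. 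This yields two non-constant affine functionals meeting Properties~(i) and~(ii), contradicting the assumed minimality of $m=3$. The only real pitfall is to choose the \emph{max} rather than the \emph{min} of $\mu_1,\mu_2$ in the definition of $\tilde\phi_1$: the positivity of the sum depends on this choice together with the identity $|\lambda_3|=\lambda_1+\lambda_2$, and the opposite choice would give a valid zero-set but not guarantee $\tilde\phi_1+\tilde\phi_2>0$.
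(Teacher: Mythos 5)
Your proof is correct, and it follows the same overall strategy as the paper: argue by contradiction, observe that if two of the $v_k$ are parallel then all three are (since they sum to zero) with two sharing the same orientation, and then manufacture $m=2$ non-constant affine functionals satisfying Properties~(i) and~(ii), contradicting minimality. The two arguments differ only in how the pair of replacement functionals is built. The paper keeps one functional unchanged and takes the \emph{sum} of the two whose half-planes are nested, say $\tilde\phi_1=\phi_1$ and $\tilde\phi_2=\phi_2+\phi_3$ with $\{\phi_2\le 0\}\subseteq\{\phi_3\le 0\}$; this makes Property~(i) immediate (the set $\{\phi_2+\phi_3\le 0\}$ contains the zero-set $\{\phi_2\le 0\}$) and Property~(ii) entirely automatic, since $\tilde\phi_1+\tilde\phi_2=\phi_1+\phi_2+\phi_3\equiv\delta>0$ is inherited verbatim. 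You instead select the more restrictive of the two same-orientation half-planes (via $\tilde\mu=\max(\mu_1,\mu_2)$) and rescale the opposite one, which is equally valid but forces you to re-derive positivity of $\tilde\phi_1+\tilde\phi_2$ by the identity $(\mu_1+\gamma)\lambda_1+(\mu_2+\gamma)\lambda_2=\delta$; you carry that computation out correctly, and your closing remark that choosing $\min(\mu_1,\mu_2)$ would break positivity is a fair observation, but the summing trick sidesteps the issue altogether.
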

\begin{proof}
We prove it by contradiction. If two of the vectors are parallel, since the sum $v_1+v_2+v_3$ is $0$ all three vectors are parallel and two of them have the same orientation. We can suppose without loss of generality that $\{x \in \real^2 \mid \phi_2(x) \leq 0\} \subseteq \{x \in \real^2 \mid \phi_3(x) \leq 0\}$ since these two sets are half-planes delimited by two parallel lines and since their outward normal vectors have the same orientation.
Therefore the set $\{x \in \real^2 \mid (\phi_2+\phi_3)(x) \leq 0\} $ contains the zero-set $\{x \in \real^2 \mid \phi_2(x) \leq 0\} $ so it is itself a zero-set. Then the choice $\tilde{\phi}_1=\phi_1$ and $\tilde{\phi}_2=\phi_2+\phi_3$ yields $m=2$ non-constant affine functionals that satisfy Properties (i) and (ii) of Lemma~\ref{lemma:refvectinspace}. This contradicts the assumption that the minimal value of $m$ is $3$.
\end{proof}
Lemmas~\ref{lemma:refvectinspace} and \ref{lemma:threeparallel} imply that, for any two-dimensional monotonic binary CA with the erosion property, we have available either two or three reference vectors in space-time, constructed, via formula~\eqref{refvectinspacetime}, from two parallel or three non-parallel intermediary vectors in space. However, in order to extend the Proof of Theorem~\ref{thm:upperbound}, we actually need three reference vectors in space-time made from three intermediary vectors in space that are two by two non-parallel, rather than only two reference vectors that are antiparallel because their sum is $0$.

The reason for that requirement is that the proof uses the erosion property by means of a current conservation principle that enters the proof of the crucial Lemma~\ref{lemma:diamforks}. In that proof, the diameter of $\Lambda$ is brought into play thanks to inequality~\eqref{extdiam}. This inequality itself holds on condition that the quantity $\sum_{k=1}^m \left( v^{(k)} \middle| \pi_k \right)$ contains enough information about the diameter of $\Lambda$. This quantity expresses the distance between the current sources in terms of projections onto the directions of the reference vectors. Now if the reference vectors are made from two or three parallel intermediary vectors, the projections of the positions of the sources onto a unique direction will only contain information about the width of $\Lambda$ in that direction, and not about its diameter. Therefore we will not be able to establish inequality~\eqref{extdiam} nor Lemma~\ref{lemma:diamforks}.

Nevertheless, for any two-dimensional monotonic binary CA with the erosion property, it is always possible to find three intermediary vectors in space, that are two by two non-parallel, and three associated affine functionals satisfying the properties of Lemma~\ref{lemma:refvectinspace}. This results from Lemmas~\ref{lemma:refvectinspace}, \ref{lemma:threeparallel} and from the following lemma.
\begin{lemma}\label{lemma:m2tom3}
Suppose that a set of two affine functionals $\phi_k:\mathbb R^2 \to \real$, $k=1,2$, satisfies Properties (i) and (ii) of Lemma~\ref{lemma:refvectinspace}. Then there exists a set of three non-constant affine functionals $\tilde{\phi}_k$, $k=1,2,3$, that also satisfies Properties (i) and (ii) of Lemma~\ref{lemma:refvectinspace} and such that the three non-zero vectors $\tilde{v}_k$, $k=1,2,3$, associated to them by the relation $\tilde{\phi}_k(\cdot) = \left( \tilde{v}_k \mid \cdot \right) + \tilde{\phi}_k(0)$ are two by two non-parallel.
\end{lemma}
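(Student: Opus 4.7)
The plan is to keep a rescaled copy of $\phi_2$ as one of the new functionals and to split the role of $\phi_1$ between two functionals whose associated vectors are slight transverse tilts of $v_1$ in opposite directions.

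First I would unpack the hypotheses: Property~(ii) of Lemma~\ref{lemma:refvectinspace} applied to $\phi_1,\phi_2$ gives $v_1+v_2=0$ and $\phi_1(0)+\phi_2(0)>0$. I would then consider $\mathcal{Z}_1 := \{x \in \mathbb R^2 \mid \phi_1(x) \leq 0\} \cap \mathcal U$: this finite subset of $\mathcal U$ is nonempty (otherwise $\varphi$ would be constantly $0$, contradicting the assumption that $\varphi$ is non-constant) and is itself a zero-set of the CA, so any half-plane of $\mathbb R^2$ containing $\mathcal Z_1$ is automatically a zero-set.

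Pick any vector $w \in \mathbb R^2$ not parallel to $v_1$ (such $w$ exists since $d=2$ and $v_1 \neq 0$) together with a scaling parameter $\eta > 0$ to be fixed small at the end. I would define
\begin{align*}
\tilde v_1 &= v_1+\eta w, & \tilde a_1 &= -\max_{z\in \mathcal Z_1}\left(\tilde v_1 \mid z\right),\\
\tilde v_2 &= v_1-\eta w, & \tilde a_2 &= -\max_{z\in \mathcal Z_1}\left(\tilde v_2 \mid z\right),\\
\tilde \phi_3 &= 2\phi_2,
\end{align*}
and $\tilde\phi_k(x) = \left(\tilde v_k \mid x\right)+\tilde a_k$ for $k=1,2$. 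Property~(i) for the $\tilde\phi_k$ is then immediate: by choice of $\tilde a_k$ the half-plane $\{\tilde\phi_k \leq 0\}$ contains $\mathcal{Z}_1$ for $k=1,2$, and it equals $\{\phi_2 \leq 0\}$ for $k=3$. The linear parts sum to $\tilde v_1+\tilde v_2+2v_2 = 2v_1+2v_2 = 0$, so $\sum_k \tilde\phi_k$ is constant. Pairwise non-parallelism of $\tilde v_1, \tilde v_2, 2v_2$ follows by elementary algebra from $w \not\parallel v_1$ and $\eta > 0$: any parallelism relation between two of them forces $w$ to be parallel to $v_1$.

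The main obstacle I anticipate is verifying positivity of the constant $\tilde a_1+\tilde a_2+2\phi_2(0)$, on which the constant value of $\sum_k \tilde\phi_k$ depends. The key observation would be that at $\eta = 0$ this quantity equals $-2\max_{z\in\mathcal Z_1}\left(v_1\mid z\right)+2\phi_2(0)$, which the inclusion $\mathcal Z_1 \subseteq \{\phi_1 \leq 0\}$ bounds below by $2(\phi_1(0)+\phi_2(0)) > 0$. Since $\mathcal Z_1$ is finite, $\eta \mapsto \max_{z\in\mathcal Z_1}\left(v_1\pm \eta w \mid z\right)$ is continuous in $\eta$ (as the maximum of finitely many affine functions of $\eta$), so the strict positivity persists for all sufficiently small $\eta>0$; picking any such $\eta$ completes the construction.
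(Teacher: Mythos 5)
Your proof is correct, and it reaches the same structural goal as the paper's — replacing one of the two antiparallel half-planes by two half-planes whose normals are non-parallel and sum to the original normal, while keeping both new half-planes large enough to contain the relevant zero-set — but the execution is genuinely different. The paper keeps $\phi_1$ and splits $\phi_2$ by a synthetic geometric construction: it encloses $\mathcal U$ in a disk, takes the chords $A_1B_2$ and $A_2B_1$ determined by the line $\phi_2=0$ and an intermediate parallel line, and verifies positivity of $\sum_k\tilde\phi_k$ by exhibiting a region of $\real^2$ where all three functionals are simultaneously positive. You instead split $\phi_1$ by an explicit perturbation $v_1\pm\eta w$ of its normal, anchor the constants $\tilde a_k$ to the \emph{finite} set $\mathcal Z_1=\{\phi_1\leq 0\}\cap\mathcal U$ (which is exactly the right object, since any half-plane containing it is automatically a zero-set), and verify Property~(ii) by a continuity argument in $\eta$ at $\eta=0$, where the bound $-2\max_{z\in\mathcal Z_1}(v_1\mid z)+2\phi_2(0)\geq 2(\phi_1(0)+\phi_2(0))>0$ holds. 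Your route is more algebraic and arguably more transparent to check (finiteness of $\mathcal Z_1$ makes the maximum a piecewise-affine, hence continuous, function of $\eta$, and the non-parallelism claims reduce to $w\not\parallel v_1$); the paper's route avoids the small parameter and produces the tilted half-planes in one shot, at the cost of a more delicate picture to verify. Your appeal to non-constancy of $\varphi$ to guarantee $\mathcal Z_1\neq\varnothing$ is legitimate, as this is a standing assumption of the thesis.
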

\begin{proof}
None of the functionals $\phi_1$ and $\phi_2$ is identically constant, otherwise Properties (i) and (ii) of Lemma~\ref{lemma:refvectinspace} could not be simultaneously satisfied. Let ${v}_k$, $k=1,2$, be the non-zero vectors such that $\phi_k(\cdot) = \left( v_k \mid \cdot \right) + \phi_k(0)$. Property (ii) implies that $v_1+v_2=0$ and that the zero-sets $\{x \in \real^2 \mid \phi_1(x) \leq 0\} $ and $\{x \in \real^2 \mid \phi_2(x) \leq 0\} $ are two non-intersecting half-planes with parallel boundary lines separated by a positive distance. Each of these two zero-sets evidently intersects the neighborhood $\mathcal U$ of the origin.

We now construct new affine functionals $\tilde{\phi}_k$, $k=1,2,3$ (see Figure~\ref{fig:m=2to3}). Let $\tilde{\phi}_1\equiv \phi_1$. Choose any finite disk in $\real^2$ that includes the neighborhood $\mathcal U$ and has thus nonempty intersections with the two above-mentioned half-planes. Let $A_1$ and $A_2$ denote the intersection points between the boundary $\mathcal C$ of the disk and the line with equation $\phi_2 = 0$ ($A_1$ and $A_2$ may coincide). Choose any line that is parallel to the two lines with equations $\phi_1 = 0$ and $\phi_2 = 0$ and lies strictly between them. Let $B_1$ and $B_2$ denote its intersection points with the circle $\mathcal C$, which do not coincide: $B_1$ is the intersection point that is closest to $A_1$ and $B_2$ is the intersection point that is closest to $A_2$. We construct $\tilde{\phi}_2$ and $\tilde{\phi}_3$ such that the sets $\{x \in \real^2 \mid \tilde \phi_2(x) \leq 0\} $ and $\{x \in \real^2 \mid \tilde \phi_3(x) \leq 0\} $ both include the intersection of $\{x \in \real^2 \mid \phi_2(x) \leq 0\} $ with the disk and are delimited by the lines $A_1 B_2$ and $A_2 B_1$ respectively. It is always possible to find non-parallel vectors $\tilde v_2$ and $\tilde v_3$ that are normal to the two lines $A_1 B_2$ and $A_2 B_1$ respectively and such that $\tilde v_2+\tilde v_3=v_2$. These two vectors and the condition that $\tilde{\phi}_2= 0$ on $A_1 B_2$ and $\tilde{\phi}_3= 0$ on $A_2 B_1$ completely determine two affine functionals $\tilde{\phi}_2$, $\tilde{\phi}_3$ such that $\tilde{\phi}_k(\cdot) = \left( \tilde{v}_k \mid \cdot \right) + \tilde{\phi}_k(0)$, $k=2,3$.

The three functionals $\tilde{\phi}_k$, $k=1,2,3$, thus constructed are not identically constant. They verify Property (i) of Lemma~\ref{lemma:refvectinspace}: the set $\{x \in \real^2 \mid \tilde \phi_k(x) \leq 0\} $ is a zero-set for all $k=1,2,3$ because it contains the intersection of a zero-set with the neighborhood $\mathcal U$. The sum $\tilde \phi_1+\tilde \phi_2+\tilde \phi _3$ is constant because $\tilde v_1+\tilde v_2+\tilde v_3=v_1+v_2=0$. In the nonempty region of $\real^2$ delimited by the line $B_1 B_2$, the line of equation $\phi_1= 0$ and the circle $\mathcal C$, all functionals $\tilde \phi_k$, $k=1,2,3$, are positive so their sum is positive and they satisfy Property (ii). Finally by construction the three vectors $\tilde v_1$, $\tilde v_2$, $\tilde v_3$ are two by two non-parallel.
\end{proof}

\begin{figure}
\centering
\begin{tikzpicture}
[scale=1,important line/.style={very thick}]
\draw[name path=circle] (0,0) circle (2cm);
\draw[name path=line one,shorten <=1cm] (-3,-1.5) node (a) {} -- (4,1.5) node[right,rotate=20] {$\phi_1= 0$};
\draw[name path=line two,shorten <=1cm] ($(a)+1.5*(.3,-.7)$) -- +(7,3) node[right,rotate=20] {$\phi_2= 0$};
\draw[name path=line interm,dashed,shorten <=1cm] ($(a)+.7*(.3,-.7)$) -- +(7,3);
\fill[gray,opacity=.2,name intersections={of=circle and line one,by={C1,C2}}] let \p1=($(C1)$),\p2=($(C2)$) in (C2) -- (C1) arc[start angle={atan2(\x1,\y1)},delta angle={atan2(\x2,\y2)+360-atan2(\x1,\y1)}, radius=2cm];
\fill[gray,opacity=.2,name intersections={of=circle and line two,by={A1,A2}}] let \p3=($(A1)$),\p4=($(A2)$) in (A2) -- (A1) arc[start angle={atan2(\x3,\y3)},delta angle={-atan2(\x3,\y3)+atan2(\x4,\y4)}, radius=2cm];
\path[name intersections={of=circle and line interm,by={B2,B1}}];
\node[label={below,inner sep=1pt}:$A_1$] at (A1) {};
\node[label={below right,inner sep=-1pt}:$A_2$] at (A2) {};
\node[label={below,inner sep=1pt}:$B_1$] at (B1) {};
\node[label={below right,inner sep=-2pt}:$B_2$] at (B2) {};
\foreach \point in {A1,A2,B1,B2} {
	\fill (\point) circle (2pt);
}
\draw[shorten >=-1cm,shorten <=-1.2cm,name path=A1B2] (A1) -- (B2);
\draw[shorten >=-.8cm,shorten <=-1cm,name path=A2B1] (A2) -- (B1);
\draw[->] ($(C1)+1.8*(.7,.3)$) -- ++($.4*(.3,-.7)$) node[above right,inner sep=1pt] {$v_1$};
\draw[->] ($(C1)+1.8*(.7,.3)+1.5*(.3,-.7)$) -- ++($-.4*(.3,-.7)$) node[right] {$v_2$};
\path[name path=vertic one] (-.6,-3) -- +(0,3);
\draw[name intersections={of=A2B1 and vertic one,by=tv3},->] let \p5=($(A2)-(B1)$) in (tv3) -- ($(tv3)+.1*(-\y5,\x5)$) node[above right,inner sep=0pt] {$\tilde{v}_3$};
\path[name path=vertic two] (1.3,-3) -- +(0,3);
\draw[name intersections={of=A1B2 and vertic two,by=tv2},->] let \p6=($(B2)-(A1)$) in (tv2) -- ($(tv2)+.1*(-\y6,\x6)$) node[above right,inner sep=0pt] {$\tilde{v}_2$};
\path let \p5=($(A2)-(B1)$) in ($(B1)-.24*(\x5,\y5)$) node[left,rotate=10] {$\tilde{\phi}_3 = 0$};
\path let \p6=($(B2)-(A1)$) in ($(A1)-.35*(\x6,\y6)$) node[left,rotate=37] {$\tilde{\phi}_2 = 0$};
\path (2,0) arc (0: 145:2cm) node[above left] {$\mathcal{C}$};
\end{tikzpicture}
\caption{Construction of $\tilde \phi_1$, $\tilde \phi_2$ and $\tilde \phi _3$ in the Proof of Lemma~\ref{lemma:m2tom3}. The circle $\mathcal C $ encloses the neighborhood of the origin. The two shaded regions include two zero-sets.}
\label{fig:m=2to3}
\end{figure}
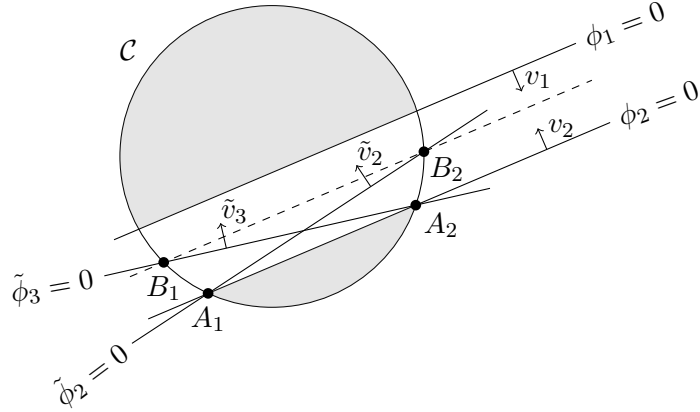

\subsection{Generalizing the Proof of Theorem~\ref{thm:upperbound}} \label{sec:m=3}

We consider any two-dimensional monotonic binary CA that satisfies the erosion criterion. Lemmas~\ref{lemma:refvectinspace}, \ref{lemma:threeparallel} and \ref{lemma:m2tom3} yield three non-constant affine functions possessing Properties (i) and (ii) of Lemma~\ref{lemma:refvectinspace} and three associated vectors $v_1$, $v_2$, $v_3$ in $\real^2$ that are two by two non-parallel. Formula~\eqref{refvectinspacetime} gives the three corresponding reference vectors $v^{(1)}$, $v^{(2)}$, $v^{(3)}$ in space-time $\real^3$. The following Lemma generalizes inequality~\eqref{extdiam}.
\begin{lemma}\label{lemma:genineq}
There exists a constant $0<\widetilde{C}<\infty$ such that for all times $t_\Lambda$ in $\mathbb N ^*$ and for all finite and connected subsets $\Lambda$ of $\{(x,t_{\Lambda})\mid x \in \plan\}$, it is possible to choose points $\pi_1$, $\pi_2$, $\pi_3$ in $\Lambda$ in such a way that
\begin{equation*}
\left( v^{(1)} \middle| \pi_1 \right) + \left( v^{(2)} \middle| \pi_2 \right)+\left( v^{(3)} \middle| \pi_3 \right) \geq \widetilde{C}  \, \diam(\Lambda).
\end{equation*}
\end{lemma}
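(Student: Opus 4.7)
The plan is to exploit two facts: (a) all points of $\Lambda$ share the common time coordinate $t_{\Lambda}$, so the temporal components of the reference vectors cancel when summed; and (b) the three spatial vectors $v_{1},v_{2},v_{3}$ constructed in Section~\ref{sec:m=2} satisfy $v_{1}+v_{2}+v_{3}=0$ and are pairwise non-parallel, so the three half-planes $\{x\in\mathbb{R}^{2}\mid (v_{k}\mid x)\geq \alpha_{k}\}$ bound a triangle whose diameter is controlled by the $\alpha_{k}$'s.

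First, I would record the explicit reduction. Using formula~\eqref{refvectinspacetime} and writing $\pi_{k}=(x(\pi_{k}),t_{\Lambda})$, one has
\begin{equation*}
\bigl(v^{(k)}\,\big|\,\pi_{k}\bigr)=\frac{1}{M}\Bigl[-\bigl(v_{k}\,\big|\,x(\pi_{k})\bigr)+\Bigl(\phi_{k}(0)-\frac{\delta}{m}\Bigr)t_{\Lambda}\Bigr].
\end{equation*}
Summing over $k=1,2,3$ and using $\sum_{k}\phi_{k}(0)=\delta$ together with $\sum_{k}\delta/m=\delta$ (since $m=3$), the temporal contribution vanishes and
\begin{equation*}
\sum_{k=1}^{3}\bigl(v^{(k)}\,\big|\,\pi_{k}\bigr)=-\frac{1}{M}\sum_{k=1}^{3}\bigl(v_{k}\,\big|\,x(\pi_{k})\bigr).
\end{equation*}
So the problem reduces to choosing the $\pi_{k}$'s in $\Lambda$ so that $-\sum_{k}(v_{k}\mid x(\pi_{k}))\geq \widetilde{C}'\,\diam(\Lambda)$ for some purely geometric constant $\widetilde{C}'$.

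Next, define $\alpha_{k}=\min_{v\in\Lambda}(v_{k}\mid x(v))$, which is attained since $\Lambda$ is finite, and pick any $\pi_{k}\in\Lambda$ realizing this minimum. Then $-\sum_{k}(v_{k}\mid x(\pi_{k}))=-\sum_{k}\alpha_{k}$, and it remains to prove $-\sum_{k}\alpha_{k}\geq \widetilde{C}'\,\diam(\Lambda)$. The set $\Lambda$ is contained in the (possibly degenerate) triangle
\begin{equation*}
T=\bigl\{x\in\mathbb{R}^{2}\ \big|\ (v_{k}\mid x)\geq \alpha_{k},\ k=1,2,3\bigr\}.
\end{equation*}
For any $x\in T$, the identity $v_{1}+v_{2}+v_{3}=0$ gives $(v_{k}\mid x)=-(v_{i}\mid x)-(v_{j}\mid x)\leq -\alpha_{i}-\alpha_{j}$ with $\{i,j,k\}=\{1,2,3\}$. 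Hence the width of $T$ in the direction $v_{k}/\lVert v_{k}\rVert$ is at most $(-\sum_{\ell}\alpha_{\ell})/\lVert v_{k}\rVert$.

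The main (and only real) step is the elementary geometric observation that, because $v_{1},v_{2},v_{3}$ are pairwise non-parallel, there is a constant $\kappa>0$ depending only on these three vectors such that the Euclidean diameter of any nonempty set with widths $w_{k}$ in the three directions $v_{k}/\lVert v_{k}\rVert$ is at most $\kappa\max_{k}w_{k}$; equivalently, any two points of $T$ are contained in a strip of bounded aspect ratio in each of the three independent directions. Combining this with the equivalence of the $\ell^{1}$ and Euclidean norms, we get
\begin{equation*}
\diam(\Lambda)\leq \diam(T)\leq \kappa'\,\Bigl(-\sum_{\ell=1}^{3}\alpha_{\ell}\Bigr)
\end{equation*}
for some constant $\kappa'$ depending only on $v_{1},v_{2},v_{3}$. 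Substituting into the earlier reduction yields the lemma with $\widetilde{C}=1/(M\kappa')$. The main obstacle is the geometric bound on $\diam(T)$; this is where the non-parallelism of $v_{1},v_{2},v_{3}$ established in Section~\ref{sec:m=2} is essential, since for parallel vectors $T$ would be an unbounded strip and no such inequality could hold.
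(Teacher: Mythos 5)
Your proposal is correct and follows essentially the same route as the paper: you choose each $\pi_k$ to minimize $\left( v_k \middle| x(\pi_k) \right)$ over $\Lambda$, enclose $\Lambda$ in the triangle cut out by the three half-planes with inward normals $v_k$, and bound that triangle's diameter by $-\sum_k \left( v_k \middle| x(\pi_k) \right)$ using the pairwise non-parallelism of the $v_k$ and the relation $v_1+v_2+v_3=0$. The only difference is cosmetic: where the paper computes the triangle's vertices $P_{ij}$ explicitly by solving the $2\times 2$ linear systems, you invoke the (correct, elementary) fact that bounded widths in non-parallel directions confine a convex set to a parallelogram of comparable diameter.
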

\begin{proof}
For all $k=1,2,3$, let $\pi_k$ be a point of $\Lambda$ that maximizes $\left( v^{(k)} \middle| \pi_k \right)$ or equivalently that minimizes $ \left( v_k \middle| x(\pi_k) \right)$: for all $v$ in $\Lambda$, $\left( v_k \middle| x(v) \right)\geq \left( v_k \middle| x(\pi_k) \right)$. The set $\Lambda$, regarded as a subset of space $\plan$ at time $t_{\Lambda}$, which is itself embedded in $\real^2$, is then contained in a half-plane delimited by the line with equation $\left( v_k \middle| x \right)  = \left( v_k \middle| x(\pi_k) \right)$, $x\in \real^2$. Using the notation $v_k=(x_1(v_k),x_2(v_k)) \in \real^2$, that equation can be rewritten as
\begin{equation}\label{bordtriangle}
x_1(v_k) . \,  x_1 +x_2(v_k) . \,  x_2 =   \left( v_k \middle| x(\pi_k) \right) \quad x_1, x_2 \in \real.
\end{equation}
This half-plane admits the inward normal vector $v_k$. Since the three vectors $v_1$, $v_2$, $v_3$ are two by two non-parallel, $\Lambda$ is actually contained in a triangle whose sides are supported by the three lines with equations~\eqref{bordtriangle} for $k=1,2,3$. We will call them sides $1$, $2$, $3$ of the triangle.

Now a triangle is the convex hull of its three vertices. This implies in particular that in the triangle, the coordinate $x_1$ attains its minimum on one of the three vertices. The same is also true about the maximum of $x_1$. In other words, there exists a permutation $\{l,m,n\}$ of the set $\{1,2,3\}$ such that for all $v$ in $\Lambda$, $x_1(v) \in [x_1(P_{lm}),x_1(P_{ln})]$, where $P_{ij}$ is the vertex of the triangle that belongs to sides $i$ and $j$. Notice that the choice of the permutation depends on $v_1$, $v_2$, $v_3$ but is independent of $\Lambda$.

The couple $(x_1(P_{lm}),x_2(P_{lm}))$ of coordinates of the point $P_{lm}$ is solution to the system of two equations~\eqref{bordtriangle} for $k=l$ and $k=m$. Since $v_l$ and $v_m$ are linearly independent, there is a unique solution to this system and
\begin{equation}\label{solu}
x_1(P_{lm})=\frac{x_2(v_m) . \, (v_l \mid x(\pi_l))-x_2(v_l) . \, (v_m \mid x(\pi_m))}{x_1(v_l)x_2(v_m)-x_2(v_l)x_1(v_m)}.
\end{equation}
The equations~\eqref{bordtriangle} take the same form for every $k=1,2,3$ therefore the first spatial coordinate $x_1$ of the point $P_{ln}$ is also given by formula~\eqref{solu} where $m$ has to be replaced with $n$. The difference between the maximal possible value of $x_1(v)$ with $v$ in $\Lambda$ and its minimum value is then
\begin{align*}
x_1(P_{ln})-x_1(P_{lm}) &=\frac{x_2(v_l)}{x_1(v_l)x_2(v_m)-x_2(v_l)x_1(v_m)} \\
& \qquad \qquad . \, [(v_l \mid x(\pi_l))+(v_m \mid x(\pi_m))+(v_n \mid x(\pi_n))],
\end{align*}
where we used the fact that $v_l+v_m+v_n=0$, as follows from Property (ii) of Lemma~\ref{lemma:refvectinspace}. The permutation $\{l,m,n\}$ has been chosen so that this quantity would be greater than or equal to $0$. Now the points $\pi_1$, $\pi_2$, $\pi_3$ have been chosen so as to minimize the second factor in the right-hand side so that factor is at most $(v_l +v_m+v_n\mid x(\pi_l))=0$. The first factor must then be at most $0$ as well. Furthermore, if $x_2(v_l)$ was equal to $0$, side $l$ would be vertical and it would not have been chosen as the side of the triangle containing both the vertex with the minimal value of $x_1$ and the vertex with the maximal value of $x_1$. We then have $\frac{x_2(v_l)}{x_1(v_l)x_2(v_m)-x_2(v_l)x_1(v_m)}=-\frac{C_1}{M}$ where $C_1$ belongs to $]0,\infty)$ and depends only on the considered CA but not on the set $\Lambda$.

The equations~\eqref{bordtriangle} are invariant under the interchange of spatial indices $1$ and $2$ as well. Thus there exists a permutation $\{q,r,s\}$ of the set $\{1,2,3\}$ such that for all $v$ in $\Lambda$, $x_2(v) \in [x_2(P_{qr}),x_2(P_{qs})]$ and
\begin{align*}
x_2(P_{qs})-x_2(P_{qr})&=\frac{x_1(v_q)}{x_2(v_q)x_1(v_r)-x_1(v_q)x_2(v_r)}\\
& \qquad . \, [(v_q \mid x(\pi_q))+(v_r \mid x(\pi_r))+(v_s \mid x(\pi_s))].
\end{align*}
This difference is also greater than or equal to $0$ and the same argument as above shows that $\frac{x_1(v_q)}{x_2(v_q)x_1(v_r)-x_1(v_q)x_2(v_r)} =-\frac{C_2}{M}$, with $C_2\in ]0,\infty)$, independent from the set $\Lambda$.

Using definition~\eqref{defdiam} and formula~\eqref{refvectinspacetime}, we obtain
\begin{align*}
\diam(\Lambda) &\leq \left( x_1(P_{ln})-x_1(P_{lm}) \right) + \left( x_2(P_{qs})-x_2(P_{qr}) \right)\\
&= -\frac{1}{M}(C_1+C_2) [(v_1 \mid x(\pi_1))+(v_2 \mid x(\pi_2))+(v_3 \mid x(\pi_3))]\\
&=(C_1+C_2) [(v^{(1)} \mid \pi_1)+(v^{(2)} \mid \pi_2)+(v^{(3)} \mid \pi_3)]
\end{align*}
which ends the proof of Lemma~\ref{lemma:genineq} if we take $\widetilde{C}=\frac{1}{C_1+C_2}$.
\end{proof}

We now have at our disposal the tools that will allow us to generalize the Proof of Theorem~\ref{thm:upperbound} in Section~\ref{sec:proof} in order to prove Theorem~\ref{thm:upperboundgen}. Since the former needs only to be slightly adapted to the general setting at a few places, we now refer to Section~\ref{sec:proof} and list these changes without repeating the whole argument. Unless mentioned in that list, every piece of the formalism defined in Section~\ref{sec:proof} and every intermediary result transfers unmodified to the general monotonic binary CA in two dimensions with the erosion property.

As in Section~\ref{sec:responsible}, if the state at some point $v=(x,t)$ is $1$ and if the updating rule is obeyed at $v$, then a sufficient subset of the space-time neighborhood of $v$ must be in state $1$. Namely, this subset must contain at least one point of each space-time zero-set of $v$. In the case of the North-East-Center PCA, this implies that it must contain exactly two or three points but it is not true in general. These points forming the set $\{u \in U(v)\mid \ushort \omega_u=1\}=\bar{U}(v)$ are called \textit{responsible} for the state $1$ at $v$.

In Section~\ref{sec:classes}, we noticed that a single point $v=(x,t)$ can be responsible for the state $1$ at several points. They are the points whose space-time neighborhood contains $v$. In general, due to the translational invariance of the neighborhood, there are $\norm{\mathcal U}=R$ such points, with coordinates $(x-u,t+1)$ where $u\in \mathcal U$.

In Section~\ref{sec:neighborclasses}, the definition of the graph $g_{\Lambda}$ on the classes in $\Lambda$ was particularized to the case of the North-East-Center PCA. This definition can be generalized in the same way as the definition in Section~\ref{sec:result2D} of the graph $\tilde{g}_{\Lambda}$ on points of $\Lambda$. We said that two different points $a$ and $b$ of $\Lambda$ are connected with an edge of $\tilde{g}_{\Lambda}$ if $x(a)-x(b) $ belongs to $\{u_1-u_2 \mid u_1 \neq u_2 \in \mathcal U \}$. Two classes $A$ and $B$ are connected with a \textit{link} of $g_{\Lambda}$ if they contain respectively a point $a$ and a point $b$ that are connected with an edge of $\tilde{g}_{\Lambda}$. The points $a$ and $b$ equivalently belong to the neighborhood $U(c)$ of some point $c$ in $V$. The definition of the links of the graph $g(C)$ for any class $C$ is generalized in the same way.

We now define \textit{timelike} edges as in Section~\ref{sec:currents}. We prepared the ground in Sections~\ref{sec:refvectors} and \ref{sec:m=2}. There we constructed three reference vectors in space-time $\real^3$ such that their sum is $0$; for all $k=1,2,3$, the set $\{ v \in \mathbb{R}^{d+1} \mid ( v^{(k)} \mid v )  \geq r\}$, where $r$ is a fixed positive constant, is a space-time zero-set; their projections onto space are three non-zero vectors in $\real^2$ that are two by two non-parallel.
We know that if $\bar{U}(v)$ is nonempty for some point $v=(x,t)$ in $\bar{U}^{\infty}(\Lambda)$, then it contains at least one point of each space-time zero-set of $v$. Therefore, if we define the three subsets
\begin{align*}
Z_k(v)&=\{v+u \mid u \in U, \phi_k(x(u))  \leq 0\}\\
&=\{v+u \mid u \in U, ( v^{(k)} \mid u )  \geq r\} \quad k=1,2,3,
\end{align*}
then $\bar{U}(v)$ nonempty implies $\bar{U}(v)\cap Z_k(v)$ nonempty for each $k=1,2,3$. Note that these three space-time zero-sets of $v$ are not necessarily minimal space-time zero-sets. However for simplicity we will use here the notation $Z_k(v)$, $k=1,2,3$, for them. With that definition of $Z_k(v)$, the definition of a timelike edge with color $k$ transfers directly from the North-East-Center PCA to the general PCA. The dot product of the displacement vector of a timelike edge of color $k$ with the reference vector $ v^{(k)} $ is thus always greater than or equal to the positive constant $r$.

In Section~\ref{sec:currents} were also defined the \textit{sources} for the currents carried by timelike -- and spacelike -- edges. In general we can always choose points $\pi_1$, $\pi_2$, $\pi_3$ in $\Lambda$ such that the inequality in Lemma~\ref{lemma:genineq} is satisfied.

Nothing has to be modified in the graph construction in Section~\ref{sec:graphG}. Naturally we can always choose any preference rule for the arrival point of a timelike edge with color $k$ starting from a point $v$, when the set $\bar{U}(v)\cap Z_k(v)$ has several elements.

After constructing the graph $G$, we proved in the same section that the three sources $\pi_1$, $\pi_2$, $\pi_3$ are vertices of $G$. Here we can use a similar argument to prove the same fact. If $\diam(\Lambda) >0$, the inequality of Lemma~\ref{lemma:genineq}, used with the fact that $v^{(1)}+v^{(2)}+v^{(3)}=0$, is incompatible with three coinciding sources. Then again current conservation at the two or three points where there are sources implies that these points are necessarily vertices of $G$. If $\Lambda$ is a singleton $\{v_{\Lambda}\}$, either $v_{\Lambda}$ is an error point and $G$ has zero edge and one vertex, which coincides with the three sources at $v_{\Lambda}$, or timelike edges starting from $v_{\Lambda}$ are drawn at step $q=1$ of the construction.

As in Section~\ref{sec:final}, we derive a relation between the number of spacelike edges, the number of timelike edges and the diameter of $\Lambda$. We already remarked above that the extent of a timelike edge is at least $r$. This leads to the following generalization of Lemma~\ref{lemma:diamforks}.
\begin{lemma}\label{lemma:relationst}
The number $s$ of spacelike edges in $G$ and the number $t$ of timelike edges satisfy
\begin{equation}
2s \geq r\, t + \widetilde{C} \, \diam(\Lambda),
\end{equation}
where $\widetilde{C}$ is the constant obtained in Lemma~\ref{lemma:genineq}.
\end{lemma}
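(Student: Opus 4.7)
The plan is to mimic the Proof of Lemma~\ref{lemma:diamforks} in the current general setting, replacing each of its three numerical bounds by the appropriate quantity supplied by the reference-vector construction of Section~\ref{sec:refvectors} and by Lemma~\ref{lemma:genineq}. The only subtlety relative to the North-East-Center case is that the sharper constants specific to that model no longer apply; in particular, the bound $-3$ on the extent of a spacelike edge must be relaxed to $-2$, while the lower bound $1$ on the extent of a timelike edge must be relaxed to $r$.

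First I would recall that, by property (P\ref{P1}) at step $q=Q$ of the construction in Section~\ref{sec:graphG}, the current of $G$ is conserved at every point of $V$, provided one includes the three virtual edges associated to the sources $\pi_1$, $\pi_2$, $\pi_3$. Exactly as in Lemma~\ref{lemma:extent0}, splitting the extent of each edge into a contribution of its departure point and a contribution of its arrival point and using the fact that $v^{(1)}+v^{(2)}+v^{(3)}=0$ (Property~(ii) of Proposition~\ref{prop:refvectinspacetime}), one obtains $\Extent(G) = 0$.

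Next I would bound separately the three types of contributions to $\Extent(G)$. A timelike edge of color $k$ has displacement $b_e-a_e$ equal to some element $u$ of the space-time neighborhood $U$ that lies in $Z_k(a_e)-a_e$, hence satisfying $(v^{(k)}\mid u)\geq r$ by the very definition of $Z_k(v)$ recalled in Section~\ref{sec:m=3}; so each timelike edge contributes at least $r$. A spacelike edge of color $k$ has displacement of the form $u_1-u_2$ for some $u_1,u_2\in U$, so Property~(iii) of Proposition~\ref{prop:refvectinspacetime} yields $|(v^{(k)}\mid u_1-u_2)|\leq |(v^{(k)}\mid u_1)|+|(v^{(k)}\mid u_2)|\leq 2$; so each spacelike edge contributes at least $-2$. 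Finally, the three virtual edges feeding into $\pi_1,\pi_2,\pi_3$ together contribute $(v^{(1)}\mid \pi_1)+(v^{(2)}\mid \pi_2)+(v^{(3)}\mid \pi_3)\geq \widetilde{C}\,\diam(\Lambda)$ by Lemma~\ref{lemma:genineq}, for the specific choice of the $\pi_k$ used in the construction of $G_0$ in Section~\ref{sec:m=3}.

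Combining these three estimates in $\Extent(G)=0$ yields
\begin{equation*}
r\, t - 2s + \widetilde{C}\,\diam(\Lambda) \leq \Extent(G) = 0,
\end{equation*}
which is exactly the announced inequality. The only mildly delicate point is the spacelike bound: one must be careful that Property~(iii) gives control on $(v^{(k)}\mid u)$ only for $u\in U$, which is precisely why the displacement of a spacelike edge must be written as a difference of two elements of $U$ (as opposed to an arbitrary vector of $\{u_1-u_2\mid u_i\in\mathcal U\}\times\{0\}$) and why the factor $2$, rather than $1$, appears in front of $s$. All other steps are direct transcriptions of Section~\ref{sec:final}.
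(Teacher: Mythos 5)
Your proof is correct and follows essentially the same route as the paper's: establish $\Extent(G)=0$ from current conservation (Lemma~\ref{lemma:extent0}), bound the extent of each timelike edge below by $r$, each spacelike edge below by $-2$ via Property~(iii) of Proposition~\ref{prop:refvectinspacetime}, and the virtual edges below by $\widetilde{C}\,\diam(\Lambda)$ via Lemma~\ref{lemma:genineq}. The combination $r\,t-2s+\widetilde{C}\,\diam(\Lambda)\leq 0$ is exactly the paper's argument.
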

\begin{proof}
The extent of any timelike edge is at least $r$. The displacement vector of any spacelike edge has the form $u_1-u_2$ with $u_1,u_2 \in U$. So we can deduce from Property (iii) of Proposition~\ref{prop:refvectinspacetime}, interpreted in terms of the reference vectors, that the extent of any spacelike edge is at least $-2$. The sum of the extents of the three virtual edges is at least $\widetilde{C} \, \diam(\Lambda)$ as follows from Lemma~\ref{lemma:genineq}. The sum of the extents of all edges of $G$ and of the three virtual edges is $\Extent(G)$ that is equal to $0$ thanks to Lemma~\ref{lemma:extent0}. Combining all these estimations yields
\begin{equation*}
r \, t -2 s +\widetilde{C} \, \diam(\Lambda) \leq \Extent(G)=0
\end{equation*}
whence Lemma~\ref{lemma:relationst} results.
\end{proof}
As a consequence of Lemma~\ref{lemma:relationst}, if a graph in $\mathcal G$ has exactly $s$ spacelike edges, its number of timelike edges is at most
\begin{equation*}
t_{\textrm{max}}(s)=\frac{2}{r} s-\frac{\widetilde{C}}{r}\diam(\Lambda)
\end{equation*}
so its total number of edges is between $s$ and $(\frac{2}{r} +1)s-\frac{\widetilde{C}}{r}\diam(\Lambda)$.
The generalization of Lemma~\ref{lemma:countinggraphs} provides us with an upper bound on the number of graphs in $\mathcal G$ with a fixed number of edges.
\begin{lemma}
For all $n$ in $\nat$, the number of graphs in $\mathcal G$ with exactly $n$ edges is at most $[6(R+R^2)]^{2n}$.
\end{lemma}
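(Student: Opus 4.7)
The plan is to mirror the proof of Lemma~\ref{lemma:countinggraphs} from the North-East-Center setting, replacing the model-specific count of $48$ edge-types per vertex with a count in terms of the neighborhood size $R = |\mathcal U|$. Property (P\ref{P4}) guarantees that every $G \in \mathcal G$ is connected (in particular, contains $\pi_1$ as a vertex, as established at the end of Section~\ref{sec:q=Q}), so to each $G$ with $n$ edges I associate a closed walk starting from $\pi_1$ that traverses each edge of $G$ exactly twice and returns to $\pi_1$. Recording, at each of the $2n$ steps of this walk, the triple (displacement vector, color, orientation of the traversed edge) produces a sequence that determines $G$ uniquely. Hence the number of such graphs is bounded by $K^{2n}$, where $K$ is the maximal number of distinct edge-types (in the above sense) that can be attached to any given vertex $v$.

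The main step is to bound $K$. For timelike edges: a timelike edge of color $k$ leaving $v$ lands in $Z_k(v) \cap \bar{U}(v)$, and since $|Z_k(v)| \leq |U(v)| = R$, there are at most $R$ choices of endpoint per color, hence at most $3R$ outgoing timelike edges attached to $v$. Symmetrically, an incoming timelike edge of color $k$ comes from some $w$ with $v \in Z_k(w)$; by translational invariance there are again at most $R$ such $w$ per color, giving at most $3R$ incoming timelike edges. For spacelike edges: both endpoints lie in some $U(c)$, so the displacement has the form $u_1 - u_2$ with $u_1, u_2 \in \mathcal U$ and $u_1 \neq u_2$, yielding at most $R(R-1) \leq R^2$ ordered displacement vectors (the ordering encodes the orientation); combined with the $3$ possible colors this gives at most $3R^2$ spacelike edge-types at $v$.

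Adding these contributions yields $K \leq 3R + 3R + 3R^2 = 6R + 3R^2 \leq 6(R + R^2)$, and therefore the number of graphs in $\mathcal G$ with exactly $n$ edges is at most $[6(R+R^2)]^{2n}$, as claimed. In the degenerate case $\Lambda = \{v_\Lambda\}$ with $G$ the single-vertex graph with zero edges, the empty sequence corresponds to the unique such graph, so the bound holds trivially for $n = 0$.

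The argument is essentially bookkeeping: the only nontrivial point is verifying that the displacement/color/orientation triple recorded along the walk retains enough information to reconstruct $G$, which follows exactly as in the NEC case because the starting vertex $\pi_1$ is fixed and each step unambiguously specifies the next vertex and the edge traversed. No new ingredient beyond the Proof of Lemma~\ref{lemma:countinggraphs} is required; the generalization is purely combinatorial.
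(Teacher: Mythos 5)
Your proof takes exactly the paper's route: reduce the count to closed walks of length $2n$ from $\pi_1$ that traverse each edge twice, record (displacement, orientation, color) at each step, and bound the number of edge-types attachable to a single vertex by $6R$ timelike plus a quadratic number of spacelike ones, giving $6(R+R^2)$ choices per step. The one flaw is your count of spacelike edge-types: the orientation of a spacelike edge is \emph{not} encoded in the displacement recorded along the walk, since that displacement is the vector from the current walk vertex to the next one and is fixed by the geometry of the step, while the edge joining those two vertices can still carry either of its two orientations (and these give distinct graphs). The correct count is therefore at most $R^2$ displacements times $3$ colors times $2$ orientations, i.e.\ $6R^2$ rather than your $3R^2$, which is what the paper uses. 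The slip is harmless here: the true per-step total is $6R+6R^2=6(R+R^2)$, which is exactly the bound you relax to, so the conclusion $[6(R+R^2)]^{2n}$ stands.
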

\begin{proof}
The proof is identical to that of Lemma~\ref{lemma:countinggraphs} except for the following values of estimates. The number of possible choices for a step of the walk is at most $6(R+R^2)$. Indeed, for each $k=1,2,3$, at most $\norm{U}=\norm{\mathcal U}=R$ different timelike edges with color $k$ can leave from a given vertex $v$, since a timelike edge starting from $v$ must arrive into its space-time neighborhood $U(v)=v+U$, and at most $R$ different timelike edges with color $k$ can arrive at this vertex, starting from points in $v-U$. The number of different timelike edges that can be attached to a vertex is thus no more than $6R$. The displacement vectors of spacelike edges belong to the set $\{u_1-u_2 \mid u_1 \neq u_2 \in U\}$, which contains at most $R^2$ elements. Consequently, taking into account the three colors and the two orientations, the number of different spacelike edges that can be attached to a vertex is at most $6R^2$. Finally the number of possibilities is no more than $6(R+R^2)$ for each term of the sequence and no more than $[6(R+R^2)]^{2n}$ for the sequence itself.
\end{proof}
By the same reasoning as in equation~\eqref{countinggraphsNEC} in Section~\ref{sec:final}, but with the above adjustments of estimates, the factor $ \norm{\{G \in \mathcal G \mid \norm{\hat{V}_G} = s+1 \}}$ in inequality~\eqref{upbound} satisfies
\begin{equation*}
\norm{\{G \in \mathcal G \mid \norm{\hat{V}_G} = s+1 \}} \leq 2. [6(R+R^2)]^{2 . (\frac{2}{r}+1).s}
\end{equation*}
for all $s$ in $\nat$. And $\left\lvert \{G \in \mathcal G \mid \norm{\hat{V}_G}  = s+1 \}\right\rvert =0$ if $s<\frac{\widetilde{C}}{2} \diam(\Lambda)$. Finally, inserting this into inequality~\eqref{upbound}, we obtain
\begin{align*}
\ushort \mu(\ushort \omega_v =1  \, \forall v \in \Lambda) &\leq 2 \epsilon \sum_{\stackrel{s \in \natÊ}{Ês\geq \frac{\widetilde{C}}{2}\diam(\Lambda)}} \left[ [6(R+R^2)]^{2 . (\frac{2}{r}+1)} \epsilon\right]^{s}\\
&\leq (C\epsilon)^{\frac{\widetilde{C}}{2}\diam(\Lambda)+1} , \quad \text{where } C=[6(R+R^2)]^{2 . (\frac{2}{r}+1)},
\end{align*}
if $\epsilon$ is small enough, that is to say if $\epsilon \leq \frac{C-2}{C^2}$. We check that $0<C<\infty$. If we take $c=\frac{\widetilde{C}}{2}\in ]0,\infty)$ and $\epsilon^*=\frac{C-2}{C^2}>0$, this ends the proof of Theorem~\ref{thm:upperboundgen}.

\chapter{A complementary lower bound}\label{chap:comparisonFT}
The result in Theorem~\ref{thm:upperboundgen} is complementary to a previous result of \citet{FeTo03}. The latter article deals with PCA obtained as stochastic perturbations of a class of monotonic binary CA, among which the North-East-Center model, in the particular case of totally asymmetric noise in favor of state $1$. It presents in this setting a lower bound to the probability of finding `ones' at all sites of a given sphere.

\section{The models}\label{sec:modelsFT}

The considered class of CA differs partly from that covered by Theorem~\ref{thm:upperboundgen} but they have a nonempty intersection, containing notably the North-East-Center CA and the NSMM CA already discussed in Section~\ref{sec:m=2}. The former class consists of the monotonic binary CA, in any dimension greater than $1$, that fulfill two requirements.

First, the CA must be a zero-eroder, in the sense defined in Section~\ref{sec:erosion} as the symmetric counterpart of the concept of eroder: any finite island of cells with state $0$ surrounded with a sea of cells with state $1$ disappears in a finite time. In other words, the homogeneous trajectory $\stvect \omega^{(1)}$ is attractive.

The second requirement concerns the speed of fronts of `ones'. They are defined similarly to the fronts of `zeros' introduced in Section~\ref{sec:refvectinspace}. If $V_p$ denotes the speed of a front of `ones' with outward normal vector $p$, the speeds $V_p$ and $V_{-p}$ of two fronts with opposed orientations must compensate each other in one of the two following ways:
\begin{enumerate}[(a)]
\item \label{aFT} $V_p+V_{-p} \geq 0$ for all $p $ in $\realspace$;
\item \label{bFT} there exists a $p$ in $\realspace$ such that $V_p+V_{-p} > 0$.
\end{enumerate}
As \citet{FeTo03}, let us think of an initial configuration where two fronts of `zeros' with outward normal vectors $-p$ and $p$ are separated by a large enough strip of cells with state $1$. At both borders of the strip, the configuration looks locally like a front of `ones', respectively with outward normal vector $p$ or $-p$. The evolution of this configuration is characterized by the combined movements of the two fronts of `ones' with respective speeds $V_p$ in the outward direction $p$ and $V_{-p}$ in the outward direction $-p$. Condition~(\ref{aFT}) states that one of the front can move backward but then the front with opposed orientation must always move forward with sufficient speed to prevent the strip of `ones' from shrinking. Condition~(\ref{bFT}) states that there exists a direction $p$ of the fronts such that the in-between strip of `ones' progressively widens.
\begin{rmk}
The arbitrary choice to formulate these two requirements in terms of state $1$ rather than state $0$ as in Theorem~\ref{thm:upperboundgen} will make sense in Section~\ref{sec:implicationsFT} where the lower bound of \citet{FeTo03} will be compared with the upper bound in Theorem~\ref{thm:upperboundgen}.
\end{rmk}
The CA satisfying these assumptions are then perturbed by a totally asymmetric random noise. At each site of $\ent^d$ and at each time step, if the updating rule prescribes the state $0$, an error can turn it into state $1$ with a probability $\epsilon$ in $[0,1]$.
On the contrary, no error can ever turn a $1$ into a $0$. Namely the considered stochastic processes are induced, as explained in Section~\ref{sec:PCAformalism}, by some initial probability measure in $\mathcal M$ and by the product of the local transition probabilities defined by
\begin{equation}\label{TArules}
\begin{aligned}
p( 0 \mid \vect \omega_{\mathcal U}) &=0 \qquad \text{     if } \varphi(\vect \omega_{\mathcal U})=1,\\
p( 1 \mid \vect \omega_{\mathcal U}) &= \epsilon \qquad \text{     if }  \varphi(\vect \omega_{\mathcal U}) =0.
\end{aligned}
\end{equation}
If the initial probability measure is $\sleb{0}$, the induced stochastic process belongs to $M_{\epsilon}^{(0)}$. But other stochastic processes in $M_{\epsilon}$, with different initial conditions, are also of interest here, especially the stochastic processes induced by initial measures that are left invariant by the transition rules~\eqref{TArules}. Let $\mu_{\mathrm{inv}}$ denote such an invariant measure.

\section{Result of Fern\'andez and Toom}\label{sec:resultFT}

For the class of PCA fulfilling the hypotheses described in Section~\ref{sec:modelsFT}, \citet[Theorem 4.1]{FeTo03} give a lower bound to the probability of finding `ones' at all sites of a given sphere in $\ent^d $.
\begin{thm}[Fern\'andez-Toom]\label{thm:lowerbound}
The following holds for any monotonic binary CA in dimension $d>1$ that is a zero-eroder and meets the Condition (\ref{aFT}) or (\ref{bFT}) expressed in Section~\ref{sec:modelsFT}. There exists $c < \infty $ such that for all $\epsilon $ in $[0,1]$, for all probability measures $\mu_{\mathrm{inv}} \in \mathcal M$ that are left invariant by the evolution governed by the rules~\eqref{TArules}, for all spheres $S_R$ in $\ent^d$ with finite radius $R$, the probability of finding `ones' at all sites of $S_R$ has the following lower bound:
\begin{equation*}
\mu_{\mathrm{inv}}(\omega_x =1  \, \forall x \in S_R) \geq  \epsilon ^{c \,  R^{d-1}}.
\end{equation*}
\end{thm}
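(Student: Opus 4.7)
The plan is to exhibit, for each $R$, a set $E\subset \natspace\times\nat^*$ of space-time points of cardinality $O(R^{d-1})$ such that, in the natural coupling of the process, the event ``every point of $E$ is hit by an error intent'' deterministically forces $\{\omega_x=1\,\forall x\in S_R\}$ to occur at some later time, the claim then following from the independence of the intents. To set this up, by invariance, $\mu_{\mathrm{inv}}(\omega_x=1\,\forall x\in S_R)=T^t\mu_{\mathrm{inv}}(\omega_x=1\,\forall x\in S_R)$ for every $t\in\nat^*$; under the totally asymmetric rules~\eqref{TArules} both $\varphi$ and the event are monotone increasing, so in the obvious coupling the trajectory started from $\mu_{\mathrm{inv}}$ stochastically dominates that started from $\sleb{0}$, and consequently $\mu_{\mathrm{inv}}(\omega_x=1\,\forall x\in S_R)\geq T^t\sleb{0}(\omega_x=1\,\forall x\in S_R)$. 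It therefore suffices to prove the bound for the process started from $\vect\omega^{(0)}$.

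Introducing, at every $v\in V\setminus V_0$, an independent Bernoulli$(\epsilon)$ error intent $\xi_v$, and setting $\ushort\omega_v=1$ whenever $\xi_v=1$ and $\ushort\omega_v=\varphi(\stvect\omega_{U(v)})$ otherwise, the event $\{\xi_v=1\,\forall v\in E\}$ has probability exactly $\epsilon^{\norm{E}}$; provided it deterministically forces $\{\omega_x=1\,\forall x\in S_R\}$ at some time $t$, irrespective of the other intents, the lower bound $\epsilon^{\norm{E}}\geq\epsilon^{cR^{d-1}}$ follows. I take $E$ to be supported on a single time slice: let $x_0$ denote the centre of $S_R$ and $\rho$ a fixed multiple of the diameter of $\mathcal U$, and set $E=A\times\{1\}$ with $A=\{x\in\natspace\colon R+\rho\leq\dnorm{x-x_0}\leq R+2\rho\}$, a discrete spherical cage enclosing $S_R$. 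Its cardinality is bounded by a constant times the surface area of a sphere of radius $R+2\rho$, hence by $c\,R^{d-1}$. Conditional on $\xi_v=1$ for every $v\in E$, the configuration at time $1$ equals $1$ on $A$ and $0$ on its complement, and the deterministic CA is iterated from there.

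Two regimes then have to be simultaneously controlled for $O(R)$ time steps. On the inner side of the cage, any cell has its space-time neighbourhood contained in $A\cup B_{R+\rho}(x_0)$, so the local update is indistinguishable from that produced by the zero-eroder acting on an infinite sea of `ones' surrounding an island of `zeros'; the zero-eroder hypothesis then forces the island of `zeros' of radius $R+\rho$ inside the cage to be erased within a time $\tau=O(R)$. On the outer side, the cage's outer boundary is a front of `ones' with outward normal $p$ advancing at speed $V_p$, and its inner boundary is a front with outward normal $-p$ advancing inward at speed $V_{-p}$, so that the cage's thickness evolves at rate $V_p+V_{-p}\geq 0$ by hypothesis~(\ref{aFT}) or~(\ref{bFT}): the cage never shrinks in thickness, and the ``effectively infinite sea of ones'' picture on its inner side persists throughout the $O(R)$ window. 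Taking $t=\tau+1$ ensures $\ushort\omega_w=1$ on $S_R\times\{t\}$, which finishes the construction.

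The main obstacle is precisely this last simultaneous control: one has to show that the inward speed at which `ones' propagate from the cage toward $x_0$ (a quantitative version of the zero-eroder property, obtained by analysing the one-sets through the analogue of the erosion criterion) stays positive throughout the whole $\tau=O(R)$ time window, while the cage itself, although possibly subjected to negative outer-front motion, is prevented from being consumed faster than it can feed its interior, thanks precisely to the constraint $V_p+V_{-p}\geq 0$ provided by~(\ref{aFT}) or~(\ref{bFT}). The surface-area scaling $R^{d-1}$ of $\norm{E}$, and hence the exponent of $\epsilon$ in the final bound, reflects the fact that it is the $(d-1)$-dimensional cage, rather than the full $d$-dimensional ball, that has to be stabilised by errors, the rest of the work being done by the zero-erosion.
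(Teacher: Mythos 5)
Your reduction to the process started from $\vect\omega^{(0)}$ via monotonicity is fine (and in fact the paper does not even need it: since the event is increasing and errors only create `ones', one argues directly for any invariant measure), and the overall strategy --- force errors on an $O(R^{d-1})$-sized set and let the deterministic dynamics fill the sphere --- is exactly the Fern\'andez--Toom strategy that the paper sketches in Section~\ref{sec:proofFT}. The gap is in the deterministic claim at the heart of your construction: that a closed spherical cage of `ones' of constant thickness survives for $O(R)$ steps and feeds the zero-erosion of its interior. Conditions~(\ref{aFT}) and (\ref{bFT}) only control the relative motion of two \emph{parallel flat} fronts of `ones' with opposite orientations; they say nothing about a closed curved shell. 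Worse, Condition~(\ref{bFT}) only asserts the existence of \emph{one} good direction $p$ with $V_p+V_{-p}>0$ and allows $V_p+V_{-p}<0$ in other directions. For the NSMM model (which satisfies (\ref{bFT}) but not (\ref{aFT})), a vertical strip of `ones' is consumed at speed $1$ from one side, so the east- and west-facing portions of your cage are eaten away and the cage breaks open before the interior can fill; your construction therefore fails for an entire class of models covered by the theorem.

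Even for Condition-(\ref{aFT}) models such as North--East--Center, the cage argument is unsubstantiated: the pieces of a thin closed shell erode not only in their normal direction but also from their ends and corners (in NEC a diagonal segment loses a site at \emph{both} ends every step), and the whole cage-plus-interior is itself a finite island of `ones' being eroded from outside, so there is a genuine race between interior filling (time $\sim R$) and exterior erosion (also time $\sim R$) that a cage of radius $R+O(1)$ has no slack to win. This is precisely why the actual proof replaces the cage by a ``spider'' or ``snake'': a small number of \emph{flat} strips whose normal directions come from the erosion criterion applied to the one-sets (so that the strips' deterministic evolution is genuinely governed by the front speeds $V_p$, $V_{-p}$), made of length $O(R)$ but a large constant multiple of $R$, so that the enclosed polytope fills with `ones' strictly before the legs are consumed from their extremities --- ``the spider fills its stomach faster than his legs shrink.'' The $R^{d-1}$ count comes from these $(d-1)$-dimensional flat strips, not from a spherical shell. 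To repair your proof you would need to abandon the cage, choose the strip normals via Lemma~\ref{lemma:refvectinspace} applied to one-sets, and carry out the race estimate explicitly.
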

\begin{rmk}
In the more general case of any finite subset $\Lambda$ of $ \ent^d $, we can use the minimal sphere that covers $\Lambda$ to get a similar lower bound, of the form $\epsilon ^{c \,  (\diam(\Lambda))^{d-1}}$.
\end{rmk}

\section{Method of proof}\label{sec:proofFT}

The proof of Theorem~\ref{thm:lowerbound} can be found in Section 2 of the paper by \citet{FeTo03}. It is based on a generalization of the mechanisms that take place in two examples: the North-East-Center model and the NSMM model presented in Section~\ref{sec:m=2}. We already noticed that both the North-East-Center CA and the NSMM CA are zero-eroders, in addition to being eroders. In other words, they erode islands of `ones' in a sea of `zeros' and islands of `zeros' in a sea of `ones'.
As \citet{FeTo03}, we now check that they fulfill the second requirement formulated in Section~\ref{sec:modelsFT}. The $0-1$ symmetry of the North-East-Center CA implies that a front of `ones' and a front of `zeros' with identical outward normal vectors behave exactly the same. Besides, the speed of a front of `ones' with outward normal vector $-p$ can be deduced from the movement of the front of `zeros' with outward normal vector $p$ that faces it. The latter behaves like a front of `ones' with outward normal vector $p$: it moves with speed $V_p$ in direction $p$. Therefore the facing front of `ones' moves with speed $-V_p$ in direction $-p$, that is to say $V_{-p}=-V_p$, and the North-East-Center CA satisfies Condition~(\ref{aFT}) in the second requirement, like any CA with the $0-1$ symmetry. The NSMM CA does not have this symmetry but we can check that for $p=(0,1)$, we obtain $V_p =0$ and $V_{-p}=1$. The NSMM CA satisfies Condition~(\ref{bFT}) in the second requirement.

In both models, it is possible to create a sphere $S_R$ of cells aligned in state $1$ by requiring only that errors happen at all points in some well-chosen subset, of size $c \, R$, of the space-time lattice and then letting the configuration evolve under totally asymmetric noise, which will never destroy the sphere of `ones' under construction. The same idea transfers to all other models satisfying the assumptions of Theorem~\ref{thm:lowerbound}: they behave similarly to the North-East-Center model if they fulfill Condition~(\ref{aFT}) in Section~\ref{sec:modelsFT} or to the NSMM model if they fulfill Condition~(\ref{bFT}).

For the North-East-Center model, the deterministic construction of a sphere of cells in state $1$ starts from any configuration where, either due to errors or not, the cells at all sites in the following set are in state~$1$ (see Figure~\ref{fig:spider}):
\begin{align*}
&\{(x_1,0) \in \plan \mid -4R \leq x_1\leq 4R \} \\
&\cup \{(0,x_2) \in \plan \mid -4R \leq x_2\leq 4R\} \cup\{(x_1,-x_1) \in \plan \mid -4R \leq x_1\leq 4R\}
\end{align*}
Due to the North-East-Center updating rule, the horizontal and vertical segments then lose one site at one end at each time step but they remain there and do not move. The diagonal segment loses one site at both ends at each time step and it moves to the south-west with speed $\sqrt{2}/2$. In particular, the three segments can be seen as three strips of cells with state $1$ and they do not shrink in width, as we noticed about models that fulfill Condition~(\ref{aFT}). After $4R$ time steps, the triangle with cells in state $1$ enclosed between these three strips has grown. It contains a sphere with radius $R$, centered at $(-R,-R)$.

\begin{figure}
\centering
\begin{tikzpicture}   
           [scale=.75,important line/.style={thick}]
\begin{scope}[scale=.6,xshift=0cm]
           \foreach \x in {-5,...,5} {
			\foreach \y in {-5,...,5} {
				\draw (\x,\y) circle (.12cm);			
				}
		}
	\foreach \position in {(-4,0),(-3,0),(-2,0),(-1,0),(0,0),(1,0),(2,0),(3,0),(4,0),(0,-4),(0,-3),(0,-2),(0,-1),(0,1),(0,2),(0,3),(0,4),(-4,4),(-3,3),(-2,2),(-1,1),(1,-1),(2,-2),(3,-3),(4,-4)} {
			\draw[fill] \position circle (.12cm);		
	}
	\draw[->,important line] ($(-2.5,2.5)$) -- +(-.3,-.3);
	\draw[->,important line] ($(2.5,-2.5)$) -- +(-.3,-.3);
\end{scope}
\begin{scope}[scale=.6,xshift=13cm]
           \foreach \x in {-5,...,5} {
			\foreach \y in {-5,...,5} {
				\draw (\x,\y) circle (.12cm);			
				}
		}
	\foreach \position in {(-4,0),(-3,0),(-2,0),(-1,0),(0,0),(0,-4),(0,-3),(0,-2),(0,-1),(-3,-1),(-2,-2),(-1,-3),(-2,-1),(-1,-2),(-1,-1)} {
			\draw[fill] \position circle (.12cm);		
	}
	\draw[fill,gray,opacity=.2] (-1,-1) circle (1);
\end{scope}
\draw[->] (-4,-4) -- +(1.5,0) node[below] {$x_1$};
\draw[->] (-4,-4) -- +(0,1.5) node[left] {$x_2$};
\end{tikzpicture}
\caption{For the North-East-Center model, a `spider' configuration (on the left) that evolves into a sphere with radius $1$ of cells in state $1$ (on the right), after $4$ time steps. The three segments lose one or two sites at their extremities at each time step. They delimit a triangle that grows due to the movement of the diagonal segment. Quoting A. van Enter, ``the spider fills its stomach faster than his legs shrink'' (see \citet{FeTo03}).}
\label{fig:spider}
\end{figure}
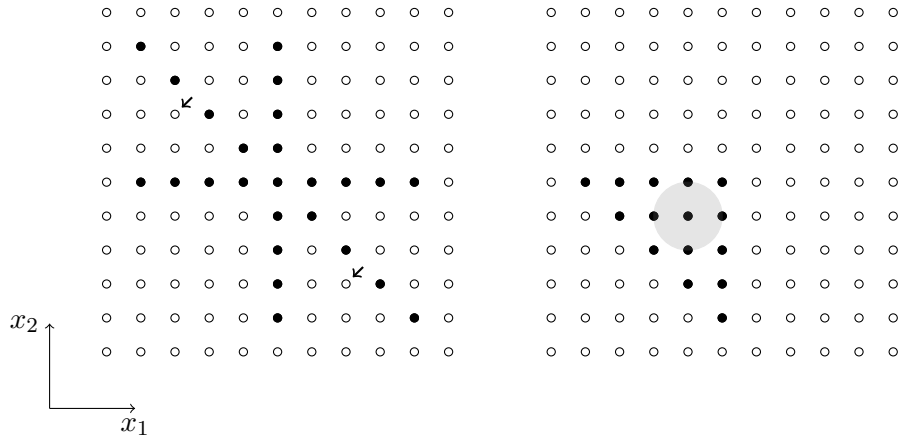

For the NSMM model, the construction can start from a configuration with state $1$ at all sites in the horizontal segment $\{(x_1,0) \in \plan \mid -R \leq x_1 \leq 3R\}$ (see Figure~\ref{fig:snake}). According to the NSMM updating rule, at each time step a new horizontal segment with cells in state $1$ appears, below the previous ones, which lose one site at their eastern end but remain otherwise unaltered. The starting segment is a strip of cells with state $1$ in the direction that verifies Condition~(\ref{bFT}): it widens steadily. After $2R$ time steps, the resulting rectangle with cells in state $1$ has dimensions $2R$ times $2R$ and therefore includes the sphere with radius $R$ centered at $(0,-R)$.

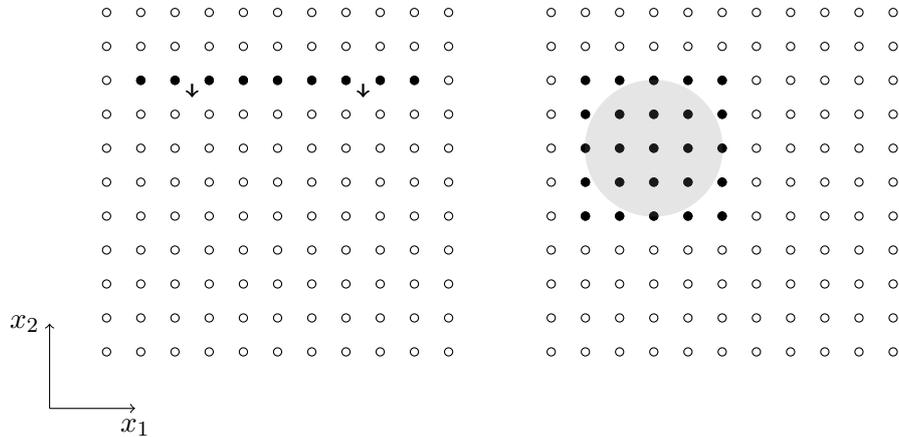
\begin{figure}
\centering
\begin{tikzpicture}   
           [scale=.75,important line/.style={thick}]
\begin{scope}[scale=.6,xshift=0cm]
           \foreach \x in {-5,...,5} {
			\foreach \y in {-5,...,5} {
				\draw (\x,\y) circle (.12cm);			
				}
		}
	\foreach \position in {(-4,3),(-3,3),(-2,3),(-1,3),(0,3),(1,3),(2,3),(3,3),(4,3)} {
			\draw[fill] \position circle (.12cm);		
	}
	\draw[->,important line] ($(-2.5,3)-(0,.1)$) -- +(0,-.4);
	\draw[->,important line] ($(2.5,3)-(0,.1)$) -- +(0,-.4);
\end{scope}
\begin{scope}[scale=.6,xshift=13cm]
           \foreach \x in {-5,...,5} {
			\foreach \y in {-5,...,5} {
				\draw (\x,\y) circle (.12cm);			
				}
		}
	\foreach \x in {-4,...,0} {
		\foreach \y in {3,...,-1} {
			\draw[fill] (\x,\y) circle (.12cm);	
			}	
	}
	\draw[fill,gray,opacity=.2] (-2,1) circle (2);
\end{scope}
\draw[->] (-4,-4) -- +(1.5,0) node[below] {$x_1$};
\draw[->] (-4,-4) -- +(0,1.5) node[left] {$x_2$};
\end{tikzpicture}
\caption{For the NSMM model, a `snake' configuration (on the left) that evolves into a sphere with radius $2$ of cells in state $1$ (on the right), after $4$ time steps. At each time step, a horizontal segment is added at the bottom of the rectangle and the rightmost cells are erased.}
\label{fig:snake}
\end{figure}

The same ideas underlie the proof of Theorem~\ref{thm:lowerbound} in the two general cases corresponding to Conditions~(\ref{aFT}) and (\ref{bFT}). The segments are replaced with strips with a sometimes larger thickness. In dimension $d$, the number of errors sufficient to create these strips in a region of diameter $R$ is of order $R^{d-1}$, whence the lower bound of Theorem~\ref{thm:lowerbound}.

\section{Implications}\label{sec:implicationsFT}

\subsection{Non-Gibbsianness}

\citet{FeTo03} use Theorem~\ref{thm:lowerbound} to show that the invariant measures of the considered stochastic processes are non-Gibbsian. Indeed, they do not possess an alignment-suppression property that is necessary for a measure to be Gibbsian. A probability measure $\mu$ in $\mathcal M$ presents the \textit{alignment-suppression property} if there exists a constant $\eta<1$ such that for all finite sets $\Lambda$ in $\ent^d$ and for all configurations $\tilde{\vect \omega}_{\Lambda}$ in $\Lambda$, the probability to observe exactly this fixed configuration in the set $\Lambda$ decreases exponentially with its volume:
\begin{equation*}
\mu ( \omega_{x}=\tilde{\omega}_x \, \forall x \in \Lambda) \leq \eta^{\norm{\Lambda}}.
\end{equation*}
Theorem~\ref{thm:lowerbound} has the direct consequence that, for any invariant measure $\mu_{\mathrm{inv}}$ of one of the considered stochastic processes, it is impossible to find an $\eta<1$ such that $\mu_{\mathrm{inv}}$ verifies the latter inequality for the aligned configuration $\tilde{\omega}_x =1 \, \forall x \in S_R$, for all spheres $S_R$. So $\mu_{\mathrm{inv}}$ does not have the alignment-suppression property. But then the next proposition reveals that it cannot be Gibbsian.
\begin{proposition}
Every Gibbs measure exhibits the alignment-suppression property.
\end{proposition}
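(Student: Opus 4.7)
The plan is to reduce the global statement to a one-site estimate via the DLR characterization of Gibbs measures. Recall that $\mu \in \mathcal M$ is called Gibbs (for a uniformly summable potential $\Phi$) when for every finite $\Delta \subset \natspace$ and $\mu$-a.e.\ boundary condition $\vect\omega_{\Delta^c}$, the conditional distribution $\mu(\,\cdot\,|\mathcal F_{\Delta^c})(\vect\omega)$ equals the local specification
\begin{equation*}
\gamma_\Delta(\tilde{\vect\omega}_\Delta \mid \vect\omega_{\Delta^c}) = \frac{\exp\bigl(-H_\Delta(\tilde{\vect\omega}_\Delta,\vect\omega_{\Delta^c})\bigr)}{\sum_{\vect\omega'_\Delta}\exp\bigl(-H_\Delta(\vect\omega'_\Delta,\vect\omega_{\Delta^c})\bigr)},
\end{equation*}
where $H_\Delta = \sum_{A\cap\Delta\neq\varnothing}\Phi_A$. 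I will invoke this DLR equation at the single-site level.

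The first step is a uniform lower bound on single-site specifications (the \emph{finite-energy} or non-nullness property). Set
\begin{equation*}
M := \sup_{x\in\natspace}\sum_{A\ni x}\dnorm{\Phi_A}_\infty < \infty.
\end{equation*}
For any site $x$ and any $\vect\omega_{\{x\}^c}$, the terms in $H_{\{x\}}$ that depend on the value at $x$ are bounded by $M$ in absolute value, so
\begin{equation*}
\gamma_{\{x\}}(a\mid \vect\omega_{\{x\}^c})
= \frac{e^{-H_{\{x\}}(a,\vect\omega_{\{x\}^c})}}{\sum_{b\in S}e^{-H_{\{x\}}(b,\vect\omega_{\{x\}^c})}}
\geq \frac{e^{-M}}{\norm{S}\,e^{M}} = \frac{e^{-2M}}{\norm{S}} =: \delta > 0.
\end{equation*}
Since the complementary events have probability $1-\gamma_{\{x\}}(a\mid\cdot)$, this is equivalent to the uniform upper bound
\begin{equation*}
\gamma_{\{x\}}(a\mid\vect\omega_{\{x\}^c}) \leq 1 - \delta =: \eta < 1,
\qquad\forall\, x,a,\vect\omega_{\{x\}^c}.
\end{equation*}

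The second step is an iteration. Enumerate $\Lambda = \{x_1,\dotsc,x_n\}$. For the last site $x_n$, condition on $\mathcal F_{\{x_n\}^c}$ and apply DLR:
\begin{equation*}
\mu(\omega_{x_i}=\tilde\omega_{x_i}\,\forall i\leq n)
= \esp_\mu\!\left[\charf{\omega_{x_i}=\tilde\omega_{x_i},\,i<n}\,\gamma_{\{x_n\}}(\tilde\omega_{x_n}\mid\vect\omega_{\{x_n\}^c})\right]
\leq \eta\cdot\mu(\omega_{x_i}=\tilde\omega_{x_i}\,\forall i<n).
\end{equation*}
Iterating this bound $n$ times yields $\mu(\omega_x=\tilde\omega_x\,\forall x\in\Lambda)\leq \eta^{\norm{\Lambda}}$, which is precisely the alignment-suppression property. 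There is no real obstacle beyond verifying the DLR hypothesis cleanly; the only subtlety is that the binary state space $\norm{S}=2$ is needed (or more generally $\norm{S}\geq 2$) to keep $\delta>0$, which is automatic in our setting.
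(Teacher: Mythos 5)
Your proof is correct and follows essentially the same route as the paper: a uniform single-site non-nullness bound $\delta>0$ on conditional probabilities (hence the upper bound $\eta=1-\delta$), followed by iterated conditioning over the sites of $\Lambda$. The only difference is that you derive the non-nullness bound explicitly from the DLR specification for a uniformly summable potential, whereas the paper simply cites uniform non-nullness of Gibbs measures from the literature.
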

\begin{proof}
A Gibbs measure $\mu$ is always uniformly nonnull -- see for instance \citet[Definition 2.11, Theorem 2.12]{vaFeSo93}. It means in particular that there exists a positive constant $\delta$ such that, for every state $\tilde{\vect \omega}_{\{0\}}$ in $\{0,1\}$ and every configuration $\tilde{\vect \omega}_{\{0\}^c}$ in $\{0,1\}^{\ent^d \setminus \{0\}}$ on the complementary set of $\{0\}$,
\begin{equation*}
\mu( \vect \omega_{\{0\}} = \tilde{\vect \omega}_{\{0\}} \mid \vect \omega_{\{0\}^c} = \tilde{\vect \omega}_{\{0\}^c}) \geq \delta >0.
\end{equation*}
Actually the bound $\delta$ is also uniform on $\ent^d$, that is to say the same inequality holds when the origin $0$ is replaced with any site $x$ in $\ent^d$. Now the probability of a fixed configuration $\tilde{\vect \omega}_{\Lambda}$ in a finite set $\Lambda=\{x_1,\dotsc,x_N\}$ can be rewritten as
\begin{align*}
\mu ( \omega_{x}=\tilde{\omega}_x \, \forall x \in \Lambda) &= \mu ( \omega_{x_1}=\tilde{\omega}_{x_1} \mid \omega_{x}=\tilde{\omega}_x \, \forall x \in \Lambda \setminus \{x_1\} )\\
& \qquad \qquad .\,  \mu ( \omega_{x}=\tilde{\omega}_x \, \forall x \in \Lambda  \setminus \{x_1\})\\
&\leq (1-\delta) . \,  \mu ( \omega_{x}=\tilde{\omega}_x \, \forall x \in \Lambda  \setminus \{x_1\})\\
& \leq \ldots \\
& \leq (1-\delta)^N = \eta^{\norm{\Lambda}}
\end{align*}
if $\eta =1-\delta  <1$. In particular $\eta$ is independent from the set $\Lambda$ and from the configuration $\tilde{\vect \omega}_{\Lambda}$. Thus $\mu$ has the alignment-suppression property.
\end{proof}
Although they do not present the alignment-suppression property because of the slower decrease of the probability of a frozen block of `ones', the invariant measures studied in Theorem~\ref{thm:lowerbound} satisfy
\begin{equation*}
\mu_{\mathrm{inv}} ( \omega_{x}=0 \, \forall x \in \Lambda) \leq (1-\epsilon)^{\norm{\Lambda}}
\end{equation*}
because the local transition probabilities~\eqref{TArules} are such that $p( 0 \mid \vect \omega_{\mathcal U}) \leq 1-\epsilon$ for any configuration $\vect \omega_\mathcal U$ in the neighborhood. For the models that, in addition to the two requirements stated in Section~\ref{sec:modelsFT}, verify the erosion criterion, we know that if $\epsilon $ is small enough, the stability theorem predicts the existence of an invariant measure with a dominance of the state $0$. Nonetheless, even for such invariant measures, if the noise is totally asymmetric in favor of state $1$, a large block of cells aligned in state $1$ is asymptotically more probable than the same block of cells aligned in state $0$ as the size of the block tends to infinity.

\subsection{Asymptotics of the probability of a block of `ones'}

Let us consider the class of stochastic processes for which the combination of Corollary~\ref{cor:genmuinv} of Theorem~\ref{thm:upperboundgen} with Theorem~\ref{thm:lowerbound} describes the asymptotics of the probability of finding `ones' at all sites of a given sphere in $\ent^d$, when the radius of the sphere is very large.

We rewrite the result that combines these theorems, using the notation $\asymp$, where two positive functions $f,g:\real^+ \to \real^+$ are in the relation $f \asymp g$ if there exist two positive constants $C_1$, $C_2$ such that for all $R$ in $\real^+$, $C_1 g(R)\leq f(R)\leq C_2 g(R)$.
\begin{thm}[Theorems~\ref{thm:upperboundgen} and~\ref{thm:lowerbound}]\label{thm:combined}
The following holds for any monotonic binary CA in two dimensions that admits as attractive trajectories both $\stvect \omega^{(0)}$ and $\stvect \omega^{(1)}$, that fulfills the requirement stated in Section~\ref{sec:modelsFT} about the speeds of fronts of `ones' and such that all spheres in $\plan$ are connected sets in the sense induced by the neighborhood $\mathcal U$. There exists $\epsilon^*>0$ such that for all $\epsilon$ in $]0,\epsilon^*]$, the extremal invariant measure $\muinv{0}$ for the evolution governed by the rules~\ref{TArules} satisfies
\begin{equation*}
- \ln \muinv{0}(\omega_x =1  \, \forall x \in S_R) \asymp R
\end{equation*}
where $S_R$ is any sphere with radius $R$.
\end{thm}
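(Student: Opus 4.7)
The plan is to prove Theorem~\ref{thm:combined} as a direct corollary of the two bounds already established in the excerpt, combined via the relation $\asymp$. Since the hypotheses of Theorem~\ref{thm:combined} are precisely tailored to match both Corollary~\ref{cor:genmuinv} (upper bound on the probability) and Theorem~\ref{thm:lowerbound} (lower bound), the work reduces to checking that both theorems apply to the same measure $\muinv{0}$ in the same range of $\epsilon$, and then reading off the asymptotic dependence on $R$.

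First I would handle the upper bound. By assumption $\stvect\omega^{(0)}$ is an attractive trajectory, so by Toom's erosion theorem (Theorem~\ref{thm:erosion}) the CA satisfies the erosion criterion. Spheres in $\plan$ are assumed connected in the sense of Section~\ref{sec:result2D}, so Corollary~\ref{cor:genmuinv} applies and yields constants $\epsilon^*>0$, $c>0$, $C<\infty$ such that, for all $\epsilon\in(0,\epsilon^*]$ and any sphere $S_R\subset\plan$,
\begin{equation*}
\muinv{0}(\omega_x=1\ \forall x\in S_R)\leq (C\epsilon)^{c\,\diam(S_R)+1}.
\end{equation*}
Because a sphere of radius $R$ has diameter $2R$ in the $\ell^1$ metric used in~\eqref{defdiam}, and because $\epsilon^*$ can be chosen so that $C\epsilon^*<1$, taking $-\ln$ gives $-\ln\muinv{0}(\omega_x=1\ \forall x\in S_R)\geq(2cR+1)\ln(1/(C\epsilon))$, which is bounded below by a positive constant times $R$ for every fixed $\epsilon\in(0,\epsilon^*]$.

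Next I would handle the lower bound. The additional hypotheses of Theorem~\ref{thm:combined}, namely that $\stvect\omega^{(1)}$ is also attractive (so the CA is a zero-eroder) and that the speed-of-fronts requirement of Section~\ref{sec:modelsFT} holds, are exactly the hypotheses of Theorem~\ref{thm:lowerbound}. To invoke that theorem I need to check that $\muinv{0}$, constructed in Section~\ref{sec:stabilitythm} as the weak limit of a subsequence of Ces\`aro means of $(T^t\sleb{0})_{t\in\nat}$, is invariant under the totally asymmetric dynamics governed by the rules~\eqref{TArules}. This is immediate: $T\muinv{0}=\muinv{0}$ by construction. Theorem~\ref{thm:lowerbound} then yields a constant $c'<\infty$ such that, for $d=2$,
\begin{equation*}
\muinv{0}(\omega_x=1\ \forall x\in S_R)\geq \epsilon^{c'R},
\end{equation*}
so that $-\ln\muinv{0}(\omega_x=1\ \forall x\in S_R)\leq c'R\,\ln(1/\epsilon)$.

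Putting the two bounds together gives positive constants $C_1,C_2$ (depending on $\epsilon$) such that $C_1R\leq -\ln\muinv{0}(\omega_x=1\ \forall x\in S_R)\leq C_2R$ for all $R$, which is exactly $-\ln\muinv{0}\asymp R$. There is no real obstacle beyond book-keeping: the only point requiring a moment's care is to fix $\epsilon\in(0,\epsilon^*]$ once and for all so that both inequalities are simultaneously available and nontrivial (in particular $C\epsilon<1$ for the upper bound), and to observe that the constants $c$ appearing in Corollary~\ref{cor:genmuinv} and in Theorem~\ref{thm:lowerbound} need not match, since the relation $\asymp$ only requires matching orders of magnitude in $R$.
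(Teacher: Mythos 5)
Your proposal is correct and matches the paper's treatment exactly: Theorem~\ref{thm:combined} is stated there precisely as the conjunction of Corollary~\ref{cor:genmuinv} (upper bound, valid since attractiveness of $\stvect \omega^{(0)}$ gives the erosion criterion and spheres are assumed connected) with Theorem~\ref{thm:lowerbound} (lower bound, valid since $\muinv{0}$ is invariant under the rules~\eqref{TArules}), and the paper offers no further argument beyond this combination. The only cosmetic quibble is that the $\ell^1$-diameter of a sphere of radius $R$ is between $2R$ and $2\sqrt{2}R$ rather than exactly $2R$, which is irrelevant for the relation $\asymp$.
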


This confirms partly, at least in dimension $2$, a conjecture put forward in Note 3 of \citet{FeTo03}.
\begin{example}
The North-East-Center CA and the NSMM CA are eroders and zero-eroders and fulfill respectively Conditions~(\ref{aFT}) and (\ref{bFT}) of Section~\ref{sec:modelsFT}.
Furthermore, for both examples, the set $\{u_1-u_2 \mid u_1 \neq u_2 \in \mathcal U\}$ includes $\{\pm(1,0),\pm(0,1)\}$ so spheres in $\plan$ are connected sets in the sense induced by these two neighborhoods. In conclusion, the North-East-Center CA and the NSMM CA belong to the class of CA to which Theorem~\ref{thm:combined} applies.
\end{example}

We now examine to what extent the hypotheses of Theorem~\ref{thm:combined} are more restrictive than those of Theorem~\ref{thm:upperboundgen} or of Theorem~\ref{thm:lowerbound} individually. Let us start with the limitations of Theorem~\ref{thm:combined} due to some hypotheses of Theorem~\ref{thm:upperboundgen} that are not necessary for Theorem~\ref{thm:lowerbound} to hold. Regarding the CA at the basis of the considered PCA, the assumptions of Theorem~\ref{thm:upperboundgen} include the erosion property, the connectedness of spheres and the restriction to $d=2$.

About the erosion property, we make the following observation, as \citet{FeTo03}. The Dirac measure $\delta^{(1)}$ concentrated on the configuration $\vect \omega^{(1)}$ is an invariant measure for all stochastic processes involved in Theorem~\ref{thm:lowerbound}, since the updating functions are monotonic and non-constant and the noise is totally asymmetric. But when applied to that special invariant measure, Theorem~\ref{thm:lowerbound} is trivial. It is also trivial if $\epsilon =0$. We are thus interested in stochastic processes that admit several invariant measures for some positive values of $\epsilon$. Now the proof of the stability theorem by \citet{To80} leads to the following result.
\begin{proposition}[Toom]\label{prop:eroderinterestFT}
The only monotonic binary CA such that the stochastic evolution generated by the product of the local transition probabilities~\eqref{TArules} can admit more than one invariant measure when $\epsilon>0$ are those that satisfy the erosion criterion, i.e.\ that admit as an attractive trajectory the trajectory $\stvect \omega^{(0)}$.
\end{proposition}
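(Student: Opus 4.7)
The plan is to establish the equivalence in two steps. Let me first record two structural facts that reduce the problem. Under the rules~\eqref{TArules}, $\sleb{1}$ is invariant (since $\varphi(1,\dotsc,1)=1$ and errors only turn prescribed $0$s into $1$s), and the transfer operator $T$ is stochastically monotonic (couple two processes through a common Bernoulli($\epsilon$) error field and invoke the monotonicity of $\varphi$). Consequently $(T^t\sleb{0})_{t\in\nat}$ is stochastically nondecreasing; its weak limit is the minimum invariant measure and coincides with $\muinv{0}$ of~\eqref{muinv}, while $\sleb{1}$ is the maximum invariant measure. Multiple invariant measures therefore exist if and only if $\muinv{0}\neq\sleb{1}$.

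The easy direction---erosion criterion implies non-uniqueness for some $\epsilon>0$---is immediate from Toom's stability theorem. The process induced by~\eqref{TArules} and $\sleb{0}$ lies in $M_\epsilon^{(0)}$, so the construction of Section~\ref{sec:stabilitythm} combined with Theorem~\ref{thm:stability} gives~\eqref{muinvstabil}, i.e.\ $\muinv{0}(\omega_x=1)\to 0$ as $\epsilon\to 0$. For every sufficiently small $\epsilon$ one then has $\muinv{0}\neq\sleb{1}$.

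The hard direction is the contrapositive: if $\varphi$ fails the erosion criterion, then $\muinv{0}=\sleb{1}$ for every $\epsilon>0$. I would trace the event $\{\ushort\omega_{(x,t)}=0\}$ backward in time through a cluster of $0$-vertices, dual to the $1$-cluster construction of Section~\ref{sec:responsible}. Each vertex $v$ of this cluster forces $\varphi(\stvect\omega_{U(v)})=0$---hence a minimal space-time zero-set entirely in state $0$ at time $\temps(v)-1$---together with the absence of an error at $v$, a factor $1-\epsilon$ per vertex. Because the initial measure is $\sleb{0}$, the cluster must reach time $0$ and thus contain at least $t$ vertices. Summing the Peierls weight $(1-\epsilon)^{|\mathrm{cluster}|}$ over admissible cluster shapes by a graph-counting argument analogous to Section~\ref{sec:final} then forces $T^t\sleb{0}(\omega_{(x,t)}=0)\to 0$.

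The main obstacle is the shape-counting step. In Chapter~\ref{chap:eroder2D} the graph-counting bound rested on the reference vectors of Section~\ref{sec:refvectors}, whose existence is equivalent to the erosion criterion; here those vectors are unavailable. One must instead exploit purely the temporal bound $|\mathrm{cluster}|\ge t$ and the finiteness of $\mathcal U$ to bound the number of shape variations per time layer by a constant, reducing the Peierls sum to a geometric series with ratio $C(1-\epsilon)<1$ uniformly in $\epsilon\in(0,1]$. This quantitative reversal of Toom's graphical argument, carried out by \citet{To80} within the proof of the stability theorem, delivers the proposition as a corollary.
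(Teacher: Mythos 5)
Your reduction to showing $\muinv{0}=\sleb{1}$ for non-eroders is sound (under the rules~\eqref{TArules} the transition probabilities are indeed monotonic, so the min/max structure you invoke is available), and the ``easy direction'' you include is not actually part of the statement, which only asserts that non-uniqueness forces the erosion criterion. The gap is in the hard direction. A dual Peierls expansion over clusters of $0$-vertices carries weight $(1-\epsilon)^{\norm{\mathrm{cluster}}}$, and for the sum $\sum_{n\geq t}N(n)(1-\epsilon)^n$ to vanish as $t\to\infty$ you need the shape-entropy constant $C$ in $N(n)\leq C^n$ to satisfy $C(1-\epsilon)<1$. Since any non-trivial neighborhood with several minimal zero-sets gives $C>1$ (each non-error $0$-vertex can be certified by any of $J\geq 2$ minimal zero-sets, and the resulting structures branch), this can only hold for $\epsilon$ bounded away from $0$ --- precisely not ``uniformly in $\epsilon\in(0,1]$'' as you claim. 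There is no ``quantitative reversal'' of Toom's graph argument to appeal to here: the reference vectors and the extent inequality that tame the entropy in Chapters~\ref{chap:NEC}--\ref{chap:eroder2D} exist \emph{only when} the erosion criterion holds, which is exactly the hypothesis you have negated. Indeed the true decay of $T^t\sleb{0}(\omega_x=0)$ for a non-eroder is of order $(1-\epsilon^{\norm{I}})^t$, far slower than any per-vertex Peierls weight can produce for small $\epsilon$.

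The paper's proof runs in the opposite direction: it is a nucleation-and-persistence lower bound on the probability of state $1$, not an upper bound on the probability of state $0$. Failure of erosion gives a finite island $I$ whose deterministic evolution keeps at least one cell $x_s$ in state $1$ forever; by translation invariance and monotonicity, if at some time $t_I$ all cells of $x-x_{t-t_I}+I$ are in state $1$ then the cell at $x$ is in state $1$ at time $t$, and under the totally asymmetric noise this seed is created at each time step independently with probability at least $\epsilon^{\norm{I}}$ and is never destroyed (errors only turn prescribed $0$'s into $1$'s). Hence $T^t\mu_{\textrm{in}}(\omega_x=1)\geq 1-(1-\epsilon^{\norm{I}})^t\to 1$ for every $\epsilon>0$ and every initial measure, and a short induction upgrades single-site convergence to convergence to $\sleb{1}$ on all cylinder sets, giving uniqueness. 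If you want to salvage your write-up, replace the backward cluster expansion by this forward seeding argument; the monotone-coupling framework you set up in your first paragraph is then more than enough to conclude.
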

\begin{proof}
Let us consider a monotonic binary CA that is not an eroder. Then there exists an initial configuration made of a finite island $I\subset \ent^d$ of cells with state $1$ surrounded with a sea of cells with state $0$ everywhere and such that the island is not erased in a finite time by the CA. At every time $t$, at least one cell at a site $x_t$ in $\ent^d$ will be in state $1$. By translational invariance of the CA updating rule, the initial configuration obtained by translating the island by any vector $a$ will evolve similarly, so that at every time $t$ the cell at site $x_t+a$ will be in state $1$. Then for a cell at site $x$ and at time $t$ to be in state $1$, it is sufficient that at some previous instant $t_I$, cells in the island $x-x_{t-t_I}+I$ all be in state $1$.

Let us now start with any initial condition, or any initial probability measure $\mu_{\textrm{in}} \in \mathcal M$, the stochastic process ruled by the local transition probabilities~\eqref{TArules} with some positive $\epsilon$. For all $t$ in $\nat$, we want to estimate $T^t \mu_{\textrm{in}}(\omega_x=1)$ for any site $x$ in $\ent^d$. As time $t$ increases, it is more and more probable that at some time previous to $t$, accumulated errors created a block of `ones' that has not been eroded and that results in the state at $(x,t)$ being $1$:
\begin{align*}
T^t \mu_{\textrm{in}}(\omega_x=1)& \geq 1- \prod_{t_I=1}^t\left( 1-\epsilon^{\norm{x-x_{t-t_I}+I}} \right)\\
&=1-\left( 1-\epsilon^{\norm{I}} \right)^t\\
&\underset{t\to\infty}{\to}1
\end{align*}

Since $(T^t \mu_{\textrm{in}})_{t \in \nat}$ converges to the Dirac measure $\delta^{(1)}$ on every cylinder set of the form $\{\vect \omega \in \{0,1\}^{\ent^d} \mid \omega_x = \tilde{\omega}\}$ with $x$ in $\ent^d$, $\tilde{\omega}$ in $\{0,1\}$, necessarily the sequence converges to $ \delta^{(1)}$ on all cylinder subsets of $\{0,1\}^{\ent^d}$ -- see for instance Lemma 2.2 in the notes by \citet{To04}. We show it by induction on the number $k$ of sites where the state is constant in the cylinder set. Suppose that, for some $k$ in $\nat$,
\begin{equation}\label{rechyp}
\lim_{t \to \infty} T^t \mu_{\textrm{in}} \left(\omega_{x_1} = \tilde{\omega}_{x_1} ,\dotsc, \omega_{x_k} = \tilde{\omega}_{x_k}  \right) = \delta_{ \tilde{\omega}_{x_1} 1} \cdot \ldots \cdot \delta_{ \tilde{\omega}_{x_k} 1}
\end{equation}
for all $\left(x_j\right)_{j=1}^k$ in $(\ent^d)^k$ and for all $\left(\tilde{\omega}_{x_j}\right)_{j=1}^k$ in $\{0,1\}^{k}$.
We want to prove that the latter equality holds as well when $k$ is replaced with $k+1$.
For all $t$ and all $\tilde{\omega}_{x_0}$, in the double inequality
\begin{align*}
0&\leq T^t \mu_{\textrm{in}} \left(\omega_{x_0} = \tilde{\omega}_{x_0} ,\omega_{x_1} = \tilde{\omega}_{x_1} ,\dotsc, \omega_{x_k} = \tilde{\omega}_{x_k}  \right) \\
&\leq  T^t \mu_{\textrm{in}} \left(\omega_{x_1} = \tilde{\omega}_{x_1} ,\dotsc, \omega_{x_k} = \tilde{\omega}_{x_k}  \right) ,
\end{align*}
the upper bound tends to $0$ as $t$ goes to infinity if, for some $j$ in $\{1,\dotsc,k\}$, $\tilde{\omega}_{x_j}=0$. So in that case, equality~\eqref{rechyp} with $k$ replaced with $k+1$ is valid. In the other case, if $\tilde{\omega}_{x_j}=1$ for all $j$ in $\{1,\dotsc,k\}$, we use the double inequality
\begin{equation*}
0\leq T^t \mu_{\textrm{in}} \left(\omega_{x_0} = 0 ,\omega_{x_1} =1 ,\dotsc, \omega_{x_k} = 1 \right) \leq  T^t \mu_{\textrm{in}} \left(\omega_{x_0} = 0\right) ,
\end{equation*}
in which the upper bound tends to $0$ as $t$ tends to infinity. Again, in that second case, equality~\eqref{rechyp} extends from $k$ to $k+1$. Therefore, by induction, equality~\eqref{rechyp} is valid for all $k$ in $\nat$: the sequence $(T^t \mu_{\textrm{in}})_{t \in \nat}$ converges to $\delta^{(1)}$ on every cylinder set.

In particular, for an initial measure $\mu_{\textrm{inv}}$ such that $T\mu_{\textrm{inv}}=\mu_{\textrm{inv}}$, it implies that $\mu_{\textrm{inv}}$ coincides with $\delta^{(1)}$ on all cylinder sets. Using the Daniell-Kolmogorov consistency theorem, this in turn implies that $\mu_{\textrm{inv}}$ coincides with $\sleb{1}$ on the $\sigma$-algebra $\mathcal F$ on $X$, that is to say $\mu_{\textrm{inv}}=\sleb{1}$. So $\delta^{(1)}$ is the only invariant measure. If the noise is totally asymmetric and if $\epsilon >0$, it is thus necessary that the monotonic binary CA at the basis of the PCA be an eroder to obtain more than one invariant measure.
\end{proof}

So when interpreting Theorem~\ref{thm:lowerbound}, we should concentrate anyway on monotonic binary CA that admit as attractive trajectories both $\stvect \omega^{(0)}$ and $\stvect \omega^{(1)}$.

Next, we turn to a second restrictive hypothesis of Theorem~\ref{thm:upperboundgen}, namely the connectedness of spheres. Let us give an example of a model that satisfies all assumptions of Theorem~\ref{thm:combined} except that spheres are not connected in the sense induced by the neighborhood $\mathcal U$ of the CA. The neighborhood is $\mathcal U=\{(0,0),(2,0),(0,1)\}$ and the updating function returns the majority state among the states of the three neighbors. This monotonic binary CA satisfies the erosion criterion and presents the $0-1$ symmetry. Thus $\stvect \omega^{(0)}$ and $\stvect \omega^{(1)}$ are attractive and Condition~(\ref{aFT}) is fulfilled. But in spheres in $\plan$, two adjacent vertical segments are disconnected. Consequently, only the lower bound of Theorem~\ref{thm:lowerbound} has been proved to apply on this particular model.

Overstepping the restriction to dimension $d=2$ in a generalization of Theorem~\ref{thm:upperboundgen} seems feasible with the same graphical techniques of proof. However, the corresponding upper bound would presumably take the form $(C\epsilon)^{c \, \diam(\Lambda)}$ again and not $(C\epsilon)^{c\, \diam(\Lambda)^{d-1}}$. Indeed, the number of error points counted by means of spacelike edges would again be proportional to the distance between the current sources placed at extremities of $\Lambda$, due to the current conservation principle. It would thus be proportional to $\diam(\Lambda)$ but not necessarily to $\diam(\Lambda)^{d-1}$. The combination of such an upper bound with that provided by Theorem~\ref{thm:lowerbound} would not be sufficient to determine the asymptotic behavior of the probability of a block of aligned cells in state $1$.

Next, regarding the noise parameter $\epsilon$, the upper bound given by Theorem~\ref{thm:upperboundgen} is restricted to the regime $\epsilon \leq \epsilon^*$, while of course the lower bound given by Theorem~\ref{thm:lowerbound} holds for all $\epsilon \in [0,1]$. However, under the assumptions of Theorem~\ref{thm:lowerbound}, if $\epsilon$ is close enough to $1$, the same percolation argument as in the proof of Proposition~\ref{prop:hignnoise} in Section~\ref{sec:invmeasures} shows that there exists only one invariant measure, which is $\delta^{(1)}$, due to the total asymmetry of the noise. In that case, Theorem~\ref{thm:lowerbound} is trivial. On the other hand, if the erosion criterion is satisfied and if $\epsilon < \epsilon_c$, the stability theorem implies that $\muinv{0}\neq \sleb{1}$. Theorem~\ref{thm:lowerbound} is non-trivial when applied to $\muinv{0}$ and to its convex combinations with $\delta^{(1)}$. Theorem~\ref{thm:upperboundgen} is further restricted to the regime $\epsilon \leq \epsilon^* \leq \epsilon_c$ and to the particular invariant measure $\muinv{0}$.\\

Let us now turn to the limitations of Theorem~\ref{thm:combined} due to the hypotheses of Theorem~\ref{thm:lowerbound}.

As regards the CA, one can find in Example 3 of the article of \citet{FeTo03} a counterexample that satisfies all assumptions of Theorem~\ref{thm:combined} except Conditions~(\ref{aFT}) and (\ref{bFT}) of the second requirement about the speed of fronts of `ones' in Section~\ref{sec:modelsFT}. For this model only the upper bound of Theorem~\ref{thm:upperboundgen} is proved.

Incidentally, Proposition~\ref{prop:1D1attract} in Section~\ref{sec:erosion} implies that the restriction to $d>1$ is a necessary condition for a monotonic binary CA to be both an eroder and a zero-eroder.

As regards the restriction to totally asymmetric noise in Theorem~\ref{thm:lowerbound}, we notice the following. On the one hand, for the class of CA that satisfy the assumptions of Theorem~\ref{thm:upperboundgen}, including the North-East-Center CA and the NSMM CA, this theorem describes very general stochastic processes in $M_{\epsilon}^{(0)}$ with noise parameter $0\leq \epsilon\leq \epsilon^*$. In such stochastic processes the noise is not necessarily totally asymmetric.

However, in some of these stochastic processes, this noise can lead to the upper bound $\muinv{0}(\omega_{x}=1 \, \forall x \in \Lambda) \leq (1-\delta)^{\norm{\Lambda}}$ for some $0<\delta\leq \epsilon$. It happens for instance in stochastic processes induced by the product of local transition probabilities such that $p(0 \mid \vect \omega_{\mathcal U})\geq \delta >0$ for all $\vect \omega_\mathcal U$ in $\{0,1\}^\mathcal U$. In such cases, the upper bound given by Theorem~\ref{thm:upperboundgen} and Corollary~\ref{cor:genmuinv} holds but becomes pointless when the volume of $\Lambda$ is so large that $(1-\delta)^{\norm{\Lambda}} \ll (C\, \epsilon)^{c \, \diam(\Lambda)}$.

The natural stochastic processes that avoid that situation are those with totally asymmetric noise as in assumption~\eqref{TArules}. In that case the probability $\muinv{0}(\omega_{x}=1 \, \forall x \in \Lambda)$ is maximal compared to all other stochastic processes induced by local transition probabilities satisfying the Bounded-noise assumption, because the function $\varphi$ is monotonic. For these stochastic processes, the upper bound in Theorem~\ref{thm:upperboundgen} is relevant. It might be too weak for some CA. But not for those that satisfy the hypotheses of Theorem~\ref{thm:combined}.\\

Finally, we summarize the results for the class of PCA that satisfy the assumptions of Theorem~\ref{thm:combined}, including the North-East-Center PCA and the NSMM PCA, and for the invariant measures in the regime of totally asymmetric noise with parameter $0<\epsilon<1$. These results emanate from a balance between several forces: attractiveness of $\stvect \omega^{(0)}$, attractiveness of $\stvect \omega^{(1)}$ and noise.
\begin{itemize}
\item If $\epsilon$ is close to $1$, the only invariant measure is $\delta^{(1)}$.
\item If $\epsilon < \epsilon_c$, there exist several invariant measures, including $\delta^{(1)}$, $\muinv{0}$ and their convex combinations. Let $\mu_{\mathrm{inv}}$ denote any of them.
	\begin{itemize}
	\item In order to observe at some instant a block of cells aligned in state $0$, at all sites of the block errors must be excluded at that instant: $\mu_{\mathrm{inv}}(\omega_{x}=0 \, \forall x \in \Lambda) \leq (1-\epsilon)^{\norm{\Lambda}}$. This alignment is suppressed by noise, since its probability decreases exponentially with the volume of the block.
	\item In order to observe at some instant a block of cells aligned in state $1$, it is sufficient to require a set of error points whose number is proportional to the diameter of the block: $\mu_{\mathrm{inv}}(\omega_{x}=1 \, \forall x \in \Lambda) \geq \epsilon^{c_1 \, \diam(\Lambda)}$. This alignment is not suppressed by totally asymmetric noise and its probability decreases at most exponentially with the diameter of the block. It is favored by the attractiveness of $\stvect \omega^{(1)}$ and by the movements of fronts of `ones'.
	\item In order to observe at some instant a block of cells aligned in state $1$, if the initial condition is $\vect \omega^{(0)}$, it is necessary to require a set of error points whose number is proportional to the diameter of the block: if $\epsilon \leq \epsilon^*$, $\muinv{0}(\omega_{x}=1 \, \forall x \in \Lambda) \leq (C\, \epsilon)^{c_2 \, \diam(\Lambda)}$. This alignment is penalized by the attractiveness of the homogeneous trajectory $\stvect \omega^{(0)}$. Its probability decreases exponentially with the diameter of the block.
	\end{itemize}
\end{itemize}
%
%
%
\part[Exponential convergence to equilibrium\\and exponential decay of correlations]{Exponential\\convergence to equilibrium\\and\\exponential\\decay of correlations}\label{part:expdecay}
\chapter[Graphs for a general eroder in any dimension]{Graphs for a general eroder\\in any dimension}\label{chap:eroderdD}
We present here, in the context of perturbed monotonic binary CA, the graph construction that was first given by Andre Toom in \citep{To80}. We will need it in Chapter~\ref{chap:expdecay} in the proof of exponential decay of correlations for the invariant measure $\muinv{0}$. That graph construction is very similar to the one that we used in Part~\ref{part:block}. It is less general in the sense that it is associated to a subset $\Lambda$ of space-time that is reduced to a singleton. At the same time, it is more general in the sense that it is defined for a monotonic binary CA with the erosion property in any dimension $d$ of the space lattice, rather than restricted to dimension $d=2$. We will use the reference vectors already constructed in Section~\ref{sec:refvectors}.\\

Fix any monotonic binary CA that satisfies the erosion criterion. Let the point $v_{\Lambda}=(x_{\Lambda},t_{\Lambda}) \in \ent^d \times \nat^* \subset V$ be given. We write $\Lambda=\{v_{\Lambda}\}$. Consider the space-time configurations $\stvect \omega \in S^V$ such that $\ushort \omega_v =0 \ \forall v \in V_0$ and $\ushort \omega_{v_{\Lambda}} =1  $. Like in Part~\ref{part:block}, we want to define a map that associates a graph $G$ to each of these space-time configurations. Let such a space-time configuration $\stvect \omega$ be given. The construction of $G$ follows globally the same lines as in the two-dimensional case and we will only indicate the changes.

\section{Ingredients}\label{sec:ingredients}

First, we construct as in Section~\ref{sec:responsible} the set $\bar{U}^{\infty}(\Lambda)\subset V$ of all points indirectly responsible for the state $1$ at $v_{\Lambda}$. In particular, for any $v\in V\setminus V_0$ such that $\ushort \omega_v=1$, if $v$ is not an error point, the set $\bar{U}(v)=\{u \in U(v)\mid \ushort \omega_u=1\}$ of points responsible for the state $1$ at $v$ necessarily intersects every space-time zero-set of $v$, $Z(v)=v+Z$.

Next, we use the same definition of the classes as in Section~\ref{sec:classes}. The results of that section do not depend on the dimension and they are still valid here. The set $\Lambda$ is a singleton now so it includes only one class. In particular, Lemma~\ref{lemma:forest} implies that the graph $F$ that represents the relations `is responsible for' between classes is a tree.

The graph $g_{\Lambda}$ defined in Section~\ref{sec:neighborclasses} is not really useful here since it is a trivial graph with one vertex $\{v_{\Lambda}\}$. On the other hand, we still need the graph $g(C)$, made of links between the classes that are responsible for a class $C$. A link connects two classes $A$ and $B$ in $U_F(C)$ if there are two points $a$ in $A$ and $b$ in $B$ that both belong to the space-time neighborhood $U(c)$ of some point $c$ in $V$. Again, Lemma~\ref{lemma:g(C)} does not depend on the dimension and still holds here.

The main modification is about the definition of edges. An \textit{edge} is now unoriented and has no color. It connects two points of $\bar{U}^{\infty}(\Lambda)$ and bears an extra attribute, namely a partition of the set $\{1, \dotsc, m \}$ of \textit{poles} between its two vertices, where $m$ is the number of reference vectors obtained in Proposition~\ref{prop:refvectinspacetime} and Remark~\ref{rmk:vectinspacetime} in Section~\ref{sec:refvectors}. Each of the two vertices of an edge must be assigned at least one of the $m$ poles. Equivalently, an edge can be viewed as a map $e:\{1, \dotsc, m \} \to \bar{U}^{\infty}(\Lambda)$ whose image contains exactly two points. For any $j\in \{1, \dotsc, m \}$, $e(j)$ denotes the vertex that is assigned pole $j$.

\begin{rmk}\label{rmk:bijectionformalism}
In the particular case where $m=3$, we can see a one-to-one correspondence between the edges defined here and those of Part~\ref{part:block}, which were borrowed from the presentation by \citet{LeMaSp90} for the North-East-Center model. Indeed, the distribution of exactly $3$ poles between the two vertices of an edge always leads to the following situation. One vertex, denoted by $a$, is assigned exactly one pole, denoted by $k\in \{1,2,3\}$, and the other vertex, $b$, is assigned the two remaining poles. Let such an edge correspond to an oriented edge, going from vertex $b$ to vertex $a$ and bearing color $k$. This correspondence is one-to-one.

If $m=2$, the edges defined here are in a one-to-one correspondence with oriented edges, without color. Indeed, an edge between two vertices $a$ and $b$ can either assign pole $1$ to $a$ and pole $2$ to $b$ or the other way round. That is equivalent to defining edges without poles but with two possible orientations, namely from $a$ to $b$ or from $b$ to $a$, like we did in Chapter~\ref{chap:Stav} for the Stavskaya model.

Nonetheless, for general values of $2 \leq m \leq d+1$, we will have to stick to the most general definition of edges in terms of distributions of poles.
\end{rmk}

As in Section~\ref{sec:currents}, we define two types of edges: timelike edges and spacelike edges. For timelike edges, we use the $m$ space-time zero-sets obtained in Proposition~\ref{prop:refvectinspacetime} and Remark~\ref{rmk:vectinspacetime} in Section~\ref{sec:refvectors}. Any point $v\in V\setminus V_0$ admits the space-time zero-sets
\begin{align*}
Z_k(v)&=\{v+u \mid u \in U, \phi_k(x(u))  \leq 0\}\\
&=\{v+u \mid u \in U, ( v^{(k)} \mid u )  \geq r\} \quad k=1,\dotsc,m.
\end{align*}
Then, for any point $v \in \bar{U}^{\infty}(\Lambda)$, $\bar{U}(v)\neq \varnothing$ implies $\bar{U}(v)\cap Z_k(v)\neq \varnothing$ for all $k$. We can thus define for any $k\in \{1,\dotsc,m\}$ a \textit{timelike} edge that connects $v$ to a point $w$ in $\bar{U}(v)\cap Z_k(v)$ and that assigns the pole $k$ to $w$ and all other poles to $v$.

Let the $\textit{extent}$ of an edge $e$ be defined now as the quantity
\begin{equation}\label{defextent}
\extent (e) := \sum_{j=1}^m  ( v^{(j)} \mid e(j) ).
\end{equation}
Then, if $e$ is the timelike edge just defined,
\begin{align}
\extent(e)&= \sum_{\substack{j=1\\j\neq k}}^m  ( v^{(j)} \mid v )+( v^{(k)} \mid w )\notag \\
&= \sum_{j=1}^m  ( v^{(j)} \mid v )-( v^{(k)} \mid v )+ ( v^{(k)} \mid w )=( v^{(k)} \mid w-v )\label{simplifextent}\\
&\geq r\notag
\end{align}
because $\sum_{j=1}^m v^{(j)} =0$ and $w\in Z_k(v)$. So, similarly to the extent that was defined in Section~\ref{sec:final}, the extent of the timelike edge $e$ is actually the dot product with the reference vector $v^{(k)}$ of the displacement vector from the vertex $v$ to the vertex $w$ that is assigned pole $k$ only.

A \textit{spacelike} edge is an edge between two vertices $a,b\in \bar{U}^{\infty}(\Lambda)$ that both belong to the space-time neighborhood $U(c)$ of some point $c\in V$. So $a$ and $b$ have equal time coordinates and the difference between their space coordinates belongs to $\{u_1 - u_2 \mid u_1 \neq u_2 \in \mathcal U\}$. By the definition of edges, each vertex of a spacelike edge is assigned at least one pole but that is the only constraint on the distribution of poles between the two vertices of a spacelike edge. It implies that an interpretation of the extent as in equation~\eqref{simplifextent} is not possible in general for spacelike edges.

Let the point $v_{\Lambda}$ be the \textit{source}, which plays a role similar to that of the three sources $\pi_1,\pi_2,\pi_3$ in Part~\ref{part:block}. Let the map $e_*  :\{1, \dotsc, m \} \to \bar{U}^{\infty}(\Lambda)$ be defined by $e_*(k)=v_{\Lambda} \, \forall k$. It is not an edge, because its image contains only one point, but it reproduces the effect of the virtual edges used in Part~\ref{part:block}. By analogy, we call $e_*$ the \textit{virtual edge}.

Although the edges can no further be interpreted as transporting a current with a color, we can still define some conservation principle and call it the \textit{current conservation} principle, by analogy with the special case where $m=3$. For any point $v \in V$, including the source $v_{\Lambda}$, we say that the current is conserved at $v$ if $\sum_{\textrm{edges }e} \delta_{e(k) \, v}$ takes the same value for all $k$. It means that, taking into account all edges attached to $v$, the number of times that pole $k$ is assigned to $v$ is the same for all $k\in \{1,\dotsc,m\}$. Let the virtual edge $e_*$ be taken into account in this current balance, although it hardly matters since $e_*$ assigns all its poles to the same point.

\begin{rmk}\label{rmk:conservationcorresp}
This formulation of the current conservation principle is in fact equivalent to that in Part~\ref{part:block} when $m=3$. Indeed, coming back to the language used in Part~\ref{part:block} and using the correspondence given in Remark~\ref{rmk:bijectionformalism}, an oriented edge with color $k$ arriving at $v$ corresponds to assigning pole $k$ to $v$ (and the two remaining poles to the departure vertex), while an oriented edge with color $k$ leaving from $v$ assigns to $v$ the two poles not equal to $k$. Therefore, in the current balance at $v$, an edge with color $k$ arriving at $v$ compensates an edge with color $k$ leaving from $v$ because their combined effect amounts to assigning each type of poles $1$, $2$ and $3$ exactly once to $v$. One can also check easily that three oriented edges with the three colors all leaving from $v$ or all arriving at $v$ compensate each other.
\end{rmk}

\section{Construction of the graph $G$}\label{sec:graphGagain}

The recursive construction of $G$ and $\hat{V}_G$ via the graphs $G_q$ and the stocks $S_q$ of classes, $q=1,\dotsc,Q$, generalizes that given in Section~\ref{sec:graphG}, taking into account the new definition of edges. The properties (P\ref{P1}) to (P\ref{P4}) that are satisfied at each step $q$ are repeated here. Their only modification is about the formulation of the current conservation principle and of its weak version in properties (P\ref{P1}) and (P\ref{P2}).

\begin{enumerate}[\hspace{.1cm}({P}1)]
\item The current is conserved by $G_q$ at all points in $\bar{U}^{\infty}(\Lambda)\setminus \bigcup_{A \in S_q} A$. In the current balance, we take into account the virtual edge $e_*$ even though it is not an edge of $G_q$.\label{P1gen}
\item The current is \textit{weakly conserved} at all classes $A$ in $S_q$ in the following sense. For each $k=1,\dotsc,m$, $\sum_{v \in A}\sum_{\textrm{edges }e \textrm{ of }G_q} \delta_{e(k) \, v}=1$. It means that, for each $k$, pole $k$ is assigned exactly once to some point in $A$. In this weak current balance, we again take into account the virtual edge $e_*$.\label{P2gen}
\item The number of spacelike edges in $G_q$ is equal to the number of classes in $S_q$ minus one.\label{P3gen}
\item The graph $G_q$ would be connected if for all $A$ in $S_q$ the points in $A$ were considered indistinguishable from each other.\label{P4gen} 
\end{enumerate}

\begin{rmk}
If $m=3$, the weak current conservation principle in property (P\ref{P2gen}) is equivalent to that given in Section~\ref{sec:graphG}. Indeed, using Remark~\ref{rmk:conservationcorresp}, one can see that assigning each pole $1$, $2$ and $3$ exactly one time to some point in $A$ can be achieved by means of either one edge with some color $k\in \{1,2,3\}$ leaving from a point in $A$ and one edge with the same color $k$ arriving onto a point in $A$, or three edges with the three different colors arriving onto points in $A$.
\end{rmk}

The step $q=0$ of the construction is now as simple as it is in Section~\ref{sec:graphG} when $\Lambda$ is a singleton and thus reduced to a unique class $\{v_{\Lambda}\}$. $G_0$ is the empty graph with no edge and $S_0=\{\{v_{\Lambda}\}\}$. Then one can easily check that the properties (P\ref{P1gen}) to (P\ref{P4gen}) above are verified for $q=0$.

We describe now the construction of timelike and spacelike edges at step $q+1=1,\dotsc,Q$, when $G_q$ and $S_q$ have been constructed at step $q$ and verify properties (P\ref{P1gen}) to (P\ref{P4gen}). Again, an exploitable class $A\in S_q$, such that $U_F(A)\neq \varnothing$, is chosen. $G_q$ and $S_q$ possess property (P\ref{P2gen}) so, for every $k$, there is exactly one point $v_k$ in $A$ that is assigned pole $k$ by one and only one edge of $G_q$ or virtual edge. Since $v_k \in A \subseteq \bar{U}^{\infty}(\Lambda)$ and $U_F(A)\neq \varnothing$, we know that $\bar{U}(v_k)\cap Z_j(v_k)\neq \varnothing$ for all $j$, and in particular $\bar{U}(v_k)\cap Z_k(v_k)$ contains some point $w_k$. We then draw for each $k$ a timelike edge as defined in Section~\ref{sec:ingredients}, assigning pole $k$ to $w_k$ and all other poles to $v_k$. We add the $m$ new edges to the set of all edges of $G_q$ to form a part of $G_{q+1}$.

It remains to draw spacelike edges. Up to now, for any $k$, exactly one pole $k$ is assigned to some point in $ \bigcup_{B \in U_F(A)} B$. It is due to the timelike edges just drawn. As in Section~\ref{sec:graphG}, one can check, using the fact that $F$ is a tree, that no other edge has been attached to any point in $ \bigcup_{B \in U_F(A)} B$ during the previous construction steps. For every $k$, let $B_k $ denote the unique class in $U_F(A)$ containing a point that has been assigned pole $k$ by an edge. Some or all of the $B_k$ can coincide. Now Lemma~\ref{lemma:g(C)} tells us that the graph $g(A)$ made of the classes in $U_F(A)$ and of the links between these classes is connected. Let the tree $T$ be a minimal connected subgraph of $g(A)$ such that all $B_k$, $k=1,\dotsc,m$, are vertices of $T$. We will now draw one spacelike edge per link of $T$. If two classes $C,\tilde C$ are the ends of a link of $T$, it means that there exist two points $c\in C, \tilde c\in \tilde C$ such that $c-\tilde c \in \{u_1-u_2 \mid u_1 \neq u_2\in U\}$. We choose two such points $c,\tilde c$ and distribute the $m$ poles to $c$ and $\tilde c$ according to the following rule. The considered link connects two trees that are subgraphs of the minimal tree $T$. They would be disconnected from each other if that link was removed from $T$. One of these subgraphs contains the vertex $C$ and the other subgraph contains $\tilde C$. Since $T$ is minimal, each of the two subgraphs also contains at least one of the $m$ classes $B_1,\dotsc,B_m$. Then, for each $k$, we assign pole $k$ to the vertex $c$ if $B_k$ is in the same subgraph as $\tilde C$ and to the vertex $\tilde c$ if $B_k$ is in the same subgraph as $C$. The resulting map from $\{1,\dotsc,m\}$ to $\{c,\tilde c\}$ is a spacelike edge.

For each link of $T$, we add the new spacelike edge thus constructed to the edges of $G_q$ and to the $m$ new timelike edges to form the set of edges of $G_{q+1}$. The vertices of $G_{q+1}$ are still defined as the ends of its edges. And the stock $S_{q+1}$ is still defined as the union of $S_{q}\setminus \{A\}$ with the subset of $U_F(A)$ made of all classes that contain vertices of $G_{q+1}$. The following lemma is the general version of Lemma~\ref{lemma:q+1}.

\begin{lemma}
$G_{q+1}$ and $S_{q+1}$ satisfy properties (P\ref{P1gen}) to (P\ref{P4gen}).
\end{lemma}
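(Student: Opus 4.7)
The plan is to verify the four properties (P\ref{P1gen})--(P\ref{P4gen}) in turn, paralleling the proof of Lemma~\ref{lemma:q+1} from Section~\ref{sec:q} but carrying through the count in the general $m$-pole formalism. In particular, (P\ref{P3gen}) and (P\ref{P4gen}) transfer almost verbatim, since the counting arguments there only use how many classes and spacelike edges are added, not the specific structure of colors and orientations. Specifically for (P\ref{P3gen}), $\norm{S_{q+1}}-\norm{S_q}$ equals the number of classes of $U_F(A)$ that contain vertices of $G_{q+1}$, minus one; by construction of the spacelike edges, this coincides with the number of links in the chosen minimal subtree $T$ of $g(A)$, which equals the number of new spacelike edges added. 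For (P\ref{P4gen}), after identifying points within each class of $S_{q+1}$, the new spacelike edges form a connected subgraph (they trace a tree of links between vertices of $T$), the $m$ new timelike edges attach this subgraph to $A$, and by the induction hypothesis $G_q$ was connected after identifying points within classes of $S_q$, so in particular every vertex of $G_q$ was connected to some vertex in $A$.

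The substantive work lies in (P\ref{P1gen}) and (P\ref{P2gen}). For (P\ref{P1gen}), the only points at which current conservation must be freshly established are those of $A$, since $A\in S_q\setminus S_{q+1}$ while all other memberships in $\bar{U}^{\infty}(\Lambda)\setminus\bigcup_{C\in S_{q+1}} C$ are inherited from $\bar{U}^{\infty}(\Lambda)\setminus\bigcup_{C\in S_q} C$, and the new spacelike edges only have vertices in $\bigcup_{B\in U_F(A)}B$. For each $v\in A$, let $K_v=\{k\in\{1,\dotsc,m\}\mid v_k=v\}$. By (P\ref{P2gen}) for $S_q$, before adding the new timelike edges one has $\sum_{e}\delta_{e(j),v}=1$ if $j\in K_v$ and $=0$ otherwise, where the sum runs over edges of $G_q$ together with $e_*$. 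The $k$-th new timelike edge, which exists only for $k\in K_v$, assigns every pole $j\neq k$ to $v$ and pole $k$ to $w_k$; it therefore adds $1$ to $\sum_e\delta_{e(j),v}$ for each $j\in\{1,\dotsc,m\}\setminus\{k\}$. Summing over $k\in K_v$ gives a total contribution of $\norm{K_v}-1$ to each $j\in K_v$ and $\norm{K_v}$ to each $j\notin K_v$. Adding this to the prior values yields the common value $\norm{K_v}$ for every pole $j$, which is exactly the current conservation at $v$.

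For (P\ref{P2gen}), the classes in $S_{q+1}\cap S_q$ are unaffected, and the classes in $S_{q+1}\setminus S_q$ are certain elements of $U_F(A)$, to which only the new edges are attached. The $m$ new timelike edges assign each pole $k$ exactly once to some point in $\bigcup_{B\in U_F(A)}B$, namely to $w_k\in B_k$. It remains to check that, after adding the spacelike edges associated with the links of the minimal subtree $T$ of $g(A)$ spanning $B_1,\dotsc,B_m$, each pole $k$ is still assigned exactly once in each class $B\in S_{q+1}\setminus S_q$. This follows from the pole-distribution rule along $T$ in exactly the same way as in the proof of (P\ref{P2gen})$_{q=0}$: removing any link of $T$ splits it into two subtrees, one containing $B_k$ for $k$ in some nonempty subset $I\subsetneq\{1,\dotsc,m\}$ and the other containing $B_k$ for $k\in\{1,\dotsc,m\}\setminus I$; the associated spacelike edge assigns pole $k$ to the endpoint lying in the subtree not containing $B_k$, so for each $k$ the assignment of pole $k$ propagates unchanged along the path of $T$ from $B_k$ to any other class of $T$. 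Checking that every class $B\in U_F(A)$ that is a vertex of $T$ receives each pole exactly once is then a straightforward bookkeeping along this path.

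The main obstacle is keeping the multi-pole bookkeeping consistent in (P\ref{P1gen}) and in the propagation argument for (P\ref{P2gen}), which in the familiar $m=3$ case was concealed by the color-and-orientation formalism (see Remarks~\ref{rmk:bijectionformalism} and~\ref{rmk:conservationcorresp}) but must now be carried out explicitly for arbitrary $m\leq d+1$. Once this is done, the induction proceeds identically to Section~\ref{sec:q}, and the lemma follows.
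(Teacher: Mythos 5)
Your proposal is correct and follows essentially the same route as the paper's proof: the explicit pole count at each $v\in A$ (yielding the common value $\norm{K_v}$ for every pole) is just a spelled-out version of the paper's compensation argument for (P\ref{P1gen}), and your splitting rule along the minimal subtree $T$ for (P\ref{P2gen}), together with the transfers of (P\ref{P3gen}) and (P\ref{P4gen}), matches the paper's indications. No gaps.
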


\begin{proof}
The four properties can be proved using arguments very similar to those in the proof of Lemma~\ref{lemma:q+1}. We just give indications here.
\begin{enumerate}[\hspace{.1cm}({P}1)$_{q+1}$]
\item At any point $v$ in $A$, either there is no pole at all, or there are poles of $G_q$ or $e_*$, at most one per value of $k$ in $\{1,\dotsc,m\}$. If pole $k$ is assigned to $v$ by an edge of $G_q$ or by $e_*$, it is exactly compensated in the current balance at $v$ by the poles in $\{1,\dotsc,m\}\setminus \{k\}$ of one of the $m$ timelike edges drawn at step $q+1$. All vertices of the new edges drawn at step $q+1$ that do not belong to $A$ belong to classes that have been added into $S_{q+1}$. So the current conservation principle does not have to hold at these points.
\item For any class $B\in U_F(A) \cap S_{q+1}$ and any $k$, one and only one of the following holds. Either $B$ coincides with the class $B_k$ defined in the construction of spacelike edges and then $B$ contains the pole $k$ of a timelike edge but no other pole $k$. Or $B$ is connected to $B_k$ by a unique path in $T$ and then $B$ contains the pole $k$ of the spacelike edge associated to the first link of that path, but no other pole $k$.
\item The number of new spacelike edges drawn at step $q+1$ is equal to the number of links in the tree $T$, which is itself equal to the number of classes that are vertices of $T$, minus one. Now these classes are the new classes added to $S_q$, while $A$ is removed from it, to form $S_{q+1}$.
\item Under the assumption that for every class in $S_{q+1}$, its points are identified, the new spacelike edges drawn at step $q+1$ form a connected subgraph of $G_{q+1}$. The $m$ new timelike edges are all connected to that subgraph. The vertices of $G_q$ that belong to $A$ are all connected to the new timelike edges. Finally any vertex of $G_q$ is connected by edges of $G_q$ to some vertex of $G_q$ that belongs to $A$.
\end{enumerate}
\vspace{-.5cm}
\end{proof}

\begin{rmk}
If $m>3$, the spacelike edges constructed above, using the minimal tree $T$, in order to satisfy the properties (P\ref{P2gen}) and (P\ref{P3gen}), assign in general several poles to each of their two vertices, breaking the correspondence with the oriented edges of Part~\ref{part:block} transporting a current with some color.
\end{rmk}

As in Part~\ref{part:block}, the graph $G$ is defined as $G_Q$ at the final step $Q$ such that all classes in $S_Q$ are unexploitable. Its set of edges is denoted by $E_G$. Its set of vertices $V_G$ is composed of the ends of its edges or, if $G_Q$ has zero edge, $V_G:=\{v_{\Lambda}\}$. The points in the singletons that belong to $S_Q$ form the set $\hat{V}_G$.

\section{Properties of $G$}\label{sec:propertiesofG}

As in Section~\ref{sec:graphG}, the construction method and the properties (P\ref{P1gen}) to (P\ref{P4gen}) imply the following. $G$ is a finite graph on $\bar{U}^{\infty}(\Lambda)$. In particular, the space-time configuration $\stvect \omega$ has $\ushort \omega _v=1$ for all vertices $v$ of $G$. $G$ is made of timelike and spacelike edges that bear a distribution of poles $1,\dotsc,m$ between their two vertices. It is connected and its set of vertices $V_G$ contains $v_{\Lambda}$. $V_G$ also includes a subset $\hat{V}_G$, characterized as the set of all vertices of $G$ such that no timelike edge connects them to a point in their space-time neighborhood. In the space-time configuration $\stvect \omega$, errors happen at all points in $\hat{V}_G$. $G$ obeys the current conservation principle at all points in $V$ and its number of spacelike edges is equal to $\norm{\hat{V}_G}-1$.

The current conservation principle implies a relation between the numbers of timelike and spacelike edges. Let $\Extent (G)$ be the sum of the extents of all edges of $G$ and of the virtual edge $e_*$. One has the following generalization of Lemma~\ref{lemma:extent0}.

\begin{lemma}\label{lemma:extentgen}
$\Extent (G) =0$.
\end{lemma}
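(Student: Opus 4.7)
The plan is to mimic the argument already given for Lemma~\ref{lemma:extent0} in the special case $m=3$, but now phrased directly in terms of the pole formalism introduced in Section~\ref{sec:ingredients}. The extent of an edge $e$ is by definition the sum $\sum_{j=1}^m (v^{(j)} \mid e(j))$, and the extent of the virtual edge $e_*$ is $\sum_{j=1}^m (v^{(j)} \mid v_{\Lambda})$. So $\Extent(G)$ is a double sum indexed by $(e,j)$ where $e$ ranges over $E_G \cup \{e_*\}$ and $j$ ranges over $\{1,\dots,m\}$.

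The idea is to reorganize this double sum by first fixing the endpoint of a pole rather than the edge carrying it. For each $v \in V$ and each $k \in \{1,\dots,m\}$, let
\begin{equation*}
N_k(v) := \sum_{e \in E_G \cup \{e_*\}} \delta_{e(k),\, v},
\end{equation*}
i.e.\ the total number of times pole $k$ is assigned to the point $v$ (counting the virtual edge). Reindexing gives
\begin{equation*}
\Extent(G) = \sum_{v \in V} \sum_{k=1}^m N_k(v) \, \left( v^{(k)} \mid v \right).
\end{equation*}

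Now I would invoke the current conservation principle. By the properties (P\ref{P1gen}) and (P\ref{P2gen}) applied at the final step $q=Q$, together with the fact that every class in $S_Q$ is a singleton $\{v\} \subset \hat V_G$, the current is conserved at every point $v \in V$: the quantity $N_k(v)$ does not depend on $k$, and we can write $N_k(v) = N(v)$ for some $N(v) \in \mathbb{N}$. Substituting,
\begin{equation*}
\Extent(G) = \sum_{v \in V} N(v) \left( \sum_{k=1}^m v^{(k)} \,\middle|\, v \right) = 0,
\end{equation*}
because Proposition~\ref{prop:refvectinspacetime}(ii) (or equivalently Remark~\ref{rmk:vectinspacetime}) gives $\sum_{k=1}^m v^{(k)} = 0$.

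There is no real obstacle: the only subtlety is to make sure that $N_k(v)$ is indeed constant in $k$ at every single $v \in V$, including both points outside $\bigcup_{A\in S_Q} A$ (handled by (P\ref{P1gen})), points inside the singleton classes of $S_Q$ (handled by (P\ref{P2gen}), which for a singleton class forces exactly one pole $k$ to be assigned to its unique element, so $N_k(v)=1$ for all $k$), and points $v \in V$ that are not vertices of $G$ at all (trivially $N_k(v)=0$ for all $k$). Once this bookkeeping is checked, the cancellation via $\sum_k v^{(k)}=0$ is immediate and the lemma follows.
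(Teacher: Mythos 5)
Your proof is correct and is essentially the same as the paper's: both reindex $\Extent(G)$ as a sum over points $v\in V$ of $\bigl(\sum_{e}\delta_{e(k)\,v}\bigr)\bigl(v^{(k)}\mid v\bigr)$ summed over $k$, invoke the current conservation principle to make the multiplicity independent of $k$, and conclude from $\sum_{k=1}^m v^{(k)}=0$. Your additional bookkeeping about which points are covered by (P1), (P2), or trivially, is a fine elaboration but does not change the argument.
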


\begin{proof}
Using definition~\eqref{defextent} of the extent of an edge, we have
\begin{align*}
\Extent(G)&=\sum_{\textrm{edges }e} \extent (e) = \sum_{\textrm{edges }e} \sum_{j=1}^m  ( v^{(j)} \mid e(j) ) \\
&= \sum_{\textrm{edges }e} \sum_{j=1}^m  ( v^{(j)} \mid e(j) ) \sum_{v\in V} \delta_{e(j) \, v} \\
&=  \sum_{v\in V} \sum_{j=1}^m  ( v^{(j)} \mid v ) \sum_{\textrm{edges }e} \delta_{e(j) \, v}.
\end{align*}
Now the current conservation principle implies that $\sum_{\textrm{edges }e} \delta_{e(j) \, v} $ is independent of $j$. Let us write it $m_v$. Then the contribution of a point $v\in V$ to $\Extent(G)$ is
\begin{equation*}
m_v (  \sum_{j=1}^m  v^{(j)} \mid v ) =0
\end{equation*}
since the reference vectors satisfy $ \sum_{j=1}^m  v^{(j)}=0$.
\end{proof}

Lemma~\ref{lemma:extentgen} leads to the following generalization of Lemma~\ref{lemma:diamforks} for an eroder in any dimension.

\begin{lemma}\label{lemma:diamforksgen}
The number $s$ of spacelike edges in $G$ and the number $t$ of timelike edges satisfy
\begin{equation*}\label{diamforksgen}
 m s \geq r t ,
\end{equation*}
with the constants $m$, $r$ given by Proposition~\ref{prop:refvectinspacetime}.
\end{lemma}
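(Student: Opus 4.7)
The plan is to mimic the Proof of Lemma~\ref{lemma:diamforks} from the two-dimensional case but work in the general pole-based formalism of Section~\ref{sec:ingredients}. The backbone is the identity $\Extent(G)=0$ supplied by Lemma~\ref{lemma:extentgen}: once I can bound the extent of every individual edge, summing these bounds turns that identity into the desired inequality relating $s$ and $t$.

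First I would note that the virtual edge contributes nothing, since $\extent(e_*)=\bigl(\sum_{j=1}^m v^{(j)} \bigm| v_\Lambda\bigr)=0$ by Property~(ii) of Proposition~\ref{prop:refvectinspacetime}. Splitting the remaining contributions according to the type of edge,
\begin{equation*}
0=\Extent(G)=\sum_{e \text{ timelike}} \extent(e) + \sum_{e \text{ spacelike}} \extent(e).
\end{equation*}
The bound on the timelike extent is already available: computation~\eqref{simplifextent} gives $\extent(e)\geq r$ for each of the $t$ timelike edges. It remains to show that $\extent(e)\geq -m$ for each of the $s$ spacelike edges; once this is done, combining gives $0\geq rt-ms$, i.e.\ $ms\geq rt$.

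The main obstacle, and the place where the pole-based formalism needs real care, is the lower bound on the extent of a spacelike edge. Let $e$ be a spacelike edge with vertices $c,\tilde c$, and let $K\subseteq\{1,\dotsc,m\}$ be the (non-empty, proper) set of poles assigned to $c$. Using $\sum_{j=1}^m v^{(j)}=0$, I compute
\begin{equation*}
\extent(e)=\sum_{k\in K}(v^{(k)}\mid c)+\sum_{k\notin K}(v^{(k)}\mid \tilde c)=\sum_{k\in K}\bigl(v^{(k)}\bigm| c-\tilde c\bigr),
\end{equation*}
and by the same manipulation $\extent(e)=-\sum_{k\notin K}(v^{(k)}\mid c-\tilde c)$. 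Since $e$ is spacelike, $c-\tilde c=u_1-u_2$ for some $u_1,u_2\in U$, and Property~(iii) of Proposition~\ref{prop:refvectinspacetime} then gives $\lvert (v^{(k)}\mid c-\tilde c)\rvert\leq 2$ for each $k$. Using whichever of the two expressions above is shorter,
\begin{equation*}
\lvert \extent(e)\rvert \leq 2\min(\lvert K\rvert,m-\lvert K\rvert)\leq 2\lfloor m/2\rfloor\leq m,
\end{equation*}
so $\extent(e)\geq -m$ as required.

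Putting the two bounds back into the identity yields $rt-ms\leq 0$, which is exactly $ms\geq rt$. The only real delicacy lies in the spacelike-edge step: in Part~\ref{part:block} one exploited the fact that a spacelike edge carries a single color and a single displacement vector, so its extent was a single scalar product with a bounded absolute value; here the pole distribution can be arbitrary, and the trick that saves the estimate is to evaluate the extent using whichever of the two complementary sets $K$ or $\{1,\dotsc,m\}\setminus K$ is smaller.
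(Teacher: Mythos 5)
Your proposal is correct and follows essentially the same route as the paper: both start from $\Extent(G)=0$ (Lemma~\ref{lemma:extentgen}), use the bound $\extent(e)\geq r$ from~\eqref{simplifextent} for timelike edges, and bound the extent of a spacelike edge below by $-m$ by reducing, via $\sum_k v^{(k)}=0$, to a sum of scalar products $(v^{(k)}\mid u_1-u_2)$ over the smaller of the two pole sets and invoking Property~(iii) of Proposition~\ref{prop:refvectinspacetime}. The paper phrases this last step by choosing the vertex assigned the fewer poles, which is exactly your $\min(\lvert K\rvert,m-\lvert K\rvert)$ trick.
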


\begin{proof}
We computed in equation~\eqref{simplifextent} the extent of a timelike edge and showed that it is always greater than or equal to $r$. Let $a,b \in U(c)$ be the vertices of a spacelike edge $e$ and suppose without loss of generality that $a$ is assigned at least as many poles of $e$ as $b$. Then
\begin{align*}
\extent (e) &=\sum_{j=1}^m  ( v^{(j)} \mid e(j) )= \sum_{j=1}^m  ( v^{(j)} \mid a)+ \sum_{\substack{j=1\\e(j)=b}}^m  ( v^{(j)} \mid b-a )\\
&=\sum_{\substack{j=1\\e(j)=b}}^m  ( v^{(j)} \mid b-c )+\sum_{\substack{j=1\\e(j)=b}}^m  ( v^{(j)} \mid c-a )\\
&\geq -2 \norm{\{j \in \{1,\dotsc,m\} \mid e(j) =b\}} \geq -m,
\end{align*}
where we used Properties (ii) and (iii) of Proposition~\ref{prop:refvectinspacetime}. The extent of the virtual edge is $0$ due to Property (ii) of the same proposition. Summing the extents of all edges and using Lemma~\ref{lemma:extentgen}, we get
\begin{align*}
0 &= \Extent (G) = \sum_{\substack{\textrm{timelike}\\\textrm{edges }e}} \extent (e)+ \sum_{\substack{\textrm{spacelike}\\\textrm{edges }e}} \extent (e)+  \extent (e_*)\\
&\geq rt -ms.
\end{align*}
\end{proof}

Lemma~\ref{lemma:diamforksgen} implies that if $G$ has exactly $s$ spacelike edges, its total number of edges $\norm{E_G}$ is between $s$ and $s.(1+\frac{m}{r})$. In particular, one has the inequality
\begin{equation}\label{relEGVG}
\frac{1}{1+mr^{-1}} \norm{E_{G}} +1 \leq \norm{\hat{V}_{G}}.
\end{equation}

So for every space-time configuration $\stvect \omega \in S^V$ such that $\ushort \omega_v =0 \ \forall v \in V_0$ and $\ushort \omega_{v_{\Lambda}} =1  $, one obtains a graph $G$. Let $\mathcal G$ denote the set of all graphs associated to these space-time configurations. We will use in the next chapter the following estimate, which generalizes Lemma~\ref{lemma:countinggraphs}. Let $R:=\norm{\mathcal U}=\norm{U}$.

\begin{lemma}\label{lemma:numbgrap}
For all $n$ in $\nat$, the number of graphs in $\mathcal G$ with exactly $n$ edges is at most $[2^m(R^2+2R)]^{2n}$.
\end{lemma}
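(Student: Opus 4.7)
The plan is to generalize the walk-encoding argument used in the proof of Lemma~\ref{lemma:countinggraphs} to the present setting, where edges no longer carry a color and an orientation but instead a partition of the pole set $\{1,\dots,m\}$ between their two endpoints.

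First I would observe that every graph $G\in\mathcal{G}$ is finite, connected, and contains $v_{\Lambda}$ as a vertex (by the discussion recalled at the beginning of Section~\ref{sec:propertiesofG}). Double every edge of $G$ to obtain a multigraph $G^{\mathrm{d}}$: it is connected and every vertex has even degree, so $G^{\mathrm{d}}$ admits an Eulerian circuit starting and ending at $v_{\Lambda}$. Such a circuit is a walk on $G$ that starts at $v_{\Lambda}$, returns to $v_{\Lambda}$, and traverses each of the $n$ edges of $G$ exactly twice, hence has length $2n$. (When $n=0$ the only graph is the one-vertex graph $\{v_{\Lambda}\}$, corresponding to the empty walk, which already fits the bound $1\leq[2^m(R^2+2R)]^{0}$.)

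Next I would encode each step of the walk by two pieces of data: the displacement vector in $\realspacetime$ of the edge just traversed, and the pole distribution carried by that edge (that is, the map $\{1,\dots,m\}\to\{\text{current vertex},\text{next vertex}\}$). This sequence of $2n$ labels, together with the fixed starting vertex $v_{\Lambda}$, allows one to reconstruct the positions of all vertices visited and to recover each edge of $G$ (as an unordered pair of points together with a distribution of poles). Thus distinct graphs in $\mathcal{G}$ give rise to distinct label sequences, and the number of graphs in $\mathcal{G}$ with $n$ edges is bounded above by the number of admissible label sequences of length $2n$.

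It remains to bound the number of choices at each step. At a given vertex $v$, the number of possible displacement vectors for an attached edge is at most $R^2+2R$: at most $R$ timelike edges joining $v$ to a point in $U(v)$, at most $R$ timelike edges joining $v$ to a point $w$ with $v\in U(w)$, and at most $R(R-1)\leq R^2$ spacelike edges, whose displacement lies in $\{u_1-u_2 \mid u_1\neq u_2\in\mathcal{U}\}$. The number of pole distributions on the chosen edge is at most $2^m$ (actually $2^m-2$, since each endpoint must be assigned at least one pole). Multiplying, each of the $2n$ steps offers at most $2^m(R^2+2R)$ possibilities, which yields the bound $[2^m(R^2+2R)]^{2n}$.

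The only mildly delicate point is the existence of the Eulerian circuit on $G^{\mathrm{d}}$, which uses only that $G$ is connected and finite; everything else is bookkeeping. I do not expect any real obstacle beyond writing out this encoding carefully.
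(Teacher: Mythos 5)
Your proposal is correct and follows essentially the same route as the paper: encode the graph by a closed walk from $v_{\Lambda}$ traversing each edge twice, record at each step the displacement vector and the pole distribution, and bound the number of choices per step by $2^m(R^2+2R)$. The only difference is that you justify the existence of the doubled-edge Eulerian circuit explicitly, which the paper merely asserts.
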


\begin{proof}
The graphs are connected and contain the source $v_{\Lambda}$. Moreover, their edges are either timelike edges, with a displacement vector between their vertices of the form $\pm u$ for a $u$ in $U$, or spacelike edges, with a displacement vector $u_1-u_2$ where $u_1$ and $u_2$ are both elements of $U$. Therefore, taking into account the upper bound $2^m$ on the number of possible allocations of the $m$ poles to the two vertices of an edge, for any given point of $V$ at most $2^m(R^2+2R)$ different types of edges can have that point as vertex.

Now for any such connected graph, there always exists a walk which starts from $v_{\Lambda}$, passes along every edge exactly twice and then comes back to its departure point. We choose such a walk and consider the sequence that records, at each of its steps, the displacement vector and the pole distribution of the travelled edge, and call it the \textit{Eulerian walk} associated to the graph. This correspondence is injective. Consequently, the number of different graphs grows only exponentially with their number of edges, $n$, as the number of Eulerian walks of length $2 n$ is less than $[2^m(R^2+2R)]^{2n}$.
\end{proof}

\begin{rmk}
The construction of the graph $G$ in this chapter and in the article of \citet{To80} leads to the proof of one direction of the stability theorem: the erosion criterion implies the stability of the trajectory $\stvect \omega^{(0)}$. Indeed, for all $\epsilon \in [0,1]$, all $\ushort \mu $ in $M_{\epsilon}^{(0)}$ and for $\Lambda=\{v_{\Lambda}\}$, inequality~\eqref{upbound} holds for any monotonic binary CA with the erosion property. And we have
\begin{align*}
\norm{\{G \in \mathcal G \mid \norm{\hat{V}_G} = s+1 \}} &=  \norm{\{G \in \mathcal G \mid G \text{ has exactly $s$ spacelike edges} \}}\notag \\
&\leq \sum_{n=s}^{\lfloor{s.(1+\frac{m}{r})\rfloor}} \norm{\{G \in \mathcal G \mid G \text{ has exactly $n$ edges}\}} \notag \\
&\leq \sum_{n=0}^{\lfloor{s.(1+\frac{m}{r})\rfloor}}[2^m(R^2+2R)]^{2n} \notag \\
&\leq 2 . [2^m(R^2+2R)]^{2s.(1+\frac{m}{r})} \notag
\end{align*}
for all $s$. But then
\begin{align*}
\ushort \mu(\ushort \omega_{v_{\Lambda}} =1  ) &\leq 2 \epsilon \sum_{s \in \natÊ} ([2^m(R^2+2R)]^{2(1+\frac{m}{r})} \epsilon)^{s}\\
& =\frac{2 \epsilon}{1-C \epsilon}
\end{align*}
if $C\epsilon <1$, where $C=[2^m(R^2+2R)]^{2(1+\frac{m}{r})} <\infty$. That upper bound is uniform in $v_{\Lambda} \in V$ and in $\ushort \mu \in M_{\epsilon}^{(0)}$ and it tends to $0$ when $\epsilon$ tends to $0$. The stability of $\stvect \omega^{(0)}$ follows.
\end{rmk}

\chapter{Exponential decay of correlations}\label{chap:expdecay}
In this chapter, we investigate further the low-noise regime of the PCA defined as stochastic perturbations of the monotonic binary CA in any dimension possessing the erosion property. We prove that, for a set of initial probability measures, the induced stochastic processes converge exponentially fast toward the invariant measure $\muinv{0}$. We also show that this invariant measure presents exponential decay of correlations in space and in time and is therefore strong-mixing. These results are due to work in collaboration with Augustin de Maere and have been published in the article \citep{deMPo12}.

The proof is based on a perturbative expansion, with paths and graphs, which combines a technique of decoupling in the pure phases previously introduced and developed for coupled map lattices, by \citet{KeLi06,deM10}, with the graphs constructed in the proof of the stability theorem by \citet{To80} and presented in Chapter~\ref{chap:eroderdD}.

\section{Formalism}\label{secMain}

The space $\mathcal M$ was defined in Chapter~\ref{chap:PCAstability} as the set of all probability measures on the $\sigma$-algebra $\mathcal F$ generated by cylinder subsets of $X=S^{\natspace}$. The transfer operator $T:\mathcal M\to \mathcal M$, also defined there, engenders the stochastic evolution of the PCA. In this part of the thesis, we will need to consider also differences between probability measures and operators acting on these differences. We will thus use the larger space $\mathcal M(X)$ of all finite signed measures on $\mathcal F$. Of course $\mathcal M\subset \mathcal M(X)$.

We noticed in Chapter~\ref{chap:PCAstability} that a natural way to define a probability measure in $\mathcal M$ consists in fixing its values on cylinder sets. A similar method for defining a finite signed measure in $\mathcal M(X)$ uses the continuous functions.
Assume that $S=\{0,1\}$ is endowed with the discrete topology and $X=S^{\natspace}$ with the product topology. $S$ is compact and $X$ is compact as well because it is a product of compact spaces. Let $\mathcal{C}(X)$ be the set of continuous functions from $X$ into $\real$ with the norm: $\norm{f}_{\infty}  = \sup_{\vect{\omega} \in X} \norm{f(\vect{\omega})}$. Since $X$ is compact, $\mathcal{C}(X)$ with the norm $\norm{.}_{\infty}$ is a Banach space (see e.g.\ \citet{Ru73}). The functions $f:X\to \real$ such that $f(\vect \omega)$ depends only on the configuration $\vect \omega_A$ in a finite set $A\subset \natspace$ are examples of continuous functions. Let us call them the \textit{functions with finite support}. Furthermore, the Stone-Weierstrass theorem (see e.g.\ \citet{Ro68}) implies that they form a dense subset of $\mathcal C(X)$. So the continuous functions are functions that are uniformly approximable by sequences of functions that depend on a finite number of sites.

Now the Riesz-Markov representation theorem (see \citet{Ro68}) implies that the dual of $\mathcal C(X)$, namely the set of all continuous linear functionals on $\mathcal C(X)$, is exactly the set of finite signed Borel measures on $X$, with the norm:
\begin{equation*}
\norm{\mu} = \sup \{\, \mu(f) \mid f \in \mathcal{C}(X),\, \norm{f}_{\infty} \leq 1\, \}.\label{re:2}
\end{equation*}
The finite signed Borel measures are the finite signed measures on the $\sigma$-algebra generated by the open subsets of $X$. Now the open sets in the product topology on $X$ are actually the countable unions of cylinder sets and the Borel $\sigma$-algebra generated by the open sets coincides with the $\sigma$-algebra $\mathcal F$ generated by cylinder sets -- see for instance \citet{Bo09}. Therefore, the dual of $\mathcal C(X)$ is $\mathcal M(X)$ equipped with the norm just defined. From now on we will often define measures $\mu$ in $\mathcal M(X)$ by giving the values of $\mu(f)$ for all $f$ in $\mathcal C(X)$. One defines a weak notion of convergence in $\mathcal M(X)$: a sequence $(\mu_n)_{n \in \nat}$ in $\mathcal M(X)$ \textit{converges weakly-*} to $\mu \in \mathcal M(X)$ if it converges on continuous functions, i.e.\ if $\lim_{n\to \infty} \mu_n(f)=\mu(f)$ for all $f$ in $\mathcal C(X)$.

At that point, we can prove the following result which establishes the equivalence between the two weak notions of convergence that we use.
\begin{proposition}\label{prop:twoweakconv}
A sequence of probability measures in $\mathcal M \subset \mathcal M(X)$ converges weakly, that is to say on all cylinder subsets of $X$, if and only if it converges weakly-*, that is to say on all continuous functions in $\mathcal C(X)$.
\end{proposition}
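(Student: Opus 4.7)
The plan is to prove both implications using the specific structure of cylinder sets in the product topology on $X$, together with the density of finite-support functions in $\mathcal{C}(X)$ mentioned just before the proposition.

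First I would handle the easier direction, weak-* convergence implies weak convergence. The key observation is that every cylinder set is clopen in the product topology on $X$: its defining condition constrains only finitely many coordinates, each of which ranges in the discrete space $S$, so both the set and its complement are products of open sets. Consequently, for any cylinder set $C$, the indicator $\mathbb{1}_C$ belongs to $\mathcal{C}(X)$. Applying weak-* convergence to $f = \mathbb{1}_C$ gives $\mu_n(C) \to \mu(C)$.

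For the converse, weak convergence on cylinder sets implies weak-* convergence, I would proceed in two stages. First, observe that any function with finite support can be written as a finite linear combination $g = \sum_{i=1}^N a_i \mathbb{1}_{C_i}$, where the $C_i$ range over the finitely many cylinder sets fixing the states at the support sites. Linearity of the measures then yields $\mu_n(g) \to \mu(g)$ directly from the hypothesis. Second, I would extend this to arbitrary $f \in \mathcal{C}(X)$ by the standard approximation argument. Given $\varepsilon > 0$, use the density of finite-support functions (Stone-Weierstrass, as invoked earlier in the excerpt) to choose $g$ with finite support satisfying $\|f - g\|_\infty < \varepsilon$. Since all $\mu_n$ and $\mu$ are probability measures, the estimate
\begin{equation*}
|\mu_n(f) - \mu(f)| \leq |\mu_n(f-g)| + |\mu_n(g) - \mu(g)| + |\mu(g-f)| \leq 2\varepsilon + |\mu_n(g) - \mu(g)|
\end{equation*}
holds. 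Letting $n \to \infty$ and then $\varepsilon \to 0$ gives $\mu_n(f) \to \mu(f)$, completing the proof.

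There is no serious obstacle: the argument is essentially bookkeeping, resting on the two facts that cylinder sets are clopen (so their indicators are continuous) and that probability measures have uniformly bounded total mass (so approximation in $\|\cdot\|_\infty$ transfers to approximation of integrals). The only point worth stating carefully is the clopenness of cylinder sets, which relies on the discreteness of $S$ and is specific to this setting; weak convergence and weak-* convergence need not coincide on more general spaces.
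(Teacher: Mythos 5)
Your proof is correct and follows essentially the same route as the paper: both directions reduce to the equivalence between cylinder sets and finite-support functions (which you make slightly more explicit via the clopenness observation), and the extension to all of $\mathcal{C}(X)$ uses the same density-plus-uniform-boundedness $\varepsilon/3$ argument that the paper employs.
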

\begin{proof}
The notion of weak convergence in $\mathcal M$ is equivalent to the notion of convergence on all functions $f:X \to \real$ with finite support. Of course, it is implied by the convergence on all continuous functions. On the other hand, we now show that convergence on all functions with finite support implies convergence on all continuous functions in $\mathcal C(X)$.

Indeed, suppose that the sequence $(\mu_n)_{n \in \nat}$ of probability measures in $\mathcal M$ converges to a probability measure $\mu \in \mathcal M$ on all functions with finite support. Let $f:X \to \real$ belong to $\mathcal C(X)$. Then there exists a sequence $(f_k)_{k \in \nat}$ of functions with finite support that converges to $f$ uniformly in $\vect \omega \in X$. Let $\epsilon >0$ be given. There exists $K <\infty$ such that $\norm{\mu_n(f)-\mu_n(f_{K})} \leq \epsilon/3$ for all $n \in \nat $ and $\norm{\mu(f)-\mu(f_{K})} \leq \epsilon/3$, due to the uniform convergence of $(f_k)_{k \in \nat}$. Now $(\mu_n (f_{K}))_{n \in \nat}$ converges to $\mu(f_{K})$ so there is $N < \infty$ such that $\norm{\mu_n(f_K)-\mu(f_K)} \leq \epsilon/3$ for all $n \geq N$. Therefore, for all $n \geq N$, $\norm{\mu_n(f)-\mu(f)}\leq \epsilon$.
\end{proof}

Let us use the notation $(\vect{\omega}_{\neq x}, a)$ for the configuration obtained from $\vect{\omega}$ by replacing the state $\omega_x$ at site $x$ with the value $a\in S$.
Along with the norm $\norm{.}_{\infty}$, we will also consider the following semi-norm on $\mathcal{C}(X)$: if $\delta_x f(\vect{\omega})  = f(\vect{\omega})-f(\vect{\omega}_{\neq x},1- \omega_x)$, we define:
\begin{align}
\dnorm{f}  = \sum_{x \in \mathbb Z^d} \norm{\delta_x f}_{\infty}.\label{re:1}
\end{align}

We use again the product of the local transition probabilities introduced in Chapter~\ref{chap:PCAstability} and extend linearly the transfer operator $T:\mathcal M \to \mathcal M$ to obtain a transfer operator from $\mathcal M(X)$ to $ \mathcal M(X)$, which we also call $T$ for simplicity.
Note that the Bounded-noise assumption, used in conjunction with the monotonicity of $\varphi$, implies the
\begin{DPPproperty}~\\
If $a=\varphi_x (\vect{\omega})$, then $|p_x(\xi_x| \vect{\omega})-p_x(\xi_x| \vect{\omega}_{\neq y},a)| \leq \epsilon$ for all $y \in \natspace$.\label{assump:3}
\end{DPPproperty}
\noindent Indeed, if $a=\varphi_x (\vect{\omega})$, then $\varphi_x(\vect{\omega}_{\neq y},a) = \varphi_x (\vect{\omega})$ because $\varphi$ and thus also $\varphi_x$ are monotonic. Consequently, depending on the value of $\xi_x$, $p_x(\xi_x| \vect{\omega})$ and $p_x(\xi_x| \vect{\omega}_{\neq y},a)$ both belong to $[0,\epsilon]$ or both belong to $[1-\epsilon,1]$, as follows from the Bounded-noise assumption.
This Property of decoupling in the pure phases states that, if the state of a cell was flipped without changing the deterministic prescription given by the value of $\varphi_x$, it would be of little consequence to the involved transition probabilities.

Although our definition of $T$ as the product of the local transition probabilities was set down in terms of cylinder sets and of finite subsets of $\natspace$, we will use the following formal notations as shortcuts for the usual extension procedure:
\begin{equation}\label{re:6}
\prob[\vect{\xi} | \vect{\omega}] = \prod_{x \in \mathbb Z^d} p_x(\xi_x| \vect{\omega})
\end{equation}
and
\begin{equation}
T \mu(f) = \int \dif \vect{\omega} \int \dif \vect{\xi}\; f(\vect{\xi})\, \prob[\vect{\xi}|\vect{\omega}]\, \mu(\vect{\omega}),\label{re:3}
\end{equation}
for any $f$ in $\mathcal{C}(X)$.

\begin{rmk}
These notations fit into the more general formalism of Markov processes, the generalization of Markov chains to an uncountable state space. That general formalism provides an alternative way to define PCA as very special cases of Markov systems. It consists in defining \textit{Markov kernels} $\prob[ A | \vect{\omega}]$ for all $A \in \mathcal F$ and for all $\vect \omega \in X$ -- see Chapter 19 of the book by \citet{AlBo06}. They must satisfy the two properties
\begin{itemize}
\item for all $A$, $\prob[ A | \cdot]$ is a measurable function;
\item for all $ \vect{\omega}$, $\prob[ \cdot | \vect{\omega}]$ is a probability measure.
\end{itemize}
One can show that these conditions are satisfied in particular by the Markov kernels obtained from equation~\eqref{re:6}.
\end{rmk}

Let $\charf{1,\Lambda}$ denote the indicator function of the subset $\{ \vect{\omega} \in X | \omega_x  = 1\ \forall x \in \Lambda\}$ and, for any measurable set $Y \subseteq X$, let $\charf{Y}: \mathcal{M}(X) \to \mathcal{M}(X)$ be the operator defined as:
\begin{equation*}
\charf{Y}\mu(f) =  \mu(f\, \charf{Y} ) \quad \forall f \in \mathcal{C}(X).\label{re:5}
\end{equation*}

\section{Results}
%
%
The definition~\eqref{muinv} of $\muinv{0}$ in Section~\ref{sec:stabilitythm} can be reformulated using the formalism introduced in Section~\ref{secMain}. Consider again the following sequence of measures, which consist of the Cesˆro means of the sequence $\left(T^t \sleb{0}\right)_{t\in \nat}$:
\begin{equation*}
\left( \frac{1}{n} \sum_{k= 0}^{n-1} T^k \sleb{0}\right)_{n\in \nat^*}.
\end{equation*}
We can always extract from it a weakly-* convergent subsequence. Indeed, the Banach-Alaoglu theorem states that the unit ball of $\mathcal{M}(X)$ is compact in the weak-* topology -- see \citet{ReSi72}. The associated limit $\muinv{0}$ is explicitly given by
\begin{equation}\label{muinvcontinu}
\muinv{0} (f) = \lim_{j \to \infty} \left( \frac{1}{n_j} \sum_{k=0}^{n_j-1} T^k \sleb{0} (f) \right) \quad \forall f \in \mathcal{C}(X),
\end{equation}
for a certain subsequence $\left( n_j \right) _{j \in \nat}$ of increasing positive integers. Proposition~\ref{prop:twoweakconv} shows that the two definitions of $\muinv{0}$ are equivalent.

\citet{BeKrMa93} examine the low-noise regime of a class of PCA, including \replaced{all of the one-dimensional PCA that we consider here and a multidimensional generalization of the Stavskaya model}{the Stavskaya model and a multidimensional generalization of it}, but not the North-East-Center model. They prove exponential convergence toward equilibrium of the stochastic processes with the initial condition $\sleb{0}$ for these PCA, by constructing a cluster expansion. Exponential decay of correlations in space and in time follows \replaced{for the invariant measure $\muinv{0}$ of these PCA}{for the non-trivial invariant measure $\muinv{0} \neq \sleb{1}$ that appears at the phase transition of the Stavskaya model}.

Our argument extends the results of \citet{BeKrMa93} to the whole class of PCA associated to monotonic binary CA with the erosion property. It relies on an expansion which isolates the influence of each space-time point on each other point in its future, along several paths of influence. These paths pass through the dominant phase with state $0$ almost everywhere and the Property of decoupling in the pure phases can be used in order to bound the influence of states one on another. The paths seldom encounter a point with state $1$. Whenever they do, in order to evaluate how improbable that state $1$ is, we will attach to it a \textit{Toom graph} as constructed in Chapter~\ref{chap:eroderdD} and bring the Bounded-noise assumption into play. The fact that paths select one point at a time will allow us to make use of these one-dimensional graphs of \citet{To80} by associating them to a few chosen points separately. On the other hand, the cluster expansion of \citet{BeKrMa93} requires contours which can enclose clusters of points in space-time, as the Stavskaya contours do, but not the most general Toom graphs.

In Sections~\ref{section:path} to \ref{section:decouplpure}, we will show that any initial probability measure in a suitable basin of attraction of $\mathcal{M}(X)$ converges exponentially fast toward $\muinv{0}$.
We will consider the sets
\begin{equation*}
\brond{0}(K,\epsilon') = \Big\{\ \mu \in \mathcal{M}(X)\ \Big|\  \norm{\,\charf{1,\Lambda}\mu\,} \leq K\, \epsilon'^{\norm{\Lambda}}\quad \forall \Lambda \subseteq \mathbb Z^d\Big\} ,
\end{equation*}
with $K \geq 0$ and $\epsilon' \in [0,1]$ and prove the following result:
\begin{thm} \label{thm:exp}
For any monotonic binary CA characterized by a non-constant monotonic function $\varphi$ verifying the erosion criterion, there exists $\epsilon_*>0$ such that, for all $\epsilon \in [0, \epsilon_* [$, the following assertion is true for any PCA satisfying the corresponding Bounded-noise assumption. For any probability measure $\mu$ in $\brond{0}(K,\epsilon')$ with $K \geq 0$ and $\epsilon' < \epsilon_* $, there exist some constants $C < \infty$ and $\sigma <1$ such that, for all $f \in \mathcal{C}(X)$ and all $n \in \nat$,
\begin{equation*}
\norm{\,T^n \mu(f) - \muinv{0}(f)\,} \leq C\,\dnorm{f}\,\sigma^n.
\end{equation*}
\end{thm}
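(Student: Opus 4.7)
The plan is to establish a contraction between any two initial probability measures $\mu,\nu \in \brond{0}(K,\epsilon')$, that is, to show that there exist $C' < \infty$ and $\sigma < 1$ such that $|T^n\mu(f) - T^n\nu(f)| \leq C' \dnorm{f}\sigma^n$ uniformly in $f\in\mathcal{C}(X)$. Once this is in hand, the theorem follows: the Dirac measure $\sleb{0}$ lies in $\brond{0}(1,\epsilon')$ for every $\epsilon'\in[0,1]$, so the contraction applies to the pair $\mu$ and $T^k\sleb{0}$ uniformly in $k$; since $\muinv{0}$ is by~\eqref{muinvcontinu} a weak-$*$ limit of Cesàro averages of $T^k\sleb{0}$, averaging in $k$ and passing to the limit produces the claimed bound $|T^n\mu(f) - \muinv{0}(f)| \leq C\dnorm{f}\sigma^n$.

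To prove the contraction, I would set up a perturbative space-time expansion, inspired by the pure-phase decoupling technique of \citep{KeLi06,deM10}, with the Toom graphs of Chapter~\ref{chap:eroderdD} inserted to control the rare occurrences of state $1$. The starting identity is the Property of decoupling in the pure phases: whenever $\varphi_x(\vect{\omega}) = 0$, one has
\[
p_x(\xi_x|\vect{\omega}) = p_x(\xi_x|\vect{\omega}_{\neq y}, 0) + \Delta_{x,y}(\xi_x|\vect{\omega}), \qquad |\Delta_{x,y}| \leq \epsilon,
\]
which makes it possible to ``forget'' the state at a neighbouring site $y$ at the cost of a small factor $\epsilon$ together with a $\delta_y$ operator acting on the dependence structure. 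Iterating this identity at each time $t=n,n-1,\dots,1$, and approximating $f$ by finite-support functions so that $\dnorm{f}$ controls the initial support, one rewrites $T^n\mu(f) - T^n\nu(f)$ as a sum indexed by backward space-time paths of influence: at each time step a path either transmits its sensitivity into the neighbourhood $\mathcal{U}(x)$ or pays a factor $\epsilon$ to decouple from one of its sites.

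The delicate point is that this decoupling identity is available only at space-time points where the configuration is in the dominant phase, \emph{i.e.}\ where $\varphi_x(\vect{\omega}) = 0$; at a point $v$ where $\ushort\omega_v = 1$ it cannot be invoked. There I would attach a Toom graph $G_v$ rooted at $v$ as constructed in Chapter~\ref{chap:eroderdD}. By inequality~\eqref{relEGVG}, such a graph exhibits at least $|E_{G_v}|/(1+m r^{-1})+1$ error points, each contributing a factor $\epsilon$ via condition~\eqref{error}; by Lemma~\ref{lemma:numbgrap} the number of such graphs with $k$ edges is at most $[2^m(R^2+2R)]^{2k}$. The resulting geometric series over graphs converges as soon as $\epsilon$ lies below an explicit threshold and yields a small factor per $1$-event visited by a path. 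When a path eventually reaches time $0$ at a set $\Lambda\subset\natspace$ of sites where the initial state must be $1$, the hypothesis $\mu,\nu\in\brond{0}(K,\epsilon')$ furnishes $\|\charf{1,\Lambda}\mu\|\leq K(\epsilon')^{|\Lambda|}$, and similarly for $\nu$, closing the estimate at the initial time.

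The hard part will be the bookkeeping required to disentangle the path expansion from the Toom graphs: the same space-time point could a priori appear as a vertex of a path and as an error point of a Toom graph attached elsewhere, and double-counting must be avoided. I would therefore grow the combined geometric object sequentially in time, associating each decoupling step and each Toom-graph error point to distinct space-time points, so that the total number of $\epsilon$-factors is proportional to the size of the object. Local combinatorial counts (bounded neighbourhood size $R$ and a bounded number of pole distributions) then give at most $A^n$ structures of time-extent $n$, and choosing $\epsilon_*$ small enough that $A\epsilon_*<1$ yields $\sigma = A\epsilon < 1$; propagating $\dnorm{f}$ through the $\delta_y$ operators generated at each decoupling step produces the seminorm factor in the final bound.
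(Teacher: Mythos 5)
Your expansion machinery — pure-phase decoupling along backward paths of influence, Toom graphs attached at the points where the state is $1$, volume counting of graphs against the $\epsilon$ per error point, and a geometric series in $\epsilon$ — is exactly the engine the paper runs. The gap is in your reduction to a two-measure contraction. You propose to apply the contraction to the pair $\mu$ and $T^k\sleb{0}$ ``uniformly in $k$'', but this requires $T^k\sleb{0}$ to lie in some $\brond{0}(K,\epsilon')$ with constants uniform in $k$ (equivalently, that $\muinv{0}$ itself satisfies a volume-exponential bound $\norm{\charf{1,\Lambda}\muinv{0}}\leq K\epsilon'^{\norm{\Lambda}}$), and this is false in general for the models covered by the theorem. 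The Fern\'andez--Toom lower bound (Theorem~\ref{thm:lowerbound}, Chapter~\ref{chap:comparisonFT}) shows that for eroders such as the North-East-Center model under totally asymmetric noise, $\muinv{0}(\omega_x=1\ \forall x\in S_R)\geq \epsilon^{cR^{d-1}}$, which decays only exponentially in the diameter, not the volume. So the closing step of your estimate at time $0$ — where a path lands on a set $\Lambda$ of sites forced to be in state $1$ and you invoke the basin hypothesis — is unavailable for the second member of the pair, and your averaging-in-$k$ argument does not go through.

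The paper's fix is worth internalizing: it writes $\mu-\muinv{0}$ as a Ces\`aro limit of $\mu - T^k\sleb{0}$ and, for the $T^k\sleb{0}$ term, does \emph{not} ask for membership in the basin. Instead it extends the Toom graphs backward to time $-k$, treating $T^k\sleb{0}$ at time $0$ as the process started from the deterministic all-zero configuration at time $-k$; a branch of a graph that reaches a state-$1$ point at time $0$ simply keeps growing into negative times until it hits genuine error points, and the only ``initial'' suppression needed is the trivial one, $\sleb{0}(\charf{E(G,-k)})=0$ unless $\hat{V}_{G,-k}=\varnothing$. This preserves the per-error factor $\epsilon$ uniformly in $k$ and closes the estimate. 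Your bookkeeping concern about double-counting between path vertices and graph error points is real but is handled in the paper by discarding any Toom graph that intersects a previously retained one and by an injection from $\gammamoins$ into $E_G\cup\{1,\dotsc,c\}$ (inequality~\eqref{t:7}); you would need an analogous device.
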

Note that for CA that have the $0-1$ symmetry, such as the North-East-Center CA, the symmetric result for $\mu$ in the class $\brond{1} \left( K, \epsilon' \right) $ and for $\muinv{1}$ is also valid.

\begin{rmk}
Since the probability measure $\sleb{0}$ belongs to $\brond{0}(K,\epsilon')$ for any $K\geq 0$ and $\epsilon' \in [0,1]$, Theorem~\ref{thm:exp} and Proposition~\ref{prop:twoweakconv} imply that the sequence $\left(T^t \sleb{0}\right)_{t\in \nat}$ converges weakly to $\muinv{0}$ if $\epsilon < \epsilon_* $. Coming back to Remark~\ref{rmk:unicdef} in Chapter~\ref{chap:PCAstability}, it means that if $\epsilon < \epsilon_* $, the definition~\eqref{muinv} of $\muinv{0}$ is actually independent of the choice of a weakly convergent subsequence of the sequence of Cesˆro means of $\left(T^t \sleb{0}\right)_{t\in \nat}$, since we have the simpler expression
\begin{equation*}
\muinv{0} (C) = \lim_{t \to \infty}  T^t \sleb{0} (C)  \quad \textrm{ for all cylinder sets }C.
\end{equation*}
\end{rmk}

Eventually, in Section~\ref{section:corollaries}, we will prove that $\muinv{0}$ has exponential decay of correlations in space and in time and is, consequently, strong-mixing.
\begin{corollary}\label{coroll:1}
Assume that $\epsilon < \epsilon_*$, with $\epsilon_*$ as given by Theorem~\ref{thm:exp}. Then there exist some constants $C'< \infty$ and $\eta <1$ such that for any $f, g$ in $\mathcal{C}(X)$ with $\dnorm{f} < \infty$ and $\dnorm{g}< \infty$, and with a positive Manhattan distance $d(f,g) $ between their supports, we have
\begin{equation*}
\norm{\muinv{0}(f g) - \muinv{0}(f) \muinv{0}(g)} \leq C' \left( \dnorm{f} \norm{g}_{\infty} + \norm{f}_{\infty} \dnorm{g} \right)   \eta^{d(f,g)}.
\end{equation*}
\end{corollary}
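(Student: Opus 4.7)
I would deduce Corollary~\ref{coroll:1} directly from Theorem~\ref{thm:exp} applied to the Dirac initial measure $\sleb{0}$, combined with two auxiliary ingredients: the finite speed of propagation of the PCA dynamics, and a Leibniz-type estimate for the seminorm $\dnorm{\cdot}$. Let $T^*$ denote the dual operator defined by $\mu(T^*h) = T\mu(h)$; explicitly, $(T^*h)(\vect{\omega}) = \int d\vect{\xi}\,h(\vect{\xi})\,\prob[\vect{\xi}|\vect{\omega}]$. Note that $\sleb{0}$ lies trivially in $\brond{0}(1,\epsilon')$ for every $\epsilon' \in (0,1)$: indeed $\sleb{0}(\charf{1,\Lambda}) = 0$ as soon as $\Lambda \neq \varnothing$. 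Hence Theorem~\ref{thm:exp} provides constants $C<\infty$ and $\sigma<1$ such that, for every $h\in \mathcal{C}(X)$ with $\dnorm{h}<\infty$ and every $j \in \nat$,
\begin{equation*}
|T^j\sleb{0}(h) - \muinv{0}(h)| \leq C\,\dnorm{h}\,\sigma^j.
\end{equation*}

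The crucial geometric observation (finite speed of propagation) is that, if $f$ and $g$ have disjoint spatial supports $A, B \subseteq \natspace$ at Manhattan distance $d(f,g) > 2kR$, where $R := \max_{u \in \mathcal U}|u|_1$, then
\begin{equation*}
T^{*k}(fg)(\vect{\omega}) = T^{*k}f(\vect{\omega}) \cdot T^{*k}g(\vect{\omega})\quad \forall \vect{\omega} \in X.
\end{equation*}
I would prove this by induction on $k$: the product factorization~\eqref{re:6} of the local transition probabilities, together with $\sum_{\xi_x} p_x(\xi_x|\vect{\omega})=1$, immediately gives $T^*(fg) = T^*f \cdot T^*g$ whenever $f, g$ have disjoint supports; iterating requires that the supports $T^{*k}f \subseteq A + k\mathcal U$ and $T^{*k}g \subseteq B + k\mathcal U$ remain disjoint at each step, which holds throughout the iteration under the distance hypothesis. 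Evaluating at $\vect{\omega} = \vect{0}$ turns this into the key identity
\begin{equation*}
T^k\sleb{0}(fg) = T^k\sleb{0}(f) \cdot T^k\sleb{0}(g).
\end{equation*}
The Leibniz estimate follows by writing $\delta_x(fg)(\vect{\omega}) = (\delta_x f)(\vect{\omega})\,g(\vect{\omega}) + f(\vect{\omega}_{\neq x},1-\omega_x)(\delta_x g)(\vect{\omega})$ and summing over $x$:
\begin{equation*}
\dnorm{fg} \leq \norm{g}_\infty\,\dnorm{f} + \norm{f}_\infty\,\dnorm{g}.
\end{equation*}

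To conclude, set $k := \lfloor d(f,g)/(2R+1) \rfloor$, so that $2kR < d(f,g)$ and the factorization above holds. Theorem~\ref{thm:exp} applied successively to the observables $fg$, $f$, and $g$ yields
\begin{align*}
|T^k\sleb{0}(fg) - \muinv{0}(fg)| &\leq C(\dnorm{f}\norm{g}_\infty + \norm{f}_\infty \dnorm{g})\sigma^k, \\
|T^k\sleb{0}(f)\,T^k\sleb{0}(g) - \muinv{0}(f)\muinv{0}(g)| &\leq C(\dnorm{f}\norm{g}_\infty + \norm{f}_\infty \dnorm{g})\sigma^k,
\end{align*}
the second inequality using the bounds $|T^k\sleb{0}(g)| \leq \norm{g}_\infty$ and $|\muinv{0}(f)| \leq \norm{f}_\infty$. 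Subtracting and invoking $T^k\sleb{0}(fg) = T^k\sleb{0}(f)\,T^k\sleb{0}(g)$ gives
\begin{equation*}
|\muinv{0}(fg) - \muinv{0}(f)\muinv{0}(g)| \leq 2C(\dnorm{f}\norm{g}_\infty + \norm{f}_\infty \dnorm{g})\sigma^k,
\end{equation*}
which yields the stated bound with $\eta := \sigma^{1/(2R+1)} \in (0,1)$ and $C' := 2C\sigma^{-1}$.

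The main obstacle is the pointwise factorization $T^{*k}(fg) = T^{*k}f \cdot T^{*k}g$ in the first step -- i.e., establishing it for every $\vect{\omega} \in X$, not merely in expectation with respect to some measure. The inductive argument sketched above is straightforward, but it relies on the explicit product form of $\prob[\vect{\xi}|\vect{\omega}]$ and on carefully verifying that the intermediate supports $A + j\mathcal U$ and $B + j\mathcal U$ remain disjoint throughout the iteration, which is exactly what the condition $2kR < d(f,g)$ ensures. Once that identity is in hand, the remainder of the argument reduces to two routine applications of Theorem~\ref{thm:exp} and the elementary Leibniz inequality for $\dnorm{\cdot}$.
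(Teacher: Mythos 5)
Your proof is correct and follows essentially the same route as the paper's: apply Theorem~\ref{thm:exp} with initial measure $\sleb{0}$ to $fg$, $f$ and $g$, use the factorization $T^k\sleb{0}(fg)=T^k\sleb{0}(f)\,T^k\sleb{0}(g)$ valid while $2k\max_{u\in\mathcal U}\dnorm{u}_1<d(f,g)$, the Leibniz bound $\dnorm{fg}\leq\dnorm{f}\norm{g}_{\infty}+\norm{f}_{\infty}\dnorm{g}$, and then optimize $k$ proportionally to $d(f,g)$. The only difference is that you spell out, via the dual operator and an induction on the propagation of supports, the factorization step that the paper justifies in one sentence by noting that $\sleb{0}$ is a product measure and the interactions are local.
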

\added{This result of exponential decay of correlations in space agrees with the observations, in the case of the North-East-Center model, from computer simulations by \citet{Mak99}. This property of the invariant measure $\muinv{0}$ in the low-noise regime of the PCA can be compared with the pure-phase equilibriums in the Ising model at low temperature and zero magnetic field. Indeed, the two extremal Gibbs measures in that regime also present an exponential decay of correlations. However, the techniques of proof are different, because here, as noted above, we cannot use contour arguments for all PCA.}

For the exponential decay of correlations in time,
we define the operator $T: \mathcal{C}(X) \to \mathcal{C}(X)$:
\begin{equation*}
T f(\vect{\omega}) = \int \dif \vect{\xi} \; f(\vect{\xi}) \ \prob[\vect{\xi}|\vect{\omega}]\label{m:13},
\end{equation*}
which is simply the dual of the transfer operator acting on finite signed measures.
\begin{thm}Assume that $\epsilon < \epsilon_*$ with $\epsilon_*$ as given by Theorem~\ref{thm:exp}. Then, for any continuous functions $f,g$ with finite supports, there exist some constants $C_{f,g} < \infty$ and $\sigma < 1$ such that, for all $n \in \nat$,\label{coroll:2}
\begin{equation*}
\norm{\muinv{0}(f T^ng) - \muinv{0}(f) \muinv{0}(g)} \leq C_{f,g}\, \sigma^{n}.
\end{equation*}
\end{thm}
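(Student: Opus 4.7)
The plan is to reduce the decay of time correlations to the measure-level convergence already established in Theorem~\ref{thm:exp}. First I would exploit the duality between $T$ acting on $\mathcal{M}(X)$ and $T$ acting on $\mathcal{C}(X)$, together with the invariance $T^n\muinv{0} = \muinv{0}$, to rewrite
\begin{equation*}
\muinv{0}(f T^n g) - \muinv{0}(f)\,\muinv{0}(g) \;=\; T^n\nu(g),
\end{equation*}
where the finite signed measure $\nu \in \mathcal{M}(X)$ is defined by $\nu(h) := \muinv{0}((f - \muinv{0}(f))\,h)$; equivalently, $\nu = (f - \muinv{0}(f))\,\muinv{0}$. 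Since $\nu(X)=0$, the Hahn--Jordan decomposition reads $\nu = \alpha(\mu_+ - \mu_-)$ with $\mu_+, \mu_-$ probability measures and $\alpha = \nu^+(X) = \nu^-(X) \leq \norm{f}_{\infty}$. Setting aside the trivial case $\alpha = 0$, this gives
\begin{equation*}
T^n\nu(g) \;=\; \alpha\,\bigl[(T^n\mu_+(g) - \muinv{0}(g)) - (T^n\mu_-(g) - \muinv{0}(g))\bigr],
\end{equation*}
so it suffices to apply Theorem~\ref{thm:exp} to each of $\mu_+$ and $\mu_-$.

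The key step is then to verify that $\mu_\pm$ both lie in some class $\brond{0}(K',\epsilon')$ with $\epsilon' < \epsilon_*$. Since $\nu^\pm \leq |\nu|$ as positive measures and $|\nu| = |f - \muinv{0}(f)|\,\muinv{0}$ has density bounded by $2\norm{f}_{\infty}$ with respect to $\muinv{0}$, one has, for any finite $\Lambda \subseteq \mathbb Z^d$,
\begin{equation*}
\norm{\charf{1,\Lambda}\mu_\pm} \;\leq\; \frac{1}{\alpha}\,\norm{\charf{1,\Lambda}\nu} \;\leq\; \frac{2\norm{f}_{\infty}}{\alpha}\,\muinv{0}(\omega_x=1\,\forall x \in \Lambda).
\end{equation*}
Thus it is enough to know that $\muinv{0}$ itself lies in $\brond{0}(K_0, \epsilon')$ for some $K_0 < \infty$ and $\epsilon' < \epsilon_*$. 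I would expect this property to be obtainable from the construction~\eqref{muinvcontinu} of $\muinv{0}$ as a weak-$*$ limit of Ces\`aro means of $(T^k\sleb{0})_{k \in \nat}$, together with the stability of the class $\brond{0}(K,\epsilon')$ under the action of $T$ with a constant $K$ that can be chosen uniformly in the number of iterations; the latter stability should already be one of the ingredients in the proof of Theorem~\ref{thm:exp}.

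Granting this, Theorem~\ref{thm:exp} applied to $\mu_\pm$ yields $\norm{T^n\mu_\pm(g) - \muinv{0}(g)} \leq C\,\dnorm{g}\,\sigma^n$, and the two contributions combine to give
\begin{equation*}
\norm{\muinv{0}(f T^n g) - \muinv{0}(f)\,\muinv{0}(g)} \;\leq\; 2\alpha C\,\dnorm{g}\,\sigma^n \;=:\; C_{f,g}\,\sigma^n,
\end{equation*}
which is finite because $g$ has finite support, so $\dnorm{g} < \infty$. The main obstacle I anticipate is precisely the volume bound $\muinv{0} \in \brond{0}(K_0,\epsilon')$: in light of the lower bound of Fern\'andez--Toom recalled in Theorem~\ref{thm:lowerbound}, such a bound cannot hold in full generality under totally asymmetric noise, so the propagation of a volume-type estimate from $\sleb{0}$ through all iterates $T^k\sleb{0}$ must exploit a specific feature of the Bounded-noise assumption together with the graphical decomposition of Chapter~\ref{chap:eroderdD}, rather than following from the stability theorem alone.
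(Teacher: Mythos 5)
Your reduction has a genuine gap, and it is exactly the one you flag at the end: the whole argument rests on $\muinv{0}$ (and hence the Hahn--Jordan components $\mu_\pm$ of $\nu=(f-\muinv{0}(f))\,\muinv{0}$) lying in some class $\brond{0}(K_0,\epsilon')$ with $\epsilon'<\epsilon_*$, and this is \emph{false} for some of the models covered by the theorem. The class $\brond{0}(K_0,\epsilon')$ demands $\norm{\charf{1,\Lambda}\mu}\leq K_0\,\epsilon'^{\norm{\Lambda}}$ for \emph{all} finite $\Lambda$, i.e.\ exponential decay in the volume, whereas the Fern\'andez--Toom bound (Theorem~\ref{thm:lowerbound}) gives $\muinv{0}(\omega_x=1\ \forall x\in S_R)\geq\epsilon^{c R^{d-1}}$ for, e.g., the North-East-Center model under totally asymmetric noise --- a PCA that satisfies the Bounded-noise assumption and is therefore within the scope of Theorem~\ref{coroll:2}. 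For $d\geq 2$ and $R$ large, $\epsilon^{cR^{d-1}}$ eventually exceeds $K_0\,\epsilon'^{c'R^d}$ for any $\epsilon'<1$, so no such $K_0,\epsilon'$ exist. The paper states this obstruction explicitly in the proof of Theorem~\ref{thm:exp} (``some of the considered models do not have the property that $T^k\sleb{0}$ belongs to some $\brond{0}(K,\epsilon')$''), so there is no hidden propagation argument to rescue the volume bound; the hoped-for key lemma is simply unavailable.

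The paper's actual route avoids ever imposing a volume bound on $\muinv{0}$ or on any measure derived from it. It approximates $\muinv{0}$ by $T^k\sleb{0}$, writes the relevant quantity as $T^n\mu^{(k)}_f(g)$ with $\mu^{(k)}_f(h)=T^k\sleb{0}\bigl((f-T^k\sleb{0}(f))\,h\bigr)$ --- a zero-mass measure --- and then reruns the path-plus-graph expansion on it, extending the Toom graphs back to time $-k$, where the concrete initial measure $\sleb{0}$ kills every graph with a nonempty set of ``error points'' at time $-k$. The factor $f-T^k\sleb{0}(f)$ only contributes a harmless $2\norm{f}_{\infty}$, and one takes $k\to\infty$ at the end. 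In other words, the argument exploits the dynamical origin of $\muinv{0}$ (as a limit of iterates of $\sleb{0}$) inside the expansion itself, rather than a static regularity property of $\muinv{0}$ of the type your reduction requires. Your first step (duality, invariance, and the identity $\muinv{0}(fT^ng)-\muinv{0}(f)\muinv{0}(g)=T^n\nu(g)$) is correct and close in spirit to the paper's $\mu^{(k)}_f$, but the subsequent appeal to Theorem~\ref{thm:exp} as a black box cannot be made to work.
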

\begin{rmk}
\added{We notice that the results in this chapter can easily be extended to a model similar to the North-East-Center PCA but where the space lattice $\natspace$ is replaced with an infinite oriented binary tree $\mathbb T$. Suppose that each node $x$ in $\mathbb T$ has a neighborhood $\mathcal U(x)$ made up of the node $x$ itself and of its two children in $\mathbb T$. The updating function $\varphi_x$ returns the majority state among the three neighbors' states. With these definitions, a deterministic process analogous to a CA can be introduced and one can consider its stochastic perturbations. The stability theorem of \citet{To80} is general enough to cover not only CA but also the model just defined. In particular, one can show that the space-time configurations $\stvect \omega^{(0)}$ and $\stvect \omega^{(1)}$, defined by $\ushort \omega^{(0)}_v = 0$ for all $v$ in $\mathbb T \times \nat$ and $\ushort \omega^{(1)}_v = 1$ for all $v$ in $\mathbb T \times \nat$, are attractive trajectories, due to a progressive erosion like in CA, and also that they are stable. Finally, we can adapt the proofs given in the next sections to prove the same results of exponential convergence and exponential decay of correlations in the low-noise regime for this model.}

\added{Similar models have already been studied, for example by \citet{FoSc08} and \citet{Xu12}, which consider stochastic processes in continuous time and where the constant degree of the tree is not necessarily $3$, and by \citet{KaMo11}, where the process is deterministic but starts from a random initial condition.}
\end{rmk}

\section{Proof of Theorem~\ref{thm:exp}}
\subsection{Path expansion} \label{section:path}
In this section, we will introduce a path expansion which is essentially equivalent to the Dobrushin criterion in \citep{Do71}, using here a formalism which was originally introduced by Keller and Liverani for coupled map lattices in \citep{KeLi06}.

Let $\prec$ be any well-ordering of $\mathbb Z^d$.
The operator $\proj{x}: \mathcal{C}(X) \to \mathcal{C}(X)$ is defined as:
\begin{equation*}\label{d:2}
\proj{x}f(\vect{\omega}) = f(\vect{\omega}_{\succeq x}, \vect{a}_{\prec x}) -  f(\vect{\omega}_{\succ x}, \vect{a}_{\preceq x}),
\end{equation*}
where, from now on,  $\vect{a}$ will denote the configuration for which $a_x = a =0 $ for all $ x \in \mathbb Z^d$. $(\vect{\omega}_{\succeq x}, \vect{a}_{\prec x})$ is the configuration obtained from $\vect{\omega}$ by replacing the states at all sites $y \prec x$ with the value $a$. Using telescopic sums, we can check that, for any continuous function $f$,
\begin{equation}\label{d:3}
f(\vect{\omega}) = f(\vect{a}) + \sum_{x \in \mathbb Z^d} \proj{x} f(\vect{\omega}).
\end{equation}
With a slight abuse of notation, let us denote by $\proj{x}: \mathcal{M}(X) \to \mathcal{M}(X)$ the dual of the operator $\proj{x}: \mathcal{C}(X) \to \mathcal{C}(X)$. The image of $\mathcal{M}(X)$ under this operator is actually included in the set
\begin{equation}\label{d:4}
\mathcal{M}_x = \{ \mu \in \mathcal{M}(X)\ |\ \mu(f) = 0\ \textrm{if}\ f(\vect{\omega})\ \textrm{is independent of}\ \omega_x \}
\end{equation}whose elements verify the following property:
\begin{equation}
\mu \in \mathcal{M}_x \quad    \Rightarrow \quad \norm{\mu(f)} \leq \norm{\mu} \norm{\delta_x f}_{\infty}.\label{d:7}
\end{equation}

With equation~\eqref{d:3}, we can see that any signed measure of zero mass $\mu \in \mathcal{M}(X)$ with $\mu(1)=0$, where $1$ denotes the constant function in $\mathcal C(X)$ that is identically equal to $1$, admits the following decomposition:
\begin{equation}\label{d:8}
 \mu = \sum_{x \in \mathbb Z^d} \proj{x} \mu.
\end{equation}
While $\proj{x} \mu$ belongs to $\mathcal{M}_x$, it is no longer the case for $T \proj{x} \mu$.
Nevertheless, since the interactions are local, we will see that $T \proj{x} \mu$ can be expressed as the sum of $\norm{\mathcal U}=R$ finite signed measures: a first one in $\mathcal{M}_{x-u_1}$, a second one in $\mathcal{M}_{x-u_2}$, ... , and a last one in $\mathcal{M}_{x-u_R}$, where we use the notation $\mathcal U = \{u_1,\dotsc,u_R\}$ for the neighborhood.
For this, consider an arbitrary measure $\mu_x \in \mathcal{M}_x$. Using definitions~\eqref{re:3} and \eqref{d:4} of $T$ and $\mathcal{M}_x$, together with our hypothesis that $p_y(\xi_y| \vect{\omega})$ only depends on $\vect\omega_{y+\mathcal U}$, it is easy to check that
\begin{equation} \label{path:1}
T \mu_x = \sum_{i=1}^R T^{(x-u_i,x)} \mu_x ,
\end{equation}
where $R$ new operators have been defined:
\begin{equation}\label{path:2}
T^{(x-u_i,x)} \mu (f)  = \int \dif \vect{\omega} \int \dif \vect{\xi}\; f(\vect{\xi}) \, \tau^{(x-u_i,x)} (\vect{\xi},\vect{\omega}) \, \mu(\vect{\omega})  \quad \text{for $i=1, \dotsc, R$}
\end{equation}
with the kernels
\begin{align}
\tau^{(x-u_i,x)} (\vect{\xi},\vect{\omega})&=  \left( \prod_{y \notin x-\mathcal U} p(\xi_y| \vect{\omega}) \right) \left( \prod_{j=1}^{i-1} p(\xi_{x-u_j}| \vect{\omega}_{\neq x}, a) \right)  \notag \\
& \quad  \cdot \Big( p(\xi_{x-u_i}| \vect{\omega}) - p(\xi_{x-u_i}| \vect{\omega}_{\neq x}, a) \Big) \left( \prod_{j=i+1}^{R} p(\xi_{x-u_j}| \vect{\omega}) \right).\label{path:6}
\end{align}
To simplify notations, we omit the first index $x$ in $p_x(\xi_x | \vect \omega)$ and write $p(\xi_x | \vect \omega)$ since the index $x$ of $\xi_x$ is sufficient to avoid confusions.
We notice that, for all $i$, the image of $\mathcal{M}(X)$ under the operator $T^{(x-u_i,x)}$ is included in $ \mathcal{M}_{x-u_i}$.

Let now $\mu$ be a signed measure of zero mass and consider $T^n \mu$. Using the decomposition~\eqref{d:8} and applying~\eqref{path:1} iteratively, we find
\begin{equation*}
T^n \mu= \sum_{x_0 \in \mathbb Z^d} \sum_{x_0-x_1 \in \mathcal U } \cdots \sum_{x_{n-1}-x_n \in \mathcal U } T^{(x_n ,x_{n-1})}\cdots T^{(x_1, x_{0})}\proj{x_0} \mu . \label{d:13}
\end{equation*}
This sum can be rewritten as a sum over paths. Indeed, if we introduce
\begin{equation*}
P_x=\left\{ \gamma: \{0,\dotsc,n\} \to \mathbb Z^d \mid \gamma_n = x\ \text{and}\ \gamma_{t-1} - \gamma_t \in \mathcal U \  \forall t \right\},
\end{equation*}
it is equivalent to the following compact expression:
\begin{equation}
T^n \mu= \sum_{x \in \mathbb Z^d}\ \sum_{ \gamma \in P_x}\ \prod_{t = 1}^{n} T^{(\gamma_t, \gamma_{t-1})} \proj{\gamma_0} \mu , \label{path:4}
\end{equation}
where the operators $T^{(\gamma_t, \gamma_{t-1})}$ have to be applied in chronological order.

In the case of a weakly interacting system, this sum over paths can be used to prove the existence of a unique invariant probability measure, under the assumptions of the Dobrushin criterion given in \citep{Do71}. The system we are considering here is certainly not weakly interacting. However, in order to prove Theorem~\ref{thm:exp}, the idea will be to take advantage of the Property of decoupling in the pure phases. This property will provide upper bounds of order $\epsilon$ on the couplings $T^{(\gamma_t, \gamma_{t-1})}$ in the `zero' phase. Indeed, since $a=0$ in~\eqref{path:6}, those bounds will be obtained for the instants $t$ such that
the considered space-time configuration presents at time $t-1$ the value $0$ for $\varphi(\vect\omega_{\gamma_t+\mathcal U})$. As for the `one' phase, it will be shown to be infrequent enough, for a suitable choice of initial condition.
\subsection{Pure phase expansion} \label{section:pure}
For fixed $x \in \mathbb Z^d$ and $\gamma \in P_x$, if $\gamma$ denotes both the function and its trajectory $\{ ( \gamma_t ,t ) \mid t=0, \dotsc , n \}$, let us partition the trajectory into $\gammaplus \subseteq \gamma \setminus \{ \gamma_0 \}$ and $\gammamoins = (\gamma \setminus \{ \gamma_0 \} ) \setminus \gammaplus$. We define, for $t =0, \dotsc , n $, the sets $F\left( \gammaplus , t \right) \subseteq X$ with the following indicator functions:
\begin{equation*}
\charf{F\left( \gammaplus , t \right)} (\vect{\omega}) =  \prod_{x:(x,t+1) \in \gammaplus} \charf{0} \left( \varphi _x (\vect{\omega} ) \right) \prod_{x:(x,t+1) \in \gammamoins} \charf{1} \left( \varphi_x (\vect{\omega} ) \right).
\end{equation*}
These subsets lead to a partition of the product $X^{\{0,\dotsc , n \}}$ of configuration spaces. 
Inserting this partition in~\eqref{path:4},
\begin{equation*}
T^n \mu = \sum_{x \in \mathbb Z^d}\ \sum_{ \gamma \in P_x} \sum_{\gammaplus \subseteq \gamma \setminus \{ \gamma_0 \}}  \prod_{t=1}^n \left( \charf{F(\gammaplus,t)}   T^{(\gamma_t,\gamma_{t-1})}  \right) \charf{F(\gammaplus,0)}  \proj{\gamma_0} \mu , \label{decoupling:1}
\end{equation*}
where $\charf{F(\gammaplus,n)}$ is nothing but the identity operator. Using property~\eqref{d:7} together with the fact that the image of $\mathcal{M}(X)$ under $T^{(x,\gamma_{n-1})}$ is included in $\mathcal{M}_x$, we have, for all $f \in \mathcal{C}(X)$,
\begin{align}
&\norm{T^n \mu(f)}  \label{decoupling:2} \\
&\quad \leq \sum_{x\in \mathbb Z^d}\ \sum_{ \gamma \in P_x} \sum_{\gammaplus \subseteq \gamma \setminus \{ \gamma_0 \}}  \norm{ \prod_{t=1}^n \left( \charf{F(\gammaplus,t)}   T^{(\gamma_t,\gamma_{t-1})}  \right) \charf{F(\gammaplus,0)}  \proj{\gamma_0} \mu} \norm{\delta_x f}_{\infty} .\notag
\end{align}

Next, for given $x \in \mathbb Z^d$, $\gamma \in P_x$ and $\gammaplus \subseteq \gamma \setminus \{ \gamma_0\}$, we define the set
\begin{equation*}
\mathcal{E} \left( \gammaplus \right) = \left\{ \left( \vect{\omega}_t \right) _{t=0}^n \in X^{\{0,\dotsc , n \}} \mid \vect{\omega}_{t} \in F \left( \gammaplus, t \right) \quad \forall t \in \{ 0, \dots, n \} \right\}.
\end{equation*}
We will now build a graph using the construction given in Chapter~\ref{chap:eroderdD} in order to control the extent of the `one' phase. We will then rewrite the expansion~\eqref{decoupling:2} as a cluster expansion in terms of a combination of paths and graphs.
\subsection{Graphs} \label{section:graphs}
The graph construction of \citet{To80}, detailed in Chapter~\ref{chap:eroderdD}, applies to all CA involved by Theorem~\ref{thm:exp}, namely the monotonic binary CA with the erosion property.
It associates a graph $G$ to each space-time configuration $\stvect \omega$ presenting state $1$ at a given point $v_{\Lambda}=(x_{\Lambda},t_{\Lambda})$ in the space-time lattice $V$ and satisfying the initial condition $\ushort \omega_v =0 \ \forall v \in V_0$. In the following, we name it \textit{Toom graph}. The Toom graph $G$ thus associated to the space-time configuration identifies part of the error points which lead to the state $1$ at $v_{\Lambda}$. Its set of vertices $V_G$ admits only points where the space-time configuration presents state $1$. Error points, classified in the distinguished subset $\hat{V}_G$ of $V_G$, all carry a probability smaller than $\epsilon$, because of the Bounded-noise assumption.
We will refer to the following properties of the Toom graphs, which have been proved in Chapter~\ref{chap:eroderdD}.
\begin{enumerate}[\hspace{1cm}(P1)]
\item \label{P1}The graphs are connected and contain $v_{\Lambda}$, which is their only vertex at time coordinate $t_{\Lambda}$. We will call it their \textit{source}. Moreover, for any given point of $V$, at most $2^m(R^2+2R)$ different types of edges can have that point as vertex. Then the number of graphs with exactly $\norm{E_G}=n$ edges is at most $[2^m(R^2+2R)]^{2n}$. This is Lemma~\ref{lemma:numbgrap}.
\item \label{P2}The number $|E_G|$ of edges is in turn related to the number $|\hat{V}_G|$ of identified error points by inequality~\eqref{relEGVG}:
\begin{equation} \label{t:4}
\frac{1}{1+mr^{-1}} \norm{E_{G}} +1 \leq \norm{\hat{V}_{G}}.
\end{equation}
\end{enumerate}

Here in our treatment of the space-time configurations $\stvect \omega$ that satisfy, for all $t$ such that $(\gamma_t,t) \in \gammamoins$, the condition $\varphi(\stvect \omega_{U(\gamma_t,t)})=1$, we will hardly need to modify the construction of Toom graphs, but we will associate to each space-time configuration a collection of Toom graphs instead of only one, since we are interested in a collection of points with state $1$, corresponding to the different elements of $\gammamoins$.

We define a map
\begin{align}
g : \mathcal{E} \left( \gammaplus \right) &\to \mathcal{G}  \left( \gammaplus \right) = g \left(\mathcal{E} \left( \gammaplus \right) \right) \label{t:2} \\
\left( \vect{\omega}_t \right) _{t=0}^n &\mapsto G = g \left( \left( \vect{\omega}_t \right) _{t=0}^n\right)\notag
\end{align}
where $G$ is now a graph made of a disconnected collection of Toom graphs whose construction, for a given $\left( \vect{\omega}_t \right) _{t=0}^n \in \mathcal{E} \left( \gammaplus \right) $, consists in the following steps.
\begin{enumerate}
\item \begin{sloppypar} We pick $t_1$, the largest $t$ such that $\left(  \gamma_{t_1},t_1 \right) \in \gammamoins$. We know that $\varphi_{\gamma_{t_1}} \left( \vect{\omega}_{t_1-1}  \right) = 1$.
We consider $\left( \tilde{\vect{\omega}}_t \right) _{t=0}^n$ which is obtained from $ \left( \vect{\omega}_t \right) _{t=0}^n$ by replacing $\omega_{\gamma_{t_1},t_1}$ with the value $1$. We construct the Toom graph $G_1$ having $(\gamma_{t_1},t_1)$ as source and which is associated to $\left( \tilde{\vect{\omega}}_t \right) _{t=0}^n$. The whole construction algorithm in Chapter~\ref{chap:eroderdD} remains valid here, the only difference being that no assumption about the initial condition forbids the presence of points with state $1$ at $t=0$ in $\tilde{\vect{\omega}}_0$ while, in the construction given in Chapter~\ref{chap:eroderdD}, the initial condition is $\ushort \omega_v =0 \ \forall v \in V_0$ and therefore the graph is always contained in $\natspace \times \{1, \dotsc , t_{\Lambda} \} $ since the points where the space-time configuration has state $0$ cannot be vertices. A rule is then added to the algorithm: when the graph under construction reaches points with state $1$ at $t=0$, they are considered equivalent to error points and classed as elements of $\hat{V}_{G_1}$. In other words, this amounts only to a one-unit shift of the initial condition along the time axis.
\end{sloppypar} 

So we end up with a connected graph $G_1$. Its set of vertices $V_{G_1}$ contains $(\gamma_{t_1},t_1)$ and at all points $(x,t) \in V_{G_1} \setminus \{ (\gamma_{t_1},t_1) \} $, the space-time configuration $\left( \vect{\omega}_t \right) _{t=0}^n$ takes the value $\omega_{x,t}=1$. It has a distinguished subset of vertices $\hat{V}_{G_1} \subseteq V_{G_1}$ where the space-time configuration presents errors: 
\begin{equation*}
\forall (x,t) \in \hat{V}_{G_1} \  \textrm{such that} \  t\neq 0, \ \varphi _x ( \vect{\omega}_{t-1}  ) \allowbreak =0.
\end{equation*}
It has a set $E_{G_1}$ of edges connecting its vertices and satisfying~\eqref{t:4}.
$G_1$ is the first of the Toom graphs whose union will form the graph $G$.
\item We pick the next maximum $t_2 < t_1$ such that $\left(\gamma_{t_2} ,t_2\right) \in \gammamoins$. We perform exactly the same construction as before and associate to $\left( \gamma_{t_2}  ,t_2\right)$ the graph $G_2$ with properties analogous to those of $G_1$. If $V_{G_1} \cap V_{G_2} \neq \varnothing$ then we discard $G_2$, since we cannot count any error point twice. Otherwise the union of $G_1$ and $G_2$ will be part of $G$.
\item We repeat the same process up to the lowest time $t$ such that $\left( \gamma_t,t \right) \in \gammamoins$, discarding any Toom graph $G_i$ which intersects one of the previous retained ones.
\item $G$ is defined as the union of all the retained $G_i$, $i \in \{ 1, \dots , | \gammamoins | \}$. $V_G$ is the union of the retained $V_{G_i}$, $\hat{V}_G$ is the union of the retained $\hat{V}_{G_i}$ and $E_G$ is the union of the retained $E_{G_i}$.
\end{enumerate}
The map $g$ of~\eqref{t:2} is then completely defined. It is usually not injective, since there are many configurations corresponding to one graph. But $ \mathcal{E} \left( \gammaplus \right)$ can always be written as:
\begin{equation}\label{p:2}
\mathcal{E} \left( \gammaplus \right) = \bigcup_{G \in \mathcal{G}\left( \gammaplus \right)} g^{-1}G.
\end{equation}
Defining, for all $G \in \mathcal{G} \left( \gammaplus \right)$ and for $t=0, \dotsc, n$, the subsets $E(G,t) \subseteq X$:
\begin{equation} \label{toom:6}
\charf{E(G,t)}(\vect{\omega}) = \prod_{x: (x,t) \in V_G \setminus \gammamoins} \charf{1}(\omega_x) \prod_{x: (x,t+1) \in \hat{V}_G } \charf{0}\left( \varphi _x(\vect{\omega}) \right),
\end{equation}
we have, by construction of the map $g$,
\begin{equation}
\charf{g^{-1}G} \left( \left( \vect{\omega}_t \right) _{t=0}^n \right) \leq \prod_{t=0}^n \charf{E(G,t)}(\vect{\omega}_t). \label{toom:7}
\end{equation}

Let us consider any $G \in \mathcal{G} \left( \gammaplus \right)$. $G$ is the union of $c$ individually connected but pairwise disconnected Toom graphs $(G_{i_1}, \dots , G_{i_c})$. From this point on, they will be noted $(G_{1}, \dots, G_i, \dots , G_{c})$. They all possess Properties (P\ref{P1}) and (P\ref{P2}). We show that $G$ itself inherits similar properties.

First we find an upper bound on the number of different graphs with given numbers of edges and of connected parts.
We define a map $w$ on $\mathcal{G} \left( \gammaplus \right)$. For $G \in \mathcal{G} \left( \gammaplus \right)$, $w(G)$ consists of two elements: first, the list of the sources of $G_1, \dots , G_c$; second, the sequence given by the concatenation of the Eulerian walks associated to $G_1, \dots , G_c$ in the proof of Lemma~\ref{lemma:numbgrap}.
This map $w:\mathcal{G} \left( \gammaplus \right) \to w\left( \mathcal{G} \left( \gammaplus \right) \right)$ is injective.
\begin{proof}
The inductive construction of the Toom graphs starts from the vertex which we called the source of the graph. The first step creates exactly $m$ timelike edges attached to this source and then no other edge with this vertex will be drawn during the following steps of the construction. Consequently, we know that the Eulerian walk associated to a Toom graph $G_i$ will leave from and come back to the source of $G_i$ exactly $m$ times. Given any element of the set $w\left( \mathcal{G} \left( \gammaplus \right) \right)$, its unique inverse image can then be deduced from the list of sources and the sequence of steps of the concatenated Eulerian walks by the following method. Starting from the first source recorded in the list, we add the steps of the sequence and redraw the first connected part of the graph, until the source has been reached $m$ times. Then we jump to the next source recorded in the list and read on the sequence of steps until we again come back $m$ times to this second departure point, and so on, until we have read the whole sequence. While redrawing a graph, we give back each travelled edge its pole distribution, which is recorded in the sequence.
\end{proof}
For all $c \in \nat$ and all $n \in \nat$, let $\mathcal{G} \left( \gammaplus, c, n \right)$ be the subset of $\mathcal{G} \left( \gammaplus \right)$ consisting of the graphs with $c$ connected parts and $n$ edges exactly.
Any element of $w\left( \mathcal{G} \left( \gammaplus, c, n \right) \right)$ is made of a list with $c$ sources chosen among the elements of $\gammamoins$ and of a sequence with exactly $2n$ steps.
Therefore Property (P\ref{P1}) is transferred from Toom graphs to graphs $G \in \mathcal{G} \left( \gammaplus, c, n \right)$:
\begin{equation} \label{t:9}
\norm{\mathcal{G} \left( \gammaplus, c, n \right)} \leq {\norm{\gammamoins} \choose c } . [2^m(R^2+2R)]^{2n} \leq 2^{\norm{\gammamoins}}. [2^m(R^2+2R)]^{2n}.
\end{equation}

On the other hand, every connected part verifies Property (P\ref{P2}), that is to say, satisfies the analog of~\eqref{t:4}. Summing this inequality over all parts gives a similar property for $G$:
\begin{equation} \label{t:8}
\frac{1}{1+mr^{-1}} \norm{E_{G}} +c \leq \norm{\hat{V}_{G}}.
\end{equation}

We also have
\begin{equation} \label{t:7}
\norm{\gammamoins} \leq \norm{E_{G}} +c.
\end{equation}
\begin{proof}
It is sufficient to construct an injective map $f:\gammamoins \to E_G \cup \{1, \dots, c\}$. Let us consider any $(\gamma_t,t) \in \gammamoins$. If $(\gamma_t,t)$ belongs to a connected part $G_i$ of $G$ which is contained in $\natspace \times \{0, \dotsc, t \}$, then we know that $(\gamma_t,t)$ is the unique source of $G_i$. We assign to $(\gamma_t,t)$ the image $f((\gamma_t,t))=i \in \{1, \dotsc, c\}$. Otherwise, we know that the Toom graph with source $(\gamma_t,t)$ has been discarded. Then that Toom graph intersects at least one of the retained Toom graphs with source $(\gamma_s,s)$, $s>t$. Now the discarded graph is contained in $\natspace \times \{0, \dotsc, t \}$ and the time component of edges of Toom graphs has maximum absolute value $1$. Therefore there exists an edge of the retained graph which arrives onto some point $(x,t)$, $ x \in \mathbb Z^d$. Such an edge belongs to $E_G$ and we take it to be the image of $(\gamma_t,t)$ under $f$. The map $f$ thus defined is easily seen to be injective.
\end{proof}
\subsection{Exponential convergence to equilibrium} \label{section:decouplpure}
\begin{sloppypar}
We now combine the collections of Toom graphs introduced in Section~\ref{section:graphs} with the paths of influence of Section~\ref{section:path} and their pure phase partition of Section~\ref{section:pure}. Inserting partition~\eqref{p:2} in expansion~\eqref{decoupling:2} yields a sum over graphs.
It introduces intricate couplings between the configurations at different times, due to the complexity of the Toom graph construction. Since $\charf{g^{-1}G}$ does not have the property of factorization over time, we use the upper bound~\eqref{toom:7} in the equivalent form $\charf{g^{-1}G} \left( \left( \vect{\omega}_t \right) _{t=0}^n \right) = \charf{g^{-1}G} \left( \left( \vect{\omega}_t \right) _{t=0}^n \right) . \prod_{t=0}^n \charf{E(G,t)}(\vect{\omega}_t) $, since indicator functions can only take values $0$ or $1$. Now, for all $\left( \vect{\omega}_t \right) _{t=1}^n$, we have
\begin{align}
&\norm{\proj{\gamma_0} \mu \left[ \tau^{(\gamma_1, \gamma_0 ) } (\vect{\omega}_1,\cdot ) \ \charf{F(\gammaplus,0)\cap E(G,0)} (\cdot) \ \charf{g^{-1}G} \left( \, \cdot \, ; \left( \vect{\omega}_t \right) _{t=1}^{n} \right) \right]} \notag \\
& \qquad \leq \norm{\proj{\gamma_0} \mu} \left[ \norm{\tau^{(\gamma_1, \gamma_0 ) } (\vect{\omega}_1,\cdot )} \charf{F(\gammaplus,0)\cap E(G,0)} (\cdot) \right] , \notag
\end{align}
where $ \norm{\proj{\gamma_0} \mu}$ is not the norm, but the absolute value of the measure $\proj{\gamma_0} \mu$. Consequently, if we define the operator $\tilde{T}^{(y,x)} $ by replacing $\tau^{(y,x)} (\vect{\xi},\vect{\omega})$ with its absolute value $\norm{\tau^{(y,x)} (\vect{\xi},\vect{\omega})}$ in~\eqref{path:2}, the graph expansion gives
\begin{align}
&\norm{ \prod_{t=1}^n \left( \charf{F(\gammaplus,t)}   T^{(\gamma_t,\gamma_{t-1})}  \right) \charf{F(\gammaplus,0)}  \proj{\gamma_0} \mu} \label{graph:1} \\
& \qquad \leq \sum_{G \in \mathcal{G} \left( \gammaplus \right)} \norm{ \prod_{t=1}^n \left( \charf{F(\gammaplus,t)\cap E(G,t)}   \tilde{T}^{(\gamma_t,\gamma_{t-1})}  \right) \charf{F(\gammaplus,0)\cap E(G,0)}  \norm{\proj{\gamma_0} \mu}}. \notag
\end{align}
\end{sloppypar}
This expansion will be the starting point of the proof of Theorem~\ref{thm:exp}. Before, we need the following lemmas.
\begin{lemma} \label{lemma1}
If $\epsilon \leq 1/2$, then for all $x \in \mathbb Z^d$, $\gamma \in P_x$, $\gammaplus \subseteq \gamma \setminus \{ \gamma_0\}$ and $G \in \mathcal{G} \left( \gammaplus \right)$, for all $\nu \in \mathcal{M}(X)$ and for all $t \in \{1, \dotsc , n\}$,
\begin{align*}
&\grandnorm{ \charf{F(\gammaplus,t)\cap E(G,t)}  \tilde{T}^{(\gamma_t,\gamma_{t-1})}  \charf{F(\gammaplus,t-1)\cap E(G,t-1)}  \nu } \notag \\
& \qquad \leq \norm{\charf{F(\gammaplus,t-1)\cap E(G,t-1)} \nu} \ \epsilon^{\frac{1}{2}|\hat{V}_{G,t}|} \ (2\epsilon)^{\frac{1}{2}|\gammaplus_t|} \ 2^{|\gammamoins_t|},
\end{align*}
where we introduced the notation $\gammaplus_t= \gammaplus \cap \{(x,t) | x \in \mathbb Z^d \}$ and the analogs $\gammamoins_t$ and $\hat{V}_{G,t}$.
\end{lemma}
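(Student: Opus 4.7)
The idea is to reduce the norm on the left-hand side to the uniform bound, over admissible $\vect\omega$, of the inner $\vect\xi$-integral
\[
I(\vect\omega):=\int\dif\vect\xi\ \charf{F(\gammaplus,t)\cap E(G,t)}(\vect\xi)\ \norm{\tau^{(\gamma_t,\gamma_{t-1})}(\vect\xi,\vect\omega)}.
\]
Indeed, by duality, $\norm{\charf{F(\gammaplus,t)\cap E(G,t)}\tilde{T}^{(\gamma_t,\gamma_{t-1})}\charf{F(\gammaplus,t-1)\cap E(G,t-1)}\nu}$ equals the supremum over $f$ with $\norm{f}_\infty\le 1$ of the corresponding double integral, and since $\norm{\tau}\ge 0$ the inner integral can be bounded by $I(\vect\omega)$ uniformly in $\vect\omega\in F(\gammaplus,t-1)\cap E(G,t-1)$. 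It then suffices to show $I(\vect\omega)\le\epsilon^{\frac{1}{2}\norm{\hat V_{G,t}}}(2\epsilon)^{\frac{1}{2}\norm{\gammaplus_t}}2^{\norm{\gammamoins_t}}$ for every such $\vect\omega$.

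To estimate $I(\vect\omega)$, I would use the explicit factorized form~\eqref{path:6} of $\tau^{(\gamma_t,\gamma_{t-1})}$: it is a product, over the sites $y\in\mathbb Z^d$, of either transition probabilities $p(\xi_y\mid\cdot)$ or, at the single site $y=\gamma_t$, the difference $p(\xi_{\gamma_t}\mid\vect\omega)-p(\xi_{\gamma_t}\mid\vect\omega_{\neq\gamma_{t-1}},0)$. Then I would drop the $\varphi$-indicators that appear in $\charf{F(\gammaplus,t)\cap E(G,t)}(\vect\xi)$ (bounding them by $1$), keeping only the constraint $\xi_x=1$ for each $x$ with $(x,t)\in V_G\setminus\gammamoins$ coming from~\eqref{toom:6}. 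The remaining $\vect\xi$-integral factorizes over sites: at an unconstrained $y$, summing $p(\xi_y\mid\cdot)$ over $\xi_y\in\{0,1\}$ gives $1$, and at a constrained site $x$ the sum collapses to the single value $p(1\mid\cdot)$.

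The key step is then to identify, at each constrained site, the factor of $\epsilon$ that the assumptions provide. For a site $(x,t)\in\hat V_{G,t}\subseteq V_G\setminus\gammamoins$ the construction of the Toom graph combined with $\charf{E(G,t-1)}(\vect\omega)$ forces $\varphi_x(\vect\omega)=0$, and the Bounded-noise assumption yields $p(1\mid\vect\omega)\le\epsilon$; moreover, by monotonicity of $\varphi$, $\varphi_x(\vect\omega_{\neq\gamma_{t-1}},0)\le\varphi_x(\vect\omega)=0$, so the same $\epsilon$-bound applies when the shifted configuration appears. Other constrained sites in $V_G\setminus(\gammamoins\cup\hat V_{G,t})$ contribute a harmless factor bounded by~$1$. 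At site $\gamma_t$, the difference factor is bounded by $\epsilon$ whenever $\varphi_{\gamma_t}(\vect\omega)=0$ (Property of decoupling in the pure phases), i.e.\ when $(\gamma_t,t)\in\gammaplus_t$, and trivially by~$1$ when $(\gamma_t,t)\in\gammamoins_t$; summing over $\xi_{\gamma_t}$ inserts an extra factor~$2$ unless $\xi_{\gamma_t}=1$ is constrained. Collecting these factors over the four possible positions of $(\gamma_t,t)$ relative to $V_G$, $\hat V_G$, $\gammaplus$, $\gammamoins$ gives, in every case, a bound of the form $\epsilon^{\alpha}\cdot 2^\beta$ which dominates the desired right-hand side exactly when $\epsilon\le 1/2$—the condition that turns $(2\epsilon)^{1/2}\ge \max(2\epsilon,\epsilon)$ and $\epsilon^{1/2}\le(2\epsilon)^{1/2}$.

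The main obstacle is not conceptual but bookkeeping: one must verify case by case, according to whether $(\gamma_t,t)$ is an error point, a non-error vertex of $V_G$, a non-vertex, and whether it lies in $\gammaplus_t$ or $\gammamoins_t$, that the accumulated product of the individual site bounds fits the form $\epsilon^{\frac{1}{2}\norm{\hat V_{G,t}}}(2\epsilon)^{\frac{1}{2}\norm{\gammaplus_t}}2^{\norm{\gammamoins_t}}$; the choice of square roots in the statement is precisely what enables this uniform dominance under the assumption $\epsilon\le 1/2$, and what ultimately allows reserving half of each $\epsilon$ for the Peierls-type summation over graphs in later steps.
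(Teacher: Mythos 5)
Your proposal is correct and follows essentially the same route as the paper's proof: reduce to a uniform bound on the factorized $\vect\xi$-integral of $\norm{\tau^{(\gamma_t,\gamma_{t-1})}}$ over $\vect\omega\in F(\gammaplus,t-1)\cap E(G,t-1)$, keep only the $\xi_x=1$ constraints at vertices of $V_G\setminus\gammamoins$, extract a factor $\epsilon$ at each site of $\hat V_{G,t}$ from the Bounded-noise assumption, and handle the site $\gamma_t$ by the case analysis ($\epsilon$ via the decoupling property in the `zero' phase, $1$ in the `one' phase, an extra $2$ when $\xi_{\gamma_t}$ is unconstrained), with $\epsilon\le 1/2$ ensuring $2\epsilon\le(2\epsilon)^{1/2}$. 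Your explicit monotonicity remark justifying the $\epsilon$-bound at error sites where the shifted configuration $(\vect\omega_{\neq\gamma_{t-1}},0)$ appears is a point the paper leaves implicit, and is a welcome addition.
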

\begin{proof}
Using definitions~\eqref{path:2} and~\eqref{toom:6} of $T^{(\gamma_t, \gamma_{t-1})}$ and $\charf{E(G,t)}$, we already have
\begin{align}
&\grandnorm{ \charf{F(\gammaplus,t)\cap E(G,t)}  \tilde{T}^{(\gamma_t,\gamma_{t-1})}  \charf{F(\gammaplus,t-1)\cap E(G,t-1)}  \nu } \notag \\
&\quad \leq \norm{\charf{F(\gammaplus,t-1)\cap E(G,t-1)} \nu}  \label{together:1}  \\
&  \qquad \cdot \sup_{\vect{\omega} \in F(\gammaplus,t-1)\cap E(G,t-1)} \int \dif \vect{\xi} \ \prod_{x: (x,t) \in V_G \setminus \gammamoins} \charf{1}\left(\xi_{x}\right) \norm{\tau^{(\gamma_t,\gamma_{t-1})} (\vect{\xi}, \vect{\omega} )}. \notag
\end{align}
Now, keeping definition~\eqref{path:6} of $\tau^{(\gamma_t,\gamma_{t-1})}$ in mind, the contributions of the states $\xi_{x}$ at different sites $x \in \mathbb Z^d$ to the integral in~\eqref{together:1} are decoupled and factorize.
We first consider the set $\{x \in \mathbb Z^d :(x,t) \in \hat{V}_G \setminus \gamma \} $. Because of the Bounded-noise assumption, its elements all contribute by a factor bounded by $\epsilon \leq \epsilon^{1/2}$, since the supremum in~\eqref{together:1} is taken over configurations $\vect{\omega}$ in $E(G,t-1)$.
Everywhere else on $ \mathbb Z^d \setminus  \{ \gamma_t \} $, the contribution is trivially bounded by $1$. For $x=  \gamma_t$, we need upper bounds on $\norm{p\left( \xi_{\gamma_t} | \vect{\omega}\right) -p\left( \xi_{\gamma_t} | \vect{\omega}_{\neq \gamma_{t-1}}, a \right) }$:
\begin{itemize}
\item if $\varphi _{\gamma_t}( \vect{\omega}) = 0$, then for all $\xi_{\gamma_t}\in\{0,1\}$,
$\norm{p\left( \xi_{\gamma_t} | \vect{\omega}\right) -p\left( \xi_{\gamma_t} | \vect{\omega}_{\neq \gamma_{t-1}}, a \right) }\linebreak \leq \epsilon , $
where we used the Bounded-noise assumption and the resulting Property of decoupling in the pure phases;
\item if $\varphi _{\gamma_t}( \vect{\omega})= 1$, we use the trivial bound $\norm{p\left( \xi_{\gamma_t} | \vect{\omega}\right) -p\left( \xi_{\gamma_t} | \vect{\omega}_{\neq \gamma_{t-1}}, a \right) } \linebreak \leq 1$.
\end{itemize}
Therefore we obtain the following three types of upper bounds on the contribution of the state at $\gamma_t$ to the integral in~\eqref{together:1}: the contribution is bounded by $\epsilon \leq \epsilon^{1/2} (2\epsilon)^{1/2}$ if $(\gamma_t,t) \in \gammaplus \cap \hat{V}_G$, by $2\epsilon \leq (2\epsilon)^{1/2}$ if $(\gamma_t,t) \in \gammaplus \setminus \hat{V}_G$ or by $2$ if $(\gamma_t,t) \in \gammamoins$.
Inserting all these bounds in~\eqref{together:1} proves Lemma~\ref{lemma1}.
\end{proof}
\begin{lemma} \label{lemma:2}
For any monotonic binary CA characterized by a non-constant monotonic function $\varphi$ verifying the erosion criterion, there exists a positive $\epsilon_*$ such that, for all $\epsilon \in [0, \epsilon_*[$, the following result is verified for any PCA satisfying the corresponding Bounded-noise assumption. Let $K\geq 0$ and $\epsilon' \in [ 0, \epsilon_* [$. Let $\mu \in \mathcal{M}(X)$ be such that its absolute value $|\mu|$ is in the class $\brond{0} ( K , \epsilon  ')$. Then, there exist some constants $C< \infty$ and $\sigma<1$ such that, for all $f \in \mathcal{C}(X)$ and all $n \in \nat$,
\begin{align*}
&\sum_{x \in \mathbb Z^d} \sum_{ \gamma \in P_x}  \sum_{\gammaplus \subseteq \gamma \setminus \{ \gamma_0 \}}  \norm{ \prod_{t=1}^n \left( \charf{F(\gammaplus,t)}   T^{(\gamma_t,\gamma_{t-1})}  \right) \charf{F(\gammaplus,0)}  \proj{\gamma_0} \mu} \norm{\delta_x f}_{\infty} \\
&\quad \leq C  \dnorm{f} \sigma^n  .
\end{align*}
\end{lemma}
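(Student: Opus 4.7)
\medskip

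\noindent\textbf{Proof proposal for Lemma~\ref{lemma:2}.} The plan is to start from the graph expansion~\eqref{graph:1} and iterate Lemma~\ref{lemma1} down from $t=n$ to $t=1$. Since at each time step the path visits exactly one site, the exponents $|\gammaplus_t|,|\gammamoins_t|$ are $0$ or $1$, and the product of the per-step bounds in Lemma~\ref{lemma1} telescopes into
\begin{equation*}
\bigl\|\charf{F(\gammaplus,0)\cap E(G,0)}\,|\proj{\gamma_0}\mu|\bigr\|\;\epsilon^{|\hat V_G\setminus V_0|/2}\,(2\epsilon)^{|\gammaplus|/2}\,2^{|\gammamoins|}.
\end{equation*}
The boundary factor at $t=0$ is handled using the assumption $|\mu|\in\brond{0}(K,\epsilon')$: the indicator $\charf{E(G,0)}$ forces state $1$ on $V_G\cap V_0\setminus\gammamoins$, and since $\proj{\gamma_0}\mu$ differs from a derivative of $\mu$ only by a trivial bound, one obtains $\bigl\|\charf{F(\gammaplus,0)\cap E(G,0)}\,|\proj{\gamma_0}\mu|\bigr\|\leq 2K\epsilon'^{|V_G\cap V_0|}$. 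Taking $\epsilon_*$ small and requiring $\epsilon'<\epsilon_*$ allows one to replace the $\epsilon'$-factor by powers of $\epsilon$ and absorb $V_0$-vertices into the full $\epsilon^{|\hat V_G|/2}$ bookkeeping coming from~\eqref{t:8}.

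\medskip

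\noindent The next step is to turn the three bounds on Toom graphs into decay. Property~\eqref{t:8} gives $|\hat V_G|\geq c+\alpha|E_G|$ with $\alpha=1/(1+mr^{-1})$, and property~\eqref{t:7} gives $|\gammamoins|\leq c+|E_G|$, so $|\hat V_G|\geq\alpha|\gammamoins|$ as well. The key trick is to split $\epsilon^{|\hat V_G|/2}=\epsilon^{|\hat V_G|/4}\cdot\epsilon^{|\hat V_G|/4}$: the first half is used with $|\hat V_G|\geq c+\alpha|E_G|$ to absorb the graph-counting factor $[2^m(R^2+2R)]^{2|E_G|}$ of~\eqref{t:9} when summing over $c$ and $|E_G|=n_E$, while the second half is used with $|\hat V_G|\geq\alpha|\gammamoins|$ to beat the $2^{|\gammamoins|}$ factor. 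Concretely,
\begin{equation*}
\sum_{G\in\mathcal G(\gammaplus)}\epsilon^{|\hat V_G|/2}\,2^{|\gammamoins|}\;\leq\;A(\epsilon)\,\bigl(2\epsilon^{\alpha/4}(1+\epsilon^{1/4})\bigr)^{|\gammamoins|},
\end{equation*}
where $A(\epsilon)=\sum_{n_E\geq 0}[2^m(R^2+2R)]^{2n_E}\epsilon^{\alpha n_E/4}<\infty$ as soon as $\epsilon$ is small enough. Denoting the base of the last exponent by $\beta(\epsilon)$, one has $\beta(\epsilon)\to 0$ as $\epsilon\to 0$.

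\medskip

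\noindent Summing now over $\gammaplus\subseteq\gamma\setminus\{\gamma_0\}$ via the binomial identity gives
\begin{equation*}
\sum_{\gammaplus}(2\epsilon)^{|\gammaplus|/2}\beta(\epsilon)^{|\gammamoins|}=\bigl((2\epsilon)^{1/2}+\beta(\epsilon)\bigr)^{n}=:\tilde\sigma(\epsilon)^{n},
\end{equation*}
with $\tilde\sigma(\epsilon)\to 0$ as $\epsilon\to 0$. Since the number of paths $\gamma\in P_x$ of length $n$ is at most $R^n$, the sum $\sum_{\gamma\in P_x}$ brings a factor $R^n$, and the bound per $x$ becomes $2KA(\epsilon)\,\bigl(R\tilde\sigma(\epsilon)\bigr)^{n}\|\delta_x f\|_\infty$. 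Choosing $\epsilon_*$ small enough so that $\sigma:=R\tilde\sigma(\epsilon)<1$, and summing $\|\delta_x f\|_\infty$ over $x\in\mathbb Z^d$ to recover $\dnorm{f}$, yields the desired bound with $C=2KA(\epsilon)<\infty$.

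\medskip

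\noindent\textbf{Main obstacle.} The bookkeeping itself is routine, but the delicate point is to extract enough $\epsilon$-powers to simultaneously control three exponentially growing contributions---the factor $R^n$ from the path sum, the factor $2^{|\gammamoins|}$ from Lemma~\ref{lemma1}, and the graph-counting factor $[2^m(R^2+2R)]^{2|E_G|}$ from Property~(P1). This is made possible only by using \eqref{t:8} and \eqref{t:7} in \emph{combination} (via the $1/4$ split of $\epsilon^{|\hat V_G|/2}$), rather than either bound in isolation, so that a single Toom-graph estimate simultaneously suppresses both the graph combinatorics and the cost of the rare but expensive $\gammamoins$ excursions into the ``one'' phase.
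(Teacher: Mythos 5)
Your proposal is correct and follows essentially the same route as the paper: iterate Lemma~\ref{lemma1} along the path, bound the $t=0$ boundary term by $2K\epsilon'^{\,\cdot}$ using $|\mu|\in\brond{0}(K,\epsilon')$, control the graph sum via~\eqref{t:9} together with~\eqref{t:7}--\eqref{t:8}, then apply the binomial identity over $\gammaplus$ and the $R^n$ path count. The only (immaterial) difference is the accounting of the graph sum: the paper parametrizes graphs by $|E_G|=|\gammamoins|-c+k$ and sums two geometric series in $c$ and $k$, whereas you split $\epsilon^{|\hat V_G|/2}$ into two quarters and use \eqref{t:7} and \eqref{t:8} separately — both yield a per-$\gammamoins$ factor that vanishes as $\epsilon\to0$.
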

\begin{proof}
Multiple uses of Lemma~\ref{lemma1} on the RHS of~\eqref{graph:1} imply
\begin{align*}
&\norm{ \prod_{t=1}^n \left( \charf{F(\gammaplus,t)\cap E(G,t)}   \tilde{T}^{(\gamma_t,\gamma_{t-1})}  \right) \charf{F(\gammaplus,0)\cap E(G,0)}  \norm{\proj{\gamma_0} \mu}}\notag \\
&\qquad \qquad \leq \epsilon^{\frac{1}{2}\sum_{t=1}^n |\hat{V}_{G,t}|} \ (2\epsilon)^{\frac{1}{2} |\gammaplus|} \ 2^{|\gammamoins|}  \norm{ \charf{F(\gammaplus,0)\cap E(G,0)}  \norm{\proj{\gamma_0} \mu}}.
\end{align*}
The important point is now that we assumed that $|\mu|$ belongs to $\brond{0}(K,\epsilon')$. Indeed, since $\charf{E(G,0)} \leq \charf{1,\hat{V}_{G,0}}$ and $\charf{1,\hat{V}_{G,0}}( \vect{\omega}_{\succeq \gamma_0}, \vect{a}_{\prec \gamma_0} ) \leq \charf{1,\hat{V}_{G,0}} ( \vect{\omega})$, and using the definition of $\proj{\gamma_0}$, we obtain
\begin{equation*}
\norm{ \charf{F(\gammaplus,0)\cap E(G,0)}  \norm{\proj{\gamma_0} \mu}} \leq 2  K \epsilon'^{\norm{\hat{V}_{G,0}}}.
\end{equation*}
Consequently, with $\tilde{\epsilon}= \max \{ \epsilon, \epsilon' \} $,
\begin{align}
&\norm{ \prod_{t=1}^n \left( \charf{F(\gammaplus,t)\cap E(G,t)}   \tilde{T}^{(\gamma_t,\gamma_{t-1})}  \right) \charf{F(\gammaplus,0)\cap E(G,0)}  \norm{\proj{\gamma_0} \mu}}  \label{final:2} \\
& \qquad \qquad \leq 2 K \  \tilde{\epsilon}^{\frac{1}{2}|\hat{V}_{G}|} \ (2\epsilon)^{\frac{1}{2} |\gammaplus|} \ 2^{|\gammamoins|}. \notag
\end{align}

Now for all $G \in \mathcal{G}\left( \gammaplus \right)$ with $c$ connected parts, \eqref{t:7} implies that the number of edges can be written as $\norm{E_G} = \norm{\gammamoins} - c +k$ with a certain $k \in \nat$. Therefore $ \mathcal{G}\left( \gammaplus \right) = \bigcup_{\substack{c \in \nat \\ k \in \nat}} \mathcal{G} \left( \gammaplus , c, \norm{\gammamoins} - c + k \right)$ and, by virtue of the graph properties established above, we have, using first equation~\eqref{t:8} and then equation~\eqref{t:9},
\begin{align}
&\sum_{G \in \mathcal{G} \left( \gammaplus \right)}  \tilde{\epsilon}^{\frac{1}{2}|\hat{V}_{G}|} \notag \\
& \leq \sum_{c=0}^\infty \ \sum_{k=0}^\infty \norm{\mathcal{G} \left( \gammaplus , c , \norm{\gammamoins } -c +k \right)}  \tilde{\epsilon}^{\frac{1}{2.(1+mr^{-1})} ( | \gammamoins| -c +k)+\frac{c}{2}}   \notag \\
& \leq \sum_{c=0}^\infty \ \sum_{k=0}^\infty 2^{\norm{\gammamoins}} . [2^m(R^2+2R)]^{2(|\gammamoins|-c+k)} \tilde{\epsilon}^{\frac{1}{2.(1+mr^{-1})} ( | \gammamoins| +mr^{-1} c+k)} \notag \\
& \leq \frac{\left( 2.[2^m(R^2+2R)]^2 \tilde{\epsilon}^{\frac{1}{2.(1+mr^{-1})}} \right) ^{|\gammamoins|} }{\bigg(1-[2^m(R^2+2R)]^2 \tilde{\epsilon}^{\frac{1}{2.(1+mr^{-1})}}\bigg) \bigg(1-[2^m(R^2+2R)]^{-2} \tilde{\epsilon}^{\frac{mr^{-1}}{2.(1+mr^{-1})}}\bigg)}  \label{final:1}
\end{align}
provided that $\epsilon$ and $\epsilon'$ are such that $[2^m(R^2+2R)]^2 \tilde{\epsilon}^{1/(2.(1+mr^{-1}))} <1$.
Combining~\eqref{graph:1}, \eqref{final:2} and \eqref{final:1}, we find
\begin{align*}
&\sum_{x \in \mathbb Z^d} \sum_{ \gamma \in P_x}  \sum_{\gammaplus \subseteq \gamma \setminus \{ \gamma_0 \}}  \norm{ \prod_{t=1}^n \left( \charf{F(\gammaplus,t)}   T^{(\gamma_t,\gamma_{t-1})}  \right) \charf{F(\gammaplus,0)}  \proj{\gamma_0} \mu} \norm{\delta_x f}_{\infty} \notag \\
& \qquad \qquad \leq C \sum_{x \in \mathbb Z^d} \sum_{ \gamma \in P_x}  \sum_{\gammaplus \subseteq \gamma \setminus \{ \gamma_0 \}} \notag \\
& \qquad \qquad  \qquad (2\epsilon)^{\frac{1}{2} |\gammaplus|} \left( 4.[2^m(R^2+2R)]^2 \tilde{\epsilon}^{\frac{1}{2.(1+mr^{-1})}} \right) ^{|\gammamoins|} \norm{\delta_x f}_{\infty},
\end{align*}
where $C$ is equal to
\begin{equation*}
\frac{2 K}{ \bigg(1-[2^m(R^2+2R)]^2 \tilde{\epsilon}^{\frac{1}{2.(1+mr^{-1})}} \bigg) \bigg(1-[2^m(R^2+2R)]^{-2} \tilde{\epsilon}^{\frac{mr^{-1}}{2.(1+mr^{-1})}}\bigg)}.
\end{equation*}
Here, we can use Newton's binomial formula: for any finite set $A$ and any $x$ and $y$ in $\real$, $\sum_{ B \subseteq A} x^{\norm{B}} y^{\norm{A \setminus B}} = (x+y)^{\norm{A}}$. Finally, since $\norm{P_x} =R^n$ and keeping definition~\eqref{re:1} in mind, we get the desired upper bound if we take $\sigma =  R ( (2\epsilon)^{\frac{1}{2}} + 4 . [2^m(R^2+2R)]^2 \tilde{\epsilon}^{1/(2.(1+mr^{-1}))}  )$. $\sigma$ is lower than $1$ if the parameters satisfy $\tilde{\epsilon} = \max \{ \epsilon, \epsilon' \}  < \epsilon_*  $ where $\epsilon_*=[R.\{2^{1/2}+4.[2^m(R^2+2R)]^2\}]^{-2.(1+mr^{-1})}$. We can easily check that $\epsilon_*$ is positive and that $\epsilon_*\leq 1/2$ as was required for Lemma~\ref{lemma1} to apply to all $\epsilon < \epsilon_*$.
\end{proof}
Using Lemmas~\ref{lemma1} and \ref{lemma:2} and the previous sections, we now prove Theorem~\ref{thm:exp}.
\begin{proof}[Proof of Theorem~\ref{thm:exp}]
We take the constant $\epsilon_*$ obtained in Lemma~\ref{lemma:2}. Let us consider any PCA satisfying the Bounded-noise assumption with parameter $\epsilon < \epsilon_*$.
We write $T^n \mu (f) - \muinv{0} (f) = T^n \bar{\mu} (f) $ where $\bar{\mu}= \mu - \muinv{0} \in \mathcal{M}(X)$ is a signed measure of zero mass. Therefore we can apply the above path expansion and pure phase expansion to $T^n \bar{\mu}$ which then satisfies \eqref{decoupling:2}. Using definition~\eqref{muinvcontinu}, $\bar{\mu}$ can be rewritten as $\bar{\mu} =  \lim_{j \to \infty}^* \frac{1}{n_j} \sum_{k=0}^{n_j-1} \left( \mu - T^k \sleb{0} \right)$, so that $\norm{T^n \bar{\mu}(f) }$ is bounded above by
\begin{align}
& \liminf_{j \to \infty} \frac{1}{n_j} \sum_{k=0}^{n_j-1} \sum_{x \in \mathbb Z^d} \sum_{ \gamma \in P_x} \sum_{\gammaplus \subseteq \gamma \setminus \{ \gamma_0 \}} \notag  \\
& \qquad \qquad  \left[ \norm{ \prod_{t=1}^n \left( \charf{F(\gammaplus,t)}   T^{(\gamma_t,\gamma_{t-1})}  \right) \charf{F(\gammaplus,0)} \proj{\gamma_0}  \mu} \right.\notag \\
&\qquad \qquad+ \left. \norm{  \prod_{t=1}^n \left( \charf{F(\gammaplus,t)}   T^{(\gamma_t,\gamma_{t-1})}  \right) \charf{F(\gammaplus,0)} \proj{\gamma_0}  T^k \sleb{0} }\right] \norm{\delta_x f}_{\infty}. \label{exp:1}
\end{align}

Since the probability measure $\mu$ belongs to $  \brond{0} \left( K, \epsilon' \right)$ with $\epsilon'<\epsilon_*$, Lemma~\ref{lemma:2} applies to the first part of~\eqref{exp:1}:
\begin{align}
&\sum_{x \in \mathbb Z^d} \sum_{ \gamma  \in P_x} \sum_{\gammaplus \subseteq \gamma \setminus \{ \gamma_0 \}}  \norm{ \prod_{t=1}^n \left( \charf{F(\gammaplus,t)}   T^{(\gamma_t,\gamma_{t-1})}  \right) \charf{F(\gammaplus,0)} \proj{\gamma_0}  \mu} \norm{\delta_x f}_{\infty} \notag \\
& \qquad \leq C \dnorm{f} \sigma^n,  \label{exp:7}
\end{align}
where $C<\infty$ and $\sigma<1$ are given in Lemma~\ref{lemma:2}. 

As for the second part of \eqref{exp:1}, it can be bounded thanks to a slight modification of the same arguments. We know from Chapter~\ref{chap:comparisonFT} that some of the considered models do not have the property that $T^k \sleb{0}$ belongs to some $\brond{0} \left( K, \epsilon' \right)$.
Therefore, we will now extend the Toom graphs up to time $-k$ instead of $0$. For fixed $k \in \nat$, $x \in \mathbb Z^d$, $\gamma \in P_x$ and $\gammaplus \subseteq  \gamma \setminus \{ \gamma_0 \}$, we define
\begin{equation*}
\mathcal{E}_k \left( \gammaplus \right) = \left\{ \left( \vect{\omega}_t \right) _{t=-k}^n \in X^{\{-k,\dotsc , n \}} \mid \vect{\omega}_{t} \in F \left( \gammaplus, t \right) \quad \forall t \in \{ 0, \dots, n \} \right\}.
\end{equation*}
A map $g_k$ is defined, similarly to the map $g$ above:
\begin{align*}
g_k : \mathcal{E}_k \left( \gammaplus \right) &\to \mathcal{G}_k  \left( \gammaplus \right) = g_k \left(\mathcal{E}_k \left( \gammaplus \right) \right)\\
\left( \vect{\omega}_t \right) _{t=-k}^n &\mapsto G = g_k \left( \left( \vect{\omega}_t \right) _{t=-k}^n\right).\notag
\end{align*}
The only change in the graph construction algorithm takes place whenever a Toom graph $G_i$ under construction reaches a point with state $1$ at time $t=0$. Instead of classing it into the set of identified error points $\hat{V}_{G_i}$, we carry on the construction of this branch of $G_i$, as for positive times, until we meet either an error point or a point with state $1$ at time $-k$, which is now considered the initial time, and class it in $\hat{V}_{G_i}$. Again, this merely amounts to a translation of the initial condition by $k+1$ units along the time axis. At the end, we obtain a disconnected union of Toom graphs on $\natspace \times \{ -k,, \dots, n\}$, with sources $\left( \gamma_{t},t \right) \in \gammamoins$ and with the properties described above. In particular, all the previous results about $\mathcal{G} \left( \gammaplus \right)$ still hold for $\mathcal{G}_k \left( \gammaplus \right)$ and for the associated subsets $\mathcal{G}_k \left( \gammaplus, c, n \right)$ of graphs with $c$ connected parts and $n$ edges.

Extending definition~\eqref{toom:6} of $ E(G,t)$ to graphs $G \in \mathcal{G}_k  \left( \gammaplus \right) $ and to negative times, the analogs for $g_k$ of \eqref{p:2} and \eqref{toom:7} lead to the following graph expansion for the second part of \eqref{exp:1}:
\begin{align}
&\norm{ \prod_{t=1}^n \left( \charf{F(\gammaplus,t)}   T^{(\gamma_t,\gamma_{t-1})}  \right) \charf{F(\gammaplus,0)}  \proj{\gamma_0} T^k \sleb{0} } \notag \\
& \qquad \leq \sum_{G \in \mathcal{G}_k \left( \gammaplus \right)} \left\lvert  \prod_{t=1}^n \left( \charf{F(\gammaplus,t)\cap E(G,t)}   \tilde{T}^{(\gamma_t,\gamma_{t-1})}  \right) \right. \label{exp:6} \\
& \qquad \qquad \qquad \qquad \left. \circ \ \charf{F(\gammaplus,0)\cap E(G,0)}  \Theta_{\gamma_{0}} \prod_{t=-k+1}^{-1} \left( \charf{E(G,t)} T \right) \charf{E(G,-k)} \sleb{0} \right\rvert ,\notag
\end{align}
where we defined the operator $\Theta_x$:
\begin{equation*}
\Theta_x \mu (f) = \int \dif \vect{\omega} \int \dif \vect{\xi} \; f(\vect{\xi}) \norm{\left( \proj{x} \prob[ \, \cdot \, | \vect{\omega}] \right) \left( \vect{\xi} \right) } \mu ( \vect{\omega} ) ,
\end{equation*}
with $\proj{x}$ acting on the measure $\prob[\, \cdot  \, | \vect{\omega}]$.

Lemma~\ref{lemma1} as well extends immediately to graphs $G \in \mathcal{G}_k  \left( \gammaplus \right) $. Applying it $n$ times to the RHS of \eqref{exp:6}, we find that it is bounded above by
\begin{align*}
&\sum_{G \in \mathcal{G}_k \left( \gammaplus \right)} \epsilon^{\frac{1}{2}\sum_{t=1}^n |\hat{V}_{G,t}|} (2\epsilon)^{\frac{1}{2} |\gammaplus|}  2^{|\gammamoins|} \\
& \qquad \qquad \qquad . \  \norm{ \charf{F(\gammaplus,0)\cap E(G,0)}  \Theta_{\gamma_{0}} \prod_{t=-k+1}^{-1} \left( \charf{E(G,t)} T \right) \charf{E(G,-k)} \sleb{0} } .
\end{align*}
Then we can use again the Bounded-noise assumption $k$ times.
Indeed, we know from definition~\eqref{toom:6} that $\charf{E(G,t)} \leq \charf{1,\hat{V}_{G,t}}$ and from the Bounded-noise assumption that $\norm{\charf{1,\hat{V}_{G,t}} T\charf{E(G,t-1)} \nu }\leq \epsilon^{ |\hat{V}_{G,t}|} \norm{\charf{E(G,t-1)} \nu } $ for any probability measure $\nu$.
The operator $\Theta_{\gamma_{0}} $ can be handled in the same way as $T$, taking its definition into account, together with the fact that
\begin{equation*}
\charf{1,\hat{V}_{G,0}}( \vect{\omega}_{\succeq x}, \vect{a}_{\prec x} ) \leq \charf{1,\hat{V}_{G,0}} ( \vect{\omega}).
\end{equation*}
It will simply introduce an extra factor of $2$ due to the operator $\proj{\gamma_0}$.
Lastly, by definition of $\sleb{0}$, $\sleb{0} \left( \charf{E(G,-k)} \right)=0$ unless $\hat{V}_{G,-k}$ is empty, in which case $\sleb{0} \left( \charf{E(G,-k)}  \right)=1$. Consequently, using again the inequality $\epsilon \leq \epsilon^{\frac{1}{2}}$, the RHS of \eqref{exp:6} is lower than
\begin{equation*}
 2  \sum_{G \in \mathcal{G}_k \left( \gammaplus \right)} \epsilon^{\frac{1}{2} |\hat{V}_{G}|} \ (2\epsilon)^{\frac{1}{2} |\gammaplus|} \ 2^{ |\gammamoins|}. \label{exp:5}
\end{equation*}
And so, performing the same calculations as for the proof of Lemma~\ref{lemma:2}, we obtain, for the second part of \eqref{exp:1},
\begin{align}
& \sum_{x \in \mathbb Z^d}\ \sum_{ \gamma \in P_x} \ \sum_{\gammaplus \subseteq \gamma \setminus \{ \gamma_0 \}}  \norm{ \prod_{t=1}^n \left( \charf{F(\gammaplus,t)}   T^{(\gamma_t,\gamma_{t-1})}  \right) \charf{F(\gammaplus,0)} \proj{\gamma_0} T^k \sleb{0}} \norm{\delta_x f}_{\infty} \notag \\
& \qquad \leq C_{\text{inv}}  \dnorm{f} \sigma^n , \label{exp:8}
\end{align}
where
\begin{align*}
&C_{\text{inv}}\\
& =\frac{2 }{ \bigg(1-[2^m(R^2+2R)]^2 \epsilon^{\frac{1}{2.(1+mr^{-1})}} \bigg) \bigg(1-[2^m(R^2+2R)]^{-2} \epsilon^{\frac{mr^{-1}}{2.(1+mr^{-1})}}\bigg)}.
\end{align*}

Inserting \eqref{exp:7} and \eqref{exp:8} in \eqref{exp:1}, we conclude:
\begin{equation}
\norm{T^n \mu (f) - \muinv{0} (f) }  \leq C \dnorm{f}  \sigma^n , \notag
\end{equation}
where we renamed $C +C_{\text{inv}} $ to $C$ for simplicity.
\end{proof}
\section{Exponential decay of correlations}\label{section:corollaries}
We will now prove a well-known consequence of Theorem~\ref{thm:exp}: the invariant measure $\muinv{0}$ presents exponential decay of correlations in space.
\begin{sloppypar}
\begin{proof}[Proof of Corollary~\ref{coroll:1}]
Since $\sleb{0}$ belongs to $\brond{0}(1,0)$, Theorem~\ref{thm:exp} implies that for some $C < \infty$ and some $\sigma < 1$, we have, for all $k \in \nat$,
\begin{equation*}\begin{cases}
\norm{T^k \sleb{0}(f g) - \muinv{0}(f g)} \leq C \dnorm{f g} \sigma^{k} ; \\
\norm{T^k \sleb{0}(f) - \muinv{0}(f)} \leq C \dnorm{f} \sigma^{k} ; \\
\norm{T^k \sleb{0}(g) - \muinv{0}(g)} \leq C  \dnorm{g} \sigma^{k} . \end{cases}\end{equation*}
But $\sleb{0}$ is a product measure and the interactions are local, so
\begin{equation*}
T^k \sleb{0}(f g) = T^k \sleb{0}(f)  T^k \sleb{0}(g)
\end{equation*}
as long as $2 k \,  \max_{u \in \mathcal U} \dnorm{u}_1< d(f,g)$, where $\dnorm{\cdot}_1$ is the Manhattan norm. 
Since $d(f,g) > 0$, we also have
\begin{equation*}
\dnorm{f g } \leq \dnorm{f} \norm{g}_{\infty} + \norm{f}_{\infty} \dnorm{g} < \infty.
\end{equation*}
Hence, as long as $2 k \,  \max_{u \in \mathcal U} \dnorm{u}_1< d(f,g)$,
\begin{equation*}
\norm{\muinv{0}(f g) - \muinv{0}(f) \muinv{0}(g)} \leq 2 C \left( \dnorm{f} \norm{g}_{\infty} + \norm{f}_{\infty} \dnorm{g} \right) \sigma^{k}.
\end{equation*}
If we take $k = \lceil d(f,g)/(2 \max_{u \in \mathcal U} \dnorm{u}_1) \rceil -1$ and choose $C'= \frac{2C}{\sigma}$ and $\eta=\sigma^{\frac{1}{2 \max_{u \in \mathcal U} \dnorm{u}_1}}$, we obtain the desired inequality.
\end{proof}\end{sloppypar}

The invariant measure $\muinv{0}$ also exhibits exponential decay of correlations in time. 
\begin{proof}[Proof of Theorem~\ref{coroll:2}]
Theorem~\ref{thm:exp} applied to $\sleb{0} \in \brond{0}(1,0)$ implies that, for all $k\in \nat$,
\begin{equation} \label{cor:3}
\norm{\muinv{0} (f T^n g) - T^k \sleb{0} (f T^n g) } \leq C \dnorm{f T^n g} \sigma^k . 
\end{equation}
Now, remembering definition~\eqref{re:1} of the semi-norm $\dnorm{.}$, we notice that 
\begin{equation*}
\dnorm{f T^n g} \leq 2 \norm{f}_{\infty} \norm{g}_{\infty} \norm{\text{supp}(f T^n g)}.
\end{equation*}
But the conical structure of the space-time influence of states implies
\begin{equation*}
\norm{\text{supp}(f T^n g)}  \leq \norm{\text{supp}(f)} + \big( \text{diam}(\text{supp}(g))+2n \max_{u \in \mathcal U} \dnorm{u}_1 \big)^d,
\end{equation*}
so there exists a finite constant $  C'_{f, g}$ such that $\dnorm{f T^n g}\leq C'_{f, g} \, n^d $ for all $n \in \nat^*$.

Theorem~\ref{thm:exp} also yields
\begin{align}
&\norm{\muinv{0}(f) \muinv{0}(g) - \muinv{0}(f)T^{k+n} \sleb{0}(g)} \leq C \norm{f}_{\infty} \dnorm{g} \sigma^{k+n}  ;\label{cor:7} \\
&\norm{\muinv{0}(f)T^{k+n} \sleb{0}(g) - T^k \sleb{0}(f ) T^{k+n} \sleb{0}( g)} \leq C  \norm{ g}_{\infty} \dnorm{f} \sigma^k . \label{cor:8}
\end{align}

In order to find an upper bound for $ \norm{T^k \sleb{0}\big( \left(f - T^k \sleb{0} (f) \right)  T^ng\big) }$, we rewrite it as $\norm{T^n \mu^{(k)}_{f} (g)}$, where the signed measure $\mu^{(k)}_{f} \in \mathcal{M}(X)$ is defined by
\begin{equation*}
\mu^{(k)}_{f} (g) = T^k \sleb{0} \left( \left( f - T^k \sleb{0} (f) \right) g \right) \quad \forall g \in \mathcal{C}(X) .
\end{equation*}
$\mu^{(k)}_{f} $ is a measure of zero mass so it satisfies \eqref{decoupling:2}:
\begin{align*}
& \norm{T^n \mu^{(k)}_{f}(g)} \\
& \leq \sum_{x \in \mathbb Z^d}\ \sum_{ \gamma \in P_x} \ \sum_{\gammaplus \subseteq \gamma \setminus \{ \gamma_0 \}}  \norm{\prod_{t=1}^n \left( \charf{F(\gammaplus,t)}   T^{(\gamma_t,\gamma_{t-1})}  \right) \charf{F(\gammaplus,0)}  \proj{\gamma_0}\mu^{(k)}_{f}}\norm{\delta_x g}_{\infty} .
\end{align*}
The last expression is similar to the second term of \eqref{exp:1} with $\mu^{(k)}_{f}$ instead of $T^k \sleb{0}$ and we can then perform the same argument as in the proof of Theorem~\ref{thm:exp} in order to bound $\norm{T^n \mu^{(k)}_{f} (g)}$, extending again the collections of Toom graphs up to time $-k$. The only change in these calculations is the presence of an extra factor $f- T^k \sleb{0} (f)$ whose supremum norm is bounded by $2 \norm{f}_{\infty}$. 
Provided we keep track of this factor, the calculations which lead to \eqref{exp:8} still hold here:
\begin{equation}\label{cor:6}
 \norm{T^k \sleb{0}\left( \left(f - T^k \sleb{0} (f) \right)  T^ng\right) } \leq 2  \  C_{\text{inv}}  \norm{f}_{\infty} \dnorm{g} \sigma^n .
\end{equation}
Combining \eqref{cor:3}, \eqref{cor:7}, \eqref{cor:8} and \eqref{cor:6} and taking the $k \to \infty$ limit ends the proof.
\end{proof}

Theorem~\ref{coroll:2} implies that $\muinv{0}$ is not only ergodic but also strong-mixing, that is, for any continuous functions $f, g$, we have
\begin{equation*}
\lim_{n \to \infty} \muinv{0}(f T^ng) = \muinv{0}(f) \muinv{0}(g)  .
\end{equation*}
Indeed, by the Stone-Weierstrass theorem, the set of continuous functions with finite support is dense in $\mathcal{C}(X)$.
\chapter{An extremal invariant measure}\label{chap:extremal}
The results in Chapter~\ref{chap:expdecay} are about one invariant measure $\muinv{0}$ of the considered PCA, but they actually provide some information about the whole set of invariant measures -- see \citet{deMa09} for a similar analysis.

Let that set be denoted by $\mathcal M_{\textrm{inv}}\subset \mathcal M$. We noticed in Section~\ref{sec:invmeasures} that $\mathcal M_{\textrm{inv}}$ is a nonempty convex set. The \textit{extreme points} of the convex set $\mathcal M_{\textrm{inv}}$ are defined as the elements $\mu \in \mathcal M_{\textrm{inv}}$ that cannot be written as $\mu = \lambda \mu_1 +(1-\lambda) \mu_2$ with $\lambda \in [0,1]$ and $\mu_1,\mu_2  \in  \mathcal M_{\textrm{inv}}$ such that $\mu_1,\mu_2 \neq \mu$. Due to the following theorem, $ \mathcal M_{\textrm{inv}}$ is completely determined by its extreme points -- see e.g.\ \citet{Ro68}.

\begin{thm*}[Krein-Milman]
In a locally convex topological vector space, any compact convex set is the closed convex hull of its extreme points.
\end{thm*}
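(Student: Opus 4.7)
The plan is to prove this classical theorem in two stages: first, establish the existence of enough extreme points in a compact convex set $K$; second, use separation arguments to conclude that their closed convex hull is all of $K$.

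For the first stage, I would introduce the notion of an \emph{extreme subset} of $K$, that is, a nonempty closed subset $F \subseteq K$ such that whenever an interior point of a segment in $K$ belongs to $F$, the whole segment lies in $F$. Extreme points are then exactly the singleton extreme subsets. The key observation is that the intersection of a nested family of extreme subsets of $K$ is again an extreme subset (it is nonempty by compactness of $K$ and the finite intersection property). By Zorn's lemma, the collection of extreme subsets of $K$, partially ordered by reverse inclusion, therefore has maximal elements, that is, minimal extreme subsets. I would then show that any minimal extreme subset $F$ is a singleton: if $F$ contained two distinct points, a continuous linear functional separating them would attain its maximum on $F$ on a proper closed extreme subset of $F$ (using the Hahn-Banach theorem to construct such a functional, and compactness of $F$ to ensure the maximum is attained), contradicting minimality.

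For the second stage, I would let $C$ denote the closed convex hull of the set of extreme points of $K$. Since $K$ is closed and convex and contains the extreme points, $C \subseteq K$. Assume for contradiction that there exists $x_0 \in K \setminus C$. Since $C$ is closed and convex and $\{x_0\}$ is compact and convex, the Hahn-Banach separation theorem (valid in any locally convex topological vector space) yields a continuous linear functional $\ell$ and a real number $\alpha$ with $\ell(x_0) > \alpha \geq \ell(y)$ for all $y \in C$. The set $F_0 = \{x \in K \mid \ell(x) = \max_{y \in K} \ell(y)\}$ is then a nonempty (by compactness of $K$) closed extreme subset of $K$ disjoint from $C$. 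By the first stage applied to $F_0$ in place of $K$, $F_0$ contains an extreme point of $F_0$, which is easily checked to be an extreme point of $K$ as well. This contradicts $F_0 \cap C = \varnothing$.

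The main obstacle is the first stage: one must correctly identify extreme subsets as the right structural notion and verify that a minimal extreme subset is a singleton, which is where Hahn-Banach enters in a slightly subtle way (separating two candidate points inside $F$ by a continuous linear functional, then using compactness to locate its maximum). Once this is in place, the second stage is a fairly direct Hahn-Banach separation argument. I would also point out, for the application to $\mathcal{M}_{\mathrm{inv}}$, that $\mathcal{M}(X)$ with the weak-$*$ topology is locally convex, and that $\mathcal{M}_{\mathrm{inv}}$, being a weak-$*$ closed subset of the weak-$*$ compact unit ball (by Banach--Alaoglu) intersected with the convex set of probability measures, is indeed weak-$*$ compact and convex, so the theorem applies.
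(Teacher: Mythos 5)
Your argument is the standard and correct proof of the Krein--Milman theorem: Zorn's lemma on the family of closed extreme subsets ordered by reverse inclusion, minimal extreme subsets shown to be singletons via a separating continuous linear functional, and then a Hahn--Banach separation argument to show the closed convex hull of the extreme points exhausts $K$. There is nothing in the paper to compare it against: the theorem is quoted there as a classical result with a reference to the literature and is not proved, so your proposal supplies a proof where the paper only cites one. The only point worth making explicit is that both stages require the continuous dual to separate points, i.e.\ the locally convex space should be assumed Hausdorff (as is implicit in the usual statement, and as holds for the normed space $\mathcal M(X)$ and for its weak-$*$ topology used in the application). Your closing remark about $\mathcal M_{\mathrm{inv}}$ being weak-$*$ compact and convex matches the justification the paper gives for applying the theorem.
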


The Krein-Milman theorem applies to $ \mathcal M_{\textrm{inv}}$. Indeed, it is included in $\mathcal M(X)$, which is a normed vector space and therefore a locally convex topological vector space. Besides, $ \mathcal M_{\textrm{inv}}$ is compact because it is a closed subset of the compact space $\mathcal M$. One is thus interested in finding the extreme points of $ \mathcal M_{\textrm{inv}}$, also called \textit{extremal invariant measures}.
%
%

Now Theorem~\ref{coroll:2} has the following corollary.

\begin{corollary}
Assume that $\epsilon < \epsilon_*$ with $\epsilon_*$ as given by Theorem \ref{thm:exp}. Then $\muinv{0}$ is an extremal invariant measure.
\end{corollary}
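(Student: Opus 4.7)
The plan is to leverage the strong mixing of $\muinv{0}$ established by Theorem~\ref{coroll:2}. Suppose for contradiction that $\muinv{0} = \lambda \mu_1 + (1-\lambda)\mu_2$ with $\lambda \in (0,1)$ and $\mu_1, \mu_2 \in \mathcal M_{\textrm{inv}}$, $\mu_1 \neq \muinv{0}$. Then $\mu_1 \leq \lambda^{-1} \muinv{0}$ as positive measures, so $\mu_1 \ll \muinv{0}$ with a Radon-Nikodym derivative $h := d\mu_1/d\muinv{0} \in L^\infty(\muinv{0})$ satisfying $0 \leq h \leq \lambda^{-1}$ and $\muinv{0}(h) = 1$. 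I would reach a contradiction by showing that $\mu_1 = \muinv{0}$.

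The $T$-invariance of $\mu_1$ gives, for every $f \in \mathcal C(X)$ and every $n \in \nat$, the identity
\begin{equation*}
\muinv{0}(h \cdot T^n f) = \mu_1(T^n f) = \mu_1(f),
\end{equation*}
whose left-hand side is constant in $n$. If I can establish the following extension of Theorem~\ref{coroll:2} -- namely that for every $h \in L^\infty(\muinv{0})$ and every continuous $f$ with finite support,
\begin{equation*}
\lim_{n\to\infty} \muinv{0}(h \cdot T^n f) = \muinv{0}(h)\, \muinv{0}(f),
\end{equation*}
then taking the limit $n \to \infty$ of the invariance relation yields $\mu_1(f) = \muinv{0}(h)\muinv{0}(f) = \muinv{0}(f)$ for every such $f$. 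Since continuous functions with finite support form a dense subspace of $\mathcal C(X)$ by Stone-Weierstrass, this forces $\mu_1 = \muinv{0}$, contradicting our assumption.

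To prove the extended mixing property, I would approximate $h$ in $L^1(\muinv{0})$ by a sequence $(h_k)$ of continuous functions with finite support, with the sup-norm control $\norm{h_k}_{\infty} \leq \norm{h}_{\infty}$. This is possible because $X$ is a compact metric space, $\muinv{0}$ is a regular Borel probability measure, continuous functions are dense in $L^1(\muinv{0})$, continuous finite-support functions are dense in $\mathcal C(X)$, and a truncation step preserves the sup-norm bound. Since the dual action of $T$ on $\mathcal C(X)$ satisfies $\norm{T^n f}_{\infty} \leq \norm{f}_{\infty}$, one obtains, uniformly in $n$,
\begin{equation*}
\norm{\muinv{0}((h-h_k)\, T^n f)} \leq \norm{f}_{\infty}\, \muinv{0}(\norm{h-h_k}),
\end{equation*}
while Theorem~\ref{coroll:2} applied to $h_k$ and $f$ yields $\muinv{0}(h_k \cdot T^n f) \to \muinv{0}(h_k)\muinv{0}(f)$. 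A standard three-epsilon argument then delivers the required limit. The main, and essentially only, technical obstacle is the $L^1$-approximation of $h$ with sup-norm control; the rest is routine manipulation of the invariance relations together with the estimates already available.
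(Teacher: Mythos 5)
Your argument is correct, but it is not the route the paper takes. The paper proves extremality by the classical ergodic-theoretic argument: it passes to the space-time measures $\muinvzerobar{0}$, $\underline\mu_1$, $\underline\mu_2$, invokes von Neumann's mean ergodic theorem for the unitary time-shift operator $U$ on $L^2$, and derives a contradiction between the strict convexity of $x\mapsto x^2$ (which gives $\left(\muinvzerobar{0}(\ushort f)\right)^2 < \frac12\left(\underline\mu_1(\ushort f)\right)^2+\frac12\left(\underline\mu_2(\ushort f)\right)^2$ for an $f$ separating $\mu_1$ from $\mu_2$) and the identity $\lim_N \muinvzerobar{0}\bigl(\bigl(\frac1N\sum_{n<N}U^n\ushort f\bigr)^2\bigr)=\left(\muinvzerobar{0}(\ushort f)\right)^2$ forced by Theorem~\ref{coroll:2}. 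You instead use the standard ``mixing plus absolute continuity'' proof: from $\lambda\mu_1\leq\muinv{0}$ you get a bounded density $h=d\mu_1/d\muinv{0}$, and the invariance relation $\muinv{0}(h\,T^nf)=\mu_1(f)$ combined with mixing forces $\mu_1=\muinv{0}$. Your extension of Theorem~\ref{coroll:2} to $h\in L^\infty(\muinv{0})$ is sound: finite-support continuous functions are dense in $L^1(\muinv{0})$ on the compact space $X$, truncation to $[0,\norm{h}_\infty]$ is a pointwise contraction fixing the range of $h$ and preserves continuity and finite support, and $\norm{T^nf}_\infty\leq\norm{f}_\infty$ because $\prob[\,\cdot\,|\vect\omega]$ is a probability kernel, so the three-epsilon argument closes. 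Comparing the two: your proof is more elementary (no space-time extension to negative times, no $L^2$ projection theory) and actually yields the stronger conclusion that any invariant measure absolutely continuous with respect to $\muinv{0}$ with bounded density equals $\muinv{0}$; the paper's proof uses only the decay of autocorrelations of a single cylinder indicator in Ces\`aro mean (weak mixing would suffice there), so it is marginally more economical in what it demands of Theorem~\ref{coroll:2}. Both rest on the same key input and both are valid.
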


\begin{proof}
The proof follows a standard argument of ergodic theory that was already used to prove analogous results -- see e.g.\ \citet{Wa82} for an introduction to ergodic theory and \citet{Br77} for a proof of extremality.

Let $\muinvar \in \mathcal M_{\textrm{inv}}$ be any invariant measure. Choosing $\muinvar$ as initial measure, one can define a stochastic process $\muinvbar \in M_{\epsilon}$ as described in Section~\ref{sec:PCAformalism}. Its marginal probability distribution corresponding to any time $t\in \nat$ is $T^t \muinvar =\muinvar$. Let $\ushort T :S^V \to S^V$ be the \textit{time-shift} defined by $(\ushort T \stvect \omega)_{(x,t)} := \ushort \omega _{(x,t+1)}$. Since $T \muinvar =\muinvar$, the time-shift leaves the measure $\muinvbar$ invariant: $\muinvbar(\ushort T^{-1} F) =\muinvbar (F)$ for all $F$ in the $\sigma$-algebra $\ushort{\mathcal F}$ generated by cylinder sets. We used the usual notation $\ushort T^{-1} F:=\{\stvect \omega \in S^V \mid \ushort T  \stvect \omega \in F\}$.

Coming back to the particular invariant measure $\muinv{0}$, let us suppose that it is not an extremal invariant measure for the transfer operator $T$. Then, $\muinv{0}=\frac{1}{2} \, \mu_1+\frac{1}{2} \, \mu_2$ with $\mu_1,\mu_2  \in  \mathcal M_{\textrm{inv}}$ and $\mu_1 \neq \mu_2 $. So there exists $f \in L^2(\muinv{0})$ such that $\mu_1(f) \neq \mu_2(f) $ and $\muinv{0}(f)=\frac{1}{2} \,  \mu_1(f)+\frac{1}{2} \,  \mu_2(f)$. For instance, let $f$ be the indicator function $\charf{C}$ of a cylinder set $C$ such that $\mu_1(C) \neq \mu_2(C) $. Let $\ushort f : S^V \to \real$ be defined by $\ushort f(\stvect \omega):= f(\stvect \omega _{V_0})$. Consider the stochastic processes $\muinvzerobar{0},\underline \mu_1, \underline \mu_2 \in M_{\epsilon}$ constructed as in Section~\ref{sec:PCAformalism} using the initial measures $\muinv{0},\mu_1,\mu_2$ and the local transition probabilities that characterize the transfer operator $T$. As noticed above, the resulting measures $\muinvzerobar{0}$, $\underline \mu_1$ and $ \underline \mu_2$ are left invariant by the time-shift $\ushort T$. Now their construction is linear in the sense that one gets $\muinvzerobar{0}=\frac{1}{2} \, \underline \mu_1+\frac{1}{2} \, \underline \mu_2$. In particular,
\begin{equation*}
\muinvzerobar{0}(\ushort f)=\frac{1}{2} \,  \underline \mu_1(\ushort f)+\frac{1}{2} \,  \underline \mu_2 (\ushort f),
\end{equation*}
while $\underline \mu_1(\ushort f) \neq \underline \mu_2 (\ushort f)$, since the marginals of $\underline \mu_1, \underline \mu_2$ at time $t=0$ are $\mu_1,\mu_2$. Using the strict convexity of the function $x \mapsto x^2$,
\begin{equation}\label{strictineq}
\left( \muinvzerobar{0}(\ushort f) \right)^2<\frac{1}{2} \left(\underline \mu_1(\ushort f)\right)^2+\frac{1}{2}\left( \underline \mu_2 (\ushort f)\right)^2.
\end{equation}

The function $\ushort f$ is $\ushort{\mathcal F}$-measurable and bounded so it belongs to $L^2(\underline \mu_1)$ and $L^2(\underline \mu_2)$. For the next part of the proof, it will be simpler to extend space-time configurations to negative times, replacing $S^V$, where $V=\natspace \times \nat$, with $S^{\natspace \times \ent}$. The $\sigma$-algebra $\ushort{\mathcal F}$ extends naturally to fit into that new space-time setting and so does any probability measure that is invariant under the time-shift $\ushort T$. For simplicity, we will nonetheless keep the same notations as before. Thanks to that extension, the time-shift $\ushort T$ is now invertible. Now, for any probability measure $\ushort \mu$ on $\ushort{\mathcal F}$ that is invariant under $\ushort T$, let us define the operator $U$ on $L^2(\ushort \mu)$ by $U \ushort g := \ushort g \circ \ushort T$. This operator is invertible and, furthermore, unitary due to the invariance of $\ushort \mu$ under $\ushort T$. Also, let $E=\{\ushort g \mid U \ushort g = \ushort g \}$ be the subspace of $L^2(\ushort \mu)$ made of the equivalence classes of all functions that are invariant under the time-shift. Finally, let $P$ be the projection operator from $L^2(\ushort \mu)$ onto $E$. We will prove the following lemma.

\begin{lemma}\label{lemma:project}
If the probability measure $\ushort \mu$ on $\ushort{\mathcal F}$ is invariant under $\ushort T$, then $\norm{\ushort \mu(\ushort g)} \leq \dnorm{P \ushort g}$ for all $\ushort g$ in $L^2(\ushort \mu)$, where $\dnorm{\cdot}$ is the $L^2$-norm.
\end{lemma}

\begin{proof}[Proof of Lemma~\ref{lemma:project}]
First, we show that $\norm{\ushort \mu(\ushort g)} = \norm{\ushort \mu\left(P \ushort g \right) } $. Let $\mathbb{1}$ denote the identity operator on $L^2(\ushort \mu)$. We have $E=( \mathrm{Im} (\mathbb{1} - U))^{\perp}$ because, if $\langle \cdot \mid \cdot \rangle$ denotes the scalar product in $L^2(\ushort \mu)$,
\begin{equation*}
\langle \ushort g \mid (\mathbb{1}-U) \ushort h \rangle = 0 \ \forall \ushort h \in L^2(\ushort \mu) \Leftrightarrow (\mathbb{1}-U)^{\dag} \ushort g =0 \Leftrightarrow \ushort g = U^{\dag} \ushort g \Leftrightarrow U  \ushort g = \ushort g,
\end{equation*}
where we used the fact that $U$ is unitary. So $E^{\perp} = \overline{\mathrm{Im} (\mathbb{1} - U)}$. Since $\ushort \mu$ is invariant under $\ushort T$, $\ushort \mu (U \ushort h) = \ushort \mu (\ushort h)$ for all $\ushort h$ in $L^2(\ushort \mu)$. Thus, for all $\ushort g'$ in $E^{\perp}$, $\ushort \mu (\ushort g') = 0$. It implies $\norm{\ushort \mu(\ushort g)} = \norm{\ushort \mu\left(P \ushort g \right) } $ for all $\ushort g$ in $L^2(\ushort \mu)$.

Next, by Jensen's inequality and using again the convexity of $x \mapsto x^2$, we have $\left( \ushort \mu\left(P \ushort g \right) \right)^2 \leq \ushort \mu\left(\left(P \ushort g \right)^2 \right)$, that is to say $\norm{\ushort \mu\left(P \ushort g \right) } \leq \dnorm{P \ushort g }$. Combining this with the preceding equality ends the proof of Lemma~\ref{lemma:project}.
\end{proof}

Furthermore, the orthogonal decomposition of $L^2(\ushort \mu)$ using the subspace $E$ and its complement $E^{\perp} = \overline{\mathrm{Im} (\mathbb{1} - U)}$ leads to the mean ergodic theorem of von Neumann. More precisely, the theorem states that, for all $\ushort g$ in $L^2(\ushort \mu)$,
\begin{equation}\label{vonNeu}
\lim_{N \to \infty}\dnorm{\frac{1}{N}\sum_{n=0}^{N-1} U^n \ushort g - P \ushort g}=0.
\end{equation}
Now, for all $n$,
\begin{align*}
\langle U^n \ushort g \mid P \ushort g \rangle &= \langle U^{n-1} \ushort g \mid U^{-1} P \ushort g \rangle = \langle U^{n-1} \ushort g \mid P \ushort g \rangle \\
&= \ldots \\
&= \langle \ushort g \mid P \ushort g \rangle = \dnorm{P \ushort g}^2
\end{align*}
so, expanding the scalar product in the norm $\dnorm{\cdot }$ of expression~\eqref{vonNeu},
\begin{equation*}
\dnorm{P\ushort g}^2 = \lim_{N \to \infty} \ushort \mu \left( \left(\frac{1}{N}\sum_{n=0}^{N-1} U^n \ushort g \right)^2 \right).
\end{equation*}

We can apply this relation and Lemma~\ref{lemma:project} to both $\underline \mu_1$ and $ \underline \mu_2 $ and to the function $\ushort f$. Inequality~\eqref{strictineq} then implies
\begin{align}
&\left( \muinvzerobar{0}(\ushort f) \right)^2 \notag \\
& \quad < \lim_{N \to \infty} \left[ \frac{1}{2}  \, \underline \mu_1 \left( \left(\frac{1}{N}\sum_{n=0}^{N-1} U^n \ushort f \right)^2 \right)+\frac{1}{2} \, \underline \mu_2 \left( \left(\frac{1}{N}\sum_{n=0}^{N-1} U^n \ushort f \right)^2 \right) \right]\notag \\
& \quad= \lim_{N \to \infty} \muinvzerobar{0}\left( \left(\frac{1}{N}\sum_{n=0}^{N-1} U^n \ushort f \right)^2 \right).\label{strictineqbis}
\end{align}
Now, for all $m,n \in \nat$, $\muinvzerobar{0}\left( U^m \ushort f \,  U^n \ushort f \right) = \muinv{0}(f \, T^{\norm{m-n}} f)$. If $\epsilon < \epsilon_*$, Theorem~\ref{coroll:2} implies that there exist some constants $C_{f,f} < \infty$ and $\sigma<1$ such that, for all $m,n \in \nat$,
\begin{equation*}
\norm{\muinv{0}(f \, T^{\norm{m-n}} f) - \left(\muinv{0}(f) \right)^2} \leq C_{f,f}\, \sigma^{\norm{m-n}}.
\end{equation*}
One can then easily check that
\begin{equation*}
\lim_{N \to \infty} \muinvzerobar{0}\left( \left(\frac{1}{N}\sum_{n=0}^{N-1} U^n \ushort f \right)^2 \right) =\left( \muinvzerobar{0}(\ushort f) \right)^2,
\end{equation*}
which contradicts inequality~\eqref{strictineqbis}. Consequently, our initial assumption was wrong, that is to say $\muinv{0}$ is an extremal invariant measure for the transfer operator $T$.
\end{proof}
\ \newline
\indent \added{Finally, as suggested by C. Maes, it is interesting to compare the result of R. Fern\'{a}ndez and A. Toom discussed in Part~\ref{part:block} with our result of exponential decay of correlations in Part~\ref{part:expdecay}. Indeed, let us consider the CA that satisfy the hypotheses of both Theorem~\ref{thm:lowerbound} in Chapter~\ref{chap:comparisonFT} and Theorem~\ref{thm:exp} in Chapter~\ref{chap:expdecay}. They are the monotonic binary CA in any dimension $d \geq 2$ that are eroders and also zero-eroders and that verify Condition (\ref{aFT}) or (\ref{bFT}) of Section~\ref{sec:modelsFT} about the speeds of fronts of `ones'. Among them, one finds, for example, the North-East-Center CA and the NSMM CA. When such a CA is perturbed by the totally asymmetric noise defined by equation~\eqref{TArules}, if the noise parameter $\epsilon$ is such that $0<\epsilon<\epsilon_*$ with the bound $\epsilon_*$ given in Theorem~\ref{thm:exp}, the extremal invariant measure $\muinv{0}$ for the resulting PCA possesses the two following properties, as follows from Theorem~\ref{thm:lowerbound}, Corollary~\ref{coroll:1} and Theorem~\ref{coroll:2}. First, the probability of finding `ones' at all sites of a sphere does not decrease as fast as an exponential of the volume of the sphere:
\begin{equation*}
\muinv{0}(\omega_x =1  \, \forall x \in S_R) \geq  \epsilon ^{c \,  R^{d-1}} \quad \forall R < \infty,
\end{equation*}
with $c<\infty$.
Second, $\muinv{0}$ presents an exponential decay of correlations in space and in time.}

\added{Systems possessing both these properties simultaneously seem to be rather uncommon in the literature. However, A. van Enter pointed out the two following other examples. The first example, discussed by \citet{vaSh98}, is associated to an equilibrium statistical mechanics model, the `solid-on-solid model'. In that model, the state space at each site of the lattice is $\ent$ and the Hamiltonian is $H=\sum_{<x,y>} \norm{\omega_x - \omega_y}$, where the sum is over all pairs of nearest-neighbor sites in the space lattice $\natspace$. In dimension $d \geq 2$ and in the low-temperature regime, there are an infinite number of Gibbs measures, but here one considers the Gibbs measure $\mu_0$ associated to the boundary condition with state $0$ everywhere. Then, using this measure on the configuration space $\ent^{\natspace}$, one defines another measure $\tilde \mu_0$, on the space $\{-1,0,+1\}^{\natspace}$, by mapping all negative states in the initial model onto the state $-1$ and all positive states onto the state $+1$. \citet{vaSh98} prove, for $\tilde \mu_0$, that the probability of observing state $+1$ at all sites in a finite squared box decreases more slowly than exponentially in the volume of the box. On the other hand, the Gibbs measure $\mu_0$ presents exponential decay of correlations and this immediately implies the same property for $\tilde \mu_0$.}

\added{A second example is given by \citet{MaSc91}. It is the unique invariant measure under the following stochastic evolution in dimension $d\geq 2$. The configuration space is $\{0,1\}^{\natspace}$. At each time step, every connected cluster of cells with state $1$ is removed independently of others, with a probability $1/2$, in the sense that the states of all cells in the cluster change from $1$ to $0$. Next, each cell with state $0$ adopts state $1$ with a probability $p$, again independently from site to site. In the regime where $p$ is close to $1$, \citet{MaSc91} prove that the unique invariant measure possesses both properties described above.}

\chapter*{Conclusion}\label{chap:CCL}
\markboth{Conclusion}{Conclusion}
\addcontentsline{toc}{chapter}{Conclusion} 
\subsubsection*{Context}

Let us end this thesis by putting our results into their context. Due to their basic definition, PCA provide a favorable field of study in order to give firm foundations to the growing knowledge about non-equilibrium phenomena. Interest concentrates especially on the existence of phase transitions in this context of non-equilibrium statistical physics.

In particular, the existence of several stationary states, or invariant measures, for a PCA in its low-noise regime indicates some capability of preserving information about the initial condition despite the noise. In some cases, we have seen that it can also lead to preserving computations by a CA that simulates a Turing machine against random errors.

A rigorous answer to the question whether a PCA admits a phase transition is still lacking in most cases. One remembers the failure of attempts to find simple counterexamples to the positive rates conjecture in dimension $1$ and to deal even with as simple a model as the symmetric majority model in dimension $2$. We have seen however that the erosion property of some CA and Toom's stability theorem allow for a treatment of a class of PCA and in particular lead to a proof of phase transition in PCA deriving from monotonic binary CA that are both eroders and zero-eroders, like the North-East-Center model.

While tackling the issue of their low-noise regime, and especially of the extremal invariant measures analogous to pure phases, we had at our disposal techniques inherited from equilibrium statistical mechanics and which already proved efficient when adapted to PCA, for example in proofs of the stability theorem: renormalization group methods as in the paper of \citet{BrGr91} and graphical methods coming from contour arguments ˆ la Peierls as in the proof by \citet{To80}. When applied to PCA, they both require to adopt a space-time point of view, taking advantage of the discreteness of time to consider it an extra dimension added to the space lattice. In that framework, Toom's argument, which we used in this thesis, has to deal with the fact that the low-noise regime of PCA cannot always be described by means of Gibbs measures, as we have learnt from the results of \citet{FeTo03}. In dimension $d\geq 2$, contours in space-time with good properties for a Peierls argument cannot always be drawn and Toom replaced them with complicated one-dimensional graphs.

\subsubsection*{Original results and contribution}

As the proofs in this thesis rely strongly on these graphs and as the demonstration by Toom is rather abstract due to its great generality, we tried to give a detailed and pedagogical review of the construction of these graphs. We started from their simplest version as contours in the case of the Stavskaya model, and then went progressively from the presentation given by \citet{LeMaSp90} for the North-East-Center model to finally general eroders.

Two original results about the pure phases in the low-noise regime of this class of PCA are proved in the thesis. The first result confirms partially a conjecture by \citet{FeTo03}, in dimension $2$. It consists in an upper bound to the probability, in the extremal invariant measure with a dominance of one of the two states, of the event where all cells in a given finite subset of the space lattice are in the opposite state. This upper bound decreases as an exponential of the diameter of the subset and complements the lower bound with the same form obtained by \citet{FeTo03} when the noise is totally asymmetric. The upper bound and its proof are extensions of the stability theorem and of its original proof using graphs by \citet{To80}. In this extension, three particular points such that their coordinates reflect in some sense the diameter of the given subset are chosen to be the departure points of the graph construction.

The second result of the thesis is the exponential convergence to the extremal invariant measure of the PCA, for a set of initial probability measures close to the homogeneous configuration, and the exponential decay of correlations in space and in time for that invariant measure. That property of the extremal invariant measures in the low-noise regime of the PCA under consideration is analogous to that of the pure phase equilibriums in the low-temperature regime in equilibrium statistical mechanics models such as the Ising model. The result extends, to all monotonic binary CA with the erosion property, a behavior that had already been proved by \citet{BeKrMa93} for those in dimension $1$ or similar to the Stavskaya CA.

The extension is due to work in collaboration with Augustin de Maere, who adapted to the pure phases in the low-noise regime of the Stavskaya PCA -- and of a related coupled map lattice -- a technique first developed for the weakly coupled regime of coupled map lattices by \citet{KeLi06} and who initiated the application of the same resulting technique to the North-East-Center model. This new technique combines, in a perturbative expansion, paths of influence with the graphs of Toom. Thereby it makes it possible to treat even the eroders, such as the North-East-Center model, that cannot be covered by the method of \citet{BeKrMa93} because no simple Peierls contour but only one-dimensional Toom graphs have already been proved to work for them.

Another difficulty that we had to solve while extending that new technique from the Stavskaya model to models in higher dimensions can be understood at the light of the lower bound of \citet{FeTo03}. Indeed, contrary to what happens for the one-dimensional Stavskaya model, that lower bound implies that the extremal invariant measure itself does not always belong to the set of initial probability measures that have some property required to enter the perturbative expansion. More precisely, the invariant measure can assign to the event where all cells in a block are in the minority state a probability that does not decrease exponentially with the volume of the block. However, this technical problem could be solved using an approximation of that invariant measure by a sequence of probability measures obtained from the stochastic process that starts from the initial homogeneous configuration. Let us note again that the coexistence of this property of exponential decay of correlations with the slow decrease conveyed in the lower bound of \citet{FeTo03} appears to be uncommon, although not unique.

\subsubsection*{Open questions}

Of course, a huge amount of questions are still waiting for an answer, even for the very specific class of models studied in this thesis.
\begin{itemize}
\item None of the numerical values of estimates in our results is optimal. We think that it would be possible to improve them, either via a more cautious counting of graphs using more advanced graph theory or even by introducing new types of expansions.
\item The upper bound in Theorem~\ref{thm:upperboundgen} is proved to hold only for some sets defined as `connected'. However, it would be interesting to know whether that restriction, inherent to the method of proof, is necessary or not. It might be that events involving two sets that are not connected to each other are simply independent from each other.
\item Extending Theorem~\ref{thm:upperboundgen} to higher dimensions by means of a similar graph construction, if it is possible, would probably provide too loose an upper bound, in the form of a decreasing exponential of the diameter of the block, rather than of the volume of its boundary. Other types of arguments would then be required in order to complete the lower bound of \citet{FeTo03} and estimate the asymptotics of the probability of the event under consideration in dimension $d>2$ and under totally asymmetric noise.
\item The properties of the extremal invariant measure $\muinv{0}$ that have been proved so far are not sufficient to completely characterize it. For instance, in the Stavskaya model, we know that $\muinv{0}$ is weakly Gibbsian -- see \citet{DeMa06}. Is it possible to prove or disprove its Gibbsianness? In the North-East-Center model, $\muinv{0}$ is not Gibbsian under totally asymmetric noise but the question is still open in other regions of the low-noise regime.
\item We described partially one or two extremal invariant measures in the low-noise regime of a class of PCA. They are obtained from stochastic processes that were started from the homogeneous configurations. But are there also in that regime invariant measures other than the convex combinations of $\muinv{0}$ and $\muinv{1}$?
\item Like their low-noise behavior, the critical behavior of these PCA, at the transition between the low-noise and high-noise regimes, has been explored for some of the models by means of simulations, including estimations of the critical exponents, e.g.\ by \citet{Me11} for the Stavskaya model and by \citet{Mak98,Mak99} for the North-East-Center model. It would be interesting to carry on the investigation.
\item Can one exhibit a phase transition in simple PCA other than those obtained from monotonic binary CA with the erosion property, in particular in the two-dimensional symmetric majority model?
\item In this thesis, we had to concentrate on a class of PCA for which the evolution rules are rather simple and, at the same time, very particular, so as to be able to establish rigorously some of their properties. Is this choice too restrictive to capture even a partial insight into the behavior of real multicomponent systems?
\end{itemize}

\clearpage
\pagestyle{plain}
\clearpage
\bibliographystyle{plainnat}
\addcontentsline{toc}{chapter}{Bibliography} 
\bibliography{bibliocomplthes}
\end{document}